\newtheorem{thm}{Theorem}[section]
\newtheorem{claim}[thm]{Claim}
\newtheorem{cor}[thm]{Corollary}
\newtheorem{remark}[thm]{Remark}
\newtheorem{conj}[thm]{Conjecture}
\newtheorem{lemma}[thm]{Lemma}
\newtheorem{prop}[thm]{Proposition}
\newtheorem{exam}[thm]{Example}
\newtheorem{defn}[thm]{Definition}
\newcommand{\bra}[1]{\langle #1 |}
\newcommand{\ket}[1]{| #1 \rangle}
\newcommand{\braket}[2]{\langle #1 | #2 \rangle}
\newcommand{\ketbra}[2]{| #1 \rangle\langle #2 |}
\newcommand{\Tr}{\mathrm{Tr}}
\newcommand{\bb}[1]{\mathbb{#1}}
\newcommand{\cl}[1]{\mathcal{#1}}
\newcommand{\mnorm}[1]{%
\left\vert\kern-0.9pt\left\vert\kern-0.9pt\left\vert #1
\right\vert\kern-0.9pt\right\vert\kern-0.9pt\right\vert}
\newcommand{\bmnorm}[1]{%
\big\vert\kern-0.9pt\big\vert\kern-0.9pt\big\vert #1
\big\vert\kern-0.9pt\big\vert\kern-0.9pt\big\vert}
\def\dsp{\def\baselinestretch{2.0}\large\normalsize}
\def\sspp{\def\baselinestretch{1.1}\large\normalsize}
\def\hsp{\def\baselinestretch{1.6}\large\normalsize}
\begin{document}
\title{Norms and Cones in the Theory of Quantum Entanglement}
\author{Nathaniel Johnston}
\degreeyear{2012}
\degree{Ph.D.~Mathematics}
\chair{Dr.~D.~W.~Kribs}
\othermembers{Rajesh Pereira, Bei Zeng, John Watrous}
\numberofmembers{3}
\maketitle
%
%

\begin{frontmatter}\renewcommand{\thepage}{\roman{page}}
\begin{abstract}
\vspace{0.5in}

\begin{center}{\bf \large NORMS AND CONES IN THE THEORY OF \\ QUANTUM ENTANGLEMENT}\end{center}
\vspace{0.5in}

\noindent Nathaniel Johnston \hfill Advisor:\hspace*{0.72in}\\
\noindent University of Guelph, 2012 \hfill Professor D. Kribs\\ 
\vspace{0.5in}

\indent There are various notions of positivity for matrices and linear matrix-valued maps that play important roles in quantum information theory. The cones of positive semidefinite matrices and completely positive linear maps, which represent quantum states and quantum channels respectively, are the most ubiquitous positive cones. There are also many natural cones that can been regarded as ``more'' or ``less'' positive than these standard examples. In particular, entanglement theory deals with the cones of separable operators and entanglement witnesses, which satisfy very strong and weak positivity properties respectively.

Rather complementary to the various cones that arise in entanglement theory are norms. The trace norm (or operator norm, depending on context) for operators and the diamond norm (or completely bounded norm) for superoperators are the typical norms that are seen throughout quantum information theory. In this work our main goal is to develop a family of norms that play a role analogous to the cone of entanglement witnesses. We investigate the basic mathematical properties of these norms, including their relationships with other well-known norms, their isometry groups, and their dual norms. We also make the place of these norms in entanglement theory rigorous by showing that entanglement witnesses arise from minimal operator systems, and analogously our norms arise from minimal operator spaces.

Finally, we connect the various cones and norms considered here to several seemingly unrelated problems from other areas. We characterize the problem of whether or not non-positive partial transpose bound entangled states exist in terms of one of our norms, and provide evidence in favour of their existence. We also characterize the minimum gate fidelity of a quantum channel, the maximum output purity and its completely bounded counterpart, and the geometric measure of entanglement in terms of these norms.
\end{abstract}

\thispagestyle{plain}
\begin{dedication}
\vspace*{1.5in}
\begin{center}
\textit{{\bf To My Parents}\\ for instilling in me the desire to learn}
\end{center}
\end{dedication}

\begin{acknowledgements}\label{acknow}

First and foremost, thank you to my family for teaching me the value and joy of constantly learning. My parents, Bill and Betty, nurtured my love of mathematics from a young age by occupying me with logic puzzles on long car rides and by using dinner time as an excuse to explain mathematical mysteries that were way beyond my level of understanding. They taught me that if a problem seems to be too difficult, then that's a good sign that it's worth doing. This work could never have come together without this lesson, and I would not be the man I am today if not for the shining example my parents set for me.

My two older brothers have been fantastic role models throughout my life as well. Matthew is a fellow academic who has constantly let me know what lies around the next corner of my education. Now that I have caught up with him, we frequently commiserate with each other over our academia-related struggles. David keeps me grounded in reality and is great at reminding me of the value of family. He is also kind enough to occasionally humour my mathematical tendencies -- he is the one who taught me the Pythagorean theorem (perhaps my first theorem!). Thank you also to my wife, Kathryn, for putting up with me while I'm zoned out in ``math mode'' (and for being awesome in general).

Many thanks of course go to my advisor, David Kribs, and my committees. David seems to take pleasure in pushing me past my comfort zone, and for that he has my gratitude. He has a seemingly encyclopedic knowledge of both operator theory and quantum information theory, and in the rare instance when he doesn't know something, he knows exactly who will. He has been a great inspiration over the past six years(!), and I couldn't have asked for a better mentor.

Thank you to my collaborators and everyone who has shared ideas with me over the past few years. Thank you to Gilad Gour, Vern Paulsen, and Andreas Winter for kind hospitality as I visited their groups, and for exchanging ideas with me while I was with them. Thank you Fernando Brand\~{a}o for giving a presentation that led to my initial interest in the norms discussed throughout this thesis. Thank you also to Moritz Ernst, Sevag Gharibian, Christian Gogolin, Marius Junge, Chi-Kwong Li, Easwar Magesan, Rajesh Pereira, Mary-Beth Ruskai, {\L}ukasz Skowronek, Erling St{\o}rmer, Stanislaw Szarek, John Watrous, and Li Xin for various e-mails and conversations related to the work of this thesis.

Finally, my time as a graduate student would not have been possible without the generous financial support of my advisor, the Natural Sciences and Engineering Research Council of Canada, and Bill and Anne Brock, to whom I will always be indebted. Their generosity is truly extraordinary.

\end{acknowledgements}

\tableofcontents

\listoffigures

\listoftables

\end{frontmatter}


\chapter{Introduction}\label{ch:intro}

Quantum information theory is the study of how information can be stored, communicated, and manipulated via the laws of quantum mechanics. While quantum information differs from classical information in many ways, the two key components that seem to lead to its most interesting and useful properties are superpositions and entanglement. A superposition of quantum states is, mathematically, nothing more than a linear combination of vectors, and is thus very well-understood mathematically. Entanglement, on the other hand, deals with positive operators on the tensor product of matrix spaces, and is much more difficult to manipulate.

One of the most fundamental questions that can be asked in this setting is whether or not a given quantum state is entangled. However, even if we have a complete mathematical description of the state, determining whether or not it is entangled seems to be a very difficult task \cite{G03,HHH09}. In recent years, there has been a surge of interest in this problem, and several partial results are known. The most well-known characterizations of separability make use of the fact that the set of separable operators forms a cone that has simple relationships with other well-known cones of operators and linear maps \cite{DPS04,HLVC00}. One of the landmark results in this area says that the set of states that are separable (i.e., not entangled) is dual to the set of positive matrix-valued maps in a natural sense \cite{HHH96,P96}.

Many other results characterize the set of separable states in terms of norms. For example, there is a certain norm on the space of density operators that completely determines whether or not a given quantum state is entangled \cite{R00}, but this norm is difficult to compute. There are also easily-computed norms that give criteria for separability, but these conditions are only necessary, not sufficient \cite{CW03,R03}. Another separability criterion based on norms makes use of matrix-valued maps that are contractive \cite{HHH06}.

In this work, we generalize and unify these results. We introduce a family of norms that completely characterize positive linear maps, and we thoroughly investigate their properties. We characterize their isometry groups, we derive several inequalities in order to help bound them, and we show that their dual norms characterize separability. We show that the separable operators arise naturally from abstract operator systems, and that our norms arise analogously from abstract operator spaces. In this way, we show how the various cone-based and norm-based criteria for separability are related to each other.

We also discuss how our results generalize to cones other than the cone of separable operators. We show that they key property that drives most of our results is something that we call ``right CP-invariance''. We discuss the structure of right CP-invariant cones in general, and we show that every abstract operator system can be associated with a right CP-invariant cone (and vice-versa). We similarly show that some nice properties of separable states follow from the cone of positive maps being a semigroup (i.e., being closed under composition). We thus examine the role of semigroup cones in this setting, and show that semigroup cones can give rise to operator systems in a natural way as well.

Finally, we discuss several applications of our results to other areas of quantum information theory. In particular, we use the separability problem and our norms to approach the NPPT bound entanglement conjecture as well as the computation of minimum gate fidelity, maximum output purity, and geometric measure of entanglement.

\section{Organization of the Thesis}\label{sec:thesis_layout}

This work consists largely of work originally presented in \cite{JK10,JK11,JK11b,JKPP11,Joh11,JS11}, but many results are expanded upon and additional examples are provided. Several original results that have not appeared in any of these works are included in Sections~\ref{sec:cp_invariant}, \ref{sec:realign}, \ref{sec:sk_vector_dual_norm}, \ref{sec:SP_shareable}, \ref{sec:min_gate_fid_complexity}, \ref{sec:max_output_purity}, \ref{sec:tri_geo_ent}, and \ref{sec:dual_op_norm}.

The layout of the thesis is generally linear in that each chapter assumes knowledge of the content of the previous chapter. The main exception to this rule is that Chapters~\ref{ch:computation} and~\ref{ch:optheory} are independent of each other. A brief chapter-by-chapter breakdown of the thesis follows.

\noindent {\bf Chapter~\ref{ch:basics}.} Here we introduce the mathematical basics necessary for dealing with quantum information: state vectors, density operators, superoperators, and so on. We give a thorough overview of quantum entanglement and various methods for detecting it. We also introduce several sets of superoperators (i.e., linear maps on matrices) that are used throughout this work, and we demonstrate how various properties of superoperators relate to well-known properties of matrices via either the vector-operator isomorphism or the Choi--Jamio{\l}kowski isomorphism.

\noindent {\bf Chapter~\ref{ch:geometry}.} We present general results about cones of operators and superoperators that are relevant in entanglement theory. We consider properties that many specific cones of interest share, and discuss what those properties can tell us. We also introduce most of the norms that we use throughout the thesis and general results about linear preservers and isometry groups. We find that the group of operators that preserve separable states are exactly the local unitary and swap operators.

\noindent {\bf Chapter~\ref{chap:Sk_norms}.} We introduce a family of vector norms and a family of operator norms that arise naturally when considering entanglement. These norms characterize the cone of block positive operators (i.e., the operators that are dual to separable states) in a natural way, so they allow us to transform questions about separability into questions about norms, and vice-versa. We discuss numerous inequalities involving these norms, as well as their basic mathematical properties such as their dual norms and isometries.

\noindent {\bf Chapter~\ref{ch:computation}.} Here we consider computational problems and applications of our results. We introduce semidefinite and conic programming, and demonstrate how they can be used to compute the norms of Chapter~\ref{chap:Sk_norms}. We show that these norms appear and are useful in many unexpected areas of quantum information theory. We demonstrate how our norms can be used to answer questions about bound entanglement, minimum gate fidelity, maximum output purity, and the geometric measure of entanglement. For example, we use NP-hardness of the separability problem to show that computing minimum gate fidelity is also NP-hard.

\noindent {\bf Chapter~\ref{ch:optheory}.} In this chapter we give the norms of Chapter~\ref{chap:Sk_norms} a solid mathematical foundation and further motivate them as somehow the ``right'' norms to be studying in entanglement theory. To this end, we introduce abstract operator spaces and operator systems. We show that the maximal operator system gives rise to the cone of separable (i.e., non-entangled) operators, and similarly minimal operator spaces give rise to the norms we have been investigating. As an application of these connections, we are finally able to derive expressions for the duals of these norms, and we see that they exactly characterize separability. We also generalize recent results that show that separability can be characterized in terms of maps that are contractive in the trace norm.

We close the chapter by showing that not only do separable operators arise from a natural operator system, but so do many other cones that are considered throughout the thesis. We characterize exactly which cones occur as the maps that are completely positive into an abstract operator system, and present examples to illustrate our results.

\chapter{Quantum Information Theory and Entanglement}\label{ch:basics}

This chapter is devoted to developing the basic tools of quantum information theory that will be necessary throughout this work. We will focus on the interplay between vectors, matrices, and linear maps on matrices. We begin by defining our notation and recalling some basic notions from linear algebra.

We will use $\bb{C}^n$ to denote $n$-dimensional complex Euclidean space, $M_{n,m}$ to denote the space of $n \times m$ complex matrices, and for brevity we use the shorthand $M_n := M_{n,n}$. We will freely switch between thinking about $M_{n,m}$ as a space of matrices and thinking about it as the space of operators from $\bb{C}^m$ to $\bb{C}^n$ that they represent. We will make use of bra-ket notation from quantum mechanics as follows: we will use \emph{kets} $\ket{v} \in \bb{C}^n$ to represent unit (column) vectors and \emph{bras} $\bra{v} := \ket{v}^\dagger$ to represent the dual (row) vectors, where $(\cdot)^\dagger$ represents the conjugate transpose. The standard basis of $\mathbb{C}^m$ will be represented by $\big\{\ket{i}\big\}_{i=1}^{m}$. In the few instances where we work with a vector with norm different than $1$, we will represent it by a lowercase letter such as $v$, and it will be made clear that it is a vector.

We will use $I \in M_n$ to represent the identity matrix, and we may instead write it as $I_n$ if we wish to emphasize its size. A matrix $X \in M_n$ is called \emph{Hermitian} if $X^\dagger = X$ and it is called \emph{unitary} if $X^\dagger X = I$. The set of unitary matrices in $M_n$ forms a group called the \emph{unitary group}, which we will denote $U(n)$. A Hermitian matrix $X$ is called \emph{positive semidefinite} if $\bra{v}X\ket{v} \geq 0$ for all $\ket{v} \in \bb{C}^n$ and it is called \emph{positive definite} if that inequality is strict (which is equivalent to $X$ being positive semidefinite and having full rank). If $X$ is positive semidefinite or positive definite, we will write $X \geq 0$ or $X > 0$, respectively. We will denote the set of positive semidefinite matrices in $M_n$ by $M_n^+$.

A \emph{superoperator} is a linear map $\Phi : M_m \rightarrow M_n$. A fundamental fact about superoperators is that there always exist families of matrices $\big\{A_\ell\big\},\big\{B_\ell\big\} \subset M_{n,m}$ such that
\begin{align}\label{eq:CK_form}
	\Phi(X) = \sum_\ell A_\ell X B_\ell^\dagger \quad \forall \, X \in M_m.
\end{align}
We call a representation of $\Phi$ of this form a \emph{generalized Choi--Kraus representation} and we refer to the operators $\big\{A_\ell\big\}$ and $\big\{B_\ell\big\}$ as \emph{left} and \emph{right generalized Choi--Kraus operators} for $\Phi$, respectively. This terminology will make more sense after Section~\ref{sec:cp_maps}, where we introduce the (not generalized) Kraus representation and Kraus operators for completely positive maps.

To see that every linear map $\Phi : M_m \rightarrow M_n$ admits a representation of the form~\eqref{eq:CK_form}, write $\Phi\big(\ketbra{i}{j}\big) = \displaystyle\sum_{\ell=1}^n c_{ij\ell}\ketbra{x_{ij\ell}}{y_{ij\ell}}$ for all $1 \leq i,j \leq m$. If we define $A_{ab\ell},B_{ab\ell} \in M_{n,m}$ for $1 \leq a,b \leq m, 1 \leq \ell \leq n$ by
\begin{align*}
	A_{ab\ell} := c_{ab\ell}\ketbra{x_{ab\ell}}{a} \quad \text{ and } \quad B_{ab\ell} := \ketbra{y_{ab\ell}}{b},
\end{align*}
then the map $\displaystyle\Psi(X) := \sum_{a,b=1}^{m}\sum_{\ell=1}^n A_{ab\ell} X B_{ab\ell}^{\dagger}$ satisfies
\begin{align*}
	\Psi\big(\ketbra{i}{j}\big) & = \sum_{a,b=1}^{m}\sum_{\ell=1}^n A_{ab\ell} \ketbra{i}{j} B_{ab\ell}^{\dagger} \\
	& = \sum_{a,b=1}^{m} \sum_{\ell=1}^n c_{ab\ell}\ketbra{x_{ab\ell}}{a}\ketbra{i}{j} \ketbra{b}{y_{ab\ell}} \\
	& = \sum_{\ell=1}^n c_{ij\ell}\ketbra{x_{ij\ell}}{y_{ij\ell}} \\
	& = \Phi\big(\ketbra{i}{j}\big).
\end{align*}
By noting that the map $\Phi$ is completely determined by its action on the basis $\big\{\ketbra{i}{j}\big\}_{i,j=1}^{m} \subset M_m$, it follows that $\Phi = \Psi$, so in particular $\Phi$ can be written in the operator-sum form~\eqref{eq:CK_form}. In fact, we have written it in this form using only rank one generalized Choi--Kraus operators.

Notice that this ``na\"{i}ve'' construction of the generalized Choi--Kraus operators gives us a family of $m^2 n$ such operators. We will see in Section~\ref{sec:choi_jamiolkowski_isomorphism} how to construct a much more efficient decomposition consisting of only $mn$ generalized Choi--Kraus operators, and we will see that this is optimal in the sense that a general map requires at least $mn$ such operators.

We will often think of $M_n$ itself as a Hilbert space with the so-called \emph{Hilbert--Schmidt inner product} defined by $\langle X | Y \rangle := \Tr(X^\dagger Y)$ for $X,Y \in M_n$. With this inner product in mind, we can define the \emph{dual} map $\Phi^\dagger$ of a superoperator $\Phi : M_m \rightarrow M_n$ as the unique map satisfying $\Tr\big( \Phi(X)^\dagger Y \big) = \Tr\big( X^\dagger \Phi^\dagger(Y) \big)$ for all $X \in M_m$, $Y \in M_n$. If we write $\displaystyle\Phi(X) = \sum_\ell A_\ell X B_\ell^\dagger$, then we have
\begin{align*}
	\Tr\big( \Phi(X)^\dagger Y \big) = \Tr\left( \Big(\sum_\ell A_\ell X B_\ell^\dagger\Big)^\dagger Y \right) = \Tr\left( X^\dagger \Big( \sum_\ell A_\ell^\dagger Y B_\ell\Big) \right).
\end{align*}
It follows that the dual map has the generalized Choi--Kraus representation $\displaystyle\Phi^\dagger(Y) = \sum_\ell A_\ell^\dagger Y B_\ell$.

Many interesting properties of a superoperator $\Phi$ will depend on whether or not we can write it in the generalized Choi--Kraus form~\eqref{eq:CK_form} with families of operators $\big\{A_\ell\big\}$ and $\big\{B_\ell\big\}$ with certain properties. For example, the case when $A_\ell = B_\ell$ for all $\ell$ corresponds exactly to $\Phi$ being completely positive -- a property that we will introduce in Section~\ref{sec:cp_maps}. Other important properties of $\Phi$ that follow from elementary properties of the generalized Choi--Kraus operators will be explored in Section~\ref{sec:choi_jamiolkowski_basic}.

\section{Representing Quantum Information}\label{sec:quantum_info}

	We now introduce how quantum information is represented and manipulated mathematically in finite-dimensional systems. We will assume a familiarity with most basic concepts from linear algebra, such as the singular value and spectral decompositions, and tensor and Kronecker products. Other introductions to quantum information theory from perspectives similar to ours can be found in \cite{NC00,Wat04}.

\subsection{State Vectors and Density Operators}\label{sec:quantum_states}

In the Schr\"{o}dinger picture of quantum mechanics, quantum information is contained in quantum \emph{states}, which come in two varieties: \emph{pure} and \emph{mixed}. Pure quantum states are represented mathematically by unit vectors $\ket{v} \in \mathbb{C}^n$ and are typically the states that one wishes to work with. Note that the vector defining a pure state is defined only up to ``global phase'' -- that is, $\ket{v}$ and $e^{i\theta}\ket{v}$ represent the same quantum state regardless of the value of $\theta \in [0,2\pi)$.

Although pure states are desirable most of the time, once quantum states are measured and manipulated, they can decohere and become mixed. A mixed quantum state is represented via a \emph{density matrix} $\displaystyle\rho := \sum_\ell p_\ell \ketbra{v_\ell}{v_\ell}$, where $\{p_\ell\}$ is a set of real numbers such that $0 \leq p_\ell \leq 1$ and $\displaystyle\sum_\ell p_\ell = 1$ (i.e., $\{p_\ell\}$ forms a \emph{probability distribution}). It is not difficult to see that any matrix of this form is Hermitian, positive semidefinite, and has trace is equal to one. Conversely, the spectral decomposition theorem shows that every positive semidefinite matrix with trace one can be written as such a sum and thus represents a mixed quantum state.

Representing pure states by vectors and mixed states by matrices perhaps seems unnatural at first. However, we note that a pure state $\ket{v}$ can also be represented by the density matrix $\ketbra{v}{v}$. In fact, representing a pure state in this way highlights the non-uniqueness of vector representations up to global phase, as $\ketbra{v}{v} = (e^{i\theta}\ket{v})(\overline{e^{i\theta}}\bra{v})$ for all $\theta \in [0,2\pi)$. In general, if we refer to a quantum state without additional qualification, we are allowing for it to be mixed and thinking of it as a density matrix. If it is important that the state is pure then we will either specify that it is pure or make it clear that we are using a vector representation of the state.

\subsection{Positive and Completely Positive Maps}\label{sec:cp_maps}

A linear map $\Phi : M_m \rightarrow M_n$ such that $\Phi(X) \geq 0$ whenever $X \geq 0$ is said to be \emph{positive}, and we will see that maps of this type are ubiquitous in quantum information theory. If we let $id_k : M_k \rightarrow M_k$ (or simply $id$, if we do not wish to emphasize the dimension) denote the identity map, then $\Phi$ is called \emph{$k$-positive} if the map $id_k \otimes \Phi$ is positive. Finally, $\Phi$ is called \emph{completely positive} if it is $k$-positive for all $k \geq 1$.

One particularly important family of completely positive maps are maps of the form ${\rm Ad}_A(X) := AXA^\dagger$, where $A \in M_{n,m}$ is fixed -- we call such a map ${\rm Ad}_A$ an \emph{adjoint map}. To see that adjoint maps are completely positive, simply note that $(id_k \otimes {\rm Ad}_A)(X) = (I_k \otimes A)X(I_k \otimes A)^\dagger$, which is positive semidefinite whenever $X$ is positive semidefinite. Because the set of completely positive maps is easily seen to be convex, we similarly see that any map of the form $\sum_k {\rm Ad}_{A_k}$ is completely positive. We will see in Theorem~\ref{thm:choi_cp} that in fact all completely positive maps can be written in this form.

A completely positive map that is trace-preserving (i.e., $\Tr(\Phi(X)) = \Tr(X)$ for all $X$) is called a \emph{quantum channel}, as such maps represent the evolution of quantum states in the Schr\"{o}dinger picture of quantum dynamics \cite{NC00} -- a fact that is fairly intuitive, as we saw that density operators are characterized by being positive semidefinite and having trace one, so a quantum channel should preserve at least these two properties. The reason that $\Phi$ must be completely positive (rather than just positive) is because $\Phi$ should preserve positive semidefiniteness even if it is only applied to part of a quantum state (after all, the system that is evolving may be entangled with another system that we don't have direct access to).

The following characterization theorem for completely positive maps \cite{C75,K71} is fundamental in quantum information theory. Condition (c) provides a simple test for determining whether or not a given map is completely positive, while condition (d) gives a simple structure for such maps. The interested reader is directed to \cite{P03} for a detailed discussion of the structure of completely positive maps and for infinite-dimensional analogues of this result.
\begin{thm}\label{thm:choi_cp}
  Let $\Phi : M_m \rightarrow M_n$ be a linear map and consider the pure state $\ket{\psi_{+}} := \frac{1}{\sqrt{m}}\sum_{i=1}^{m} \ket{i} \otimes \ket{i}$. The following are equivalent:
  \begin{enumerate}[(a)]
  	\item $\Phi$ is completely positive;
  	\item $\Phi$ is $m$-positive;
  	\item the operator $C_\Phi := m(id_m \otimes \Phi)(\ketbra{\psi_{+}}{\psi_{+}}\big)$ is positive semidefinite; and
  	\item there exist operators $\{A_k\}_{k=1}^{mn}$ such that $\Phi = \sum_{k=1}^{mn} {\rm Ad}_{A_k}$.
  \end{enumerate}
\end{thm}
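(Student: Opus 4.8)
The plan is to prove the cycle of implications $(a)\Rightarrow(b)\Rightarrow(c)\Rightarrow(d)\Rightarrow(a)$, which establishes the equivalence of all four conditions. Two of these are essentially immediate. The implication $(a)\Rightarrow(b)$ holds trivially, since complete positivity means $k$-positivity for every $k\geq 1$, and $m$-positivity is the special case $k=m$. The implication $(d)\Rightarrow(a)$ was already observed in Section~\ref{sec:cp_maps}: each adjoint map ${\rm Ad}_{A_k}$ is completely positive because $(id_k\otimes{\rm Ad}_{A_k})(X)=(I_k\otimes A_k)X(I_k\otimes A_k)^\dagger\geq 0$ whenever $X\geq 0$, and a finite sum of completely positive maps is completely positive by convexity.

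The implication $(b)\Rightarrow(c)$ is the first substantive step. I would observe that $\ketbra{\psi_{+}}{\psi_{+}}$ is positive semidefinite, being a rank-one projection onto a unit vector, and that it lives in $M_m \otimes M_m$. Since $\Phi$ is $m$-positive, the map $id_m\otimes\Phi$ carries positive semidefinite operators to positive semidefinite operators; applying it to $\ketbra{\psi_{+}}{\psi_{+}}$ and rescaling by the positive constant $m$ shows directly that $C_\Phi\geq 0$.

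The heart of the argument is $(c)\Rightarrow(d)$. First I would rewrite the Choi matrix explicitly as $C_\Phi=\sum_{i,j=1}^m\ketbra{i}{j}\otimes\Phi(\ketbra{i}{j})$, which follows by expanding $\ketbra{\psi_{+}}{\psi_{+}}=\frac{1}{m}\sum_{i,j}\ketbra{i}{j}\otimes\ketbra{i}{j}$ and applying $id_m\otimes\Phi$. Since $C_\Phi\geq 0$ and $C_\Phi\in M_{mn}$, the spectral decomposition lets me write $C_\Phi=\sum_{k=1}^{mn}\ket{w_k}\bra{w_k}$ for vectors $\ket{w_k}\in\bb{C}^m\otimes\bb{C}^n$ (absorbing the nonnegative eigenvalues into the vectors, and padding with zero vectors if the rank is less than $mn$). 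The crucial move is the vector-operator isomorphism: I would define $A_k\in M_{n,m}$ so that $\ket{w_k}=\sum_{i=1}^m\ket{i}\otimes(A_k\ket{i})$, i.e., the matrix whose entries are the coordinates of $\ket{w_k}$. A short computation then gives $\ket{w_k}\bra{w_k}=\sum_{i,j}\ketbra{i}{j}\otimes A_k\ketbra{i}{j}A_k^\dagger$, and summing over $k$ and matching blocks against the expression for $C_\Phi$ yields $\Phi(\ketbra{i}{j})=\sum_k A_k\ketbra{i}{j}A_k^\dagger$ for all $i,j$. By linearity this gives $\Phi=\sum_{k=1}^{mn}{\rm Ad}_{A_k}$, which is exactly $(d)$; note that the bound of $mn$ terms drops out automatically from $C_\Phi\in M_{mn}$, recovering the efficiency claim made earlier in this chapter.

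The main obstacle is the bookkeeping in this last step: fixing the index conventions of the vectorization map so that the block structure of $C_\Phi$ correctly reproduces $\Phi$ on the matrix units $\ketbra{i}{j}$. Once the convention is pinned down, the block-matching is a routine verification, but it is where essentially all of the content of the theorem resides.
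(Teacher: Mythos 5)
Your proof is correct and follows essentially the same route as the paper: the easy implications are dispatched identically, and your $(c)\Rightarrow(d)$ step — spectral decomposition of $C_\Phi$, matricization of the (rescaled) eigenvectors into the Kraus operators $A_k$, and block-matching against $C_\Phi=\sum_{i,j}\ketbra{i}{j}\otimes\Phi(\ketbra{i}{j})$ — is exactly the paper's argument, with the paper extracting blocks by compressing with $\bra{i}\otimes I$ and $\ket{j}\otimes I$ rather than expanding $\ketbra{w_k}{w_k}$ directly. No gaps.
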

\begin{proof}
	The implications (d) $\implies$ (a) $\implies$ (b) $\implies$ (c) follow easily from the relevant definitions, so all we need to prove is (c) $\implies$ (d).
	
	To this end, note that because $C_\Phi$ is positive semidefinite we can use the spectral decomposition theorem to write
	\begin{align}\label{eq:choi_spectral}
		C_\Phi & = \sum_{k=1}^{mn} \lambda_k \ketbra{v_k}{v_k}.
	\end{align}
	We can then write each $\ket{v_k}$ as a linear combination of elementary tensors: $\ket{v_k} = \sum_{j=1}^{m} c_{kj}\ket{j} \otimes \ket{v_{kj}}$. If we multiply $C_\Phi$ on the left by $\bra{i} \otimes I$ and on the right by $\ket{j} \otimes I$ (abusing notation slightly), then from the definition of $C_\Phi$ we have
	\begin{align}\label{eq:choi_reduce1}
		(\bra{i} \otimes I)C_\Phi(\ket{j} \otimes I) & = \Phi\big(\ketbra{i}{j}\big).
	\end{align}
	
	Similarly, from Equation~\eqref{eq:choi_spectral} we have
	\begin{align}\label{eq:choi_reduce2}
		(\bra{i} \otimes I)C_\Phi(\ket{j} \otimes I) & = \sum_{k=1}^{mn} \lambda_k c_{ki}\overline{c_{kj}} \ketbra{v_{ki}}{v_{kj}} \\
		& = \sum_{k=1}^{mn} \lambda_k \big(\sum_{\ell=1}^{m}c_{k\ell}\ketbra{v_{k\ell}}{\ell}\big) \ketbra{i}{j} \big( \sum_{\ell=1}^{m}\overline{c_{k\ell}}\ketbra{\ell}{v_{k\ell}} \big).
	\end{align}
	
	If we define $A_k := \sqrt{\lambda_k}\sum_{\ell=1}^{m}c_{k\ell}\ketbra{v_{k\ell}}{\ell}$ then it follows by equating Equations~\eqref{eq:choi_reduce1} and~\eqref{eq:choi_reduce2} that $\Phi\big(\ketbra{i}{j}\big) = \sum_{k=1}^{mn} A_k \ketbra{i}{j} A_k^\dagger$. Extending by linearity shows that $\Phi(X) = \sum_{k=1}^{mn} A_k X A_k^\dagger$ for all $X \in M_m$, which completes the proof.
\end{proof}

\begin{remark}\label{rem:cp_unitary_freedom}
	{\rm The operators $\{A_k\}_{k=1}^{mn}$ of condition (d) are called \emph{Kraus operators} for $\Phi$ after \cite{K71,Kra83}, where they were extensively studied. Kraus operators are not in general unique, but two sets of Kraus operators are related to each other via a unitary matrix \cite[Theorem 8.2]{NC00}. More specifically, two sets of Kraus operators $\{A_\ell\}_{\ell=1}^{mn}$ and $\{B_\ell\}_{\ell=1}^{mn}$ correspond to the same completely positive map if and only if there is a unitary matrix $(u_{i,j})$ such that $A_\ell = \sum_{j=1}^{mn} u_{\ell,j} B_j$.

	Nonetheless, the proof of Theorem~\ref{thm:choi_cp} demonstrated the existence of a particular family of Kraus operators that arise from the eigenvectors of the operator $C_\Phi$. We will refer to this family of Kraus operators as the \emph{canonical} set of Kraus operators, and we note that they are mutually orthogonal in the Hilbert--Schmidt inner product (i.e., $\Tr(A_i^\dagger A_j) = 0$ if $i \neq j$).
}\end{remark}

Condition (d) of Theorem~\ref{thm:choi_cp} says that the extreme points of the convex set of completely positive maps are the adjoint maps. Indeed, the adjoint maps are exactly the maps $\Phi$ such that ${\rm rank}(C_\Phi) \leq 1$ -- because the zero operator and the rank one positive semidefinite operators are the extreme points of the set of positive semidefinite operators, the adjoint maps are the extreme points of set of completely positive maps.

If $\Phi$ is a completely positive linear map with Kraus operators $\{A_k\}_{k=1}^{mn}$, then we observe that it is trace-preserving (i.e., a quantum channel) if and only if
\begin{align*}
	\Tr(X) = \Tr\big(\Phi(X)\big) = \Tr\Big( \sum_{k=1}^{mn}A_k X A_k^\dagger \Big) = \Tr\Big( X \sum_{k=1}^{mn}A_k^\dagger A_k \Big) \quad \forall \, X \in M_m.
\end{align*}
It follows that $\Phi$ is trace-preserving if and only if $\sum_{k=1}^{mn}A_k^\dagger A_k = I$. On the other hand, the condition $\sum_{k=1}^{mn}A_k A_k^\dagger = I$ corresponds to $\Phi$ being \emph{unital} (i.e., $\Phi(I) = I$). Trace-preserving maps and unital maps are dual in the sense that $\Phi$ is trace-preserving if and only if $\Phi^\dagger$ is unital, and vice-versa.

The matrix $C_\Phi$ of condition (c) of Theorem~\ref{thm:choi_cp} is called the \emph{Choi matrix} of $\Phi$ -- a concept that we will explore in much more depth in Section~\ref{sec:choi_jamiolkowski_isomorphism}. For now, we present some examples to make use of Theorem~\ref{thm:choi_cp}.
\begin{exam}\label{exam:transpose_not_cp}{\rm
 Let $T : M_2 \rightarrow M_2$ be the $2 \times 2$ transpose map. It is easy to see that the eigenvalues of $X$ are exactly the eigenvalues of $T(X)$ for any $X \in M_2$, so $T$ is a positive map. To determine whether or not $T$ is completely positive we use condition (c) of Theorem~\ref{thm:choi_cp}:
\begin{align*}\sspp
	2(id_2 \otimes T)(\ketbra{\psi_+}{\psi_+}) = (id_2 \otimes T)\left(\begin{bmatrix} 1 & 0 & 0 & 1 \\ 0 & 0 & 0 & 0 \\ 0 & 0 & 0 & 0 \\ 1 & 0 & 0 & 1 \end{bmatrix}\right) = \begin{bmatrix} 1 & 0 & 0 & 0 \\ 0 & 0 & 1 & 0 \\ 0 & 1 & 0 & 0 \\ 0 & 0 & 0 & 1 \end{bmatrix},
\dsp\end{align*}
	which is not positive (its eigenvalues are $1, 1, 1,$ and $-1$). It follows that $T$ is not completely positive. In fact, we can embed this example in higher dimensions to see that the transpose map in any dimension is always positive but not $2$-positive. We will see another method of showing that the transpose map is never $2$-positive in Example~\ref{ex:trans_not_2pos}. We will see a map that is $k$-positive but not $(k+1)$-positive for any fixed $k$ in Example~\ref{exam:k_pos}.}
\end{exam}

\begin{exam}\label{exam:tr_cp}{\rm
	Define $\Phi : M_n \rightarrow M_n$ by $\Phi(X) = \frac{1}{n}\Tr(X)I$. To see that $\Phi$ is trace-preserving, observe that $\Tr(\Phi(X)) = \frac{1}{n}\Tr(X)\Tr(I) = \Tr(X)$. To see that it is completely positive, we construct its Choi matrix:
	\begin{align*}
		C_\Phi = \sum_{i,j=1}^{n} \ketbra{i}{j} \otimes \Phi(\ketbra{i}{j}) = \frac{1}{n}\sum_{i=1}^{n} \ketbra{i}{i} \otimes I = \frac{1}{n}I \otimes I \geq 0.
	\end{align*}
	It follows from Theorem~\ref{thm:choi_cp} that $\Phi$ is completely positive and thus is a quantum channel. In fact, $\Phi$ is known as the \emph{completely depolarizing channel} because it turns any density matrix into $\frac{1}{n}I$.
 
  Because $\Phi$ is completely positive, it must have a family of Kraus operators. One such family of operators is $\{\tfrac{1}{\sqrt{n}}\ketbra{i}{j}\}_{i,j=1}^{n}$, which can be seen as follows:
  \begin{align*}
  	\frac{1}{n}\sum_{i,j=1}^{n} \ketbra{i}{j} X \ketbra{j}{i} = \frac{1}{n}\sum_{i=1}^{n}\ketbra{i}{i} \sum_{j=1}^{n}\bra{j}X\ket{j} = \frac{1}{n}\Tr(X)I = \Phi(X) \quad \forall \, X \in M_n.
  \end{align*}
  
	To highlight the non-uniqueness of families of Kraus operators, we now show that $\big\{ \tfrac{1}{\sqrt{n}}A_{k} \big\}_{k=1}^{n^2}$ is a family of Kraus operators for $\Phi$ whenever $\big\{ A_k \big\}_{k=1}^{n^2}$ is a family of matrices that form an orthonormal basis for $M_n$ under the Hilbert--Schmidt inner product. That is, whenever
\begin{align}\label{eq:trace01}
  \Tr(A_{k}^{\dagger}A_{\ell}) = \delta_{k,\ell} \quad \forall \, 1 \leq k,\ell \leq n^2,
\end{align}
where $\delta_{k,\ell}$ is the Kronecker delta. To this end, let $\big\{ B_k \big\}_{k=1}^{n^2}$ be any other orthonormal basis for $M_n$ and write each $A_k$ as a linear combination of elements from $\big\{ B_k \big\}_{k=1}^{n^2}$:
\begin{align*}
  A_{k} = \sum_{i=1}^{n^2} u_{ik}B_{i} \quad \forall \, 1 \leq k \leq n^2,
\end{align*}
where $\{u_{ij}\}_{i,j=1}^{n^2} \subset \mathbb{C}$ is a family of constants. By Equation~\eqref{eq:trace01} we then have
\begin{align*}
	\delta_{k,\ell} = \Tr(A_{k}^{\dagger}A_{\ell}) = \sum_{i,j=1}^{n^2} \overline{u_{i k}} u_{j \ell} \Tr\big(B_{i}^\dagger B_j \big) = \sum_{i=1}^{n^2} \overline{u_{i k}} u_{i \ell},
\end{align*}
from which it follows that $(u_{ij})$ is a unitary matrix. Remark~\ref{rem:cp_unitary_freedom} then says that $\big\{ \tfrac{1}{\sqrt{n}}A_k \big\}_{k=1}^{n^2}$ and $\big\{ \tfrac{1}{\sqrt{n}}B_k \big\}_{k=1}^{n^2}$ represent the same completely positive map. Finally, choosing $\big\{ B_k \big\}_{k=1}^{n^2} = \big\{\ketbra{i}{j}\big\}_{i,j=1}^{n}$ shows that $\big\{ \frac{1}{\sqrt{n}}A_{k} \big\}_{k=1}^{n^2}$ is a family of Kraus operators for $\Phi$ whenever $\big\{A_k\big\}_{k=1}^{n^2}$ is an orthonormal basis of $M_n$.}
\end{exam}

Even though Theorem~\ref{thm:choi_cp} provides a simple characterization of completely positive maps as well as a simple test for determining whether or not a given map is completely positive, the closely related problem of characterizing positive maps (or $k$-positive maps for some $k < m$) is much more difficult. It is known that if $m = 2$ and $n \in \{2,3\}$ then $\Phi$ is positive if and only if it can be written in the form $\Phi = \Psi_1 + T \circ \Psi_2$, where $\Psi_1,\Psi_2 : M_m \rightarrow M_n$ are completely positive maps \cite{S63,W76}. In higher dimensions, many partial results are known \cite{BFP04,CK09,CK11,Hou10,Maj11,MM01,Ter00,TT83,Tom85}, but the structure of the set of positive maps is still not well understood.

\subsection{The Stinespring Form and Complementary Maps}\label{sec:stinespring}

Although we will most frequently use the characterization provided by Theorem~\ref{thm:choi_cp} when dealing with completely positive maps, we now present another characterization that has one very specific advantage for our purposes -- it allows us to define complementary maps.

Before proceeding, recall the partial trace map $\Tr_i$ that traces out the $i$-th subsystem of $M_m \otimes M_n$. For example, when $i = 2$ the map $\Tr_2$ is the linear map that acts on elementary tensors as $\Tr_2(A \otimes B) = \Tr(B) A$. Notice that the trace map $\Tr : M_m \rightarrow \mathbb{C}$ is completely positive, since its Choi matrix is $I_m \geq 0$. It follows that the partial trace map $\Tr_i$ is also completely positive, because (for example, when $i = 2$) $id_k \otimes Tr_2 = id_k \otimes id_m \otimes \Tr = id_{km} \otimes \Tr$ is positive for any $k,m \in \mathbb{N}$.

The following result of Stinespring says that all completely positive maps can be written as a composition of a partial trace map and an adjoint map. In this sense, the partial trace is one of the most fundamental completely positive maps, much like the adjoint maps that played a key role in the previous section. This result was originally proved in the infinite-dimensional case \cite{P03,Sti55}, but we state and prove it only in finite dimensions.
\begin{thm}[Stinespring]\label{thm:stinespring}
	Let $\Phi : M_m \rightarrow M_n$ be a linear map. Then $\Phi$ is completely positive if and only if there exists $A : \mathbb{C}^m \rightarrow \mathbb{C}^{mn} \otimes \mathbb{C}^n$ such that $\Phi = \Tr_1 \circ {\rm Ad}_A$.
\end{thm}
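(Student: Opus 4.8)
The plan is to reduce the statement to the Kraus decomposition established in Theorem~\ref{thm:choi_cp} and to verify the two implications separately.

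The converse (Stinespring form $\implies$ complete positivity) is the easy direction. I would observe that ${\rm Ad}_A$ is completely positive (adjoint maps were shown to be completely positive in Section~\ref{sec:cp_maps}) and that $\Tr_1$ is completely positive (noted just before the theorem statement). Since a composition of completely positive maps is again completely positive — because $id_k \otimes (\Tr_1 \circ {\rm Ad}_A) = (id_k \otimes \Tr_1) \circ (id_k \otimes {\rm Ad}_A)$ is a composition of positive maps whenever each factor is $k$-positive, and a composition of positive maps is plainly positive — it follows immediately that $\Phi = \Tr_1 \circ {\rm Ad}_A$ is completely positive.

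The forward direction (complete positivity $\implies$ Stinespring form) is where the real work lies, and the main task is to construct the operator $A$ explicitly. I would start by invoking condition (d) of Theorem~\ref{thm:choi_cp} to obtain Kraus operators $\{A_k\}_{k=1}^{mn} \subset M_{n,m}$ with $\Phi(X) = \sum_{k=1}^{mn} A_k X A_k^\dagger$. Letting $\{\ket{k}\}_{k=1}^{mn}$ denote the standard basis of $\mathbb{C}^{mn}$, I would then define $A : \mathbb{C}^m \rightarrow \mathbb{C}^{mn} \otimes \mathbb{C}^n$ by $A := \sum_{k=1}^{mn} \ket{k} \otimes A_k$, which indeed maps into $\mathbb{C}^{mn} \otimes \mathbb{C}^n$ since each $A_k\ket{v} \in \mathbb{C}^n$.

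The verification is then a direct computation: expanding ${\rm Ad}_A(X) = A X A^\dagger = \sum_{k,\ell=1}^{mn} \ketbra{k}{\ell} \otimes A_k X A_\ell^\dagger$, viewing this as an element of $M_{mn} \otimes M_n$, and applying $\Tr_1$ (which traces out the $M_{mn}$ factor) annihilates the off-diagonal terms because $\Tr(\ketbra{k}{\ell}) = \delta_{k,\ell}$, leaving $\Tr_1\big({\rm Ad}_A(X)\big) = \sum_{k=1}^{mn} A_k X A_k^\dagger = \Phi(X)$. There is no genuine analytic obstacle here; the only point requiring care is the bookkeeping, namely keeping track of which tensor factor $\Tr_1$ acts on and confirming that the dimension $mn$ of the first factor matches the number of Kraus operators supplied by Theorem~\ref{thm:choi_cp}. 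I therefore expect this index and dimension tracking to be the sole subtlety.
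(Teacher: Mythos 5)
Your proposal is correct and follows essentially the same route as the paper: the easy direction via closure of complete positivity under composition, and the forward direction by stacking the Kraus operators from Theorem~\ref{thm:choi_cp} into $A = \sum_k \ket{k} \otimes A_k$ and checking that $\Tr_1$ kills the off-diagonal blocks. The operator you construct is exactly the one the paper defines via $A\ket{i} = \sum_\ell \ket{\ell} \otimes A_\ell\ket{i}$, so there is nothing to add.
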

\begin{proof}
	We already showed that adjoint maps and the partial trace map are completely positive. Because the composition of two completely positive maps is again completely positive, the ``if'' direction of the result follows immediately.
	
	For the ``only if'' direction, suppose $\Phi : M_m \rightarrow M_n$ is completely positive. By Theorem~\ref{thm:choi_cp} we know that there exists a family of operators $\big\{A_\ell\big\}_{\ell=1}^{mn}$ we can write $\Phi = \sum_{\ell=1}^{mn}{\rm Ad}_{A_\ell}$. Now define an operator $A : \mathbb{C}^m \rightarrow \mathbb{C}^{mn} \otimes \mathbb{C}^n$ by
	\begin{align*}
		A\ket{i} = \sum_{\ell=1}^{mn} \ket{\ell} \otimes A_\ell \ket{i}.
	\end{align*}
	Then
	\begin{align*}
		(\Tr_1 \circ {\rm Ad}_A)(\ketbra{i}{j}) = \sum_{k,\ell=1}^{mn} \Tr_1 \left( \ketbra{k}{\ell} \otimes A_k \ketbra{i}{j} A_{\ell}^\dagger \right) = \sum_{\ell=1}^{mn} A_\ell \ketbra{i}{j} A_{\ell}^\dagger = \Phi(\ketbra{i}{j})
	\end{align*}
	for all $1 \leq i,j \leq m$. Linearity then shows that $\Tr_1 \circ {\rm Ad}_A = \Phi$, as desired.
\end{proof}

We will refer to the form $\Phi = \Tr_1 \circ {\rm Ad}_A$ as a \emph{Stinespring representation} of the completely positive map $\Phi$. It is worth dwelling on the construction of the operator $A$ a little bit. Through the appropriate (na\"{i}ve) identification of spaces, we can represent the operator $A$ of Theorem~\ref{thm:stinespring} as the following block matrix in $M_{mn,1}(M_{n,m}) \cong M_{mn,1} \otimes M_{n,m}$:
\begin{align*}\sspp
	A = \begin{bmatrix}A_1 \\ A_2 \\ \vdots \\ A_{mn}\end{bmatrix},
\dsp\end{align*}
where $\big\{A_\ell\big\}_{\ell=1}^{mn}$ is a Kraus representation of $\Phi$, as before. This representation of $A$ makes is clear how to go back and forth between a Stinespring and a Kraus representation of a completely positive map. Notice that trace-preservation of $\Phi$ corresponds to $A$ being an isometry (i.e., $A^\dagger A = I_m$).

The Stinespring form makes it clearer where the unitary freedom in Kraus operators, discussed in Remark~\ref{rem:cp_unitary_freedom}, comes from. If $\Tr_1 \circ {\rm Ad}_{A}$ is a Stinespring representation of a completely positive map $\Phi$, then $\Tr_1 \circ {\rm Ad}_{(U \otimes I_n)A}$ (with $U = (u_{i,j}) \in \cl{M}_{mn}$ a unitary matrix) is another Stinespring representation of $\Phi$, since $U$ is traced out by the partial trace. By constructing the operator $A$ as above, we see that the operators
\begin{align*}\sspp
	A = \begin{bmatrix}A_1 \\ A_2 \\ \vdots \\ A_{mn}\end{bmatrix} \quad \text{ and } \quad (U \otimes I_n)A = \begin{bmatrix}\sum_{j}u_{1,j}A_j \\ \sum_{j}u_{2,j}A_j \\ \vdots \\ \sum_{j}u_{mn,j}A_j\end{bmatrix}
\dsp\end{align*}
provide Stinespring representations for the same completely positive map. Thus the families of Kraus operators $\big\{A_\ell\big\}_{\ell=1}^{mn}$ and $\big\{\sum_{j=1}^{mn}u_{\ell,j}A_j\big\}_{\ell=1}^{mn}$ represent the same map, as was discussed earlier. 

Given a Stinespring representation of the map $\Phi = \Tr_1 \circ {\rm Ad}_A$, if we take the partial trace over the second subsystem rather than the first, we obtain the \emph{complementary map} $\Phi^{C} := \Tr_{2} \circ {\rm Ad}_A$. It is easily-verified that $\Phi$ is a quantum channel if and only if $\Phi^C$ is a quantum channel, in which case complementary maps have a well-defined physical interpretation. If Alice sends quantum information to Bob via the quantum channel $\Phi$, then the complementary channel $\Phi^C$ describes the information that is leaked during that transmission.

Because of the unitary-invariance of Stinespring representations, complementary maps are not uniquely defined, but rather are only defined up to unitary conjugation. That is, if $\Psi$ is a complementary map of $\Phi$, then so is ${\rm Ad}_U \circ \Psi$ for any unitary matrix $U \in M_{mn}$. This freedom up to unitary conjugation does not affect most uses of complementary maps, however, so we will ignore this technicality when possible and still speak of \emph{the} complementary map $\Phi^C$. It is easily-verified that complementary maps are dual in the sense that $\Phi$ is a complementary map of $\Phi^{C}$.

We close this section with a simple lemma that describes how adjoint maps and complementary maps behave under composition.
\begin{lemma}\label{lem:comp_adjoint}
	Let $\Phi : M_m \rightarrow M_n$ be completely positive and let $B \in M_m$. Then $(\Phi \circ {\rm Ad}_B)^C = \Phi^C \circ {\rm Ad}_B$.
\end{lemma}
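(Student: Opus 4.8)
The plan is to push the operator $B$ through a Stinespring representation of $\Phi$ and exploit the fact that adjoint maps compose by simply multiplying their defining operators. Concretely, I would begin by fixing a Stinespring representation $\Phi = \Tr_1 \circ {\rm Ad}_A$ with $A : \bb{C}^m \to \bb{C}^{mn} \otimes \bb{C}^n$, as guaranteed by Theorem~\ref{thm:stinespring}; with respect to this representation the complementary map is by definition $\Phi^C = \Tr_2 \circ {\rm Ad}_A$.

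The central observation is the elementary identity ${\rm Ad}_A \circ {\rm Ad}_B = {\rm Ad}_{AB}$, which holds because ${\rm Ad}_A\big({\rm Ad}_B(X)\big) = A(BXB^\dagger)A^\dagger = (AB)X(AB)^\dagger$ for every $X \in M_m$. Using this, I would compute
\[
	\Phi \circ {\rm Ad}_B = \Tr_1 \circ {\rm Ad}_A \circ {\rm Ad}_B = \Tr_1 \circ {\rm Ad}_{AB}.
\]
Since $B \in M_m$ and $A : \bb{C}^m \to \bb{C}^{mn} \otimes \bb{C}^n$, the composite operator $AB : \bb{C}^m \to \bb{C}^{mn} \otimes \bb{C}^n$ has exactly the target space required for a Stinespring operator of a completely positive map $M_m \to M_n$, so $\Tr_1 \circ {\rm Ad}_{AB}$ is a legitimate Stinespring representation of $\Phi \circ {\rm Ad}_B$. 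Tracing out the second system instead of the first then gives, directly from the definition of the complementary map,
\[
	(\Phi \circ {\rm Ad}_B)^C = \Tr_2 \circ {\rm Ad}_{AB} = \Tr_2 \circ {\rm Ad}_A \circ {\rm Ad}_B = \Phi^C \circ {\rm Ad}_B,
\]
which is the claimed identity.

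The only real subtlety — and the point I would be careful to address — is that complementary maps are defined only up to unitary conjugation on the ancilla $\bb{C}^{mn}$, so the equality must be read with a consistent choice of Stinespring representations on both sides. The argument handles this automatically: the complementary map of $\Phi \circ {\rm Ad}_B$ is derived from the \emph{same} operator $A$ that defines $\Phi^C$, merely precomposed with $B$, so no new unitary freedom is introduced. I would also note that the output spaces match ($\Phi^C$ and $(\Phi \circ {\rm Ad}_B)^C$ both land in $M_{mn}$, while ${\rm Ad}_B$ maps $M_m$ to $M_m$), so that $\Phi^C \circ {\rm Ad}_B$ is well-defined and the statement type-checks. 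Beyond this bookkeeping the proof is essentially a one-line computation, so I anticipate no genuine obstacle.
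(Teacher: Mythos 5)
Your proof is correct and is essentially identical to the paper's own argument: both fix a Stinespring representation $\Phi = \Tr_1 \circ {\rm Ad}_A$, use ${\rm Ad}_A \circ {\rm Ad}_B = {\rm Ad}_{AB}$ to exhibit $\Tr_1 \circ {\rm Ad}_{AB}$ as a Stinespring form of $\Phi \circ {\rm Ad}_B$, and then trace out the second subsystem. Your extra remark about the unitary freedom being handled consistently is a nice touch but not needed beyond what the paper does.
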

\begin{proof}
	Suppose $\Phi$ has Stinespring representation $\Phi = \Tr_{1} \circ {\rm Ad}_A$. Then $\Phi \circ {\rm Ad}_B = \Tr_1 \circ {\rm Ad}_A \circ {\rm Ad}_B = \Tr_1 \circ {\rm Ad}_{AB}$, which is in Stinespring form. Thus $(\Phi \circ {\rm Ad}_B)^C = \Tr_2 \circ {\rm Ad}_{AB} = \Tr_2 \circ {\rm Ad}_{A} \circ {\rm Ad}_{B} = \Phi^C \circ {\rm Ad}_{B}$.
\end{proof}

\section{Representing Quantum Entanglement}\label{sec:quantum_ent}

	Within quantum information theory, the theory of entanglement \cite{BZ06,EPR35,HHH09,Sch35} is one of the most important and active areas of research. Entanglement leads to many of the most counter-intuitive and important properties and protocols of quantum information, such as superdense coding \cite{BW92} and quantum teleportation \cite{BBCJPW93,Vai94}. In this section we will introduce the mathematical formulation of entanglement in quantum systems.

A pure state $\ket{v} \in \mathbb{C}^m \otimes \mathbb{C}^n$ is called \emph{separable} if it can be written as an elementary tensor: $\ket{v} = \ket{a} \otimes \ket{b}$ for some $\ket{a} \in \mathbb{C}^m$ and $\ket{b} \in \mathbb{C}^n$. Otherwise, $\ket{v}$ is said to be \emph{entangled}. In the case of mixed stated, we say that $\rho \in M_m \otimes M_n$ is separable if it can be written as a convex combination of separable pure states \cite{W89}:
\begin{align}\label{eq:sep_form}
	\rho = \sum_{\ell} p_\ell \ketbra{a_\ell}{a_\ell} \otimes \ketbra{b_\ell}{b_\ell},
\end{align}
where $\{p_\ell\}$ forms a probability distribution. Otherwise, $\rho$ is called entangled. Slightly more generally, we will refer to an operator $X \geq 0$ (not necessarily with trace one) as separable if it can be written in the form $X = \displaystyle\sum_\ell Y_\ell \otimes Z_\ell$ with $Y_\ell, Z_\ell \geq 0$ for all $\ell$.

It should be pointed out that in general there is no relationship between the form~\eqref{eq:sep_form} of a separable operator and its spectral decomposition. If a density operator has separable eigenvectors then it certainly is separable, but the converse is not true -- there are separable density operators with no basis of separable eigenvectors.

The problem of determining whether or not a density matrix is separable is a problem that has received a lot of attention in recent years. While it is known that this problem is hard in general \cite{G10,G03,Ioa07}, many tests have been derived that work in certain special cases \cite{CW03,DPS04,HHH96,HHH06,P96,R03}. We will investigate some of these methods in this section, as well as in Sections~\ref{sec:realign} and~\ref{sec:contrac_sep_crit}.

\subsection{Vector-Operator Isomorphism}\label{sec:vector_operator_isomorphism}

The \emph{vector-operator isomorphism} is a valuable tool that will be used throughout this thesis to introduce many concepts from entanglement theory via fundamental and well-known results from linear algebra. It will also allow us to use classical linear preserver problems to help us answer questions about preservers and isometry groups that are relevant in entanglement theory. The key idea of the vector-operator isomorphism is that we can bring matrices and superoperators ``down a level'' by thinking about matrices as vectors and by thinking about superoperators as matrices, which makes them easier to deal with in many situations.

Consider the linear map $\Gamma : \bb{C}^m \otimes \bb{C}^n \rightarrow M_{n,m}$ defined on the standard basis by $\Gamma(\ket{i} \otimes \ket{j}) = \ketbra{j}{i}$. Because $\big\{\Gamma(\ket{i} \otimes \ket{j})\big\}$ is a basis of $M_{n,m}$, and it is easily-verified that $\braket{v}{w} = \Tr \big( \Gamma(\ket{v})^\dagger \Gamma(\ket{w}) \big)$, this map is a isomorphism -- the vector-operator isomorphism. By linearity, the vector-operator isomorphism associates an elementary tensor $\ket{a} \otimes \ket{b} \in \bb{C}^m \otimes \bb{C}^n$ with the rank-$1$ matrix $\ket{b}\overline{\bra{a}} \in M_{n,m}$, and associates a general bipartite vector $\ket{v} := \sum_i c_i \ket{a_i} \otimes \ket{b_i}$ ($c_i \in \bb{R}$) with the matrix $\sum_i c_i\ket{b_i}\overline{\bra{a_i}}$, which is called the \emph{matricization} of $\ket{v}$ and will be denoted by ${\rm mat}(\ket{v})$. In fact, we have already seen this isomorphism in action: in the proof of Theorem~\ref{thm:choi_cp} we defined the Kraus operator $A_k$ to be (up to scaling) the matricization of the eigenvector $\ket{v_k}$ of $C_\Phi$.

When thinking of the vector-operator isomorphism in reverse, the term \emph{vectorization} is often used. That is, $\ket{v}$ is called the vectorization of ${\rm mat}(\ket{v})$, and we denote the vectorization operator by ${\rm vec}(\cdot)$. It is worth noting that, in the standard basis, the vectorization of a matrix $X \in M_{n,m}$ is the $mn$-dimensional vector obtained by stacking the columns of $X$ on top of one another. Conversely, the matricization of a vector $\ket{v} \in \bb{C}^m \otimes \bb{C}^n$ is $n \times m$ matrix obtained by placing the first $n$ entries of $\ket{v}$ in its first column, the next $n$ entries of $\ket{v}$ in its second column, and so on.

The vector-operator isomorphism is isometric if the norm on $\bb{C}^m \otimes \bb{C}^n$ is the Euclidean norm and the norm on $M_{n,m}$ is taken to be the Frobenius norm $\big\|(x_{ij})\big\|_F := \sqrt{\sum_{i=1}^n\sum_{j=1}^m x_{ij}^2} = \sqrt{\sum_{i=1}^{\min\{m,n\}} \sigma_i^2}$, where $\{\sigma_i\}_{i=1}^{\min\{m,n\}}$ are the singular values of $(x_{ij})$.

\begin{exam}\label{ex:vec_op_iso}
  {\rm Consider the pure state $\ket{\psi_+} := \frac{1}{\sqrt{m}}\sum_{i=1}^{m} \ket{i} \otimes \ket{i} \in \bb{C}^m \otimes \bb{C}^m$ from Theorem~\ref{thm:choi_cp}. The vector-operator isomorphism gives ${\rm mat}(\ket{\psi_+}) = \frac{1}{\sqrt{m}}\sum_{i=1}^{m} \ketbra{i}{i} = \frac{1}{\sqrt{m}}I$ -- a scaled identity matrix. To illustrate the opposite direction of the isomorphism, let us fix $m = 2$. The vectorization of $\frac{1}{\sqrt{2}}I$ is obtained by stacking its first column on top of its second column in the standard basis, which gives ${\rm vec}\big(\frac{1}{\sqrt{2}}I\big) = \frac{1}{\sqrt{2}}(1,0,0,1)^T = \ket{\psi_+}$.}
\end{exam}

If we wish to think about $X \in M_{n,m}$ as a vector ${\rm vec}(X) \in \mathbb{C}^m \otimes \mathbb{C}^n$ via the vector-operator isomorphism, it would be beneficial to understand how a superoperator $\Phi : M_{n,m} \rightarrow M_{n,m}$ appears when represented as an operator in $M_m \otimes M_n$. That is, what is the form of the operator $M_\Phi \in M_m \otimes M_n$ with the property that $M_\Phi{\rm vec}(X) = {\rm vec}(\Phi(X))$ for all $X \in M_{n,m}$? To answer this question, write $\Phi(X) = \sum_k A_k X B_k^\dagger$ for some generalized Choi--Kraus operators $\big\{A_k\big\} \subset M_n$ and $\big\{B_k\big\} \subset M_m$. Then
\begin{align*}
	{\rm vec}(\Phi(\ketbra{i}{j})) & = \sum_k {\rm vec}(A_k\ketbra{i}{j}B_k^\dagger) \\
	& = \sum_k \overline{B_k}\ket{j} \otimes A_k\ket{i} \\
	& = \left(\sum_k \overline{B_k} \otimes A_k\right) {\rm vec}(\ketbra{i}{j}) \quad \forall \, 0 \leq i < n, 0 \leq j < m.
\end{align*}

Extending by linearity then shows that $\left(\sum_k \overline{B_k} \otimes A_k\right) {\rm vec}(X) = {\rm vec}(\Phi(X))$ for all $X \in M_{n,m}$, so the operator we seek is $M_\Phi := \sum_k \overline{B_k} \otimes A_k$. The association between $\Phi$ and $M_\Phi$ is an isomorphism, which we will generally just consider part of the vector-operator isomorphism itself.

\subsection{Schmidt Rank and Pure State Entanglement}\label{sec:schmidt_rank}

	We have already seen that a pure state $\ket{v} \in \bb{C}^m \otimes \bb{C}^n$ is called separable if it can be written in the form $\ket{v} = \ket{a} \otimes \ket{b}$, and it is called entangled otherwise. The notion of Schmidt rank extends that of separability: the \emph{Schmidt rank} of a pure state $\ket{v} \in \bb{C}^m \otimes \bb{C}^n$, written $SR(\ket{v})$, is defined as the least $k$ such that we can write $\ket{v}$ as a linear combination of $k$ separable pure states. Although this definition perhaps seems difficult to use at first glance, the Schmidt decomposition theorem \cite[Theorem 2.7]{NC00} provides a simple method of computing Schmidt rank. It also provides a useful orthogonal form for all bipartite pure states. As will be seen in its proof, the Schmidt decomposition theorem is essentially the singular value decomposition theorem in disguise.
	\begin{thm}[Schmidt decomposition]\label{thm:schmidt}
		For any unit vector $\ket{v} \in \bb{C}^m \otimes \bb{C}^n$ there exists $k \leq \min\{m,n\}$, non-negative real scalars $\{\alpha_i\}_{i=1}^k$ with $\sum_{i=1}^k \alpha_i^2 = 1$, and orthonormal sets of vectors $\{ \ket{a_i} \}_{i=1}^k \subset \bb{C}^m$ and $\{ \ket{b_i} \}_{i=1}^k \subset \bb{C}^n$ such that
		\begin{align*}
			\ket{v} = \sum_{i=1}^{k}\alpha_i \ket{a_i} \otimes \ket{b_i}.
		\end{align*}
	\end{thm}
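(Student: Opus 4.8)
The plan is to prove this by translating the singular value decomposition (SVD) of the matricization of $\ket{v}$ back through the vector-operator isomorphism of Section~\ref{sec:vector_operator_isomorphism}. The entire content of the theorem is that the SVD of an $n \times m$ matrix, read through the map ${\rm mat}$, becomes an orthogonal decomposition of a bipartite vector; the orthonormal sets and scalars in the statement are exactly the singular vectors and singular values in disguise.

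First I would set $X := {\rm mat}(\ket{v}) \in M_{n,m}$ and apply the SVD to write $X = \sum_{i=1}^{k} \sigma_i \ket{b_i}\bra{c_i}$, where $k = {\rm rank}(X) \leq \min\{m,n\}$, the singular values satisfy $\sigma_i > 0$, and $\{\ket{b_i}\}_{i=1}^{k} \subset \bb{C}^n$, $\{\ket{c_i}\}_{i=1}^{k} \subset \bb{C}^m$ are orthonormal sets of left and right singular vectors. I would then set $\alpha_i := \sigma_i$ and $\ket{a_i} := \overline{\ket{c_i}}$, the entrywise complex conjugate. Because the isomorphism sends $\ket{a} \otimes \ket{b}$ to $\ket{b}\overline{\bra{a}}$, and a coordinate check gives $\overline{\bra{a_i}} = \bra{c_i}$, we obtain ${\rm mat}(\ket{a_i} \otimes \ket{b_i}) = \ket{b_i}\bra{c_i}$. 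Summing and using linearity of ${\rm mat}$ then yields ${\rm mat}\big(\sum_{i=1}^{k} \alpha_i \ket{a_i} \otimes \ket{b_i}\big) = \sum_{i=1}^{k}\sigma_i\ket{b_i}\bra{c_i} = X = {\rm mat}(\ket{v})$, and since ${\rm mat}$ is a bijection this forces $\ket{v} = \sum_{i=1}^{k}\alpha_i\ket{a_i} \otimes \ket{b_i}$.

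It then remains to verify the side conditions. The set $\{\ket{b_i}\}$ is orthonormal by construction as left singular vectors, and $\{\ket{a_i}\} = \{\overline{\ket{c_i}}\}$ is orthonormal because complex conjugation preserves orthonormality, i.e. $\braket{a_i}{a_j} = \overline{\braket{c_i}{c_j}} = \delta_{ij}$. For the normalization I would invoke the fact from Section~\ref{sec:vector_operator_isomorphism} that ${\rm mat}$ is isometric from the Euclidean norm to the Frobenius norm, so that $1 = \braket{v}{v} = \Tr(X^\dagger X) = \|X\|_F^2 = \sum_{i=1}^{k}\sigma_i^2 = \sum_{i=1}^{k}\alpha_i^2$, as required.

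I expect the only genuinely delicate point to be the complex-conjugation bookkeeping: the vector-operator isomorphism carries a conjugate on the first tensor factor, so the right singular vectors $\ket{c_i}$ must be conjugated to recover the $\ket{a_i}$, and one must separately confirm that this conjugation neither disturbs orthonormality nor alters the singular values $\sigma_i$. Everything else is a mechanical transcription of the SVD through an isometric isomorphism, so no further obstacle is anticipated.
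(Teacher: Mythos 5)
Your proposal is correct and is essentially the paper's own argument: both proofs apply the singular value decomposition to ${\rm mat}(\ket{v})$ and pull the resulting rank-one expansion back through the vector-operator isomorphism, using its isometry to get $\sum_i \alpha_i^2 = 1$. Your explicit tracking of the complex conjugation on the first tensor factor is a welcome bit of extra care, but it does not change the route.
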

	\begin{proof}
		Assume that $n \leq m$, as it will be clear how to modify the proof if the opposite inequality holds. By the singular value decomposition, there exist unitaries $U \in M_n$, $V \in M_m$, and a positive semidefinite diagonal matrix $D \in M_n$ such that
	\begin{align*}\sspp
		{\rm mat}(\ket{v}) = U\begin{bmatrix}D & 0\end{bmatrix}V.
	\dsp\end{align*}
	Performing this matrix multiplication gives
	\begin{align*}
		{\rm mat}(\ket{v}) = \sum_{i=1}^{n}\alpha_i \ket{a_i}\overline{\bra{b_i}},
	\end{align*}
	where $\alpha_i$ is the $i$-th diagonal entry of $D$, $\ket{a_i}$ is the $i$-th column of $U$, and $\overline{\bra{b_i}}$ is the $i$-th row of $V$. Because the set $\{\alpha_i\}_{i=1}^n$ gives the singular values of ${\rm mat}(\ket{v})$, we have $\sum_{i=1}^{n}\alpha_i^2 = \big\|{\rm mat}(\ket{v})\big\|_{F}^2 = \big\|\ket{v}\big\|^2 = 1$. Since $U$ and $V$ are both unitaries, the sets $\{ \ket{a_i} \}_{i=1}^k$ and $\{ \ket{b_i} \}_{i=1}^k$ are orthonormal, and constructing the vectorization of ${\rm mat}(\ket{v})$ gives
	\begin{align*}
		\ket{v} = \sum_{i=1}^{n}\alpha_i \ket{a_i} \otimes \ket{b_i},
	\end{align*}
	which completes the proof.
\end{proof}

From the above proof, it is clear that the least possible $k$ in Theorem~\ref{thm:schmidt} is equal to the Schmidt rank of $\ket{v}$, which is equal to the rank of the matrix ${\rm mat}(\ket{v})$. Also of interest for us will be the constants $\{\alpha_i\}_{i=1}^k$, which are known as the \emph{Schmidt coefficients} of $\ket{v}$ and are equal to the singular values of ${\rm mat}(\ket{v})$.

The Schmidt rank can roughly be interpreted as the ``amount of entanglement'' contained within a pure state. A pure state is separable if and only if its Schmidt rank equals $1$, and $1 \leq SR(\ket{v}) \leq \min\{m,n\}$ for all $\ket{v} \in \bb{C}^m \otimes \bb{C}^n$. In the case when $SR(\ket{v}) = \min\{m,n\}$ and all of its Schmidt coefficients are equal (and thus equal to $1/\sqrt{\min\{m,n\}}$), we refer to $\ket{v}$ as \emph{maximally entangled}. We have already seen the maximally-entangled pure state $\ket{\psi_{+}} := \frac{1}{\sqrt{m}}\sum_{i=1}^{m} \ket{i} \otimes \ket{i} \in \bb{C}^m \otimes \bb{C}^m$ in Theorem~\ref{thm:choi_cp}, and because of its use in the construction of Choi matrices we will continue to see it throughout this work.
\begin{exam}\label{exam:k_pos}{\rm
 	Let $k,n \in \mathbb{N}$ be such that $k \leq n$ and consider the map $\Phi : M_n \rightarrow M_n$ defined by $\Phi(X) = k\Tr(X)I - X$. Using the Schmidt decomposition theorem, we now show that this map is $k$-positive but (if $k < n$) not $(k+1)$-positive. To see that $\Phi$ is not $(k+1)$-positive when $k < n$, consider its action on the projection onto the state $\ket{\psi} := \tfrac{1}{\sqrt{k+1}}\sum_{i=1}^{k+1}\ket{i} \otimes \ket{i} \in \mathbb{C}^{k+1} \otimes \mathbb{C}^n$:
 	\begin{align*}
 		(id_{k+1} \otimes \Phi)(\ketbra{\psi}{\psi}) & = \frac{1}{k+1}\sum_{i,j=1}^{k+1} \ketbra{i}{j} \otimes \Phi\big(\ketbra{i}{j}\big) \\
 		& = \frac{1}{k+1}\Big(k I \otimes I - \sum_{i,j=1}^{k+1} \ketbra{i}{j} \otimes \ketbra{i}{j}\Big). 
 	\end{align*}
	Because the operator $\sum_{i,j=1}^{k+1} \ketbra{i}{j} \otimes \ketbra{i}{j}$ has $k+1$ as an eigenvalue (corresponding to the eigenvector $\ket{\psi}$), we know that $(id_{k+1} \otimes \Phi)(\ketbra{\psi}{\psi})$ has $(k - (k+1))/(k+1) = -1/(k+1)$ as an eigenvalue. It follows that $(id_{k+1} \otimes \Phi)(\ketbra{\psi}{\psi})$ is not positive semidefinite even though $\ketbra{\psi}{\psi}$ is, so $\Phi$ is not $(k+1)$-positive.
	
	On the other hand, we will now show that $\Phi$ is $k$-positive. First notice that, due to linearity, it is enough to show that $(id_k \otimes \Phi)$ is positive on pure states $\ketbra{v}{v}$. Consider an arbitrary such pure state written in its Schmidt decomposition $\ket{v} = \sum_{i=1}^k \alpha_i \ket{a_i} \otimes \ket{b_i}$. Notice that $I \geq \ketbra{b_i}{b_i}$ implies that $kI - \ketbra{b_i}{b_i} \geq (k-1)\ketbra{b_i}{b_i}$. Because the $\ket{b_i}$'s are orthonormal it follows that
		\begin{align*}
			(id_k \otimes \Phi)(\ketbra{v}{v}) = & \ \sum_{i,j=1}^{k}\alpha_i \alpha_j\ketbra{a_i}{a_j} \otimes (k\braket{b_j}{b_i}I - \ketbra{b_i}{b_j}) \\
			\geq & \ \sum_{i=1}^{k}(k-1)\alpha_i^2\ketbra{a_i}{a_i} \otimes \ketbra{b_i}{b_i} - \sum_{\stackrel{i,j=1}{i \neq j}}^{k}\alpha_i \alpha_j\ketbra{a_i}{a_j} \otimes \ketbra{b_i}{b_j} \\
			= & \ \sum_{\stackrel{i,j=1}{i \neq j}}^{k}\Big(\alpha_i^2\ketbra{a_i}{a_i} \otimes \ketbra{b_i}{b_i} - \alpha_i \alpha_j\ketbra{a_i}{a_j} \otimes \ketbra{b_i}{b_j}\Big) \\
			= & \ \sum_{i=1}^{k}\sum_{j=i+1}^{k}\Big(\alpha_i^2\ketbra{a_i}{a_i} \otimes \ketbra{b_i}{b_i} - \alpha_i \alpha_j\ketbra{a_i}{a_j} \otimes \ketbra{b_i}{b_j} \\
				& \ \quad \quad \quad \quad \quad - \alpha_i \alpha_j\ketbra{a_j}{a_i} \otimes \ketbra{b_j}{b_i} + \alpha_j^2\ketbra{a_j}{a_j} \otimes \ketbra{b_j}{b_j}\Big)
		\end{align*}
This quantity can be factored as
	\begin{align*}
		\sum_{i=1}^{k}\sum_{j=i+1}^{k}\big(\alpha_i\ket{a_i} \otimes \ket{b_i} - \alpha_j\ket{a_j} \otimes \ket{b_j}\big)\big(\alpha_i\bra{a_i} \otimes \bra{b_i} - \alpha_j\bra{a_j} \otimes \bra{b_j}\big) \geq 0,
		\end{align*}
	from which it follows that $\Phi$ is $k$-positive.
	
	The map $\Phi$ is actually rather well-known in operator theory \cite{Tom85} and quantum information theory \cite{TH00} -- it was introduced in the $k = n-1$ case in \cite{Cho72} as the first known example of a map that is $n-1$ positive but not completely positive. We will see in the next section that its positivity properties play an important role in entanglement theory.
}\end{exam}

We close this section with a result that provides a tight bound on the dimension of subspaces consisting entirely of vectors with high Schmidt rank \cite{CW08,Sar08}.
\begin{thm}\label{thm:CMW08}
  The maximum dimension of a subspace $\cl{V} \subseteq \bb{C}^m \otimes \bb{C}^n$ such that $SR(\ket{v}) \geq k$ for all $\ket{v} \in \cl{V}$ is given by $(m - k + 1)(n - k + 1)$.
\end{thm}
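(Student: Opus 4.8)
The plan is to pass through the vector--operator isomorphism of Section~\ref{sec:vector_operator_isomorphism}, which identifies $\mathbb{C}^m \otimes \mathbb{C}^n$ with $M_{n,m}$ and, as recorded just after Theorem~\ref{thm:schmidt}, sends the Schmidt rank of $\ket{v}$ to the ordinary rank of $\mathrm{mat}(\ket{v})$. Under this identification the assertion becomes purely linear-algebraic: the maximum dimension of a subspace $\mathcal{W} \subseteq M_{n,m}$ in which every nonzero matrix has rank at least $k$ equals $(m-k+1)(n-k+1)$. I would assume throughout that $1 \le k \le \min\{m,n\}$, since otherwise no vector has Schmidt rank $\ge k$ and the formula degenerates. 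Both inequalities will be governed by the same quantity, namely the dimension of the determinantal variety $R_{\le k-1} := \{M \in M_{n,m} : \mathrm{rank}(M) \le k-1\}$.

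For the upper bound I would invoke the standard fact that $R_{\le k-1}$, the common zero locus of all $k \times k$ minors, is an irreducible affine variety of dimension $(k-1)(n+m-k+1)$, so that its codimension in $M_{n,m}$ is exactly $nm - (k-1)(n+m-k+1) = (m-k+1)(n-k+1)$. Since $R_{\le k-1}$ is a cone, it projectivizes to a subvariety of $\mathbb{P}(M_{n,m}) = \mathbb{P}^{nm-1}$ of the same codimension. If $\dim\mathcal{W} > (m-k+1)(n-k+1)$, then $\mathbb{P}(\mathcal{W})$ is a linear projective subspace whose dimension, added to that of $\mathbb{P}(R_{\le k-1})$, is at least $nm-1$; the projective dimension theorem over $\mathbb{C}$ then forces a common point, i.e. a nonzero $M \in \mathcal{W}$ of rank $\le k-1$, contradicting the defining property of $\mathcal{W}$. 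Hence no qualifying subspace can exceed dimension $(m-k+1)(n-k+1)$.

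For the matching lower bound I would exhibit a subspace of exactly this dimension that meets $R_{\le k-1}$ only at $0$. The cleanest route is a genericity argument: inside the Grassmannian $G = \mathrm{Gr}\big((m-k+1)(n-k+1), M_{n,m}\big)$, form the incidence variety $\mathcal{I} = \{(\mathcal{W},[M]) : M \in \mathcal{W} \cap R_{\le k-1}\}$ and project to $G$. A dimension count of the fibres over $\mathbb{P}(R_{\le k-1})$, combined with the codimension identity above, yields $\dim\mathcal{I} = \dim G - 1$, so the image in $G$ is a proper closed subvariety; any subspace in its nonempty (dense) complement has all nonzero elements of rank $\ge k$. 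Alternatively one can write down an explicit subspace: for $k=2$ the matrices whose anti-diagonal sums all vanish form a subspace of dimension $(m-1)(n-1)$ containing no nonzero rank-one matrix, because a nonzero rank-one matrix corresponds under the isomorphism to a product $u(x)v(y)$ of univariate polynomials, whose restriction to the diagonal $x=y$ is the nonzero polynomial $u(x)v(x)$; the osculating/polynomial idea behind this then generalizes to arbitrary $k$.

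The main obstacle is the existence direction. The upper bound is essentially a dimension count once the codimension of $R_{\le k-1}$ is in hand, but producing a subspace that genuinely \emph{attains} the bound is where the real content lies. The genericity argument requires care in verifying that $\mathcal{I}$ is closed, that the projection to $\mathbb{P}(R_{\le k-1})$ has the stated fibre dimension, and that the projection to $G$ drops dimension by exactly one; an explicit construction instead requires proving that no sum of fewer than $k$ rank-one terms can vanish on the chosen subspace, which for $k \ge 3$ is no longer the one-line polynomial non-vanishing statement that handles $k=2$. I would therefore expect to spend most of the effort on making the construction (or the genericity) rigorous, while keeping the boundary hypothesis $k \le \min\{m,n\}$ in force throughout.
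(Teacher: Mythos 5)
The paper does not actually prove Theorem~\ref{thm:CMW08}: it is quoted from \cite{CW08,Sar08} with no argument supplied (the text only remarks afterwards that an explicit attaining construction is known). So there is no in-paper proof to compare against; what you have written is, in substance, the proof from those references. Your upper bound is exactly the Cubitt--Montanaro--Winter argument: after matricization the determinantal variety $R_{\le k-1}$ has dimension $(k-1)(n+m-k+1)$, hence codimension $(m-k+1)(n-k+1)$ in $M_{n,m}$, and the projective dimension theorem over $\mathbb{C}$ forces any linear subspace of larger dimension to meet it away from the origin. The arithmetic checks out, $R_{\le k-1}$ is irreducible so the dimension theorem applies, and your genericity argument for the lower bound is also sound: the fibres of the incidence variety over $\mathbb{P}(R_{\le k-1})$ are all copies of $\mathrm{Gr}(d-1,nm-1)$ with $d=(m-k+1)(n-k+1)$, giving $\dim\mathcal{I}=(nm-d-1)+(d-1)(nm-d)=\dim G-1$, and since $\mathbb{P}(R_{\le k-1})$ is projective the image in $G$ is closed and proper. (You do not need the projection to drop dimension by \emph{exactly} one, only that $\dim\mathcal{I}<\dim G$.) Your $k=2$ antidiagonal construction and its polynomial-product justification are correct, and the weighted-antidiagonal generalization you allude to is precisely how \cite{CW08} attains the bound for general $k$.

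Three small points to tidy up. First, the statement must be read as requiring $SR(\ket{v})\ge k$ only for \emph{nonzero} $\ket{v}\in\cl{V}$; you handle this implicitly but it should be said. Second, the case $k=1$ needs to be split off: $R_{\le 0}=\{0\}$ projectivizes to the empty set so the dimension-theorem step is vacuous, but the claim is trivial there since $\cl{V}=\bb{C}^m\otimes\bb{C}^n$ works. Third, the argument is genuinely complex-specific (the projective dimension theorem fails over $\bb{R}$), which is consistent with the setting of the paper but worth flagging since the genericity route silently uses algebraic closedness twice.
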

Not only is $(m - k + 1)(n - k + 1)$ an upper bound on the dimension of such subspaces, but an explicit method of construction is known that produces such a subspace that attains the bound. This theorem will help us bound a norm based on the Schmidt rank that will be introduced in Section~\ref{sec:sk_operator_norm}.

\subsection{Operator-Schmidt Decomposition}\label{sec:operator_schmidt}

The \emph{operator-Schmidt decomposition} \cite{NDDGMOBHH03,Nie98} does for bipartite operators what the Schmidt decomposition does for bipartite vectors -- it provides a canonical, orthogonal decomposition of the operator into a sum of a minimal number of elementary tensors.

More specifically, if $X \in M_{n,m} \otimes M_{n,m}$ then we can use the vector-operator isomorphism on both copies of $M_{n,m}$ to associate $X$ with a vector $\ket{x} \in (\mathbb{C}^m \otimes \mathbb{C}^n) \otimes (\mathbb{C}^m \otimes \mathbb{C}^n)$. Applying the Schmidt decomposition theorem to $\ket{x}$ then gives $1 \leq k \leq mn$ such that
\begin{align*}
	\ket{x} = \sum_{i=1}^k \alpha_i \ket{a_i} \otimes \ket{b_i} \quad \text{for some orthonormal sets } \big\{\ket{a_i}\big\},\big\{\ket{b_i}\big\} \in \mathbb{C}^m \otimes \mathbb{C}^n
\end{align*}
and real constants $\alpha_i > 0$. Tracing this decomposition back through the vector-operator isomorphism then gives
\begin{align}\label{eq:op_schmidt_decomp}
	X = \sum_{i=1}^k \alpha_i A_i \otimes B_i,
\end{align}
where $A_i = {\rm mat}(\ket{a_i})$ and $B_i = {\rm mat}(\ket{b_i})$ for all $i$. In particular, this implies that the sets of operators $\big\{A_i\big\}$ and $\big\{B_i\big\}$ are orthonormal in the Hilbert--Schmidt inner product.

Indeed, the decomposition~\eqref{eq:op_schmidt_decomp} is the operator-Schmidt decomposition of $X$. Some sources \cite{NDDGMOBHH03} refer to the natural number $k$ as the \emph{Schmidt number} of $X$, but we will introduce another much more common usage of that term in the next section. To avoid confusion, we will instead refer to $k$ as the \emph{operator-Schmidt rank} of $X$. Similarly, we will call the coefficients $\{\alpha_i\}$ the \emph{operator-Schmidt coefficients} of $X$.

\subsection{Schmidt Number and Mixed State Entanglement}\label{sec:schmidt_number}

Although the operator-Schmidt rank provides a natural generalization of the Schmidt rank to the case of operators (i.e., mixed and pure states), it is not particularly informative as a measure of entanglement. Whereas we saw that the Schmidt rank of a pure state equals one if and only if that pure state is separable, recall that a separable mixed state $\rho \in M_m \otimes M_n$ has the form
\begin{align*}
	\rho = \sum_i p_i \sigma_i \otimes \tau_i,
\end{align*}
and so there is no clear relationship between separability of $\rho$ and the operator-Schmidt rank of $\rho$ (although we will see in Section~\ref{sec:realign} that there is a relationship between separability of $\rho$ and the norm of the operators in its operator-Schmidt decomposition).

An extension of Schmidt rank to the case of mixed states that is often much more useful and natural is the \emph{Schmidt number} \cite{TH00}. Given a density matrix $\rho \in M_m \otimes M_n$, the Schmidt number of $\rho$, denoted $SN(\rho)$, is defined to be the least natural number $k$ such that $\rho$ can be written as
\begin{align*}
  \rho = \sum_i p_i \ketbra{v_i}{v_i},
\end{align*}
where $SR(\ket{v_i}) \leq k$ for all $i$ and $\{ p_i \}$ forms a probability distribution. Much like the Schmidt rank (and unlike the operator-Schmidt rank), the Schmidt number of a state can be thought of as a rough measure of how entangled that state is. Some simple special cases include:
\begin{itemize}
	\item The state $\rho$ is separable if and only if $SN(\rho) = 1$.
	\item For a pure state $\ket{v}$ we have $SR(\ket{v}) = SN(\ketbra{v}{v})$.
\end{itemize}

One of the most active areas of research in quantum information theory is the search for operational criteria for determining whether the state $\rho$ is separable or entangled. Much progress has been made on this front over the past two decades. A landmark result of this field of study is that $\rho$ is separable if and only if it remains positive under the application of any positive map to one half of the state \cite{HHH96,P96} -- i.e., if and only if $(id_m \otimes \Phi)(\rho) \geq 0$ whenever $\Phi$ is positive. The ``only if'' direction of this result is trivial, because if we can write $\rho = \sum_i p_i \sigma_i \otimes \tau_i$ with $\sigma_{i}, \tau_{i} \geq 0$ then $\Phi(\tau_{i}) \geq 0$ and so $(id_m \otimes \Phi)(\rho) = \sum_i p_i \sigma_i \otimes \Phi(\tau_{i}) \geq 0$ as well (furthermore, $(id_m \otimes \Phi)(\rho)$ is even separable). The ``if'' direction of the result essentially follows from the separating hyperplane theorem.

An important special case of this separability criterion arises when we choose the positive map $\Phi$ to be the transpose map $T$. In this case, we refer to the operation $id_m \otimes T$ as the \emph{partial transpose}, and we use the shorthand notation $\rho^\Gamma := (id_m \otimes T)(\rho)$. In low-dimensional systems (i.e., when $nm \leq 6$), it turns out that $\rho$ is separable if and only if $\rho^\Gamma \geq 0$ \cite{HHH96,S63,W76}. That is, the only positive map that has to be used to determine separability of a low-dimensional quantum state is the transpose map. Similarly, if ${\rm rank}(\rho) \leq \max\{m,n\}$ then $\rho$ is separable if and only if $\rho^\Gamma \geq 0$ \cite{HLVC00}, but in general the partial transpose only provides a necessary but not sufficient condition for separability. The fact that the transpose map can be used to determine separability in these special cases has led to the study of \emph{positive partial transpose (PPT)} states in arbitrary dimensions, which are density operators $\rho$ such that $\rho^\Gamma \geq 0$.

The following result is a natural generalization of the characterization of separable states in terms of positive maps was implicit in \cite{TH00} and proved in \cite{RA07}.
\begin{thm}\label{thm:sch_kpos_maps}
	Let $\Phi : M_n \rightarrow M_n$ be a linear map and let $\rho \in M_n \otimes M_n$ be a density matrix. Then
	\begin{enumerate}[(a)]
		\item $\Phi$ is $k$-positive if and only if $(id_n \otimes \Phi)(\sigma) \geq 0$ for all $\sigma \in M_n \otimes M_n$ with $SN(\sigma) \leq k$, and
		\item $SN(\rho) \leq k$ if and only if $(id_n \otimes \Psi)(\rho) \geq 0$ for all $k$-positive maps $\Psi : M_n \rightarrow M_n$.
	\end{enumerate}
\end{thm}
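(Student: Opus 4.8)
The plan is to prove both parts by leveraging the duality between $k$-positive maps and operators of Schmidt number at most $k$, mirroring the classical separability-vs-positive-maps characterization of Horodecki and Peres. The two parts are closely intertwined: part (a) characterizes $k$-positivity by testing against low-Schmidt-number operators, and part (b) characterizes low Schmidt number by testing against $k$-positive maps. I would prove (a) first, since (b) will follow from it via a separating-hyperplane argument.

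\textbf{Part (a).} For the forward implication, suppose $\Phi$ is $k$-positive and let $\sigma \in M_n \otimes M_n$ satisfy $SN(\sigma) \leq k$. By definition of Schmidt number I can write $\sigma = \sum_i p_i \ketbra{v_i}{v_i}$ with each $SR(\ket{v_i}) \leq k$, so by linearity and convexity it suffices to handle a single pure state $\ket{v}$ with $SR(\ket{v}) \leq k$. Writing its Schmidt decomposition $\ket{v} = \sum_{\ell=1}^{k}\alpha_\ell \ket{a_\ell}\otimes\ket{b_\ell}$ (Theorem~\ref{thm:schmidt}), I observe that $\ket{v}$ lives in $\mathbb{C}^n \otimes \mathcal{W}$ where $\mathcal{W} = \mathrm{span}\{\ket{b_\ell}\}$ has dimension at most $k$. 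The key step is that applying $id_n \otimes \Phi$ to $\ketbra{v}{v}$ is equivalent, after compressing to this $k$-dimensional second factor, to applying $id_k \otimes (\text{a compression of }\Phi)$, which is positive precisely because $\Phi$ is $k$-positive; more carefully, one restricts $\Phi$ to the subalgebra generated by $\mathcal{W}$ via an isometry $V : \mathbb{C}^k \to \mathbb{C}^n$ and uses that $(id_n \otimes \Phi)(\ketbra{v}{v}) = (I \otimes V){(id_n \otimes \Phi_V)}(\ketbra{\tilde v}{\tilde v})(I \otimes V)^\dagger \geq 0$ where $\Phi_V = \mathrm{Ad}_{V^\dagger}\circ\Phi\circ\mathrm{Ad}_V$ and $\ket{\tilde v}$ has Schmidt rank $\leq k$ in $\mathbb{C}^n \otimes \mathbb{C}^k$. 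For the converse, I take the contrapositive: if $\Phi$ is not $k$-positive, then $id_k \otimes \Phi$ fails to be positive, so there is a unit vector $\ket{w} \in \mathbb{C}^k \otimes \mathbb{C}^n$ with $\bra{w}(id_k \otimes \Phi)(\ketbra{w}{w})\ket{w} < 0$; embedding $\mathbb{C}^k$ into $\mathbb{C}^n$ produces a pure state of Schmidt rank at most $k$ on which $id_n\otimes\Phi$ is not positive, so the right-hand condition fails for $\sigma = \ketbra{w}{w}$.

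\textbf{Part (b).} The forward direction is the easy one and parallels the trivial direction of the Horodecki--Peres theorem: if $SN(\rho)\leq k$, write $\rho = \sum_i p_i \ketbra{v_i}{v_i}$ with $SR(\ket{v_i})\leq k$; then for any $k$-positive $\Psi$, part (a) (applied with $\sigma = \ketbra{v_i}{v_i}$) gives $(id_n \otimes \Psi)(\ketbra{v_i}{v_i}) \geq 0$, and summing the convex combination yields $(id_n \otimes \Psi)(\rho)\geq 0$. For the converse I argue by contrapositive using a separating-hyperplane argument. The set $\mathcal{S}_k$ of operators with Schmidt number at most $k$ is a closed convex cone, so if $SN(\rho) > k$ then $\rho \notin \mathcal{S}_k$ and there is a Hermitian functional, represented by a Hermitian operator $W$, with $\Tr(W\rho) < 0$ but $\Tr(W\sigma)\geq 0$ for all $\sigma \in \mathcal{S}_k$. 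This witness $W$ is, by definition, $k$-block positive, meaning $\bra{v}W\ket{v}\geq 0$ for all $\ket{v}$ with $SR(\ket{v})\leq k$. The remaining step is to convert $W$ into a $k$-positive map $\Psi$ via the Choi--Jamio\l kowski correspondence so that $\Tr(W\rho) = \Tr\big((id_n\otimes\Psi)(\rho)\cdot(\text{fixed positive operator})\big)$, from which $\Tr(W\rho)<0$ forces $(id_n\otimes\Psi)(\rho)\not\geq 0$; here part (a) supplies exactly the equivalence between $k$-block positivity of $W$ and $k$-positivity of the associated map.

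\textbf{The main obstacle} I anticipate is the bookkeeping in part (a)'s forward direction --- correctly tracking how the Schmidt-rank-$k$ structure of the test vector lets one reduce an $n$-dimensional second factor to an effective $k$-dimensional one so that $k$-positivity (rather than full positivity) suffices. The subtlety is that $\Phi$ acts on all of $M_n$, not just on the compression, so one must verify that the positivity of $(id_k \otimes \Phi)$ genuinely transfers through the isometric embedding; this is where care is needed to avoid conflating $k$-positivity with positivity. The separating-hyperplane step in (b) is standard once the correspondence between $k$-block positive witnesses and $k$-positive maps (which is essentially a restatement of part (a) through the Choi isomorphism of Section~\ref{sec:choi_jamiolkowski_isomorphism}) is in hand.
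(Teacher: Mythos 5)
The paper does not actually prove this theorem --- it is stated with a citation to \cite{TH00,RA07} --- so your proposal has to stand on its own. Its architecture is the standard and correct one: prove (a) by a compression argument on a Schmidt-rank-$k$ vector, get the easy half of (b) from (a), and get the hard half of (b) from a separating hyperplane together with the Choi--Jamio{\l}kowski correspondence between $k$-block positive operators and $k$-positive maps. The separation step in (b) is fine; the one piece of bookkeeping you owe is that if $W = C_\Psi$ then $\Tr(W\rho) = n\bra{\psi_+}(id_n \otimes \Psi^\dagger)(\rho)\ket{\psi_+}$, so the $k$-positive map that actually witnesses $SN(\rho) > k$ is $\Psi^\dagger$ rather than $\Psi$ (harmless, since $\Psi$ is $k$-positive if and only if $\Psi^\dagger$ is).

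The genuine problem is the key identity in (a). You compress the \emph{second} tensor factor --- the one $\Phi$ acts on --- and assert $(id_n \otimes \Phi)(\ketbra{v}{v}) = (I \otimes V)(id_n \otimes \Phi_V)(\ketbra{\tilde v}{\tilde v})(I \otimes V)^\dagger$ with $\Phi_V = {\rm Ad}_{V^\dagger}\circ\Phi\circ{\rm Ad}_V$. Expanding the right-hand side gives $(id_n \otimes ({\rm Ad}_{VV^\dagger}\circ\Phi\circ{\rm Ad}_V))(\ketbra{\tilde v}{\tilde v})$: the output of $\Phi$ has been cut down by the projection $VV^\dagger$, which does not happen on the left. (Already for $k=1$ and $\Phi = {\rm Ad}_U$ with $U\ket{b_1}$ not proportional to $\ket{b_1}$ the two sides differ.) So the equality on which your positivity conclusion rests is false, and exhibiting a different positive operator does not prove the left side is positive. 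The fix is to compress the \emph{first} factor, where the identity acts: writing $\ket{v} = (W \otimes I_n)\ket{\tilde v}$ with $W : \bb{C}^k \rightarrow \bb{C}^n$ the isometry $W\ket{\ell} = \ket{a_\ell}$ and $\ket{\tilde v} = \sum_{\ell}\alpha_\ell\ket{\ell}\otimes\ket{b_\ell}$, one gets $(id_n \otimes \Phi)(\ketbra{v}{v}) = (W \otimes I_n)(id_k \otimes \Phi)(\ketbra{\tilde v}{\tilde v})(W \otimes I_n)^\dagger \geq 0$ directly from $k$-positivity, with no compression of $\Phi$ at all. If you insist on shrinking $\Phi$'s input instead, you must keep its full output, i.e.\ work with $\Phi\circ{\rm Ad}_V : M_k \rightarrow M_n$, and then positivity of $id_n \otimes (\Phi\circ{\rm Ad}_V)$ requires the additional observation that a $k$-positive map whose domain is $M_k$ is automatically completely positive (the analogue of (b) $\implies$ (a) in Theorem~\ref{thm:choi_cp}). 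Your contrapositive argument for the converse of (a) is correct as written.
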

Theorem~\ref{thm:sch_kpos_maps} establishes a duality between $k$-positive linear maps and density matrices with Schmidt number at most $k$. This duality will be explored in more generality and depth in Sections~\ref{sec:choi_jamiolkowski_schmidt} and~\ref{sec:dual_cones}.

If we focus on condition (b) of Theorem~\ref{thm:sch_kpos_maps}, we see that choosing any particular $k$-positive map $\Psi$ then gives a necessary criteria for $\rho$ to have $SN(\rho) \leq k$. For example, in the $k = 1$ case if we choose $\Psi = T$ then we see the familiar implication that if $\rho$ is separable then $\rho^\Gamma \geq 0$ (or phrased differently, if $\rho^\Gamma$ has a negative eigenvalue, then $\rho$ is entangled). Another well-known (but weaker \cite{HH99}) separability criteria is the reduction criterion~\cite{CAG99}, which states that if $\rho$ is separable then $\rho \leq \Tr_2(\rho) \otimes I$ and $\rho \leq I \otimes \Tr_1(\rho)$. Much like the partial transpose criterion arises from the transpose map, the reduction criterion arises from the positive map $\Psi(X) = \Tr(X)I - X$. A natural generalization of the reduction criterion for higher Schmidt number is that if $SN(\rho) \leq k$ then $\rho \leq k\Tr_2(\rho) \otimes I$ and $\rho \leq kI \otimes \Tr_1(\rho)$, which follows by using the $k$-positive map of Example~\ref{exam:k_pos} in condition (b) of Theorem~\ref{thm:sch_kpos_maps}.

In spite of Theorem~\ref{thm:sch_kpos_maps}, the structure of the set of separable states is still not well understood, and determining whether or not a given state separable is a difficult problem \cite{G10,G03,Ioa07} and an active area of research. We will see other well-known tests for separability in Sections~\ref{sec:realign} and~\ref{sec:shareable}, and further tests can be found in \cite{Bre06,Hal06,HHH97,QH11,R00}.

\subsection{Block Positive Operators}\label{sec:blockpos}\index{block positive|(}

We say that a Hermitian operator $X = X^\dagger \in M_m \otimes M_n$ is \emph{$k$-block positive} if
\begin{align*}
	\bra{v}X\ket{v} \geq 0 \text{ whenever } SR(\ket{v}) \leq k.
\end{align*}
Observe that if $k = \min\{m,n\}$ then this definition reduces to simply the usual notion of positive semidefiniteness. If $k < \min\{m,n\}$ then this is a strictly weaker notion of positivity in the sense that the resulting set of operators is a strict superset of the set of positive semidefinite operators. Indeed, much like the sets of operators with Schmidt number at most $k$ are nested subsets of the set of positive semidefinite operators, the sets of block positive operators are nested supersets of the set of positive semidefinite operators (see Figure~\ref{fig:schmidt}).

\begin{figure}[ht]
\begin{center}
\includegraphics[width=\textwidth]{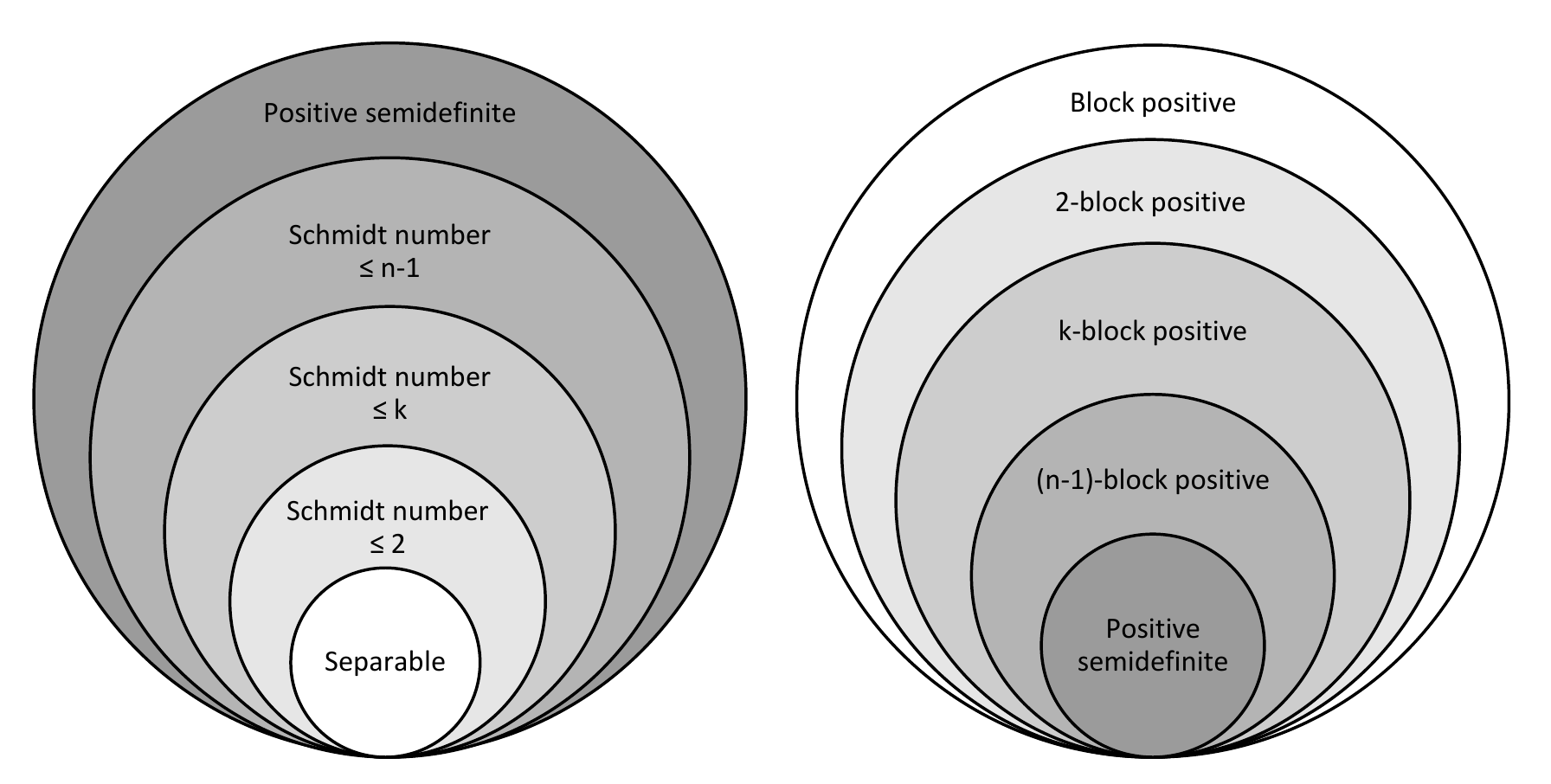}
\end{center}\vspace{-0.3in}
\caption[Block positivity and Schmidt number of operators in $M_n \otimes M_n$]{\hsp A rough depiction of the set of operators with Schmidt number at most $k$ and the sets of $k$-block positive operators in $M_n \otimes M_n$. Sets that are the same shade of gray are dual to each other in the sense of Proposition~\ref{prop:kPos}. The set of positive semidefinite operators is self-dual and equals the set of operators with Schmidt number no greater than $n$, which equals the set of $n$-block positive operators.}\label{fig:schmidt}
\end{figure}

In the $k = 1$ case, we will simply refer to operators such that $\bra{v}X\ket{v} \geq 0$ whenever $\ket{v}$ is separable as \emph{block positive} (rather than $1$-block positive). To see where this terminology comes from, it is instructive to write $X = \sum_{i,j=1}^{m} \ketbra{i}{j} \otimes X_{ij}$ where $X_{ij} \in M_n$ for all $1 \leq i,j \leq m$. Then $X$ is block positive if and only if the following inequality holds for all $\ket{a} \in \mathbb{C}^m$ and $\ket{b} \in \mathbb{C}^n$:
\begin{align*}
	(\bra{a} \otimes \bra{b})X(\ket{a} \otimes \ket{b}) & = \bra{a} \left( \sum_{i,j=1}^{m} (\bra{b}X_{ij}\ket{b}) \ketbra{i}{j} \right) \ket{a} \\
	& = \sspp\bra{a}\begin{bmatrix}\bra{b}X_{11}\ket{b} & \bra{b}X_{12}\ket{b} & \cdots & \bra{b}X_{1m}\ket{b} \\ \bra{b}X_{21}\ket{b} & \bra{b}X_{22}\ket{b} & \cdots & \bra{b}X_{2m}\ket{b} \\ \vdots & \vdots & \ddots & \vdots \\ \bra{b}X_{m1}\ket{b} & \bra{b}X_{m2}\ket{b} & \cdots & \bra{b}X_{mm}\ket{b}\end{bmatrix}\ket{a}\dsp \\
	& \geq 0.
\end{align*}

In other words, if we write $X$ as the block matrix $(X_{ij})$, then $X$ being block positive is equivalent to the matrix $(\bra{b}X_{ij}\ket{b})$ being positive semidefinite for all $\ket{b} \in \mathbb{C}^n$.
\begin{exam}\label{exam:transpose_bp}{\rm
	Let $n \geq 2$ and consider the $n \times n$ transpose map $T : M_n \rightarrow M_n$. We now show that its Choi matrix $C_T$ is block positive, even though we saw in Example~\ref{exam:transpose_not_cp} that it is not positive semidefinite:
\begin{align*}
 (\bra{a} \otimes \bra{b})C_T(\ket{a} \otimes \ket{b}) & = (\bra{a} \otimes \bra{b})\left(\sum_{i,j=1}^{n} \ketbra{i}{j} \otimes \ketbra{j}{i} \right)(\ket{a} \otimes \ket{b}) \\
 & = \sum_{i,j=1}^{n} \braket{a}{i}\braket{j}{a}\braket{b}{j}\braket{i}{b} \\
 & = \sum_{i=1}^{n} \braket{a}{i}\braket{i}{b} \sum_{j=1}^{n}\braket{b}{j}\braket{j}{a} \\
 & = \big| \braket{a}{b} \big|^2 \\
 & \geq 0.
\end{align*}
The fact that the transpose map is positive is directly related to the fact that its Choi matrix is block positive. We will make this connection explicit in Section~\ref{sec:choi_jamiolkowski_schmidt}.
}\end{exam}

We close this section with a well-known result that shows an intricate connection between $k$-block positivity of operators and the Schmidt number of operators. Because the set of operators with Schmidt number no greater than $k$ is a closed and convex subset of the set of positive semidefinite operators, the separating hyperplane theorem says that there must exist operators $\sigma,X$ (with $\sigma \geq 0$) such that $\Tr(X\rho) \geq 0$ for all $\rho$ with $SN(\rho) \leq k$ but $\Tr(X\sigma) < 0$. Indeed, the following theorem says that the separating hyperplanes $X$ are exactly the operators that are $k$-block positive but not positive semidefinite (see Figure~\ref{fig:ent_witness}). Such operators are called \emph{$k$-entanglement witnesses}, or simply \emph{entanglement witnesses} when $k = 1$.
\begin{prop}\label{prop:kPos}
	Let $X, \rho \in M_m \otimes M_n$ be such that $X = X^\dagger$ and $\rho$ is a density matrix. Then
	\begin{enumerate}[(a)]
		\item $X$ is $k$-block positive if and only if $\Tr(X\sigma) \geq 0$ for all $\sigma \in M_m \otimes M_n$ with $SN(\sigma) \leq k$, and
		\item $SN(\rho) \leq k$ if and only if $\Tr(Y\rho) \geq 0$ for all $k$-block positive $Y = Y^\dagger \in M_m \otimes M_n$.
	\end{enumerate}
\end{prop}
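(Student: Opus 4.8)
The plan is to prove part (a) directly from the definitions, obtain the ``only if'' direction of part (b) as an immediate corollary of part (a), and reserve the separating hyperplane theorem for the ``if'' direction of part (b), which is the only substantial step.

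For part (a), the forward implication is a one-line computation: if $X$ is $k$-block positive and $SN(\sigma) \leq k$, write $\sigma = \sum_i p_i \ketbra{v_i}{v_i}$ with each $SR(\ket{v_i}) \leq k$, so that $\Tr(X\sigma) = \sum_i p_i \bra{v_i} X \ket{v_i} \geq 0$ term by term. For the converse I would use the fact (recorded earlier in the excerpt) that $SN(\ketbra{v}{v}) = SR(\ket{v})$: given any $\ket{v}$ with $SR(\ket{v}) \leq k$, the rank-one state $\ketbra{v}{v}$ satisfies $SN(\ketbra{v}{v}) \leq k$, so applying the hypothesis to $\sigma = \ketbra{v}{v}$ gives $\bra{v} X \ket{v} = \Tr(X\ketbra{v}{v}) \geq 0$, which is exactly $k$-block positivity. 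The ``only if'' direction of part (b) is then free: if $SN(\rho) \leq k$ and $Y$ is $k$-block positive, then part (a) applied with $\sigma = \rho$ immediately yields $\Tr(Y\rho) \geq 0$.

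The real content is the ``if'' direction of part (b), which I would prove by contraposition using the separating hyperplane theorem, exactly as foreshadowed in the paragraph preceding the statement. Let $\cl{S}_k$ denote the cone of positive semidefinite operators in $M_m \otimes M_n$ with Schmidt number at most $k$; this set is convex (a convex combination of states of Schmidt rank at most $k$ again has Schmidt number at most $k$) and closed (it is the image of a compact set of low-Schmidt-rank pure states and probability weights under a continuous map, the number of weights being controlled by Carath\'{e}odory's theorem). Suppose $SN(\rho) > k$, i.e.\ $\rho \notin \cl{S}_k$. Working in the real inner-product space of Hermitian operators with $\langle A, B\rangle = \Tr(AB)$, the separating hyperplane theorem supplies a Hermitian $Y$ with $\Tr(Y\sigma) \geq 0$ for every $\sigma \in \cl{S}_k$ but $\Tr(Y\rho) < 0$. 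By part (a) the first condition says precisely that $Y$ is $k$-block positive, so we have produced a $k$-block positive $Y$ with $\Tr(Y\rho) < 0$, contradicting the hypothesis; hence $SN(\rho) \leq k$.

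The only genuinely delicate point is the closedness of $\cl{S}_k$, which is what guarantees the separating functional is well defined and remains nonnegative on all of $\cl{S}_k$; convexity and the identification of real-linear functionals on Hermitian operators with maps of the form $\Tr(Y\,\cdot)$ for Hermitian $Y$ are routine. I expect no difficulty in the remaining steps, since parts (a) and the forward half of (b) reduce entirely to the definitions.
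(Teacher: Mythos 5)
Your proof is correct and follows essentially the same route the paper takes: part (a) directly from the definitions, and the ``if'' direction of part (b) via the separating hyperplane theorem applied to the closed convex cone of operators with Schmidt number at most $k$, exactly as the paragraph preceding the proposition foreshadows (the paper itself only sketches this and defers to the cited dual-cone references). Your attention to the closedness of the cone and to part (a) supplying the block positivity of the separating functional fills in the details the paper omits, with no gaps.
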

Condition (a) of this result follows trivially from the definitions of $k$-block positivity and Schmidt number. Condition (b) is slightly more technical, but follows from the recently-explored dual cone relationship of $k$-block positivity and Schmidt number of \cite{Sko10,SSZ09,S09}. Compare this result to Theorem~\ref{thm:sch_kpos_maps}, which similarly connects $k$-positivity of linear maps and Schmidt number of density matrices. As might be guessed, there is a close connection between $k$-block positivity of operators and $k$-positivity of linear maps, which will be pinned down in Section~\ref{sec:choi_jamiolkowski_isomorphism}.

\begin{figure}[ht]
\begin{center}
\includegraphics[width=0.5816\textwidth]{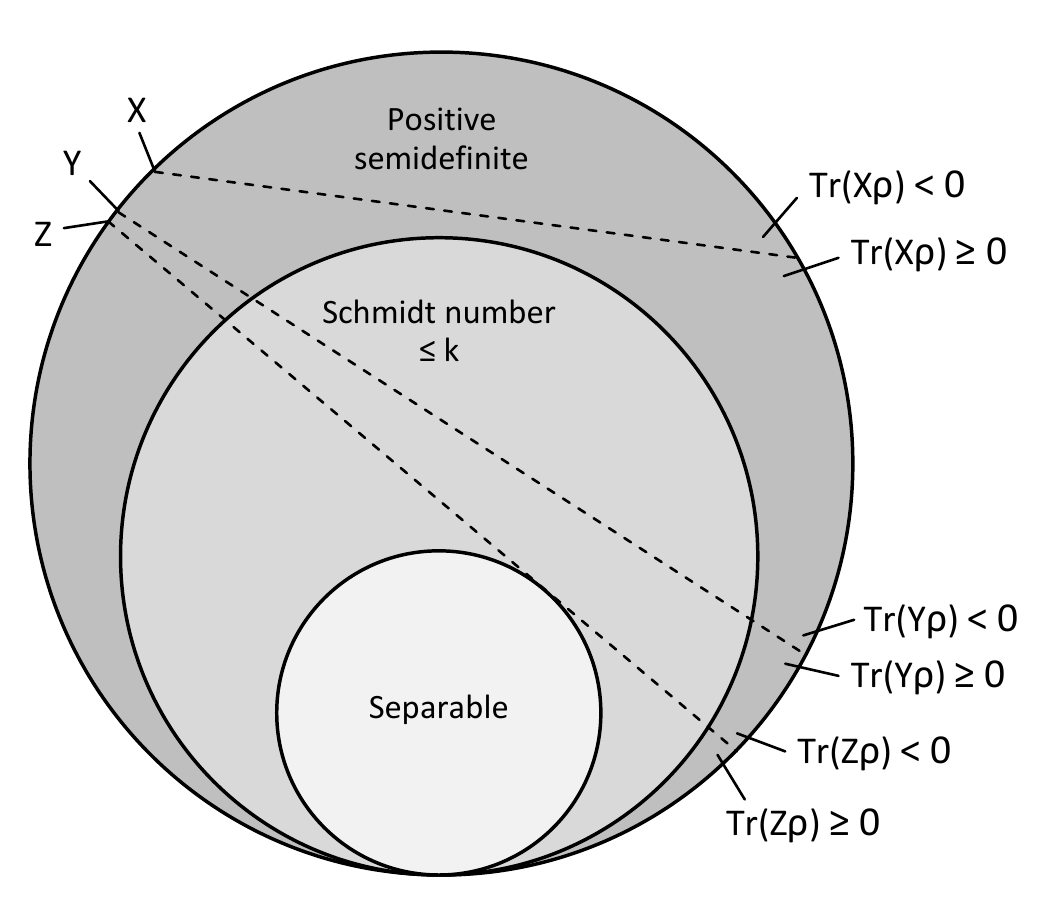}
\end{center}\vspace{-0.3in}
\caption[Entanglement witnesses as separating hyperplanes]{\hsp A representation of entanglement witnesses as separating hyperplanes, as described by Proposition~\ref{prop:kPos}. Any operator that is above one of the separating hyperplanes has entanglement that is detected by the corresponding entanglement witness. The operator $Y$ is a general entanglement witness, $X$ is a $k$-entanglement witness (and hence also an entanglement witness), and $Z$ is an optimal entanglement witness.}\label{fig:ent_witness}
\end{figure}

We close this section by presenting a result of \cite{SWZ08} that provides a simple necessary condition for block positivity.
\begin{prop}[Szarek, Werner, and {\.Z}yczkowski]\label{prop:block_pos_trace}
	Let $X = X^\dagger \in M_m \otimes M_n$. If $X$ is block positive then $\Tr(X^2) \leq \big(\Tr(X)\big)^2$.
\end{prop}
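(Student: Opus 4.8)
The plan is to reduce the statement to the corresponding (trivial) inequality for positive semidefinite matrices by exploiting the block-form reformulation of block positivity established just above. Writing $X$ as a block matrix $(X_{ij})$, that reformulation says $X$ is block positive precisely when the compression $X_b := \bra{b}X\ket{b} \in M_m$ (the matrix $(\bra{b}X_{ij}\ket{b})$) is positive semidefinite for every unit $\ket{b} \in \bb{C}^n$, and by the symmetric argument on the other factor, when $X_a := \bra{a}X\ket{a} \in M_n$ is positive semidefinite for every unit $\ket{a} \in \bb{C}^m$. For any positive semidefinite $Y$ with eigenvalues $\mu_i \geq 0$ we have the elementary bound $\Tr(Y^2) = \sum_i \mu_i^2 \leq \big(\sum_i \mu_i\big)^2 = (\Tr Y)^2$, so every compression satisfies $\Tr(X_b^2) \leq (\Tr X_b)^2$ and $\Tr(X_a^2) \leq (\Tr X_a)^2$ pointwise.

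Next I would average these pointwise inequalities over Haar-random unit vectors, using the standard first- and second-moment formulas $\mathbb{E}\,\ketbra{b}{b} = \tfrac{1}{n}I$ and $\mathbb{E}\,\ketbra{b}{b}^{\otimes 2} = \tfrac{1}{n(n+1)}(I + S)$, where $S$ denotes the swap on $\bb{C}^n \otimes \bb{C}^n$ (and the analogue on $\bb{C}^m$). Expanding $\Tr(X_b^2)$ and $(\Tr X_b)^2$ on two copies of the space and using the swap identity $\Tr[(A \otimes B)S] = \Tr(AB)$, a direct computation gives
\begin{align*}
	\mathbb{E}_b\big[(\Tr X_b)^2\big] &= \frac{(\Tr X)^2 + \Tr[(\Tr_1 X)^2]}{n(n+1)}, \\
	\mathbb{E}_b\big[\Tr(X_b^2)\big] &= \frac{\Tr[(\Tr_2 X)^2] + \Tr(X^2)}{n(n+1)},
\end{align*}
so averaging $\Tr(X_b^2) \leq (\Tr X_b)^2$ yields
\begin{align*}
	\Tr[(\Tr_2 X)^2] + \Tr(X^2) \leq (\Tr X)^2 + \Tr[(\Tr_1 X)^2]. \tag{$\star$}
\end{align*}

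Carrying out the symmetric computation for the compressions $X_a \in M_n$ (averaging over $\ket{a} \in \bb{C}^m$ and using the moment formulas on $\bb{C}^m$) produces the companion inequality with the roles of $\Tr_1$ and $\Tr_2$ interchanged,
\begin{align*}
	\Tr[(\Tr_1 X)^2] + \Tr(X^2) \leq (\Tr X)^2 + \Tr[(\Tr_2 X)^2]. \tag{$\star\star$}
\end{align*}
Adding $(\star)$ and $(\star\star)$, the reduced-state terms $\Tr[(\Tr_1 X)^2]$ and $\Tr[(\Tr_2 X)^2]$ occur on both sides and cancel, leaving $2\Tr(X^2) \leq 2(\Tr X)^2$, which is exactly the claim.

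I expect the main obstacle to be the second-moment bookkeeping rather than any conceptual difficulty: one must track carefully which subsystems the partial swaps act on so that the cross terms evaluate to the squared reduced states $\Tr[(\Tr_1 X)^2]$ and $\Tr[(\Tr_2 X)^2]$ rather than to something uncontrolled. It is worth emphasizing that a single averaging — say, only $(\star)$ — does not suffice, since it controls $\Tr(X^2)$ only up to the difference $\Tr[(\Tr_1 X)^2] - \Tr[(\Tr_2 X)^2]$, whose sign is not clear a priori; the crucial device is that averaging symmetrically over both tensor factors forces these leftover terms to cancel exactly.
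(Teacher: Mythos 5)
Your argument is correct, and every step checks out: block positivity does give $X_b = (I_m \otimes \bra{b})X(I_m \otimes \ket{b}) \geq 0$ and $X_a = (\bra{a} \otimes I_n)X(\ket{a} \otimes I_n) \geq 0$; the second-moment formula $\mathbb{E}\,\ketbra{b}{b}^{\otimes 2} = (I+S)/(n(n+1))$ together with the partial-swap identities $\Tr[(X \otimes X)(I \otimes S)] = \Tr[(\Tr_1 X)^2]$, $\Tr[(X \otimes X)(S \otimes I)] = \Tr[(\Tr_2 X)^2]$, and $\Tr[(X \otimes X)(S \otimes S)] = \Tr(X^2)$ yields exactly your displayed averages; and the two one-sided inequalities $(\star)$ and $(\star\star)$ do combine to cancel the reduced-state terms. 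Your closing remark is also on point: neither $(\star)$ nor $(\star\star)$ alone suffices, since each controls $\Tr(X^2)$ only up to $\pm\big(\Tr[(\Tr_1 X)^2] - \Tr[(\Tr_2 X)^2]\big)$, whose sign is not determined. Note that the thesis itself does not prove this proposition --- it is quoted from \cite{SWZ08} --- so there is no internal proof to compare against; the argument in that reference proceeds quite differently, by dualizing the Gurvits--Barnum separable ball around the maximally mixed state (the trace-one block positive operators are the polar of the trace-one separable states, so a ball inside the latter forces the former into a ball of reciprocal radius, which unwinds to $\Tr(X^2) \leq (\Tr X)^2$). Your averaging proof is more elementary and self-contained, needing only Haar moments rather than the separable-ball theorem, whereas the duality route has the advantage of making clear that the constant is exactly the one dual to the Gurvits--Barnum radius.
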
\index{block positive|)}
Indeed, the trace inequality of Proposition~\ref{prop:block_pos_trace} is trivially true if $X$ is positive semidefinite. When $X$ is block positive but not positive semidefinite, the inequality provides a restriction on how negative the negative eigenvalues of $X$ can be relative to its positive eigenvalues. We will return to the problem of characterizing the eigenvalues of $k$-block positive operators in Section~\ref{sec:op_sk_norm_block_pos}.

\section{Local Operations and Distillability}\label{sec:local_ops}

In this section we consider the situation in which two parties, traditionally referred to as Alice and Bob, are each in control of a quantum system, but their quantum systems may be entangled with each other. In particular, we will consider what kind of effect Alice and Bob can have on the entanglement between their systems if they are only allowed to perform quantum operations on their own system.

From now on, it will sometimes be useful to let $M_A$ and $M_B$ denote complex matrix spaces that represent the quantum systems controlled by Alice and Bob, respectively. Similarly, we will use $M_{A^\prime}$ and $M_{B^\prime}$ to denote complex matrix spaces that represent the environments of Alice's and Bob's systems. We will use subscripts to indicate which subsystems a state lives in or a map is acting on if there would otherwise be potential for confusion. For example, $id_{A^\prime B^\prime} \otimes \Phi_{AB}$ is the map that acts as the identity on $M_{A^\prime} \otimes M_{B^\prime}$ and as the map $\Phi$ on $M_A \otimes M_B$. We will use the notation $\bb{C}^A$ to denote the complex Euclidean space of dimension corresponding to $M_A$ (i.e., $M_A$ is the space of ${\rm dim}(\bb{C}^A) \times {\rm dim}(\bb{C}^A)$ matrices).

\subsection{LOCC and Separable Channels}\label{sec:locc}

\emph{Local operations and classical communication (LOCC)} \cite{BDFMRSSW99} is the set of channels that can be implemented by Alice applying a quantum channel on her system and communicating classical information to Bob, and then Bob applying a quantum channel on his system and communicating classical information to Alice, and so on. LOCC channels play a particularly important role in entanglement theory, as any meaningful measure of entanglement between two systems intuitively should not increase under the action of an LOCC channel -- a point that we will return to in the next section.

It turns out that LOCC channels are quite messy to represent mathematically, so it is common to work instead with the set of \emph{separable} maps. A completely positive map $\Phi : M_A \otimes M_B \rightarrow M_A \otimes M_B$ is called \emph{separable} \cite{CDKL01,Rai97} if there exist families of operators $\big\{A_\ell\big\} \subset M_A$ and $\big\{B_\ell\big\} \subset M_B$ such that
\begin{align*}
	\Phi(X) = \sum_\ell (A_\ell \otimes B_\ell)X(A_\ell \otimes B_\ell)^\dagger \quad \forall \, X \in M_A \otimes M_B.
\end{align*}

Indeed, every LOCC channel is a separable channel, but the converse is not true. That is, there are separable channels that cannot be implemented via the LOCC paradigm described earlier \cite{BDFMRSSW99}. The distinction between separable and LOCC channels is still not particularly well-understood, but has been explored in \cite{GG08,Ghe10}. Nonetheless, separable maps are useful because the simple form of separable maps generally makes working with them fairly straightforward, and anything that we prove about separable channels is necessarily also true of LOCC channels.

Finally, it is worth pointing out that separable channels are also exactly the channels that preserve separability between Alice and Bob in the case when the original state may be entangled with their individual environments. That is, a channel $\Phi$ is separable if and only if $(id_{A^\prime,B^\prime} \otimes \Phi_{A,B})(\sigma_{A^\prime,A} \otimes \tau_{B^\prime,B})$ is always separable with respect to the $(A^\prime,A)-(B^\prime,B)$ cut (that is, when we treat $M_{A^\prime} \otimes M_A$ as one system and $M_{B^\prime} \otimes M_B$ as the other subsystem). We will prove and expand upon this statement in Section~\ref{sec:choi_jamiolkowski_schmidt}.

\subsection{Distillability and Bound Entanglement}\label{sec:bound_entangle}

Given a bipartite state $\rho \in M_A \otimes M_B$, a natural question to ask is whether or not it can be transformed (with vanishingly small error) via LOCC into the maximally-entangled state $\ket{\psi_+} \in \mathbb{C}_2 \otimes \mathbb{C}_2$. Indeed, this state is the prototypical example of an entangled state that allows for protocols such as quantum teleportation to work \cite{BBCJPW93,Vai94}, so whether or not $\rho$ can be transformed into $\ket{\psi_+}$ can roughly be thought of as an indication of whether or not it contains any ``useful'' entanglement.

It may happen that $\rho$ itself cannot be transformed into $\ket{\psi_+}$ via LOCC operations, but $r$ copies of $\rho$ (i.e., $\rho^{\otimes r}$) can be. Thus we ask whether multiple copies of $\rho$ can be transformed into $\ket{\psi_+}$ via LOCC operations, and we call any state $\rho$ that can be transformed in this way \emph{distillable}.

It should not be surprising that separable states are undistillable -- we should not expect to be able to extract entanglement from a separable state. Conversely, it is known \cite{HHH97} that any entangled state $\rho \in M_2 \otimes M_2$ is distillable. A slightly stronger statement is that any state that violates the reduction criterion is distillable \cite{HH99}. Somewhat surprisingly, however, there are entangled states in $M_m \otimes M_n$ when $mn > 6$ that are undistillable. Indeed, any state $\rho$ with $\rho^\Gamma \geq 0$, where $\Gamma$ refers to the partial transpose, is undistillable \cite{HHH98}, and there are many known entangled states with positive partial transpose when $mn > 6$ \cite{ABLS01,BP00,FLS06,H97,PM07,WW01,YL05}. Entangled states that are undistillable are called \emph{bound entangled}.

Although all PPT states are known to be undistillable, there is still no known simple or useful characterization of undistillable states. In fact, one of the most important open questions in quantum information theory is whether or not there exist any non-positive partial transpose (NPPT) states that are bound entangled \cite{DCL00,DSSTT00,OpenProbGen,OpenProb2}. There is a growing mound of evidence that suggests that NPPT bound entangled states exist \cite{BR03,CS06,JK11,PPHH10,VD06}, but there is still no proof.

One of the more interesting connections between positivity and the NPPT bound entanglement problem says that $\rho$ is undistillable if and only if $(\rho^\Gamma)^{\otimes r}$ is $2$-block positive for all $r \geq 1$ \cite{HHH98}. It is clear that this property is satisfied by any state $\rho$ with $\rho^\Gamma \geq 0$ -- the NPPT bound entanglement problem asks whether or not there exist \emph{other} states satisfying this block positivity property.

In the case when $(\rho^{\otimes r})^\Gamma$ is $2$-block positive for a given value of $r$, we say that $\rho$ is \emph{$r$-copy undistillable}. Determining whether or not an operator is $1$-copy undistillable is already a difficult problem, but determining $r$-copy undistillability for $r \geq 2$ seems to be much more challenging still. For example, we will introduce in Section~\ref{sec:werner_states} a family of states whose $1$-copy undistillability is straightforward to see, but whose $2$-copy undistillability has yet to be proved analytically. One potential reason for this jump in difficulty from the $r = 1$ case to the $r = 2$ case is that the cone generated by the set of $1$-copy undistillable states is easily seen to be convex (see \cite{C05} for implications of this convexity). In the case when $1 < r < \infty$ however, convexity of the set of $r$-copy undistillable states is no longer known, as the tensor copies of $\rho$ interfere. If NPPT bound entangled states do exist, then the set of $r$-copy undistillable states must fail to be convex for at least some $r$ \cite{SST01} (see also \cite{BE08}).

\subsection{Werner States}\label{sec:werner_states}

One especially important class of states in the study of bound entanglement is the family of \emph{Werner states} \cite{W89}, which can be parametrized by a single real variable $\alpha \in [-1,1]$ via
\[
    \rho_\alpha := \frac{1}{n^2 - \alpha n}(I - \alpha S) \in M_n \otimes M_n.
\]
Our interest in Werner states comes from the fact that NPPT bound entangled states exist if and only if there is a Werner state that is NPPT bound entangled \cite{HH99}. That is, to answer the NPPT bound entanglement problem, it is enough to consider only this highly symmetric one-parameter family of states.

The state $\rho_\alpha$ is entangled if and only if $\alpha > 1/n$, and this is also exactly the range of $\alpha$ for which $\rho_\alpha^\Gamma \not\geq 0$. On the other hand, it is known that $\rho_\alpha$ is $1$-copy undistillable whenever $\alpha \leq 1/2$ and $1$-copy distillable otherwise (and we will provide a simple proof of this fact in Section~\ref{sec:bound_entanglement}). Thus, the interval $(1/n, 1/2]$ serves as a ``region of interest'' for values of $\alpha$ -- an NPPT bound entangled state exists if and only if there is some $\alpha \in (1/n, 1/2]$ such that $\rho_\alpha$ is undistillable.

What values of $\alpha$ are associated with even $2$-copy undistillable states is not currently known. Given any fixed value of $r$, it is known that there are states $\rho_\alpha$ that are $r$-copy undistillable \cite{DCL00,DSSTT00}, but in these constructions $\alpha$ depends on $r$ and shrinks to $1/n$ as $r \rightarrow \infty$, and thus does not solve the bound entanglement problem. The two extreme possibilities are that $\rho_\alpha$ is distillable for all $\alpha \in (1/n,1/2]$, or alternatively that $\rho_{1/2}$ is bound entangled (and hence $\rho_\alpha$ is bound entangled for all $\alpha \in (1/n,1/2]$). Many quantum information theorists believe the latter conjecture \cite{DCL00,DSSTT00,PPHH10}, though it is possible that some Werner states in the region of interest are bound entangled, while others are not. In Section~\ref{sec:bound_entanglement}, we will examine the intermediate $\alpha = 2/n$ case extensively.

\begin{figure}[ht]
\begin{center}
\includegraphics[width=\textwidth]{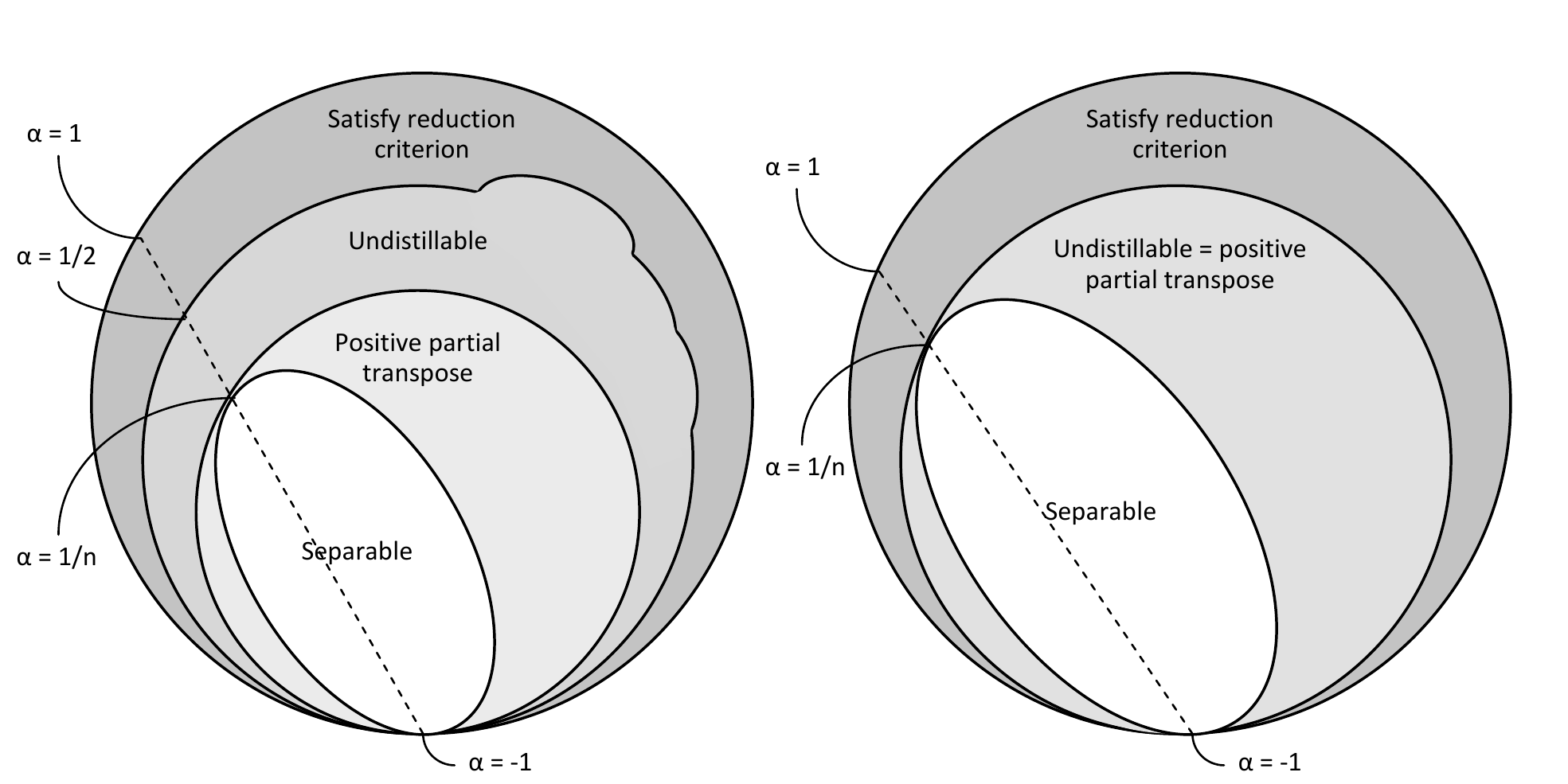}
\end{center}\vspace{-0.3in}
\caption[Werner states and undistillable states relative to the positive partial transpose and reduction criteria]{\hsp Representations of undistillable states relative to states that satisfy the reduction and positive partial transpose criteria. The dotted lines represent the Werner states. The figure on the left assumes that the $\rho_{1/2}$ Werner state is bound entangled, as conjectured, in which case the set of undistillable states is not convex. The figure on the right assumes that NPPT bound entangled states do not exist. The truth may actually be somewhere between these two extreme cases.}\label{fig:undistillable}
\end{figure}

\section{The Symmetric Subspace}\label{sec:symmetric_sub}

One linear operator that will play a particular important role throughout this work is the \emph{swap operator} $S \in M_n \otimes M_n$, which is defined on the standard basis via $S\ket{ij} = \ket{ji}$. We have already seen this operator in Example~\ref{exam:transpose_not_cp}, as $S = n(id_n \otimes T)(\ketbra{\psi_+}{\psi_+})$. The \emph{symmetric subspace} $\cl{S} \subseteq \bb{C}^n \otimes \bb{C}^n$ is the subspace spanned by the states $\ket{v}$ that satisfy $S\ket{v} = \ket{v}$. Equivalently, it is the subspace spanned by the vectors $\ket{ij} + \ket{ji}$ ($1 \leq i,j \leq n$).

It is easily-verified that $S$ corresponds, under the vector-operator isomorphism, to the transpose map. Hence the Takagi factorization \cite{HJ85,Tak24} of complex symmetric matrices (and hence symmetric states) says that $\ket{v} \in \cl{S}$ if and only if $\ket{v}$ has a symmetric Schmidt decomposition: $\ket{v} = \sum_{i=1}^k \alpha_i \ket{a_i} \otimes \ket{a_i}$, where $k = SR(\ket{v})$. We will denote the projection of $\bb{C}^n \otimes \bb{C}^n$ onto $\cl{S}$ by $P_{\cl{S}}$. Notice that $P_{\cl{S}} = \frac{1}{2}(I + S)$ and that the dimension of $\cl{S}$ is $n(n+1)/2$.

In the multipartite setting, things becomes more complicated because there is no longer a unique way to permute subsystems. Instead, there are $p!$ distinct ways to permute the $p$ subsystems of $(\bb{C}^n)^{\otimes p}$, and each such permutation corresponds to a different swap operator. Given a permutation $\sigma : \{1,\ldots,p\} \rightarrow \{1,\ldots,p\}$, we will define the swap operator $S_\sigma : \ket{v_1} \otimes \cdots \otimes \ket{v_p} \mapsto \ket{v_{\sigma(1)}} \otimes \cdots \otimes \ket{v_{\sigma(p)}}$ to be the operator that permutes the $p$ subsystems according to $\sigma$. In this case, the symmetric subspace is the subspace $\cl{S} \subseteq (\bb{C}^n)^{\otimes p}$ spanned by the states $\ket{v}$ that satisfy $S_\sigma \ket{v} = \ket{v}$ for all permutations $\sigma$. As before, the projection onto the symmetric subspace will be denoted by $P_{\cl{S}}$, and we have $P_{\cl{S}} = \frac{1}{p!}\sum_\sigma S_\sigma$, where the sum is taken over all permutations $\sigma : \{1,\ldots,p\} \rightarrow \{1,\ldots,p\}$.

\subsection{Shareable Quantum States and Symmetric Extensions}\label{sec:shareable}

A positive operator $X \in M_m \otimes M_n$ is called \emph{shareable} if there exists $0 \leq \tilde{X} \in M_m \otimes M_n \otimes M_n$ such that $\Tr_2(\tilde{X}) = \Tr_3(\tilde{X}) = X$, where we recall that $\Tr_i$ denotes the partial trace over the $i$-th subsystem. Shareable states are important in quantum information theory, as they are the states such that if one half of the state lives in Alice's system (say $M_m$) and the other half of the state lives in Bob's system $M_n$, there could be a third party that shares the exact same state with Alice. For this reason, shareable states exhibit certain insecurity properties that make them undesirable in quantum key distribution \cite{MCL06}.

More generally, $X \geq 0$ is called \emph{$s$-shareable} if there exists $0 \leq \tilde{X} \in M_m \otimes M_n^{\otimes s}$ such that $\Tr_{\overline{1},\overline{2}}(\tilde{X}) = \Tr_{\overline{1},\overline{3}}(\tilde{X}) = \cdots = \Tr_{\overline{1},\overline{s+1}}(\tilde{X}) = X$, where $\Tr_{\overline{1},\overline{i}}$ denotes the partial trace over all subsystems except the first and $i$-th. Note that all positive operators are $1$-shareable, and $2$-shareable operators are the operators that were simply called shareable in the previous paragraph.

The sets of $s$-shareable operators play a particularly important role in entanglement theory \cite{DPS02,DPS04}, as any separable operator is $s$-shareable for all $s \geq 1$. To see this, write $X = \sum_i c_i \ketbra{v_i}{v_i} \otimes \ketbra{w_i}{w_i}$. Then
\begin{align}\label{eq:sep_extension}
	\tilde{X} = \sum_i c_i \ketbra{v_i}{v_i} \otimes \underbrace{\ketbra{w_i}{w_i} \otimes \cdots \otimes \ketbra{w_i}{w_i}}_{s \text{ copies}}
\end{align}
satisfies the required partial trace conditions. Much more interesting is the fact that the converse of this statement is also true \cite{DPS04,FLV88,RW89,Wer89,Y06}. That is, if $X$ is $s$-shareable for all $s \geq 1$ then it is separable. However, these sets do not collapse in any finite number of steps: for any fixed $s \in \mathbb{N}$ there exist entangled states that are $s$-shareable -- see Figure~\ref{fig:shareable}.

Not only do the sets of $s$-shareable operators approximate the set of separable operators, but they do so in a way that is quite desirable computationally. Whether or not an operator is $s$-shareable is a problem that can be solved via semidefinite programming \cite{DPS04}, which has efficient numerical solution methods. Thus $s$-shareability provides a natural hierarchy of necessary conditions for separability, each of which is not too difficult computationally to test. Much of Chapter~\ref{ch:computation} will focus on semidefinite programs and applications of $s$-shareable operators.

In the definition of $s$-shareable states, note that the requirement that $\Tr_{\overline{1},\overline{2}}(\tilde{X}) = \Tr_{\overline{1},\overline{3}}(\tilde{X}) = \cdots = \Tr_{\overline{1},\overline{s+1}}(\tilde{X}) = X$ could be replaced by the following two properties:
\begin{enumerate}[(a)]
	\item $\Tr_{\overline{1},\overline{2}}(\tilde{X}) = X$; and
	\item $S_{\sigma}\tilde{X}S_{\sigma} = \tilde{X}$ for all permutations $\sigma : \{1,\ldots,s+1\} \rightarrow \{1,\ldots,s+1\}$ with $\sigma(1) = 1$.
\end{enumerate}
It is clear that if there exists $0 \leq \tilde{X} \in M_m \otimes M_n^{\otimes s}$ satisfying these two conditions, then $X$ is $s$-shareable. In the other direction, suppose that there exists $\tilde{X} \geq 0$ such that $\Tr_2(\tilde{X}) = \Tr_3(\tilde{X}) = X$. Then $\frac{1}{2}(\tilde{X} + S_{\{1,3,2\}}\tilde{X}S_{\{1,3,2\}})$ satisfies conditions (a) and (b) (and this same reasoning extends straightforwardly to the $s > 2$ case). It is often useful to use this second (equivalent) definition of $s$-shareability because it places further constraints on the extended operator $\tilde{X}$. An operator $\tilde{X}$ satisfying the two conditions (a) and (b) is called a \emph{$s$-symmetric extension} of $X$.

For the sake of entanglement detection, it is often beneficial to make one additional restriction on $s$-symmetric extensions. Observe that the operator~\eqref{eq:sep_extension} that extends a separable operator is not only symmetric in the sense of condition (b) above, but in fact the symmetric part of the operator is supported on the symmetric subspace. That is, $(I \otimes P_{\cl{S}})\tilde{X}(I \otimes P_{\cl{S}}) = \tilde{X}$. A symmetric extension $\tilde{X}$ that satisfies this stronger condition is called a \emph{$s$-bosonic symmetric extension} ($s$-BSE) of $X$.

In general, having a $s$-symmetric bosonic extension is a strictly stronger property than being $s$-shareable \cite{ML09}. However, the limiting case is still the same: an operator is $s$-shareable for all $s \in \mathbb{N}$ if and only if it has a $s$-symmetric bosonic extension for all $s \in \mathbb{N}$, if and only if it is separable. Because of these relationships, it is often useful to consider bosonic extensions, rather than regular symmetric extensions, when performing tasks related to entanglement detection.

Finally, notice that not only do separable states have $s$-symmetric bosonic extensions for all $s \geq 1$, but they have such an extension that has positive partial transpose (regardless of which subsystems the transpose is applied to). Thus, when considering the existence of $s$-symmetric extensions as necessary conditions for separability, it is often useful to ask that the given state have an $s$-symmetric extension that has the additional property of having positive partial transpose. In this way, we obtain a complete family of necessary criteria for separability, the weakest of which (i.e., the one that arises when $s = 1$) is the standard positive partial transpose criterion. We will see that each of these variants of symmetric extensions is useful in slightly different situations.

\subsection{From Separability to Arbitrary Schmidt Number}\label{sec:shareable_gen}

We saw in the previous section that the sets of $s$-shareable states are useful in that they form a sequence of nested approximations to the set of separable states. It is then natural to ask whether or not there exist (reasonably simple) sets that approximate the set of states $\rho$ with $SN(\rho) \leq k$ when $k > 1$. The answer to this question is ``yes''. To see this, we use the following pair of results, which can by thought of as methods for transforming statements about separability and block positivity into statements about Schmidt number $k$ and $k$-block positivity.
\begin{prop}\label{prop:sep_to_sk}
	Let $\rho \in M_A \otimes M_B$ be a density operator. Then $SN(\rho) \leq k$ if and only if there exists a separable operator $X \in (M_{A^\prime} \otimes M_A) \otimes (M_{B^\prime} \otimes M_B)$ (with ${\rm dim}(\bb{C}^{A^\prime}), {\rm dim}(\bb{C}^{B^\prime}) \leq k$) such that $(\bra{\psi_+}_{A^\prime B^\prime} \otimes I_{AB})X(\ket{\psi_+}_{A^\prime B^\prime} \otimes I_{AB}) = \rho$.
\end{prop}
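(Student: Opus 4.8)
The plan is to reduce the whole statement to a single pure-state computation that shows how contraction against $\ket{\psi_+}_{A'B'}$ converts the bipartite cut $(A'A)$--$(B'B)$ into Schmidt structure on $AB$. Fix $\dim(\bb{C}^{A'}) = \dim(\bb{C}^{B'}) = k$ and write $\ket{\psi_+}_{A'B'} = \frac{1}{\sqrt{k}}\sum_{\ell=1}^k \ket{\ell}_{A'} \otimes \ket{\ell}_{B'}$. The core lemma I would establish first is this: for any product vector $\ket{w} = \ket{p}_{A'A} \otimes \ket{q}_{B'B}$, writing $\ket{p}_{A'A} = \sum_i \ket{i}_{A'} \otimes \ket{p_i}_A$ and $\ket{q}_{B'B} = \sum_j \ket{j}_{B'} \otimes \ket{q_j}_B$, a direct computation gives
\begin{align*}
(\bra{\psi_+}_{A'B'} \otimes I_{AB})\ket{w} = \frac{1}{\sqrt{k}}\sum_{\ell=1}^k \ket{p_\ell}_A \otimes \ket{q_\ell}_B,
\end{align*}
which lies in $\bb{C}^A \otimes \bb{C}^B$ and has Schmidt rank at most $k$. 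Conversely, given any $\ket{v} \in \bb{C}^A \otimes \bb{C}^B$ with $SR(\ket{v}) \leq k$ and Schmidt decomposition $\ket{v} = \sum_{\ell=1}^k \beta_\ell \ket{a_\ell} \otimes \ket{b_\ell}$, setting $\ket{p_\ell} := \sqrt{k}\,\beta_\ell\ket{a_\ell}$ and $\ket{q_\ell} := \ket{b_\ell}$ recovers $\ket{v}$ as the image of the product vector $\ket{p}_{A'A} \otimes \ket{q}_{B'B}$. This matching of product structure on $(A'A)$--$(B'B)$ with Schmidt rank $\leq k$ on $AB$ is the heart of the argument, and it is the step I expect to require the most care, mainly in keeping the $A',A,B',B$ tensor ordering straight through the contraction.

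With the lemma in hand, both directions of the equivalence are short. For the forward (\emph{only if}) direction, I would take a decomposition $\rho = \sum_i p_i \ketbra{v_i}{v_i}$ with $SR(\ket{v_i}) \leq k$ witnessing $SN(\rho) \leq k$. Using the converse half of the lemma, each $\ket{v_i}$ lifts to a product pure state $\ket{w_i} = \ket{p^{(i)}}_{A'A} \otimes \ket{q^{(i)}}_{B'B}$ with $(\bra{\psi_+}_{A'B'} \otimes I_{AB})\ket{w_i} = \ket{v_i}$. Then $X := \sum_i p_i \ketbra{w_i}{w_i}$ is separable across the $(A'A)$--$(B'B)$ cut by construction, and by linearity of the contraction
\begin{align*}
(\bra{\psi_+}_{A'B'} \otimes I_{AB})\,X\,(\ket{\psi_+}_{A'B'} \otimes I_{AB}) = \sum_i p_i \ketbra{v_i}{v_i} = \rho.
\end{align*}

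For the reverse (\emph{if}) direction, suppose such a separable $X$ exists and write it as $X = \sum_i c_i \ketbra{w_i}{w_i}$ with each $\ket{w_i}$ a product vector across $(A'A)$--$(B'B)$ and each $c_i \geq 0$. Applying the contraction together with the forward half of the lemma yields $\rho = \sum_i c_i \ketbra{v_i}{v_i}$, where $\ket{v_i} := (\bra{\psi_+}_{A'B'} \otimes I_{AB})\ket{w_i}$ satisfies $SR(\ket{v_i}) \leq k$. After absorbing norms to renormalize the $\ket{v_i}$ (the resulting weights form a probability distribution since $\Tr(\rho) = 1$), this exhibits $\rho$ as a mixture of pure states of Schmidt rank at most $k$, so $SN(\rho) \leq k$. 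The only loose end is the unequal-dimension case $\dim(\bb{C}^{A'}) \neq \dim(\bb{C}^{B'})$ permitted by the statement, which I would dispatch by observing that the contraction still produces Schmidt rank bounded by $\min\{\dim(\bb{C}^{A'}), \dim(\bb{C}^{B'})\} \leq k$, so taking both ancillas of dimension exactly $k$ loses no generality.
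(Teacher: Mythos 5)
Your proof is correct and follows essentially the same route as the paper: the central computation showing that contracting a product vector across the $(A'A)$--$(B'B)$ cut against $\ket{\psi_+}_{A'B'}$ yields a vector of Schmidt rank at most $k$ on $AB$ (and that every such vector arises this way) is exactly the calculation the paper performs, just organized as a standalone pure-state lemma rather than carried out directly on the rank-one terms of $X$. The remark disposing of the unequal-dimension case is a small tidy addition the paper leaves implicit.
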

\begin{proof}
	To see the ``if'' direction, suppose that $X = \sum_\ell p_\ell \ketbra{a_\ell}{a_\ell} \otimes \ketbra{b_\ell}{b_\ell}$, where
	\begin{align*}
		\ket{a_\ell} = \sum_{i=1}^k \alpha_{\ell,i}\ket{i} \otimes \ket{a_{\ell,i}} \in \bb{C}^{A^\prime} \otimes \bb{C}^{A} \quad \text{ and } \quad \ket{b_\ell} = \sum_{i=1}^k \beta_{\ell,i}\ket{i} \otimes \ket{b_{\ell,i}} \in \bb{C}^{B^\prime} \otimes \bb{C}^{B}.
	\end{align*}
	Then
	\begin{align*}
		& (\bra{\psi_+}_{A^\prime B^\prime} \otimes I_{AB})X(\ket{\psi_+}_{A^\prime B^\prime} \otimes I_{AB}) \\
		= \ & \sum_\ell(\bra{\psi_+} \otimes I)\left[\sum_{i,j,r,s=1}^k \alpha_{\ell,i}\alpha_{\ell,j}\beta_{\ell,r}\beta_{\ell,s}\ketbra{ir}{js} \otimes \ketbra{a_{\ell,i} b_{\ell,r}}{a_{\ell,j} b_{\ell,s}}\right](\ket{\psi_+} \otimes I) \\
		= \ & \frac{1}{k}\sum_\ell \sum_{i,j=1}^k \alpha_{\ell,i}\alpha_{\ell,j}\beta_{\ell,i}\beta_{\ell,j} \ketbra{a_{\ell,i} b_{\ell,i}}{a_{\ell,j} b_{\ell,j}} \\
		= \ & \frac{1}{k}\sum_\ell \left(\sum_{i=1}^k \alpha_{\ell,i}\beta_{\ell,i} \ket{a_{\ell,i} b_{\ell,i}}\right)\left(\sum_{j=1}^k \alpha_{\ell,j}\beta_{\ell,j} \bra{a_{\ell,j} b_{\ell,j}}\right),
	\end{align*}
	which clearly has Schmidt number no larger than $k$. To see the converse, simply note that every operator with Schmidt number at most $k$ can be written in the form above.
\end{proof}

\begin{prop}\label{prop:bp_to_kbp}
	Let $X = X^\dagger \in M_A \otimes M_B$. Then $X$ is $k$-block positive if and only if $\ketbra{\psi_+}{\psi_+}_{A^\prime B^\prime} \otimes X_{AB} \in (M_{A^\prime} \otimes M_A) \otimes (M_{B^\prime} \otimes M_B)$ is block positive (where ${\rm dim}(\bb{C}^{A^\prime}) = {\rm dim}(\bb{C}^{B^\prime}) = k$).
\end{prop}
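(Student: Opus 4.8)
The plan is to prove both directions simultaneously by directly computing the quadratic form $(\bra{\alpha} \otimes \bra{\beta})Y(\ket{\alpha} \otimes \ket{\beta})$, where $Y := \ketbra{\psi_+}{\psi_+}_{A^\prime B^\prime} \otimes X_{AB}$ is the (Hermitian, since $X = X^\dagger$) operator in question and $\ket{\alpha} \otimes \ket{\beta}$ is an arbitrary product vector with respect to the $(A^\prime A)$--$(B^\prime B)$ cut (so $\ket{\alpha} \in \bb{C}^{A^\prime} \otimes \bb{C}^A$ and $\ket{\beta} \in \bb{C}^{B^\prime} \otimes \bb{C}^B$). The goal is to show this form equals $\frac{1}{k}\bra{w}X\ket{w}$ for a vector $\ket{w} \in \bb{C}^A \otimes \bb{C}^B$ with $SR(\ket{w}) \leq k$. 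Since block positivity of $Y$ is exactly nonnegativity of this form over all such product vectors, while $k$-block positivity of $X$ is nonnegativity of $\bra{w}X\ket{w}$ over all $\ket{w}$ with $SR(\ket{w}) \leq k$, the proposition reduces to checking that the correspondence $(\ket{\alpha},\ket{\beta}) \mapsto \ket{w}$ is onto the Schmidt-rank-$\leq k$ vectors.

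Concretely, because ${\rm dim}(\bb{C}^{A^\prime}) = {\rm dim}(\bb{C}^{B^\prime}) = k$, I would expand the test vectors in their primed components as $\ket{\alpha} = \sum_{i=1}^k \ket{i}_{A^\prime} \otimes \ket{a_i}$ and $\ket{\beta} = \sum_{j=1}^k \ket{j}_{B^\prime} \otimes \ket{b_j}$, with $\ket{a_i} \in \bb{C}^A$ and $\ket{b_j} \in \bb{C}^B$ arbitrary. Applying $Y$ and using $\braket{\psi_+}{ij} = \frac{1}{\sqrt{k}}\delta_{ij}$, the projection $\ketbra{\psi_+}{\psi_+}_{A^\prime B^\prime}$ forces $i = j$ and collapses the double sum over the primed systems to a single diagonal sum. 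Taking the inner product against $\ket{\alpha} \otimes \ket{\beta}$ then eliminates the primed indices entirely, leaving $\frac{1}{k}\sum_{r,i=1}^k (\bra{a_r} \otimes \bra{b_r})X(\ket{a_i} \otimes \ket{b_i}) = \frac{1}{k}\bra{w}X\ket{w}$, where $\ket{w} := \sum_{i=1}^k \ket{a_i} \otimes \ket{b_i}$. This is essentially the mirror image of the computation carried out in the proof of Proposition~\ref{prop:sep_to_sk}.

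To finish, I would invoke the fact that a vector has Schmidt rank at most $k$ if and only if it is a sum of $k$ elementary tensors, so that as $\{\ket{a_i}\}$ and $\{\ket{b_i}\}$ range over $\bb{C}^A$ and $\bb{C}^B$, the vector $\ket{w} = \sum_{i=1}^k \ket{a_i} \otimes \ket{b_i}$ ranges over exactly those $\ket{w}$ with $SR(\ket{w}) \leq k$. Hence $(\bra{\alpha} \otimes \bra{\beta})Y(\ket{\alpha} \otimes \ket{\beta}) \geq 0$ for every product test vector precisely when $\bra{w}X\ket{w} \geq 0$ for every $\ket{w}$ with $SR(\ket{w}) \leq k$, which is the claimed equivalence.

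I expect the only genuine obstacle to be the bookkeeping of the four tensor factors and the reshuffling between the $(A^\prime B^\prime)(AB)$ order in which $Y$ is naturally written and the $(A^\prime A)(B^\prime B)$ order in which the separability cut is taken; the substantive idea is simply that $\ket{\psi_+}$ acts as the ``glue'' identifying the $k$ free components of $\ket{\alpha}$ with those of $\ket{\beta}$, turning an unconstrained product vector on the large space into a general Schmidt-rank-$\leq k$ vector on the small space. As an alternative route, one could argue by duality: by Proposition~\ref{prop:kPos} each block positivity condition is equivalent to trace-nonnegativity against the relevant cone, and Proposition~\ref{prop:sep_to_sk} already identifies the image of the separable cone under $(\bra{\psi_+} \otimes I)(\cdot)(\ket{\psi_+} \otimes I)$ with the cone of operators of Schmidt number at most $k$, so the equivalence drops out by dualizing that correspondence.
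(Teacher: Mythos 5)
Your proof is correct, and it takes a different route from the one the paper actually uses. The paper defers this proposition to Section~\ref{sec:choi_jamiolkowski_schmidt} and dispatches it in one line via the Choi--Jamio{\l}kowski isomorphism: writing $X = C_\Phi$, it chains together ``$C_\Phi$ is $k$-block positive iff $\Phi$ is $k$-positive iff $id_k \otimes \Phi$ is positive iff $C_{id_k \otimes \Phi} = k\ketbra{\psi_+}{\psi_+} \otimes C_\Phi$ is block positive,'' leaning on the previously established correspondences between ($k$-)positive maps and ($k$-)block positive operators. Your argument instead computes $(\bra{\alpha}\otimes\bra{\beta})Y(\ket{\alpha}\otimes\ket{\beta}) = \frac{1}{k}\bra{w}X\ket{w}$ directly, with $\ket{w} = \sum_{i=1}^k \ket{a_i}\otimes\ket{b_i}$ ranging over exactly the vectors of Schmidt rank at most $k$ --- this is self-contained and makes no reference to linear maps at all. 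It is worth noting that the algebraic core of your computation (the collapse of the double sum via $\braket{\psi_+}{ij} = \tfrac{1}{\sqrt{k}}\delta_{ij}$) is the same manipulation the paper uses to prove the $k$-positive/$k$-block-positive correspondence itself, so in a sense you have inlined the paper's supporting lemma and bypassed the isomorphism layer. What you gain is an elementary, standalone proof; what the paper's route buys is economy, since the heavy lifting is already done and reused elsewhere. Your closing remark about the dual route through Propositions~\ref{prop:kPos} and~\ref{prop:sep_to_sk} is also sound, since $\ketbra{\psi_+}{\psi_+}\otimes(\cdot)$ and $(\bra{\psi_+}\otimes I)(\cdot)(\ket{\psi_+}\otimes I)$ are adjoint under the Hilbert--Schmidt inner product, though the direct computation you give is cleaner.
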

Although we could prove Proposition~\ref{prop:bp_to_kbp} directly, we leave its proof to Section~\ref{sec:choi_jamiolkowski_schmidt}, where we will be able to prove it in a single line.

We can now use Proposition~\ref{prop:sep_to_sk} to produce a hierarchy of necessary tests for whether or not $SN(\rho) \leq k$, much like was done for separable states in the previous section. Let $\rho \in M_A \otimes M_B$. Then $SN(\rho) \leq k$ if and only if there exists $X \in (M_{A^\prime} \otimes M_A) \otimes (M_{B^\prime} \otimes M_B)$, with ${\rm dim}(\bb{C}^{A^\prime}), {\rm dim}(\bb{C}^{B^\prime}) \leq k$, such that $(\bra{\psi_+}_{A^\prime B^\prime} \otimes I_{AB})X(\ket{\psi_+}_{A^\prime B^\prime} \otimes I_{AB}) = \rho$. The operator $X$ is separable if and only if it is $s$-shareable for all $s \geq 1$. By combining these two facts, we see that $SN(\rho) \leq k$ if and only if, for all $s \geq 1$, there exists $0 \leq \tilde{X} \in (M_{A^\prime} \otimes M_A) \otimes (M_{B^\prime} \otimes M_B)^{\otimes s}$ such that
\begin{enumerate}[(a)]
	\item $(\bra{\psi_+}_{A^\prime B^\prime} \otimes I_{AB})\Tr_{\overline{1},\overline{2},\overline{3},\overline{4}}(\tilde{X})(\ket{\psi_+}_{A^\prime B^\prime} \otimes I_{AB}) = \rho$; and
	\item $S_{\sigma}\tilde{X}S_{\sigma} = \tilde{X}$ for all permutations $\sigma : \{1,\ldots,2s+2\} \rightarrow \{1,\ldots,2s+2\}$ with $\sigma(1) = 1$ and $\sigma(2j) = \sigma(2j-1)+1$ for all $1 \leq j \leq s+1$.
\end{enumerate}

For each fixed $s \geq 1$, the above conditions can be checked via semidefinite programming, just like in the case of separability. Furthermore, this method works much more generally -- given any separability criterion, we get a corresponding criterion for Schmidt number of $\rho$ by asking whether or not there exists an extended operator $X$ that satisfies the separability criterion and $(\bra{\psi_+} \otimes I)X(\ket{\psi_+} \otimes I) = \rho$. Similarly, given any $Y = Y^\dagger$, we can apply any test for block positivity to $\ketbra{\psi_+}{\psi_+} \otimes Y$ to get a test for $k$-block positivity of $Y$.

\section{The Choi--Jamio{\l}kowski Isomorphism}\label{sec:choi_jamiolkowski_isomorphism}

Recall from Section~\ref{sec:vector_operator_isomorphism} that the vector-operator isomorphism associated a linear map $\Phi : M_{n,m} \rightarrow M_{n,m}$ with an operator $M_{\Phi} \in M_m \otimes M_n$. While that isomorphism is very useful when dealing with questions related to rank and Schmidt rank, many important properties of the map $\Phi$ are not immediately clear from the operator $M_\Phi$. For example, we know that $\Phi$ is completely positive if and only if we can write $\Phi(X) = \sum_k A_k X A_k^\dagger$, in which case we have $M_\Phi = \sum_k \overline{A_k} \otimes A_k$ -- an operator that does not have any immediately obvious or simple properties that distinguish it.

On the other hand, Theorem~\ref{thm:choi_cp} showed that $\Phi : M_m \rightarrow M_n$ is completely positive if and only if the operator
\begin{align}\label{eq:choi}
	C_\Phi := m(id_m \otimes \Phi)(\ketbra{\psi_{+}}{\psi_{+}}\big) = \sum_{i,j=1}^{m} \ketbra{i}{j} \otimes \Phi(\ketbra{i}{j})
\end{align}
is positive semidefinite, which is an easy property to check. It turns out that many other properties of superoperators are illuminated by looking at the Choi matrix $C_\Phi$ as well. Before proceeding to investigate those properties, we present a simple lemma that illustrates how the Choi matrix of $\Phi$ is related to the Choi matrix of $\Phi^\dagger$.
\begin{lemma}\label{lem:choi_dual}
	Let $\Phi : M_m \rightarrow M_n$ be linear. Then $C_{\Phi^\dagger} = S \overline{C_{\Phi}} S$, where $S$ is the swap operator.
\end{lemma}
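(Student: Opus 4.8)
The plan is to prove the identity entrywise in the standard product basis, since both sides are concrete operators and the entire content of the claim is that their matrix entries agree after a complex conjugation and an interchange of the two tensor legs. First I would expand the target using the definition of the Choi matrix, $C_{\Phi^\dagger} = \sum_{k,l} \ketbra{k}{l} \otimes \Phi^\dagger(\ketbra{k}{l})$, and read off its generic entry as $(\bra{k} \otimes \bra{a})C_{\Phi^\dagger}(\ket{l} \otimes \ket{b}) = \bra{a}\Phi^\dagger(\ketbra{k}{l})\ket{b}$, where $k,l$ index $\bb{C}^n$ and $a,b$ index $\bb{C}^m$.

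The crucial step is to convert this matrix element of $\Phi^\dagger$ into one of $\Phi$ using only the defining relation $\Tr(\Phi(X)^\dagger Y) = \Tr(X^\dagger \Phi^\dagger(Y))$. Taking $X = \ketbra{a}{b} \in M_m$ and $Y = \ketbra{k}{l} \in M_n$, I would rewrite $\bra{a}\Phi^\dagger(\ketbra{k}{l})\ket{b} = \Tr(X^\dagger \Phi^\dagger(Y)) = \Tr(\Phi(X)^\dagger Y) = \overline{\bra{k}\Phi(\ketbra{a}{b})\ket{l}}$. Recognizing the inner bracket as an entry of $C_\Phi$, namely $\bra{k}\Phi(\ketbra{a}{b})\ket{l} = (\bra{a} \otimes \bra{k})C_\Phi(\ket{b} \otimes \ket{l})$, and using that conjugating a standard-basis matrix entry yields the corresponding entry of the entrywise conjugate $\overline{C_\Phi}$, this gives the entry identity $(\bra{k} \otimes \bra{a})C_{\Phi^\dagger}(\ket{l} \otimes \ket{b}) = (\bra{a} \otimes \bra{k})\overline{C_\Phi}(\ket{b} \otimes \ket{l})$ for all $a,b,k,l$.

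Finally I would absorb the reversal of the tensor factors into the swap operator: since $S(\ket{a} \otimes \ket{k}) = \ket{k} \otimes \ket{a}$, the left factor becomes $\bra{a} \otimes \bra{k} = (\bra{k} \otimes \bra{a})S$ and likewise $\ket{b} \otimes \ket{l} = S^\dagger(\ket{l} \otimes \ket{b})$, so the identity reads $(\bra{k} \otimes \bra{a})C_{\Phi^\dagger}(\ket{l} \otimes \ket{b}) = (\bra{k} \otimes \bra{a})S\overline{C_\Phi}S^\dagger(\ket{l} \otimes \ket{b})$. As this holds for every pair of product basis vectors, I conclude $C_{\Phi^\dagger} = S\overline{C_\Phi}S^\dagger$, which is exactly the claimed identity, with $S^\dagger = S$ in the square case (or with $S$ read as the swap between $\bb{C}^m \otimes \bb{C}^n$ and $\bb{C}^n \otimes \bb{C}^m$ in general). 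The main obstacle is bookkeeping rather than any genuine difficulty: one must track which index lives in which tensor factor and be careful that it is precisely the conjugation in the dual-map relation that produces $\overline{C_\Phi}$ (and not $C_\Phi^\dagger$ or a transpose), while the swap alone is responsible for interchanging the two legs. An alternative, equally short route would fix a generalized Choi--Kraus representation $\Phi(X) = \sum_\ell A_\ell X B_\ell^\dagger$, invoke $\Phi^\dagger(Y) = \sum_\ell A_\ell^\dagger Y B_\ell$ as derived earlier in the excerpt, and compare the resulting expansions of the two Choi matrices directly.
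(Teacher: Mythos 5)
Your proof is correct, but it takes a genuinely different route from the paper's. The paper's proof writes $C_{\Phi}$ in its singular value decomposition $\sum_i \lambda_i \ketbra{v_i}{w_i}$, invokes Proposition~\ref{prop:choi_kraus_ops_general} (a forward reference at that point in the text) to convert this into a generalized Choi--Kraus representation $\Phi(X) = \sum_i \lambda_i\, {\rm mat}(\ket{v_i})\, X\, {\rm mat}(\ket{w_i})^\dagger$, reads off $\Phi^\dagger(X) = \sum_i \lambda_i\, {\rm mat}(\ket{v_i})^\dagger X\, {\rm mat}(\ket{w_i})$, and then uses the fact that $S$ implements the transpose under the vector-operator isomorphism, ${\rm mat}(S\ket{v}) = {\rm mat}(\ket{v})^T$, to reassemble $C_{\Phi^\dagger} = \sum_i \lambda_i S\overline{\ketbra{v_i}{w_i}}S$. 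Your argument bypasses that machinery entirely: it is a direct entrywise verification using only the adjoint relation $\Tr(\Phi(X)^\dagger Y) = \Tr(X^\dagger \Phi^\dagger(Y))$ evaluated on rank-one standard-basis inputs, which cleanly isolates where the complex conjugation comes from (the daggers inside the trace identity) and where the swap comes from (the reversal of the tensor legs). This makes your proof more self-contained, since it does not lean on a proposition proved only later, at the cost of some index bookkeeping; the paper's version instead reuses machinery it develops anyway and highlights the interplay between the vector-operator and Choi--Jamio{\l}kowski isomorphisms. Your closing remark about $S$ versus $S^\dagger$ in the rectangular case is also the right resolution, and matches the more careful restatement the paper gives in Proposition~\ref{prop:left_right_choi}. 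The alternative route you sketch at the end, via a fixed generalized Choi--Kraus representation, is essentially the paper's argument.
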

\begin{proof}
	Use the singular value decomposition to write $C_{\Phi} = \sum_i \lambda_i \ketbra{v_i}{w_i}$. We will see shortly (in Proposition~\ref{prop:choi_kraus_ops_general}) that $\Phi(X) = \sum_i \lambda_i {\rm mat}(\ket{v_i}) X {\rm mat}(\ket{w_i})^\dagger$. Thus $\Phi^\dagger(X) = \sum_i \lambda_i {\rm mat}(\ket{v_i})^\dagger X {\rm mat}(\ket{w_i})$.
	
	Now recall that $S$ corresponds to the transpose map under the vector-operator isomorphism, so ${\rm mat}(\ket{v})^T = {\rm mat}(S\ket{v})$ for all $\ket{v}$. Thus we see (again using Proposition~\ref{prop:choi_kraus_ops_general}) that $C_{\Phi^\dagger} = \sum_i \lambda_i S\overline{\ketbra{v_i}{w_i}}S$, which is easily seen to be equal to $S \overline{C_{\Phi}} S$.
\end{proof}

The map that sends $\Phi$ to its Choi matrix $C_\Phi$ is a linear isomorphism that is known as the \emph{Choi--Jamio{\l}kowski isomorphism} \cite{C75,J72}. This map, appropriately rescaled by a factor of $m$, is sometimes referred to as \emph{channel-state duality} \cite{AP04,SMR61,ZB04} because it associates quantum channels with density operators, though we will not use this terminology.

It is straightforward to see that the Choi--Jamio{\l}kowski isomorphism is linear. To see that it is bijective, it is perhaps instructive to write the Choi matrix $C_\Phi$ as a block matrix:
\begin{align*}\hsp
	C_\Phi = \begin{bmatrix}\Phi(\ketbra{1}{1}) & \Phi(\ketbra{1}{2}) & \cdots & \Phi(\ketbra{1}{m}) \\ \Phi(\ketbra{2}{1}) & \Phi(\ketbra{2}{2}) & \cdots & \Phi(\ketbra{2}{m}) \\ \vdots & \vdots & \ddots & \vdots \\ \Phi(\ketbra{m}{1}) & \Phi(\ketbra{m}{2}) & \cdots & \Phi(\ketbra{m}{m})\end{bmatrix}.
\dsp\end{align*}
Because the set $\big\{\ketbra{i}{j}\big\}_{i,j=1}^{m}$ is a basis of $M_m$, it follows easily that every map $\Phi$ corresponds to a unique Choi matrix, and vice-versa. This map becomes an isometry when we define an inner product on the space of superoperators by $\langle \Phi | \Psi \rangle := \langle C_\Phi | C_\Psi \rangle = \Tr(C_{\Phi}^\dagger C_{\Psi})$. The following proposition demonstrates some useful properties of this inner product -- these properties are well-known, and an alternative proof can be found in \cite{Sko11}.
\begin{prop}\label{prop:choi_inner_basic}
	Let $\Phi, \Psi : M_m \rightarrow M_n$ and $\Omega : M_n \rightarrow M_n$ be linear. Then
	\begin{enumerate}[(a)]
		\item $\langle \Phi | \Omega \circ \Psi \rangle = \langle \Omega^\dagger \circ \Phi | \Psi \rangle$
		\item $\langle \Phi | \Psi \rangle = \langle \Psi^\dagger | \Phi^\dagger \rangle$.
	\end{enumerate}
\end{prop}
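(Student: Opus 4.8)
The plan is to reduce everything to a single explicit formula for the superoperator inner product and then push the relevant maps across it. First I would expand the Choi matrices in block form: writing $C_\Phi = \sum_{i,j} \ketbra{i}{j} \otimes \Phi(\ketbra{i}{j})$ and similarly for $C_\Psi$, the product $C_\Phi^\dagger C_\Psi$ carries the scalar factor $\Tr(\ketbra{j}{i}\ketbra{k}{l}) = \delta_{ik}\delta_{jl}$, which collapses the double sum and yields
\begin{align*}
	\langle \Phi | \Psi \rangle = \Tr(C_\Phi^\dagger C_\Psi) = \sum_{i,j=1}^{m} \Tr\big(\Phi(\ketbra{i}{j})^\dagger \Psi(\ketbra{i}{j})\big).
\end{align*}
Part (a) is then most naturally read off from this sum, while for part (b) I would instead exploit the dual-Choi identity of Lemma~\ref{lem:choi_dual}, since carrying the two dual maps through the sum formula simultaneously is messier than it needs to be.

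For part (a), I would apply the formula to $\langle \Phi | \Omega \circ \Psi \rangle = \sum_{i,j}\Tr\big(\Phi(\ketbra{i}{j})^\dagger \Omega(\Psi(\ketbra{i}{j}))\big)$ and move $\Omega$ off of $\Psi(\ketbra{i}{j})$ using the defining adjoint relation for the dual map. Combining the conjugate symmetry of the Hilbert--Schmidt inner product with the identity $\Tr(\Omega(X)^\dagger Y) = \Tr(X^\dagger \Omega^\dagger(Y))$ gives $\Tr(A^\dagger \Omega(B)) = \Tr(\Omega^\dagger(A)^\dagger B)$; taking $A = \Phi(\ketbra{i}{j})$ and $B = \Psi(\ketbra{i}{j})$ turns each summand into $\Tr\big((\Omega^\dagger \circ \Phi)(\ketbra{i}{j})^\dagger \Psi(\ketbra{i}{j})\big)$, and summing over $i,j$ recovers $\langle \Omega^\dagger \circ \Phi | \Psi \rangle$.

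For part (b), I would substitute the relations $C_{\Phi^\dagger} = S\overline{C_\Phi}S$ and $C_{\Psi^\dagger} = S\overline{C_\Psi}S$ of Lemma~\ref{lem:choi_dual} into $\langle \Psi^\dagger | \Phi^\dagger \rangle = \Tr(C_{\Psi^\dagger}^\dagger C_{\Phi^\dagger})$; using that $S$ is Hermitian and self-inverse ($S^\dagger = S$, $S^2 = I$) lets me cancel the swap operators by cyclicity of the trace, leaving $\Tr\big((C_\Psi)^T \overline{C_\Phi}\big)$. A short index computation then shows $\Tr\big((C_\Psi)^T \overline{C_\Phi}\big) = \sum_{i,j}\overline{(C_\Phi)_{ij}}(C_\Psi)_{ij} = \Tr(C_\Phi^\dagger C_\Psi) = \langle \Phi | \Psi \rangle$, which is exactly the claim. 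I expect the only real friction to be bookkeeping: in (a), making sure the dual-map relation is applied without introducing a spurious complex conjugate (this is why the conjugate symmetry of the inner product has to be invoked), and in (b), keeping the entrywise conjugate $\overline{C_\Phi}$ distinct from the transpose $(C_\Psi)^T$ when collapsing the swaps. Both steps are routine once the sum formula and Lemma~\ref{lem:choi_dual} are in hand.
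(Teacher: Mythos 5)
Your argument is correct and follows essentially the same route as the paper: part (b) is the identical computation via Lemma~\ref{lem:choi_dual}, and part (a) uses the same mechanism of moving $\Omega$ across the Hilbert--Schmidt pairing as $\Omega^\dagger$, merely carried out termwise on the block expansion $\Tr(C_\Phi^\dagger C_\Psi)=\sum_{i,j}\Tr\big(\Phi(\ketbra{i}{j})^\dagger\Psi(\ketbra{i}{j})\big)$ rather than in one step via $C_{\Omega\circ\Psi}=(id_m\otimes\Omega)(C_\Psi)$. No gaps.
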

\begin{proof}
	Property (a) follows from simply moving terms around inside the Hilbert--Schmidt inner product:
\begin{align*}
	\langle C_\Phi | C_{\Omega \circ \Psi} \rangle & = \big\langle C_\Phi | (id_m \otimes \Omega)(C_{\Psi}) \big\rangle = \big\langle (id_m \otimes \Omega^\dagger)(C_\Phi) | C_{\Psi} \big\rangle = \langle C_{\Omega^\dagger \circ \Phi} | C_{\Psi} \rangle.
\end{align*}
	For Property (b), we use Lemma~\ref{lem:choi_dual}:
	\begin{align*}
		\langle \Psi^\dagger | \Phi^\dagger \rangle & = \Tr\big(C_{\Psi^\dagger}^\dagger C_{\Phi^\dagger}\big) = \Tr\big((S \overline{C_{\Psi}} S)^\dagger S \overline{C_{\Phi}} S\big) = \Tr\big(C_{\Psi}^T \overline{C_{\Phi}} \big) = \langle \Phi | \Psi \rangle.
	\end{align*}
\end{proof}

Table~\ref{table:choi_jam} gives several examples of equivalences of the Choi--Jamio{\l}kowski isomorphism that will be used repeatedly throughout this work for easy reference. The remainder of this section is devoted to expanding upon, proving, or at least referencing these various equivalences.
\afterpage{\clearpage} 
\begin{table}[ht]\hsp
	\begin{center}
  \begin{tabular}{ c | c }
  	\noalign{\hrule height 0.1em}
    Superoperators $\Phi : M_m \rightarrow M_n$ & Operators $X \in M_m \otimes M_n$ \\
  	\noalign{\hrule height 0.1em}
    all superoperators & all operators \\ \hline
    completely positive maps & positive semidefinite operators\index{completely positive}\index{positive semidefinite} \\ \hline
    Hermiticity-preserving maps & Hermitian operators\index{Hermiticity-preserving}\index{Hermitian} \\ \hline
    trace-preserving maps & operators $X$ with $\Tr_2(X) = I$ \\ \hline
    unital maps & operators $X$ with $\Tr_1(X) = I$\index{unital} \\ \noalign{\hrule height 0.1em}
    positive maps & block positive operators\index{positive map}\index{block positive} \\ \hline
    $k$-positive maps & $k$-block positive operators \\ \hline
    superpositive maps & separable operators\index{superpositive}\index{separable operator} \\ \hline
    $k$-superpositive maps & operators $X$ with $SN(X) \leq k$\index{Schmidt number} \\ \hline
    separable maps & separable operators (via another tensor cut)\index{separable map} \\ \noalign{\hrule height 0.1em}
    completely co-positive maps & positive partial transpose operators\index{completely co-positive} \\ \hline
    binding entanglement maps & bound entangled operators\index{binding entanglement} \\ \hline
    anti-degradable maps & shareable operators\index{anti-degradable} \\ \hline
    $s$-extendible maps & $s$-shareable operators \\ \noalign{\hrule height 0.1em}
  \end{tabular}
	\end{center}
	\caption[Equivalences of the Choi--Jamio{\l}kowski isomorphism]{\hsp The equivalences of several sets of linear operators and linear superoperators via the Choi--Jamio{\l}kowski isomorphism.}\label{table:choi_jam}
\dsp\end{table}

\subsection{Fundamental Correspondences for Quantum Channels}\label{sec:choi_jamiolkowski_basic}

We now derive the most basic and well-known of the associations of the Choi--Jamio{\l}kowski isomorphism -- specifically those that help clarify the structure of the set of quantum channels. These results are all well-known, and proofs of many of these correspondences can be found in \cite{VV03,Wat04}.

\subsubsection*{All superoperators -- All operators}

We already saw that the Choi--Jamio{\l}kowski isomorphism is a bijection between the set of linear maps $\Phi : M_m \rightarrow M_n$ and the set of operators $M_m \otimes M_n$. We now use this isomorphism and a slight modification of the proof of Theorem~\ref{thm:choi_cp} to demonstrate a relationship between the generalized Choi--Kraus operators of $\Phi$ and the Choi matrix $C_\Phi$.
\begin{prop}\label{prop:choi_kraus_ops_general}
	Let $\Phi : M_m \rightarrow M_n$ be a linear map. Then $C_\Phi = \sum_\ell c_\ell \ketbra{v_\ell}{w_\ell}$ if and only if
	\begin{align*}
		\Phi(X) = \sum_\ell c_\ell {\rm mat}(\ket{v_\ell}) X {\rm mat}(\ket{w_\ell})^\dagger \quad \forall \, X \in M_m.
	\end{align*}
\end{prop}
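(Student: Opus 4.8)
The plan is to prove the equivalence via the linearity and bijectivity of the Choi--Jamio{\l}kowski isomorphism, reducing everything to checking the claim on the basis operators $\ketbra{i}{j}$. Since both sides of the asserted equivalence are linear in $C_\Phi$ (equivalently, in $\Phi$), it suffices to verify the correspondence for a single rank-one term $c_\ell \ketbra{v_\ell}{w_\ell}$ and then sum over $\ell$; the isomorphism guarantees that a map is completely determined by its Choi matrix, so matching Choi matrices forces $\Phi$ to equal the claimed operator-sum expression. Thus I would first reduce to the rank-one case and then extend by linearity.

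For the forward direction, I would start from $C_\Phi = \sum_\ell c_\ell \ketbra{v_\ell}{w_\ell}$ and compute $\Phi(\ketbra{i}{j})$ by extracting the appropriate block of the Choi matrix. Exactly as in Equation~\eqref{eq:choi_reduce1} from the proof of Theorem~\ref{thm:choi_cp}, one has
\begin{align*}
	\Phi(\ketbra{i}{j}) = (\bra{i} \otimes I)C_\Phi(\ket{j} \otimes I).
\end{align*}
Writing each $\ket{v_\ell} = \sum_{a} \ket{a} \otimes \big({\rm mat}(\ket{v_\ell})\ket{a}\big)$ (so that the coefficient structure mirrors the matricization), I would expand $(\bra{i} \otimes I)\ketbra{v_\ell}{w_\ell}(\ket{j} \otimes I)$ and recognize it as ${\rm mat}(\ket{v_\ell})\ketbra{i}{j}{\rm mat}(\ket{w_\ell})^\dagger$. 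This is the same bookkeeping that produced the canonical Kraus operators $A_k = \sqrt{\lambda_k}\,{\rm mat}(\ket{v_k})$ earlier, now carried out for a singular-value (rather than spectral) decomposition, which is exactly why the right-hand operator is ${\rm mat}(\ket{w_\ell})^\dagger$ rather than ${\rm mat}(\ket{v_\ell})^\dagger$. Summing over $\ell$ and extending by linearity gives the claimed formula for $\Phi(X)$.

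For the converse, I would verify that if $\Phi(X) = \sum_\ell c_\ell {\rm mat}(\ket{v_\ell}) X {\rm mat}(\ket{w_\ell})^\dagger$, then its Choi matrix is $\sum_\ell c_\ell \ketbra{v_\ell}{w_\ell}$. This follows by substituting the operator-sum form into the defining Equation~\eqref{eq:choi} for $C_\Phi$ and using the fact, established in Section~\ref{sec:vector_operator_isomorphism}, that ${\rm vec}(AXB^\dagger) = (\overline{B} \otimes A){\rm vec}(X)$ together with ${\rm vec}({\rm mat}(\ket{v})) = \ket{v}$; applying this to $\sum_{i,j}\ketbra{i}{j}\otimes \Phi(\ketbra{i}{j})$ recovers $\sum_\ell c_\ell \ketbra{v_\ell}{w_\ell}$ after recognizing $\sum_{i,j}\ketbra{i}{j}\otimes {\rm mat}(\ket{v})\ketbra{i}{j}{\rm mat}(\ket{w})^\dagger = \ketbra{v}{w}$.

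The main obstacle is purely notational rather than conceptual: keeping the indices of the matricization straight and correctly tracking the conjugate/transpose that arises because the vector-operator isomorphism sends $\ket{a}\otimes\ket{b}$ to $\ket{b}\overline{\bra{a}}$. In particular, one must be careful that the \emph{right} generalized Choi--Kraus operator appearing is ${\rm mat}(\ket{w_\ell})$ (adjointed), matching the bra side of $\ketbra{v_\ell}{w_\ell}$; a sign or conjugation slip here would produce an incorrect statement. Since the key identity $\Phi(\ketbra{i}{j}) = (\bra{i}\otimes I)C_\Phi(\ket{j}\otimes I)$ and the vectorization rule ${\rm vec}(AXB^\dagger) = (\overline{B}\otimes A){\rm vec}(X)$ are both already available from earlier in the excerpt, the proof is short once this identification is made cleanly.
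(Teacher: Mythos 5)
Your proposal is correct and follows essentially the same route as the paper: both directions are handled by the block-extraction identity $\Phi(\ketbra{i}{j}) = (\bra{i}\otimes I)C_\Phi(\ket{j}\otimes I)$ together with the expansion $\ket{v_\ell} = \sum_a \ket{a}\otimes({\rm mat}(\ket{v_\ell})\ket{a})$, exactly as in the paper's adaptation of the proof of Theorem~\ref{thm:choi_cp}. The conjugation bookkeeping you flag is handled correctly, so no changes are needed.
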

\begin{proof}
	For the ``if'' direction of the proof, we note that
	\begin{align}\label{eq:gen_choi_all1}
		C_\Phi = \sum_\ell c_\ell \sum_{i,j=1}^{m} \ketbra{i}{j} \otimes {\rm mat}(\ket{v_\ell}) \ketbra{i}{j} {\rm mat}(\ket{w_\ell})^\dagger.
	\end{align}
	Now recall that if $\ket{v_\ell} = \sum_{i=1}^{m}c^{(v)}_{\ell,i}\ket{i} \otimes \ket{v_{\ell,i}}$ then ${\rm mat}(\ket{v_\ell}) = \sum_{i=1}^{m}c^{(v)}_{\ell,i}\ketbra{v_{\ell,i}}{i}$. It follows from Equation~\eqref{eq:gen_choi_all1} that
	\begin{align*}
		C_\Phi = \sum_\ell c_\ell \sum_{i,j=1}^{m} c^{(v)}_{\ell,i}c^{(w)}_{\ell,j} \ketbra{i}{j} \otimes \ketbra{v_{\ell,i}}{w_{\ell,j}} = \sum_\ell c_\ell \ketbra{v_\ell}{w_\ell}.
	\end{align*}
	
	For the ``only if'' direction of the proof, we mimic the proof of Theorem~\ref{thm:choi_cp}. Suppose $C_\Phi = \sum_\ell c_\ell \ketbra{v_\ell}{w_\ell}$ and write each $\ket{v_k}$ as a linear combination of elementary tensors: $\ket{v_\ell} = \sum_{i=1}^{m}c^{(v)}_{\ell,i}\ket{i} \otimes \ket{v_{\ell,i}}$ (and decompose $\ket{w_\ell}$ similarly). If we multiply $C_\Phi$ on the left by $\bra{i} \otimes I$ and on the right by $\ket{j} \otimes I$, then from the definition of $C_\Phi$ we have
	\begin{align}\label{eq:choi_reduce_gen1}
		(\bra{i} \otimes I)C_\Phi(\ket{j} \otimes I) & = \Phi\big(\ketbra{i}{j}\big).
	\end{align}
	Similarly, from $C_\Phi = \sum_\ell c_\ell \ketbra{v_\ell}{w_\ell}$ we have
	\begin{align}\begin{split}\label{eq:choi_reduce_gen2}
		(\bra{i} \otimes I)C_\Phi(\ket{j} \otimes I) & = \sum_\ell c_\ell c^{(v)}_{\ell,i}c^{(w)}_{\ell,j} \ketbra{v_{\ell,i}}{w_{\ell,j}} \\
		& = \sum_\ell c_\ell \left(\sum_{k=1}^{m}c^{(v)}_{\ell,k}\ketbra{v_{\ell,k}}{k}\right) \ketbra{i}{j} \left(\sum_{k=1}^{m}c^{(w)}_{\ell,k}\ketbra{k}{w_{\ell,k}}\right) \\
		& = \sum_\ell c_\ell {\rm mat}(\ket{v_\ell}) \ketbra{i}{j} {\rm mat}(\ket{w_\ell})^\dagger.
	\end{split}\end{align}
	It follows by equating Equations~\eqref{eq:choi_reduce_gen1} and~\eqref{eq:choi_reduce_gen2} that
	\begin{align*}
		\Phi\big(\ketbra{i}{j}\big) = \sum_\ell c_\ell {\rm mat}(\ket{v_\ell}) \ketbra{i}{j} {\rm mat}(\ket{w_\ell})^\dagger \quad \forall \, 0 \leq i,j \leq m-1.
		\end{align*}
	Extending by linearity shows that $\Phi$ has the desired form.
\end{proof}

In particular, the singular value decomposition $C_\Phi = \sum_{k=1}^{mn} \sigma_k \ketbra{v_k}{w_k}$ implies via Proposition~\ref{prop:choi_kraus_ops_general} that we can write
\begin{align*}
	\Phi(X) = \sum_{i=1}^{mn} A_i X B_i^\dagger,
\end{align*}
where $\big\{A_i\big\}, \big\{B_i\big\} \subset M_{n,m}$ are sets of operators that are orthogonal in the Hilbert--Schmidt inner product.

Using Proposition~\ref{prop:choi_kraus_ops_general}, we can now present a simple proposition (which was proved in the special cases of quantum channels in \cite{JK11b} and positive maps in \cite{S11}) that allows us to relate the Choi matrices of $\Phi$ and $\Phi^{\dagger}$.
\begin{prop}\label{prop:left_right_choi}
	Let $\Phi : M_m \rightarrow M_n$ be a linear map. Then $C_{\Phi^\dagger} = S \overline{C_{\Phi}} S^\dagger$, where $S \in M_{n,m} \otimes M_{m,n}$ is the swap operator.
\end{prop}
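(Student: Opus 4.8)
The plan is to adapt the proof of Lemma~\ref{lem:choi_dual} to the rectangular setting, using Proposition~\ref{prop:choi_kraus_ops_general} in both directions while tracking dimensions carefully. Since $\Phi : M_m \to M_n$ has $C_\Phi \in M_m \otimes M_n$ whereas $\Phi^\dagger : M_n \to M_m$ has $C_{\Phi^\dagger} \in M_n \otimes M_m$, the two Choi matrices live in \emph{different} spaces; this is exactly why $S$ must be read as the rectangular swap $S : \mathbb{C}^m \otimes \mathbb{C}^n \to \mathbb{C}^n \otimes \mathbb{C}^m$ and why $S^\dagger$ (rather than $S$ itself) appears on the right.

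First I would take a singular value decomposition $C_\Phi = \sum_\ell \sigma_\ell \ketbra{v_\ell}{w_\ell}$ with $\ket{v_\ell}, \ket{w_\ell} \in \mathbb{C}^m \otimes \mathbb{C}^n$ and $\sigma_\ell \geq 0$. Proposition~\ref{prop:choi_kraus_ops_general} then gives the generalized Choi--Kraus form $\Phi(X) = \sum_\ell \sigma_\ell \, {\rm mat}(\ket{v_\ell}) \, X \, {\rm mat}(\ket{w_\ell})^\dagger$, where each ${\rm mat}(\ket{v_\ell}), {\rm mat}(\ket{w_\ell}) \in M_{n,m}$. Applying the dual formula $\Phi^\dagger(Y) = \sum_\ell A_\ell^\dagger Y B_\ell$ derived earlier in the chapter (with $A_\ell = \sigma_\ell {\rm mat}(\ket{v_\ell})$, $B_\ell = {\rm mat}(\ket{w_\ell})$, using that the $\sigma_\ell$ are real) yields $\Phi^\dagger(Y) = \sum_\ell \sigma_\ell \, {\rm mat}(\ket{v_\ell})^\dagger \, Y \, {\rm mat}(\ket{w_\ell})$. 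Reading this through the ``only if'' direction of Proposition~\ref{prop:choi_kraus_ops_general} again reconstructs $C_{\Phi^\dagger} = \sum_\ell \sigma_\ell \ketbra{v_\ell'}{w_\ell'}$, where $\ket{v_\ell'}$ and $\ket{w_\ell'}$ are the vectors whose matricizations are ${\rm mat}(\ket{v_\ell})^\dagger$ and ${\rm mat}(\ket{w_\ell})^\dagger$ respectively.

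The crux is then the single identity ${\rm mat}(\ket{v})^\dagger = {\rm mat}(S\overline{\ket{v}})$ for all $\ket{v} \in \mathbb{C}^m \otimes \mathbb{C}^n$. I would establish this by combining two facts: that $S$ implements the transpose under the vector-operator isomorphism (so ${\rm mat}(S\ket{v}) = {\rm mat}(\ket{v})^T$, as recorded in Section~\ref{sec:symmetric_sub}), and that entrywise conjugation commutes with matricization (so ${\rm mat}(\overline{\ket{v}}) = \overline{{\rm mat}(\ket{v})}$); since the swap is a real permutation, $S\overline{\ket{v}} = \overline{S\ket{v}}$, and composing gives the claim. Applying this identity to both $\ket{v_\ell}$ and $\ket{w_\ell}$ shows $\ket{v_\ell'} = S\overline{\ket{v_\ell}}$ and $\ket{w_\ell'} = S\overline{\ket{w_\ell}}$, so that $C_{\Phi^\dagger} = \sum_\ell \sigma_\ell (S\overline{\ket{v_\ell}})(S\overline{\ket{w_\ell}})^\dagger = S\big(\sum_\ell \sigma_\ell \overline{\ketbra{v_\ell}{w_\ell}}\big)S^\dagger = S\overline{C_\Phi}S^\dagger$.

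The main obstacle is bookkeeping rather than conceptual: I must keep the rectangular domains and codomains straight so that $S$ and $S^\dagger$ land on the correct sides, and I must not drop the conjugation---the adjoint on ${\rm mat}$ introduces both a transpose (absorbed by $S$) and a complex conjugation (absorbed by the bar on $C_\Phi$). In the square, dimension-matched case these distinctions collapse and one recovers the statement of Lemma~\ref{lem:choi_dual}, so the present proposition is genuinely just its careful rectangular refinement.
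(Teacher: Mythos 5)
Your proposal is correct and follows essentially the same route as the paper's proof: decompose $C_\Phi$ into rank-one terms, pass to the generalized Choi--Kraus form via Proposition~\ref{prop:choi_kraus_ops_general}, dualize, and reassemble using the identity ${\rm mat}(\ket{v})^\dagger = {\rm mat}(S\overline{\ket{v}})$. The only cosmetic difference is that the paper uses an arbitrary rank-one decomposition rather than insisting on the singular value decomposition, which changes nothing in the argument.
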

\begin{proof}
	Write $C_{\Phi} = \sum_\ell c_\ell \ketbra{v_\ell}{w_\ell}$ so that $\Phi(X) = \sum_\ell c_\ell {\rm mat}(\ket{v_\ell}) X {\rm mat}(\ket{w_\ell})^\dagger$ by Proposition~\ref{prop:choi_kraus_ops_general}. Then $\Phi^\dagger(X) = \sum_\ell c_\ell {\rm mat}(\ket{v_\ell})^\dagger X {\rm mat}(\ket{w_\ell})$, so using Proposition~\ref{prop:choi_kraus_ops_general} together with the fact that ${\rm mat}(\ket{v_\ell})^\dagger = {\rm mat}(S\overline{\ket{v_\ell}})$ gives
	\begin{align*}
		C_{\Phi^\dagger} = \sum_\ell c_\ell S\overline{\ketbra{v_\ell}{w_\ell}}S^\dagger = S \overline{C_{\Phi}} S^\dagger.
	\end{align*}
\end{proof}

\subsubsection*{Completely positive maps -- Positive semidefinite operators}

We already saw in Theorem~\ref{thm:choi_cp} that the set of completely positive maps corresponds to the set of positive semidefinite operators via the Choi--Jamio{\l}kowski isomorphism. The Kraus representation of a completely positive map then follows immediately from Proposition~\ref{prop:choi_kraus_ops_general} and the spectral decomposition $C_\Phi = \sum_{\ell=1}^{mn} \lambda_\ell \ketbra{v_\ell}{v_\ell}$. As we saw in the proof of Theorem~\ref{thm:choi_cp}, if we define $A_\ell := \sqrt{\lambda_\ell}{\rm mat}(\ket{v_\ell})$ then
\begin{align*}
	\Phi(X) = \sum_{\ell=1}^{mn} A_\ell X A_\ell^\dagger.
\end{align*}
Other proofs of these facts can be found in \cite{LS93,SSF05}.

\subsubsection*{Hermiticity-preserving maps -- Hermitian operators}

A superoperator $\Phi : M_m \rightarrow M_n$ is called \emph{Hermiticity-preserving} if $\Phi(X)^\dagger = \Phi(X)$ whenever $X^\dagger = X$ (or equivalently, if $\Phi(X^\dagger) = \Phi(X)^\dagger$ for all $X$). Completely positive (and even just positive) maps are necessarily Hermiticity-preserving because for any Hermitian matrix $X$ we can write $X = P - Q$ for some positive semidefinite $P$ and $Q$. Then $\Phi(X) = \Phi(P)-\Phi(Q) = \Phi(P)^\dagger - \Phi(Q)^\dagger = \Phi(X)^\dagger$, where the second equality follows from the fact that $\Phi(P)$ and $\Phi(Q)$ are positive semidefinite and hence Hermitian.

Hermiticity-preserving maps were originally characterized in \cite{P67} (see also \cite{Hil73,PH81}) -- they have a structure very similar to that of completely positive maps.
\begin{prop}\label{prop:hermiticity_char}
	Let $\Phi : M_m \rightarrow M_n$ be a linear map. The following are equivalent:
  \begin{enumerate}[(a)]
  	\item $\Phi$ is Hermiticity-preserving;
  	\item $C_\Phi$ is Hermitian; and
  	\item there exist operators $\{A_\ell\}_{\ell=1}^{mn}$ and real numbers $\{\lambda_\ell\}_{\ell=1}^{mn}$ such that
  	\begin{align*}
  		\Phi(X) = \sum_{\ell=1}^{mn} \lambda_\ell A_\ell X A_\ell^\dagger \quad \forall \, X \in M_m.
  	\end{align*}
  \end{enumerate}
\end{prop}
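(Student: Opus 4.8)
The plan is to prove the cyclic chain of implications (a) $\implies$ (b) $\implies$ (c) $\implies$ (a), since this establishes the full equivalence while proving each implication only once. The entire argument will rest on the definition of the Choi matrix together with Proposition~\ref{prop:choi_kraus_ops_general}, which converts a rank-one decomposition of $C_\Phi$ into a generalized Choi--Kraus form for $\Phi$. It is convenient to use the equivalent formulation of Hermiticity-preservation noted in the statement, namely $\Phi(X^\dagger) = \Phi(X)^\dagger$ for all $X$.

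For (a) $\implies$ (b), I would compute $C_\Phi^\dagger$ directly from its definition $C_\Phi = \sum_{i,j=1}^m \ketbra{i}{j} \otimes \Phi(\ketbra{i}{j})$. Taking the adjoint termwise turns $\ketbra{i}{j}$ into $\ketbra{j}{i}$ and $\Phi(\ketbra{i}{j})$ into $\Phi(\ketbra{i}{j})^\dagger$. Hermiticity-preservation then lets me rewrite $\Phi(\ketbra{i}{j})^\dagger = \Phi(\ketbra{j}{i})$, after which relabelling the summation indices $i \leftrightarrow j$ returns exactly $C_\Phi$, so $C_\Phi^\dagger = C_\Phi$.

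For (b) $\implies$ (c), I would invoke the spectral decomposition of the Hermitian operator, $C_\Phi = \sum_{\ell=1}^{mn} \lambda_\ell \ketbra{v_\ell}{v_\ell}$, whose eigenvalues $\lambda_\ell$ are automatically real. Applying Proposition~\ref{prop:choi_kraus_ops_general} with $c_\ell = \lambda_\ell$ and $\ket{w_\ell} = \ket{v_\ell}$ immediately yields $\Phi(X) = \sum_\ell \lambda_\ell {\rm mat}(\ket{v_\ell}) X {\rm mat}(\ket{v_\ell})^\dagger$, so setting $A_\ell := {\rm mat}(\ket{v_\ell})$ gives the desired form with real scalars.

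Finally, (c) $\implies$ (a) is a one-line check: if $\Phi(X) = \sum_\ell \lambda_\ell A_\ell X A_\ell^\dagger$ with each $\lambda_\ell \in \bb{R}$ and $X = X^\dagger$, then $\Phi(X)^\dagger = \sum_\ell \overline{\lambda_\ell} A_\ell X^\dagger A_\ell^\dagger = \sum_\ell \lambda_\ell A_\ell X A_\ell^\dagger = \Phi(X)$. The only point requiring care — and the closest thing to an obstacle — is ensuring that the decomposition in (b) $\implies$ (c) produces real coefficients; this is precisely where Hermiticity (rather than mere normality of $C_\Phi$) is used, since the spectral theorem for Hermitian matrices guarantees real eigenvalues. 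Everything else is bookkeeping with the Choi matrix and an appeal to Proposition~\ref{prop:choi_kraus_ops_general}.
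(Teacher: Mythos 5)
Your proposal is correct and follows essentially the same route as the paper's proof: (a) $\implies$ (b) by taking the adjoint of $C_\Phi$ termwise and using $\Phi(X^\dagger) = \Phi(X)^\dagger$, (b) $\implies$ (c) via the spectral decomposition of the Hermitian Choi matrix together with Proposition~\ref{prop:choi_kraus_ops_general}, and (c) $\implies$ (a) by a direct computation. Your remark about where Hermiticity (real eigenvalues) is genuinely needed is accurate and matches the role it plays in the paper's argument.
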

\begin{proof}
	The implication (a) $\implies$ (b) follows from simple algebra:
	\begin{align*}
		C_\Phi^\dagger = \left(\sum_{i,j=1}^{m} \ketbra{i}{j} \otimes \Phi(\ketbra{i}{j})\right)^\dagger = \sum_{i,j=1}^{m} \ketbra{j}{i} \otimes \Phi(\ketbra{j}{i}) = C_\Phi.
	\end{align*}
	
	To see (b) $\implies$ (c), use the spectral decomposition to write $C_\Phi = \sum_{\ell=1}^{mn} \lambda_\ell \ketbra{v_\ell}{v_\ell}$ with each $\lambda_\ell$ real. If we define $A_\ell := {\rm mat}(\ket{v_\ell})$ then Proposition~\ref{prop:choi_kraus_ops_general} gives the desired form of $\Phi$.
	
	Finally, the implication (c) $\implies$ (a) is trivial:
	\begin{align*}
		\Phi(X)^\dagger = \left(\sum_{\ell=1}^{mn} \lambda_\ell A_\ell X A_\ell^\dagger\right)^\dagger = \sum_{\ell=1}^{mn} \lambda_\ell A_\ell X^\dagger A_\ell^\dagger = \Phi(X^\dagger) \quad \forall \, X \in M_m.
	\end{align*}
\end{proof}

In fact, it is clear from the proof of Proposition~\ref{prop:hermiticity_char} that the operators $\big\{A_\ell\big\}$ can be chosen to be orthonormal in the Hilbert--Schmidt inner product. If we relax this condition to orthogonality, then by absorbing constants into the $A_\ell$ operators we can choose $\lambda_\ell \in \{-1,1\}$ for all $\ell$.

A simple corollary of Proposition~\ref{prop:hermiticity_char} is that a linear map is Hermiticity-preserving if and only if it is the difference of two completely positive maps, and it is exactly this property that causes these maps to arise frequently in quantum information theory. For example, if we wish to measure the distance between two quantum channels, this often reduces to the problem of computing a norm (such as the diamond norm of Section~\ref{sec:cb_norm}) on the corresponding Hermiticity-preserving map.

\subsubsection*{Unital or trace-preserving maps -- Operators with identity partial trace}

Recall that a quantum channel $\Phi$ is not only completely positive, but also trace-preserving. It is thus useful to understand how trace-preservation (and the closely-related property of being unital, i.e., $\Phi(I) = I$) is reflected in the Choi--Jamio{\l}kowski isomorphism. Begin by taking the partial traces of the Choi matrix:
\begin{align*}
	\Tr_1(C_\Phi) & = \sum_{i=1}^{m} \Phi(\ketbra{i}{i}) = \Phi(I), \quad \text{ and} \\
	\Tr_2(C_\Phi) & = \sum_{i,j=1}^{m} \ketbra{i}{j}\Tr(\Phi(\ketbra{i}{j})) = \sum_{i,j=1}^{m} \ketbra{i}{j}\big(\Phi^\dagger(I)\big)^\dagger\ketbra{i}{j} = \overline{\Phi^\dagger(I)}.
\end{align*}
It follows that $\Phi$ is unital if and only if $\Tr_1(C_\Phi) = I$ and $\Phi$ is trace-preserving if and only if $\Tr_2(C_\Phi) = I$. In particular, $\Phi$ is a quantum channel if and only if $C_\Phi$ is positive semidefinite with $\Tr_2(C_\Phi) = I$.

When a completely positive map is both trace-preserving and unital, it is called \emph{bistochastic}. Bistochastic quantum channels are special in that they can only add mixedness to the states they act on (see \cite[Lemma 5]{KS06} and \cite{ZB04}), and they are characterized by having the special operator-sum decomposition $\Phi(X) = \sum_\ell \lambda_\ell U_\ell X U_\ell^\dagger$, where each $U_\ell$ is unitary and $\sum_\ell \lambda_\ell = 1$ \cite{Men08,MW09}. Note that in general, even though such channels are completely positive, we can't choose $\lambda_\ell \geq 0$ in this operator-sum representation \cite{LS93}.

Bistochastic quantum channels arise frequently in quantum information theory. For example, the ability to perform error correction for errors represented by these channels is much better-understood than in the general case \cite{CJK09,HKL04,JK11c,KLPL06,Kri03,KS06}, and bistochastic quantum channels arise frequently in capacity and additivity problems \cite{AHW00,Cor04,KR01,VV03}.

\subsection{Correspondences Related to Schmidt Number}\label{sec:choi_jamiolkowski_schmidt}

In this section we present the correspondences of separable states (and in more generality, states with Schmidt number no larger than $k$ for some natural number $k$) through the Choi--Jamio{\l}kowski isomorphism. We also present similar characterizations of block positive and $k$-block positive operators, as Proposition~\ref{prop:kPos} showed that such operators can be used to describe Schmidt number. While most of these correspondences are fairly well-known by now, they are much more recent than the correspondences introduced in the previous section. Although Theorem~\ref{thm:ksep_map_choi} is a known result in the $k = 1$ case \cite{CDKL01}, we believe that our generalization of it for arbitrary $k$ is new (albeit straightforward).

\subsubsection*{Positive maps -- Block positive operators}\index{block positive|(}\index{positive map|(}

Despite how simple Theorem~\ref{thm:choi_cp} makes it to determine whether or not a linear map is completely positive, determining whether or not a linear map is positive is a difficult problem. In fact, a linear map is positive if and only if its Choi matrix is block positive (recall from Section~\ref{sec:blockpos} that $X$ is block positive if $\bra{v}X\ket{v} \geq 0$ for all separable $\ket{v}$); a result that was originally proved in  \cite{Cho75,Jam74} (for another proof, see \cite{And04}).

To see this correspondence, suppose that $\Phi : M_m \rightarrow M_n$ is a positive linear map. Then let's consider what happens when we multiply the Choi matrix of $\Phi$ on the left and right by a separable state $\ket{a} \otimes \ket{b}$:
\begin{align*}
	(\bra{a} \otimes \bra{b})C_{\Phi}(\ket{a} \otimes \ket{b}) & = \frac{1}{m}\sum_{i,j=1}^{m}(\bra{a} \otimes \bra{b})\big(\ketbra{i}{j} \otimes \Phi(\ketbra{i}{j})\big)(\ket{a} \otimes \ket{b}) \\
	& = \frac{1}{m} \bra{b} \Phi\Big(\sum_{i,j=1}^{m} \braket{a}{i}\ket{i} \braket{j}{a}\bra{j}\Big) \ket{b} \\
	& = \frac{1}{m} \bra{b} \Phi\big( \overline{\ketbra{a}{a}} \big) \ket{b} \\
	& \geq 0,
\end{align*}
where the final inequality follows from the facts that $\overline{\ketbra{a}{a}}$ is positive semidefinite and $\Phi$ is a positive linear map. We have thus shown that if $\Phi$ is positive, then its Choi matrix is block positive. To see the converse, note that the string of equalities above shows that if $C_\Phi$ is block positive then $\Phi$ is positive on rank-$1$ positive semidefinite operators. By linearity it then follows that $\Phi$ is a positive map.

We have already seen this correspondence in a few examples. In Example~\ref{exam:transpose_not_cp} it was noted that the transpose map $T : M_2 \rightarrow M_2$ is positive but not completely positive, and in Example~\ref{exam:transpose_bp} it was noted that its Choi matrix is block positive but not positive semidefinite.

\subsubsection*{$k$-positive maps -- $k$-block positive operators}

Given that we have already seen that positive maps correspond to block positive operators and completely positive maps correspond to positive semidefinite operators, it is perhaps not surprising that $k$-positive maps correspond to $k$-block positive operators via the Choi--Jamio{\l}kowski isomorphism. This correspondence was used implicitly in \cite{C05,TH00} and proved explicitly in \cite{RA07,Sko08,SSZ09}, but we provide an elementary proof here for completeness.

We use the fact that $(id_k \otimes \Phi)$ is positive if and only if $(\bra{a} \otimes \bra{b})C_{id_k \otimes \Phi}(\ket{a} \otimes \ket{b}) \geq 0$ for all $\ket{a} \in \bb{C}^{k} \otimes \bb{C}^m, \ket{b} \in \bb{C}^{k} \otimes \bb{C}^n$, which was proved in the previous section.
	
Now write $\ket{a} = \sum_{i=1}^{k}\alpha_i\ket{i}\otimes\ket{a_i}$ and $\ket{b} = \sum_{i=1}^{k}\beta_i\ket{i}\otimes\ket{b_i}$. Then
\begin{align*}
	\ket{a} \otimes \ket{b} = \sum_{i,j=1}^{k}\alpha_i\beta_j\ket{i}\otimes\ket{a_i} \otimes \ket{j}\otimes\ket{b_j}.
\end{align*}
Then after simplification we have
\begin{align*}
	(\bra{a} \otimes \bra{b})C_{id_k \otimes \Phi}(\ket{a} \otimes \ket{b}) & = \left(\sum_{i=1}^{k}\alpha_i\beta_i \bra{a_i} \otimes \bra{b_i}\right) C_\Phi \left(\sum_{i=1}^{k}\alpha_i\beta_i \ket{a_i} \otimes \ket{b_i}\right).
\end{align*}
Since $\sum_{i=1}^{k}\alpha_i\beta_i \ket{a_i} \otimes \ket{b_i}$ is (up to scaling) an arbitrary state with Schmidt rank $\leq k$, it follows that $\Phi$ is $k$-positive if and only if $C_\Phi$ is $k$-block positive.

Note that this correspondence provides an immediate proof of Proposition~\ref{prop:bp_to_kbp}: $C_\Phi$ is $k$-block positive if and only if $\Phi$ is $k$-positive, if and only if $id_k \otimes \Phi$ is positive, if and only if $C_{id_k \otimes \Phi} = k\ketbra{\psi_+}{\psi_+} \otimes C_{\Phi}$ is block positive.

\subsubsection*{Superpositive maps -- Separable operators}\index{separable operator|(}\index{superpositive|(}

A linear map $\Phi : M_m \rightarrow M_n$ is called \emph{superpositive} \cite{And04} if it admits a Kraus representation
\begin{align*}
	\Phi(X) = \sum_{\ell} A_\ell X A_\ell^\dagger
\end{align*}
with ${\rm rank}(A_\ell) = 1$ for all $\ell$. The following characterization of superpositive maps was originally proved in \cite{HSR03,And04}:
\begin{thm}\label{thm:superpos_choi}
	Let $\Phi : M_m \rightarrow M_n$ be a linear map. The following are equivalent:
	\begin{enumerate}[(a)]
		\item $C_\Phi$ is separable;
		\item $(id_m \otimes \Phi)(\rho)$ is separable for all $\rho \in M_m \otimes M_n$; and
		\item $\Phi$ is superpositive.
	\end{enumerate}
\end{thm}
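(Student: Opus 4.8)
The plan is to prove the cyclic chain of implications (a) $\implies$ (c) $\implies$ (b) $\implies$ (a). The single observation that drives everything is that, under the vector-operator isomorphism, an elementary tensor matricizes to a rank-one operator: ${\rm mat}(\ket{p} \otimes \ket{q}) = \ket{q}\overline{\bra{p}}$. This is exactly the bridge between the rank-one products appearing in a separable decomposition of $C_\Phi$ and the rank-one Kraus operators required by superpositivity, and combined with Proposition~\ref{prop:choi_kraus_ops_general} it makes each implication essentially a bookkeeping exercise.

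For (a) $\implies$ (c), I would begin with $C_\Phi$ separable and positive semidefinite. Decomposing each positive tensor factor spectrally, I can write $C_\Phi = \sum_\ell \ketbra{p_\ell \otimes q_\ell}{p_\ell \otimes q_\ell}$ for (unnormalized) vectors $\ket{p_\ell} \in \bb{C}^m$ and $\ket{q_\ell} \in \bb{C}^n$. Reading this as the decomposition $C_\Phi = \sum_\ell c_\ell \ketbra{v_\ell}{w_\ell}$ of Proposition~\ref{prop:choi_kraus_ops_general} with $c_\ell = 1$ and $\ket{v_\ell} = \ket{w_\ell} = \ket{p_\ell} \otimes \ket{q_\ell}$, that proposition gives $\Phi(X) = \sum_\ell A_\ell X A_\ell^\dagger$ with $A_\ell := {\rm mat}(\ket{p_\ell} \otimes \ket{q_\ell}) = \ket{q_\ell}\overline{\bra{p_\ell}}$. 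Each $A_\ell$ is rank one, so $\Phi$ is superpositive.

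For (c) $\implies$ (b), it suffices by linearity to treat a single rank-one adjoint map ${\rm Ad}_A$ with $A = \ketbra{x}{y}$ and to check its action on a pure input $\ketbra{w}{w}$ with $\ket{w} = \sum_i \ket{i} \otimes \ket{w_i} \in \bb{C}^m \otimes \bb{C}^m$. A direct computation gives $(I_m \otimes A)\ket{w} = \big(\sum_i \braket{y}{w_i}\ket{i}\big) \otimes \ket{x}$, an elementary tensor, so that $(id_m \otimes {\rm Ad}_A)(\ketbra{w}{w})$ is a rank-one product operator and hence separable. Extending over a spectral decomposition of an arbitrary $\rho \geq 0$ and then summing over the rank-one Kraus operators of $\Phi$ shows that $(id_m \otimes \Phi)(\rho)$ is a sum of separable operators, and therefore separable.

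Finally, (b) $\implies$ (a) is immediate upon specializing to $\rho = \ketbra{\psi_+}{\psi_+}$: since $C_\Phi = m(id_m \otimes \Phi)(\ketbra{\psi_+}{\psi_+})$, separability of $(id_m \otimes \Phi)(\ketbra{\psi_+}{\psi_+})$ makes $C_\Phi$ a nonnegative scalar multiple of a separable operator, hence separable. I do not expect a genuine obstacle in this argument, as Proposition~\ref{prop:choi_kraus_ops_general} carries the analytic weight; the only points demanding care are verifying that the matricization/vectorization conventions really do yield an elementary tensor in (c) $\implies$ (b), and reducing the abstract separable form $\sum_\ell Y_\ell \otimes Z_\ell$ to a sum of rank-one products in (a) $\implies$ (c).
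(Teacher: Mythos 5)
Your proof is correct and follows essentially the same route as the paper: the same cyclic chain of implications, the same use of Proposition~\ref{prop:choi_kraus_ops_general} to convert a separable decomposition of $C_\Phi$ into rank-one Kraus operators, the same reduction of (c) $\implies$ (b) to a single rank-one adjoint map acting on a pure state, and the same specialization to $\ketbra{\psi_+}{\psi_+}$ for (b) $\implies$ (a). The only cosmetic difference is that you expand $\ket{w}$ over the standard basis and compute $(I_m \otimes A)\ket{w}$ directly, where the paper uses a Schmidt decomposition adapted to an orthonormal basis containing $\ket{b_1}$; both yield the same elementary tensor.
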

\begin{proof}
	The implication $(b) \implies (a)$ follows trivially by choosing $\rho = \ketbra{\psi_+}{\psi_+}$. To see that $(a) \implies (c)$, write
	\begin{align*}
		C_\Phi = \sum_\ell p_\ell \ketbra{v_\ell}{v_\ell} \otimes \ketbra{w_\ell}{w_\ell}.
	\end{align*}
	Then Proposition~\ref{prop:choi_kraus_ops_general} implies that we can write
	\begin{align*}
		\Phi(X) = \sum_\ell p_i {\rm mat}(\ket{v_\ell} \otimes \ket{w_\ell}) X {\rm mat}(\ket{v_\ell} \otimes \ket{w_\ell})^\dagger = \sum_\ell p_i \ketbra{w_\ell}{v_\ell} X \ketbra{v_\ell}{w_\ell}.
	\end{align*}
	Because ${\rm rank}(\ketbra{w_\ell}{v_\ell}) = 1$ for all $\ell$, $\Phi$ is superpositive.
	
	For the $(c) \implies (b)$ implication, assume without loss of generality that ${\rm rank}(\rho) = 1$ and $\Phi$ can be written as $\Phi(X) = AXA^\dagger$ with ${\rm rank}(A) = 1$ (the general result for arbitrary $\rho$ and arbitrary superpositive $\Phi$ will then follow by convexity of the cone of separable operators). Write $A = c\ketbra{x}{b_1}$ and $\rho = \ketbra{v}{v}$, where $\ket{v} = \sum_{i=1}^{\min\{m,n\}}d_i \ket{a_i} \otimes \ket{b_i}$, where $\{\ket{b_i}\}$ is an orthonormal set in $\mathbb{C}^m$ that extends $\ket{b_1}$. Then
	\begin{align*}
		(id_m \otimes \Phi)(\rho) = (I_m \otimes A)\ketbra{v}{v}(I_m \otimes A^\dagger) = c^2 d_1^2 \ketbra{a_1}{a_1} \otimes \ketbra{x}{x},
	\end{align*}
	which is separable.
\end{proof}

Condition (b) of Theorem~\ref{thm:superpos_choi} shows that superpositive quantum channels are exactly the quantum channels that destroy any entanglement between the system that the channel acts on and its environment. For this reason, superpositive quantum channels are often called \emph{entanglement-breaking channels}. Entanglement-breaking channels were introduced in \cite{Hol98,Sho02} and have been further explored in \cite{Hol08,KABLA08,Kin03,Rus03,Sho02}.

\subsubsection*{$k$-superpositive maps -- Operators with Schmidt number $\leq k$}\index{Schmidt number|(}

A natural generalization of superpositive maps are \emph{$k$-superpositive maps} \cite{SSZ09}, which are linear map $\Phi : M_m \rightarrow M_n$ that have a Kraus representation
\begin{align*}
	\Phi(X) = \sum_{\ell} A_\ell X A_\ell^\dagger
\end{align*}
with ${\rm rank}(A_\ell) \leq k$ for all $\ell$. As might be intuitively expected based on the characterization of superpositive maps, the following three conditions are equivalent:
\begin{enumerate}[(a)]
	\item $SN(C_\Phi) \leq k$;
	\item $SN\big((id_m \otimes \Phi)(\rho)\big) \leq k$ for all $\rho \in M_m \otimes M_n$; and
	\item $\Phi$ is $k$-superpositive.
\end{enumerate}

The above equivalences were originally demonstrated in \cite{CK06} and can be proved by a simple modification of the proof of Theorem~\ref{thm:superpos_choi}. Another proof of the equivalence of conditions (a) and (c) can be found in \cite{SSZ09}. In the case when a quantum channel is $k$-superpositive, it is sometimes called a \emph{$k$-partially entanglement breaking channel} \cite{CK06} due to condition (b) above -- such channels have been further studied in \cite{AKMS05,Hua06,JKPP11,Xthesis,Xha11}.\index{Schmidt number|)}\index{superpositive|)}

\subsubsection*{Separable maps -- Separable operators (via another tensor cut)}\index{separable map|(}

Recall from Section~\ref{sec:locc} that a completely positive map $\Phi : M_A \otimes M_B \rightarrow M_A \otimes M_B$ (where $M_A$ is a complex matrix space representing Alice's quantum system, and $M_B$ represents Bob's quantum system) is called \emph{separable} if it can be written in the following form:
\begin{align*}
	\Phi(X) = \sum_\ell (A_\ell \otimes B_\ell)X(A_\ell \otimes B_\ell)^\dagger \quad \forall \, X \in M_A \otimes M_B.
\end{align*}

The Choi matrix of a separable map $\Phi$ is the following operator in $M_{A^\prime} \otimes M_{B^\prime} \otimes M_A \otimes M_B$:
\begin{align*}
	C_\Phi & = \sum_{i,j,r,s=1}^{m} \ketbra{i}{j} \otimes \ketbra{r}{s} \otimes \Phi(\ketbra{i}{j} \otimes \ketbra{r}{s}) \\
	& = \sum_{i,j,r,s=1}^{m} \sum_\ell \ketbra{i}{j} \otimes \ketbra{r}{s} \otimes A_\ell \ketbra{i}{j} A_\ell^\dagger \otimes B_\ell \ketbra{r}{s} B_\ell^\dagger,
\end{align*}
which is separable across the $(A^\prime,A)-(B^\prime,B)$ cut. This is perhaps made clearer by swapping the order of $M_{B^\prime}$ and $M_A$, so that we interpret $C_\Phi$ as an operator in $M_{A^\prime} \otimes M_A \otimes M_{B^\prime} \otimes M_B$:
\begin{align*}
	C_\Phi & = \sum_\ell \left(\sum_{i,j=1}^{m} \ketbra{i}{j} \otimes A_\ell \ketbra{i}{j} A_\ell^\dagger \right) \otimes \left(\sum_{i,j=1}^{m} \ketbra{i}{j} \otimes B_\ell \ketbra{i}{j} B_\ell^\dagger \right).
\end{align*}
To see that this operator is separable across the $(A^\prime,A)-(B^\prime,B)$ cut, note that $\sum_{i,j=1}^{m} \ketbra{i}{j} \otimes A_\ell \ketbra{i}{j} A_\ell^\dagger$ is positive semidefinite for all $\ell$ because it is the Choi matrix of the completely positive map $X \mapsto A_\ell X A_\ell^\dagger$ (and similarly if we replace $A_\ell$ by $B_\ell$). In fact, it was proved in \cite{CDKL01} (see also \cite{SZG09}) that $C_\Phi$ is separable across the $(A^\prime,A)-(B^\prime,B)$ cut if and only if $\Phi$ is a separable map. Contrast this with the case of superpositive maps, which would have $C_\Phi$ be separable across the $(A^\prime,B^\prime)-(A,B)$ cut (see Figure~\ref{fig:sep_eb_channels}).
\begin{figure}[ht]
\begin{center}
\includegraphics[width=\textwidth]{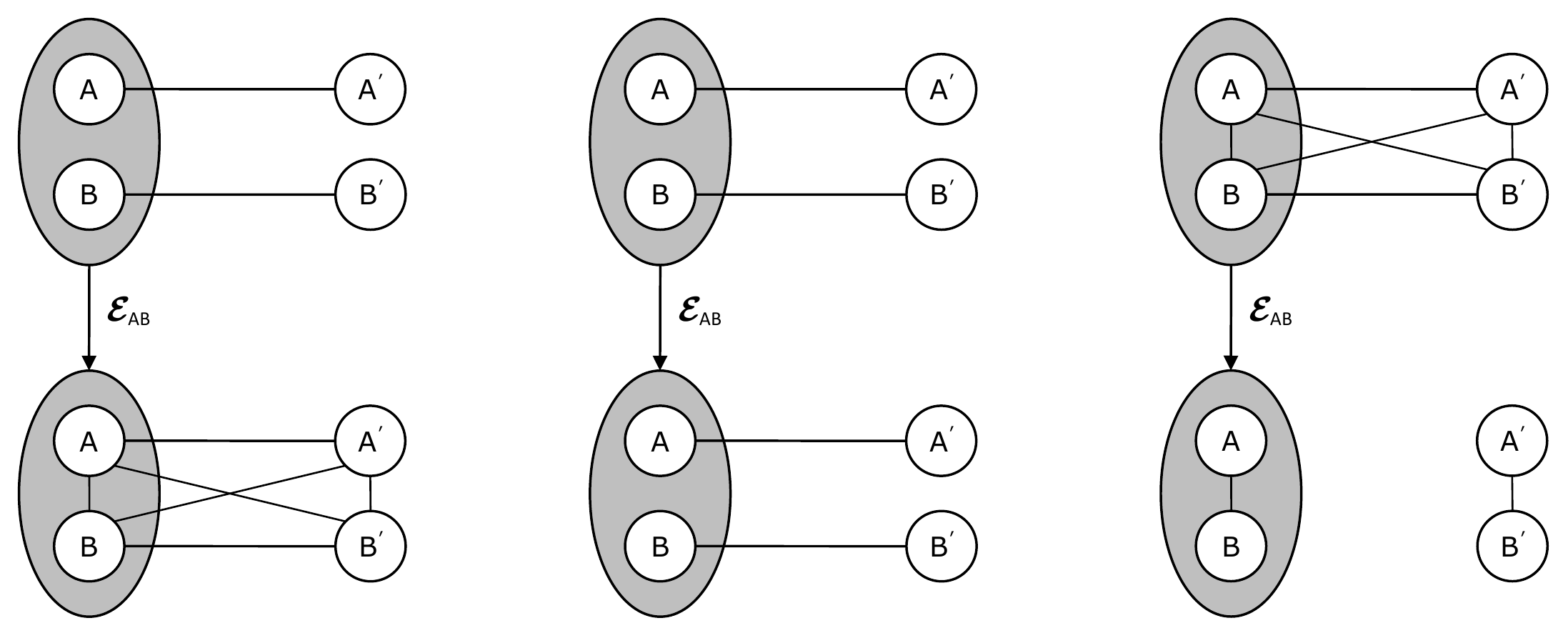}
\end{center}\vspace{-0.3in}
\caption[A comparison of separable and entanglement-breaking channels]{\hsp The action of general quantum channels (left), separable channels (center), and entanglement-breaking channels (right) acting on $M_A \otimes M_B$. Lines between subsystems represent entanglement. General channels can introduce entanglement between $M_A$, $M_B$, and the environment $M_{A^\prime} \otimes M_{B^\prime}$. Separable channels preserve separability between $M_A \otimes M_{A^\prime}$ and $M_B \otimes M_{B^\prime}$. Entanglement-breaking channels destroy entanglement between $M_A \otimes M_B$ and the environment $M_{A^\prime} \otimes M_{B^\prime}$.}\label{fig:sep_eb_channels}
\end{figure}

We now prove a generalization of this result for higher Schmidt number that makes use of the operator-Schmidt rank of the map's Kraus operators. The channels characterized by the following theorem are exactly the channels such that if Alice and Bob each have their own states that are potentially entangled with their own environments, but are not entangled with each other's systems, then the Schmidt number between Alice and Bob after the channel is applied is no greater than $k$.
\begin{thm}\label{thm:ksep_map_choi}
	Let $\Phi : M_A \otimes M_B \rightarrow M_A \otimes M_B$ be a completely positive linear map and let $k \in \mathbb{N}$. The following are equivalent:
	\begin{enumerate}[(a)]
		\item $SN(C_\Phi) \leq k$ across the $(A^\prime,A)-(B^\prime,B)$ cut;
		\item $SN\big((id_{A^\prime,B^\prime} \otimes \Phi_{A,B})(\rho)\big) \leq k$ across the $(A^\prime,A)-(B^\prime,B)$ cut whenever $\rho$ is separable across the $(A^\prime,A)-(B^\prime,B)$ cut; and
		\item there exist Kraus operators $\big\{K_\ell\big\} \subset M_A \otimes M_B$ each with operator-Schmidt rank $\leq k$ such that
		\begin{align*}
			\Phi(X) = \sum_\ell K_\ell X K_\ell^\dagger \quad \forall \, X \in M_A \otimes M_B.
		\end{align*}
	\end{enumerate}
\end{thm}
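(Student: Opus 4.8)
The plan is to follow the template of Theorem~\ref{thm:superpos_choi}, establishing the cycle of implications (b) $\implies$ (a) $\implies$ (c) $\implies$ (b). The conceptual heart of the argument is a single bookkeeping observation about the vector-operator isomorphism in the four-party setting: if $\ket{v} \in (\bb{C}^{A^\prime} \otimes \bb{C}^{B^\prime}) \otimes (\bb{C}^A \otimes \bb{C}^B)$ and we form the operator ${\rm mat}(\ket{v}) \in M_A \otimes M_B$, then the operator-Schmidt rank of ${\rm mat}(\ket{v})$ across the $A$-$B$ cut equals the ordinary Schmidt rank of $\ket{v}$ across the $(A^\prime,A)$-$(B^\prime,B)$ cut. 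This is exactly the operator-Schmidt decomposition of Section~\ref{sec:operator_schmidt} read after permuting the tensor factors from the ordering $A^\prime B^\prime A B$ into $A^\prime A B^\prime B$; I would record it as a preliminary lemma, after which each implication is short.

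For (b) $\implies$ (a) I would apply the hypothesis to $\rho = \ketbra{\psi_+}{\psi_+}$. The key point is that the maximally entangled state $\ket{\psi_+}$ on $(A^\prime B^\prime)$-$(AB)$ factors as a tensor product of the maximally entangled states on $A^\prime A$ and on $B^\prime B$ once the factors are regrouped, so $\ketbra{\psi_+}{\psi_+}$ is a pure product state, and in particular separable, across the $(A^\prime,A)$-$(B^\prime,B)$ cut. Hence (b) applies and gives $SN(C_\Phi) \leq k$ across that cut, since $C_\Phi$ and $(id \otimes \Phi)(\ketbra{\psi_+}{\psi_+})$ differ only by a positive scalar. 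For (a) $\implies$ (c): since $\Phi$ is completely positive we have $C_\Phi \geq 0$, and $SN(C_\Phi) \leq k$ lets me write $C_\Phi = \sum_\ell \ketbra{v_\ell}{v_\ell}$ with each $\ket{v_\ell}$ of Schmidt rank $\leq k$ across the $(A^\prime,A)$-$(B^\prime,B)$ cut, directly from the definition of Schmidt number. Proposition~\ref{prop:choi_kraus_ops_general} then furnishes the Kraus representation $\Phi(X) = \sum_\ell K_\ell X K_\ell^\dagger$ with $K_\ell = {\rm mat}(\ket{v_\ell})$, and the preliminary lemma says each $K_\ell$ has operator-Schmidt rank $\leq k$ across the $A$-$B$ cut, which is (c).

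Finally, for (c) $\implies$ (b), because the set of operators with Schmidt number $\leq k$ is closed under convex combinations, it suffices to take $\rho = \ketbra{a}{a}_{A^\prime A} \otimes \ketbra{b}{b}_{B^\prime B}$ a pure product state across the cut. Writing each Kraus operator in an operator-Schmidt decomposition $K_\ell = \sum_{i=1}^k P_{\ell,i} \otimes Q_{\ell,i}$ (at most $k$ terms, with $P_{\ell,i} \in M_A$ and $Q_{\ell,i} \in M_B$), I would compute
\begin{align*}
	(I_{A^\prime B^\prime} \otimes K_\ell)\big(\ket{a}_{A^\prime A} \otimes \ket{b}_{B^\prime B}\big) = \sum_{i=1}^k \big((I_{A^\prime} \otimes P_{\ell,i})\ket{a}\big) \otimes \big((I_{B^\prime} \otimes Q_{\ell,i})\ket{b}\big),
\end{align*}
which is a sum of $k$ product vectors across the $(A^\prime,A)$-$(B^\prime,B)$ cut and hence has Schmidt rank $\leq k$. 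Summing the resulting rank-one projections over $\ell$ exhibits $(id \otimes \Phi)(\rho)$ as a state of Schmidt number $\leq k$, closing the cycle.

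The main obstacle I expect is not any individual implication but pinning down the preliminary lemma cleanly: one must keep careful track of which copy of each space is the ``input'' versus the ``output'' factor and verify that the permutation sending $A^\prime B^\prime A B \mapsto A^\prime A B^\prime B$ intertwines the $A$-$B$ operator-Schmidt structure of ${\rm mat}(\ket{v})$ with the $(A^\prime,A)$-$(B^\prime,B)$ Schmidt structure of $\ket{v}$. Once that identification is stated correctly, each of the three implications reduces to a one- or two-line consequence of results already established.
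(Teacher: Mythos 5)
Your proposal is correct and follows essentially the same route as the paper: the same cycle (b) $\implies$ (a) $\implies$ (c) $\implies$ (b), the same choice of the product of maximally entangled states for (b) $\implies$ (a), the same use of Proposition~\ref{prop:choi_kraus_ops_general} on a rank-one decomposition of $C_\Phi$ for (a) $\implies$ (c), and the same reduction to pure product inputs plus the operator-Schmidt decomposition of the Kraus operators for (c) $\implies$ (b). The only difference is presentational: you isolate as an explicit preliminary lemma the fact that the regrouping $A^\prime B^\prime A B \mapsto A^\prime A B^\prime B$ intertwines the Schmidt rank of $\ket{v}$ with the operator-Schmidt rank of ${\rm mat}(\ket{v})$, which the paper handles inline.
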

\begin{proof}
	The implication $(b) \implies (a)$ is trivially true by choosing
	\begin{align*}
		\rho = \ketbra{\psi_{+}}{\psi_{+}}_{A^\prime,A} \otimes \ketbra{\psi_{+}}{\psi_{+}}_{B^\prime,B}.
	\end{align*}
	To see that $(c) \implies (b)$, assume without loss of generality that $\rho$ is a pure state $\ketbra{v}{v}_{A^\prime,A} \otimes \ketbra{w}{w}_{B^\prime,B}$ (the general result will then follow easily from convexity). Write each $A_\ell$ in its operator-Schmidt decomposition (where we absorb the operator-Schmidt coefficients into the operators $A_{\ell,h}$ themselves):
	\begin{align*}
		K_\ell = \sum_{h=1}^k A_{\ell,h} \otimes B_{\ell,h}.
	\end{align*}
	Then writing $(id_{A^\prime,B^\prime} \otimes \Phi_{A,B})(\rho)$ as an operator in $M_{A^\prime} \otimes M_{A} \otimes M_{B^\prime} \otimes M_{B}$ gives
	\begin{align}\label{eq:sep_chan_gen1}
		\sum_\ell \sum_{g,h=1}^{k} \left((I_{A^\prime} \otimes A_{\ell,g})\ketbra{v}{v}(I_{A^\prime} \otimes A_{\ell,h}^\dagger) \right) \otimes \left((I_{B^\prime} \otimes B_{\ell,g})\ketbra{w}{w}(I_{B^\prime} \otimes B_{\ell,h}^\dagger) \right).
	\end{align}
	Now define
	\begin{align*}
		c_{\ell,g}\ket{v_{\ell,g}} := (I_{A^\prime} \otimes A_{\ell,g})\ket{v} \quad \text{ and } \quad d_{\ell,g}\ket{w_{\ell,g}} := (I_{B^\prime} \otimes B_{\ell,g})\ket{w}.	
	\end{align*}
	Plugging the above definitions into Equation~\ref{eq:sep_chan_gen1} gives
	\begin{align*}
		(id_{A^\prime,B^\prime} \otimes \Phi_{A,B})(\rho) & = \sum_\ell \sum_{g,h=1}^{k} c_{\ell,g}c_{\ell,h}d_{\ell,g}d_{\ell,h}\ketbra{v_{\ell,g}}{v_{\ell,h}} \otimes \ketbra{w_{\ell,g}}{w_{\ell,h}} \\
		& = \sum_\ell \left(\sum_{g=1}^k c_{\ell,g}d_{\ell,g} \ket{v_{\ell,g}} \otimes \ket{w_{\ell,g}} \right)\left(\sum_{h=1}^k c_{\ell,h}d_{\ell,h} \bra{v_{\ell,h}} \otimes \bra{w_{\ell,h}} \right),
	\end{align*}
	which has Schmidt number no larger than $k$.
	
	To see that $(a) \implies (c)$, write $C_\Phi$ as a sum of rank-one operators in $(M_{A^\prime} \otimes M_A) \otimes (M_{B^\prime} \otimes M_B)$:
	\begin{align*}
		C_\Phi = \sum_\ell \left(\sum_{i=1}^k c_{\ell,i}\ket{v_{\ell,i}} \otimes \ket{w_{\ell,i}}\right)\left(\sum_{i=1}^k c_{\ell,i}\bra{v_{\ell,i}} \otimes \bra{w_{\ell,i}}\right).
	\end{align*}
	Proposition~\ref{prop:choi_kraus_ops_general} then says that $\Phi(X) = \sum_\ell K_\ell X K_\ell^\dagger$, where
	\begin{align*}
		K_\ell := \sum_{i=1}^k c_{\ell,i}{\rm mat}(\ket{v_{\ell,i}}) \otimes {\rm mat}(\ket{w_{\ell,i}}),
	\end{align*}
	which clearly has operator-Schmidt rank no larger than $k$.
\end{proof}

\subsection{Other Correspondences Related to Entanglement}\label{sec:choi_jamiolkowski_separable}

\subsubsection*{Completely co-positive maps -- Positive partial transpose operators}

A map $\Phi : M_m \rightarrow M_n$ is called \emph{completely co-positive} if $T \circ \Phi$ is completely positive. It follows easily from our results on completely positive maps that $\Phi$ is completely co-positive if and only if $C_{T \circ \Phi} = (id_m \otimes T)(C_\Phi) \geq 0$. In other words, $\Phi$ is completely co-positive if and only if its Choi matrix has positive partial transpose.

\subsubsection*{Binding entanglement maps -- Bound entangled operators}

Recall from Section~\ref{sec:bound_entangle} that a bound entangled state is one that is entangled, yet contains no ``useful'' entanglement. An \emph{entanglement binding map} \cite{HHH00} is a completely positive map $\Phi : M_m \rightarrow M_n$ such that $(id_m \otimes \Phi)(\rho)$ is either bound entangled or separable for any quantum state $\rho \in M_m \otimes M_m$. It turns out that $\Phi$ is a binding entanglement map if and only if its Choi matrix $C_\Phi$ is bound entangled or separable. Via the result of the previous section, we see that a map binds entanglement if it is both completely positive and completely co-positive. The question of whether or not there are other binding entanglement maps is equivalent to the question of whether or not there exist NPPT bound entangled states.

\subsubsection*{Anti-degradable maps -- shareable operators}

A completely positive map $\Phi : M_m \rightarrow M_n$ is called \emph{anti-degradable} \cite{CRS08,WPG07} if there exists a quantum channel $\Psi$ such that $\Psi \circ \Phi^{C} = \Phi$, where we recall that $\Phi^{C}$ is the complementary map of $\Phi$. Anti-degradable maps have gotten attention in quantum information theory recently because they are one of only two families of maps (the other being binding entanglement maps) that are known to have zero quantum capacity \cite{SS12}. For convenience, we will denote the set of anti-degradable maps by $\cl{AD}(M_m,M_n)$, or simply $\cl{AD}$ if the dimensions of the input and output spaces are unimportant or clear from context.

It was shown in \cite{ML09} that a map $\Phi$ is anti-degradable if and only if its Choi matrix $C_\Phi$ is shareable (strictly speaking, this equivalence was only shown for quantum channels $\Phi$, but the same proof applies to this slightly more general case). For clarity, it is worth recalling that we have made the convention that $\Phi$ acts on the second subsystem of the Choi matrix and that it is also the second subsystem of a shareable operator that is shared.

With this correspondence in mind, several properties of anti-degradable maps immediately follow from the corresponding properties of shareable operators. First, $\cl{AD}$ is convex and $\Psi \circ \Phi \in \cl{AD}$ whenever $\Phi \in \cl{AD}$ and $\Psi$ is a quantum channel -- two facts that were proved directly in \cite{CRS08}. Because the shareable operators are a strict superset of the separable operators, it also follows that superpositive maps are anti-degradable, which was also proved directly in \cite{CRS08}.

We similarly know that the set of anti-degradable maps satisfies $\Phi \circ \Psi \in \cl{AD}$ whenever $\Phi \in \cl{AD}$ and $\Psi$ is completely positive -- a property that we call right CP-invariance, which we investigate in Section~\ref{sec:cp_invariant}. To see this, first use Lemma~\ref{lem:comp_adjoint}: if $\Phi = \Psi \circ \Phi^{C}$ then $\Phi \circ {\rm Ad}_B = \Psi \circ \Phi^{C} \circ {\rm Ad}_B = \Psi \circ (\Phi \circ {\rm Ad}_B)^C$, so $\Phi \circ {\rm Ad}_B$ is anti-degradable whenever $\Phi$ is anti-degradable. Convexity of $\cl{AD}$ then gives the result.

\subsubsection*{$s$-extendible maps -- $s$-shareable operators}

A completely positive map $\Phi : M_m \rightarrow M_n$ is called \emph{$s$-extendible} \cite{PBHS11} (or sometimes \emph{local $s$-broadcasting} \cite{LRKKB11}) if there exists a completely positive map $\Psi : M_m \rightarrow M_n^{\otimes s}$ with the property that $\Phi = \Tr_{\overline{i}} \circ \Psi$ for all $1 \leq i \leq s$, where $\Tr_{\overline{i}}$ denotes the partial trace over all subsystems except for the $i$-th. The map $\Psi$ is said to be $s$-broadcasting, and maps of this type generalize the notion of quantum cloning \cite{BCFJS96}. We will denote the cone of $s$-extendible maps by $\cl{B}_s(M_m,M_n)$, or simply $\cl{B}_s$.

It was shown in \cite{LRKKB11} that a map $\Phi$ is $s$-extendible if and only if its Choi matrix $C_\Phi$ is $s$-shareable. Once again, we note that it is the subsystem of $C_\Phi$ that $\Phi$ acts on that can be shared with $s$ parties. Remarkably, this leads immediately to the following result.
\begin{thm}\label{thm:antidegrad_2broadcast}
	Let $\Phi$ be completely positive. Then $\Phi$ is anti-degradable if and only if it is $2$-extendible.
\end{thm}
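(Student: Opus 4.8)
The plan is to chain together the two Choi--Jamio{\l}kowski correspondences established immediately above and to notice that they share a common middle object. First I would invoke the correspondence of \cite{ML09} recalled in the ``anti-degradable maps -- shareable operators'' discussion: a completely positive $\Phi$ is anti-degradable if and only if its Choi matrix $C_\Phi$ is shareable, with the convention that the shared subsystem of $C_\Phi$ is the one on which $\Phi$ acts. Then I would invoke the correspondence of \cite{LRKKB11} recalled in the ``$s$-extendible maps -- $s$-shareable operators'' discussion: $\Phi$ is $s$-extendible if and only if $C_\Phi$ is $s$-shareable, again with the shared subsystem being the output subsystem of $\Phi$.

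The bridge between these two statements is purely definitional. As observed in Section~\ref{sec:shareable}, ``shareable'' is precisely the $s = 2$ case of ``$s$-shareable'': an operator $X \geq 0$ is shareable exactly when there is some $0 \leq \tilde{X}$ with $\Tr_2(\tilde{X}) = \Tr_3(\tilde{X}) = X$, and this is literally the pair of conditions $\Tr_{\overline{1},\overline{3}}(\tilde{X}) = \Tr_{\overline{1},\overline{2}}(\tilde{X}) = X$ that define $2$-shareability. Hence $C_\Phi$ is shareable if and only if it is $2$-shareable, and the two correspondences can be composed:
\begin{align*}
	\Phi \text{ anti-degradable} \iff C_\Phi \text{ shareable} \iff C_\Phi \ \text{$2$-shareable} \iff \Phi \ \text{$2$-extendible}.
\end{align*}

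Because the argument is nothing more than a concatenation of already-proved equivalences together with a renaming, there is no real computational obstacle. The one place I would spend a sentence checking is that the two cited correspondences adopt the same convention for which subsystem of $C_\Phi$ is being shared -- namely the output subsystem that $\Phi$ maps into -- since otherwise the two ``shareability'' hypotheses could a priori refer to shares taken across different cuts. Once this compatibility is confirmed (and both the anti-degradable and $s$-extendible discussions above explicitly fix this same convention), the two middle conditions become identical assertions about the identical operator $C_\Phi$, and the chain of biconditionals closes to give the theorem.
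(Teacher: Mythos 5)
Your proposal is correct and is exactly the argument the paper itself gives first: the proof of Theorem~\ref{thm:antidegrad_2broadcast} opens by noting that $\Phi$ is anti-degradable iff $C_\Phi$ is shareable (citing \cite{ML09}) iff $\Phi$ is $2$-extendible (citing \cite{LRKKB11}), with the same check that both correspondences share the output subsystem of $C_\Phi$. The paper then supplements this with a self-contained direct proof via Stinespring representations ``for completeness,'' but that is an addition rather than a correction to the argument you give.
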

\begin{proof}
	A simple proof of this statement follows by looking at these sets of maps through the Choi--Jamio{\l}kowski isomorphism: $\Phi$ is anti-degradable if and only if $C_\Phi$ is shareable \cite{ML09} if and only if $\Phi$ is $2$-extendible \cite{LRKKB11}. However, we also present a direct proof of the result for completeness.
	
	We begin with the ``only if'' direction. If $\Phi$ is anti-degradable then write it in its Stinespring form $\Phi = \Tr_1 \circ {\rm Ad}_A$. Let $\Psi$ be a quantum channel such that $\Psi \circ \Phi^C = \Phi$. Then
	\begin{align*}
		\Phi = \Tr_1 \circ {\rm Ad}_A = \Psi \circ \Tr_2 \circ {\rm Ad}_A = \Psi \circ \Phi^C.
	\end{align*}
	Thus if we define the completely positive map $\tilde{\Phi} = (\Psi \otimes id) \circ {\rm Ad}_A$ then we have $\Tr_1 \circ \tilde{\Phi} = \Tr_2 \circ \tilde{\Phi} = \Phi$, so $\Phi$ is $2$-extendible.
	
	For the ``if'' direction of the proof, let $\tilde{\Phi} : M_m \rightarrow M_n \otimes M_n$ be a completely positive map such that $\Tr_1 \circ \tilde{\Phi} = \Tr_2 \circ \tilde{\Phi} = \Phi$. Write $\tilde{\Phi}$ in its Stinespring form $\tilde{\Phi} = \Tr_{3} \circ {\rm Ad}_A$ (where we consider the third subsystem as the environment and the first two subsystems as the output of $\tilde{\Phi}$). Then $\Phi$ has Stinespring representations $\Phi = \Tr_{\overline{1}} \circ {\rm Ad}_A = \Tr_{\overline{2}} \circ {\rm Ad}_A$. We thus have $\Phi^C = \Tr_2 \circ {\rm Ad}_A$, so $\Tr_3 \circ \Phi^C = \Tr_{\overline{1}} \circ {\rm Ad}_A = \Phi$, which shows that $\Phi$ is anti-degradable.
\end{proof}

All of the properties of anti-degradable maps that were discussed in the previous section apply to $s$-extendible maps as well. In particular, $\cl{B}_s$ is a convex right CP-invariant cone. We also have the family of inclusions $\cl{AD} \supseteq \cl{B}_3 \supseteq \cdots \supseteq \cl{B}_k \supseteq \cdots$, and the intersection of all these cones is the cone of superpositive maps -- see Figure~\ref{fig:antidegradable}.
\begin{figure}[ht]
\begin{center}
\includegraphics[width=\textwidth]{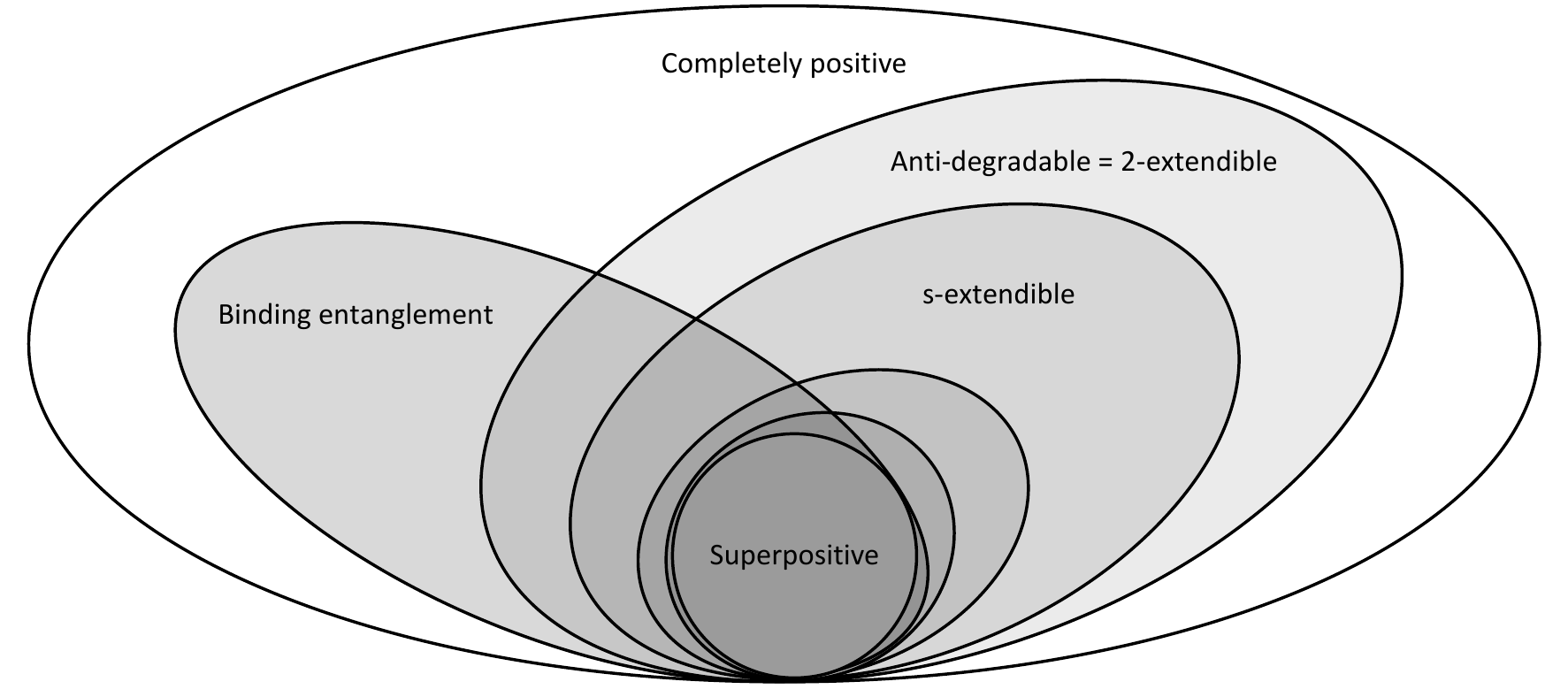}
\end{center}\vspace{-0.25in}
\caption[The cones of binding entanglement and anti-degradable maps]{\hsp The cones of binding entanglement, anti-degradable, and $s$-extendible maps, relative to the cones of completely positive and superpositive maps. Channels in any of the shaded cones have zero quantum capacity.}\label{fig:antidegradable}
\end{figure}

\chapter{Cones, Norms, and Linear Preservers}\label{ch:geometry}

Many of the objects introduced in Chapter~\ref{ch:basics} remain unaffected by scaling. For example, if $\Phi$ is a (completely) positive map, then so is $\lambda \Phi$ for any $\lambda \geq 0$. In other words, the set of completely positive maps is a \emph{cone}. Similarly for the sets of superpositive maps, separable maps, anti-degradable maps, $s$-extendible maps, and completely co-positive maps. This chapter begins with a review of general facts about cones of operators. We then investigate what more can be said when we ask that the cones in question have additional properties, such as being invariant under composition with completely positive maps.

We then shift focus and introduce various norms of operators and linear maps. We present the basics of unitarily-invariant norms, dual norms, and completely bounded norms. We also introduce the realignment criterion, which is our first separability criterion based on norms. Here we see our first glimpse of a theme that is common throughout the rest of this work: that norms can be just as useful as positivity properties for solving questions related to separability.

The final portion of the chapter is devoted to discussing preserver problems. That is, given a specific set (such as the set of separable states) or norm, what operators leave that set or norm unchanged? We answer this question in the case of (even multipartite) separable states, discuss some consequences of our results, and lay the groundwork that allows us to derive the isometry groups for the norms to be introduced in Chapter~\ref{chap:Sk_norms}.

\section{Cones of Linear Maps and Operators}\label{sec:cones}

A \emph{cone} $\cl{C} \subseteq M_n$ is a set of Hermitian operators with the property that if $X \in \cl{C}$ then $\lambda X \in \cl{C}$ for all $0 \leq \lambda \in \bb{R}$. Some cones that we have already seen are the sets of $k$-block positive operators and the sets of $s$-shareable operators. Similarly, a \emph{cone} $\cl{C}$ of superoperators is a set of Hermiticity-preserving maps with the property that if $\Phi \in \cl{C}$ then $\lambda \Phi \in \cl{C}$ for all $0 \leq \lambda \in \bb{R}$.

For convenience, we define some notation for common manipulations of the cone of superoperators $\cl{C}$:
\begin{align*}
	C_\cl{C} := \big\{ C_\Phi : \Phi \in \cl{C} \big\} \quad \text{ and } \quad \cl{C}^{\dagger} := \big\{ \Phi^\dagger : \Phi \in \cl{C} \big\}.
\end{align*}
It follows from the equivalence of conditions (a) and (b) of Proposition~\ref{prop:hermiticity_char} that $\cl{C}$ is a (convex) cone of superoperators if and only if $C_\cl{C}$ is a (convex) cone of operators.

\subsection{Dual Cones}\label{sec:dual_cones}

The \emph{dual cone} $\cl{C}^{\circ}$ of a cone $\cl{C} \subseteq M_n$ is defined via the Hilbert--Schmidt inner product as
\begin{align*}
	\cl{C}^{\circ} := \{ Y \in M_n : \Tr(XY) \geq 0 \ \text{ for all } X \in \cl{C} \}.
\end{align*}
Similarly, the dual cone $\cl{C}^{\circ}$ of a cone $\cl{C}$ of superoperators is defined via the Choi--Jamio{\l}kowski isomorphism as $\cl{C}^{\circ} := \{ \Psi : M_n \rightarrow M_n : \Tr(C_{\Phi}C_{\Psi}) \geq 0 \ \text{ for all } \Phi \in \cl{C} \}$. We note that for all cones $\cl{C} \subseteq M_n$, we have $\cl{C}^{\circ \circ} = \overline{{\rm hull}(\cl{C})}$ -- the closure of the convex hull of $\cl{C}$ (i.e., the smallest closed convex cone containing $\cl{C}$). This fact is well-known in convex analysis and follows from \cite[Theorem 3.4.3]{GY02} or \cite[Theorem 14.1]{R97}. In particular, if $\cl{C}$ is a closed convex cone, then $\cl{C}^{\circ\circ} = \cl{C}$.

Some other useful and easily-verified facts about dual cones are:
\begin{itemize}
	\item The cone of positive semidefinite operators is self-dual. That is, $(M_n^+)^\circ = M_n^+$.
	\item If $\cl{C},\cl{D}$ are two cones such that $\cl{C} \subseteq \cl{D}$ then $\cl{D}^\circ \subseteq \cl{C}^\circ$.
\end{itemize}

Because the cone of $k$-block positive operators contains $(M_m \otimes M_n)^+$, the above properties tell us that its dual cone must be contained within $(M_m \otimes M_n)^+$. Indeed, the dual of the cone of $k$-block positive operators is exactly the set of (unnormalized) states that have Schmidt number no larger than $k$ \cite{SSZ09} (and vice-versa) -- see Table~\ref{table:dual_cones} and refer back to Figure~\ref{fig:schmidt}.

\begin{table}[ht]\hsp
	\begin{center}
  \begin{tabular}{ c | c | c | c }
  	\noalign{\hrule height 0.1em}
  	\multicolumn{2}{c|}{Operators $X \in M_m \otimes M_n$} & \multicolumn{2}{c}{Superoperators $\Phi : M_m \rightarrow M_n$} \\
  	\hline
    Cone & Dual cone & Cone & Dual cone \\
  	\noalign{\hrule height 0.1em}
    block positive & separable & positive & superpositive \\ \hline
    $k$-block positive & Schmidt number $\leq k$ & $k$-positive & $k$-superpositive \\ \hline
    \multicolumn{2}{c|}{positive semidefinite} & \multicolumn{2}{c}{completely positive} \\ \noalign{\hrule height 0.1em}
  \end{tabular}
	\end{center}
\caption[Cones and dual cones of operators and linear maps]{\hsp Some cones of operators and linear maps and their associated dual cones. The cone of positive semidefinite operators (and thus the cone of completely positive maps) is its own dual cone. The chain of inclusions of these cones follows a U-shape in the table, with the block positive operators (positive maps) being the largest cone and the separable operators (superpositive maps) being the smallest cone.}\label{table:dual_cones}
\dsp\end{table}

Similarly, because the cone of operators with an $s$-bosonic symmetric extension is contained within $(M_m \otimes M_n)^+$, its dual cone contains $(M_m \otimes M_n)^+$. A straightforward calculation reveals that if $\rho$ has an $s$-BSE $\tilde{\rho}$ then
\begin{align*}
	\Tr(\rho X) & = \Tr\big(\tilde{\rho} (X \otimes I_n^{\otimes (s-1)})\big) \\
	& = \Tr\big(\tilde{\rho} (I_m \otimes P_{\cl{S}})(X \otimes I_n^{\otimes (s-1)})(I_m \otimes P_{\cl{S}})\big),
\end{align*}
where $P_{\cl{S}}$ is the projection onto the symmetric subspace of $(\bb{C}^n)^{\otimes s}$. Thus the dual of the cone of operators with an $s$-BSE is the cone
\begin{align}\label{eq:s_bse_dual}
	\big\{ X \in M_m \otimes M_n : (I_m \otimes P_{\cl{S}})(X \otimes I_n^{\otimes (s-1)})(I_m \otimes P_{\cl{S}}) \geq 0 \big\}.
\end{align}

Because the set of operators with $s$-BSE approaches the set of separable operators from the outside, the dual cones~\eqref{eq:s_bse_dual} approach the set of block positive operators from the inside (see Figure~\ref{fig:shareable}).

\begin{figure}[ht]
\begin{center}
\includegraphics[width=\textwidth]{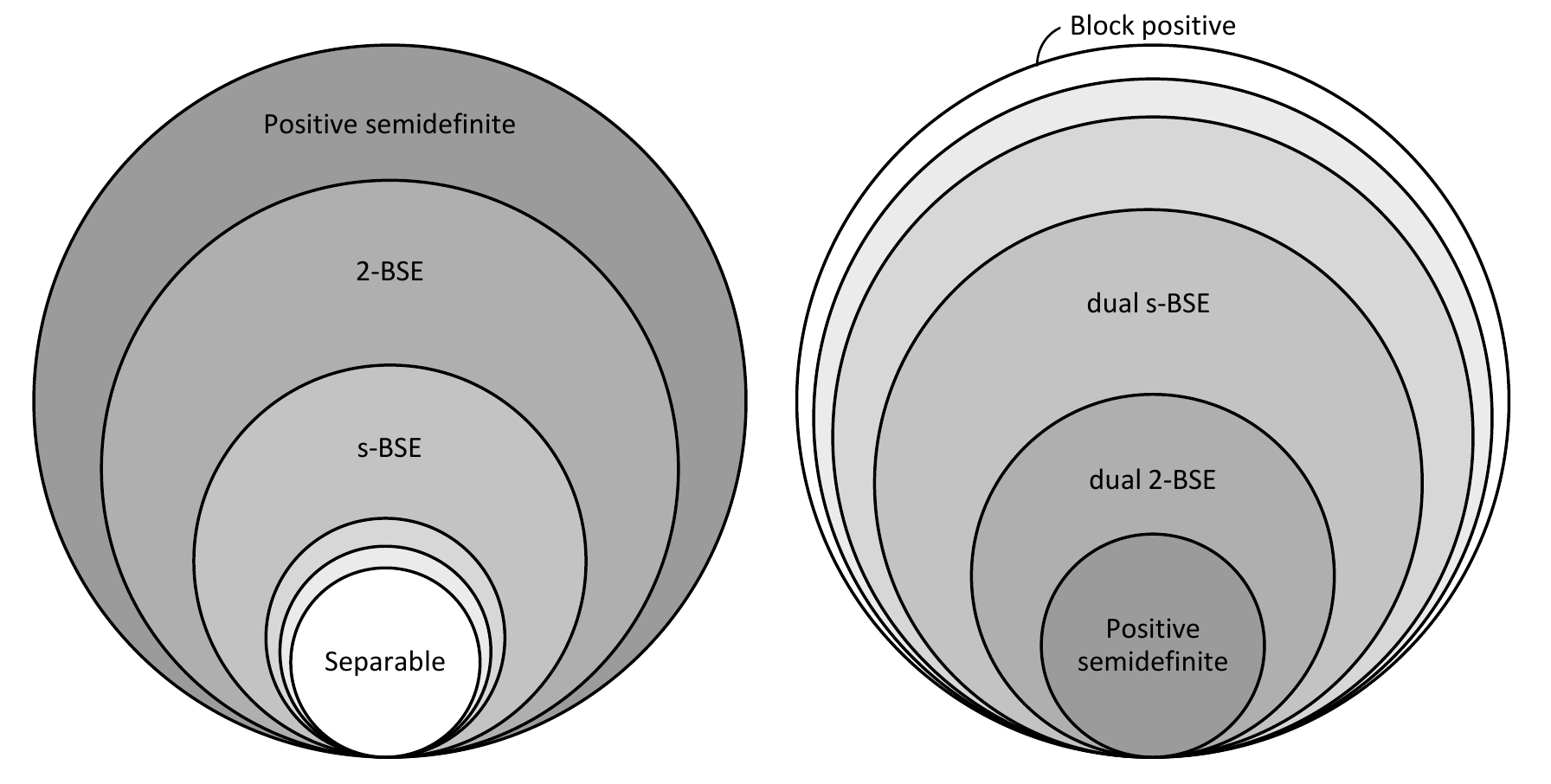}
\end{center}\vspace{-0.3in}
\caption[The cones and dual cones of operators with $s$-symmetric extension]{\hsp A rough depiction of the cones and dual cones of operators with $s$-bosonic symmetric extension relative to the cones of separable, positive semidefinite, and block positive operators. The positive semidefinite cone is self-dual and equals the cone of operators with $1$-BSE. The separable cone is the intersection over all $s \geq 1$ of the cones of operators with $s$-BSE, and the block positive cone is the closure of the union over all $s \geq 1$ of the duals of the cones of $s$-shareable operators.}\label{fig:shareable}
\end{figure}

We close this section with a simple lemma that demonstrates how $\cl{C}^\circ$ behaves with $\cl{C}^\dagger$.
\begin{lemma}\label{lem:sup_cone_duals}
	Let $\cl{C}$ be a cone of superoperators. Then $(\cl{C}^\dagger)^\circ = (\cl{C}^\circ)^\dagger$.
\end{lemma}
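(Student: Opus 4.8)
The plan is to strip both cones down to statements about traces of products of Choi matrices and then exploit the symmetry between a map and its dual at the level of the Hilbert--Schmidt inner product. Unwinding the definitions, $\Psi \in (\cl{C}^\dagger)^\circ$ says exactly that $\Tr(C_{\Phi^\dagger}C_\Psi) \geq 0$ for every $\Phi \in \cl{C}$, whereas $\Psi \in (\cl{C}^\circ)^\dagger$ says that $\Psi = \Xi^\dagger$ for some $\Xi$ with $\Tr(C_\Phi C_\Xi) \geq 0$ for every $\Phi \in \cl{C}$. Both conditions therefore hinge on a single identity, valid for all Hermiticity-preserving $\Phi,\Psi : M_n \rightarrow M_n$:
\[
	\Tr(C_\Phi C_\Psi) = \Tr(C_{\Phi^\dagger} C_{\Psi^\dagger}).
\]

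First I would establish this identity. Since every map in a cone of superoperators is Hermiticity-preserving, Proposition~\ref{prop:hermiticity_char} guarantees that the relevant Choi matrices are Hermitian, so $\Tr(C_\Phi C_\Psi) = \Tr(C_\Phi^\dagger C_\Psi) = \langle \Phi | \Psi \rangle$. Property (b) of Proposition~\ref{prop:choi_inner_basic} gives $\langle \Phi | \Psi \rangle = \langle \Psi^\dagger | \Phi^\dagger \rangle$, and reading this back through the definition of the inner product (using Hermiticity of $C_{\Psi^\dagger}$) together with cyclicity of the trace yields the claimed identity. Alternatively, it falls out directly from Proposition~\ref{prop:left_right_choi}: writing $C_{\Phi^\dagger} = S\overline{C_\Phi}S^\dagger$ and $C_{\Psi^\dagger} = S\overline{C_\Psi}S^\dagger$ and using unitarity of $S$ gives $\Tr(C_{\Phi^\dagger}C_{\Psi^\dagger}) = \Tr(\overline{C_\Phi C_\Psi}) = \overline{\Tr(C_\Phi C_\Psi)}$, which equals $\Tr(C_\Phi C_\Psi)$ because the trace of a product of two Hermitian matrices is real.

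With the identity in hand, both inclusions are immediate once I invoke that the dual-map operation is an involution, $(\Psi^\dagger)^\dagger = \Psi$. For $(\cl{C}^\circ)^\dagger \subseteq (\cl{C}^\dagger)^\circ$: if $\Psi \in \cl{C}^\circ$, then for each $\Phi \in \cl{C}$ the identity gives $\Tr(C_{\Phi^\dagger}C_{\Psi^\dagger}) = \Tr(C_\Phi C_\Psi) \geq 0$, so $\Psi^\dagger \in (\cl{C}^\dagger)^\circ$. For the reverse inclusion, given $\Xi \in (\cl{C}^\dagger)^\circ$ I would take the candidate preimage $\Xi^\dagger$; applying the identity to the pair $(\Phi,\Xi^\dagger)$ gives $\Tr(C_\Phi C_{\Xi^\dagger}) = \Tr(C_{\Phi^\dagger}C_{\Xi}) \geq 0$ for all $\Phi \in \cl{C}$, so $\Xi^\dagger \in \cl{C}^\circ$ and hence $\Xi = (\Xi^\dagger)^\dagger \in (\cl{C}^\circ)^\dagger$.

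There is no genuine analytic difficulty here; the single identity does all the work. The only points demanding care are bookkeeping ones: confirming that the dual cone is taken inside the real vector space of Hermiticity-preserving maps, so that every Choi matrix in sight is Hermitian and the trace pairing is real-valued (this is precisely where the hypothesis enters and why the unconjugated product $\Tr(C_\Phi C_\Psi)$ in the definition of $\cl{C}^\circ$ agrees with $\langle \Phi | \Psi \rangle$), and keeping the daggers and their involutivity straight throughout the two inclusions.
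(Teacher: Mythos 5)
Your proof is correct and follows essentially the same route as the paper: both arguments reduce the claim to the single identity $\langle \Phi \,|\, \Psi \rangle = \langle \Psi^\dagger \,|\, \Phi^\dagger \rangle$ from Proposition~\ref{prop:choi_inner_basic}(b), use Hermiticity-preservation to make the trace pairing real and symmetric, and then obtain both inclusions from involutivity of the dagger. The alternative derivation of the key identity via Proposition~\ref{prop:left_right_choi} is a nice extra but does not change the substance of the argument.
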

\begin{proof}
	Suppose that $\Omega \in (\cl{C}^\dagger)^\circ$. Then
	\begin{align*}
		\langle \Omega | \Phi^\dagger \rangle \geq 0 \quad \text{ for all } \Phi \in \cl{C}.
	\end{align*}
	Condition (b) of Proposition~\ref{prop:choi_inner_basic}, together with the fact that all maps considered here are Hermiticity-preserving, tells us that this is the same as
	\begin{align*}
		\langle \Omega^\dagger | \Phi \rangle \geq 0 \quad \text{ for all } \Phi \in \cl{C}.
	\end{align*}
	This means that $\Omega^\dagger \in \cl{C}^\circ$, so $\Omega \in (\cl{C}^\circ)^\dagger$, so $(\cl{C}^\dagger)^\circ \subseteq (\cl{C}^\circ)^\dagger$. The opposite inclusion is proved by following this same argument backward.
\end{proof}

\subsection{Mapping Cones}\label{sec:mapping_cones}

A \emph{mapping cone} \cite{S86} is a nonzero closed cone $\cl{C}$ of positive maps on $M_n$ with the property that $\Phi \circ \Omega \circ \Psi \in \cl{C}$ whenever $\Omega \in \cl{C}$ and $\Phi,\Psi : M_n \rightarrow M_n$ are completely positive. Many of the cones that we have considered already are mapping cones: the cones of completely positive maps, $k$-positive maps, $k$-superpositive maps, and completely co-positive maps are all examples.

Mapping cones have gained some interest lately due to the fact that many properties of $k$-positive maps and $k$-superpositive maps stem from the fact that they are mapping cones \cite{Sko11,SS10,SSZ09,S11,Sto11}. We do not mention these properties further here because we generalize them in the next section.

\subsection{Right CP-Invariant Cones}\label{sec:cp_invariant}

It is sometimes be useful to consider (not necessarily closed) cones $\cl{C}$ such that $\Omega \circ \Psi \in \cl{C}$ whenever $\Omega \in \cl{C}$ and $\Psi$ is completely positive -- that is, cones that are closed under right-composition, but not necessarily left-composition, by completely positive maps. We call such cones \emph{right CP-invariant}. Left CP-invariant cones can be defined analogously, and it is clear that $\cl{C}$ is right CP-invariant if and only if $\cl{C}^{\dagger}$ is left CP-invariant. By definition, all mapping cones are right CP-invariant, but there are right CP-invariant cones that are not mapping cones. The most familiar example for us of a cone that is right CP-invariant but not a mapping cone is the cone of anti-degradable maps, which is not left CP-invariant.

To help motivate why right CP-invariant cones are interesting for us, we first remark on a pattern that we have seen a few times. Completely positive maps are defined by the fact that $(id_n \otimes \Phi)(\rho) \geq 0$ for all $\rho \geq 0$, yet it is enough to check only that $C_\Phi \geq 0$ to determine complete positivity. Similarly, we saw that a map $\Phi$ is $k$-superpositive if and only if $SN((id_n \otimes \Phi)(\rho)) \leq k$ for all $\rho \geq 0$, which is equivalent to the seemingly simpler condition $SN(C_\Phi) \leq 0$. A similar statement holds for $k$-positive maps if we replace Schmidt number by $k$-block positivity. In other words, we have seen that instead of checking that $(id_n \otimes \Phi)(\rho)$ satisfies a given property for all $\rho \geq 0$, it is often enough to simply check that property is satisfied when $\rho = \ketbra{\psi_+}{\psi_+}$. The following proposition shows that this happens exactly because the cones we have discussed are right CP-invariant.
\begin{prop}\label{tlem:genCharacterize}
	Let $\cl{C} \subseteq \cl{L}(M_m, M_n)$ be a right CP-invariant cone and let $\Phi \in \cl{L}(M_m, M_n)$. Then the following are equivalent:
	\begin{enumerate}[(a)]
		\item $\Phi \in \cl{C}$; and
		\item $(id_m \otimes \Phi)(\rho) \in C_{\cl{C}}$ for all $0 \leq \rho \in M_m \otimes M_m$.
	\end{enumerate}
\end{prop}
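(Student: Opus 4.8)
The plan is to prove the two implications separately, converting everything into statements about Choi matrices via the Choi--Jamio{\l}kowski isomorphism. The engine of the whole argument is the composition identity $C_{\Phi \circ \Psi} = (id_m \otimes \Phi)(C_\Psi)$, which already appeared implicitly in the proof of Proposition~\ref{prop:choi_inner_basic} and follows from a one-line computation: $C_{\Phi \circ \Psi} = \sum_{i,j}\ketbra{i}{j} \otimes \Phi\big(\Psi(\ketbra{i}{j})\big) = (id_m \otimes \Phi)(C_\Psi)$.

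For $(a) \implies (b)$, I would start from an arbitrary $0 \leq \rho \in M_m \otimes M_m$. By Theorem~\ref{thm:choi_cp}, the positive semidefinite operator $\rho$ is the Choi matrix $C_\Psi$ of a (unique) completely positive map $\Psi : M_m \rightarrow M_m$. Applying the composition identity then gives $(id_m \otimes \Phi)(\rho) = (id_m \otimes \Phi)(C_\Psi) = C_{\Phi \circ \Psi}$. Since $\Phi \in \cl{C}$, the map $\Psi$ is completely positive, and $\cl{C}$ is right CP-invariant, it follows that $\Phi \circ \Psi \in \cl{C}$, and hence $(id_m \otimes \Phi)(\rho) \in C_{\cl{C}}$.

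For $(b) \implies (a)$, I would simply specialize to $\rho = \ketbra{\psi_+}{\psi_+}$, which is positive semidefinite so that (b) applies. Because $(id_m \otimes \Phi)(\ketbra{\psi_+}{\psi_+}) = \tfrac{1}{m}C_\Phi$ by the definition of the Choi matrix, and $C_{\cl{C}}$ is a cone (since $\cl{C}$ is), this forces $C_\Phi \in C_{\cl{C}}$, i.e., $C_\Phi = C_\Psi$ for some $\Psi \in \cl{C}$. The Choi--Jamio{\l}kowski isomorphism is injective, so $\Phi = \Psi \in \cl{C}$.

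I do not anticipate a serious obstacle here; the result is essentially a clean packaging of the composition identity together with the right CP-invariance hypothesis, and the two nontrivial facts it rests on (complete positivity $\iff$ positive Choi matrix, and bijectivity of the isomorphism) are already available from Theorem~\ref{thm:choi_cp}. The only points needing a little care are to check that the map $\Psi$ built from $\rho$ has matching domain and codomain ($M_m \rightarrow M_m$), so that $\Phi \circ \Psi$ is well-defined and lands in $\cl{C}$, and to observe that $C_{\cl{C}}$ inherits the cone property from $\cl{C}$ so that the scalar $1/m$ may be discarded in the converse direction.
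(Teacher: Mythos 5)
Your proof is correct and follows essentially the same route as the paper: both directions rest on the composition identity $C_{\Phi\circ\Psi}=(id_m\otimes\Phi)(C_\Psi)$ together with the bijection between positive semidefinite operators and completely positive maps, and the converse is the same specialization to $\rho=\ketbra{\psi_+}{\psi_+}$. The two points you flag as needing care (the domain/codomain of $\Psi$ and discarding the factor $1/m$ via the cone property) are handled correctly.
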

\begin{proof}
	The implication (b) $\Rightarrow$ (a) is trivial because we can take $\rho = \ketbra{\psi_+}{\psi_+}$. To see that (a) $\Rightarrow$ (b), note that because $\cl{C}$ is a right CP-invariant cone, we have that $\Phi \circ \Psi \in \cl{C}$ for all $\Psi \in \cl{CP}$. Thus $m(id_m \otimes (\Phi \circ \Psi))(\ketbra{\psi_+}{\psi_+}) = (id_m \otimes \Phi)(C_{\Psi}) \in C_{\cl{C}}$ for all $\Psi \in \cl{CP}$. The result then comes from the fact that $\Psi \in \cl{CP}$ if and only if $C_{\Psi} \geq 0$.
\end{proof}

Some of the most interesting properties of a right CP-invariant cone $\cl{C}$ involve relationships between $\cl{C}$ and $\cl{C}^\circ$. The following proposition is a starting point.
\begin{prop}\label{prop:right_cp_dual}
	If $\cl{C} \subseteq \cl{L}(M_m, M_n)$ is a right (left) CP-invariant cone then so is $\cl{C}^\circ$.
\end{prop}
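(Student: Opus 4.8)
The plan is to prove the statement for right CP-invariance; the left-invariant case will follow by the symmetric argument with pre- and post-composition interchanged, or more slickly by combining the right-invariant case with the already-noted equivalence ``$\cl{C}$ is right CP-invariant iff $\cl{C}^\dagger$ is left CP-invariant'' together with Lemma~\ref{lem:sup_cone_duals}, which gives $(\cl{C}^\dagger)^\circ = (\cl{C}^\circ)^\dagger$. So I would fix a right CP-invariant cone $\cl{C}$, take $\Omega \in \cl{C}^\circ$ and a completely positive map $\Psi$, and aim to show $\Omega \circ \Psi \in \cl{C}^\circ$; that is, $\langle \Phi \,|\, \Omega \circ \Psi \rangle \geq 0$ for every $\Phi \in \cl{C}$ (recall that $\langle \Phi \,|\, \Psi \rangle = \Tr(C_\Phi C_\Psi)$, since the Choi matrices of Hermiticity-preserving maps are Hermitian, so this really is the dual-cone condition).

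The crux of the argument is an adjoint identity for the Choi inner product under \emph{right} composition: for Hermiticity-preserving $\Phi, \Omega$ and any $\Psi$,
\[
	\langle \Phi \,|\, \Omega \circ \Psi \rangle = \langle \Phi \circ \Psi^\dagger \,|\, \Omega \rangle.
\]
I would derive this from Proposition~\ref{prop:choi_inner_basic}, which as stated handles only left composition. The derivation alternates its two parts: use (b) to rewrite $\Omega \circ \Psi$ as $\Psi^\dagger \circ \Omega^\dagger$ so that the ``moving'' map $\Psi^\dagger$ sits on the left, apply (a) to shift $\Psi^\dagger$ across the inner product, and then apply (b) once more to undo the daggers. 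Along the way I would use that on Hermiticity-preserving maps the inner product is real, hence symmetric, which lets me swap the two arguments where needed.

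With the identity in hand the conclusion is immediate. Since $\Psi$ is completely positive, so is $\Psi^\dagger$ (if $\Psi = \sum_k {\rm Ad}_{A_k}$ then $\Psi^\dagger = \sum_k {\rm Ad}_{A_k^\dagger}$). Right CP-invariance of $\cl{C}$ then gives $\Phi \circ \Psi^\dagger \in \cl{C}$ for every $\Phi \in \cl{C}$, and because $\Omega \in \cl{C}^\circ$ this yields $\langle \Phi \circ \Psi^\dagger \,|\, \Omega \rangle \geq 0$. The identity transfers this to $\langle \Phi \,|\, \Omega \circ \Psi \rangle \geq 0$ for all $\Phi \in \cl{C}$, which is exactly $\Omega \circ \Psi \in \cl{C}^\circ$.

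The main obstacle is establishing the right-composition adjoint identity, as the available tool Proposition~\ref{prop:choi_inner_basic} is phrased for post-composition. It is worth noting that the left CP-invariant case needs none of this: there $\Psi \circ \Omega$ is post-composition and $\langle \Phi \,|\, \Psi \circ \Omega \rangle = \langle \Psi^\dagger \circ \Phi \,|\, \Omega \rangle$ is precisely part (a), making that case one line. Consequently, an attractive alternative that bypasses the identity altogether is to prove only the left case directly and then deduce the right case from it via the dagger/dual correspondence described in the first paragraph.
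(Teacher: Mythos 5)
Your proof is correct. The paper in fact takes the route you relegate to your final paragraph: it proves the \emph{left} CP-invariant case directly --- there $\langle \Psi \,|\, \Phi \circ \Omega \rangle = \langle \Phi^\dagger \circ \Psi \,|\, \Omega \rangle$ is a single application of Proposition~\ref{prop:choi_inner_basic}(a) --- and then obtains the right case from Lemma~\ref{lem:sup_cone_duals} together with the observation that $\cl{C}$ is right CP-invariant if and only if $\cl{C}^\dagger$ is left CP-invariant. Your primary route instead attacks the right case head-on via the right-composition adjoint identity $\langle \Phi \,|\, \Omega \circ \Psi \rangle = \langle \Phi \circ \Psi^\dagger \,|\, \Omega \rangle$. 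The alternating (b)--(a)--(b) derivation you sketch is valid: on Hermiticity-preserving maps the Choi inner product is real and symmetric, so the argument swaps are harmless, and $\Psi^\dagger$ is completely positive whenever $\Psi$ is, so right CP-invariance of $\cl{C}$ applies to $\Phi \circ \Psi^\dagger$. Indeed, this is essentially the same chain of identities the paper deploys shortly afterwards in the proof of Proposition~\ref{prop:right_cp_main}. The trade-off is minor: the paper's choice buys a one-line direct case at the cost of invoking the dagger/dual correspondence, while yours buys a self-contained direct proof of the right case at the cost of deriving the extra identity. Both are sound, and your write-up already contains the paper's argument as its closing remark.
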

\begin{proof}
	We prove the result for left CP-invariant cones $\cl{C}$. The corresponding result for right CP-invariance follows from the fact that $\cl{C}^\dagger$ is right CP-invariant and $(\cl{C}^\dagger)^\circ = (\cl{C}^\circ)^\dagger$.
	
	Let $\Psi \in \cl{C}^\circ$. Then $\Tr(C_\Psi C_\Omega) \geq 0$ for all $\Omega \in \cl{C}$. However, left CP-invariance of $\Phi$ tells us that $\Phi \circ \Omega \in \cl{C}$ for all completely positive $\Phi$. Thus $\Tr(C_\Psi C_{\Phi \circ \Omega}) \geq 0$ for all $\Omega \in \cl{C}$ and all completely positive $\Phi$. Then
	\begin{align*}
		\Tr(C_\Psi C_{\Phi \circ \Omega}) & = \Tr(C_{\Phi^\dagger \circ \Psi} C_{\Omega}) \geq 0,
	\end{align*}
	so $\Phi^\dagger \circ \Psi \in \cl{C}^\circ$ for all completely positive $\Phi$. Left CP-invariance of $\cl{C}^\circ$ follows.
\end{proof}

Proposition~\ref{prop:right_cp_dual} generalizes the fact that the dual of a mapping cone is itself a mapping cone -- a fact that was originally proved in the special case of ``symmetric'' mapping cones in \cite{S11} and in general in \cite{Sko11}. We also note that our proof is much simpler than all previous proofs of this fact.

Our next result on right CP-invariant cones shows how the composition of a map from $\cl{C}^\dagger$ and a map from $\cl{C}^\circ$ behaves. This result is also known in the special case of mapping cones \cite{Sko11,S09}.
\begin{prop}\label{prop:right_cp_main}
	Let $\cl{C} \subseteq \cl{L}(M_m, M_n)$ be a cone of superoperators and let $\Phi \in \cl{L}(M_m, M_n)$. If $\cl{C}$ is right CP-invariant then conditions (a) and (b) below are equivalent. If $\cl{C}$ is left CP-invariant then conditions (a) and (c) are equivalent.
	\begin{enumerate}[(a)]
		\item $\Phi \in \cl{C}^\circ$;
		\item $\Omega^\dagger \circ \Phi$ is completely positive for all $\Omega \in \cl{C}$; and
		\item $\Phi \circ \Omega^\dagger$ is completely positive for all $\Omega \in \cl{C}$.
	\end{enumerate}
\end{prop}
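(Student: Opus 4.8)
The plan is to reduce both equivalences to two facts already in hand: the self-duality of the cone of positive semidefinite operators (equivalently, that a Hermiticity-preserving map is completely positive precisely when its Choi matrix pairs nonnegatively against every positive semidefinite operator), and the Hilbert--Schmidt pairing identities of Proposition~\ref{prop:choi_inner_basic}. I will prove (a)$\iff$(b) for a right CP-invariant cone in full, and then obtain (a)$\iff$(c) for a left CP-invariant cone by transferring through $\cl{C}^\dagger$, which is right CP-invariant and satisfies $(\cl{C}^\dagger)^\circ = (\cl{C}^\circ)^\dagger$ by Lemma~\ref{lem:sup_cone_duals}.

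For (a)$\iff$(b): since a composition of Hermiticity-preserving maps is again Hermiticity-preserving, $C_{\Omega^\dagger \circ \Phi}$ is Hermitian, so self-duality of $M_m^+$ gives that $\Omega^\dagger \circ \Phi$ is completely positive if and only if $\Tr\big(C_{\Omega^\dagger \circ \Phi}C_\Xi\big) \ge 0$ for every completely positive $\Xi : M_m \rightarrow M_m$. Next I would slide $\Omega^\dagger$ across the pairing using Proposition~\ref{prop:choi_inner_basic}(a) (whose one-line Choi-matrix proof applies verbatim to the rectangular maps here, not just endomorphisms), yielding $\langle \Omega^\dagger \circ \Phi \,|\, \Xi\rangle = \langle \Phi \,|\, \Omega \circ \Xi\rangle$, i.e. $\Tr\big(C_{\Omega^\dagger \circ \Phi}C_\Xi\big) = \Tr\big(C_\Phi C_{\Omega \circ \Xi}\big)$. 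Hence (b) says exactly that $\Tr\big(C_\Phi C_{\Omega \circ \Xi}\big) \ge 0$ for all $\Omega \in \cl{C}$ and all completely positive $\Xi$. The closing step is the set identity $\{\Omega \circ \Xi : \Omega \in \cl{C},\ \Xi \text{ completely positive}\} = \cl{C}$, where ``$\subseteq$'' is right CP-invariance and ``$\supseteq$'' comes from $\Xi = id_m$; this collapses the family of inequalities to $\Tr\big(C_\Phi C_\Theta\big) \ge 0$ for all $\Theta \in \cl{C}$, which is the definition of $\Phi \in \cl{C}^\circ$, namely (a).

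For (a)$\iff$(c) I would apply the proved equivalence to $\cl{C}^\dagger$: a map $\Psi$ lies in $(\cl{C}^\dagger)^\circ$ iff $\Omega \circ \Psi$ is completely positive for every $\Omega \in \cl{C}$ (writing the generic element of $\cl{C}^\dagger$ as $\Omega^\dagger$). Taking $\Psi = \Phi^\dagger$ and using $(\cl{C}^\dagger)^\circ = (\cl{C}^\circ)^\dagger$ turns the left side into $\Phi \in \cl{C}^\circ$, while the right side, $\Omega \circ \Phi^\dagger = (\Phi \circ \Omega^\dagger)^\dagger$ completely positive for all $\Omega$, is equivalent to (c) because complete positivity is preserved under the dagger. (Alternatively, a direct route mirrors the second paragraph after first establishing $\langle \Phi \circ \Omega^\dagger \,|\, \Xi\rangle = \langle \Phi \,|\, \Xi \circ \Omega\rangle$ by chaining parts (b), (a), (b) of Proposition~\ref{prop:choi_inner_basic}, and then invoking left CP-invariance to get $\{\Xi \circ \Omega\} = \cl{C}$.) I expect the only real obstacles to be bookkeeping: keeping the quantifier order straight (for each fixed $\Omega$ the complete-positivity test already ranges over all $\Xi$, and only afterward do I intersect over $\Omega \in \cl{C}$), and matching the side of each composition to the correct one-sided invariance hypothesis so that right (respectively left) CP-invariance is genuinely the property used. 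Ensuring every map in sight is Hermiticity-preserving, so that all Choi matrices are Hermitian and all trace pairings are real, is the last detail to check.
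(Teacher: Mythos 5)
Your proposal is correct and follows essentially the same route as the paper: self-duality of the completely positive cone plus Proposition~\ref{prop:choi_inner_basic} to move $\Omega^\dagger$ across the Hilbert--Schmidt pairing, with the one-sided CP-invariance collapsing the quantifier over $\Omega \circ \Xi$ back to $\cl{C}$. The only (harmless) variation is that for (a)$\iff$(c) you transfer through $\cl{C}^\dagger$ via Lemma~\ref{lem:sup_cone_duals} rather than directly chaining the inner-product identities as the paper does, but you note that direct chain as an alternative, and the dagger-transfer is the same device the paper itself uses in Proposition~\ref{prop:right_cp_dual}.
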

\begin{proof}
	To see that (a) $\implies$ (b) when $\cl{C}$ is right CP-invariant, first let $\Phi \in \cl{C}^\circ$. Then $\langle \Phi | \Omega \circ \Psi \rangle \geq 0$ for all $\Omega \in \cl{C}$ and completely positive $\Psi$. Then property~(a) of Proposition~\ref{prop:choi_inner_basic} says that $\langle \Omega^\dagger \circ \Phi | \Psi \rangle \geq 0$. Thus $\Omega^\dagger \circ \Phi$ is in the dual cone of the completely positive maps for all $\Omega$. Since the cone of completely positive maps is self-dual, condition (b) follows. This argument also works in reverse to show that (b) $\implies$ (a).
	
	The proof of equivalence of conditions (a) and (c) when $\cl{C}$ is left CP-invariant is similar. For any $\Phi, \Psi,$ and $\Omega$ we have
	\begin{align*}
		\langle \Phi | \Psi \circ \Omega \rangle = \langle \Psi^\dagger \circ \Phi | \Omega \rangle = \langle \Phi^\dagger \circ \Psi | \Omega^\dagger \rangle = \langle \Psi | \Phi \circ \Omega^\dagger \rangle,
	\end{align*}
	where each equality follows from either property~(a) or~(b) of Proposition~\ref{prop:choi_inner_basic}. The result follows by noting that the inner product on the left is nonnegative exactly when the inequality on the right is nonnegative.
\end{proof}

We now give a more concrete corollary of Proposition~\ref{prop:right_cp_main} by showing what it says about the cones of $k$-positive maps $\cl{P}_k$ and $k$-superpositive maps $\cl{S}_k$.
\begin{cor}\label{cor:right_cp_main_cor2}
	Let $\Omega \in \cl{L}(M_m, M_n)$. The following are equivalent:
	\begin{enumerate}[(a)]
		\item $\Omega \in \cl{S}_k$;
		\item $\Omega \circ \Phi$ is completely positive for all $\Phi \in \cl{P}_k$; and
		\item $\Phi \circ \Omega$ is completely positive for all $\Phi \in \cl{P}_k$.
	\end{enumerate}
\end{cor}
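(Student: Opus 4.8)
The plan is to recognize this corollary as the specialization of Proposition~\ref{prop:right_cp_main} to the cone $\cl{C} = \cl{P}_k$, combined with two standard structural facts about $k$-positive maps. First I would record the ingredients: (i) the cone $\cl{P}_k$ of $k$-positive maps is a mapping cone, as noted in Section~\ref{sec:mapping_cones}, and hence is simultaneously left and right CP-invariant; (ii) by the dual-cone correspondence of Table~\ref{table:dual_cones} we have $\cl{P}_k^\circ = \cl{S}_k$; and (iii) the adjoint operation is a bijection of $\cl{P}_k$ onto itself, i.e.\ $\Phi \in \cl{P}_k$ if and only if $\Phi^\dagger \in \cl{P}_k$. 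Fact (iii) is the only one requiring an argument: since $\Phi$ is $k$-positive exactly when $C_\Phi$ is $k$-block positive, and $C_{\Phi^\dagger} = S\overline{C_\Phi}S^\dagger$ by Proposition~\ref{prop:left_right_choi}, it suffices to observe that both complex conjugation and conjugation by the swap $S$ leave Schmidt rank unchanged and therefore preserve $k$-block positivity.

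With these in hand, I would apply Proposition~\ref{prop:right_cp_main} with $\cl{C} = \cl{P}_k$, taking the tested map to be $\Omega$. Since $\cl{C}^\circ = \cl{S}_k$, condition (a) of the proposition becomes exactly condition (a) of the corollary. Using that $\cl{P}_k$ is right CP-invariant, the proposition gives that (a) is equivalent to ``$\Psi^\dagger \circ \Omega$ is completely positive for all $\Psi \in \cl{P}_k$''. As $\Psi$ runs over $\cl{P}_k$, fact (iii) says $\Psi^\dagger$ runs over all of $\cl{P}_k$ (with the spaces in the appropriate order), so after renaming $\Phi := \Psi^\dagger$ this is precisely condition (c) of the corollary. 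Using instead that $\cl{P}_k$ is left CP-invariant, the proposition gives that (a) is equivalent to ``$\Omega \circ \Psi^\dagger$ is completely positive for all $\Psi \in \cl{P}_k$'', and the same renaming turns this into condition (b).

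The only genuine care needed --- and the step I expect to be the main obstacle --- is the dimensional and adjoint bookkeeping. The compositions in the corollary force $\Phi$ to be read as a $k$-positive map $M_n \rightarrow M_m$ (so that both $\Omega \circ \Phi$ and $\Phi \circ \Omega$ are defined against $\Omega : M_m \rightarrow M_n$), whereas the cone elements in Proposition~\ref{prop:right_cp_main} are maps $M_m \rightarrow M_n$. Fact (iii) is exactly what reconciles the two: passing from $\Psi \in \cl{P}_k$ to $\Phi = \Psi^\dagger$ both flips the direction of the underlying spaces, making the compositions type-check, and removes the daggers appearing in the proposition's conditions (b) and (c). Once this identification is made explicit, the equivalences (a)$\iff$(b) and (a)$\iff$(c) follow immediately, with no further computation required.
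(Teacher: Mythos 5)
Your proposal is correct and follows essentially the same route as the paper, which states that the corollary ``follows immediately from Proposition~\ref{prop:right_cp_main} and the simple facts $\cl{P}_k^\dagger = \cl{P}_k$, $\cl{P}_k^\circ = \cl{S}_k$, and that $\cl{P}_k$ is both left and right CP-invariant.'' You merely spell out the bookkeeping (including the justification of $\cl{P}_k^\dagger = \cl{P}_k$ via Proposition~\ref{prop:left_right_choi}) that the paper leaves implicit.
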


\begin{cor}\label{cor:right_cp_main_cor1}
	Let $\Phi \in \cl{L}(M_m, M_n)$. The following are equivalent:
	\begin{enumerate}[(a)]
		\item $\Phi \in \cl{P}_k$;
		\item $\Omega \circ \Phi$ is completely positive for all $\Omega \in \cl{S}_k$; and
		\item $\Phi \circ \Omega$ is completely positive for all $\Omega \in \cl{S}_k$.
	\end{enumerate}
\end{cor}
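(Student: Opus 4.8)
The plan is to obtain Corollary~\ref{cor:right_cp_main_cor1} as an immediate specialization of Proposition~\ref{prop:right_cp_main} applied to the cone $\cl{C} = \cl{S}_k$. Two ingredients make this work. First, Table~\ref{table:dual_cones} records that the cones of $k$-positive and $k$-superpositive maps are dual to one another, so $\cl{S}_k^\circ = \cl{P}_k$ (both are closed convex cones, so $\cl{S}_k^{\circ\circ} = \cl{S}_k$); hence condition~(a) of Proposition~\ref{prop:right_cp_main} for $\cl{C} = \cl{S}_k$, namely $\Phi \in \cl{C}^\circ$, is exactly condition~(a) of the corollary, $\Phi \in \cl{P}_k$. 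Second, as noted in Section~\ref{sec:mapping_cones}, $\cl{S}_k$ is a mapping cone, so it is simultaneously left and right CP-invariant (take $\Psi = id$ or $\Phi = id$ in the mapping cone condition). This is precisely what is needed to invoke both the (a)$\iff$(b) equivalence (right CP-invariance) and the (a)$\iff$(c) equivalence (left CP-invariance) of Proposition~\ref{prop:right_cp_main}.

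Carrying this out, Proposition~\ref{prop:right_cp_main} gives that $\Phi \in \cl{P}_k$ if and only if $\Omega^\dagger \circ \Phi$ is completely positive for every $\Omega \in \cl{S}_k$, and also if and only if $\Phi \circ \Omega^\dagger$ is completely positive for every $\Omega \in \cl{S}_k$. The only remaining step is to remove the daggers. For this I would observe that $\cl{S}_k$ is closed under taking adjoints, i.e. $\cl{S}_k^\dagger = \cl{S}_k$: if $\Omega(X) = \sum_\ell A_\ell X A_\ell^\dagger$ with $\mathrm{rank}(A_\ell) \leq k$, then the dual map $\Omega^\dagger(Y) = \sum_\ell A_\ell^\dagger Y A_\ell$ has Kraus operators $A_\ell^\dagger$ of the same rank, so $\Omega^\dagger$ is again $k$-superpositive. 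Consequently $\Omega \mapsto \Omega^\dagger$ is a bijection of $\cl{S}_k$ onto itself, and the quantifier ``$\Omega^\dagger \circ \Phi$ completely positive for all $\Omega \in \cl{S}_k$'' may be rewritten, after relabelling $\Omega^\dagger$ as $\Omega$, as ``$\Omega \circ \Phi$ completely positive for all $\Omega \in \cl{S}_k$'', which is condition~(b) of the corollary. The same relabelling turns the $\Phi \circ \Omega^\dagger$ statement into condition~(c).

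Since this is essentially a translation of the already-proven Proposition~\ref{prop:right_cp_main}, I expect no serious obstacle. The one point that must genuinely be checked rather than assumed is the self-adjointness $\cl{S}_k^\dagger = \cl{S}_k$, which is what licenses restating the abstract daggered conditions of Proposition~\ref{prop:right_cp_main} in the dagger-free form demanded by the corollary; the Kraus-rank argument above settles it in a single line. (The companion Corollary~\ref{cor:right_cp_main_cor2} follows identically, taking instead $\cl{C} = \cl{P}_k$ and using the analogous fact $\cl{P}_k^\dagger = \cl{P}_k$, which holds because $C_{\Phi^\dagger} = S\overline{C_\Phi}S^\dagger$ by Proposition~\ref{prop:left_right_choi} and conjugation by the swap together with complex conjugation preserves the Schmidt rank of test vectors, hence $k$-block positivity.)
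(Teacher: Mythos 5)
Your proposal is correct and follows essentially the same route as the paper, which likewise derives the corollary directly from Proposition~\ref{prop:right_cp_main} together with the facts $\cl{S}_k^\circ = \cl{P}_k$, $\cl{S}_k^\dagger = \cl{S}_k$, and the left and right CP-invariance of $\cl{S}_k$. Your explicit Kraus-rank verification of $\cl{S}_k^\dagger = \cl{S}_k$ is a welcome detail that the paper leaves as a ``simple fact.''
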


Both of these corollaries follow immediately from Proposition~\ref{prop:right_cp_main} and the following simple facts: $\cl{P}_k^\dagger = \cl{P}_k$, $\cl{S}_k^\dagger = \cl{S}_k$, $\cl{P}_k^\circ = \cl{S}_k$, $\cl{S}_k^\circ = \cl{P}_k$, and $\cl{P}_k$ and $\cl{S}_k$ are both left and right CP-invariant. Corollary~\ref{cor:right_cp_main_cor2} was originally proved in the $k = 1$ case in \cite{HSR03}, for arbitrary $k$ in \cite{CK06}, and then re-proved along with Corollary~\ref{cor:right_cp_main_cor1} in \cite{SSZ09}.

We return to right CP-invariant cones in Section~\ref{sec:opSysRight}, where we show that they are the ``natural'' cones of superoperators that arise when dealing with operator systems on matrices.


\subsection{Semigroup Cones}\label{sec:semigroup_cone}

It was shown in \cite[Theorem 3.8]{SSZ09} that Corollaries~\ref{cor:right_cp_main_cor2} and~\ref{cor:right_cp_main_cor1} are in some sense weaker than they should be, since it is actually the case that $\Omega \circ \Phi$ and $\Phi \circ \Omega$ are $k$-superpositive (rather than just completely positive) whenever $\Omega \in \cl{S}_k$ and $\Phi \in \cl{P}_k$. It is natural to ask what property of the cones $\cl{P}_k$ and $\cl{S}_k$ makes this the case, since it is not a consequence of either right or left CP-invariance.

We now show that it is the fact that $\cl{P}_k$ is a \emph{semigroup} (i.e., it satisfies $\Phi \circ \Psi \in \cl{P}_k$ for all $\Phi, \Psi \in \cl{P}_k$) that gives this extra structure.
\begin{prop}\label{prop:semigroup}
	Let $\cl{C} \supseteq \cl{CP}(M_m, M_n)$ be a semigroup cone and let $\Phi \in \cl{L}(M_m, M_n)$. The following are equivalent:
	\begin{enumerate}[(a)]
		\item $\Phi \in \cl{C}^\circ$;
		\item $\Omega^\dagger \circ \Phi \in \cl{C}^\circ$ for all $\Omega \in \cl{C}$; and
		\item $\Phi \circ \Omega^\dagger \in \cl{C}^\circ$ for all $\Omega \in \cl{C}$.
	\end{enumerate}
\end{prop}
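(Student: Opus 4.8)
The plan is to prove the cycle of implications (a)$\implies$(b)$\implies$(a) and (a)$\implies$(c)$\implies$(a), exploiting two features of $\cl{C}$ simultaneously: that it is closed under composition (the semigroup property) and that it contains every completely positive map, so in particular the identity $id = {\rm Ad}_I$ lies in $\cl{C}$. The whole argument will run through the inner-product identities of Proposition~\ref{prop:choi_inner_basic}, which let me slide a factor of $\Omega$ from one side of $\langle \cdot | \cdot \rangle$ to the other, together with the observation that for Hermiticity-preserving maps the Hilbert--Schmidt inner product is real and symmetric, so $\langle \Phi | \Psi \rangle = \langle \Psi | \Phi \rangle$. Note that no closedness or convexity of $\cl{C}$ will be needed, and that $\cl{C} \supseteq \cl{CP}$ is only invoked in the converse directions.

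For (a)$\implies$(b) I would fix $\Omega \in \cl{C}$ and pair $\Omega^\dagger \circ \Phi$ against an arbitrary $\Xi \in \cl{C}$. Property~(a) of Proposition~\ref{prop:choi_inner_basic} gives $\langle \Omega^\dagger \circ \Phi | \Xi \rangle = \langle \Phi | \Omega \circ \Xi \rangle$. The semigroup property forces $\Omega \circ \Xi \in \cl{C}$, and since $\Phi \in \cl{C}^\circ$ this pairing is nonnegative; as $\Xi$ ranges over $\cl{C}$ this says precisely $\Omega^\dagger \circ \Phi \in \cl{C}^\circ$. The implication (a)$\implies$(c) is the mirror image: chaining properties~(a) and~(b) of Proposition~\ref{prop:choi_inner_basic} (exactly the chain $\langle \Phi | \Xi \circ \Omega \rangle = \langle \Xi | \Phi \circ \Omega^\dagger \rangle$ appearing in the proof of Proposition~\ref{prop:right_cp_main}) yields $\langle \Phi \circ \Omega^\dagger | \Xi \rangle = \langle \Phi | \Xi \circ \Omega \rangle$, and again $\Xi \circ \Omega \in \cl{C}$ by the semigroup property, making the pairing nonnegative for all $\Xi \in \cl{C}$.

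Both converses are immediate and share the same trick: take $\Omega = id$, which lies in $\cl{C}$ because $\cl{C} \supseteq \cl{CP}$. Then $\Omega^\dagger \circ \Phi = \Phi$ and $\Phi \circ \Omega^\dagger = \Phi$, so conditions (b) and (c) each collapse to (a). The only genuine subtlety — and thus the step I would watch most carefully — is the bookkeeping with adjoints and the direction of composition inside the inner-product identities: one must apply Proposition~\ref{prop:choi_inner_basic} in exactly the right order so that the leftover composition ($\Omega \circ \Xi$ for (b), and $\Xi \circ \Omega$ for (c)) is a bona fide product of two elements of $\cl{C}$. This is precisely where the semigroup hypothesis, rather than mere right or left CP-invariance, is doing the essential work, distinguishing this result from Proposition~\ref{prop:right_cp_main}; everything else is routine.
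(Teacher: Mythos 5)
Your proof is correct and follows essentially the same route as the paper, which simply reruns the argument of Proposition~\ref{prop:right_cp_main} with the test map drawn from $\cl{C}$ instead of $\cl{CP}$, using the semigroup property to keep $\Omega \circ \Xi$ (resp.\ $\Xi \circ \Omega$) inside $\cl{C}$ and the identities of Proposition~\ref{prop:choi_inner_basic} to shuttle $\Omega$ across the inner product. Your explicit use of $\Omega = id \in \cl{CP} \subseteq \cl{C}$ for the converses is exactly what the paper's ``works in reverse'' step amounts to.
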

\begin{proof}
	The proof is almost identical to the proof of Proposition~\ref{prop:right_cp_main}, except we let $\Psi \in \cl{C}$ rather than $\Psi \in \cl{CP}$ throughout the proof.
\end{proof}

\begin{prop}\label{prop:semigroup2}
	Let $\cl{C} \supseteq \cl{CP}(M_m, M_n)$ be a closed, convex semigroup cone with $\cl{C}^\circ \circ \cl{C}^\circ = \cl{C}^\circ$ and let $\Omega \in \cl{L}(M_m, M_n)$. The following are equivalent:
	\begin{enumerate}[(a)]
		\item $\Omega \in \cl{C}$;
		\item $\Omega^\dagger \circ \Phi \in \cl{C}$ for all $\Phi \in \cl{C}^\circ$; and
		\item $\Phi \circ \Omega^\dagger \in \cl{C}$ for all $\Phi \in \cl{C}^\circ$.
	\end{enumerate}
\end{prop}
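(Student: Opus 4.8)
The plan is to mirror the proof of Proposition~\ref{prop:semigroup}, but with the roles of $\cl{C}$ and $\cl{C}^\circ$ interchanged, the hypothesis $\cl{C}^\circ \circ \cl{C}^\circ = \cl{C}^\circ$ standing in for the inclusion $\cl{CP}\subseteq\cl{C}$ that was available before. First I would record the structural facts that drive everything: since $\cl{C}$ is closed and convex, $\cl{C}^{\circ\circ} = \cl{C}$, so membership in $\cl{C}$ can be tested by pairing against $\cl{C}^\circ$; since $\cl{C} \supseteq \cl{CP}$ and $\cl{CP}$ is self-dual, $\cl{C}^\circ \subseteq \cl{CP} \subseteq \cl{C}$, so the semigroup $\cl{C}$ absorbs every element of $\cl{C}^\circ$ under composition on either side; and $\cl{C}^\circ \circ \cl{C}^\circ = \cl{C}^\circ$ says not only that $\cl{C}^\circ$ is closed under composition but, crucially, that every element of $\cl{C}^\circ$ factors as a composite of two elements of $\cl{C}^\circ$. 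The two identities of Proposition~\ref{prop:choi_inner_basic} are the workhorse for relocating composed maps across the Hilbert--Schmidt pairing. I would prove (a)$\Leftrightarrow$(b) in detail; the equivalence (a)$\Leftrightarrow$(c) follows by the same argument with the composition order (and the attendant adjoints) reversed, using Proposition~\ref{prop:choi_inner_basic}(b) to flip sides.

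For (a)$\Rightarrow$(b), assume $\Omega \in \cl{C}$ and fix $\Phi \in \cl{C}^\circ$. To see that $\Omega^\dagger \circ \Phi \in \cl{C} = \cl{C}^{\circ\circ}$, I would check $\langle \Omega^\dagger \circ \Phi | \Psi \rangle \geq 0$ for every $\Psi \in \cl{C}^\circ$. Proposition~\ref{prop:choi_inner_basic}(a) rewrites this pairing as $\langle \Phi | \Omega \circ \Psi \rangle$; since $\Psi \in \cl{C}^\circ \subseteq \cl{C}$ and $\Omega \in \cl{C}$, the semigroup property gives $\Omega \circ \Psi \in \cl{C}$, and then $\Phi \in \cl{C}^\circ$ forces the pairing to be nonnegative. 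No adjoint bookkeeping is needed in this direction because the transferred factor lands as the dagger-free composite $\Omega \circ \Psi$.

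For the reverse implication (b)$\Rightarrow$(a), I would first simplify (b). Using $\cl{C} = \cl{C}^{\circ\circ}$ together with Proposition~\ref{prop:choi_inner_basic}(a), the condition ``$\Omega^\dagger \circ \Phi \in \cl{C}$ for all $\Phi \in \cl{C}^\circ$'' unwinds cleanly to ``$\Omega \circ \Lambda \in \cl{C}$ for all $\Lambda \in \cl{C}^\circ$'': indeed $\langle \Omega^\dagger \circ \Phi | \Lambda \rangle = \langle \Phi | \Omega \circ \Lambda \rangle$, so nonnegativity for all $\Phi \in \cl{C}^\circ$ is exactly $\Omega \circ \Lambda \in \cl{C}^{\circ\circ}$. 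It then remains to deduce $\Omega \in \cl{C}$ from the inclusion $\Omega \circ \cl{C}^\circ \subseteq \cl{C}$. This is precisely where $\cl{C}^\circ \circ \cl{C}^\circ = \cl{C}^\circ$ enters: to test $\langle \Omega | \Xi \rangle \geq 0$ against an arbitrary $\Xi \in \cl{C}^\circ$, I would factor $\Xi = \Xi_1 \circ \Xi_2$ with $\Xi_1, \Xi_2 \in \cl{C}^\circ$ and transfer one factor onto $\Omega$ through Proposition~\ref{prop:choi_inner_basic}, reducing the pairing to an instance controlled by the hypothesis $\Omega \circ \cl{C}^\circ \subseteq \cl{C}$. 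This factorization is the structural substitute for the step ``take $\Omega = id$'' that closed the reverse direction of Proposition~\ref{prop:semigroup}, a move unavailable here since $id \notin \cl{C}^\circ$ in general.

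The main obstacle, and the only genuinely delicate part, is the adjoint-and-side bookkeeping in this last step: each application of Proposition~\ref{prop:choi_inner_basic}(a) introduces an adjoint on the transferred factor and flips its composition side, so I must arrange the factorization of $\Xi$ so that the resulting composite lines up exactly with a hypothesis instance $\Omega \circ \Lambda$ (with $\Lambda \in \cl{C}^\circ$) rather than with a left-composite or a stray adjoint. Matching these up is what forces the combined use of the semigroup identity for $\cl{C}^\circ$ (to provide factorizations on the correct side) together with Proposition~\ref{prop:choi_inner_basic}(b) and Lemma~\ref{lem:sup_cone_duals} (to relocate the daggers onto the dual cone where they can be absorbed). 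Once the sides and adjoints are aligned, the same computation run in reverse yields (b)$\Rightarrow$(a), and the mirror computation yields (c)$\Rightarrow$(a), completing the cycle of equivalences.
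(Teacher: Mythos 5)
Your overall strategy is the paper's: test membership in $\cl{C}=\cl{C}^{\circ\circ}$ by pairing against $\cl{C}^\circ$, move composition factors across the Hilbert--Schmidt pairing via Proposition~\ref{prop:choi_inner_basic}, and use $\cl{C}^\circ\circ\cl{C}^\circ=\cl{C}^\circ$ to make the resulting test set exhaust $\cl{C}^\circ$. Your (a)$\Rightarrow$(b) argument is complete and correct (the paper obtains it in one line from Proposition~\ref{prop:semigroup} together with $\cl{C}^\circ\subseteq\cl{C}$, but your direct derivation is the same computation), and your reformulation of (b) as ``$\Omega\circ\Lambda\in\cl{C}$ for all $\Lambda\in\cl{C}^\circ$'' is a correct, if optional, preliminary step.

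The gap is in the closing step of (b)$\Rightarrow$(a), and it sits exactly where you flag it. If you factor an arbitrary $\Xi=\Xi_1\circ\Xi_2\in\cl{C}^\circ$ and transfer one factor across the pairing, you arrive at something like $\langle\Xi\,|\,\Omega\rangle=\langle\Xi_1\,|\,\Omega\circ\Xi_2^\dagger\rangle$, and there is no reason for $\Xi_2^\dagger$ to lie in $\cl{C}^\circ$, so the hypothesis ``$\Omega\circ\Lambda\in\cl{C}$ for $\Lambda\in\cl{C}^\circ$'' does not apply. Your proposed fix --- relocating the dagger via Lemma~\ref{lem:sup_cone_duals} so it ``can be absorbed'' --- does not work as stated, because $(\cl{C}^\circ)^\dagger=(\cl{C}^\dagger)^\circ$ and no dagger-symmetry of $\cl{C}$ is assumed; announcing that the bookkeeping can be made to line up is precisely the part that needs proving. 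The paper threads the needle by running the argument in the opposite direction: from (b) one writes $0\le\langle\Omega^\dagger\circ\Phi\,|\,\Psi\rangle=\langle\Phi\circ\Psi^\dagger\,|\,\Omega\rangle$ for all $\Phi,\Psi\in\cl{C}^\circ$, and the composite $\Phi\circ\Psi^\dagger$ lies in $\cl{C}^\circ$ not because $\Psi^\dagger\in\cl{C}^\circ$ but because $\Psi^\dagger$ is completely positive (as $\cl{C}^\circ\subseteq\cl{CP}$) and $\cl{C}^\circ$ is right CP-invariant --- the semigroup $\cl{C}$ contains $\cl{CP}$, hence is left and right CP-invariant, hence so is $\cl{C}^\circ$ by Proposition~\ref{prop:right_cp_dual}. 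That CP-invariance of $\cl{C}^\circ$ is the ingredient your sketch is missing; with it, together with the factorization hypothesis $\cl{C}^\circ\circ\cl{C}^\circ=\cl{C}^\circ$, the test functionals $\Phi\circ\Psi^\dagger$ cover $\cl{C}^\circ$ and $\Omega\in\cl{C}^{\circ\circ}=\cl{C}$ follows; (c)$\Rightarrow$(a) is the mirror computation.
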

\begin{proof}
	The implications (a) $\Rightarrow$ (b) and (a) $\Rightarrow$ (c) both follow immediately from Proposition~\ref{prop:semigroup} and the fact that $\cl{C}^\circ \subseteq \cl{C}$. To see that (b) $\Rightarrow$ (a), first note that $\cl{C}$ being a semigroup implies that $\cl{C}$ is a both left and right CP-invariant, which implies via Proposition~\ref{prop:right_cp_dual} that $\cl{C}^\circ$ is also left and right CP-invariant. Because $\cl{C}^\circ \subseteq \cl{CP}$, it follows that $\Phi \circ \Psi^\dagger \in \cl{C}^\circ$ whenever $\Phi, \Psi \in \cl{C}^\circ$. Thus, if $\Omega^\dagger \circ \Phi \in \cl{C}$ for all $\Phi \in \cl{C}^\circ$ then
	\begin{align*}
		0 & \leq \big\langle \Omega^\dagger \circ \Phi | \Psi \big\rangle = \big\langle \Phi \circ \Psi^\dagger | \Omega \big\rangle \quad \text{ for all } \quad \Phi, \Psi \in \cl{C}^\circ,
	\end{align*}
	where we have used Proposition~\ref{prop:choi_inner_basic} twice. It follows that $\Omega \in \cl{C}^{\circ\circ} = \cl{C}$. The proof of the implication (c) $\Rightarrow$ (a) is similar.
\end{proof}

We return to semigroup cones in Section~\ref{sec:semigroups}, where we show that they also play a natural role in operator system theory. For now we close this section with Table~\ref{table:cone_invariance_prop}, which provides an easy reference for the invariance properties satisfied by the superoperator cones of interest for us.
\begin{table}[ht]\hsp
	\begin{center}
  \begin{tabular}{ l | c c c }
  	\noalign{\hrule height 0.1em}
    Superoperator cone & Right CP-inv. & Left CP-inv. & Semigroup \\
  	\noalign{\hrule height 0.1em}
    completely positive & \checkmark & \checkmark & \checkmark \\ \hline
    $k$-positive & \checkmark & \checkmark & \checkmark \\ \hline
    $k$-superpositive & \checkmark & \checkmark & \checkmark \\ \hline
    separable & & & \checkmark \\ \hline
    $k$-separable & & & \\ \hline
    binding entanglement & \checkmark & \checkmark & \checkmark \\ \hline
    completely co-positive & \checkmark & \checkmark & \\ \hline
    anti-degradable & \checkmark & & \checkmark \\ \hline
    $s$-extendible & \checkmark & & \checkmark \\
    \noalign{\hrule height 0.1em}
  \end{tabular}
	\end{center}
	\caption[Invariance properties of cones of superoperators]{\hsp The invariance properties of several cones of superoperators.}\label{table:cone_invariance_prop}
\dsp\end{table}

\section{Norms}\label{sec:norms}

Much of this work will focus on the interplay between a variety of different norms. Here we will introduce the vector and matrix norms that are of most use in quantum information theory. For a more general introduction to norms on $\mathbb{C}^n$ and $M_n$, the interested reader is directed to \cite{Bha97,HJ85,HJ91,L94,Li00}.

The norm on $\mathbb{C}^n$ that we will use most frequently is the \emph{Euclidean norm}, which we will denote simply by $\|\cdot\|$ and is defined by
\begin{align*}
	\big\| (v_1,\ldots,v_n) \big\| = \sqrt{\sum_{i=1}^n |v_i|^2}.
\end{align*}
If we ever refer to the length or norm of a vector without specifying what norm we are considering, we implicitly mean the Euclidean norm. For example, when we said that a pure quantum state $\ket{v}$ is always a unit vector, we meant that $\big\|\ket{v}\big\| = 1$.

More generally, we define the \emph{$p$-norm} of a vector for $1 \leq p \leq \infty$ by
\begin{align*}
	\big\| (v_1,\ldots,v_n) \big\|_p = \begin{cases}\left(\displaystyle\sum_{i=1}^n |v_i|^p\right)^\frac{1}{p} & \text{ if } 1 \leq p < \infty \\ \displaystyle\lim_{q \rightarrow\infty} \big\| (v_1,\ldots,v_n) \big\|_q & \text{ if } p = \infty\end{cases},
\end{align*}
and we note that $\big\| (v_1,\ldots,v_n) \big\|_\infty = \displaystyle\max_{1 \leq i \leq n} |v_i|$. Indeed, if $p = 2$ then the Euclidean norm itself is obtained as a special case.

One norm on matrices that we have already seen is the \emph{Frobenius norm} $\|\cdot\|_F$, defined for $X = (x_{ij}) \in M_{n,m}$ by $\big\|X\big\|_F := \sqrt{\sum_{i=1}^n\sum_{j=1}^m x_{ij}^2}$. The Frobenius norm is essentially the Euclidean norm on the space of matrices, as we noted when we investigated the vector-operator isomorphism in Section~\ref{sec:vector_operator_isomorphism}. Another simple characterization of the Frobenius norm is $\big\|X\big\|_F = \sqrt{\Tr\big(X^\dagger X\big)} = \sqrt{\sum_{i=1}^{\min\{m,n\}} \sigma_i^2}$, where $\sigma_1 \geq \sigma_2 \geq \cdots \geq \sigma_{\min\{m,n\}} \geq 0$ are the singular values of $X$.

The two norms on $M_{n,m}$ that are most frequently used in quantum information theory are the \emph{operator norm} and \emph{trace norm}, defined by
\begin{align*}
	\big\|X\big\| := \sup_{\ket{v},\ket{w}} \Big\{ \big|\bra{w}X\ket{v}\big| \Big\} \quad \text{ and } \quad \big\|X\big\|_{tr} := \Tr\big(\sqrt{X^\dagger X}\big),
\end{align*}
respectively. The name of the trace norm comes from the fact that if $X \geq 0$ then we have $\big\|X\big\|_{tr} = \Tr(X)$. Both of these norms can be characterized in terms of singular values, much like the Frobenius norm: $\big\|X\big\| = \sigma_1$ and $\big\|X\big\|_{tr} = \sum_{i=1}^{\min\{m,n\}} \sigma_i$.

The \emph{Schatten $p$-norms} \cite{Sch60}, defined by $\big\|X\big\|_p := \left(\sum_{i=1}^{\min\{m,n\}} \sigma_i^p \right)^{1/p}$ for $1 \leq p < \infty$ and $\big\|X\big\|_{\infty} := \displaystyle\lim_{p\rightarrow\infty}\big\|X\big\|_p = \sigma_1$, generalize the operator and trace norms. In particular, the Schatten $1$-norm equals the trace norm and the Schatten $\infty$-norm equals the operator norm. The Frobenius norm also arises in the $p = 2$ case. The operator and trace norms are also generalized by the \emph{Ky Fan $k$-norms} \cite{F51}, defined by $\big\|X\big\|_{(k)} := \sum_{i=1}^k \sigma_i$ for $1 \leq k \leq \min\{m,n\}$. In this case we have the Ky Fan $1$-norm equal to the operator norm and the Ky Fan $\min\{m,n\}$-norm equal to the trace norm. 

A natural generalization of the Ky Fan and Schatten norms are the \emph{$(k,p)$-norms} \cite{MF85}, defined by $\big\|X\big\|_{(k,p)} := \left(\sum_{i=1}^{k} \sigma_i^p \right)^{1/p}$ for $1 \leq k \leq \min\{m,n\}$, $1 \leq p < \infty$ and $\big\|X\big\|_{(k,\infty)} := \displaystyle\lim_{p\rightarrow\infty}\big\|X\big\|_{(k,p)} = \sigma_1$ for $1 \leq k \leq \min\{m,n\}$.

Observe that every matrix norm $\mnorm{\cdot}$ introduced so far has the property that if $U \in M_n$ and $V \in M_m$ are unitary matrices, then $\bmnorm{UXV} = \bmnorm{X}$. Norms with this property are called \emph{unitarily-invariant}, and they are special in that the norm $\mnorm{\cdot}$ is unitarily invariant if and only if there is a function $f : \mathbb{R}^{\min\{m,n\}} \rightarrow \mathbb{R}$ such that $\bmnorm{X} = f(\sigma_1, \sigma_2, \ldots, \sigma_{\min\{m,n\}})$ for all $X \in M_{n,m}$, where $\sigma_1 \geq \sigma_2 \geq \cdots \geq 0$ are the singular values of $X$. They are also special in the sense of the following proposition, which was proved in \cite[Proposition~IV.2.4]{Bha97}.
\begin{prop}\label{prop:unitary_invar_symmetric}
	Let $\mnorm{\cdot}$ be a norm on $M_n$. Then $\mnorm{\cdot}$ is unitarily-invariant if and only if
	\begin{align*}
		\bmnorm{ABC} \leq \big\|A\big\| \bmnorm{B} \big\|C\big\|
	\end{align*}
	for all $A,B,C \in M_n$, where $\|\cdot\|$ refers to the operator norm.
\end{prop}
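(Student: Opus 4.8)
The plan is to prove the two implications separately, with the reverse implication being essentially immediate and the forward implication resting on a single structural lemma about contractions.

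For the ``if'' direction, I would suppose $\mnorm{ABC} \leq \|A\|\,\mnorm{B}\,\|C\|$ holds for all $A,B,C \in M_n$. Given unitaries $U,V$, I note $\|U\| = \|V\| = 1$, so taking $A = U$, $B = X$, $C = V$ gives $\mnorm{UXV} \leq \mnorm{X}$. Applying the same inequality to $U^\dagger(UXV)V^\dagger$ yields the reverse bound $\mnorm{X} \leq \mnorm{UXV}$, hence $\mnorm{UXV} = \mnorm{X}$ and $\mnorm{\cdot}$ is unitarily-invariant.

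For the ``only if'' direction, the key reduction is to first establish the two one-sided statements: for every $Y \in M_n$ and every contraction $A$ (i.e.\ $\|A\| \leq 1$), we have $\mnorm{AY} \leq \mnorm{Y}$ and $\mnorm{YA} \leq \mnorm{Y}$. Granting these, for arbitrary nonzero $A$ I rescale so that $A/\|A\|$ is a contraction, giving $\mnorm{AY} \leq \|A\|\,\mnorm{Y}$ and symmetrically $\mnorm{YC} \le \mnorm{Y}\,\|C\|$, and then
\begin{align*}
	\mnorm{ABC} \leq \|A\|\,\mnorm{BC} \leq \|A\|\,\mnorm{B}\,\|C\|.
\end{align*}

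The heart of the matter — and the step I expect to be the only nontrivial ingredient — is the lemma that every contraction $A \in M_n$ is an average of two unitaries. I would prove it via the singular value decomposition: write $A = W\Sigma V^\dagger$ with $W,V$ unitary and $\Sigma = {\rm diag}(\sigma_1,\dots,\sigma_n)$, where each $0 \leq \sigma_i \leq 1$. Since every number in $[0,1]$ is the real part of a unit complex number, write $\sigma_i = \tfrac{1}{2}(e^{i\theta_i} + e^{-i\theta_i})$; then $\Sigma = \tfrac{1}{2}(D + D^\dagger)$ with $D = {\rm diag}(e^{i\theta_1},\dots,e^{i\theta_n})$ unitary, so $A = \tfrac{1}{2}(WDV^\dagger + WD^\dagger V^\dagger)$ is an average of the two unitaries $U_1 := WDV^\dagger$ and $U_2 := WD^\dagger V^\dagger$. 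The one-sided bounds then follow from unitary invariance and the triangle inequality, namely $\mnorm{AY} = \mnorm{\tfrac{1}{2}(U_1 + U_2)Y} \leq \tfrac{1}{2}\mnorm{U_1 Y} + \tfrac{1}{2}\mnorm{U_2 Y} = \mnorm{Y}$, with the identical argument on the right. Once the contraction-as-average-of-unitaries lemma is in hand, everything else is just rescaling, the triangle inequality, and the unitary invariance we are assuming, so the main obstacle is entirely concentrated in recognizing and proving that lemma.
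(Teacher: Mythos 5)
Your proof is correct: the ``if'' direction by substituting unitaries and sandwiching, and the ``only if'' direction via the lemma that every contraction is the average of two unitaries (proved through the singular value decomposition and $\sigma_i = \tfrac{1}{2}(e^{i\theta_i}+e^{-i\theta_i})$), followed by rescaling and the triangle inequality, all go through without gaps. The paper itself gives no proof of this proposition --- it only cites Proposition~IV.2.4 of Bhatia --- and your argument is essentially the standard one found there, so there is nothing to compare beyond noting that you have correctly reconstructed it.
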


Given a superoperator $\Phi : M_m \rightarrow M_n$ and real numbers $1 \leq p,q \leq \infty$, the \emph{induced Schatten superoperator norm} \cite{KR04,KNR05,Wat05} of $\Phi$ is defined by
\begin{align*}
	\big\|\Phi\big\|_{q \rightarrow p} := \sup_{X}\left\{ \big\|\Phi(X)\big\|_{p} : \big\|X\big\|_q = 1 \right\},
\end{align*}
where $\|\cdot\|_p$ and $\|\cdot\|_q$ are the Schatten $p$- and $q$-norms. Some special cases of these norms that are particularly important include:
\begin{itemize}
	\item $p = q = 1$ gives the \emph{induced trace norm} $\big\|\Phi\big\|_{tr}$, which is the key building block of diamond norm that will be introduced in Section~\ref{sec:cb_norm}. This norm can be useful for measuring the distance between quantum channels in the restricted case when environment subsystems are not permitted. Furthermore, we show in Section~\ref{sec:contrac_sep_crit} that this norm can be used as a tool for detecting entanglement.
	\item $p = q = \infty$ gives the \emph{induced operator norm} $\big\|\Phi\big\|$, which appears frequently in operator theory \cite{P03} and is the key building block of the completely bounded norm that will be introduced in Section~\ref{sec:cb_norm}.
	\item $p = q = 2$ gives the \emph{induced Frobenius norm} of $\Phi$, which is easily seen to be equal to the standard operator norm of $M_\Phi$ -- the matrix associated with $\Phi$ via the vector-operator isomorphism of Section~\ref{sec:vector_operator_isomorphism}.
	\item In the case when $\Phi$ is a quantum channel, $p = \infty, q = 1$ gives the \emph{maximal output purity}, which we will explore in Section~\ref{sec:max_output_purity}.
\end{itemize}

\subsection{Dual Norms}\label{sec:dual_norms}

Given a norm $\mnorm{\cdot}$ on $\mathbb{C}^n$, its \emph{dual norm} $\mnorm{\cdot}^\circ$ is defined as follows:
\begin{align}\label{eq:dual_norm_vec}
	\bmnorm{\ket{v}}^\circ := \sup_{\ket{w}} \left\{ \frac{\big| \braket{w}{v} \big|}{\mnorm{\ket{w}}} \right\}.
\end{align}
Even though Equation~\eqref{eq:dual_norm_vec} only involves unit vectors $\ket{v}$, it extends in the natural way to all of $\mathbb{C}^n$. A direct consequence of this definition is that $|\braket{w}{v}| \leq \mnorm{\ket{v}}^\circ\mnorm{\ket{w}}$ for any $\ket{v}, \ket{w} \in \mathbb{C}^n$ and any norm $\mnorm{\cdot}$. Similarly, if $\mnorm{\cdot}_{(a)}$ and $\mnorm{\cdot}_{(b)}$ are any two norms such that $\mnorm{\cdot}_{(a)} \leq \mnorm{\cdot}_{(b)}$, then $\mnorm{\cdot}_{(a)}^\circ \geq \mnorm{\cdot}_{(b)}^\circ$. Furthermore, the dual of the dual norm is the original norm itself (i.e., $\mnorm{\cdot}^{\circ \circ} = \mnorm{\cdot}$).

The Euclidean norm is self-dual in that $\|\cdot\| = \|\cdot\|^\circ$. More generally, the $p$-norm $\|\cdot\|_p$ is dual to $\|\cdot\|_q$, where $\frac{1}{p} + \frac{1}{q} = 1$, and the convention is made that $p = 1$ implies $q = \infty$ (and vice-versa).

There is nothing particularly special about the space $\mathbb{C}^n$ in the above discussion -- all we need to define a dual norm is a norm and an inner product. Thus we similarly say that if $\mnorm{\cdot}$ is a norm on the space of matrices $M_n$, its dual norm $\mnorm{\cdot}^\circ$ is defined via the Hilbert--Schmidt inner product as follows:
\begin{align*}
	\bmnorm{X}^\circ := \sup_{Y \in M_n}\Big\{ \big|\Tr(X^\dagger Y)\big| : \mnorm{Y} \leq 1 \Big\}.
\end{align*}
Similar to before, it follows that $|\Tr(X^\dagger Y)| \leq \mnorm{X}^\circ\mnorm{Y}$ for any $X, Y \in M_n$ and any norm $\mnorm{\cdot}$, and the other basic properties of dual norms carry over to this setting naturally as well.

The trace norm is the dual of the operator norm (and hence the operator norm is the dual of the trace norm), and the Frobenius norm is its own dual. A well-known generalization of these relationships is that the dual of the Schatten $p$-norm is the Schatten $q$-norm, where $\frac{1}{p} + \frac{1}{q} = 1$ (with the usual convention that $1/\infty = 0$). The duals of the Ky Fan $k$-norms \cite[Exercise IV.2.12(iii)]{Bha97} are slightly less well-known:
\begin{align*}
	\big\|X\big\|_{(k)}^\circ = \max\Big\{ \big\|X\big\|, \frac{1}{k}\big\|X\big\|_{tr} \Big\}.
\end{align*}
The following result \cite[Theorem~3.3]{MF85} goes one step further and characterizes the duals of the $(k,p)$-norms (which we recall contain the operator, trace, Frobenius, Schatten, and Ky Fan norms as special cases).
\begin{thm}\label{thm:pk_norm_dual}
	Let $X \in M_{n,m}$ have singular values $\sigma_1 \geq \sigma_2 \geq \cdots \geq \sigma_{\min\{m,n\}} \geq 0$. Let $r$ be the largest index $1 \leq r < k$ such that $\sigma_r > \sum_{i=r+1}^{\min\{m,n\}}\sigma_{i}/(k-r)$ (or take $r = 0$ if no such index exists). Also define $\tilde{\sigma} := \sum_{i=r+1}^{\min\{m,n\}}\sigma_i/(k-r)$. Then	
	\begin{align*}
		\big\|X\big\|_{(k,p)}^\circ = \begin{cases}\left(\displaystyle\sum_{i=1}^r \sigma_i^q + (k - r)\tilde{\sigma}^q\right)^\frac{1}{q} & \text{ if } p > 1 \\
		\max\{\sigma_1,\tilde{\sigma}\} & \text{ if } p = 1\end{cases},
	\end{align*}
	where $q$ is such that $\frac{1}{p} + \frac{1}{q} = 1$ (with the usual convention that $p = \infty \implies q = 1$).
\end{thm}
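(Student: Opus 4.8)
The plan is to reduce the matrix problem to a finite-dimensional optimization over singular values and then solve that optimization by a pooling (isotonic-regression) argument. Since $\big\|\cdot\big\|_{(k,p)}$ depends only on the singular values of its argument it is unitarily invariant, and hence so is its dual norm. By the standard duality theory of symmetric gauge functions together with von Neumann's trace inequality $\big|\Tr(X^\dagger Y)\big| \le \sum_i \sigma_i(X)\sigma_i(Y)$, in which equality is attained by aligning the singular vectors of $X$ and $Y$ (see \cite{Bha97}), the computation of the dual norm reduces to
\begin{align*}
	\big\|X\big\|_{(k,p)}^\circ = \sup\Big\{ \sum_{i=1}^{N} \sigma_i t_i \ : \ t_1 \ge \cdots \ge t_N \ge 0, \ \sum_{i=1}^{k} t_i^p \le 1 \Big\},
\end{align*}
where $N := \min\{m,n\}$ and $\sigma_1 \ge \cdots \ge \sigma_N \ge 0$ are the singular values of $X$.

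Next I would exploit the structure of the feasible set. The constraint involves only $t_1,\dots,t_k$, while monotonicity forces $t_i \le t_k$ for every $i>k$; since the objective is increasing in each coordinate, the supremum is attained with $t_{k+1}=\cdots=t_N=t_k$. Substituting leaves the task of maximizing $\sum_{i=1}^{k-1}\sigma_i t_i + \big(\sum_{i=k}^{N}\sigma_i\big)t_k$ subject to $t_1 \ge \cdots \ge t_k \ge 0$ and $\sum_{i=1}^{k} t_i^p \le 1$. If one temporarily drops the ordering constraints, the Lagrange conditions (equivalently, the equality case of H\"{o}lder's inequality) give the optimizer $t_i \propto c_i^{q/p}$, where $c_i=\sigma_i$ for $i<k$ and $c_k=\sum_{i=k}^{N}\sigma_i$, with optimal value $\big(\sum_{i=1}^{k-1}\sigma_i^q + c_k^q\big)^{1/q}$. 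This candidate respects the ordering precisely when $\sigma_{k-1}\ge\sum_{i=k}^N\sigma_i$; when that fails, the optimal $t$ must be constant on a terminal block, and the correct breakpoint is exactly the index $r$ in the statement.

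I would then solve the pooled problem. With $t_1,\dots,t_r$ free and $t_{r+1}=\cdots=t_N=\tau$, the objective becomes $\sum_{i=1}^{r}\sigma_i t_i + (k-r)\tilde{\sigma}\,\tau$ and the constraint becomes the weighted $p$-ball $\sum_{i=1}^{r}t_i^p + (k-r)\tau^p \le 1$. Rescaling the pooled variable via $u=(k-r)^{1/p}\tau$ turns this into an unweighted $\ell_p$-ball with effective coefficient vector $\big(\sigma_1,\dots,\sigma_r,(k-r)^{-1/p}\sum_{i>r}\sigma_i\big)$, so H\"{o}lder gives the value $\big(\sum_{i=1}^{r}\sigma_i^q + (k-r)\tilde{\sigma}^q\big)^{1/q}$ after simplifying $(k-r)^{-q/p}\big((k-r)\tilde{\sigma}\big)^q = (k-r)\tilde{\sigma}^q$. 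This is the claimed formula for $p>1$; the case $p=1$ follows by letting $p\downarrow 1$ (so that the $\ell_q$ norm becomes the maximum), recovering $\max\{\sigma_1,\tilde{\sigma}\}$.

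The main obstacle will be the middle step: rigorously justifying that the ordering constraints are active exactly on the terminal block $\{r+1,\dots,N\}$ and inactive on $\{1,\dots,r\}$ for the specific $r$ defined as the largest index with $\sigma_r > \tilde{\sigma}$. I would handle this through the KKT conditions of the concave program (equivalently, the pool-adjacent-violators principle): the inequality $\sigma_r>\tilde{\sigma}$ guarantees that the free value at index $r$ dominates the pooled value $\tau$, so no ordering constraint among $\{1,\dots,r\}$ binds, while maximality of $r$ forces $\tilde{\sigma}\ge\sigma_{r+1}$, which is exactly what makes pooling the block optimal and rules out any finer splitting. The delicate point is that $\tilde{\sigma}$ depends on $r$ through its own definition, so one must verify that these two inequalities are mutually consistent with that self-referential definition; establishing this consistency is what pins down $r$ uniquely and completes the proof.
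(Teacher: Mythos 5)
The paper does not actually prove this statement --- it is imported verbatim from Mudholkar and Freimer \cite[Theorem~3.3]{MF85} and used as a black box --- so there is no in-paper argument to compare yours against; I can only assess your proof on its own terms, and it is essentially correct. The reduction via unitary invariance and von Neumann's trace inequality to the vector problem $\sup\{\sum_i \sigma_i t_i : t_1 \ge \cdots \ge t_N \ge 0,\ \sum_{i=1}^k t_i^p \le 1\}$ is valid, the elimination $t_{k+1}=\cdots=t_N=t_k$ is forced by monotonicity of the objective, and your rescaling of the pooled block correctly yields the stated value, including the simplification $(k-r)^{-q/p}\bigl((k-r)\tilde{\sigma}\bigr)^q=(k-r)\tilde{\sigma}^q$.

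The step you flag as delicate is indeed the only real content remaining, and it does go through. If $r<k-1$, maximality of $r$ gives $\sigma_{r+1}\le \sum_{i=r+2}^{N}\sigma_i/(k-r-1)$, whence
\begin{align*}
(k-r)\tilde{\sigma} \ =\ \sigma_{r+1}+\sum_{i=r+2}^{N}\sigma_i \ \ge\ \sigma_{r+1}+(k-r-1)\sigma_{r+1} \ =\ (k-r)\sigma_{r+1},
\end{align*}
so $\tilde{\sigma}\ge\sigma_{r+1}$ (and trivially $\tilde{\sigma}\ge\sigma_{k}$ when $r=k-1$), while $\sigma_r>\tilde{\sigma}$ is the definition of $r$ itself. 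These two inequalities are exactly what certify your candidate: $t_i\propto\sigma_i^{q-1}$ for $i\le r$ and $t_i\propto\tilde{\sigma}^{q-1}$ for $i>r$ is feasible because $\sigma_r>\tilde{\sigma}$ keeps the ordering constraints on $\{1,\dots,r\}$ slack, and the multipliers attached to the pooled ordering constraints work out to $\mu_{r+j}=\lambda\bigl(j\tilde{\sigma}-\sum_{l=1}^{j}\sigma_{r+l}\bigr)$ for $1\le j\le k-r-1$, which are nonnegative because the running averages $\frac{1}{j}\sum_{l=1}^{j}\sigma_{r+l}$ are nonincreasing in $j$ and start at $\sigma_{r+1}\le\tilde{\sigma}$. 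Since you are maximizing a linear functional over a convex set, the KKT conditions are sufficient for global optimality (no constraint qualification needed in that direction), so no finer splitting can do better. Two small points to tighten: state explicitly why the pooled optimum is a global rather than local maximum (KKT sufficiency for the concave program, or the fact that a local maximum of a linear functional on a convex set is global), and handle $p=1$ directly rather than by letting $p\downarrow 1$ --- H\"{o}lder with $q=\infty$ applied to the pooled problem immediately gives $\max\{\sigma_1,\dots,\sigma_r,\tilde{\sigma}\}=\max\{\sigma_1,\tilde{\sigma}\}$, whereas the limiting argument would require separately justifying continuity of the dual norm in $p$.
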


For superoperators, we don't consider dual norms themselves. However, we frequently make use of the fact that $1/p + 1/p^\prime = 1/q + 1/q^\prime = 1$ then $\big\|\Phi\big\|_{q \rightarrow p} = \big\|\Phi^\dagger\big\|_{p^\prime \rightarrow q^\prime}$, which follows easily from the duality result for Schatten $p$-norms.

\subsection{Completely Bounded Norms}\label{sec:cb_norm}

Recall the induced Schatten superoperator norms defined by
\begin{align*}
	\big\|\Phi\big\|_{q \rightarrow p} := \sup_{X}\left\{ \big\|\Phi(X)\big\|_{p} : \big\|X\big\|_q = 1 \right\}.
\end{align*}
Based on these norms, we define $\big\|\Phi\big\|_{k,q\rightarrow p} := \big\|id_k \otimes \Phi\big\|_{q \rightarrow p}$ and note that $\big\|\Phi\big\|_{1,q\rightarrow p} = \big\|\Phi\big\|_{q\rightarrow p}$. In the $p = q = \infty$ and $p = q = 1$ cases, we also define the \emph{completely bounded} and \emph{diamond} versions of these norms:
\begin{align*}
	\big\|\Phi\big\|_{cb} := \sup_{k \geq 1} \Big\{ \big\| \Phi \big\|_{k,\infty \rightarrow \infty} \Big\} \ \ text{ and } \ \ \big\|\Phi\big\|_{\diamond} := \sup_{k \geq 1} \Big\{ \big\| \Phi \big\|_{k,1 \rightarrow 1} \Big\}.
\end{align*}
Note that, because $\big\|\Phi\big\| = \big\|\Phi^\dagger\big\|_{tr}$ for all $\Phi$, we also have $\big\|\Phi\big\|_{cb} = \big\|\Phi^\dagger\big\|_{\diamond}$.

It was shown by Smith \cite{S83} (and independently later by Kitaev \cite{Kit97} from the dual perspective) that if $\Phi : M_m \rightarrow M_n$ then it suffices to fix $k = n$ so that $\big\|\Phi\big\|_{cb} = \big\| id_n \otimes \Phi \big\|$. We also have the following well-known result \cite{JKP09}, which follows from a generalization of Stinespring's theorem to completely bounded maps \cite[Theorem~8.4]{P03}.
\begin{thm}\label{thm:cb_norm_gen_kraus}
	Let $\Phi : M_m \rightarrow M_n$ be a linear map. Then
	\begin{align*}
		\big\|\Phi\big\|_{cb}^2 = \inf\left\{ \Big\| \sum_i A_i A_i^\dagger \Big\| \Big\| \sum_i B_i B_i^\dagger \Big\| \right\},
	\end{align*}
	where the infimum is taken over all generalized Choi--Kraus representations $\Phi = \sum_i A_i X B_i^\dagger$. Furthermore, the infimum is attained.
\end{thm}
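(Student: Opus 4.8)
The plan is to prove this formula for $\|\Phi\|_{cb}$ by reducing it to the Stinespring-type representation theorem for completely bounded maps, specifically \cite[Theorem~8.4]{P03}. The key conceptual point is that a general (not necessarily completely positive) map $\Phi$ with generalized Choi--Kraus representation $\Phi(X) = \sum_i A_i X B_i^\dagger$ should be ``dilated'' to reveal its completely bounded norm, and the two families $\{A_i\}$ and $\{B_i\}$ play symmetric roles corresponding to the two legs of the dilation.

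**First I would** establish the upper bound $\|\Phi\|_{cb}^2 \le \|\sum_i A_i A_i^\dagger\| \, \|\sum_i B_i B_i^\dagger\|$ for an arbitrary generalized Choi--Kraus representation. The idea is to factor $\Phi$ through a ``row operator'' and ``column operator.'' Collect the $A_i$ into a row operator $R = [A_1 \ A_2 \ \cdots]$ and the $B_i$ into a column structure, so that $\Phi(X) = R \, (\mathrm{diag}(X,X,\ldots)) \, C^\dagger$ for appropriate $C$ built from the $B_i$. One computes $R R^\dagger = \sum_i A_i A_i^\dagger$ and $C C^\dagger = \sum_i B_i B_i^\dagger$. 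Then for any $k$, applying $id_k \otimes \Phi$ and using the submultiplicativity of the operator norm together with Proposition~\ref{prop:unitary_invar_symmetric} (in the form $\|ABC\| \le \|A\|\|B\|\|C\|$), one gets $\|id_k \otimes \Phi\| \le \|R\| \, \|C\|$ uniformly in $k$. Since $\|R\| = \|RR^\dagger\|^{1/2} = \|\sum_i A_i A_i^\dagger\|^{1/2}$ and similarly for $C$, taking the supremum over $k$ gives the upper bound. Taking the infimum over all representations preserves the inequality.

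**The hard part will be** the reverse inequality together with attainment, and this is exactly where I would invoke the completely bounded version of Stinespring's theorem \cite[Theorem~8.4]{P03}. That theorem guarantees that any completely bounded $\Phi$ admits a representation $\Phi(X) = V_1^\dagger \pi(X) V_2$ through a $*$-representation $\pi$, with $\|\Phi\|_{cb} = \|V_1\| \, \|V_2\|$ and the norm realized by this factorization. Since $M_m$ is finite-dimensional, the representation $\pi$ decomposes as a multiple of the identity representation, which lets me extract explicit operators: writing $\pi(X) = I \otimes X$ after a unitary change of basis, the factorization $\Phi(X) = V_1^\dagger (I \otimes X) V_2$ unpacks into a generalized Choi--Kraus representation $\Phi(X) = \sum_i A_i X B_i^\dagger$ where the $A_i$ are the ``blocks'' of $V_1$ and the $B_i$ are the blocks of $V_2$. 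For this particular representation one checks directly that $\|\sum_i A_i A_i^\dagger\| = \|V_1\|^2$ and $\|\sum_i B_i B_i^\dagger\| = \|V_2\|^2$, so that $\|\sum_i A_i A_i^\dagger\|\,\|\sum_i B_i B_i^\dagger\| = (\|V_1\|\,\|V_2\|)^2 = \|\Phi\|_{cb}^2$. This exhibits a representation achieving the bound, giving both the reverse inequality and the claim that the infimum is attained.

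**The main obstacle I anticipate** is the careful bookkeeping in translating between the abstract Stinespring factorization $V_1^\dagger \pi(\cdot) V_2$ and the concrete generalized Choi--Kraus form, in particular verifying that the operator-norm quantities $\|\sum_i A_i A_i^\dagger\|$ and $\|\sum_i B_i B_i^\dagger\|$ correspond precisely to $\|V_1\|^2$ and $\|V_2\|^2$ rather than to some other combination. The essential reason this works is that $\sum_i A_i A_i^\dagger$ is the compression $V_1^\dagger V_1$ restricted appropriately (equivalently the square of the row-operator norm), and the operator norm of a row operator equals the operator norm of the single block $\sum_i A_i A_i^\dagger$ raised to the half power; I would make this precise via the standard identity $\|R\|^2 = \|RR^\dagger\|$. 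Care is needed because $\Phi$ need not be completely positive, so the $A_i$ and $B_i$ genuinely differ and the symmetry between the two factors must be tracked throughout.
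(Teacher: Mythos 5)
Your proposal is correct and follows exactly the route the paper indicates: the paper gives no proof of its own but attributes the result to the generalized Stinespring representation theorem \cite[Theorem~8.4]{P03}, which is precisely what you invoke for the lower bound and attainment, while your row/column-operator factorization for the upper bound is the standard companion argument. The only care needed is the bookkeeping you already flag (the blocks of $V_1$ are the $A_i^\dagger$, so $V_1^\dagger V_1 = \sum_i A_i A_i^\dagger$), and that checks out.
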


\subsection{Fidelity}\label{sec:fidelity}

There are many different tools that can be used to measure how similar two quantum states $\rho$ and $\sigma$ are \cite{Fuc96}. The measure that will be most useful for us is the \emph{fidelity} $\cl{F}$ \cite{Uhl76,Joz94}, which is defined by
\begin{align*}
	\cl{F}(\rho,\sigma) := \big\|\sqrt{\rho}\sqrt{\sigma}\big\|_{tr}^2.
\end{align*}
In the case when $\sigma$ is a pure state, the fidelity reduces to simply
\begin{align}\label{eq:fidelity_pure}
	\cl{F}(\rho,\ketbra{v}{v}) = \big\|\sqrt{\rho}\ketbra{v}{v}\big\|_{tr}^2 = \Tr \left( \sqrt{ \ketbra{v}{v} \rho \ketbra{v}{v}} \right)^2 = \bra{v}\rho\ket{v}.
\end{align}
The fidelity can be thought of as a measure of the overlap between $\rho$ and $\sigma$, and it satisfies $0 \leq \cl{F}(\rho,\sigma) \leq 1$ with $\cl{F}(\rho,\sigma) = 1$ if and only if $\rho = \sigma$.

\subsection{Realignment Criterion}\label{sec:realign}

We now briefly present a well-known separability criterion based on the trace norm. The \emph{realignment criterion} \cite{CW03} (sometimes called the \emph{computable cross-norm criterion} \cite{R03}) says that if a state $\rho \in M_m \otimes M_n$ is separable, then the following (equivalent) conditions must hold:
\begin{enumerate}[(1)]
	\item $\big\|R(\rho)\big\|_{tr} \leq 1$, where $R : M_m \otimes M_n \rightarrow M_{m,n} \otimes M_{m,n}$ is the linear map defined by $R(\ketbra{i}{j} \otimes \ketbra{k}{\ell}) \rightarrow \ketbra{i}{k} \otimes \ketbra{j}{\ell}$; and

	\item if we write $\rho$ in its operator-Schmidt decomposition $\rho = \sum_{i} \alpha_i A_i \otimes B_i$, then $\sum_i \alpha_i \leq 1$.
\end{enumerate}

To see condition (1), observe that if $\rho = \sum_i p_i \ketbra{v_i}{v_i} \otimes \ketbra{w_i}{w_i}$ is separable, then $R(\rho) = \sum_i p_i \ket{v_i}\overline{\bra{w_i}} \otimes \overline{\ket{v_i}}\bra{w_i}$. If we let $\mnorm{\cdot}$ be a unitarily-invariant matrix norm scaled so that $\bmnorm{\ketbra{1}{1} \otimes \ketbra{1}{1}} = 1$, then
\begin{align*}
	\bmnorm{R(\rho)} = \mnorm{\sum_i p_i \ket{v_i}\overline{\bra{w_i}} \otimes \overline{\ket{v_i}}\bra{w_i}} \leq \sum_i p_i \mnorm{\ket{v_i}\overline{\bra{w_i}} \otimes \overline{\ket{v_i}}\bra{w_i}} = \sum_i p_i = 1.
\end{align*}

The reason for choosing $\mnorm{\cdot} = \|\cdot\|_{tr}$ is that the trace norm is the largest unitarily-invariant matrix norm satisfying the given scaling condition, so the trace norm provides the strongest of these conditions for separability.

To see that condition (2) is equivalent to condition (1), simply note that
\begin{align}\label{eq:sing_val_Rrho}
	R(\rho) = R\left(\sum_{i} \alpha_i A_i \otimes B_i\right) = \sum_i \alpha_i S{\rm vec}(A_i){\rm vec}(B_i)^T,
\end{align}
where we recall that $S$ is the swap operator. Since the families of operators $\big\{A_i\big\}$ and $\big\{B_i\big\}$ are orthonormal in the Hilbert--Schmidt inner product, the families $\big\{S{\rm vec}(A_i)\big\}$ and $\big\{{\rm vec}(B_i)\big\}$ are also orthonormal, so the sum~\eqref{eq:sing_val_Rrho} is the singular value decomposition of $R(\rho)$. Thus the coefficients $\{\alpha_i\}$ are exactly the singular values of $R(\rho)$, so $\sum_i \alpha_i = \big\|R(\rho)\big\|_{tr}$.

It is worth looking at the map $R$ in terms of block matrices. If we write $\rho \in M_m \otimes M_n$ as a block matrix
\begin{align*}\sspp
	\rho = \begin{bmatrix}\rho_{11} & \rho_{12} & \cdots & \rho_{1m} \\ \rho_{21} & \rho_{22} & \cdots & \rho_{2m} \\ \vdots & \vdots & \ddots & \vdots \\ \rho_{m1} & \rho_{m2} & \cdots & \rho_{mm}\end{bmatrix},
\dsp\end{align*}
where each $\rho_{ij} \in M_n$, then
\begin{align*}\sspp
	R(\rho) = \begin{bmatrix} {\rm vec}(\rho_{11})^T \\ \vdots \\ {\rm vec}(\rho_{m1})^T \\ \vdots \\ {\rm vec}(\rho_{1m})^T \\ \vdots \\ {\rm vec}(\rho_{mm})^T \dsp\end{bmatrix}.
\end{align*}

Yet another way of looking at the realignment map is simply as a composition of a swap operator and the partial transpose map. More specifically, we have $R = \Phi \circ (id \otimes T) \circ \Phi$, where $\Phi(X) = XS$ is the map that multiplies on the right by the swap operator.

The realignment map, much like the partial transpose map, simply moves the matrix elements of $\rho$ around in the standard basis. Much like the partial transpose map is the prototypical example of a separability criterion based on the cone of positive maps, the realignment criterion is the prototypical example of a separability criterion based on norms. The connection between separability criteria based on positive cones and those based on norms was explored in \cite{HHH06}, and is the focus of much of Chapters~\ref{chap:Sk_norms} and~\ref{ch:optheory}.

The remainder of this section is devoted to proving the following generalization of the realignment criterion to arbitrary Schmidt number. Recall that the norm $\|\cdot\|_{(k^2,2)}^{\circ}$ of the following theorem was characterized by Theorem~\ref{thm:pk_norm_dual}.
\begin{thm}\label{thm:gen_realign}
	If $\rho \in M_m \otimes M_n$ has $SN(\rho) \leq k$, then $\big\|R(\rho)\big\|_{(k^2,2)}^{\circ} \leq 1$.
\end{thm}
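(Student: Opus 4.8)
The plan is to reduce the statement to a single pure state of Schmidt rank at most $k$, and then to evaluate the dual norm explicitly on the (low-rank) realigned matrix that results.

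First I would use convexity. Since $SN(\rho)\le k$, write $\rho = \sum_\ell p_\ell \ketbra{v_\ell}{v_\ell}$ with $SR(\ket{v_\ell})\le k$ for all $\ell$ and $\{p_\ell\}$ a probability distribution. Because $R$ is linear and $\|\cdot\|_{(k^2,2)}^\circ$ is a norm, the triangle inequality gives
\begin{align*}
	\big\|R(\rho)\big\|_{(k^2,2)}^\circ \le \sum_\ell p_\ell \big\|R(\ketbra{v_\ell}{v_\ell})\big\|_{(k^2,2)}^\circ .
\end{align*}
As $\sum_\ell p_\ell = 1$, it suffices to show $\|R(\ketbra{v}{v})\|_{(k^2,2)}^\circ \le 1$ for every unit vector $\ket{v}$ with $SR(\ket{v})\le k$. (Note this reduction is essential: $R(\rho)$ itself may have rank larger than $k^2$, so the low-rank structure below is available only at the level of pure states.)

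Next I would pin down the singular values of $R(\ketbra{v}{v})$. Using the Schmidt decomposition $\ket{v}=\sum_{a=1}^k \lambda_a\ket{e_a}\otimes\ket{f_a}$ we get $\ketbra{v}{v}=\sum_{a,b=1}^k \lambda_a\lambda_b\,\ketbra{e_a}{e_b}\otimes\ketbra{f_a}{f_b}$, and the families $\{\ketbra{e_a}{e_b}\}$ and $\{\ketbra{f_a}{f_b}\}$ are orthonormal in the Hilbert--Schmidt inner product. Hence this is an orthogonal decomposition into elementary tensors, so by the computation of the singular values of $R(\cdot)$ from an operator-Schmidt decomposition (Equation~\eqref{eq:sing_val_Rrho} and the surrounding discussion) the nonzero singular values of $R(\ketbra{v}{v})$ are exactly the numbers $\lambda_a\lambda_b$. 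In particular $R(\ketbra{v}{v})$ has rank at most $k^2$, and $\sum_{a,b}(\lambda_a\lambda_b)^2 = \big(\sum_a \lambda_a^2\big)^2 = 1$, so $\|R(\ketbra{v}{v})\|_F = 1$.

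The core of the argument is then the observation that $\|\cdot\|_{(K,2)}^\circ$ coincides with the Frobenius norm on matrices of rank at most $K$. To prove that $\mathrm{rank}(X)\le K$ implies $\|X\|_{(K,2)}^\circ = \|X\|_F$, write the singular value decomposition $X=\sum_j \sigma_j\ketbra{a_j}{b_j}$ (at most $K$ terms), extend to length-$K$ orthonormal families, and set $P=\sum_{j=1}^K\ketbra{a_j}{a_j}$, $Q=\sum_{j=1}^K\ketbra{b_j}{b_j}$. For the upper bound, any $Y$ with $\|Y\|_{(K,2)}\le 1$ satisfies, by Cauchy--Schwarz, $|\Tr(X^\dagger Y)| \le \|X\|_F\big(\sum_j|\bra{a_j}Y\ket{b_j}|^2\big)^{1/2} \le \|X\|_F\,\|PYQ\|_F$, and since $PYQ$ has rank at most $K$ we have $\|PYQ\|_F=\|PYQ\|_{(K,2)}\le\|Y\|_{(K,2)}\le 1$ using Proposition~\ref{prop:unitary_invar_symmetric} (because $\|P\|=\|Q\|=1$). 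The matching lower bound follows by taking $Y=X/\|X\|_F$, which has $\|Y\|_{(K,2)}=1$ and $\Tr(X^\dagger Y)=\|X\|_F$. Applying this with $K=k^2$ and $X=R(\ketbra{v}{v})$ gives $\|R(\ketbra{v}{v})\|_{(k^2,2)}^\circ=\|R(\ketbra{v}{v})\|_F=1$, finishing the proof.

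I expect the main obstacle to be this last step, namely recognizing that the relevant dual norm collapses to the Frobenius norm on rank-$\le k^2$ matrices and justifying the compression inequality $\sum_j|\bra{a_j}Y\ket{b_j}|^2\le\|Y\|_{(k^2,2)}^2$. An alternative is to extract the same identity directly from the explicit formula of Theorem~\ref{thm:pk_norm_dual}, by verifying that its tail-averaging term vanishes when at most $k^2$ singular values are nonzero; however, the compression argument above is cleaner and makes the underlying reason transparent.
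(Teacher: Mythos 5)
Your proof is correct and follows essentially the same route as the paper's: decompose $\rho$ into projectors onto Schmidt-rank-$\le k$ states, observe that each realigned projector has rank at most $k^2$ and unit Frobenius norm, and invoke the fact that $\|\cdot\|_{(k^2,2)}^{\circ}$ coincides with the Frobenius norm on matrices of rank at most $k^2$. The only place you go beyond the paper is that you actually prove this last coincidence via the compression $Y \mapsto PYQ$, whereas the paper asserts it from self-duality of the Frobenius norm; that added detail is welcome but does not change the argument.
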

\begin{proof}
	Begin by writing $\rho$ as a convex combination of projections onto states with Schmidt rank no greater than $k$:
	\begin{align*}
		\rho = \sum_i p_i \sum_{j,\ell=1}^k \alpha_{ij}\alpha_{i\ell}\ketbra{v_{ij}}{v_{i\ell}} \otimes \ketbra{w_{ij}}{w_{i\ell}}
	\end{align*}
	Then
	\begin{align*}
		R(\rho) = \sum_i p_i \left(\sum_{j=1}^k \alpha_{ij}\ket{v_{ij}}\overline{\bra{w_{ij}}}\right) \otimes \left(\sum_{\ell=1}^k\alpha_{i\ell}\overline{\ket{v_{i\ell}}}\bra{w_{i\ell}}\right).
	\end{align*}
	If we define $A_i := \sum_{j=1}^k \alpha_{ij}\ket{v_{ij}}\overline{\bra{w_{ij}}}$ then we have $R(\rho) = \sum_i p_i A_i \otimes \overline{A_i}$, where ${\rm rank}(A_i) \leq k$ and $\big\|A_i\big\|_F = 1$ for all $i$. In particular then, we have $R(\rho) = \sum_i p_i B_i$, where ${\rm rank}(B_i) \leq k^2$ and $\big\|B_i\big\|_F = 1$ for all $i$. Let $\mnorm{\cdot}$ be a unitarily-invariant matrix norm with the property that $\bmnorm{X} = \big\|X\big\|_F$ for all $X$ with ${\rm rank}(X) \leq k^2$. Then
\begin{align*}
	\bmnorm{R(\rho)} = \mnorm{\sum_i p_i B_i} \leq \sum_i p_i \mnorm{B_i} = \sum_i p_i \big\|B_i\big\|_F = \sum_i p_i = 1.
\end{align*}
All that remains is to make a suitable choice for $\mnorm{\cdot}$, so that this test for Schmidt number is as strong as possible. To this end, notice that $\|\cdot\|_{(k^2,2)}$ is clearly the smallest unitarily-invariant matrix norm with the required rank property. Also notice that, because the Frobenius norm is self-dual, $\|\cdot\|_{(k^2,2)}^{\circ}$ must satisfy the same rank property, and in particular must be the largest such matrix norm. We thus choose $\mnorm{\cdot} = \|\cdot\|_{(k^2,2)}^{\circ}$, which completes the proof.
\end{proof}

Notice that when $k = 1$, $\|\cdot\|_{(k^2,2)}^{\circ} = \|\cdot\|_{tr}$, so Theorem~\ref{thm:gen_realign} gives the standard realignment criterion in this case. On the other extreme, if $k = \min\{m,n\}$ then $\|\cdot\|_{(k^2,2)}^{\circ} = \|\cdot\|_{F}$. Because $R$ preserves the Frobenius norm, Theorem~\ref{thm:gen_realign} then simply says that $\|\rho\|_{F} \leq 1$ for all quantum states $\rho$, which is trivially true because $\|\rho\|_{F} \leq \|\rho\|_{tr} = 1$. The conditions given for the remaining values of $k$ are all non-trivial, yet easy to compute.

\section{Linear Preserver Problems}\label{sec:linear_preservers}

\subsection{Classical Results}\label{sec:linear_preservers_classical}

Some of our results will stem from classical results about linear maps on complex matrices that preserve some of their properties (such as their rank, singular values, or operator norm). The problem of characterizing such maps is known as a linear preserver problem, and the interested reader can find an overview of the subject in \cite{GLS00,LP01,LT92}.

One linear preserver problem that we have already seen is the problem of characterizing maps $\Phi : M_m \rightarrow M_n$ such that $(id_m \otimes \Phi)(X)$ is positive semidefinite whenever $X \in M_m \otimes M_m$ is positive semidefinite -- these are completely positive maps, which were characterized by Theorem~\ref{thm:choi_cp}. 

Another classical linear preserver problem that will be of great use to us concerns maps that are rank-$k$-non-increasing (or equivalently, rank-$k$-preserving).

\begin{prop}\label{prop:rankKpreserver}
	Let $k,m,n$ be positive integers such that $1 \leq k < \min\{m, n\}$ and let $\Phi : M_{n,m} \rightarrow M_{n,m}$ be an invertible linear map. Then ${\rm rank}(\Phi(X)) \leq k$ whenever ${\rm rank}(X) \leq k$ if and only if there exist nonsingular $P \in M_n$ and $Q \in M_m$ such that $\Phi$ is of one of the following two forms:
	\begin{align}\label{eq:rankKform}
		\Phi(X) = PXQ \quad \text{ or } \quad n = m \text{ and } \Phi(X) = PX^TQ.
	\end{align}
\end{prop}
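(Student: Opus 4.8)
The plan is to dispose of the easy direction first and then reduce the hard direction to the classical rank-one preserver theorem. The ``if'' direction is immediate: multiplication on either side by a nonsingular matrix preserves rank \emph{exactly}, as does transposition, so any $\Phi$ of the form~\eqref{eq:rankKform} is rank-preserving and hence in particular rank-$k$-non-increasing. For the converse I would first upgrade the hypothesis to a symmetric statement. Let $V_j := \{X \in M_{n,m} : {\rm rank}(X) \le j\}$ denote the $j$-th determinantal variety, cut out by the vanishing of all $(j+1)\times(j+1)$ minors; the hypothesis says $\Phi(V_k) \subseteq V_k$. Since $\Phi$ is an invertible linear map, $\Phi(V_k)$ is a closed subvariety of the ambient space isomorphic to $V_k$ and therefore of the same dimension; being contained in the irreducible variety $V_k$ of that same dimension, it must equal $V_k$. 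Thus $\Phi$ restricts to a bijection of $V_k$ onto itself, and in particular $\Phi^{-1}$ is rank-$k$-non-increasing as well.

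Next I would descend from $V_k$ all the way down to $V_1$. The singular locus of the determinantal variety $V_j$ is exactly $V_{j-1}$ (for $1 \le j < \min\{m,n\}$), and since $\Phi|_{V_k}$ is an isomorphism of varieties it must carry singular loci to singular loci. Applying this observation by downward induction on $j$ from $k$ to $1$ gives $\Phi(V_j) = V_j$ for every $j \le k$; in particular $\Phi(V_1) = V_1$. Because $\Phi$ is linear with $\Phi(0)=0$, this says precisely that $\Phi$ maps rank-one matrices \emph{bijectively} to rank-one matrices. Note that the hypothesis $k < \min\{m,n\}$ is exactly what makes this descent meaningful, since it guarantees that $V_{k-1}$ is a genuine proper singular locus rather than all of $M_{n,m}$.

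With rank-one preservation in hand, I would invoke the classical rank-one preserver theorem (of Marcus--Moyls type; see the surveys \cite{LT92,LP01}): an invertible linear map of $M_{n,m}$ sending rank-one matrices to rank-one matrices must have the form $X \mapsto PXQ$, or $X \mapsto PX^TQ$, for some nonsingular $P$ and $Q$, with the transpose alternative available only when $n = m$ so that $X^T$ again lies in $M_{n,m}$. Invertibility of $\Phi$ forces $P$ and $Q$ to be nonsingular, which yields exactly the two forms in~\eqref{eq:rankKform}.

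The main obstacle is the reduction to rank-one preservation carried out in the second paragraph; this step carries essentially all of the content, whereas the final appeal to the rank-one preserver theorem is entirely standard. The algebraic-geometric route I sketched is clean but leans on structural facts about determinantal varieties (their dimension $j(m+n-j)$ and the identification ${\rm Sing}(V_j) = V_{j-1}$). An alternative, more self-contained route would be to argue combinatorially that a linear bijection preserving $V_k$ must preserve the \emph{adjacency} relation of matrices differing by a single rank-one term, and then apply a Hua-type fundamental-theorem argument to extract the maps $X \mapsto PXQ$ and $X \mapsto PX^TQ$ directly; either way the difficulty is concentrated in passing from the rank-$k$ hypothesis down to the rank-one level.
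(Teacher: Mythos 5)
Your proof is correct, but there is no in-paper proof to compare it against: the paper states this proposition as a quoted result, citing \cite{B88,BL90,L89} for the version where rank-$k$ matrices are mapped to matrices of rank exactly $k$ and \cite{GLS00} for the ``rank at most $k$'' strengthening used here. The comparison is therefore with the linear-preserver literature. The cited proofs are elementary but laborious, proceeding via dimension bounds for linear subspaces consisting of matrices of bounded rank and adjacency-type arguments to force the map down to a rank-one preserver. Your route compresses exactly that reduction into two standard facts about the determinantal varieties $V_j$ over $\bb{C}$: irreducibility together with a dimension count upgrades $\Phi(V_k)\subseteq V_k$ to $\Phi(V_k)=V_k$, and the identification ${\rm Sing}(V_j)=V_{j-1}$ lets you descend from $j=k$ to $j=1$, after which the Marcus--Moyls rank-one preserver theorem \cite{MM59} finishes the job. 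Both steps are sound: the descent genuinely uses $k<\min\{m,n\}$ (so that $V_k$ is a proper, hence singular, variety), and the degenerate rank-one preservers of the form $X\mapsto f(X)\ketbra{u}{v}$ --- which do send rank-one matrices to matrices of rank at most one and must be excluded before quoting Marcus--Moyls --- are ruled out by invertibility exactly as you note. What your approach buys is brevity and a conceptual explanation of why the rank-$k$ hypothesis propagates all the way down to rank one; what it costs is the import of nontrivial algebraic geometry (irreducibility of determinantal varieties and the computation of their singular loci) in place of the self-contained linear algebra of the original references.
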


Proposition~\ref{prop:rankKpreserver} is more often stated for maps that send operators with rank $k$ to operators with exactly rank $k$ \cite{B88,BL90,L89}. The above stronger version involving operators of rank at most $k$ can be found in \cite{GLS00}.

\subsection{Norm Isometries}\label{sec:linear_preservers_isometries}

The problem of characterizing linear maps that preserve a certain norm (i.e., the problem of characterizing the \emph{isometries} of that norm) can be thought of as a specific type of linear preserver problem. The set of isometries of the Euclidean norm on $\mathbb{C}^n$ is exactly the unitary group $U(n) \in M_n$, and in fact the set of isometries of any norm is always a group. Slightly less obvious is the fact that if an operator preserves the Euclidean norm of separable pure states then it preserves the norm of \emph{all} pure states (and hence is unitary). To prove this statement, we first need the following lemma.
\begin{lemma}\label{lem:sep_prod_zero}
	Let $X \in M_m \otimes M_n$. Then $\bra{v}X\ket{v} = 0$ for all separable $\ket{v} \in \bb{C}^m \otimes \bb{C}^n$ if and only if $X = 0$.
\end{lemma}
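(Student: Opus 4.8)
The \emph{if} direction is immediate: if $X = 0$ then $\bra{v}X\ket{v} = 0$ for every $\ket{v}$, separable or not. So all of the content lies in the \emph{only if} direction, and the plan is to exploit the block structure of $X$ together with one elementary fact about complex quadratic forms: if $M \in M_d$ satisfies $\bra{w}M\ket{w} = 0$ for \emph{all} $\ket{w} \in \bb{C}^d$, then $M = 0$. This is a one-line polarization argument over $\bb{C}$: substituting $\ket{w} \mapsto \ket{u}+\ket{w}$ gives $\bra{u}M\ket{w} + \bra{w}M\ket{u} = 0$, and substituting $\ket{w}\mapsto\ket{u}+i\ket{w}$ gives $\bra{u}M\ket{w} - \bra{w}M\ket{u} = 0$; adding these yields $\bra{u}M\ket{w} = 0$ for all $\ket{u},\ket{w}$. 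It is worth emphasising that this fact is genuinely complex — it fails over $\bb{R}$.

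Next I would write $X$ in the block form $X = \sum_{i,j=1}^{m} \ketbra{i}{j} \otimes X_{ij}$ with $X_{ij} \in M_n$, exactly as in Section~\ref{sec:blockpos}. For a separable vector $\ket{v} = \ket{a}\otimes\ket{b}$ a direct computation gives
\begin{align*}
	\bra{v}X\ket{v} = \sum_{i,j=1}^{m}\overline{a_i}a_j\bra{b}X_{ij}\ket{b} = \bra{a}M_b\ket{a},
\end{align*}
where $M_b \in M_m$ is the matrix with entries $(M_b)_{ij} := \bra{b}X_{ij}\ket{b}$. The hypothesis then reads $\bra{a}M_b\ket{a} = 0$ for every $\ket{a} \in \bb{C}^m$ and every $\ket{b} \in \bb{C}^n$, and the idea is to apply the quadratic-form fact twice, once in each tensor factor. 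Fixing $\ket{b}$ and letting $\ket{a}$ range over all of $\bb{C}^m$ forces $M_b = 0$, i.e.\ $\bra{b}X_{ij}\ket{b} = 0$ for all $i,j$; since this holds for every $\ket{b}$, a second application (now to each $X_{ij}$) gives $X_{ij} = 0$ for all $i,j$, whence $X = 0$.

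The main — indeed the only — obstacle is conceptual rather than computational: separable vectors are \emph{not} closed under addition, so one cannot polarize directly on $\bb{C}^m \otimes \bb{C}^n$. The resolution is to observe that the polarization can be carried out one factor at a time while the other factor is held fixed, which is precisely what the two-step argument above does. Everything else is a routine index computation that I would not belabour.
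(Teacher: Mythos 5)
Your proof is correct and follows essentially the same route as the paper's: write $X$ in block form, observe that $\bra{a}\big(\sum_{i,j}\bra{b}X_{ij}\ket{b}\,\ketbra{i}{j}\big)\ket{a}=0$, and apply the ``vanishing complex quadratic form implies zero matrix'' fact once in each tensor factor. The only difference is that you spell out the polarization identity explicitly, which the paper leaves implicit.
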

\begin{proof}
The ``if'' implication is trivial. To see the ``only if'' implication, write $X = \sum_{ij=1}^m \ketbra{i}{j} \otimes X_{ij}$, where $\big\{ X_{ij} \big\} \subset M_n$. If we write $\ket{v} = \ket{v_1}\otimes \ket{v_2}$ then
	\begin{align*}
		\bra{v_1}\Big( \sum_{ij=1}^m \bra{v_2}X_{ij}\ket{v_2} \ketbra{i}{j} \Big)\ket{v_1} = 0 \quad \forall \, \ket{v_1} \in \bb{C}^m,\ket{v_2} \in \bb{C}^n.
	\end{align*}
	It follows that $\sum_{ij=1}^m \bra{v_2}X_{ij}\ket{v_2} \ketbra{i}{j} = 0$ for all $\ket{v_2} \in \bb{C}^n$. However, because the set of operators $\big\{ \ketbra{i}{j} \big\}_{ij=1}^m$ is linearly independent, this implies that
	\begin{align*}
		\bra{v_2}X_{ij}\ket{v_2} = 0 \quad \forall \, i,j \text{ and } \forall \, \ket{v_2} \in \bb{C}^n.
	\end{align*}
	It follows that $X_{ij} = 0$ for all $i,j$ and so $X = 0$.
\end{proof}

\begin{prop}\label{prop:sep_unitary}
	Let $U \in M_m \otimes M_n$. Then $\big\| U\ket{v} \big\| = \big\| \ket{v} \big\|$ for all separable $\ket{v} \in \bb{C}^m \otimes \bb{C}^n$ if and only if $U$ is unitary.
\end{prop}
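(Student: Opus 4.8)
The plan is to reduce the isometry condition to the hypothesis of Lemma~\ref{lem:sep_prod_zero}. The ``if'' direction is immediate: if $U$ is unitary then $\big\|U\ket{v}\big\| = \big\|\ket{v}\big\|$ for every vector, whether separable or not, so in particular for all separable $\ket{v}$.

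For the ``only if'' direction, I would first rewrite the metric condition as a statement about a single quadratic form. For a unit separable vector $\ket{v}$, the equality $\big\|U\ket{v}\big\| = \big\|\ket{v}\big\| = 1$ is equivalent to $\bra{v}U^\dagger U\ket{v} = 1 = \braket{v}{v}$, that is, to $\bra{v}(U^\dagger U - I)\ket{v} = 0$. By homogeneity this extends to every (not necessarily unit) separable vector: each separable $\ket{a} \otimes \ket{b}$ is a scalar multiple $c$ of a unit separable vector, and the quadratic form simply scales by $|c|^2$. Thus the hypothesis yields $\bra{v}(U^\dagger U - I)\ket{v} = 0$ for all separable $\ket{v} \in \bb{C}^m \otimes \bb{C}^n$.

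The remaining work is then essentially packaged in the earlier lemma. Setting $X := U^\dagger U - I \in M_m \otimes M_n$, the previous step says exactly that $\bra{v}X\ket{v} = 0$ for all separable $\ket{v}$, so Lemma~\ref{lem:sep_prod_zero} forces $X = 0$, i.e.\ $U^\dagger U = I$. Finally, since $U$ acts on the finite-dimensional space $\bb{C}^m \otimes \bb{C}^n$ and is therefore a square operator, $U^\dagger U = I$ already guarantees that $U$ is unitary, which completes the argument.

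I do not expect a genuine obstacle here: the whole proof rests on the clean translation of the norm-preservation condition into the vanishing of the quadratic form of the single Hermitian operator $U^\dagger U - I$ on all separable vectors, together with the scaling observation that it suffices to check unit separable vectors. Once that translation is in place, Lemma~\ref{lem:sep_prod_zero} does the heavy lifting, so the only point requiring care is verifying that the metric hypothesis is exactly equivalent to $\bra{v}(U^\dagger U - I)\ket{v} = 0$ rather than something weaker.
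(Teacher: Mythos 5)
Your proof is correct and follows essentially the same route as the paper: both reduce the norm-preservation hypothesis to $\bra{v}(U^\dagger U - I)\ket{v} = 0$ for all separable $\ket{v}$ and then invoke Lemma~\ref{lem:sep_prod_zero} to conclude $U^\dagger U = I$. Your added remarks on homogeneity and on $U$ being a square operator are fine but not points the paper felt the need to spell out.
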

\begin{proof}
The ``if'' implication is trivial. To see the ``only if'' implication, note that $\bra{v}U^{\dagger}U\ket{v} = 1$ for all separable $\ket{v}$, so
\begin{align}\label{eq:UUI1}
	\bra{v}(U^{\dagger}U - I)\ket{v} = 0 \quad \forall \, \ket{v} \in \bb{C}^m \otimes \bb{C}^n \text{ with } SR(\ket{v}) = 1.
\end{align}
This immediately implies that $U^{\dagger}U = I$ via Lemma~\ref{lem:sep_prod_zero}, so $U$ is unitary.
\end{proof}

One family of matrix norm isometries follows easily from the vector-operator isomorphism of Section~\ref{sec:vector_operator_isomorphism}. Because the Euclidean norm on $\mathbb{C}^m \otimes \mathbb{C}^n$ corresponds to the Frobenius norm on $M_{n,m}$ via the vector-operator isomorphism, we know that a linear map $\Phi : M_{n,m} \rightarrow M_{n,m}$ preserves the Frobenius norm of all operators $X \in M_{n,m}$ (i.e., $\Phi$ is an isometry for the Frobenius norm) if and only if the associated operator $M_\Phi \in M_m \otimes M_n$ is unitary. We already saw that if $\Phi = \sum_k A_k X B_k^\dagger$ then $M_\Phi = \sum_k \overline{B_k} \otimes A_k$, which provides a simple concrete characterization of the isometries of the Frobenius norm. Because of this association between unitary matrices and isometries of the Frobenius norm, we will refer to any linear map on complex matrices that preserves the Frobenius norm as \emph{unitary}.

Another well-known result states that if $\Phi$ is a linear map that preserves the operator norm, then there exist unitary operators $U \in M_n$ and $V \in M_m$ such that either $\Phi(X) = UXV$ or $\Phi(X) = UX^TV$ \cite{K51,M59,S25}. This result was strengthened in \cite{GM77}, where it was shown that any isometry of a Ky Fan norm must have the same form. The problem of characterizing isometries of unitarily-invariant complex matrix norms was completely solved in \cite{LT90,S81}:
\begin{thm}\label{thm:unitary_norm_isometries}
	Let $\Phi : M_{n,m} \rightarrow M_{n,m}$ be a linear map and let $\|\cdot\|_{ui}$ be a unitarily-invariant norm that is not a multiple of the Frobenius norm. Then $\Phi$ is an isometry of $\|\cdot\|_{ui}$ if and only if there exist unitary matrices $U \in M_n$ and $V \in M_m$ such that either
	\begin{align*}
		\Phi(X) \equiv U X V \quad \text{ or } \quad n = m \text{ and } \Phi(X) \equiv U X^T V.
	\end{align*}
\end{thm}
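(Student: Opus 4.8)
The plan is to prove sufficiency directly and to reduce necessity to the rank-one preserver result, Proposition~\ref{prop:rankKpreserver}. Sufficiency is immediate: since $\|\cdot\|_{ui}$ is unitarily-invariant it depends only on the singular values of its argument, and left/right multiplication by unitaries $U,V$ leaves singular values unchanged, so $\|UXV\|_{ui} = \|X\|_{ui}$; the transpose map also preserves singular values (when $n=m$), giving $\|UX^TV\|_{ui}=\|X\|_{ui}$ as well. Hence every map of the stated form is an isometry.

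For necessity, first note that an isometry $\Phi$ is injective (its kernel lies in $\{X : \|X\|_{ui}=0\}=\{0\}$) and therefore bijective on the finite-dimensional space $M_{n,m}$. I would then observe that the hypothesis forces $\min\{m,n\}\ge 2$: when $\min\{m,n\}=1$ the space $M_{n,m}$ carries a single singular value, so homogeneity makes every unitarily-invariant norm a multiple of the Frobenius (Euclidean) norm, which is excluded. The crux is then the claim that $\Phi$ maps rank-one matrices to rank-one matrices, i.e.\ ${\rm rank}(\Phi(X))\le 1$ whenever ${\rm rank}(X)\le 1$. Granting this claim, Proposition~\ref{prop:rankKpreserver} (applied with $k=1$, which is legitimate since $1<\min\{m,n\}$) yields nonsingular $P\in M_n$ and $Q\in M_m$ with $\Phi(X)=PXQ$ for all $X$, or $n=m$ and $\Phi(X)=PX^TQ$ for all $X$.

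The main obstacle is establishing rank-one preservation, and this is exactly where the hypothesis that $\|\cdot\|_{ui}$ is not a multiple of the Frobenius norm is essential. Indeed, for the Frobenius norm the isometries are precisely the maps $\Phi$ with $M_\Phi$ unitary (Section~\ref{sec:linear_preservers_isometries}), a far larger family that need not respect rank at all; what distinguishes every other unitarily-invariant norm is that the boundary geometry of its unit ball encodes the rank of a matrix. I would make this precise by characterizing the rank-one matrices intrinsically through the norm and showing the characterization is invariant under linear isometries. For the trace norm, for example, the rank-one matrices of norm one are exactly the extreme points of the unit ball, and $\Phi$ must carry extreme points to extreme points; for a general $\|\cdot\|_{ui}$ one argues with the facial (extreme-point and exposed-point) structure of the unit ball, or dually with the supporting-functional structure of the dual norm $\|\cdot\|_{ui}^{\circ}$, which is again unitarily-invariant and for which $\Phi$ is again an isometry. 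This facial analysis is the technical heart of the theorem, and is the step I expect to require the most care; it is carried out in \cite{LT90,S81}, and I would either reproduce that analysis or isolate a self-contained intrinsic description of rank-one matrices that is manifestly isometry-invariant.

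Finally, once the form $\Phi(X)=PXQ$ (or $\Phi(X)=PX^TQ$) is in hand, upgrading $P$ and $Q$ to unitaries is a short computation that no longer uses the non-Frobenius hypothesis. On a rank-one matrix $X=\ketbra{x}{y}$ with $\ket{x}\in\bb{C}^n$ and $\ket{y}\in\bb{C}^m$ one has $\|X\|_{ui}=c\,\|\ket{x}\|\,\|\ket{y}\|$, where $c=\|\ketbra{1}{1}\|_{ui}>0$ depends only on the norm, while $\Phi(X)=(P\ket{x})(Q^\dagger\ket{y})^\dagger$ is again rank one. The isometry condition therefore reads $\|P\ket{x}\|\,\|Q^\dagger\ket{y}\| = \|\ket{x}\|\,\|\ket{y}\|$ for all $\ket{x},\ket{y}$. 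Separating variables (fix $\ket{y}$ with $Q^\dagger\ket{y}\neq 0$) forces $\|P\ket{x}\|=a\|\ket{x}\|$ and $\|Q^\dagger\ket{y}\|=a^{-1}\|\ket{y}\|$ for a single constant $a>0$, so $P=aU$ and $Q=a^{-1}W$ with $U\in M_n$, $W\in M_m$ unitary; substituting gives $\Phi(X)=UXW$ (respectively $UX^TW$), which is the claimed form with $V=W$. The transpose case is handled identically, since the transpose preserves both rank and singular values.
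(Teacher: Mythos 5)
The paper itself offers no proof of this theorem: it is quoted verbatim as a known result of \cite{LT90,S81}, so there is no in-text argument to compare yours against. Your overall architecture — sufficiency from singular-value invariance, necessity via injectivity, the reduction of the $\min\{m,n\}=1$ case, rank-one preservation feeding into Proposition~\ref{prop:rankKpreserver}, and the final upgrade of $P,Q$ to unitaries by separating variables on $\|P\ket{x}\|\,\|Q^\dagger\ket{y}\|=\|\ket{x}\|\,\|\ket{y}\|$ — is the correct skeleton, and every step you actually carry out is sound.

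The gap is the step you yourself flag: rank-one preservation is asserted, not established, and the one concrete mechanism you offer does not work in the generality needed. For the trace norm the extreme points of the unit ball are indeed the normalized rank-one matrices, but for the operator norm (also unitarily invariant and not a multiple of Frobenius) the extreme points are the full-rank partial isometries, so ``$\Phi$ carries extreme points to extreme points'' identifies the wrong set. An intrinsic, isometry-invariant description of the rank-one matrices valid for \emph{every} non-Frobenius unitarily-invariant norm is genuinely delicate; the proofs in \cite{LT90,S81} first pass through the analogue of Proposition~\ref{prop:subgroup_lift} to show that any such isometry is automatically a Frobenius isometry (compare how the paper proves Theorem~\ref{thm:opNormIso}), and only then extract rank information by comparing $\|\cdot\|_{ui}$ against $\|\cdot\|_F$ on the unit sphere. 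Since you explicitly defer this step to the same references the paper cites, your proposal is a correct outline rather than a complete proof; to close it you would need either to reproduce that comparison argument or to supply a facial/exposed-point characterization of rank-one matrices that you verify holds for an arbitrary unitarily-invariant norm other than a multiple of $\|\cdot\|_F$.
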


We will prove a similar statement for a family of norms related to the Schmidt decomposition of pure states in Section~\ref{sec:sk_matrix_isometries}, though we will see that the form of $U$ and $V$ is slightly restricted in our setting. Our proof will rely on the following result of \cite{LT90,L94}, which is rephrased here slightly to suit our purposes. Note that a group $\cl{G} \subseteq M_{n}$ is said to be \emph{bounded} if there exists $K > 0$ such that $\big\|X\big\| \leq K$ for all $X \in \cl{G}$, and it is said to be \emph{irreducible} if ${\rm span}(\cl{G}) = M_n$.
\begin{prop}\label{prop:subgroup_lift}
	Let $\cl{G} \subseteq M_n$ be a bounded group that contains an irreducible subgroup of the unitary group $U(n) \subset M_n$. Then $\cl{G} \subseteq U(n)$.
\end{prop}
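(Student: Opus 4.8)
The plan is to build from $\cl{G}$ a single positive definite matrix $P$ satisfying $g^\dagger P g = P$ for every $g \in \cl{G}$, and then exploit the irreducible unitary subgroup to force $P$ to be a scalar matrix, at which point $g^\dagger g = I$ drops out immediately. The whole argument is a quantitative version of the classical fact that a bounded subgroup of $\mathrm{GL}(n)$ is conjugate into $U(n)$, sharpened by irreducibility so that the conjugation is actually trivial.

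First I would pass to the closure $\overline{\cl{G}}$ of $\cl{G}$ inside $M_n$. Since $\cl{G}$ is bounded, $\overline{\cl{G}}$ is closed and bounded, hence compact. Each $g \in \cl{G}$ is invertible (it lies in a group containing the identity $I$), and boundedness of both $g$ and $g^{-1}$ gives $|\det g| \leq C$ and $|\det g^{-1}| = 1/|\det g| \leq C$, so $|\det g|$ is bounded away from $0$ on $\cl{G}$. Consequently $\overline{\cl{G}}$ contains no singular matrices, and since the closure of a subgroup of a topological group is again a subgroup, $\overline{\cl{G}}$ is a compact subgroup of $\mathrm{GL}(n)$. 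I would then average with respect to the normalized Haar measure $\mu$ on $\overline{\cl{G}}$, setting
\[
	P := \int_{\overline{\cl{G}}} g^\dagger g \, d\mu(g).
\]
For any nonzero $\ket{v}$ we have $\bra{v}P\ket{v} = \int \big\| g\ket{v} \big\|^2 \, d\mu(g) > 0$, so $P$ is positive definite, and right-invariance of $\mu$ yields $h^\dagger P h = \int (gh)^\dagger(gh)\,d\mu(g) = P$ for every $h \in \overline{\cl{G}}$; in particular $g^\dagger P g = P$ for all $g \in \cl{G}$.

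Next comes the irreducibility step. Let $\cl{H} \subseteq \cl{G} \cap U(n)$ be the given irreducible subgroup. For $h \in \cl{H}$ we have $h^\dagger = h^{-1}$, so the invariance relation $h^\dagger P h = P$ becomes $Ph = hP$. Thus $P$ commutes with every element of $\cl{H}$, and by linearity with every element of $\mathrm{span}(\cl{H}) = M_n$. A matrix commuting with all of $M_n$ is central, hence $P = \lambda I$ for some $\lambda > 0$. Substituting back into $g^\dagger P g = P$ gives $\lambda\, g^\dagger g = \lambda I$, i.e.\ $g^\dagger g = I$ for every $g \in \cl{G}$, so $\cl{G} \subseteq U(n)$.

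The hard part will be the first paragraph rather than the algebra: one must be careful to justify that $\overline{\cl{G}}$ really is a \emph{compact group inside $\mathrm{GL}(n)$}, so that Haar measure exists and the integrand $g^\dagger g$ remains (uniformly) positive definite. This is exactly where both hypotheses on $\cl{G}$ are used—boundedness of the group (and hence of its inverses) supplies the determinant bound that keeps the closure away from the singular matrices. If one wished to avoid invoking Haar measure, the same invariant form $P$ could instead be extracted from the unique minimal-volume ellipsoid containing an orbit $\{\, g\ket{v} : g \in \cl{G} \,\}$, but the compact-group averaging is the most transparent route.
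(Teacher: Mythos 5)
Your proof is correct and complete: the determinant bound legitimately places $\overline{\cl{G}}$ inside $\mathrm{GL}(n)$ as a compact group, Haar averaging produces a positive definite invariant form $P$, and the paper's definition of irreducibility (${\rm span}(\cl{H}) = M_n$) forces $P$ to be scalar, whence $g^\dagger g = I$. For comparison, the paper does not prove this proposition at all --- it is quoted from Li and Tsing \cite{LT90,L94} --- and the argument you give is the standard ``unitarian trick'' proof of that cited result, so there is nothing to reconcile.
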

Additionally, Proposition~\ref{prop:subgroup_lift} will help us characterize the operators that preserve the geometric measure of entanglement in Theorem~\ref{thm:geometric_measure_entanglement}.

\subsection{Bipartite Separability Preservers}\label{sec:bip_sep_preservers}

In the design of quantum algorithms, a particularly important role is played by \emph{entangling gates} -- i.e., unitary operators that are capable of mapping a separable pure state into an entangled pure state \cite{DBE95,NC00}. Thus it is desirable to have a characterization of entangling gates that allows us to easily recognize whether or not a given unitary is entangling. In this section, we will phrase this problem slightly differently as characterizing the operators that map separable pure states to separable pure states. It is clear that any unitary $U \in M_m \otimes M_n$ of the form
\begin{align}\label{eq:localU1}
	U = U_1 \otimes U_2	\quad \text{ or } \quad n = m \text{ and } U = S(U_1 \otimes U_2),
\end{align}
where $U_1 \in M_m$ and $U_2 \in M_n$ are unitary, and $S \in M_n \otimes M_n$ is the swap operator introduced in Section~\ref{sec:symmetric_sub}, is an example of one such operator. In fact, it was shown in \cite{MM59,HPSSSL06} that \emph{all} bipartite operators that preserve the set of separable pure states are of this form, though both proofs are quite long and involved. The fact that all such operators have the form~\eqref{eq:localU1} also follows from several related results that have been proved more recently in \cite{AS10,FLPS11}.

 We now prove a stronger result -- that if an operator maps the set of pure states with Schmidt rank at most $k$ into itself for some $1 \leq k < \min\{m,n\}$ then it must be a unitary of the form~\eqref{eq:localU1}. Moreover, our proof is quick and elementary thanks to Proposition~\ref{prop:rankKpreserver}. In fact, it will be useful for us to consider the slightly more general nonsingular operators $L \in M_m \otimes M_n$ of the form
\begin{align}\label{eq:localL}
	L = P \otimes Q	\quad \text{ or } \quad n = m \text{ and } L = S(P \otimes Q),
\end{align}
where $P \in M_m$ and $Q \in M_n$ are nonsingular.

\begin{thm}\label{thm:bipartite_schmidt_preserver}
	Let $L \in M_m \otimes M_n$ be an invertible linear operator and define $\cl{V}_k$ to be the set of scalar multiples of pure states with Schmidt rank no larger than $k$:
	\begin{align*}
		\cl{V}_k := \big\{ c\ket{w} \in \bb{C}^m \otimes \bb{C}^n : c \in \mathbb{R}, SR(\ket{w}) \leq k \big\}.
	\end{align*}
	Then the following are equivalent:
	\begin{enumerate}[(a)]
		\item there exists some $1 \leq k < \min\{m,n\}$ such that $L\cl{V}_k \subseteq \cl{V}_k$;
		\item $L\cl{V}_k = \cl{V}_k$ for all $1 \leq k \leq \min\{m,n\}$; and
		\item $L$ is an operator of the form \eqref{eq:localL}.
	\end{enumerate}
	Furthermore, if $L$ sends pure states in $\cl{V}_k$ to pure states (i.e., it does not alter their norm) then $L$ is a unitary of the form~\eqref{eq:localU1}.
\end{thm}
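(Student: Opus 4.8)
The plan is to transport the entire problem to the level of matrices via the vector-operator isomorphism of Section~\ref{sec:vector_operator_isomorphism}, where it becomes a direct instance of the rank-preserver result of Proposition~\ref{prop:rankKpreserver}. The key dictionary entry is that $SR(\ket{v}) = {\rm rank}({\rm mat}(\ket{v}))$, so a scaled pure state lies in $\cl{V}_k$ exactly when its matricization has rank at most $k$. Writing $L = M_{\Phi}$ for the superoperator $\Phi : M_{n,m} \rightarrow M_{n,m}$ associated to $L$ under the isomorphism (so that ${\rm mat}(L\ket{v}) = \Phi({\rm mat}(\ket{v}))$), the inclusion $L\cl{V}_k \subseteq \cl{V}_k$ of condition (a) says precisely that ${\rm rank}(\Phi(X)) \leq k$ whenever ${\rm rank}(X) \leq k$. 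Since $L$ is invertible and $M_{(\cdot)}$ is an isomorphism, $\Phi$ is invertible, so Proposition~\ref{prop:rankKpreserver} applies verbatim.

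For the implication (a) $\Rightarrow$ (c), Proposition~\ref{prop:rankKpreserver} produces nonsingular $P \in M_n$, $Q \in M_m$ with $\Phi(X) = PXQ$, or (when $n = m$) $\Phi(X) = PX^TQ$. I would then translate each form back to $L = M_\Phi$ using the rule $M_\Phi = \sum_k \overline{B_k} \otimes A_k$ for $\Phi(X) = \sum_k A_k X B_k^\dagger$. The first form yields $L = Q^T \otimes P$, which is of the type $P \otimes Q$ appearing in \eqref{eq:localL}. For the transpose form I would use that the transpose map vectorizes as the swap operator, i.e. ${\rm vec}(X^T) = S\,{\rm vec}(X)$, giving $L = (Q^T \otimes P)S$, and then the swap identity $S(A \otimes B) = (B \otimes A)S$ rewrites this as $L = S(P \otimes Q^T)$, the swap form in \eqref{eq:localL}. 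The implication (c) $\Rightarrow$ (b) is a direct rank computation: from ${\rm mat}((P \otimes Q)\ket{v}) = Q\,{\rm mat}(\ket{v})\,P^T$ and nonsingularity of $P, Q$, left and right multiplication preserves the rank of every matricization, hence the Schmidt rank of every vector, while the swap operator only transposes the matricization; invertibility upgrades the inclusions to equalities for every $1 \leq k \leq \min\{m,n\}$. Finally (b) $\Rightarrow$ (a) is immediate upon choosing any $k < \min\{m,n\}$.

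For the last statement, I would impose the extra hypothesis that $L$ maps pure states of $\cl{V}_k$ to pure states, i.e. preserves their norm. In particular every separable unit vector has Schmidt rank $1 \leq k$ and is therefore sent to a unit vector, so $\big\|L\ket{v}\big\| = \big\|\ket{v}\big\|$ for all separable $\ket{v}$, and Proposition~\ref{prop:sep_unitary} forces $L$ to be unitary. Combining this with the form already established, $L = P \otimes Q$ or $L = S(P \otimes Q)$ with $L$ unitary; since $S$ is unitary it suffices to analyze $P \otimes Q$. Then $(P^\dagger P) \otimes (Q^\dagger Q) = I$ forces $P^\dagger P = \lambda I$ and $Q^\dagger Q = \lambda^{-1} I$ for some $\lambda > 0$, so $P$ and $Q$ are scalar multiples of unitaries whose scalars cancel in the tensor product, giving $L = U_1 \otimes U_2$ (or $S(U_1 \otimes U_2)$) exactly as in \eqref{eq:localU1}.

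The main obstacle I anticipate is the bookkeeping in (a) $\Rightarrow$ (c): the complex conjugation in $M_\Phi = \sum_k \overline{B_k} \otimes A_k$, the transpose-versus-conjugate distinctions, and the swap identity must all be tracked exactly in order to land on the stated forms \eqref{eq:localL} rather than on some conjugated or transposed variant of them. Every other step is either an invariance-of-rank computation or a direct appeal to one of the two cited propositions.
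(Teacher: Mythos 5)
Your proposal is correct and follows essentially the same route as the paper: pass through the vector-operator isomorphism so that condition (a) becomes the statement that the associated superoperator $\Phi_L$ is rank-$k$-non-increasing, invoke Proposition~\ref{prop:rankKpreserver}, and translate the forms $PXQ$ and $PX^TQ$ back to $Q^T \otimes P$ and $S(P \otimes Q^T)$. Your handling of the final unitarity claim (via Proposition~\ref{prop:sep_unitary} and the factorization $P^\dagger P = \lambda I$, $Q^\dagger Q = \lambda^{-1} I$) is more explicit than the paper's, which dismisses that step as trivial, but it is the same argument in substance.
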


\begin{proof}
	It is straightforward to see that $(c) \Rightarrow (b) \Rightarrow (a)$, so we only prove the implication $(a) \Rightarrow (c)$.

	To this end, recall that the vector-operator isomorphism associates a pure state $\ket{v} \in \bb{C}^m \otimes \bb{C}^n$ with an operator $A_v \in M_{n,m}$. The same isomorphism associates the operator $L \in M_m \otimes M_n$ with a superoperator $\Phi_L : M_{n,m} \rightarrow M_{n,m}$. Then condition $(a)$ is equivalent to the statement that there exists some $1 \leq k < \min\{m,n\}$ such that ${\rm rank}(\Phi_L(A_v)) \leq k$ whenever ${\rm rank}(A_v) \leq k$. Proposition~\ref{prop:rankKpreserver} then says that there exist nonsingular $P \in M_n$ and $Q \in M_m$ such that $\Phi_L$ is of the form~\eqref{eq:rankKform}.
	
	The given form of $\Phi_L$ says, again via the vector-operator isomorphism, that either
	\begin{align*}
		L = Q^T \otimes P \quad \text{ or } \quad n = m \text{ and } L = S (P \otimes Q^T),
	\end{align*}
	completing the $(a) \Rightarrow (c)$ implication. The final claim is trivial -- if $L$ preserves the length of separable pure states then $P$ and $Q$ must each be unitary, so $L$ must be a unitary of the form~\eqref{eq:localU1}.
\end{proof}

In order to demonstrate that the invertibility hypothesis of the above result is indeed required, consider the operator $L \in M_2 \otimes M_2$ defined by $L := \ketbra{1}{1} \otimes \ketbra{1}{1} + \ketbra{1}{2} \otimes \ketbra{1}{2}$. It is clear that the range of $L$ is ${\rm span}(\ket{1}\otimes\ket{1})$ so $L\ket{v}$ is always a multiple of a separable state. However, neither $L$ nor $SL$ can be written as an elementary tensor $P \otimes Q$, even if $P$ and $Q$ are allowed to be singular. However, if $L$ preserves the length of states with Schmidt rank no greater than $k$, then we can ignore the invertibility hypothesis of Theorem~\ref{thm:bipartite_schmidt_preserver} because Proposition~\ref{prop:sep_unitary} tells us that $L$ is necessarily unitary (and thus invertible).

\subsection{Multipartite Separability Preservers}\label{sec:mul_sep_preservers}

Thus far we have only considered operators that preserve separability and entanglement in bipartite quantum systems -- i.e., in $\bb{C}^m \otimes \bb{C}^n$, which is the tensor product of only two Hilbert spaces. We now consider the separability preserver problem in the case of multipartite quantum systems $\bb{C}^{n_1} \otimes \cdots \otimes \bb{C}^{n_p}$ where $p \geq 3$. In particular, we will show that exactly what might naively be expected to happen in this more general setting does indeed happen -- an operator $U$ sends separable pure states to separable pure states if and only if it is a composition of local unitaries and a swap operator. The difference is that in the bipartite case there were only two subsystems to swap so there were only two possible swap operators -- the identity operator and the operator $S$, which we simply referred to as \emph{the} swap operator in the bipartite setting. In the multipartite case, there are $p!$ different swap operators, each corresponding to a different permutation of the $p$ subsystems.

To make this result rigorous, we must first clarify our terminology in this setting. A pure state $\ket{v} \in \bb{C}^{n_1} \otimes \cdots \otimes \bb{C}^{n_p}$ is said to be separable if it can be written in the form $\ket{v} = \ket{v_1} \otimes \cdots \otimes \ket{v_p}$, where $\ket{v_i} \in \bb{C}^{n_i}$ for all $i$. Given a permutation $\sigma : \{1,\ldots,p\} \rightarrow \{1,\ldots,p\}$, recall that the swap operator $S_\sigma : \ket{v_1} \otimes \cdots \otimes \ket{v_p} \mapsto \ket{v_{\sigma(1)}} \otimes \cdots \otimes \ket{v_{\sigma(p)}}$ that permutes the $p$ subsystems according to $\sigma$. Additionally, we will define $\cl{V}$ to be the set of scalar multiples of \emph{multipartite} separable pure states:
	\begin{align*}
		\cl{V} := \big\{ c\ket{w} \in \bb{C}^{n_1} \otimes \cdots \otimes \bb{C}^{n_p} : c \in \mathbb{R}, \text{ $\ket{w}$ is separable } \big\}.
	\end{align*}
	We are now in a position to present the main result of this section. However, before proceeding we note that this result was also derived in \cite{W67}. We nonetheless present a full proof here for two reasons: our proof is significantly different than that provided in \cite{W67}, and several of our other results (Theorem~\ref{thm:geometric_measure_entanglement}, Proposition~\ref{prop:sepLPP}, and Theorem~\ref{thm:opNormIso}) rely on this result, so we prove it for completeness.

\begin{thm}\label{thm:multipartite}
	Let $L \in M_{n_1} \otimes \cdots \otimes M_{n_p}$ be an invertible linear operator. Then $L\cl{V} \subseteq \cl{V}$ if and only if there exist invertible operators $P_i \in M_{n_i}$ ($1 \leq i \leq p$) and a permutation $\sigma : \{1,\ldots,p\} \rightarrow \{1,\ldots,p\}$ such that $L = S_\sigma(P_1 \otimes \cdots \otimes P_p)$. Furthermore, if $L$ sends pure states in $\cl{V}$ to pure states (i.e., it does not alter their norm) then each $P_i$ is unitary.
\end{thm}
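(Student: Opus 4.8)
The \emph{if} direction is immediate: if $L = S_\sigma(P_1 \otimes \cdots \otimes P_p)$ and $\ket{v} = \ket{v_1} \otimes \cdots \otimes \ket{v_p}$, then $(P_1 \otimes \cdots \otimes P_p)\ket{v} = P_1\ket{v_1} \otimes \cdots \otimes P_p\ket{v_p}$ is again a product vector, and applying $S_\sigma$ merely permutes the tensor factors, leaving the result in $\cl{V}$. For the \emph{only if} direction the plan is to induct on the number of parties $p$, with the base case $p = 2$ supplied directly by Theorem~\ref{thm:bipartite_schmidt_preserver} with $k = 1$ (there $S_\sigma$ is either the identity or the single swap $S$).

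The engine of the argument is a structural lemma on subspaces of product vectors: if $W \subseteq \bb{C}^{n_1} \otimes \cdots \otimes \bb{C}^{n_p}$ is a subspace with $\dim W \geq 2$ whose every nonzero element is separable, then there is a single factor $j$ and fixed vectors $\ket{w_i}$ (for $i \neq j$) with $W = \big(\bigotimes_{i \neq j}\ket{w_i}\big) \otimes U_j$ for some subspace $U_j \subseteq \bb{C}^{n_j}$; that is, the vectors of $W$ may vary in only one tensor slot. I would prove this by a parallel induction on $p$: the bipartite case is the familiar fact that a linear space of rank-one matrices must share either a common column space or a common row space (closely related to Proposition~\ref{prop:rankKpreserver} via the vector--operator isomorphism), and the inductive step follows by splitting off the first factor, applying the bipartite statement across the cut $\{1\} \mid \{2,\ldots,p\}$, and then invoking the inductive hypothesis on the remaining factors.

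Given this lemma, I would feed $L$ the coordinate subspaces: fixing unit vectors $\ket{a_2},\ldots,\ket{a_p}$ and letting $\ket{a_1}$ range over $\bb{C}^{n_1}$, the image $L\big(\bb{C}^{n_1} \otimes \ket{a_2} \otimes \cdots \otimes \ket{a_p}\big)$ is an $n_1$-dimensional subspace (by invertibility of $L$) consisting of product vectors, so by the lemma it varies in exactly one output factor $\tau(1)$. Repeating this for each input factor $i$ produces an index $\tau(i)$, a well-defined single-factor linear action, and (in the degenerate case $n_i = 1$) a harmless scalar. Assembling these pieces should yield $L = S_\sigma(P_1 \otimes \cdots \otimes P_p)$ with each $P_i$ invertible. \textbf{The main obstacle} is precisely this assembly: one must show that the ``variation factor'' $\tau(i)$ does not depend on the choice of the frozen vectors, so that $\tau$ is a genuine permutation, and then carry out the separation-of-variables argument forcing the joint dependence to factor as an honest tensor product of local maps rather than merely being separable slot by slot. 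Invertibility of $L$ is essential throughout and, as the counterexample following Theorem~\ref{thm:bipartite_schmidt_preserver} shows, cannot be dropped.

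Finally, for the norm statement, suppose $\big\|L\ket{v}\big\| = \big\|\ket{v}\big\|$ for every product unit vector $\ket{v}$. Since $S_\sigma$ is unitary, for $\ket{v} = \ket{a_1} \otimes \cdots \otimes \ket{a_p}$ with each $\ket{a_i}$ a unit vector we obtain $\prod_{i=1}^{p}\big\|P_i\ket{a_i}\big\| = 1$. Fixing all but one factor shows that $\big\|P_i\ket{a_i}\big\|$ is a constant $c_i$ independent of the unit vector $\ket{a_i}$, whence $P_i^\dagger P_i = c_i^2 I$ and $\prod_{i} c_i = 1$; thus each $P_i$ is a scalar multiple of a unitary, and because the scalars multiply to $1$ they may be absorbed so that each $P_i$ is unitary, as claimed.
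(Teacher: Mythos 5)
Your overall plan is sound and your structural lemma is true: a subspace of dimension at least two consisting entirely of product vectors must indeed be of the form $\bigl(\bigotimes_{i \neq j}\ket{w_i}\bigr) \otimes U_j$ (one can see this by applying the bipartite rank-one-subspace fact across each cut $\{j\}$ versus the rest; if no cut gives a common ``row'' then every single-factor component is fixed and the space is one-dimensional). The ``if'' direction and the final unitarity argument are also fine. However, the proof has a genuine gap exactly where you flag ``the main obstacle'': you never actually show that the variation factor $\tau(i)$ is independent of the frozen vectors $\ket{a_j}$ ($j \neq i$), that $\tau$ is injective, or that the slot-by-slot actions assemble into a single tensor product $P_1 \otimes \cdots \otimes P_p$. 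These are not routine consequences of your lemma. For instance, the natural attempt to compare $\tau(1)$ and $\tau(2)$ by applying the lemma to the image of $\bb{C}^{n_1}\otimes\ket{a_2}\otimes\cdots \, + \, \ket{a_1}\otimes\bb{C}^{n_2}\otimes\cdots$ fails outright, because that span contains vectors like $\ket{x}\otimes\ket{a_2} + \ket{a_1}\otimes\ket{y}$ which are not product vectors, so the structural lemma does not apply to its image. This is precisely where the paper has to work: it proves a pairwise version (a sum of two product vectors is separable iff they agree on all but one slot), uses it to show that images of vectors differing on $r$ slots differ on at most $r$ slots, and then, in Lemma~\ref{lem:multiHelperB}, establishes the consistency and injectivity of the slot correspondence by a contradiction argument that invokes invertibility of $L$ through a spanning/dimension count. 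The final factorization is then carried out by an induction that peels off one tensor factor at a time.

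In short: the skeleton and the key subspace lemma are correct and would work, but the content of the theorem lives in the assembly step you have deferred, and that step requires the finer pairwise analysis (or an equivalent) rather than the subspace lemma alone. As written, the argument establishes only that each coordinate slice of $L$ acts on a single output slot for each fixed choice of frozen vectors, which is strictly weaker than $L = S_\sigma(P_1 \otimes \cdots \otimes P_p)$.
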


Before proving the theorem, we will need a few lemmas. Our first lemma says that the sum of two separable pure states is separable if and only if the states are either identical or differ on only one subsystem. When we say that two states $\ket{a} := \ket{a_1} \otimes \cdots \otimes \ket{a_p}$ and $\ket{b} := \ket{b_1} \otimes \cdots \otimes \ket{b_p}$ differ on the $i$-th subsystem, we mean that $\ket{a_i} \nparallel \ket{b_i}$ -- in other words, $\ket{a_i}$ and $\ket{b_i}$ are not related by a complex number with modulus one. Conversely, if we write $\ket{a_i} \parallel \ket{b_i}$ then we mean that $\ket{a_i}$ and $\ket{b_i}$ are linearly dependent, so they differ by a complex number with modulus one, and we say that $\ket{a}$ and $\ket{b}$ agree on the $i$-th subsystem.
\begin{lemma}\label{lem:multiHelper}
	Let $\ket{a} := \ket{a_1} \otimes \cdots \otimes \ket{a_p}, \ket{b} := \ket{b_1} \otimes \cdots \otimes \ket{b_p} \in \bb{C}^{n_1} \otimes \cdots \otimes \bb{C}^{n_p}$. Then $\ket{a} + \ket{b}$ is separable if and only if $\ket{a_i} \parallel \ket{b_i}$ for all indices $1 \leq i \leq p$ with the exception of at most one.
\end{lemma}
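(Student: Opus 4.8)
The plan is to prove both directions, with the forward (``only if'') direction being the substantive one. The converse is immediate: if $\ket{a}$ and $\ket{b}$ agree on all subsystems but possibly one, say they differ only on the $i$-th, then factoring out the common tensor factors gives $\ket{a} + \ket{b} = \ket{c_1} \otimes \cdots \otimes (\alpha\ket{a_i} + \beta\ket{b_i}) \otimes \cdots \otimes \ket{c_p}$ for appropriate scalars, which is visibly separable. (If they agree on \emph{all} subsystems the sum is just a scalar multiple of $\ket{a}$, hence separable.)

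For the forward direction, I would argue by contrapositive: suppose $\ket{a}$ and $\ket{b}$ differ on at least two subsystems, and show $\ket{a} + \ket{b}$ is entangled. The cleanest approach is to reduce to the bipartite Schmidt rank computation. First I would discard the subsystems on which $\ket{a_i} \parallel \ket{b_i}$: on such a subsystem we can write $\ket{b_i} = \lambda_i \ket{a_i}$, so both terms share that factor and it can be pulled outside the sum, leaving a state of the form $(\text{common factor}) \otimes (\ket{a'} + \mu\ket{b'})$ where $\ket{a'}, \ket{b'}$ are the tensor products over precisely the subsystems where the two states differ. Separability of $\ket{a}+\ket{b}$ is equivalent to separability of the reduced state $\ket{a'}+\mu\ket{b'}$, so without loss of generality I may assume $\ket{a_i} \nparallel \ket{b_i}$ for \emph{every} $i$, with $p \geq 2$ such subsystems remaining.

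The key step is then to pick any two of the differing subsystems, say the first two, and regroup the tensor product as a bipartite cut: treat $\bb{C}^{n_1}$ as one factor and $\bb{C}^{n_2} \otimes \cdots \otimes \bb{C}^{n_p}$ as the other. Across this cut, $\ket{a}$ and $\ket{b}$ are both product vectors $\ket{a_1} \otimes \ket{\hat a}$ and $\ket{b_1} \otimes \ket{\hat b}$, where $\ket{a_1} \nparallel \ket{b_1}$ on the left factor, and $\ket{\hat a} \nparallel \ket{\hat b}$ on the right (the latter because they already differ on the second subsystem). Thus $\{\ket{a_1}, \ket{b_1}\}$ is linearly independent and $\{\ket{\hat a}, \ket{\hat b}\}$ is linearly independent, which forces the bipartite Schmidt rank of $\ket{a_1}\otimes\ket{\hat a} + \ket{b_1}\otimes\ket{\hat b}$ to be exactly $2$: the matricization is a rank-two matrix since it is a sum of two rank-one terms $\ket{a_1}\overline{\bra{\hat a}}$-type outer products whose left and right vectors are each independent. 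By Theorem~\ref{thm:schmidt} and the observation that $SR(\ket{v})$ equals the rank of ${\rm mat}(\ket{v})$, the state has Schmidt rank $2$ across this cut, hence is not a product across the $1$-versus-rest cut, hence is not fully separable. The main obstacle — really the only point requiring care — is the bookkeeping in the reduction step: justifying cleanly that pulling out the parallel factors preserves separability in both directions, and handling the scalar phases $\lambda_i$ so that the reduced vector is genuinely of the form $\ket{a'} + \mu\ket{b'}$ with a single nonzero scalar $\mu$.
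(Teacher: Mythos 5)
Your proof is correct, but it takes a genuinely different route from the paper's. The paper proves the ``only if'' direction by induction on $p$: the base case $p=2$ is handled via the vector-operator isomorphism (the sum of two rank-one matrices is rank one iff the left or right factors are parallel), and the inductive step contracts a third subsystem $k \neq i,j$ with a functional $f_{k,\ket{v}}$ chosen so that neither state is annihilated, reducing $p+1$ subsystems to $p$ while preserving the two indices of disagreement. You instead avoid induction entirely: you strip off the tensor factors on which the states are parallel (justifying, correctly, that this preserves separability in both directions since the common factor can be contracted away by a functional), and then observe that across the single bipartite cut separating one differing subsystem from the rest, both the left factors $\{\ket{a_1},\ket{b_1}\}$ and the right factors $\{\ket{\hat a},\ket{\hat b}\}$ are linearly independent (the latter because elementary tensors are parallel only if their factors are pairwise parallel), forcing Schmidt rank $2$ across that cut. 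Your approach is arguably more direct --- one rank computation instead of an induction --- while the paper's inductive machinery (the contraction maps $f_{i,\ket{v}}$) is reused in the subsequent Lemmas~\ref{lem:multiHelperCor} and~\ref{lem:multiHelperB}, so the author gets more mileage out of setting it up here. The only points in your argument that need the care you already flagged are that the scalar $\mu$ absorbed from the parallel factors is nonzero (it is, being a product of unimodular constants) and that $\nparallel$ for unit vectors is the same as linear independence, which is exactly the paper's convention.
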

\begin{proof}
	The ``if'' implication of the lemma is trivial. We prove the ``only if'' implication by induction. The $p = 2$ case can be seen by using the (bipartite) vector-operator isomorphism of Section~\ref{sec:vector_operator_isomorphism} to associate $\ket{a_1} \otimes \ket{a_2}$ and $\ket{b_1} \otimes \ket{b_2}$ with $\ket{a_2}\overline{\bra{a_1}}$ and $\ket{b_2}\overline{\bra{b_1}}$, respectively. Then $\ket{a_1} \otimes \ket{a_2} + \ket{b_1} \otimes \ket{b_2}$ is separable if and only if $\ket{a_2}\overline{\bra{a_1}} + \ket{b_2}\overline{\bra{b_1}}$ is a rank one operator. This operator is rank one if and only if $\ket{a_1} \parallel \ket{b_1}$ or $\ket{a_2} \parallel \ket{b_2}$ (or both), which establishes the base case of $p = 2$.
	
	Now suppose that the claim is true for some particular $p \geq 2$ and consider linear functions $f_{i,\ket{v}} : \bb{C}^{n_1} \otimes \cdots \otimes \bb{C}^{n_{p+1}} \rightarrow \bb{C}^{n_1} \otimes \cdots \otimes \bb{C}^{n_{i-1}} \otimes \bb{C}^{n_{i+1}} \otimes \cdots \otimes \bb{C}^{n_{p+1}}$ defined on elementary tensors by
	\begin{align*}
		f_{i,\ket{v}}(\ket{a_1} \otimes \cdots \otimes \ket{a_{p+1}}) = \braket{v}{a_i} \ket{a_1} \otimes \cdots \otimes \ket{a_{i-1}} \otimes \ket{a_{i+1}} \otimes \cdots \otimes \ket{a_{p+1}}.
	\end{align*}
	Clearly $f_{i,\ket{v}}(\ket{a})$ is always a multiple of a separable state whenever $\ket{a}$ is separable. Now pick two vectors $\ket{a} := \ket{a_1} \otimes \cdots \otimes \ket{a_{p+1}}$, $\ket{b} := \ket{b_1} \otimes \cdots \otimes \ket{b_{p+1}}$ such that $\ket{a_i} \nparallel \ket{b_i}$ and $\ket{a_j} \nparallel \ket{b_j}$ for some $i \neq j$. Pick another index $k \neq i,j$ and choose a state $\ket{v} \in \bb{C}^{n_k}$ such that $\braket{v}{a_k},\braket{v}{b_k} \neq 0$. Then $f_{k,\ket{v}}(\ket{a})$ and $f_{k,\ket{v}}(\ket{b})$ are nonzero multiples of separable pure states living on a tensor product of $p$ Hilbert spaces that differ on the $i$-th and $j$-th subsystems. By the inductive hypothesis, $f_{k,\ket{v}}(\ket{a} + \ket{b})$ is not separable, so neither is $\ket{a} + \ket{b}$.
\end{proof}

\begin{lemma}\label{lem:multiHelperCor}
	Let $L \in M_{n_1} \otimes \cdots \otimes M_{n_p}$ be an invertible operator such that $L\cl{V} \subseteq \cl{V}$. Let $1 \leq r \leq p$ and $\ket{a} := \ket{a_1} \otimes \cdots \otimes \ket{a_p}, \ket{b} := \ket{b_1} \otimes \cdots \otimes \ket{b_p} \in \bb{C}^{n_1} \otimes \cdots \otimes \bb{C}^{n_p}$ be such that there are exactly $r$ indices $h$ with $\ket{a_h} \nparallel \ket{b_h}$. If we write
	\begin{align*}
		L\ket{a} = c_a\ket{a_1^\prime} \otimes \cdots \otimes \ket{a_p^\prime} \ \ \ \ \text{ and } \ \ \ \ L\ket{b} & = c_b\ket{b_1^\prime} \otimes \cdots \otimes \ket{b_p^\prime},
	\end{align*}
	then there are at most $r$ indices $h^\prime$ with $\ket{a_{h^\prime}^\prime} \nparallel \ket{b_{h^\prime}^\prime}$.
\end{lemma}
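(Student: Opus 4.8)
The plan is to prove the statement by reducing to the case of a single differing subsystem and then accumulating the effect along a chain via a triangle inequality. First I would record a purely combinatorial fact about product states: for product states $\ket{x},\ket{y} \in \bb{C}^{n_1}\otimes\cdots\otimes\bb{C}^{n_p}$, write $d(\ket{x},\ket{y})$ for the number of subsystems $i$ on which $\ket{x_i} \nparallel \ket{y_i}$. On each fixed subsystem the relation $\parallel$ is an equivalence relation, hence transitive, so for any three product states $\ket{x},\ket{y},\ket{z}$ and each index $i$, the condition $\ket{x_i}\nparallel\ket{y_i}$ forces $\ket{x_i}\nparallel\ket{z_i}$ or $\ket{z_i}\nparallel\ket{y_i}$; summing over $i$ gives the triangle inequality $d(\ket{x},\ket{y}) \le d(\ket{x},\ket{z}) + d(\ket{z},\ket{y})$. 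In the notation of the lemma, $r = d(\ket{a},\ket{b})$ and the goal is $d(L\ket{a},L\ket{b}) \le r$; note that $L\ket{a}$ and $L\ket{b}$ are nonzero multiples of product states (nonzero because $L$ is invertible), so $d$ is defined on them.

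Next I would settle the key case $r=1$: if separable states $\ket{u}$ and $\ket{w}$ differ on exactly one subsystem, then $d(L\ket{u},L\ket{w}) \le 1$. By Lemma~\ref{lem:multiHelper} the sum $\ket{u}+\ket{w}$ is separable, and since $L$ is linear, $L\ket{u} + L\ket{w} = L(\ket{u}+\ket{w}) \in \cl{V}$ is a multiple of a separable pure state. As $L\ket{u}$ and $L\ket{w}$ are themselves nonzero multiples of product states whose sum is separable, the nontrivial (``only if'') direction of Lemma~\ref{lem:multiHelper} applies and tells us that they are parallel on all but at most one subsystem, i.e. $d(L\ket{u},L\ket{w}) \le 1$. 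A scalar multiple of a product state is again a product state, so the hypotheses of Lemma~\ref{lem:multiHelper} are genuinely met here.

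Finally I would interpolate between $\ket{a}$ and $\ket{b}$ one subsystem at a time. Letting $h_1 < \cdots < h_r$ be the indices on which $\ket{a}$ and $\ket{b}$ differ, I define product states $\ket{c_0} := \ket{a}, \ket{c_1}, \ldots, \ket{c_r} := \ket{b}$, where $\ket{c_t}$ is obtained from $\ket{c_{t-1}}$ by replacing its $h_t$-th tensor factor $\ket{a_{h_t}}$ with $\ket{b_{h_t}}$. Consecutive states $\ket{c_{t-1}}$ and $\ket{c_t}$ are separable and differ on exactly the single subsystem $h_t$, so the $r=1$ case gives $d(L\ket{c_{t-1}},L\ket{c_t}) \le 1$ for every $t$. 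Applying the triangle inequality along this chain yields $d(L\ket{a},L\ket{b}) \le \sum_{t=1}^{r} d(L\ket{c_{t-1}},L\ket{c_t}) \le r$, which is exactly the asserted bound.

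I expect the main obstacle to be bookkeeping rather than conceptual: one must verify that $d$ really satisfies the triangle inequality on product states (which rests on transitivity of parallelism subsystem-by-subsystem) and, in the $r=1$ step, that passing to scalar multiples of product states causes no difficulty when invoking Lemma~\ref{lem:multiHelper} in both directions. Invertibility of $L$ is used only to guarantee that the images are nonzero, so that $d$ is meaningful; beyond that the argument is elementary.
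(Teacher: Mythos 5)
Your proposal is correct and follows essentially the same route as the paper: the paper likewise interpolates from $\ket{a}$ to $\ket{b}$ through a chain of product states that change one tensor factor at a time, applies Lemma~\ref{lem:multiHelper} to each consecutive pair to conclude their images under $L$ differ on at most one subsystem, and then sums along the chain (your explicit triangle inequality for $d$ is exactly the step the paper leaves implicit in its final ``It follows that'').
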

\begin{proof}
	Suppose without loss of generality that $\ket{a}$ and $\ket{b}$ differ on the first $r$ subsystems. Then define
	\begin{align*}
		\ket{a^{(i)}} := \ket{b_1} \otimes \cdots \otimes \ket{b_{i}} \otimes \ket{a_{i+1}} \otimes \cdots \otimes \ket{a_p} \quad \text{for $0 \leq i \leq r$},
	\end{align*}
	with the understanding that $\ket{a^{(0)}} = \ket{a}$ and $\ket{a^{(r)}} = \ket{b}$. It is clear that, for any $1 \leq i \leq r$, $\ket{a^{(i-1)}} + \ket{a^{(i)}}$ is separable, so $L(\ket{a^{(i-1)}} + \ket{a^{(i)}})$ is separable as well, which implies via Lemma~\ref{lem:multiHelper} that $L\ket{a^{(i-1)}}$ and $L\ket{a^{(i)}}$ differ on at most one subsystem. It follows that $L\ket{a}$ and $L\ket{b}$ differ on at most $r$ subsystems.
\end{proof}

Our final lemma shows that if $L\cl{V} \subseteq \cl{V}$ and three separable states $\ket{v},\ket{x},\ket{y}$ are such that $\ket{x}$ and $\ket{y}$ each differ from $\ket{v}$ on a single subsystem, then $L\ket{x}$ and $L\ket{y}$ differ from $L\ket{v}$ on a single subsystem as well. Furthermore, $\ket{x}$ and $\ket{y}$ differ from $\ket{v}$ on the same subsystem if and only if $L\ket{x}$ and $L\ket{y}$ differ from $L\ket{v}$ on the same subsystem.
\begin{lemma}\label{lem:multiHelperB}
	Let $L \in M_{n_1} \otimes \cdots \otimes M_{n_p}$ be an invertible operator such that $L\cl{V} \subseteq \cl{V}$. Let $1 \leq i,j \leq p$ and
	\begin{align*}
		\ket{v} & := \ket{v_1} \otimes \cdots \otimes \ket{v_p}, \\
		\ket{x} & := \ket{v_1} \otimes \cdots \otimes \ket{v_{i-1}} \otimes \ket{\tilde{x}} \otimes \ket{v_{i+1}} \otimes \cdots \otimes \ket{v_p} \quad (\ket{\tilde{x}} \nparallel \ket{v_i}), \\
		\ket{y} & := \ket{v_1} \otimes \cdots \otimes \ket{v_{j-1}} \otimes \ket{\tilde{y}} \otimes \ket{v_{j+1}} \otimes \cdots \otimes \ket{v_p} \quad (\ket{\tilde{y}} \nparallel \ket{v_j}).
	\end{align*}
	Write
	\begin{align*}
		L\ket{v} & = c_v\ket{v_1^\prime} \otimes \cdots \otimes \ket{v_p^\prime} \text{ for some } c_v \in \bb{C} \text{ and } \ket{v_h^\prime} \in \bb{C}^{n_h} \ (1 \leq h \leq p).
	\end{align*}
	Then there exist $1 \leq k,\ell \leq p$, $c_x,c_y \in \bb{C}$, and $\ket{\tilde{x}^\prime} \in \bb{C}^{n_k}, \ket{\tilde{y}^\prime} \in \bb{C}^{n_\ell}$ such that
	\begin{align*}
		L\ket{x} & = c_x\ket{v_1^\prime} \otimes \cdots \otimes \ket{v_{k-1}^\prime} \otimes \ket{\tilde{x}^\prime} \otimes \ket{v_{k+1}^\prime} \otimes \cdots \otimes \ket{v_p^\prime}, \text{ and}\\
		L\ket{y} & = c_y\ket{v_1^\prime} \otimes \cdots \otimes \ket{v_{\ell-1}^\prime} \otimes \ket{\tilde{y}^\prime} \otimes \ket{v_{\ell+1}^\prime} \otimes \cdots \otimes \ket{v_p^\prime}.
	\end{align*}
	Furthermore, $k = \ell$ if and only if $i = j$.
\end{lemma}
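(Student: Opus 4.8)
The plan is to read off the structural forms of $L\ket{x}$ and $L\ket{y}$ directly from the two preceding lemmas, and then to split the biconditional $k=\ell\iff i=j$ into its two implications, the second of which is the genuinely difficult one.

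First I would record the forms of $L\ket{x}$ and $L\ket{y}$. Since $\ket{x}$ differs from $\ket{v}$ on exactly one subsystem (the $i$-th), Lemma~\ref{lem:multiHelperCor} with $r=1$ shows that $L\ket{x}$ and $L\ket{v}$ differ on at most one subsystem; invertibility of $L$ forces $L\ket{x}\nparallel L\ket{v}$, so they differ on exactly one subsystem, which I name $k$. The same argument applied to $\ket{y}$ produces $\ell$. Writing $L\ket{v}=c_v\ket{v_1'}\otimes\cdots\otimes\ket{v_p'}$, this yields precisely the claimed expressions for $L\ket{x}$ and $L\ket{y}$, with $\ket{\tilde{x}'}\nparallel\ket{v_k'}$ and $\ket{\tilde{y}'}\nparallel\ket{v_\ell'}$.

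Next I would prove $i=j\Rightarrow k=\ell$. When $i=j$, the states $\ket{x}$ and $\ket{y}$ agree with $\ket{v}$ off the $i$-th slot, hence differ from each other on at most the single $i$-th slot, so $\ket{x}+\ket{y}$ is separable by Lemma~\ref{lem:multiHelper}. Therefore $L\ket{x}+L\ket{y}=L(\ket{x}+\ket{y})$ is separable, and Lemma~\ref{lem:multiHelper} forces $L\ket{x}$ and $L\ket{y}$ to differ on at most one subsystem. But if $k\neq\ell$, then on slot $k$ the vector $L\ket{x}$ disagrees with $L\ket{v}$ while $L\ket{y}$ agrees with it, and symmetrically on slot $\ell$; so $L\ket{x}$ and $L\ket{y}$ would differ on two subsystems, a contradiction. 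Hence $k=\ell$.

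The hard part is the converse, $i\neq j\Rightarrow k\neq\ell$, for which a local argument with the four states $\ket{v},\ket{x},\ket{y}$ and the doubly-modified state $\ket{w}$ is insufficient, since all four images can fit inside a single $n_k$-dimensional coordinate subspace when $n_k$ is large. Instead I would argue globally by dimension. Assuming each subsystem is nontrivial ($n_m\ge 2$), set $U_m:=\ket{v_1}\otimes\cdots\otimes\bb{C}^{n_m}\otimes\cdots\otimes\ket{v_p}$ and let $U_m'$ be the analogous coordinate subspace built from the factors $\ket{v_h'}$ of $L\ket{v}$; a basis count gives $\dim\sum_m U_m=1+\sum_m(n_m-1)$, and likewise for the $U_m'$. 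The implication just proved shows the output slot is well defined, so there is a function $m\mapsto k(m)$ with $L(U_m)\subseteq U_{k(m)}'$, and $k(m)$ is itself nontrivial since otherwise $L(U_m)$ could not have dimension $n_m\ge 2$. If $k(i)=k(j)$ with $i\neq j$, then $L\big(\sum_m U_m\big)\subseteq\sum_{r\in\mathrm{Im}(k)}U_r'$, a subspace of dimension $1+\sum_{r\in\mathrm{Im}(k)}(n_r-1)$; because $\mathrm{Im}(k)$ then omits at least one nontrivial slot, this is strictly smaller than $1+\sum_m(n_m-1)=\dim\sum_m U_m$, contradicting injectivity of $L$. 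This dimension collapse is the crux of the proof, and it establishes $k\neq\ell$, completing the biconditional.
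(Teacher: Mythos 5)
Your proof is correct, and while the first two steps (reading off the forms of $L\ket{x}$, $L\ket{y}$ from the one-slot-difference lemmas, and settling $i=j\Rightarrow k=\ell$ via separability of $\ket{x}+\ket{y}$) coincide with the paper's, your argument for the hard direction $i\neq j\Rightarrow k\neq\ell$ is genuinely different. The paper also argues by contradiction from $k=\ell$, but it does so by taking an \emph{arbitrary} separable $\ket{w}$, introducing the singly- and doubly-modified states $\ket{w^{(i)}},\ket{w^{(j)}},\ket{w^{(i,j)}}$, showing $L\ket{w^{(i,j)}}$ differs from $L\ket{v}$ only on slot $k$, and then using Lemma~\ref{lem:multiHelperCor} (with $r=p-2$) to conclude that every $L\ket{w}$ lies in the span of the set $\cl{T}$ of elementary tensors agreeing with $L\ket{v}$ on at least one slot --- a proper subspace, since $\bra{z_1^\prime}\otimes\cdots\otimes\bra{z_p^\prime}$ with $\braket{z_h^\prime}{v_h^\prime}=0$ annihilates it. Your version instead packages the two implications already proved into a well-defined slot function $m\mapsto k(m)$ with $L(U_m)\subseteq U'_{k(m)}$ and kills non-injectivity of $k$ by the exact dimension count $\dim\sum_m U_m = 1+\sum_m(n_m-1)$. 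Both are global arguments that convert ``$k=\ell$'' into a rank collapse contradicting invertibility; yours is more quantitative and only needs the $(1+\sum_m(n_m-1))$-dimensional coordinate star rather than the full range of $L$, at the cost of the standing (and harmless, since it is implicit in the paper as well) assumption $n_m\geq 2$ for all $m$, which you correctly flag. One could quibble that your aside about the local four-state argument being insufficient is commentary rather than proof, but the argument you actually give is complete.
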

\begin{proof}
	It is clear that $\ket{v} + \ket{x}$ is separable, so $L(\ket{v} + \ket{x})$ is separable as well. It follows from Lemma~\ref{lem:multiHelper} that $L\ket{v}$ and $L\ket{x}$ differ on a single subsystem, which allows us to write $L\ket{x}$ in the form described by the lemma. The fact that $L\ket{y}$ can be written in the desired form is proved analogously. All that remains to be proved is the final claim that $k = \ell$ if and only if $i = j$.
	
	First suppose that $i = j$. Notice that $\ket{x} + \ket{y}$ is separable in this case, so $L(\ket{x} + \ket{y})$ must be separable as well. The fact that $k = \ell$ then follows from Lemma~\ref{lem:multiHelper}. Now suppose that $i \neq j$ (without loss of generality, suppose that $i < j$) -- we will prove that $k \neq \ell$ by contradiction. To this end, assume that $k = \ell$. Consider an arbitrary separable state $\ket{w}$ and three related states $\ket{w^{(i)}}, \ket{w^{(j)}}$ and $\ket{w^{(i,j)}}$, defined as follows:
	\begin{align*}
		\ket{w} & := \ket{w_1} \otimes \cdots \otimes \ket{w_p},\\
		\ket{w^{(i)}} & := \ket{v_1} \otimes \cdots \otimes \ket{v_{i-1}} \otimes \ket{w_i} \otimes \ket{v_{i+1}} \otimes \cdots \otimes \ket{v_p},\\
		\ket{w^{(j)}} & := \ket{v_1} \otimes \cdots \otimes \ket{v_{j-1}} \otimes \ket{w_j} \otimes \ket{v_{j+1}} \otimes \cdots \otimes \ket{v_p},\\
		\ket{w^{(i,j)}} & := \ket{v_1} \otimes \cdots \otimes \ket{v_{i-1}} \otimes \ket{w_i} \otimes \ket{v_{i+1}} \otimes \cdots \otimes \ket{v_{j-1}} \otimes \ket{w_j} \otimes \ket{v_{j+1}} \otimes \cdots \otimes \ket{v_p}.
	\end{align*}
	Our goal is to show that $L\ket{w}$ is contained within a fixed nontrivial subspace of $\bb{C}^{n_1} \otimes \cdots \otimes \bb{C}^{n_p}$. Because $\ket{w}$ is an arbitrary separable state, and separable states span all of $\bb{C}^{n_1} \otimes \cdots \otimes \bb{C}^{n_p}$, this contradicts the fact that $L$ is invertible and will establish the lemma.
	
	Because $\ket{v} + \ket{w^{(i)}}$ and $\ket{x} + \ket{w^{(i)}}$ are separable, $L(\ket{v} + \ket{w^{(i)}})$ and $L(\ket{x} + \ket{w^{(i)}})$ are separable as well, and so by Lemma~\ref{lem:multiHelper} we have that $L\ket{v}$ and $L\ket{w^{(i)}}$ differ on a single subsystem, and similarly that $L\ket{x}$ and $L\ket{w^{(i)}}$ differ on a single subsystem. By invertibility of $L$ we know $\ket{v_k^\prime} \nparallel \ket{\tilde{x}^\prime}$ so it must be the case that $L\ket{v}$ and $L\ket{w^{(i)}}$ differ on the $k$-th subsystem. A similar argument shows that $L\ket{v}$ and $L\ket{w^{(j)}}$ differ on the $k$-th subsystem as well. Thus we can write
	\begin{align*}
		L\ket{w^{(i)}} & = c_{w^{(i)}}\ket{v_1^\prime} \otimes \cdots \otimes \ket{v_{k-1}^\prime} \otimes \ket{w_i^\prime} \otimes \ket{v_{k+1}^\prime} \otimes \cdots \otimes \ket{v_p^\prime}, \text{ and}\\
		L\ket{w^{(j)}} & = c_{w^{(j)}}\ket{v_1^\prime} \otimes \cdots \otimes \ket{v_{k-1}^\prime} \otimes \ket{w_j^\prime} \otimes \ket{v_{k+1}^\prime} \otimes \cdots \otimes \ket{v_p^\prime}.
	\end{align*}
	
	Similarly, $\ket{w^{(i)}} + \ket{w^{(i,j)}}$ and $\ket{w^{(j)}} + \ket{w^{(i,j)}}$ are separable so Lemma~\ref{lem:multiHelper} tells us that $L\ket{w^{(i)}}$ and $L\ket{w^{(i,j)}}$ differ on a single subsystem and that $L\ket{w^{(j)}}$ and $L\ket{w^{(i,j)}}$ differ on a single subsystem. Once again, this is only possible if $L\ket{w^{(i)}}$ and $L\ket{w^{(i,j)}}$ differ on the $k$-th subsystem. Thus, there exists some $\ket{\tilde{w}^\prime} \in \bb{C}^{n_k}$ such that we can write
	\begin{align*}
		L\ket{w^{(i,j)}} = c_{w^{(i,j)}}\ket{v_1^\prime} \otimes \cdots \otimes \ket{v_{k-1}^\prime} \otimes \ket{\tilde{w}^\prime} \otimes \ket{v_{k+1}^\prime} \otimes \cdots \otimes \ket{v_p^\prime}.
	\end{align*}
	
	Now observe that $\ket{w}$ and $\ket{w^{(i,j)}}$ differ on at most $p-2$ subsystems, so Lemma~\ref{lem:multiHelperCor} tells us that $L\ket{w}$ and $L\ket{w^{(i,j)}}$ differ on at most $p-2$ subsystems as well. It follows that $L\ket{w}$ is contained within the set
	\begin{align*}
		\cl{T} := \big\{ c\ket{z_1} \otimes \cdots \otimes \ket{z_p} \in \bb{C}^{n_1} \otimes \cdots \otimes \bb{C}^{n_p} \ | \ \exists \, h \text{ with } \ket{z_h} = \ket{v_h^\prime} \big\}.
	\end{align*}
	Because $\ket{w}$ is an arbitrary separable state and separable states span all of $\bb{C}^{n_1} \otimes \cdots \otimes \bb{C}^{n_p}$, it follows that the range of $L$ is contained in the span of $\cl{T}$. Now let $\ket{z_h^\prime} \in \bb{C}^{n_h}$ for $1 \leq h \leq p$ be such that $\braket{z_h^\prime}{v_h^\prime} = 0$. Then clearly $(\bra{z_1^\prime} \otimes \cdots \otimes \bra{z_p^\prime})\ket{z} = 0$ for all $\ket{z} \in \cl{T}$, so $\cl{T}$ spans a strict subspace of $\bb{C}^{n_1} \otimes \cdots \otimes \bb{C}^{n_p}$. This contradicts invertibility of $L$ and completes the proof.
\end{proof}

\begin{proof}[Proof of Theorem~\ref{thm:multipartite}]
	As in the bipartite case, the ``if'' implication is trivial. For the ``only if'' implication, we prove the following claim:
	\begin{claim}
		Let $L \in M_{n_1} \otimes \cdots \otimes M_{n_p}$ be an invertible operator such that $L\cl{V} \subseteq \cl{V}$. Fix $1 \leq i \leq p$ and vectors $\ket{v_h} \in \bb{C}^{n_h}$ ($i < h \leq p$). Then there exist a permutation $S_{\sigma}$, operators $P_h$ ($1 \leq h \leq i$), and vectors $\ket{w_h} \in \bb{C}^{n_h}$ ($i < h \leq p$) such that
		\begin{align*}
			S_{\sigma}L(\ket{v_1} \otimes \cdots \otimes \ket{v_p}) = P_1\ket{v_1} \otimes \cdots \otimes P_i\ket{v_i} \otimes \ket{w_{i+1}} \otimes \cdots \otimes \ket{w_{p}} \ & \\
			\forall \, \ket{v_h} \in \bb{C}^{n_h} (1 \leq h \leq i). &
		\end{align*}
	\end{claim}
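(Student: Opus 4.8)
The plan is to prove the Claim by induction on $i$, using Lemma~\ref{lem:multiHelperB} to track which single output subsystem is affected as each input subsystem is varied, and then choosing a permutation that lines those affected subsystems up with positions $1,\ldots,i$.

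For the base case $i = 1$, fix $\ket{v_2},\ldots,\ket{v_p}$ and let $\ket{v}$ be a reference state. Varying $\ket{v_1}$ to any $\ket{\tilde{x}} \nparallel \ket{v_1}$ produces a state $\ket{x}$ differing from $\ket{v}$ only on the first subsystem, so $\ket{v} + \ket{x}$ is separable and Lemma~\ref{lem:multiHelper} forces $L\ket{v}$ and $L\ket{x}$ to differ on at most one subsystem. The \emph{furthermore} clause of Lemma~\ref{lem:multiHelperB} applied with $i = j = 1$ shows that this subsystem is the same, say subsystem $k$, for every choice of $\ket{\tilde{x}}$. Choosing a permutation $\sigma$ with $S_\sigma$ carrying subsystem $k$ to position $1$, the vector $S_\sigma L(\ket{v_1} \otimes \ket{v_2} \otimes \cdots \otimes \ket{v_p})$ then has fixed components $\ket{w_2},\ldots,\ket{w_p}$ in positions $2,\ldots,p$ and only its first component varying. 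Setting $P_1\ket{v_1} := (I \otimes \bra{w_2 \cdots w_p}) S_\sigma L(\ket{v_1} \otimes \ket{v_2} \otimes \cdots \otimes \ket{v_p})$ gives a manifestly linear map, and after rescaling the $\ket{w_h}$ we obtain the asserted form.

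For the inductive step, assume the Claim for some $1 \leq i < p$ and fix $\ket{v_h}$ for $h > i+1$. Applying the inductive hypothesis with $\ket{v_{i+1}}$ frozen at a reference value yields a permutation $\sigma$ and linear maps $P_1,\ldots,P_i$ realizing local action on the first $i$ subsystems. Now vary $\ket{v_{i+1}}$: exactly as in the base case, every such variation changes $L\ket{v}$ on a single fixed subsystem $m$, and the clause of Lemma~\ref{lem:multiHelperB} asserting $k = \ell$ if and only if $i = j$ (here with distinct indices) guarantees that $S_\sigma$ sends $m$ to a position $q \geq i+1$, distinct from the positions $1,\ldots,i$ already in use. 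Composing $\sigma$ with the transposition of positions $q$ and $i+1$ leaves positions $1,\ldots,i$ untouched, hence preserves $P_1,\ldots,P_i$, and yields a permutation $\sigma'$ for which varying $\ket{v_j}$ changes only output position $j$ for each $1 \leq j \leq i+1$.

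It remains to assemble these single-coordinate facts into a genuine tensor product. Interpolating from the reference configuration to an arbitrary one by altering a single $\ket{v_j}$ at a time, each step changes only the corresponding output position, so output position $j$ depends on $\ket{v_j}$ alone while positions $i+2,\ldots,p$ stay constant; defining each $P_j$ by the extraction functional used in the base case makes it linear, and absorbing scalars gives $S_{\sigma'} L(\ket{v_1} \otimes \cdots \otimes \ket{v_p}) = P_1\ket{v_1} \otimes \cdots \otimes P_{i+1}\ket{v_{i+1}} \otimes \ket{w_{i+2}} \otimes \cdots \otimes \ket{w_p}$, completing the induction. I expect the main obstacle to be precisely this final assembly: although Lemma~\ref{lem:multiHelperB} controls the effect of varying one input subsystem at a time, one must argue that these local facts combine so that each output factor depends on exactly one input factor, and that the accumulated overall scalar can be distributed among the $P_j$ without spoiling their linearity; care is also needed to confirm that the normalizing transposition acts only on positions $\geq i+1$, so that the structure already fixed on the first $i$ positions is not disturbed.
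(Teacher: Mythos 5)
Your base case matches the paper's essentially verbatim, and your overall inductive skeleton is the same as the paper's, but there is a genuine gap in the inductive step: you never establish that the output subsystem affected by varying $\ket{v_{i+1}}$ --- nor the vector deposited there --- is independent of the values of $\ket{v_1},\ldots,\ket{v_i}$. Your appeal to the ``$k=\ell$ if and only if $i=j$'' clause of Lemma~\ref{lem:multiHelperB} is made at a single reference configuration of the first $i$ inputs, and it only shows that, \emph{from that configuration}, varying input $i+1$ hits some output position $q \geq i+1$. It does not rule out that from a different configuration $(\ket{x_1},\ldots,\ket{x_i})$ the variation of input $i+1$ hits a different position $q' \neq q$, in which case no single transposition $\tau$ produces the claimed form; nor does it show that the vector placed in position $q$ depends only on $\ket{v_{i+1}}$, which is what makes $P_{i+1}$ well defined and independent of the other inputs. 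Your closing ``interpolation'' paragraph names this assembly problem as the main obstacle but does not resolve it: each step of the chain changes exactly one output coordinate, but \emph{which} coordinate the final step changes, and \emph{what} it changes it to, is exactly the point at issue.

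The paper closes this gap with a counting argument based on Lemma~\ref{lem:multiHelperCor}. Taking two configurations of the first $i$ inputs that differ on $r$ subsystems, the states $\ket{v^{(j)}}$ and $\ket{x^{(j)}}$ differ on $r$ input subsystems, so their images differ on at most $r$ subsystems; by the inductive hypothesis their images already differ on $r$ of the first $i$ output positions, so if varying input $i+1$ affected output position $k$ in one case and $\ell \neq k$ in the other, the two images would differ on $r+2$ positions --- a contradiction. The same count forces the two images to \emph{agree} on position $k$ for every $j$, which is precisely the statement that the content of that slot depends only on $\ket{v_{i+1}}$. You would need to add this argument (or an equivalent one) before the transposition-and-linearity step at the end of your inductive step goes through.
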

	If we can prove the above claim then we are done, because if $i = p$ then we can use the fact that there exists a separable basis of $\bb{C}^{n_1} \otimes \cdots \otimes \bb{C}^{n_p}$ to conclude that $S_{\sigma}L = P_1 \otimes \cdots \otimes P_p$. Invertibility of each $P_j$ then follows from invertibility of $L$, and Theorem~\ref{thm:multipartite} is proved.
	
	To prove the claim, we proceed by induction on $i$. For the base case, assume $i = 1$ and fix vectors $\ket{v_h} \in \bb{C}^{n_h}$ ($2 \leq h \leq p$). Consider the $n_1$ vectors $\ket{v^{(j)}}$ ($1 \leq j \leq n_1$) defined by
	\begin{align*}
		\ket{v^{(j)}} := \ket{j} \otimes \ket{v_2} \otimes \cdots \otimes \ket{v_p}.
	\end{align*}
	By Lemma~\ref{lem:multiHelperB} we know that there exists $1 \leq k \leq p$, $\ket{w_h} \in \bb{C}^{n_h}$ ($h \neq k$), and $c_j \in \bb{R}$, $\ket{\tilde{v}^{(j)}} \in \bb{C}^{n_k}$ ($1 \leq j \leq n_i$) such that
	\begin{align*}
		L\ket{v^{(j)}} = c_j\ket{w_1} \otimes \cdots \otimes \ket{w_{k-1}} \otimes \ket{\tilde{v}^{(j)}} \otimes \ket{w_{k+1}} \otimes \cdots \otimes \ket{w_p} \quad \forall \, 1 \leq j \leq n_1.
	\end{align*}
	If $\sigma : \{1,2,\ldots,p\} \rightarrow \{1,2,\ldots,p\}$ is the permutation that swaps $1$ and $k$ then it follows by linearity of $L$ that there exists an operator $P_1 \in M_{n_1}$ such that
	\begin{align*}
		& S_{\sigma}L(\ket{v_1} \otimes \cdots \otimes \ket{v_p}) \\
		& \quad \quad \quad = P_1\ket{v_1} \otimes \ket{w_2} \otimes \cdots \otimes \ket{w_{k-1}} \otimes \ket{w_1} \otimes \ket{w_{k+1}} \otimes \cdots \otimes \ket{w_p} \quad \forall \, \ket{v_1} \in \bb{C}^{n_1}.
	\end{align*}
	We have thus proved the base case $i = 1$ of the claim.
	
	We now proceed to the inductive step. Assume that the claim holds for some specific value of $i$ and fix vectors $\ket{v_h} \in \bb{C}^{n_h}$ ($i+2 \leq h \leq p$). By the inductive hypothesis, there exist a permutation $S_{\sigma}$, operators $P_{h}$ ($1 \leq h \leq i$), and vectors $\ket{w_{h}} \in \bb{C}^{n_h}$ ($i+1 \leq h \leq p$) such that
	\begin{align}\begin{split}\label{eq:inducHyp}
		& S_{\sigma}L(\ket{z_1} \otimes \cdots \otimes \ket{z_i} \otimes \ket{1} \otimes \ket{v_{i+2}} \otimes \cdots \otimes \ket{v_p}) \\
		& \quad \quad \quad = P_{1}\ket{z_1} \otimes \cdots \otimes P_{i}\ket{z_i} \otimes \ket{w_{i+1}} \otimes \cdots \otimes \ket{w_{p}} \ \ \forall \, \ket{z_h} \in \bb{C}^{n_h} (1 \leq h \leq i).
	\end{split}\end{align}
	Fix $\ket{v_h},\ket{x_h} \in \bb{C}^{n_h}$ ($1 \leq h \leq i$) and consider the $2n_{i+1}$ vectors $\ket{v^{(j)}}$ and $\ket{x^{(j)}}$ ($1 \leq j \leq n_{i+1}$) defined by
	\begin{align*}
		\ket{v^{(j)}} & := \ket{v_1} \otimes \cdots \otimes \ket{v_{i}} \otimes \ket{j} \otimes \ket{v_{i+2}} \otimes \cdots \otimes \ket{v_p} \ \text{ and}\\
		\ket{x^{(j)}} & := \ket{x_1} \otimes \cdots \otimes \ket{x_{i}} \otimes \ket{j} \otimes \ket{v_{i+2}} \otimes \cdots \otimes \ket{v_p}.
	\end{align*}
	By Lemma~\ref{lem:multiHelperB} we know that there exists $1 \leq k \leq p$, independent of $j$, such that $S_{\sigma}L\ket{v^{(1)}}$ and $S_{\sigma}L\ket{v^{(j)}}$ differ only on the $k$-th subsystem. If $k \leq i$ then we can create a vector $\ket{v^{(1)\prime}}$ that differs from $\ket{v^{(1)}}$ only on the $k$-th subsystem and observe that $S_{\sigma}L\ket{v^{(1)}}$ and $S_{\sigma}L\ket{v^{(1)\prime}}$ differ on the $k$-th subsystem as well. This contradicts Lemma~\ref{lem:multiHelperB}, so we see that $k \geq i + 1$.
	
	Similarly, there exists $i + 1 \leq \ell \leq p$, independent of $j$, such that $S_{\sigma}L\ket{x^{(1)}}$ and $S_{\sigma}L\ket{x^{(j)}}$ differ only on the $\ell$-th subsystem. Now suppose there are $r$ indices $h$ such that $\ket{v_h} \neq \ket{x_h}$ ($1 \leq h \leq i$). Then for any $j$, $\ket{v^{(j)}}$ and $\ket{x^{(j)}}$ differ on $r$ of the first $i$ subsystems, so in particular $S_{\sigma}L\ket{v^{(1)}}$ and $S_{\sigma}L\ket{x^{(1)}}$ differ on $r$ of the first $i$ subsystems as well, by Equation~\eqref{eq:inducHyp}. However, $S_{\sigma}L\ket{v^{(1)}}$ and $S_{\sigma}L\ket{v^{(j)}}$ differ on the $k$-th subsystem and $S_{\sigma}L\ket{x^{(1)}}$ and $S_{\sigma}L\ket{x^{(j)}}$ differ on the $\ell$-th subsystem, $S_{\sigma}L\ket{v^{(j)}}$ and $S_{\sigma}L\ket{x^{(j)}}$ differ on $r+2$ subsystems if $k \neq \ell$, which contradicts Lemma~\ref{lem:multiHelperCor}. It follows that $k = \ell$ and furthermore that $S_{\sigma}L\ket{v^{(j)}}$ and $S_{\sigma}L\ket{x^{(j)}}$ agree on the $k$-th subsystem for all $j$.
	
	Now let $\tau : \{1,2,\ldots,p\} \rightarrow \{1,2,\ldots,p\}$ be the permutation that swaps $i+1$ and $k$. By using the fact that $\ket{v_h},\ket{x_h} \in \bb{C}^{n_h}$ ($1 \leq h \leq i$) were chosen arbitrarily, it follows that there exists $P_{i+1}$ (independent of $j$) such that
	\begin{align*}
		& S_{\tau}S_{\sigma}L(\ket{z_1} \otimes \cdots \otimes \ket{z_i} \otimes \ket{j} \otimes \ket{v_{i+2}} \otimes \cdots \otimes \ket{v_p}) \\
		= \ & P_{1}\ket{z_1} \otimes \cdots \otimes P_{i}\ket{z_i} \otimes P_{i+1}\ket{j} \otimes \ket{w_{i+2}} \otimes \cdots \otimes \ket{w_{k-1}} \otimes \ket{w_{i+1}} \otimes \ket{w_{k+1}} \otimes \cdots \otimes \ket{w_{p}} \\
		& \quad \quad \quad \quad \quad \quad \quad \quad \quad \quad \quad \quad \quad \quad \quad \quad \quad \quad \ \ \forall \, \ket{z_h} \in \bb{C}^{n_h} (1 \leq h \leq i), \ \forall \, 1 \leq j \leq n_{i+1}.
	\end{align*}
	The inductive step is completed by noting that $\{\ket{j}\}$ forms a basis for $\bb{C}^{n_{i+1}}$ and using linearity. The desired form of $L$ follows. The claim about each $P_i$ being unitary if $L$ does not alter the norm of pure states is trivial.
\end{proof}

\subsection{Geometric Measure of Entanglement}\label{sec:geom_meas_ent}

One useful application of Theorem~\ref{thm:multipartite} is we can now characterize all operators that preserve the \emph{geometric measure of entanglement} \cite{BL01,Shi95,WG03}, which is defined for pure states $\ket{v} \in \mathbb{C}^{n_1} \otimes \cdots \otimes \mathbb{C}^{n_p}$ in terms of the maximum overlap of $\ket{v}$ with a separable state:
\begin{align*}
	E(\ket{v}) := 1 - \sup_{\ket{w_i} \in \mathbb{C}^{n_i}}\Big\{ \big|(\bra{w_1} \otimes \cdots \otimes \bra{w_p})\ket{v}\big|^2 \Big\}.
\end{align*}
The geometric measure of entanglement is an entanglement monotone \cite{Vid00} and thus can be thought of as a measurement of ``how entangled'' a given pure state is. It has been shown to be related to the relative entropy of entanglement \cite{VP98,WEGM04}, the generalized robustness of entanglement \cite{Cav06,HN03,Ste03}, and quantum state estimation \cite{CZW11,PR04}. It thus plays a very important role in entanglement theory, especially in the multipartite setting where separability and entanglement monotones are still not well understood.

We will now use Theorem~\ref{thm:multipartite} to describe the operators that preserve the geometric measure of entanglement. Instead of dealing with $E$ itself, it will be useful to consider the quantity $G(\ket{v}) := \sqrt{1 - E(\ket{v})}$ and simply note that $E(L\ket{v}) = E(\ket{v})$ for all $\ket{v}$ if and only if $G(L\ket{v}) = G(\ket{v})$ for all $\ket{v}$. It is not difficult to see that $G$ is a norm so the group of operators
\begin{align*}
	\cl{G} := \big\{ L \in M_{n_1} \otimes \cdots \otimes M_{n_p} : G(L\ket{v}) = G(\ket{v}) \text{ for all } \ket{v} \in \bb{C}^{n_1} \otimes \cdots \otimes \bb{C}^{n_p} \big\}
\end{align*}
is bounded. Furthermore, it is clear that the group
\begin{align*}
	\cl{G}_S := \big\{ U_1 \otimes \cdots \otimes U_p \in M_{n_1} \otimes \cdots \otimes M_{n_p} : U_i \in M_{n_i} \text{ is unitary for all } i \big\}
\end{align*}
is a subgroup of both $\cl{G}$ and the unitary group. To see that $\cl{G}_S$ is irreducible, recall that the unitary group $U(n)$ spans all of $M_n$. Thus the span of $\cl{G}_S$ contains all operators of the form
\begin{align*}
	X_1 \otimes \cdots \otimes X_p \in M_{n_1} \otimes \cdots \otimes M_{n_p} : X_i \in M_{n_i} \text{ is not necessarily unitary.}
\end{align*}
Because operators of this form span all of $M_{n_1} \otimes \cdots \otimes M_{n_p}$, it follows that $\cl{G}_S$ spans all of $M_{n_1} \otimes \cdots \otimes M_{n_p}$ and is thus irreducible. By Proposition~\ref{prop:subgroup_lift} it follows that $\cl{G}$ is contained in the unitary group, so if $E(L\ket{v}) = E(\ket{v})$ for all $\ket{v}$ then $L$ must be unitary.

Now using the fact that $E(\ket{v}) = 0$ if and only if $\ket{v}$ is separable, we see that $L\ket{v}$ must be separable whenever $\ket{v}$ is separable for any $L \in \cl{G}$. By invoking Theorem~\ref{thm:multipartite} we have proved the following:
\begin{thm}\label{thm:geometric_measure_entanglement}
	Let $U \in M_{n_1} \otimes \cdots \otimes M_{n_p}$ be a linear operator. Then $E(U\ket{v}) = E(\ket{v})$ for all $\ket{v}$ if and only if there exist unitaries $U_i \in M_{n_i}$ ($1 \leq i \leq p$) and a swap operator $S_{\sigma} : \ket{v_1} \otimes \cdots \otimes \ket{v_p} \mapsto \ket{v_{\sigma(1)}} \otimes \cdots \otimes \ket{v_{\sigma(p)}}$ such that $U = S_\sigma(U_1 \otimes \cdots \otimes U_p)$.
\end{thm}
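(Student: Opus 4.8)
The plan is to prove the two directions separately, with all the substance in the ``only if'' direction; the reverse direction is a short verification.

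For the ``if'' direction, I would check directly that an operator of the form $U = S_\sigma(U_1 \otimes \cdots \otimes U_p)$ with each $U_i$ unitary preserves $E$. The key observation is that such a $U$ maps the set of separable unit vectors bijectively onto itself (it permutes the tensor factors via $S_\sigma$ and rotates each factor by the unitary $U_i$) while preserving the Euclidean norm. Since $E(\ket{v})$ is defined purely through the supremum of $\big|(\bra{w_1} \otimes \cdots \otimes \bra{w_p})\ket{v}\big|^2$ over separable unit test vectors, and $U$ merely relabels those test vectors, the supremum is unchanged and so $E(U\ket{v}) = E(\ket{v})$.

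For the ``only if'' direction, first I would reduce norm-preservation of $U$ to an isometry statement. Setting $G(\ket{v}) := \sqrt{1 - E(\ket{v})}$ and extending homogeneously, one has $G(\ket{v}) = \sup\big\{ |\braket{w}{v}| : \ket{w} \text{ a separable unit vector} \big\}$, which I would verify is a genuine norm: absolute homogeneity and the triangle inequality are immediate from properties of the supremum, and positive-definiteness follows because the separable vectors span $\bb{C}^{n_1} \otimes \cdots \otimes \bb{C}^{n_p}$. The hypothesis $E(U\ket{v}) = E(\ket{v})$ is then exactly the statement that $U$ is an isometry of $G$, so $U$ belongs to the isometry group $\cl{G}$ of $G$, which is bounded precisely because $G$ is a norm. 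Next I would upgrade membership in $\cl{G}$ to unitarity: the local unitary group $\cl{G}_S$ of operators $U_1 \otimes \cdots \otimes U_p$ is a subgroup of $\cl{G}$ by the ``if'' direction (with $\sigma$ trivial) and of $U(n)$ trivially, and it is irreducible because each $U(n_i)$ spans $M_{n_i}$, so $\mathrm{span}(\cl{G}_S)$ contains every elementary tensor $X_1 \otimes \cdots \otimes X_p$ and hence all of $M_{n_1} \otimes \cdots \otimes M_{n_p}$. Proposition~\ref{prop:subgroup_lift} then forces $\cl{G} \subseteq U(n)$, so $U$ is unitary.

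Finally, I would pass from unitarity to the tensor/permutation form. Since $E(\ket{v}) = 0$ if and only if $\ket{v}$ is separable, and $U$ preserves both $E$ and the Euclidean norm, $U$ sends separable unit vectors to separable unit vectors, i.e., $U\cl{V} \subseteq \cl{V}$ in the notation of Theorem~\ref{thm:multipartite}. That theorem immediately yields $U = S_\sigma(P_1 \otimes \cdots \otimes P_p)$ with each $P_i$ invertible, and its ``furthermore'' clause (since $U$ does not alter norms) gives that each $P_i$ is unitary, as desired. The main obstacle is the middle stage: carefully checking that $G$ is a norm and that $\cl{G}_S$ is irreducible, since these are exactly the hypotheses that license Proposition~\ref{prop:subgroup_lift}; once unitarity is in hand, Theorem~\ref{thm:multipartite} does the remaining heavy lifting.
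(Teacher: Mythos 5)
Your proposal is correct and follows essentially the same route as the paper: verify that $G(\ket{v}) = \sqrt{1-E(\ket{v})}$ is a norm so its isometry group is bounded, use irreducibility of the local unitary subgroup together with Proposition~\ref{prop:subgroup_lift} to conclude $U$ is unitary, and then use $E(\ket{v})=0 \Leftrightarrow \ket{v}$ separable to invoke Theorem~\ref{thm:multipartite}. No gaps.
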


\chapter{Norms Arising from Schmidt Rank}\label{chap:Sk_norms}

In this chapter we develop families of vector and operator norms that arise naturally from the Schmidt rank of pure states. We characterize these norms in a variety of different ways, and derive various basic results such as characterizations of their isometry groups. We characterize the dual of the vector norms, but we defer a characterization of the dual of the operator norms until Section~\ref{sec:dual_op_norm}, since we do not yet have the necessary mathematical tools to derive this result.

We spend a significant amount of time producing various bounds on the norms introduced in this chapter. Our reason for deriving these bounds is that Chapter~\ref{ch:computation} deals primarily with applications of these norms, so the inequalities derived here apply immediately to these various problems. We return to more theoretical aspects of these norms in Chapter~\ref{ch:optheory}, where we show that they arise from a natural family of operator spaces.

\section{The s(k)-Vector Norm}\label{sec:sk_vector_norm}

Given a pure quantum state $\ket{v} \in \bb{C}^m \otimes \bb{C}^n$, it is natural to ask for a measure of how close $\ket{v}$ is to being separable. The notion of Schmidt rank plays a role in answering that question, but in some ways seems insufficient because it misses (for example) the fact that if $\varepsilon > 0$ is small then a Schmidt rank-$3$ state with Schmidt coefficients $\varepsilon$, $\varepsilon$, and $\sqrt{1 - 2\varepsilon^2}$ is in some sense ``closer'' to being separable than a Schmidt rank-$2$ state with Schmidt coefficients $1/\sqrt{2}$ and $1/\sqrt{2}$. The norms introduced in this section, which we refer to as ``$s(k)$-vector norms'', can be seen as filling in this gap and providing a measure of how close a pure state is to having Schmidt rank $k$.
\begin{defn}\label{defn:sk_norm_vec}
  Let $\ket{v} \in \bb{C}^m \otimes \bb{C}^n$ and let $1 \leq k \leq \min\{m,n\}$. Then we define the {\em $s(k)$-vector norm} of $\ket{v}$, denoted $\big\| \ket{v} \big\|_{s(k)}$, by
  \begin{align*}
    \big\| \ket{v} \big\|_{s(k)} & := \sup_{\ket{w}} \Big\{ \big| \braket{w}{v} \big| : SR(\ket{w}) \leq k \Big\}.
  \end{align*}
\end{defn}

Note that even though Definition~\ref{defn:sk_norm_vec} is only stated for unit vectors $\ket{v}$, it extends in the obvious way to a norm on all of $\bb{C}^m \otimes \bb{C}^n$. Also note that these are indeed norms; positive homogeneity and the triangle inequality follow immediately from the corresponding properties of the complex modulus and supremum. The fact that $\big\|\ket{v}\big\|_{s(k)} = 0$ if and only if $\ket{v} = 0$ can be seen by noting that the set of separable pure states (and thus the set of states with Schmidt rank no larger than $k$) spans all of $\bb{C}^m \otimes \bb{C}^n$.

As a brief note on notation, we use the subscript $s(k)$ to differentiate these norms from the usual vector $k$-norm, which is traditionally denoted simply by the subscript $k$. The lowercase ``s'' simply refers to Schmidt (we use an uppercase ``S'' for an operator version of these norms later in this chapter).

The $s(k)$-vector norms have been considered in \cite{CK09,CKS09,JK10} as a tool for detecting $k$-block positivity of operators. We return to this topic in Section~\ref{sec:op_sk_norm_block_pos}, but for now we focus on more fundamental mathematical properties of these norms. Also note that in the $k = 1$ case, the $s(1)$-norm is the well-known \emph{smallest reasonable crossnorm} or \emph{injective crossnorm} \cite{Gro53} (see also \cite[Chapter~1]{DGFS08}). We discuss the similarities between the $s(k)$-norms and reasonable crossnorms in Section~\ref{sec:sk_vector_dual_norm}.

\subsection{Basic Properties}\label{sec:sk_vector_norm_prop}

The case of $k = \min\{m,n\}$ of the $s(k)$-norms is very familiar -- $\big\| \ket{v} \big\|_{s(\min\{m,n\})}$ is just the standard Euclidean norm of $\ket{v}$. Similarly, the norm $\big\|\ket{v}\big\|_{s(1)}$ is exactly the quantity $G(\ket{v})$ that was defined in terms of the geometric measure of entanglement in Section~\ref{sec:geom_meas_ent} in the bipartite setting. It is clear from the definition that $\big\| \ket{v} \big\|_{s(k)} \leq \big\| \ket{v} \big\|$ for all $k$, and moreover that we have an increasing family of norms leading up to the Euclidean norm:
\begin{align*}
  \big\| \ket{v} \big\|_{s(1)} \leq \big\| \ket{v} \big\|_{s(2)} \leq \cdots \leq \big\| \ket{v} \big\|_{s(\min\{m,n\}-1)} \leq \big\| \ket{v} \big\|.
\end{align*}

The first result of this section shows that the $s(k)$-vector norm is easy to calculate.

\begin{thm}\label{thm:sk_vector_norm}
  Suppose $\ket{v} \in \bb{C}^m \otimes \bb{C}^n$ has Schmidt coefficients $\alpha_1 \geq \alpha_2 \geq \cdots \geq 0$. Then
    \[
      \big\| \ket{v} \big\|_{s(k)} = \sqrt{\sum_{i=1}^{k}\alpha_i^2}.
    \]
\end{thm}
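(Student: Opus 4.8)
The plan is to compute the supremum in Definition~\ref{defn:sk_norm_vec} directly using the Schmidt decomposition of $\ket{v}$ as the key tool. First I would write $\ket{v}$ in its Schmidt decomposition from Theorem~\ref{thm:schmidt}, so that $\ket{v} = \sum_{i}\alpha_i\ket{a_i}\otimes\ket{b_i}$ with $\{\ket{a_i}\}$ and $\{\ket{b_i}\}$ orthonormal and $\alpha_1 \geq \alpha_2 \geq \cdots \geq 0$. The goal is then to maximize $|\braket{w}{v}|$ over all $\ket{w}$ with $SR(\ket{w}) \leq k$ and $\|\ket{w}\| = 1$. The intuition is that the best such $\ket{w}$ should align with the $k$ largest Schmidt components of $\ket{v}$, giving the candidate $\ket{w} = \frac{1}{\sqrt{\sum_{i=1}^k \alpha_i^2}}\sum_{i=1}^k \alpha_i \ket{a_i}\otimes\ket{b_i}$, which has Schmidt rank at most $k$ and yields $|\braket{w}{v}| = \sqrt{\sum_{i=1}^k\alpha_i^2}$. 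This establishes the lower bound $\big\|\ket{v}\big\|_{s(k)} \geq \sqrt{\sum_{i=1}^k \alpha_i^2}$.

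The harder direction is the upper bound: showing that no state of Schmidt rank at most $k$ can do better. Here the natural approach is to pass to the matricization picture via the vector-operator isomorphism of Section~\ref{sec:vector_operator_isomorphism}. Under this isomorphism, $\braket{w}{v} = \Tr\big({\rm mat}(\ket{w})^\dagger {\rm mat}(\ket{v})\big)$, the Schmidt coefficients $\alpha_i$ of $\ket{v}$ are exactly the singular values of $A := {\rm mat}(\ket{v})$, and the condition $SR(\ket{w}) \leq k$ becomes ${\rm rank}(B) \leq k$ where $B := {\rm mat}(\ket{w})$. Thus the quantity to bound is
\begin{align*}
	\sup_{B}\Big\{ \big|\Tr(B^\dagger A)\big| : {\rm rank}(B) \leq k, \ \|B\|_F = 1 \Big\}.
\end{align*}
I would recognize this as precisely the Ky Fan $k$-norm type quantity, or more directly apply the trace-duality bound $|\Tr(B^\dagger A)| \leq \sum_i \sigma_i(B)\sigma_i(A)$ (the von Neumann trace inequality), combined with the Cauchy--Schwarz inequality applied to the vector of the $k$ nonzero singular values of $B$ against $(\alpha_1,\ldots,\alpha_k)$. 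Since $\|B\|_F = 1$ forces $\sum_i \sigma_i(B)^2 = 1$, this yields $|\Tr(B^\dagger A)| \leq \sqrt{\sum_{i=1}^k \alpha_i^2}$, matching the lower bound.

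The main obstacle I anticipate is making the upper bound argument fully rigorous, specifically justifying the von Neumann trace inequality $|\Tr(B^\dagger A)| \leq \sum_i \sigma_i(A)\sigma_i(B)$ and correctly handling that $B$ has at most $k$ nonzero singular values so only the top $k$ values $\alpha_1,\ldots,\alpha_k$ of $A$ enter the Cauchy--Schwarz step. An alternative that avoids citing the trace inequality would be to fix the (at most $k$-dimensional) supports of $\ket{w}$ on each factor and reduce to a finite-dimensional optimization, but the singular-value/matricization route is cleaner and leverages machinery already developed in the excerpt. Either way, the two bounds coincide at $\sqrt{\sum_{i=1}^k \alpha_i^2}$, completing the proof; I would also remark that this confirms the claimed special cases, namely $k = \min\{m,n\}$ recovering the Euclidean norm and $k=1$ recovering the largest Schmidt coefficient $\alpha_1$.
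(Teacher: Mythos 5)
Your lower bound is identical to the paper's: both take $\ket{w}$ to be the normalized truncation of the Schmidt decomposition to its $k$ largest terms. For the upper bound, however, you take a genuinely different route. The paper works directly with the two Schmidt decompositions $\ket{v} = \sum_i \alpha_i \ket{u_i}\otimes\ket{v_i}$ and $\ket{w} = \sum_j \beta_j \ket{w_j}\otimes\ket{x_j}$, bounds $|\braket{w}{v}| \leq \alpha^T D \beta$ where $D_{ij} = |\braket{w_j}{u_i}\braket{x_j}{v_i}|$, shows via Cauchy--Schwarz that $D$ is doubly-substochastic, and then invokes the Hardy--Littlewood--P\'olya theorem to conclude $\alpha^T D\beta \leq \alpha^T\beta$ before a final Cauchy--Schwarz step. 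Your matricization argument replaces all of this with a single appeal to von Neumann's trace inequality $|\Tr(B^\dagger A)| \leq \sum_i \sigma_i(A)\sigma_i(B)$, after which the rank constraint on $B = {\rm mat}(\ket{w})$ kills all but the top $k$ terms and Cauchy--Schwarz finishes exactly as you describe. Both arguments are correct; in fact the paper's doubly-substochastic construction is essentially an inline proof of the special case of the trace inequality you are citing, so your version is shorter at the cost of importing a classical theorem the paper chose to avoid, while the paper's version is self-contained modulo Hardy--Littlewood--P\'olya. Your observation that $\big\|\ket{v}\big\|_{s(k)} = \big\|{\rm mat}(\ket{v})\big\|_{(k,2)}$ is one the paper itself makes immediately \emph{after} the proof, so your route is entirely consistent with the surrounding framework.
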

\begin{proof}
  Assume without loss of generality that $n \geq m$. To see that $\big\| \ket{v} \big\|_{s(k)} \geq \sqrt{\sum_{i=1}^{k}\alpha_i^2}$, use the Schmidt Decomposition to write $\ket{v} = \sum_{i=1}^m \alpha_i \ket{u_i} \otimes \ket{v_i}$. Now let
  \[
    \ket{w} = \frac{\sum_{i=1}^k \alpha_i \ket{u_i} \otimes \ket{v_i}}{\sqrt{\sum_{i=1}^k \alpha_i^2}}.
  \]

  \noindent Observe that $SR(\ket{w}) = k$. Some algebra then reveals that
  \begin{align*}
    \braket{w}{v} & = \frac{1}{\sqrt{\sum_{i=1}^k \alpha_i^2}} \Big( \sum_{i=1}^{m}\alpha_i\bra{u_i} \otimes \bra{v_i} \Big) \Big(\sum_{i=1}^k \alpha_i \ket{u_i} \otimes \ket{v_i}\Big) \\
    & = \frac{1}{\sqrt{\sum_{i=1}^k \alpha_i^2}} \sum_{i=1}^{m}\sum_{j=1}^k \alpha_i \alpha_j \braket{u_i}{u_j} \otimes \braket{v_i}{v_j} \\
    & = \frac{1}{\sqrt{\sum_{i=1}^k \alpha_i^2}} \sum_{j=1}^k \alpha_j^2 \\
    & = \sqrt{\sum_{i=1}^k \alpha_i^2}.
  \end{align*}

  \noindent To see the opposite inequality, let $\ket{w} \in \bb{C}^m \otimes \bb{C}^n$ have $SR(\ket{w}) \leq k$ and Schmidt Decomposition $\ket{w} = \sum_{i=1}^k \beta_i \ket{w_i} \otimes \ket{x_i}$ (where we take some of the $\beta_i$'s to be zero if $SR(\ket{w}) < k$). Then
  \begin{align*}
    \big| \braket{w}{v} \big| & = \left| \Big(\sum_{i=1}^k \beta_i \bra{w_i} \otimes \bra{x_i}\Big) \Big( \sum_{i=1}^{m}\alpha_i\ket{u_i} \otimes \ket{v_i} \Big) \right| \\
    & \leq \sum_{i=1}^{m}\sum_{j=1}^k\alpha_i\beta_j\big|\braket{w_j}{u_i}\braket{x_j}{v_i}\big| \\
    & = {\bf \alpha}^{T}D{\bf \beta}
  \end{align*}

\noindent where ${\bf \alpha}^{T} = (\alpha_1,\ldots,\alpha_m)$ and ${\bf \beta}^{T} = (\beta_1,\ldots,\beta_k,0,\ldots,0)$ are vectors of Schmidt coefficients, and $D$ is the $m \times m$ matrix given by $D_{ij} := \big|\braket{w_j}{u_i}\braket{x_j}{v_i}\big|$, where we have extended $\{\ket{w_j}\}$ and $\{\ket{x_j}\}$ to orthonormal bases of their respective spaces. Notice that, by the Cauchy--Schwarz inequality,
\begin{align*}
	\sum_{j=1}^m D_{ij} & = \sum_{j=1}^m \big|\braket{w_j}{u_i}\braket{x_j}{v_i}\big| \\
	& \leq \sqrt{\left(\sum_{j=1}^m \big|\braket{w_j}{u_i}\big|^2\right)\left(\sum_{j=1}^m \big|\braket{x_j}{v_i}\big|^2\right)} \\
	& \leq \sqrt{\big\|\ket{u_i}\big\|^2 \big\|\ket{v_i}\big\|^2} \\
	& = 1.
\end{align*}
In other words, the row sums of $D$ are no greater than $1$. A similar argument shows that the column sums of $D$ are no greater than $1$, so $D$ is doubly-sub-stochastic. The Hardy-Littlewood-Polya Theorem then tells us that the vector ${\bf \gamma}^{T} := (D{\bf \beta})^{T} = (\gamma_1,\ldots,\gamma_m)$ satisfies
\begin{align*}
    \sum_{i=1}^j \gamma_i \leq \sum_{i=1}^j \beta_i \quad \forall \, 1 \leq j \leq m.
\end{align*}

\noindent Because $\alpha_1 \geq \alpha_2 \geq \cdots \geq 0$, this tells us that ${\bf \alpha}^{T}D{\bf \beta} \leq {\bf \alpha}^{T}{\bf \beta}$. The Cauchy--Schwarz inequality then implies that
    \begin{align*}
    \big| \braket{w}{v} \big| = {\bf \alpha}^{T}D{\bf \beta} \leq {\bf \alpha}^{T}{\bf \beta} \leq \sqrt{\sum_{i=1}^k \alpha_i^2}\sqrt{\sum_{i=1}^k \beta_i^2} = \sqrt{\sum_{i=1}^k \alpha_i^2},
  \end{align*}
and the result follows.
\end{proof}

One useful way of looking at Theorem~\ref{thm:sk_vector_norm} is through the vector-operator isomorphism. Because the Schmidt coefficients of $\ket{v}$ are the singular values of ${\rm mat}(\ket{v})$, we have $\big\| \ket{v} \big\|_{s(k)} = \big\| {\rm mat}(\ket{v}) \big\|_{(k,2)}$, where the operator $(k,2)$-norm was defined in Section~\ref{sec:norms}. We will make use of this equivalence in Sections~\ref{sec:sk_vector_dual_norm} and~\ref{sec:sk_vector_isometries} in order to investigate the dual norm and isometries of the $s(k)$-vector norm.

Because $\bb{C}^m \otimes \bb{C}^n$ is finite-dimensional, the $s(k)$-norms on it must be equivalent. The following corollary of Theorem~\ref{thm:sk_vector_norm} quantifies the equivalence of these norms.
\begin{cor}\label{cor:vecEquiv}
    Let $\ket{v} \in \bb{C}^m \otimes \bb{C}^n$ and suppose $1 \leq h \leq k \leq \min\{m,n\}$. Then
    \begin{align*}
      \big\| \ket{v} \big\|_{s(h)} & \leq \big\| \ket{v} \big\|_{s(k)} \leq \sqrt{\frac{k}{h}} \big\| \ket{v} \big\|_{s(h)}.
    \end{align*}
    Furthermore, equality is attained on the left if and only if $\big\| \ket{v} \big\|_{s(h)} = 1$ if and only if $SR(\ket{v}) \leq h$. Equality is attained on the right if and only if the $k$ largest Schmidt coefficients of $\ket{v}$ are equal.
\end{cor}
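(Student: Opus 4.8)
The plan is to reduce everything to Theorem~\ref{thm:sk_vector_norm}, which tells us that $\big\|\ket{v}\big\|_{s(k)} = \sqrt{\sum_{i=1}^k \alpha_i^2}$ where $\alpha_1 \geq \alpha_2 \geq \cdots \geq 0$ are the Schmidt coefficients of $\ket{v}$. Once we have this closed form, the corollary becomes a purely elementary statement about partial sums of a non-increasing sequence of non-negative reals, so the entire proof is a sequence of short inequalities rather than anything conceptual.

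First I would establish the left inequality $\big\|\ket{v}\big\|_{s(h)} \leq \big\|\ket{v}\big\|_{s(k)}$, which is immediate from the theorem since $h \leq k$ means we are summing fewer non-negative terms: $\sum_{i=1}^h \alpha_i^2 \leq \sum_{i=1}^k \alpha_i^2$. For the equality condition on the left, note that we may assume $\big\|\ket{v}\big\| = 1$ (so $\sum_i \alpha_i^2 = 1$), and then equality $\sum_{i=1}^h \alpha_i^2 = \sum_{i=1}^k \alpha_i^2$ holds if and only if $\alpha_{h+1} = \cdots = \alpha_k = 0$, which (since the $\alpha_i$ are non-increasing) is exactly the statement that $SR(\ket{v}) \leq h$, which in turn is equivalent to $\big\|\ket{v}\big\|_{s(h)} = 1 = \big\|\ket{v}\big\|$. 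I would phrase the three-way equivalence carefully, tracking the normalization assumption.

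Next I would handle the right inequality $\big\|\ket{v}\big\|_{s(k)} \leq \sqrt{k/h}\,\big\|\ket{v}\big\|_{s(h)}$, i.e. $\sum_{i=1}^k \alpha_i^2 \leq \tfrac{k}{h}\sum_{i=1}^h \alpha_i^2$. The key observation is that since the $\alpha_i^2$ are non-increasing, the average of the first $k$ terms is at most the average of the first $h$ terms: $\tfrac{1}{k}\sum_{i=1}^k \alpha_i^2 \leq \tfrac{1}{h}\sum_{i=1}^h \alpha_i^2$. Multiplying through by $k$ gives the claim. I would prove this averaging fact directly by writing $\sum_{i=1}^k \alpha_i^2 = \sum_{i=1}^h \alpha_i^2 + \sum_{i=h+1}^k \alpha_i^2$ and bounding each term $\alpha_i^2$ (for $h < i \leq k$) above by the average $\tfrac{1}{h}\sum_{j=1}^h \alpha_j^2$ of the first $h$ (larger) terms.

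For the equality condition on the right, I expect this to be the one step requiring genuine care, since it is the only place where the inequality is not totally transparent. Equality in the averaging bound forces $\alpha_i^2 = \tfrac{1}{h}\sum_{j=1}^h \alpha_j^2$ for every $h < i \leq k$, and combined with $\alpha_{h+1} \leq \alpha_h \leq \cdots \leq \alpha_1$ and the fact that $\alpha_{h+1}^2$ equals the average of the first $h$ squared coefficients (which are all $\geq \alpha_{h+1}^2$), this squeezes $\alpha_1 = \cdots = \alpha_h = \alpha_{h+1} = \cdots = \alpha_k$; conversely if the first $k$ Schmidt coefficients are all equal the bound is clearly tight. I would spell out the squeezing argument: the average of $h$ numbers each at least $\alpha_{h+1}^2$ can equal $\alpha_{h+1}^2$ only if all of them equal $\alpha_{h+1}^2$, and then extend the equality up to $\alpha_k$ using the forced-equality condition for each index in the range. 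This is the main obstacle only in the sense of bookkeeping; the mathematics is elementary.
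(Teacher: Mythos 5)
Your proposal is correct and takes essentially the same route as the paper: both reduce the statement to the closed form of Theorem~\ref{thm:sk_vector_norm} and then argue about partial sums of the non-increasing sequence $\alpha_1^2 \geq \alpha_2^2 \geq \cdots$, your averaging bound for the right inequality being a trivial reformulation of the paper's chain through the pivot $\alpha_h^2$. The only difference is that you spell out the equality conditions (including the squeezing argument), which the paper dismisses as following easily from $\alpha_i \geq 0$ and $\alpha_i \geq \alpha_{i+1}$.
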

\begin{proof}
	The left inequality follows trivially from the definition of the $s(k)$-vector norm. To see the right inequality, use Theorem~\ref{thm:sk_vector_norm} to write
	\begin{align*}
		k\big\| \ket{v} \big\|_{s(h)}^2 & = k\sum_{i=1}^{h}\alpha_i^2 \\
		& = h\sum_{i=1}^{h}\alpha_i^2 + (k-h)\sum_{i=1}^h\alpha_i^2 \\
		& \geq h\sum_{i=1}^{h}\alpha_i^2 + (k-h)h\alpha_h^2 \\
		& \geq h\sum_{i=1}^{h}\alpha_i^2 + h\sum_{i=h+1}^k\alpha_i^2 \\
		& = h\big\| \ket{v} \big\|_{s(k)}^2,
	\end{align*}
	where we used the fact that $\alpha_i \geq \alpha_{i+1}$ for all $i$ twice. Dividing through by $h$ and taking the square root of both sides gives $\big\| \ket{v} \big\|_{s(k)} \leq \sqrt{\frac{k}{h}} \big\| \ket{v} \big\|_{s(h)}$. The remaining claims follow easily from the facts that $\alpha_i \geq 0$ and $\alpha_i \geq \alpha_{i+1}$ for all $i$.
\end{proof}

Corollary~\ref{cor:vecEquiv} supports the interpretation of the $s(k)$-vector norms as a measure of how close a pure state is to having Schmidt Rank $k$, as it shows explicitly that $\big\| \ket{v} \big\|_{s(k)} = 1$ (the largest possible value that norm can take on pure states) if and only if $SR(\ket{v}) \leq k$. On the other hand, consider the maximally-entangled state $\ket{\psi_+} := \frac{1}{\sqrt{\min\{m,n\}}}\sum_{i=1}^{\min\{m,n\}}\ket{i}\otimes\ket{i} \in \bb{C}^m \otimes \bb{C}^n$ -- Corollary~\ref{cor:vecEquiv} also implies that $\big\| \ket{\psi_+} \big\|_{s(k)} = \sqrt{\frac{k}{n}}$, which is the smallest the norm can ever be on pure states. We can make this interpretation of the vector norms more precise by using the fidelity. It is not difficult to show via Equation~\eqref{eq:fidelity_pure} that
\begin{align}\label{eq:sk_vec_fid}
  \big\| \ket{v} \big\|_{s(k)}^2 & = \sup_{\rho} \Big\{ F(\rho,\ketbra{v}{v}) : SN(\rho) \leq k \Big\}.
\end{align}

The final result of this section shows that the $s(k)$-vector norm of a pure state is equal to the Ky Fan $k$-norm of its reduced density matrix. In particular, this implies that the $s(k)$-vector norms are closely related to the \emph{relative entropy of entanglement} \cite{BBPS96,PV07}, which is defined in terms of the eigenvalues of a state's reduced density matrix.
\begin{cor}\label{cor:sk_norm_ky_fan}
  Let $\ket{v} \in \bb{C}^m \otimes \bb{C}^n$ and suppose $1 \leq k \leq \min\{m,n\}$. Then
  \begin{align*}
    \big\| \ket{v} \big\|_{s(k)}^2 = \big\| \Tr_1(\ketbra{v}{v}) \big\|_{(k)} = \big\| \Tr_2(\ketbra{v}{v}) \big\|_{(k)}.
  \end{align*}
\end{cor}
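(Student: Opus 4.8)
The plan is to reduce the entire statement to Theorem~\ref{thm:sk_vector_norm}, which already establishes that $\big\|\ket{v}\big\|_{s(k)}^2 = \sum_{i=1}^k \alpha_i^2$, where $\alpha_1 \geq \alpha_2 \geq \cdots \geq 0$ are the Schmidt coefficients of $\ket{v}$. Thus it suffices to show that the Ky Fan $k$-norm of each reduced density matrix is also equal to $\sum_{i=1}^k \alpha_i^2$.

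First I would write $\ket{v}$ in its Schmidt decomposition (Theorem~\ref{thm:schmidt}) as $\ket{v} = \sum_i \alpha_i \ket{a_i} \otimes \ket{b_i}$, where $\{\ket{a_i}\} \subset \bb{C}^m$ and $\{\ket{b_i}\} \subset \bb{C}^n$ are orthonormal sets. Then $\ketbra{v}{v} = \sum_{i,j} \alpha_i \alpha_j \ketbra{a_i}{a_j} \otimes \ketbra{b_i}{b_j}$, and I would compute the partial traces directly, using orthonormality in the form $\braket{b_j}{b_i} = \delta_{ij}$ (respectively $\braket{a_j}{a_i} = \delta_{ij}$) to collapse the double sum:
\[
	\Tr_2(\ketbra{v}{v}) = \sum_i \alpha_i^2 \ketbra{a_i}{a_i}, \qquad \Tr_1(\ketbra{v}{v}) = \sum_i \alpha_i^2 \ketbra{b_i}{b_i}.
\]

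Next I would observe that each of these operators is presented in spectral form: it is positive semidefinite with eigenvalues exactly $\{\alpha_i^2\}$ against an orthonormal eigenbasis. Because both reduced density matrices are positive semidefinite, their singular values coincide with their eigenvalues, and since $\alpha_1 \geq \alpha_2 \geq \cdots \geq 0$ we have $\alpha_1^2 \geq \alpha_2^2 \geq \cdots \geq 0$, so these are precisely the singular values listed in decreasing order. Recalling that the Ky Fan $k$-norm is the sum of the $k$ largest singular values, I would conclude
\[
	\big\|\Tr_1(\ketbra{v}{v})\big\|_{(k)} = \big\|\Tr_2(\ketbra{v}{v})\big\|_{(k)} = \sum_{i=1}^k \alpha_i^2 = \big\|\ket{v}\big\|_{s(k)}^2,
\]
the last equality being Theorem~\ref{thm:sk_vector_norm}.

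The main obstacle here is bookkeeping rather than anything conceptual: one must keep the ordering of the $\alpha_i$ straight so that the $k$ largest \emph{squared} Schmidt coefficients are indeed the first $k$, and one must note that replacing eigenvalues by singular values is harmless because the reduced density matrices are positive semidefinite. Everything else is the standard fact that the nonzero eigenvalues of the two reduced density matrices of a bipartite pure state both coincide with the squared Schmidt coefficients, which falls out immediately from the orthonormality built into the Schmidt decomposition.
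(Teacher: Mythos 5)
Your proof is correct and follows essentially the same route as the paper's: write $\ket{v}$ in its Schmidt decomposition, compute the partial traces to exhibit the reduced density matrices in spectral form with eigenvalues $\alpha_i^2$, and read off the Ky Fan $k$-norm as $\sum_{i=1}^k \alpha_i^2$, which equals $\big\|\ket{v}\big\|_{s(k)}^2$ by Theorem~\ref{thm:sk_vector_norm}. Your extra remarks about singular values versus eigenvalues and the ordering of the $\alpha_i$ are just careful bookkeeping that the paper leaves implicit.
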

\begin{proof}
	To see the first equality, write $\ket{v}$ in its Schmidt decomposition:
	\begin{align*}
		\ket{v} = \sum_{i=1}^{\min\{m,n\}} \alpha_i \ket{a_i} \otimes \ket{b_i}.
	\end{align*}
	Then
	\begin{align*}
		\Tr_1(\ketbra{v}{v}) = \sum_{i=1}^{\min\{m,n\}} \alpha_i^2 \ketbra{b_i}{b_i},
	\end{align*}
	from which it follows that $\big\| \Tr_1(\ketbra{v}{v}) \big\|_{(k)} = \sum_{i=1}^k \alpha_i^2 = \big\| \ket{v} \big\|_{s(k)}^2$. The second equality is proved analogously.
\end{proof}

\subsection{Dual Norms}\label{sec:sk_vector_dual_norm}

We now investigate the duals of the $s(k)$-vector norms, which we write as $\|\cdot\|_{s(k)}^\circ$. That is, we investigate the norm on $\bb{C}^m \otimes \bb{C}^n$ defined by
\begin{align*}
	\big\|\ket{v}\big\|_{s(k)}^\circ := \sup_{c,\ket{w}} \Big\{ c\big| \braket{w}{v} \big| : c\big\|\ket{w}\big\|_{s(k)} \leq 1 \Big\} = \sup_{\ket{w}} \left\{ \frac{\big| \braket{w}{v} \big|}{\big\|\ket{w}\big\|_{s(k)}} \right\}.
\end{align*}
Recall that the Euclidean norm $\|\cdot\|$ is self-dual; $\|\cdot\| = \|\cdot\|^{\circ}$. It follows that, much like the $s(k)$-norms form an increasing family of norms leading up to the Euclidean norm, the duals of the $s(k)$-norms form a decreasing family of norms leading down to the Euclidean norm. That is, we have the following chain of inequalities:
\begin{align*}
  \big\| \ket{v} \big\|_{s(1)}^\circ \geq \big\| \ket{v} \big\|_{s(2)}^\circ \geq \cdots \geq \big\| \ket{v} \big\|_{s(\min\{m,n\}-1)}^\circ \geq \big\| \ket{v} \big\|.
\end{align*}

The dual of the $s(k)$-vector norm has a similar interpretation to that of the $s(k)$-norm itself -- it measures how close a pure state is to having Schmidt rank $k$, with the primary difference being that while a larger value of $\big\|\ket{v}\big\|_{s(k)}$ corresponds to $\ket{v}$ being closer to the set of states with Schmidt rank $k$, a \emph{smaller} value of $\big\|\ket{v}\big\|_{s(k)}^\circ$ has the same interpretation. Similarly, the extreme cases of the dual norms behave very similarly to those of the $s(k)$-norms themselves. Notice that $\big\|\ket{v}\big\|_{s(k)}^\circ = 1$ if and only if $\big| \braket{w}{v} \big| \leq \big\|\ket{w}\big\|_{s(k)}$ for all $\ket{w} \in \bb{C}^m \otimes \bb{C}^n$. It follows that $\big\|\ket{v}\big\|_{s(k)}^\circ = 1$ if and only if $SR(\ket{v}) \leq k$; if $SR(\ket{v}) \leq k$ then $\big| \braket{w}{v} \big| \leq \big\|\ket{w}\big\|_{s(k)}$ by definition of $\big\|\ket{w}\big\|_{s(k)}$, and if $SR(\ket{v}) > k$ then choosing $\ket{w} = \ket{v}$ violates that inequality.

With the above properties in mind, we now able to prove our first characterization of $\|\cdot\|_{s(k)}^\circ$.
\begin{thm}\label{thm:sk_dual_groth}
	Let $\ket{v} \in \bb{C}^m \otimes \bb{C}^n$. Then
	\begin{align}
		\big\|\ket{v}\big\|_{s(k)}^\circ = \inf\Big\{ \sum_i |c_i| : \ket{v} = \sum_i c_i\ket{v_i} \text{ with } SR(\ket{v_i}) \leq k \ \forall \, i \Big\},
	\end{align}
	where the infimum is taken over all decompositions of $\ket{v}$ of the given form.
\end{thm}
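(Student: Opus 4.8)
The plan is to recognize the right-hand side as the gauge (atomic) norm associated with the set of Schmidt-rank-$\le k$ vectors and to identify the $s(k)$-vector norm as its support function, so that the theorem becomes an instance of norm/dual-norm duality. Write $C := \{\ket{w} \in \bb{C}^m \otimes \bb{C}^n : SR(\ket{w}) \le k, \ \big\|\ket{w}\big\| \le 1\}$ and let $N(\ket{v})$ denote the infimum appearing on the right-hand side. First I would record the basic structural facts: $C$ is compact (the rank-$\le k$ condition cuts out a closed set, being the common zero set of the $(k+1)\times(k+1)$ minors of ${\rm mat}(\ket{w})$, and intersecting with the closed unit ball gives compactness) and balanced (since scaling does not increase Schmidt rank). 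Consequently ${\rm conv}(C)$ is compact by Carath\'eodory's theorem, and $N$ is exactly the gauge of ${\rm conv}(C)$: after absorbing the scalars $c_i$ into unit vectors, every admissible decomposition exhibits $\ket{v}$ as $\big(\sum_i|c_i|\big)$ times a convex combination of points of $C$. In particular $N$ is a genuine norm whose closed unit ball is ${\rm conv}(C)$, with the infimum attained.

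For the easy inequality $\big\|\ket{v}\big\|_{s(k)}^\circ \le N(\ket{v})$, given any decomposition $\ket{v} = \sum_i c_i \ket{v_i}$ with each $\ket{v_i}$ a unit vector of Schmidt rank $\le k$, Definition~\ref{defn:sk_norm_vec} gives $|\braket{w}{v_i}| \le \big\|\ket{w}\big\|_{s(k)}$ for every $\ket{w}$, so
\[
	\big|\braket{w}{v}\big| \le \sum_i |c_i|\,\big|\braket{w}{v_i}\big| \le \Big(\sum_i |c_i|\Big)\big\|\ket{w}\big\|_{s(k)}.
\]
Dividing by $\big\|\ket{w}\big\|_{s(k)}$ and taking the supremum over $\ket{w}$ yields $\big\|\ket{v}\big\|_{s(k)}^\circ \le \sum_i|c_i|$; the infimum over decompositions then gives the bound.

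The reverse inequality is the main obstacle, and here I would use the separating hyperplane theorem, exactly as in the block-positivity arguments of Section~\ref{sec:blockpos}. The key observation is that, because $C$ is balanced, the support function of ${\rm conv}(C)$ coincides with the $s(k)$-norm:
\[
	\sup_{\ket{y} \in {\rm conv}(C)} {\rm Re}\braket{w}{y} = \sup_{\ket{y} \in C}\big|\braket{w}{y}\big| = \big\|\ket{w}\big\|_{s(k)}.
\]
Set $\ket{x} := \ket{v}/\big\|\ket{v}\big\|_{s(k)}^\circ$ (the case $\ket{v}=0$ being trivial). If $\ket{x} \notin {\rm conv}(C)$, then since ${\rm conv}(C)$ is compact and convex there is a separating $\ket{w}$ and a real $\gamma$ with ${\rm Re}\braket{w}{x} > \gamma \ge \sup_{\ket{y}\in{\rm conv}(C)}{\rm Re}\braket{w}{y} = \big\|\ket{w}\big\|_{s(k)}$. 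Then $\big|\braket{w}{x}\big| \ge {\rm Re}\braket{w}{x} > \big\|\ket{w}\big\|_{s(k)}$ forces $\big\|\ket{x}\big\|_{s(k)}^\circ > 1$, contradicting $\big\|\ket{x}\big\|_{s(k)}^\circ = 1$. Hence $\ket{x} \in {\rm conv}(C)$, so $N(\ket{x}) \le 1$, and homogeneity of $N$ gives $N(\ket{v}) \le \big\|\ket{v}\big\|_{s(k)}^\circ$; combining the two inequalities proves the theorem. I expect the delicate points to be verifying that the rank-$\le k$ set is closed (so that $C$ is compact, guaranteeing attainment and letting ${\rm conv}(C)$ serve as a genuine closed unit ball rather than its closure) and justifying the passage from the real separating functional to the modulus $\big|\braket{w}{x}\big|$, which rests on the balancedness of $C$ together with an appropriate choice of global phase.
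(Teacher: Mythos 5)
Your proof is correct, but it takes a genuinely different route from the one in the paper. You identify the right-hand side as the gauge of the convex hull of the compact balanced set $C$ of Schmidt-rank-$\leq k$ vectors in the unit ball, observe that the support function of ${\rm conv}(C)$ is exactly $\|\cdot\|_{s(k)}$, and then run a separating-hyperplane (bipolar) argument to show that the normalized vector $\ket{v}/\big\|\ket{v}\big\|_{s(k)}^\circ$ must lie in ${\rm conv}(C)$. The paper instead introduces the notion of a \emph{$k$-crossnorm} (a norm $\mnorm{\cdot}$ with $\bmnorm{\ket{v}} = \bmnorm{\ket{v}}^\circ = 1$ whenever $SR(\ket{v}) \leq k$), proves that $\|\cdot\|_{s(k)}$ is the smallest such norm so that its dual is the largest, and then verifies that the infimum expression is itself a $k$-crossnorm dominating every other one; the two extremal characterizations force equality. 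Both arguments ultimately rest on biduality of norms in finite dimensions, but yours is more self-contained and geometric --- it avoids the crossnorm machinery entirely, and as a bonus it shows that the infimum is attained and that the unit ball of $\|\cdot\|_{s(k)}^\circ$ is precisely ${\rm conv}(C)$. What the paper's route buys is the extremality statement itself (smallest/largest $k$-crossnorm), which mirrors the classical injective/projective crossnorm duality it is generalizing and is reused conceptually elsewhere in the thesis. The delicate points you flag are handled correctly: closedness of the rank-$\leq k$ set via vanishing $(k+1)\times(k+1)$ minors of ${\rm mat}(\ket{w})$ is right, and the passage from ${\rm Re}\braket{w}{x}$ to $\big|\braket{w}{x}\big|$ needs only the trivial inequality on one side and balancedness of $C$ on the other, exactly as you state.
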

\begin{proof}
	The $k = 1$ version of this result is well-known: $\|\cdot\|_{s(1)}$ is the ``injective crossnorm'' and $\|\cdot\|_{s(1)}^\circ$ is the ``projective crossnorm'', which are well-known to be duals of each other. We now prove that the result for arbitrary $k$, using similar ideas to those in the proof of the $k = 1$ case in \cite[Chapter~1]{DGFS08}.
	
	We call a norm $\mnorm{\cdot}$ on $\bb{C}^m \otimes \bb{C}^n$ with the property that $\bmnorm{\ket{v}} = \bmnorm{\ket{v}}^\circ = 1$ whenever $SR(\ket{v}) \leq k$ a \emph{$k$-crossnorm}. We showed earlier in this section that $\|\cdot\|_{s(k)}$ is a $k$-crossnorm, and we note that $\mnorm{\cdot}^\circ$ is a $k$-crossnorm whenever $\mnorm{\cdot}$ is a $k$-crossnorm.
	
	Our first step is to prove that $\|\cdot\|_{s(k)}$ is the smallest $k$-crossnorm. To this end, let $\mnorm{\cdot}$ be any $k$-crossnorm. Then
	\begin{align*}
		\big\| \ket{v} \big\|_{s(k)} & = \sup_{\ket{w}} \Big\{ \big| \braket{w}{v} \big| : SR(\ket{w}) \leq k \Big\} \leq \sup_{\ket{w}} \Big\{ \big| \braket{w}{v} \big| : \bmnorm{\ket{w}}^\circ \leq 1 \Big\} = \mnorm{\cdot}.
	\end{align*}
	It follows that $\| \cdot \|_{s(k)}$ is the smallest $k$-crossnorm, so $\| \cdot \|_{s(k)}^\circ$ is the largest $k$-crossnorm. The remainder of the proof is devoted to showing that the infimum given in the statement of the theorem is also the largest $k$-crossnorm. From now on we denote this infimum by $\|\cdot\|_{k,\textup{inf}}$ for convenience.
	
	To see that $\|\cdot\|_{k,\textup{inf}}$ is a norm, we only show the triangle inequality, as the remaining properties are trivial to check. We first fix $\varepsilon > 0$. If $\ket{v} = \sum_i c_i \ket{v_i}$ and $\ket{w} = \sum_i d_i \ket{w_i}$ are decompositions of $\ket{v}$ and $\ket{w}$ so that $SR(\ket{v_i}),SR(\ket{w_i}) \leq k$ for all $i$, $\sum_i |c_i| \leq \big\|\ket{v}\big\|_{k,\textup{inf}} + \varepsilon$, and $\sum_i |d_i| \leq \big\|\ket{w}\big\|_{k,\textup{inf}} + \varepsilon$, then
	\begin{align*}
		\big\|\ket{v} + \ket{w}\big\|_{k,\textup{inf}} \leq \sum_i |c_i| + \sum_i |d_i| \leq \big\|\ket{v}\big\|_{k,\textup{inf}} + \big\|\ket{w}\big\|_{k,\textup{inf}} + 2\varepsilon.
	\end{align*}
	Since $\varepsilon > 0$ is arbitrary, the triangle inequality follows, so $\|\cdot\|_{k,\textup{inf}}$ is a norm.
	
	We now show that $\|\cdot\|_{k,\textup{inf}}$ is a $k$-crossnorm. If $SR(\ket{v}) \leq k$ then $\big\|\ket{v}\big\|_{k,\textup{inf}} \leq 1$ by definition. The fact that $\big\|\ket{v}\big\|_{k,\textup{inf}} \geq 1$ follows from $\big\|\ket{v}\big\| \leq \big\|\ket{v}\big\|_{k,\textup{inf}}$ (which in turn follows from the triangle inequality for the Euclidean norm). Similarly, $\big\|\ket{v}\big\|_{k,\textup{inf}}^\circ \leq \big\|\ket{v}\big\| \leq 1$. Finally, since $\big\|\ket{v}\big\|_{k,\textup{inf}} = 1$ we have $\big\|\ket{v}\big\|_{k,\textup{inf}}^\circ \geq \big|\braket{v}{v}\big| = 1$, which shows that $\|\cdot\|_{k,\textup{inf}}$ is a $k$-crossnorm.
	
	To complete the proof, we show that if $\mnorm{\cdot}$ is any $k$-crossnorm, then $\mnorm{\cdot} \leq \|\cdot\|_{k,\textup{inf}}$. Indeed, this fact relies simply on the triangle inequality for $\mnorm{\cdot}$. If we write $\ket{v} = \sum_i c_i \ket{v_i}$ with $SR(\ket{v_i}) \leq k$ for all $i$ then
	\begin{align*}
		\bmnorm{\ket{v}} = \mnorm{\sum_i c_i \ket{v_i}} \leq \sum_i |c_i| \bmnorm{\ket{v_i}} = \sum_i |c_i|.
	\end{align*}
	Taking the infimum over all such decompositions of $\ket{v}$ gives $\bmnorm{\ket{v}} \leq \big\|\ket{v}\big\|_{k,\textup{inf}}$, as desired.
\end{proof}

While Theorem~\ref{thm:sk_dual_groth} is of theoretical interest, it says nothing about how to compute $\|\cdot\|_{s(k)}^\circ$. A basic property of dual norms implies that $\big\|\ket{v}\big\|_{s(k)}^\circ \geq \big\|\ket{v}\big\|_{s(k)}^{-1}$ for all $\ket{v} \in \bb{C}^m \otimes \bb{C}^n$, but that inequality is not always attained. In order to compute $\|\cdot\|_{s(k)}^\circ$, we will use our earlier observation that $\big\| \ket{v} \big\|_{s(k)} = \big\| {\rm mat}(\ket{v}) \big\|_{(k,2)}$. Then by applying Theorem~\ref{thm:pk_norm_dual}, which characterizes the duals of the $(k,p)$-operator norms, we immediately have a characterization of $\big\| \ket{v} \big\|_{s(k)}^\circ$ in terms of the Schmidt coefficients of $\ket{v}$.
\begin{thm}\label{thm:sk_vector_norm_dual}
	Let $\ket{v} \in \bb{C}^m \otimes \bb{C}^n$ have Schmidt coefficients $\alpha_1 \geq \alpha_2 \geq \cdots \geq 0$. Let $r$ be the largest index $1 \leq r < k$ such that $\alpha_r > \sum_{i=r+1}^{\min\{m,n\}}\alpha_{i}/(k-r)$ (or take $r = 0$ if no such index exists). Also define $\tilde{\alpha} := \sum_{i=r+1}^{\min\{m,n\}}\alpha_i/(k-r)$. Then	
	\begin{align*}
		\big\|\ket{v}\big\|_{s(k)}^\circ = \sqrt{\sum_{i=1}^r \alpha_i^2 + (k - r)\tilde{\alpha}^2}.
	\end{align*}
\end{thm}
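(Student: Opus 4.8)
The plan is to reduce the entire computation to the matrix side via the vector--operator isomorphism and then invoke Theorem~\ref{thm:pk_norm_dual}. Recall from the discussion following Theorem~\ref{thm:sk_vector_norm} that $\big\|\ket{v}\big\|_{s(k)} = \big\|{\rm mat}(\ket{v})\big\|_{(k,2)}$, since the Schmidt coefficients of $\ket{v}$ are exactly the singular values of ${\rm mat}(\ket{v})$. Thus the $s(k)$-vector norm is nothing more than the $(k,2)$-operator norm read through the isomorphism ${\rm mat}$, and I expect the same to hold for the dual norms. Granting this, Theorem~\ref{thm:pk_norm_dual} applied with $p = 2$ (so that the conjugate exponent is $q = 2$) yields immediately
\begin{align*}
	\big\|{\rm mat}(\ket{v})\big\|_{(k,2)}^\circ = \sqrt{\sum_{i=1}^r \sigma_i^2 + (k-r)\tilde{\sigma}^2},
\end{align*}
and since $\sigma_i = \alpha_i$ and $\tilde{\sigma} = \tilde{\alpha}$ this is precisely the claimed formula.

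The one step requiring justification is that the dual norm transfers through ${\rm mat}$, i.e.\ that $\big\|\ket{v}\big\|_{s(k)}^\circ = \big\|{\rm mat}(\ket{v})\big\|_{(k,2)}^\circ$. This is where the inner-product-preserving nature of the vector--operator isomorphism is essential. Recall that $\braket{w}{v} = \Tr\big({\rm mat}(\ket{w})^\dagger {\rm mat}(\ket{v})\big)$, so the Euclidean inner product on $\bb{C}^m \otimes \bb{C}^n$ agrees with the Hilbert--Schmidt inner product on $M_{n,m}$ under ${\rm mat}$. Writing $X = {\rm mat}(\ket{v})$ and $Y = {\rm mat}(\ket{w})$, I would compute directly from the definition of the dual norm:
\begin{align*}
	\big\|\ket{v}\big\|_{s(k)}^\circ = \sup_{\ket{w}} \left\{ \frac{\big|\braket{w}{v}\big|}{\big\|\ket{w}\big\|_{s(k)}} \right\} = \sup_{Y} \left\{ \frac{\big|\Tr(Y^\dagger X)\big|}{\big\|Y\big\|_{(k,2)}} \right\} = \big\|X\big\|_{(k,2)}^\circ,
\end{align*}
where the middle equality uses both the inner-product identity and the norm identity $\big\|\ket{w}\big\|_{s(k)} = \big\|Y\big\|_{(k,2)}$, together with the fact that ${\rm mat}$ is a bijection, so the two suprema range over the same set.

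With this identification in hand, the result is immediate from Theorem~\ref{thm:pk_norm_dual}. The only bookkeeping left is to confirm that the index $r$ and the averaged value $\tilde{\alpha}$ defined in the statement coincide with the quantities $r$ and $\tilde{\sigma}$ appearing in Theorem~\ref{thm:pk_norm_dual} when specialized to the singular values $\sigma_i = \alpha_i$ of ${\rm mat}(\ket{v})$, which is clear upon comparing the two definitions. I do not anticipate any genuine obstacle here: the substance of the theorem is already contained in Theorem~\ref{thm:pk_norm_dual}, and the real content of the proof is the routine but necessary verification that duality commutes with the vector--operator isomorphism, which follows cleanly from the fact that this isomorphism preserves inner products.
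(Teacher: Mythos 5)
Your proposal is correct and is essentially identical to the paper's own proof: both identify $\big\|\ket{v}\big\|_{s(k)}$ with $\big\|{\rm mat}(\ket{v})\big\|_{(k,2)}$, use the fact that the vector--operator isomorphism preserves the inner product to conclude $\big\|\ket{v}\big\|_{s(k)}^\circ = \big\|{\rm mat}(\ket{v})\big\|_{(k,2)}^\circ$, and then apply Theorem~\ref{thm:pk_norm_dual} with $p = q = 2$. Your spelled-out verification that the two suprema coincide under the bijection ${\rm mat}$ is exactly the justification the paper leaves implicit.
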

\begin{proof}
	We simply associate $\ket{v}$ with the operator ${\rm mat}(\ket{v})$ via the vector-operator isomorphism and note that $\big\| \ket{v} \big\|_{s(k)} = \big\| {\rm mat}(\ket{v}) \big\|_{(k,2)}$ (and similarly, because the vector-operator isomorphism preserves the inner product, $\big\| \ket{v} \big\|_{s(k)}^\circ = \big\| {\rm mat}(\ket{v}) \big\|_{(k,2)}^\circ$). Applying Theorem~\ref{thm:pk_norm_dual} with $p = 2$ then gives
	\begin{align*}
		\big\|{\rm mat}(\ket{v})\big\|_{(k,2)}^\circ = \sqrt{\sum_{i=1}^r \alpha_i^2 + (k - r)\tilde{\alpha}^2}.
	\end{align*}
	The result follows.
\end{proof}

Recall from the proof of Theorem~\ref{thm:sk_vector_norm} that if $\ket{v} = \sum_{i=1}^{\min\{m,n\}}\alpha_i\ket{u_i}\otimes\ket{v_i}$ then a vector $\ket{w}$ with $SR(\ket{w}) \leq k$ such that $\big|\braket{w}{v}\big| = \big\|\ket{v}\big\|_{s(k)}$ is the normalization of $\sum_{i=1}^{k}\alpha_i\ket{u_i}\otimes\ket{v_i}$. For the dual norm, a similar role is played by the normalization $\ket{w}$ of $c\ket{w} := \sum_{i=1}^r\alpha_i\ket{u_i}\otimes\ket{v_i} + \sum_{i=r+1}^{\min\{m,n\}}\tilde{\alpha}\ket{u_i}\otimes\ket{v_i}$, where $r$ and $\tilde{\alpha}$ are as defined in Theorem~\ref{thm:sk_vector_norm_dual}. Then
\begin{align*}
	c\big|\braket{w}{v}\big| & = \Big( \sum_{i=1}^r\alpha_i\bra{u_i}\otimes\bra{v_i} + \sum_{i=r+1}^{\min\{m,n\}}\tilde{\alpha}\bra{u_i}\otimes\bra{v_i} \Big)\Big( \sum_{i=1}^{\min\{m,n\}}\alpha_i\ket{u_i}\otimes\ket{v_i} \Big) \\
	& = \sum_{i=1}^r\alpha_i^2 + \sum_{i=r+1}^{\min\{m,n\}}\tilde{\alpha}\alpha_i \\
	& = \sum_{i=1}^r\alpha_i^2 + (k-r)\tilde{\alpha}^2.
\end{align*}
Similarly,
\begin{align*}
	c\big\|\ket{w}\big\|_{s(k)} = \sqrt{\sum_{i=1}^r\alpha_i^2 + \sum_{i=r+1}^k\tilde{\alpha}^2} = \sqrt{\sum_{i=1}^r\alpha_i^2 + (k-r)\tilde{\alpha}^2}.
\end{align*}
It follows that $\big|\braket{w}{v}\big|/\big\|\ket{w}\big\|_{s(k)} = \sqrt{\sum_{i=1}^r\alpha_i^2 + (k-r)\tilde{\alpha}^2} = \big\|\ket{v}\big\|_{s(k)}^\circ$, so $\ket{w}$ attains the supremum that defines the norm $\big\|\ket{v}\big\|_{s(k)}^\circ$.

\subsection{Isometries}\label{sec:sk_vector_isometries}

We now consider the problem of characterizing the isometries of the $s(k)$-vector norms -- that is, the operators $U \in M_m \otimes M_n$ such that $\big\|U\ket{v}\big\|_{s(k)} = \big\|\ket{v}\big\|_{s(k)}$ for all $\ket{v} \in \bb{C}^m \otimes \bb{C}^n$. In the $k = \min\{m,n\}$ case, the $s(k)$-norm is simply the Euclidean norm, so the isometries are exactly the unitary operators. However, when $k < \min\{m,n\}$ it is clear that the isometry group does not contain all unitary operators -- for example, any unitary that sends a separable state $\ket{v}$ to the maximally-entangled state $\ket{\psi_+} = \frac{1}{\sqrt{\min\{m,n\}}}\sum_{i=1}^{\min\{m,n\}}\ket{ii}$ has $\big\|\ket{v}\big\|_{s(k)} = 1$ but $\big\|U\ket{v}\big\|_{s(k)} = \sqrt{\frac{k}{\min\{m,n\}}} < 1$.

It is also clear that the isometry group contains all operators $U$ of the form
\begin{align}\label{eq:localU}
	U = U_1 \otimes U_2	\quad \text{ or } \quad n = m \text{ and } U = S(U_1 \otimes U_2),
\end{align}
where $U_1 \in M_m$ and $U_2 \in M_n$ are unitary, and $S \in M_n \otimes M_n$ is the swap operator introduced in Section~\ref{sec:symmetric_sub}. In fact, it follows from Theorem~\ref{thm:geometric_measure_entanglement} and the fact that $\big\|\ket{v}\big\|_{s(1)} = \sqrt{1 - E(\ket{v})}$ in the bipartite case, that the isometries of the $s(1)$-norm are exactly the unitaries of the form~\ref{eq:localU}. We now prove that these operators actually form the isometry group of the $s(k)$-norm for all $k < \min\{m,n\}$.
\begin{thm}\label{thm:sk_vector_norm_isometries}
	Let $1 \leq k < \min\{m,n\}$ and $U \in M_m \otimes M_n$. Then $\big\|U\ket{v}\big\|_{s(k)} = \big\|\ket{v}\big\|_{s(k)}$ for all $\ket{v} \in \bb{C}^m \otimes \bb{C}^n$ if and only if $U$ is a unitary of the form~\eqref{eq:localU}.
\end{thm}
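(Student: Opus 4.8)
The plan is to show that if $U$ preserves the $s(k)$-norm, then it must preserve the set of states with Schmidt rank $\leq k$, after which the multipartite (here, bipartite) separability-preserver machinery of Section~\ref{sec:mul_sep_preservers} forces the desired form. The ``if'' direction is immediate: any $U$ of the form~\eqref{eq:localU} acts as local unitaries (possibly after a swap), and since the Schmidt coefficients of $\ket{v}$ are invariant under local unitaries and under swapping the two subsystems, Theorem~\ref{thm:sk_vector_norm} gives $\big\|U\ket{v}\big\|_{s(k)} = \big\|\ket{v}\big\|_{s(k)}$.

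For the ``only if'' direction, first note that $U$ must be unitary: taking $k' = \min\{m,n\}$ is not available, but we can recover unitarity from the fact that an $s(k)$-isometry preserves the Euclidean norm on the dense set of separable pure states. More carefully, by Corollary~\ref{cor:vecEquiv}, $\big\|\ket{v}\big\|_{s(k)} = \big\|\ket{v}\big\|$ precisely when $SR(\ket{v}) \leq k$, and the left-equality case identifies exactly the Schmidt-rank-$\leq k$ vectors among those of a fixed $s(k)$-norm. The key observation I would isolate is: the pure states $\ket{v}$ with $SR(\ket{v}) \leq k$ are characterized among unit vectors by the condition $\big\|\ket{v}\big\|_{s(k)} = 1$ (this is the equality case on the left of Corollary~\ref{cor:vecEquiv}, together with the fact that $\big\|\ket{v}\big\|_{s(k)} \leq \big\|\ket{v}\big\| = 1$). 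Hence an $s(k)$-isometry maps the set $\cl{V}_k$ of scalar multiples of Schmidt-rank-$\leq k$ states into itself. Before invoking the preserver theorem I must confirm invertibility of $U$; this follows because an isometry of a norm on a finite-dimensional space is automatically injective, hence invertible.

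With $U$ an invertible operator satisfying $U\cl{V}_k \subseteq \cl{V}_k$ for some $1 \leq k < \min\{m,n\}$, Theorem~\ref{thm:bipartite_schmidt_preserver} applies directly and tells us that $U$ has the form~\eqref{eq:localL}, i.e.\ $U = P \otimes Q$ or ($n=m$ and) $U = S(P \otimes Q)$ for nonsingular $P,Q$. The final step is to upgrade ``nonsingular'' to ``unitary''. Since $U$ preserves the Euclidean norm of every Schmidt-rank-$\leq k$ state (these have $s(k)$-norm equal to their Euclidean norm and $U$ is an $s(k)$-isometry on them), in particular $U$ preserves the norm of every separable pure state; Proposition~\ref{prop:sep_unitary} then forces $U$ to be unitary, and the last sentence of Theorem~\ref{thm:bipartite_schmidt_preserver} gives that $P,Q$ are each unitary, yielding exactly the form~\eqref{eq:localU}.

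I expect the main obstacle to be the clean identification of $\cl{V}_k$ purely from the norm condition $\big\|U\ket{v}\big\|_{s(k)} = \big\|\ket{v}\big\|_{s(k)}$, and in particular justifying that an $s(k)$-isometry preserves the \emph{Euclidean} norm on Schmidt-rank-$\leq k$ states rather than merely permuting level sets of the $s(k)$-norm. The subtle point is that a priori $U$ could send a Schmidt-rank-$\leq k$ state $\ket{v}$ (where $\big\|\ket{v}\big\|_{s(k)} = \big\|\ket{v}\big\|$) to a higher-Schmidt-rank state $\ket{w}$ of the same $s(k)$-norm but strictly smaller Euclidean norm. This cannot happen: the equality case of Corollary~\ref{cor:vecEquiv} shows that $\big\|\ket{w}\big\|_{s(k)} = \big\|\ket{w}\big\|$ forces $SR(\ket{w}) \leq k$, but the reverse inequality $\big\|\ket{w}\big\|_{s(k)} \leq \big\|\ket{w}\big\|$ is generally strict for higher-rank $\ket{w}$, so matching $s(k)$-norms while preserving membership in the ``$s(k)=$ Euclidean'' locus pins down both the Schmidt rank and the Euclidean norm simultaneously. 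Making this argument airtight---so that the hypotheses of Theorem~\ref{thm:bipartite_schmidt_preserver} and Proposition~\ref{prop:sep_unitary} are genuinely met---is where the real care is needed; everything downstream is then a direct citation.
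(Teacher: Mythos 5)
Your overall strategy---reduce to the bipartite Schmidt-rank-preserver theorem---is viable in principle; the paper even mentions it as an alternative route and uses it later for the $S(k)$-operator norm isometries. But note that the paper's actual proof is entirely different and much shorter: it passes through the vector-operator isomorphism, uses $\big\|\ket{v}\big\|_{s(k)} = \big\|{\rm mat}(\ket{v})\big\|_{(k,2)}$, and invokes Theorem~\ref{thm:unitary_norm_isometries} on isometries of unitarily-invariant norms, which delivers the local form in one step.

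Your execution has a genuine gap at the pivotal step ``hence an $s(k)$-isometry maps $\cl{V}_k$ into itself.'' The characterization ``$\big\|\ket{w}\big\|_{s(k)} = 1$ iff $SR(\ket{w}) \leq k$'' from Corollary~\ref{cor:vecEquiv} holds only among \emph{unit} vectors. From the hypothesis you learn that $\big\|U\ket{v}\big\|_{s(k)} = \big\|\ket{v}\big\|_{s(k)} = 1$ whenever $SR(\ket{v}) \leq k$, but you do not yet know that $U\ket{v}$ is a unit vector, and a non-unit vector can have $s(k)$-norm equal to $1$ without having Schmidt rank at most $k$: for instance $\sqrt{\min\{m,n\}/k}\,\ket{\psi_+}$ has $s(k)$-norm $1$ and full Schmidt rank. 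So membership in $\cl{V}_k$ is not pinned down by the norm condition alone, and your attempted patch is circular---it presumes $U$ preserves the locus where $\|\cdot\|_{s(k)} = \|\cdot\|$, which is precisely what is in question. (You also have the inequality backwards: such an image would have strictly \emph{larger} Euclidean norm, since $\|w\| \geq \|w\|_{s(k)}$.) The missing ingredient is a prior proof that $U$ is unitary; the clean way, which the paper uses for Theorem~\ref{thm:geometric_measure_entanglement} and again for Theorem~\ref{thm:opNormIso}, is Proposition~\ref{prop:subgroup_lift}: the isometry group of $\|\cdot\|_{s(k)}$ is a bounded group containing the irreducible unitary subgroup $\{U_1 \otimes U_2\}$, hence it lies inside the unitary group. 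Once $U$ is known to be unitary, $U\ket{v}$ is a unit vector, Corollary~\ref{cor:vecEquiv} does give $SR(U\ket{v}) \leq k$, and the remainder of your argument (Theorem~\ref{thm:bipartite_schmidt_preserver} together with its final claim) goes through as you describe.
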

\begin{proof}
	The ``if'' implication is trivial. To see the ``only if'' implication, use the vector-operator isomorphism and recall that $\big\|\ket{v}\big\|_{s(k)} = \big\|{\rm mat}(\ket{v})\big\|_{(k,2)}$. Since the operator $(k,2)$-norm is unitarily-invariant and not a multiple of the Frobenius norm when $k < \min\{m,n\}$, it follows from Theorem~\ref{thm:unitary_norm_isometries} that the map $\Phi_U$ associated to $U$ through the vector-operator isomorphism is of the form $\Phi_U(X) = U_1 XU_2$, or $n = m$ and $\Phi_U(X) = U_1 X^T U_2$, for some unitaries $U_1 \in M_m$ and $U_2 \in M_n$. Thus we can write either $U = U_1 \otimes U_2^T$ or $U = S(U_2^T \otimes U_1)$ (if $n = m$).
\end{proof}

Another method of proving Theorem~\ref{thm:sk_vector_norm_isometries} would be to mimic the proof of Theorem~\ref{thm:geometric_measure_entanglement} and first argue that $U$ must be unitary. Next, one could then show that it must map the set of states with Schmidt rank at most $k$ back into itself, and finally invoke Theorem~\ref{thm:bipartite_schmidt_preserver}. We use this approach to investigate the maps that preserve the $S(k)$-norm, which is introduced in the next section.

\section{The S(k)-Operator Norm}\label{sec:sk_operator_norm}

In this section we define and investigate a family of operator norms that arise from the Schmidt rank of pure states in a manner similar to the $s(k)$-vector norms of the previous section. The vector norms are recovered in the special case of rank-one operators, and will be used to derive an upper bound for the operator norms.
\begin{defn}\label{defn:operator_sk_norm}
  Let $X \in M_m \otimes M_n$ and let $1 \leq k \leq \min\{m,n\}$. Then we define the {\em $S(k)$-operator norm} of $X$, denoted $\big\| X \big\|_{S(k)}$, by
  \begin{align*}
    \big\|X\big\|_{S(k)} & := \sup_{\ket{v},\ket{w}} \Big\{ \big|\bra{w}X\ket{v}\big| : SR(\ket{v}),SR(\ket{w}) \leq k \Big\}.
  \end{align*}
\end{defn}

To see that these quantities are indeed norms, notice that (as with the vector norms) positive homogeneity and the triangle inequality follow from the corresponding properties of the complex modulus and supremum. The fact that $\big\|X\big\|_{S(k)} = 0$ if and only if $X = 0$ follows from Lemma~\ref{lem:sep_prod_zero}.

Before continuing, let us comment briefly on the definition of the $S(k)$-operator norms. We could just as well have defined another generalization of the $s(k)$-vector norms to the case of operators by using the bipartite version of the vector-operator isomorphism (i.e., the isomorphism of Section~\ref{sec:operator_schmidt}). Then the $s(k)$-vector norm on $(\bb{C}^{m} \otimes \bb{C}^{m}) \otimes (\bb{C}^{n} \otimes \bb{C}^{n})$ (with $1 \leq k \leq \min\{m^2,n^2\}$) corresponds to a norm on $M_m \otimes M_n$. However, one motivation for investigating the norm given by Definition~\ref{defn:operator_sk_norm} instead is that the $s(k)$-vector norms are in a sense trivial since they can be computed efficiently, as shown in Theorem~\ref{thm:sk_vector_norm}. Because the quantum separability problem is known to be NP-hard, as is the problem of determining block positivity of an operator \cite{G03}, it seems unlikely that an easily-computable operator norm could tell us much about block positivity or Schmidt number.

We will see in Section~\ref{sec:op_sk_norm_block_pos} that the $S(k)$-operator norm is a very powerful tool for detecting $k$-block positivity. Additionally, these norms build on the general principle that properties of pure states are easier to determine than properties of mixed states. We will see in Proposition~\ref{prop:rankOneNorm} that the $S(k)$-operator norm of a pure state reduces simply to the square of the $s(k)$-vector norm of the corresponding pure vector state. Thus, the operator norms can efficiently be computed for pure states, but we will see that computing them for general mixed states is quite difficult.

We will now investigate various aspects of these norms, much as was done for the $s(k)$-vector norms in the previous section. One key difference with our presentation of these norms is that we will not consider the dual of the $S(k)$-operator norm until Section~\ref{sec:dual_op_norm}, as its characterization relies on techniques from the theory of operator spaces that have not yet been presented.

\subsection{Basic Properties}\label{sec:sk_operator_norm_prop}

In analogy with the $s(k)$-vector norms, notice that $\big\| X \big\|_{S(\min\{m,n\})} = \big\| X \big\|$ and $\big\| X \big\|_{S(k)} \leq \big\| X \big\|$ for all $k$. Furthermore, the $S(k)$-operator norms form an increasing family of norms that lead up to the standard operator norm:
\begin{align*}
  \big\| X \big\|_{S(1)} \leq \big\| X \big\|_{S(2)} \leq \cdots \leq \big\| X \big\|_{S(\min\{m,n\}-1)} \leq \big\| X \big\|.
\end{align*}
Moreover, although $\big\| X^\dagger \big\|_{S(k)} = \big\| X \big\|_{S(k)}$, it is {\em not} the case in general that $\big\| X^{\dagger}X \big\|_{S(k)} = \big\| X \big\|_{S(k)}^2$. They also do not satisfy any natural submultiplicativity relationships.
\begin{prop}\label{prop:rankOneNorm}
  Let $\ketbra{x}{y} \in M_m \otimes M_n$ be a rank-$1$ operator. Then
  \begin{align*}
    \big\| \ketbra{x}{y} \big\|_{S(k)} & = \big\| \ket{x} \big\|_{s(k)}\big\| \ket{y} \big\|_{s(k)}.
  \end{align*}
\end{prop}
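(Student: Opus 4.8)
The plan is to compute $\big\| \ketbra{x}{y} \big\|_{S(k)}$ directly from its definition and to observe that the supremum defining the $S(k)$-operator norm factors into a product of two independent suprema, each of which is exactly an $s(k)$-vector norm. Specifically, for a rank-one operator $\ketbra{x}{y}$ we have $\bra{w}\big(\ketbra{x}{y}\big)\ket{v} = \braket{w}{x}\braket{y}{v}$, so that
\begin{align*}
	\big\| \ketbra{x}{y} \big\|_{S(k)} = \sup_{\ket{v},\ket{w}} \Big\{ \big|\braket{w}{x}\big|\,\big|\braket{y}{v}\big| : SR(\ket{v}), SR(\ket{w}) \leq k \Big\}.
\end{align*}

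The key observation is that the constraints on $\ket{v}$ and on $\ket{w}$ are independent of one another: $\ket{w}$ appears only in the factor $\big|\braket{w}{x}\big|$ and $\ket{v}$ appears only in the factor $\big|\braket{y}{v}\big|$. Since both factors are nonnegative, the supremum of the product over the two independent variables is the product of the suprema. First I would therefore split the optimization as
\begin{align*}
	\big\| \ketbra{x}{y} \big\|_{S(k)} = \Big( \sup_{\ket{w} : SR(\ket{w}) \leq k} \big|\braket{w}{x}\big| \Big)\Big( \sup_{\ket{v} : SR(\ket{v}) \leq k} \big|\braket{y}{v}\big| \Big).
\end{align*}
Each of the two resulting suprema is precisely the quantity appearing in Definition~\ref{defn:sk_norm_vec}: the first equals $\big\| \ket{x} \big\|_{s(k)}$ (using $\big|\braket{w}{x}\big| = \big|\braket{x}{w}\big|$ and renaming the optimization variable) and the second equals $\big\| \ket{y} \big\|_{s(k)}$. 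Combining these identifications gives $\big\| \ketbra{x}{y} \big\|_{S(k)} = \big\| \ket{x} \big\|_{s(k)}\big\| \ket{y} \big\|_{s(k)}$, as desired.

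This argument is essentially just bookkeeping, so there is no serious obstacle; the only point that warrants care is the justification that the supremum of the product factors as the product of the suprema. Because each factor is a nonnegative quantity depending on a disjoint set of variables, one can verify this by bounding the product above by the product of the suprema (taking each factor to its own supremum), and then achieving equality by choosing optimizers $\ket{w}$ and $\ket{v}$ for each factor separately — these choices are compatible precisely because the two constraint sets are decoupled. If desired, one could make the factorization fully rigorous by appealing to the fact that $\sup_{a,b} f(a)g(b) = (\sup_a f(a))(\sup_b g(b))$ for nonnegative $f, g$, which is elementary.
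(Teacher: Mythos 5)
Your proof is correct and is essentially identical to the paper's: both compute $\bra{w}\ketbra{x}{y}\ket{v} = \braket{w}{x}\braket{y}{v}$ and factor the supremum over the decoupled variables $\ket{v},\ket{w}$ into the product of the two $s(k)$-vector norms. The extra care you take in justifying that the supremum of a product of nonnegative, independent factors equals the product of the suprema is fine but the paper treats it as immediate.
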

\begin{proof}
	The proof follows easily from the relevant definitions:
	\begin{align*}
    \big\| \ketbra{x}{y} \big\|_{S(k)} & = \sup_{\ket{v},\ket{w}}\Big\{ \big| \braket{w}{x}\braket{y}{v} \big| : SR(\ket{v}),SR(\ket{w}) \leq k \Big\} \\
     & = \sup_{\ket{w}}\Big\{ \big| \braket{w}{x} \big| : SR(\ket{w}) \leq k \Big\}\sup_{\ket{v}}\Big\{ \big| \braket{y}{v} \big| : SR(\ket{v}) \leq k \Big\} \\
     & = \big\| \ket{x} \big\|_{s(k)} \big\| \ket{y} \big\|_{s(k)}.
  \end{align*}
\end{proof}
We now present an important example to make use of Proposition~\ref{prop:rankOneNorm}.
\begin{exam}\label{exam:MatrixNormChoi}{\rm
  Recall the rank-$1$ projection operator $\ketbra{\psi_+}{\psi_+} := \frac{1}{n}\sum_{i,j=1}^n \ketbra{i}{j} \otimes \ketbra{i}{j} \in M_n \otimes M_n$. By Proposition~\ref{prop:rankOneNorm} we have that
  \begin{align*}
    \big\| \ketbra{\psi_+}{\psi_+} \big\|_{S(k)} & = \big\| \sum_{i=1}^n \frac{1}{\sqrt{n}}\ket{i} \otimes \ket{i} \big\|_{s(k)}^2 = \sum_{i=1}^{k}\left(\frac{1}{\sqrt{n}}\right)^2 = \frac{k}{n}.
\end{align*}

  We will see that this simple example can be used to show that some inequalities that we derive in the next section are tight. It will also have applications to bound entanglement in Section~\ref{sec:bound_entanglement}.}
\end{exam}

The following proposition shows if $X$ is positive semidefinite then it is enough to take the supremum only over $\ket{v}$ in the definition of the $S(k)$-operator norms.
\begin{prop}\label{prop:MatMultDiffVectors}
  Let $X \in M_m \otimes M_n$ be positive semidefinite. Then
  \begin{align}\label{eq:MatMultDiffVectors}
    \big\| X \big\|_{S(k)} & = \sup_{\ket{v}} \big\{ \bra{v}X\ket{v} : SR(\ket{v}) \leq k \big\} \\ \label{eq:MatMultSN}
    & = \sup_{\rho} \big\{ \Tr(X \rho) : SN(\rho) \leq k \big\}.
  \end{align}
\end{prop}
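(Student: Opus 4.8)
The plan is to prove the two claimed equalities in~\eqref{eq:MatMultDiffVectors} and~\eqref{eq:MatMultSN} separately, reducing the two-vector supremum in Definition~\ref{defn:operator_sk_norm} to a single-vector (and then single-density-matrix) supremum using positive semidefiniteness of $X$. First I would establish~\eqref{eq:MatMultDiffVectors}. The inequality $\sup_{\ket{v}}\{\bra{v}X\ket{v} : SR(\ket{v}) \leq k\} \leq \big\|X\big\|_{S(k)}$ is immediate from the definition by choosing $\ket{w} = \ket{v}$ (note that $\bra{v}X\ket{v} \geq 0$ since $X \geq 0$, so the modulus is harmless). For the reverse inequality, I would use the fact that $X \geq 0$ admits a square root $\sqrt{X}$, so that for any $\ket{v},\ket{w}$ we have $\bra{w}X\ket{v} = \bra{w}\sqrt{X}\sqrt{X}\ket{v}$. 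Applying the Cauchy--Schwarz inequality gives
\begin{align*}
	\big|\bra{w}X\ket{v}\big| = \big|\braket{\sqrt{X}w}{\sqrt{X}v}\big| \leq \sqrt{\bra{w}X\ket{w}}\sqrt{\bra{v}X\ket{v}}.
\end{align*}
Since both $\ket{v}$ and $\ket{w}$ range over states with Schmidt rank at most $k$, each factor on the right is bounded above by $\sqrt{\sup_{\ket{u}}\{\bra{u}X\ket{u} : SR(\ket{u}) \leq k\}}$, and the product is therefore bounded by that same supremum. Taking the supremum over $\ket{v},\ket{w}$ on the left establishes $\big\|X\big\|_{S(k)} \leq \sup_{\ket{u}}\{\bra{u}X\ket{u} : SR(\ket{u}) \leq k\}$, completing the first equality.

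Next I would prove~\eqref{eq:MatMultSN}, namely that the single-vector supremum equals $\sup_{\rho}\{\Tr(X\rho) : SN(\rho) \leq k\}$. The ``$\leq$'' direction follows because any pure state $\ket{v}$ with $SR(\ket{v}) \leq k$ gives a density matrix $\rho = \ketbra{v}{v}$ with $SN(\rho) \leq k$ (recall from Section~\ref{sec:schmidt_number} that $SN(\ketbra{v}{v}) = SR(\ket{v})$), and $\bra{v}X\ket{v} = \Tr(X\ketbra{v}{v})$. For the ``$\geq$'' direction I would use convexity: any $\rho$ with $SN(\rho) \leq k$ can be written as $\rho = \sum_i p_i \ketbra{v_i}{v_i}$ with $SR(\ket{v_i}) \leq k$ for all $i$ and $\{p_i\}$ a probability distribution, so that
\begin{align*}
	\Tr(X\rho) = \sum_i p_i \bra{v_i}X\ket{v_i} \leq \sup_{\ket{v}}\big\{\bra{v}X\ket{v} : SR(\ket{v}) \leq k\big\},
\end{align*}
since a convex combination cannot exceed the supremum of its constituents. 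Taking the supremum over all such $\rho$ gives the desired inequality, and combining the two directions yields~\eqref{eq:MatMultSN}.

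I do not anticipate a serious obstacle here; the result is essentially a standard consequence of positive semidefiniteness plus convexity. The only point requiring a little care is the Cauchy--Schwarz step for~\eqref{eq:MatMultDiffVectors}: one must verify that $\sqrt{X}$ exists and that $\bra{v}X\ket{v} = \|\sqrt{X}\ket{v}\|^2 \geq 0$ so that the square roots are well-defined and the factoring is legitimate, but this is guaranteed by $X \geq 0$. For~\eqref{eq:MatMultSN} the main thing to invoke is simply the defining property of Schmidt number (that $SN(\rho) \leq k$ means $\rho$ decomposes as a mixture of Schmidt-rank-$\leq k$ pure states), which was recalled in Section~\ref{sec:schmidt_number}. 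The whole argument is short and computational rather than conceptual.
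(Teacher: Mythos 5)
Your proof is correct and follows essentially the same route as the paper: both arguments reduce to the generalized Cauchy--Schwarz inequality $|\bra{w}X\ket{v}| \leq \sqrt{\bra{v}X\ket{v}}\sqrt{\bra{w}X\ket{w}}$ (you derive it via $\sqrt{X}$, the paper via the spectral decomposition, which is the same one-line computation in different clothing) and then finish the second equality by convexity over the defining decomposition of a Schmidt-number-$k$ state. No gaps.
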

\begin{proof}
  To show the first equality, write $X$ in its Spectral Decomposition as $X = \sum_i \lambda_i \ketbra{v_i}{v_i}$. For any $\ket{v}$ and $\ket{w}$ with $SR(\ket{v}),SR(\ket{w}) \leq k$, we have $\bra{v}X\ket{v} = \sum_i \lambda_i \big|\braket{v_i}{v}\big|^2$ and $\bra{w}X\ket{w} = \sum_i \lambda_i \big|\braket{v_i}{w}\big|^2$. Now define the $i^{th}$ component of two vectors ${\bf v}^\prime$ and ${\bf w}^\prime$ by $v_i^\prime := \sqrt{\lambda_i}\big|\braket{v_i}{v}\big|$ and $w_i^\prime := \sqrt{\lambda_i}\big|\braket{w}{v_i}\big|$. Applying the Cauchy--Schwarz inequality to ${\bf v}^\prime$ and ${\bf w}^\prime$ gives $\big|\bra{w}X\ket{v}\big| \leq \sqrt{\bra{v}X\ket{v}}\sqrt{\bra{w}X\ket{w}} \leq \max\left\{\bra{v}X\ket{v},\bra{w}X\ket{w}\right\}$. It follows that $\big\| X \big\|_{S(k)} \leq \sup_{\ket{v}} \big\{ \bra{v}X\ket{v} : SR(\ket{v}) \leq k \big\}$, and the other inequality is trivial.\\
  To see the second equality, simply write
  \[
    \sup_{\ket{v}} \big\{ \bra{v}X\ket{v} : SR(\ket{v}) \leq k \big\} = \sup_{\ket{v}} \big\{ \Tr(X\ketbra{v}{v}) : SR(\ket{v}) \leq k \big\},
  \]
	and note that the supremum on the right cannot become larger when taking the supremum over mixed states since mixed states can be written as convex combinations of pure states.
\end{proof}

Equation~\eqref{eq:MatMultDiffVectors} captures a well-known property of the operator norm of positive operators in the $k = \min\{m,n\}$ case. We also note that Proposition~\ref{prop:MatMultDiffVectors} says that the $S(1)$-norm, $\big\|\cdot\big\|_{S(1)}$, when acting on positive operators, coincides with the \emph{product numerical radius} $r^{\otimes}$ \cite{GPMSZ10,PGMSCZ11} (see also \cite{JRC10}). That is, if $X$ is positive then $\big\|X\big\|_{S(1)} = r^{\otimes}(X)$. In the more general case of arbitrary $k$, this quantity has been referred to as the \emph{maximal SN-$k$ expectation value} \cite{SV11}. Equation~\eqref{eq:MatMultSN} is perhaps a more natural way of looking at $\big\| X \big\|_{S(k)}$ from a quantum information perspective.

Because Equation~\eqref{eq:MatMultDiffVectors} holds (up to absolute value) for the operator norm not just for positive semidefinite operators, but more generally for normal operators, one might initially expect that Proposition~\ref{prop:MatMultDiffVectors} can be extended to the case of normal operators as well. We now present an example to show that this generalization is actually not true, even just for Hermitian operators.
\begin{exam}\label{ex:sk_norm_hermitian}
  {\rm Define $X := \ketbra{11}{22} + \ketbra{22}{11}$ and observe that $X$ is Hermitian with $\big\|X\big\| = 1$. Also, $\big\|X\big\|_{S(1)} = 1$ because $\bra{11}X\ket{22} = 1$. However, if we restrict the supremum that defines $\big\|X\big\|_{S(1)}$ to the $\ket{v} = \ket{w}$ case then we have
	  \begin{align*}\sspp
      \sup_{\ket{v}} \Big\{ \big|\bra{v}X\ket{v}\big| : SR(\ket{v}) = 1 \Big\} & = \sup_{\ket{a},\ket{b}} \Big\{ \big|\bra{ab}X\ket{ab}\big| \Big\} \\
      & \leq 2\sup_{\ket{a},\ket{b}} \Big\{ \big|\braket{a}{1}\braket{b}{1}\braket{2}{a}\braket{2}{b}\big| \Big\} \\
      & = 2\sup_{\ket{a}} \Big\{ \big|\braket{a}{1}\braket{2}{a}\big|^2 \Big\} \\
      & \leq \frac{1}{2}.\dsp
    \end{align*}
    The final inequality above can be seen by writing $\ket{a} = (a_1, a_2)^T$ and then observing that $\big|\braket{a}{0}\braket{1}{a}\big| = |\overline{a_1} a_2|$, which is bounded above by $1/2$ because $|a_1|^2 + |a_2|^2 = 1$. It is worth noting that the upper bound of $1/2$ is in fact attained by the vector $\ket{v} = \frac{1}{2}(\ket{1} + \ket{2}) \otimes (\ket{1} + \ket{2})$.}
\end{exam}

For a general mixed state $\rho$, one might want to think of $\big\| \rho \big\|_{S(k)}$ as measuring how close $\rho$ is to having Schmidt number of $k$ or less, but this interpretation is not quite right. Consider the following example, which shows that it is not true that $SN(\rho) \leq k$ implies $\|\rho\|_{S(k)} = \|\rho\|$ (contrast this with the corresponding true statement for the $s(k)$-vector norms that $SR(\ket{v}) \leq k$ implies $\big\|\ket{v}\big\|_{s(k)} = \big\|\ket{v}\big\|$).
\begin{exam}\label{ex:BadSNorm}
  {\rm Let $\rho \in M_2 \otimes M_2$ have the following matrix representation in the standard basis $\{\ket{11},\ket{12},\ket{21},\ket{22}\}$:
    \[\sspp
        \rho = \frac{1}{8}\begin{bmatrix}5 & 1 & 1 & 1 \\ 1 & 1 & 1 & 1 \\ 1 & 1 & 1 & 1 \\ 1 & 1 & 1 & 1\end{bmatrix} = \frac{1}{2}\begin{bmatrix}1 & 0 \\ 0 & 0\end{bmatrix} \otimes \begin{bmatrix}1 & 0 \\ 0 & 0\end{bmatrix} + \frac{1}{8}\begin{bmatrix}1 & 1 \\ 1 & 1\end{bmatrix} \otimes \begin{bmatrix}1 & 1 \\ 1 & 1\end{bmatrix}.\dsp
    \]
	It is clear that $SN(\rho) = 1$. However, the eigenvector corresponding to the (distinct) maximal eigenvalue $3/4$ is $\ket{v} := \frac{1}{2\sqrt{3}}(3, 1, 1, 1)^T$. It is easily verified that $SR(\ket{v}) = 2$, so $\|\rho\|_{S(1)} < \|\rho\|$ (in fact, we will see in Example~\ref{ex:BadSNorm_return} that $\|\rho\|_{S(1)} = \frac{1}{8}(3 + 2\sqrt{2}) \approx 0.7286$).}
\end{exam}

Nonetheless, if the eigenspace corresponding to the maximal eigenvalue of $\rho$ contains a state $\ket{v}$ with $SR(\ket{v}) \leq k$ then $\|\rho\|_{S(k)} = \|\rho\|$. More importantly though, we can see via fidelity that the correct interpretation of $\|\rho\|_{S(k)}$ is as a measure of how close $\rho$ is to a \emph{pure} state $\ket{v}$ with $SR(\ket{v}) \leq k$. More precisely, it is not difficult to show that
\begin{align}\label{eq:sk_op_fid}
  \| \rho \|_{S(k)} & = \sup_{\ket{v}} \Big\{ F(\rho,\ketbra{v}{v}) : SR(\ket{v}) \leq k \Big\}.
\end{align}
This shows that the $S(k)$-operator norms are, in a sense, dual to the $s(k)$-vector norms -- compare Equation~\eqref{eq:sk_op_fid} with Equation~\eqref{eq:sk_vec_fid}.

The following corollary of Proposition~\ref{prop:MatMultDiffVectors} shows that the $S(k)$-operator norms are non-increasing under local quantum operations.
\begin{cor}\label{cor:LocalOps}
  Let $X \in M_m \otimes M_n$ be positive and let $\Phi : M_n \rightarrow M_n$ be a quantum channel (i.e. completely positive and trace-preserving). Then
  \[
    \big\| (id_m \otimes \Phi^\dagger)(X) \big\|_{S(k)} \leq \big\| X \big\|_{S(k)}.
  \]
\end{cor}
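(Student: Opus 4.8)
The plan is to exploit the characterization of the $S(k)$-operator norm for positive operators given in Proposition~\ref{prop:MatMultDiffVectors}, specifically the formulation~\eqref{eq:MatMultSN} in terms of Schmidt number. Since $X$ is positive and $\Phi^\dagger$ is the dual of a quantum channel (hence completely positive), the operator $(id_m \otimes \Phi^\dagger)(X)$ is again positive semidefinite, so both sides of the desired inequality are instances of the supremum $\sup_\rho\{\Tr(Y\rho) : SN(\rho) \leq k\}$ over positive $Y$. The idea is to move the map $\Phi^\dagger$ off of $X$ and onto the test state $\rho$ via the trace-duality relation $\Tr\big((id_m \otimes \Phi^\dagger)(X)\rho\big) = \Tr\big(X (id_m \otimes \Phi)(\rho)\big)$, which holds because $\Phi$ and $\Phi^\dagger$ are dual with respect to the Hilbert--Schmidt inner product.

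First I would write, using Equation~\eqref{eq:MatMultSN},
\begin{align*}
	\big\| (id_m \otimes \Phi^\dagger)(X) \big\|_{S(k)} = \sup_{\rho}\Big\{ \Tr\big((id_m \otimes \Phi^\dagger)(X)\rho\big) : SN(\rho) \leq k \Big\}.
\end{align*}
Then I would apply the duality of $\Phi$ and $\Phi^\dagger$ to rewrite the trace as $\Tr\big(X (id_m \otimes \Phi)(\rho)\big)$. The crucial observation is that since $\Phi$ is a quantum channel, it is in particular completely positive, hence superpositive-dominated in the following sense: the map $id_m \otimes \Phi$ cannot increase Schmidt number. Concretely, if $SN(\rho) \leq k$, then $SN\big((id_m \otimes \Phi)(\rho)\big) \leq k$ as well. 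I would justify this by writing $\rho = \sum_i p_i \ketbra{v_i}{v_i}$ with $SR(\ket{v_i}) \leq k$, applying $\Phi$ via a Kraus representation $\Phi(Y) = \sum_\ell B_\ell Y B_\ell^\dagger$ to the second system, and noting that $(I_m \otimes B_\ell)\ket{v_i}$ has Schmidt rank at most $k$ since acting by a local operator on one tensor factor cannot raise the Schmidt rank (this is essentially the Schmidt-decomposition argument underlying the statement that separable/low-Schmidt-rank structure is preserved under local operations). Moreover, trace-preservation of $\Phi$ ensures $(id_m \otimes \Phi)(\rho)$ remains a valid (normalized) state, so it is a legitimate candidate in the supremum defining $\big\|X\big\|_{S(k)}$.

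Putting this together, each term $\Tr\big(X (id_m \otimes \Phi)(\rho)\big)$ in the supremum over $\rho$ with $SN(\rho) \leq k$ equals $\Tr(X\sigma)$ for some $\sigma := (id_m \otimes \Phi)(\rho)$ with $SN(\sigma) \leq k$, and hence is bounded above by $\sup_{\sigma}\{\Tr(X\sigma) : SN(\sigma) \leq k\} = \big\|X\big\|_{S(k)}$ by Equation~\eqref{eq:MatMultSN} again. Taking the supremum over all admissible $\rho$ on the left yields $\big\| (id_m \otimes \Phi^\dagger)(X) \big\|_{S(k)} \leq \big\|X\big\|_{S(k)}$, as desired.

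The main obstacle I anticipate is the lemma that $id_m \otimes \Phi$ does not increase Schmidt number when $\Phi$ is a quantum channel. While intuitively clear — this is precisely the statement that Schmidt number is an entanglement monotone under local operations — I would need to verify it cleanly. The cleanest route is to invoke the $k$-superpositive machinery: a quantum channel $\Phi$ is in particular completely positive, and the relevant fact is that applying any completely positive map locally sends operators of Schmidt number $\leq k$ to operators of Schmidt number $\leq k$, which follows from the Kraus-operator argument above (each local Kraus operator $I_m \otimes B_\ell$ cannot raise Schmidt rank of a pure state). I expect the subtlety to lie in handling the normalization correctly and in being careful that it is $\Phi$ (not $\Phi^\dagger$) that acts on the test state after the duality step, so that trace-preservation of $\Phi$ is what guarantees $\sigma$ is a state rather than merely a positive operator.
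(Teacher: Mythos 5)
Your proof is correct and takes essentially the same route as the paper's: rewrite both norms via Equation~\eqref{eq:MatMultSN}, move $\Phi^\dagger$ onto the test state by Hilbert--Schmidt duality, and use the fact that Schmidt number is non-increasing under local quantum channels (which the paper simply cites rather than re-deriving via the Kraus-operator argument you sketch). The normalization point you flag is handled the same way in both arguments, so there is no gap.
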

\begin{proof}
  By Proposition~\ref{prop:MatMultDiffVectors} we know that
  \begin{align*}
    \big\| (id_m \otimes \Phi^\dagger)(X) \big\|_{S(k)} & = \sup_{\rho} \Big\{ \Tr((id_m \otimes \Phi^\dagger)(X) \rho) : SN(\rho) \leq k \Big\} \\
    & = \sup_{\rho} \Big\{ \Tr(X (id_m \otimes \Phi)(\rho)) : SN(\rho) \leq k \Big\}.
  \end{align*}
	The result follows from the fact that Schmidt number is non-increasing under the action of local quantum channels \cite{TH00}, so $SN((id_m \otimes \Phi^\dagger)(\rho)) \leq k$.
\end{proof}
In fact, it follows from Proposition~\ref{prop:semigroup} that the map $\Phi$ of Corollary~\ref{cor:LocalOps} need not be a quantum channel, but can be chosen to be just $k$-positive and trace-preserving.

We will see in the upcoming sections that of particular importance is the problem of computing the $S(k)$-norm of orthogonal projections. The following proposition will thus be of much use, as it allows us to describe the $S(k)$-norm in terms of the $s(k)$-vector norm.

\begin{prop}\label{prop:orth_proj_norm}
	Let $P = P^\dagger = P^2 \in M_m \otimes M_n$ be an orthogonal projection. Then
	\begin{align*}
		\big\|P\big\|_{S(k)} = \sup_{\ket{v}}\Big\{ \big\|\ket{v}\big\|_{s(k)}^2 : \ket{v} \in {\rm Range}(P) \Big\}.
	\end{align*}
\end{prop}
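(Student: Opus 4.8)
The plan is to use Proposition~\ref{prop:MatMultDiffVectors}, which applies since an orthogonal projection $P$ is positive semidefinite. By that result we have
\begin{align*}
	\big\|P\big\|_{S(k)} = \sup_{\ket{v}}\Big\{ \bra{v}P\ket{v} : SR(\ket{v}) \leq k \Big\}.
\end{align*}
The key idea is then to decompose an arbitrary $\ket{v}$ with $SR(\ket{v}) \leq k$ relative to the range of $P$ and analyze how $\bra{v}P\ket{v}$ behaves. First I would write $P\ket{v} = \ket{w}$, the orthogonal projection of $\ket{v}$ onto ${\rm Range}(P)$, so that $\bra{v}P\ket{v} = \braket{v}{w} = \big|\braket{w}{v}\big|$ (using $P = P^2 = P^\dagger$, so $\bra{v}P\ket{v} = \bra{v}P^\dagger P\ket{v} = \braket{w}{w} \geq 0$, and also $\bra{v}P\ket{v} = \braket{w}{v}$).

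The two inequalities can then be established separately. For the upper bound $\big\|P\big\|_{S(k)} \leq \sup\{\|\ket{v}\|_{s(k)}^2 : \ket{v} \in {\rm Range}(P)\}$, I would fix a unit vector $\ket{v}$ with $SR(\ket{v}) \leq k$ and set $\ket{w} := P\ket{v} \in {\rm Range}(P)$. Then $\bra{v}P\ket{v} = \braket{w}{v} = \big|\braket{w}{v}\big| \leq \big\|\ket{w}\big\|_{s(k)}$ by the very definition of the $s(k)$-vector norm (Definition~\ref{defn:sk_norm_vec}), since $SR(\ket{v}) \leq k$. Because $\big\|\ket{w}\big\| \leq \big\|\ket{v}\big\| = 1$, we can normalize: writing $\ket{\hat w} := \ket{w}/\big\|\ket{w}\big\|$, we get $\big\|\ket{w}\big\|_{s(k)} = \big\|\ket{w}\big\| \cdot \big\|\ket{\hat w}\big\|_{s(k)} \leq \big\|\ket{\hat w}\big\|_{s(k)}$, and $\ket{\hat w} \in {\rm Range}(P)$, so $\bra{v}P\ket{v} \leq \big\|\ket{\hat w}\big\|_{s(k)} \leq \big\|\ket{\hat w}\big\|_{s(k)}^2 / \big\|\ket{\hat w}\big\|_{s(k)}$. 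I will need to be slightly careful here: since $\big\|\ket{\hat w}\big\|_{s(k)} \leq 1$, we have $\big\|\ket{\hat w}\big\|_{s(k)} \leq \big\|\ket{\hat w}\big\|_{s(k)}^2$ is \emph{false}; instead the clean route is to argue directly that $\bra{v}P\ket{v} = \big|\braket{\hat w}{v}\big| \cdot \big\|\ket{w}\big\|$ and bound this by $\big\|\ket{\hat w}\big\|_{s(k)}^2$ by a Cauchy--Schwarz-type argument, which is the delicate point.

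For the reverse inequality, given any unit $\ket{w} \in {\rm Range}(P)$, I would produce a test vector $\ket{v}$ with $SR(\ket{v}) \leq k$ achieving $\bra{v}P\ket{v} = \big\|\ket{w}\big\|_{s(k)}^2$. The natural candidate is the Schmidt-rank-$\leq k$ vector $\ket{v}$ that attains $\big\|\ket{w}\big\|_{s(k)} = \big|\braket{v}{w}\big|$ (normalized), exactly as constructed in the proof of Theorem~\ref{thm:sk_vector_norm}: if $\ket{w} = \sum_i \alpha_i \ket{a_i}\otimes\ket{b_i}$ then take $\ket{v}$ proportional to $\sum_{i=1}^k \alpha_i\ket{a_i}\otimes\ket{b_i}$. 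Since $\ket{w}\in{\rm Range}(P)$, we have $P\ket{w} = \ket{w}$, and one computes $\bra{v}P\ket{v} = \big|\braket{v}{w}\big|^2 = \big\|\ket{w}\big\|_{s(k)}^2$ using that $\braket{v}{w} = \big\|\ket{w}\big\|_{s(k)}$ for this optimal $\ket{v}$. Taking the supremum over $\ket{w}\in{\rm Range}(P)$ then gives $\big\|P\big\|_{S(k)} \geq \sup_{\ket{w}}\{\big\|\ket{w}\big\|_{s(k)}^2 : \ket{w}\in{\rm Range}(P)\}$. The main obstacle, and the step I would verify most carefully, is the Cauchy--Schwarz reduction in the upper bound showing $\bra{v}P\ket{v} \leq \big\|P\ket{v}/\|P\ket{v}\|\big\|_{s(k)}^2$; the honest argument is that $\bra{v}P\ket{v} = \big|\braket{w}{v}\big|$ where $\ket{w} = P\ket{v}$, and since $SR(\ket{v})\leq k$ the definition of $\|\cdot\|_{s(k)}$ as a supremum of overlaps against Schmidt-rank-$\leq k$ vectors lets us bound $\big|\braket{w}{v}\big| \leq \big\|\ket{w}\big\|_{s(k)}^\circ{}\cdot{}$... — so I expect the cleanest path is to pair the two directions through the single identity $\bra{v}P\ket{v} = \big|\braket{P v}{v}\big|$ and exploit that the optimal $\ket{v}$ for a given range vector lies in ${\rm Range}(P)$'s orbit, matching both bounds.
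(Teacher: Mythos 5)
Your overall strategy is the same as the paper's: reduce to $\sup_{\ket{v}}\{\bra{v}P\ket{v} : SR(\ket{v})\leq k\}$ via Proposition~\ref{prop:MatMultDiffVectors}, get the lower bound by testing against the optimal Schmidt-rank-$k$ truncation of a range vector, and get the upper bound via the normalized projection $P\ket{v}/\|P\ket{v}\|$. However, you explicitly leave the upper bound unfinished at the step you call ``delicate,'' and your closing remarks (invoking the dual norm $\|\cdot\|_{s(k)}^\circ$ and a ``Cauchy--Schwarz-type argument'') head in the wrong direction. The gap closes in one line from two facts you already wrote down. Set $\ket{w}:=P\ket{v}$ and $\ket{\hat w}:=\ket{w}/\|\ket{w}\|$ (the case $\ket{w}=0$ being trivial). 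On one hand $\bra{v}P\ket{v}=\bra{v}P^\dagger P\ket{v}=\big\|\ket{w}\big\|^2$; on the other hand $\bra{v}P\ket{v}=\braket{w}{v}\leq\big\|\ket{w}\big\|_{s(k)}=\big\|\ket{w}\big\|\cdot\big\|\ket{\hat w}\big\|_{s(k)}$, since $\ket{v}$ is a unit vector with $SR(\ket{v})\leq k$ and $\|\cdot\|_{s(k)}$ is by definition the supremum of overlaps against such vectors (no dual norm is needed). Comparing the two gives $\big\|\ket{w}\big\|\leq\big\|\ket{\hat w}\big\|_{s(k)}$, and squaring yields $\bra{v}P\ket{v}=\big\|\ket{w}\big\|^2\leq\big\|\ket{\hat w}\big\|_{s(k)}^2$ with $\ket{\hat w}\in{\rm Range}(P)$, which is exactly the upper bound. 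This is precisely the paper's computation $\bra{w}P\ket{w}=|\braket{v_1}{w}|^2\leq\big\|\ket{v_1}\big\|_{s(k)}^2$ written in different notation.

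There is also a small slip in your lower bound: for a general projection $P$ and unit $\ket{w}\in{\rm Range}(P)$, the claim $\bra{v}P\ket{v}=|\braket{v}{w}|^2$ is false unless $P$ has rank one. What is true, and all you need, is the inequality $\bra{v}P\ket{v}\geq|\braket{w}{v}|^2$, which follows because $P-\ketbra{w}{w}\geq 0$ (this is how the paper phrases it). With the optimal truncation $\ket{v}$ from Theorem~\ref{thm:sk_vector_norm} satisfying $|\braket{v}{w}|=\big\|\ket{w}\big\|_{s(k)}$, this gives $\big\|P\big\|_{S(k)}\geq\big\|\ket{w}\big\|_{s(k)}^2$ as desired. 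With these two repairs your argument is complete and matches the paper's proof; your version even avoids the compactness argument the paper uses to produce an optimizing $\ket{w}$, since you bound an arbitrary $\ket{v}$ and take the supremum at the end.
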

\begin{proof}
	To see the ``$\geq$'' inequality, choose an arbitrary $\ket{v} \in {\rm Range}(P)$ and use Proposition~\ref{prop:MatMultDiffVectors} to write
	\begin{align*}
		\big\|P\big\|_{S(k)} & = \sup_{\ket{w}}\big\{ \braket{w}{v} \braket{v}{w} + \bra{w}(P - \ketbra{v}{v})\ket{w} : SR(\ket{w}) \leq k \big\} \\
		& \geq \sup_{\ket{w}}\big\{ \braket{w}{v} \braket{v}{w} : SR(\ket{w}) \leq k \big\} \\
		& = \big\|\ket{v}\big\|_{s(k)}^2.
	\end{align*}
	To see the ``$\leq$'' inequality, observe that the set of states with Schmidt rank at most $k$ is compact, so there is a particular $\ket{w}$ with $SR(\ket{w}) \leq k$ such that $\bra{w}P\ket{w} = \big\|P\big\|_{S(k)}$. Define $\ket{v_1} := P\ket{w}/\big\|P\ket{w}\big\|$ and extend $\ket{v_1}$ to an orthonormal basis $\big\{\ket{v_i}\big\}_{i=1}^{{\rm rank}(P)}$ of the range of $P$. Then $\braket{v_i}{w} = 0$ for all $i \geq 2$, so if we write $P = \sum_{i=1}^{{\rm rank}(P)}\ketbra{v_i}{v_i}$ then we see that $\big\|P\big\|_{S(k)} = \bra{w}P\ket{w} = \big|\braket{v_1}{w}\big|^2 \leq \big\|\ket{v_1}\big\|_{s(k)}^2$. Because $\ket{v_1} \in {\rm Range}(P)$, the proof is complete.
\end{proof}

Finally, the last result of this section makes a crucial connection between the $S(k)$-norm and $k$-block positivity of an operator.
\begin{cor}\label{cor:kPosInf1}
  Let $0 \leq X \in M_m \otimes M_n$ be positive semidefinite and let $c \in \bb{R}$. Then $cI - X$ is $k$-block positive if and only if $c \geq \big\|X\big\|_{S(k)}$.
\end{cor}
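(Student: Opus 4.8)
Looking at Corollary~\ref{cor:kPosInf1}, I need to prove the equivalence: $cI - X$ is $k$-block positive iff $c \geq \|X\|_{S(k)}$.

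The plan is to unwind both sides in terms of the defining inequality for $k$-block positivity and relate it directly to the formula for $\|X\|_{S(k)}$ on positive operators given in Proposition~\ref{prop:MatMultDiffVectors}. Recall that by definition, a Hermitian operator $Y$ is $k$-block positive exactly when $\bra{v}Y\ket{v} \geq 0$ for all $\ket{v}$ with $SR(\ket{v}) \leq k$. Applying this with $Y = cI - X$ gives the condition $\bra{v}(cI - X)\ket{v} \geq 0$ for all such $\ket{v}$, which rearranges to $c\braket{v}{v} \geq \bra{v}X\ket{v}$.

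First I would restrict attention to unit vectors $\ket{v}$ (by positive homogeneity this loses nothing), so that $\braket{v}{v} = 1$ and the $k$-block positivity condition becomes simply $c \geq \bra{v}X\ket{v}$ for all unit vectors $\ket{v}$ with $SR(\ket{v}) \leq k$. Taking the supremum over all such $\ket{v}$, this holds for every admissible $\ket{v}$ if and only if $c$ is at least the supremum, i.e. $c \geq \sup_{\ket{v}}\{\bra{v}X\ket{v} : SR(\ket{v}) \leq k\}$. Since $X$ is positive semidefinite, Proposition~\ref{prop:MatMultDiffVectors} identifies this supremum precisely as $\|X\|_{S(k)}$ (via Equation~\eqref{eq:MatMultDiffVectors}), which completes the equivalence.

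The argument is essentially a one-line manipulation once Proposition~\ref{prop:MatMultDiffVectors} is in hand, so there is no substantial obstacle here; the only point requiring minor care is the reduction to unit vectors and the observation that the supremum is attained (so that the inequality "$c \geq \sup$" is genuinely equivalent to "$c \geq \bra{v}X\ket{v}$ for all admissible $\ket{v}$" rather than a strict-versus-nonstrict subtlety). Compactness of the set of Schmidt-rank-$\leq k$ states, already invoked in the proof of Proposition~\ref{prop:orth_proj_norm}, guarantees the supremum is a maximum, so both directions go through cleanly. I would write it as: assume $c \geq \|X\|_{S(k)}$ and verify $k$-block positivity directly, then conversely assume $k$-block positivity and take the supremum to recover $c \geq \|X\|_{S(k)}$.
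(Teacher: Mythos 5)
Your proof is correct and follows essentially the same route as the paper: both reduce $k$-block positivity of $cI - X$ to the inequality $c \geq \bra{v}X\ket{v}$ over Schmidt-rank-$\leq k$ unit vectors (the paper phrases this via Proposition~\ref{prop:kPos} and density matrices with $SN(\rho)\leq k$, which is a trivially equivalent reformulation) and then invoke Proposition~\ref{prop:MatMultDiffVectors} to identify the supremum with $\big\|X\big\|_{S(k)}$. One small remark: the compactness/attainment argument is not actually needed, since ``$c \geq \bra{v}X\ket{v}$ for all admissible $\ket{v}$'' is equivalent to ``$c \geq \sup_{\ket{v}}\bra{v}X\ket{v}$'' whether or not the supremum is attained.
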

\begin{proof}
  By Proposition~\ref{prop:kPos} we know that $cI - X$ is $k$-block positive if and only if
  \[
    \Tr((cI - X)\rho) = c - \Tr(X\rho) \geq 0 \quad \forall \, \rho \in (M_m \otimes M_n)^{+} \text{ with } SN(\rho) \leq k.
  \]
	Proposition~\ref{prop:MatMultDiffVectors} tells us that this is true precisely when $c \geq \big\|X\big\|_{S(k)}$.
\end{proof}

In particular, Corollary~\ref{cor:kPosInf1} shows that the problem of computing the operator norms is equivalent to the problem of determining $k$-block positivity of a Hermitian operator. Since the $k$-positivity problem seems to be very difficult in general, computing these norms even just for positive operators is likely a very difficult problem as well. Nevertheless, we shall see in the following sections that this connection leads to a new perspective for a number of different problems in quantum information.

\subsection{Inequalities}\label{sec:MatrixNormInequalities}

Since computing the $S(k)$-norms in general seems to be difficult, it will be useful to have explicitly calculable bounds for them. The following upper bound is thus of interest because it is easily computable in light of Theorem~\ref{thm:sk_vector_norm}.
  \begin{prop}\label{prop:matrixVectorNorms}
    Let $X \in M_m \otimes M_n$ be normal with eigenvalues $\{\lambda_i\}$ and corresponding eigenvectors $\{ \ket{v_i} \}$. Then
    \[
        \big\|X\big\|_{S(k)} \leq \sum_i |\lambda_i|\big\|\ket{v_i}\big\|^2_{s(k)}.
    \]
  \end{prop}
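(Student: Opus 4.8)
The plan is to exploit the normal spectral decomposition $X = \sum_i \lambda_i \ketbra{v_i}{v_i}$ together with the definition of the $S(k)$-operator norm as a supremum over low-Schmidt-rank vectors. First I would fix arbitrary $\ket{v}, \ket{w}$ with $SR(\ket{v}), SR(\ket{w}) \leq k$ and estimate $\big|\bra{w}X\ket{v}\big|$ directly by substituting the spectral decomposition:
\begin{align*}
	\big|\bra{w}X\ket{v}\big| = \Big| \sum_i \lambda_i \braket{w}{v_i}\braket{v_i}{v} \Big| \leq \sum_i |\lambda_i| \big|\braket{w}{v_i}\big| \big|\braket{v_i}{v}\big|.
\end{align*}

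The key observation is that each inner product factor is bounded by the $s(k)$-vector norm of the corresponding eigenvector. Indeed, since $SR(\ket{w}) \leq k$, the definition of $\big\|\cdot\big\|_{s(k)}$ (Definition~\ref{defn:sk_norm_vec}) gives $\big|\braket{w}{v_i}\big| \leq \big\|\ket{v_i}\big\|_{s(k)}$, and similarly $\big|\braket{v_i}{v}\big| \leq \big\|\ket{v_i}\big\|_{s(k)}$. Combining these yields $\big|\braket{w}{v_i}\big| \big|\braket{v_i}{v}\big| \leq \big\|\ket{v_i}\big\|_{s(k)}^2$, so that
\begin{align*}
	\big|\bra{w}X\ket{v}\big| \leq \sum_i |\lambda_i| \big\|\ket{v_i}\big\|_{s(k)}^2.
\end{align*}
Since the right-hand side is independent of the chosen $\ket{v}, \ket{w}$, I would then take the supremum over all such vectors to conclude $\big\|X\big\|_{S(k)} \leq \sum_i |\lambda_i|\big\|\ket{v_i}\big\|^2_{s(k)}$, which is precisely the claim.

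I do not anticipate a serious obstacle here, as the argument is a clean application of the definitions plus the triangle inequality for the complex modulus; the only point requiring mild care is ensuring the two separate inner-product bounds (one using $SR(\ket{w}) \leq k$, the other using $SR(\ket{v}) \leq k$) are applied correctly to the same eigenvector $\ket{v_i}$ so that they multiply to give the squared norm. The computability remark then follows immediately from Theorem~\ref{thm:sk_vector_norm}, which expresses each $\big\|\ket{v_i}\big\|_{s(k)}$ in terms of the Schmidt coefficients of $\ket{v_i}$.
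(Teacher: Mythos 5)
Your proposal is correct and follows essentially the same route as the paper: expand $\bra{w}X\ket{v}$ via the spectral decomposition, apply the triangle inequality, and bound each factor $|\braket{w}{v_i}|$ and $|\braket{v_i}{v}|$ by $\|\ket{v_i}\|_{s(k)}$ using the definition of the $s(k)$-vector norm. No gaps.
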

    \begin{proof}
        Let $\ket{v}, \ket{w} \in \bb{C}^m \otimes \bb{C}^n$ have $SR(\ket{v}),SR(\ket{w}) \leq k$. Then
        \[
            \big| \bra{w}X\ket{v} \big| = \Big| \sum_i \lambda_i \braket{w}{v_i}\braket{v_i}{v} \Big| \leq \sum_i |\lambda_i| |\braket{w}{v_i} ||\braket{v_i}{v} | \leq \sum_i |\lambda_i | \big\| \ket{v_i} \big\|^2_{s(k)}.
        \]
    \end{proof}
  The following upper bound makes use of the $(k,p)$-norm introduced in Section~\ref{sec:norms} and the realignment map $R$ of Section~\ref{sec:realign}.
  \begin{prop}\label{prop:sk_realign}
  	Let $0 \leq X \in M_m \otimes M_n$. Then $\big\|X\big\|_{S(k)} \leq \big\| R(X) \big\|_{(k^2,2)}$.
  \end{prop}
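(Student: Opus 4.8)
The plan is to dualize the $S(k)$-norm and transport the estimate through the realignment map $R$, exploiting the fact that $R$ preserves the Hilbert--Schmidt inner product. First I would invoke Proposition~\ref{prop:MatMultDiffVectors}, specifically Equation~\eqref{eq:MatMultSN}, to rewrite the left-hand side as
\begin{align*}
	\big\|X\big\|_{S(k)} = \sup_{\rho}\big\{ \Tr(X\rho) : SN(\rho) \leq k \big\},
\end{align*}
where the supremum runs over density matrices $\rho$. It then suffices to bound $\Tr(X\rho)$ by $\big\|R(X)\big\|_{(k^2,2)}$ uniformly over all such $\rho$ with $SN(\rho) \leq k$.

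The key observation is that $R$ merely permutes the standard matrix units, $R(\ketbra{i}{j} \otimes \ketbra{k}{\ell}) = \ketbra{i}{k} \otimes \ketbra{j}{\ell}$, and hence carries an orthonormal (Hilbert--Schmidt) basis of $M_m \otimes M_n$ bijectively onto one of $M_{m,n} \otimes M_{m,n}$. Consequently $R$ is inner-product preserving, so since $X$ is Hermitian I obtain
\begin{align*}
	\Tr(X\rho) = \langle X | \rho \rangle = \langle R(X) | R(\rho) \rangle = \Tr\big(R(X)^\dagger R(\rho)\big).
\end{align*}
Applying the defining inequality of the dual norm to the $(k^2,2)$-norm then gives
\begin{align*}
	\Tr\big(R(X)^\dagger R(\rho)\big) \leq \big\|R(X)\big\|_{(k^2,2)}\,\big\|R(\rho)\big\|_{(k^2,2)}^{\circ},
\end{align*}
the dual norm being placed on the $R(\rho)$ factor.

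To finish, I would bound the remaining factor by Theorem~\ref{thm:gen_realign}: since $SN(\rho) \leq k$, that result yields $\big\|R(\rho)\big\|_{(k^2,2)}^{\circ} \leq 1$. Chaining the three displays and using $\Tr(X\rho) \geq 0$ (both $X$ and $\rho$ are positive, so no absolute values are lost) gives $\Tr(X\rho) \leq \big\|R(X)\big\|_{(k^2,2)}$ for every admissible $\rho$; taking the supremum over such $\rho$ completes the argument.

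I expect no serious obstacle here; the work is bookkeeping rather than analysis. The two delicate points are confirming that $R$ preserves the Hilbert--Schmidt inner product — cleanest via the basis-permutation remark above, though it also follows from the already-noted fact that $R$ preserves the Frobenius norm together with polarization — and applying the dual-norm inequality in the correct direction so that the $(k^2,2)$-norm lands on $R(X)$ and its dual on $R(\rho)$. The one genuinely conceptual step is recognizing that the right tool is Theorem~\ref{thm:gen_realign} applied to $R(\rho)$, rather than any direct estimate on $X$ itself. As a consistency check, the state $X = \ketbra{\psi_+}{\psi_+}$ of Example~\ref{exam:MatrixNormChoi} saturates the inequality: there $R(X) = \tfrac{1}{n}I$, and both sides equal $k/n$.
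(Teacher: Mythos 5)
Your proof is correct and is essentially the paper's own argument: the paper likewise rewrites $\big\|X\big\|_{S(k)}$ via Proposition~\ref{prop:MatMultDiffVectors}, invokes Theorem~\ref{thm:gen_realign} to control $\big\|R(\rho)\big\|_{(k^2,2)}^{\circ}$, and moves $R$ across the trace using $R = R^\dagger$ (equivalently, your inner-product-preservation observation, since $R=R^{-1}=R^\dagger$ makes $R$ unitary in the Hilbert--Schmidt sense). The only cosmetic difference is that the paper enlarges the feasible set and appeals to the double-dual characterization of the $(k^2,2)$-norm, whereas you apply the dual-norm inequality pointwise; these are the same estimate.
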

	\begin{proof}
		The proof is mostly by simple algebra:
		\begin{align*}
			\big\|X\big\|_{S(k)} & = \sup_{\rho} \left\{ \Tr(X\rho) : SN(\rho) \leq k \right\} \\
			& \leq \sup_{\rho} \left\{ \Tr(X\rho) : \|R(\rho)\|_{(k^2,2)}^\circ \leq 1 \right\} \\
			& = \sup_{Y} \left\{ \Tr(X R(Y)) : \|Y\|_{(k^2,2)}^\circ \leq 1, \Tr(R(Y)) = 1, R(Y) \geq 0 \right\} \\
			& \leq \sup_{Y} \left\{ \Tr(R(X) Y) : \|Y\|_{(k^2,2)}^\circ \leq 1 \right\} \\
			& = \big\|R(X)\big\|_{(k^2,2)}.
		\end{align*}
		The first equality above comes from Proposition~\ref{prop:MatMultDiffVectors}, the first inequality comes from Theorem~\ref{thm:gen_realign}, the next equality comes from the fact that $R = R^{-1}$ (i.e., we are setting $Y = R(\rho)$), and the second inequality uses the fact that $R = R^\dagger$.
	\end{proof}
	
	Some special cases of Property~\ref{prop:sk_realign} are worth pointing out. In the $k = 1$ case, it simply says that $\big\|X\big\|_{S(1)} \leq \big\| R(X) \big\|$ (much like $\big\| X \big\|_{S(1)} \leq \big\|X\big\|$, which we already noted). At the other extreme, if $k = \min\{m,n\}$ then the result says that $\big\|X\big\| \leq \big\| R(X) \big\|_F$, which is trivially true since $\big\| R(X) \big\|_F = \big\| X \big\|_F$.
  
Because $M_m \otimes M_n$ is finite-dimensional, we know that the $S(k)$-operator norms are equivalent. In order to quantify this fact, we will first need the following simple lemma.
\begin{lemma}\label{lem:Schmidt01}
    Let $h \leq k$ and suppose $\ket{v} \in \bb{C}^m \otimes \bb{C}^n$ is a unit vector with $SR(\ket{v}) \leq k$. Then there exist nonnegative real constants $\{ d_j \}$ and (not necessarily distinct) unit vectors $\{ \ket{v_j} \} \subseteq \bb{C}^m \otimes \bb{C}^n$ for $1 \leq j \leq k$ such that $\sum_{j=1}^k d_j^2 = h$, $SR(\ket{v_j}) \leq h$, and
    \[
      h\ket{v} = \sum_{j=1}^k d_j\ket{v_j}.
  \]
\end{lemma}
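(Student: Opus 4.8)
The plan is to give an explicit construction based on the Schmidt decomposition of $\ket{v}$ together with a cyclic ``windowing'' of its Schmidt terms. First I would invoke Theorem~\ref{thm:schmidt} to write $\ket{v}$ in its Schmidt decomposition, padding with zero coefficients so that it has exactly $k$ terms: $\ket{v} = \sum_{i=1}^{k}\alpha_i\ket{a_i}\otimes\ket{b_i}$, where the $\ket{a_i}$ and $\ket{b_i}$ are orthonormal, $\alpha_i \geq 0$, and $\sum_{i=1}^{k}\alpha_i^2 = 1$ since $\ket{v}$ is a unit vector.

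The heart of the construction is to group the Schmidt terms into $k$ overlapping windows of length $h$, read cyclically. For each $1 \leq j \leq k$, set $d_j := \sqrt{\sum_{i=j}^{j+h-1}\alpha_i^2}$ (indices reduced modulo $k$ into $\{1,\ldots,k\}$) and, when $d_j > 0$, define the unit vector
\begin{align*}
	\ket{v_j} := \frac{1}{d_j}\sum_{i=j}^{j+h-1}\alpha_i\ket{a_i}\otimes\ket{b_i},
\end{align*}
which manifestly has $SR(\ket{v_j}) \leq h$ because it is a combination of only $h$ of the orthonormal Schmidt pairs. When $d_j = 0$ I would simply let $\ket{v_j}$ be any fixed unit vector with Schmidt rank at most $h$, since the term $d_j\ket{v_j}$ vanishes regardless. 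With these choices, $d_j\ket{v_j} = \sum_{i=j}^{j+h-1}\alpha_i\ket{a_i}\otimes\ket{b_i}$ in all cases, and the normalization identity $\sum_{j=1}^{k}d_j^2 = \sum_{j=1}^{k}\sum_{i=j}^{j+h-1}\alpha_i^2 = h\sum_{i=1}^{k}\alpha_i^2 = h$ follows from the fact that each index $i$ lies in exactly $h$ of the windows.

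The one step requiring genuine care is verifying $h\ket{v} = \sum_{j=1}^{k}d_j\ket{v_j}$, which reduces to the same combinatorial count: summing $\sum_{j=1}^{k}d_j\ket{v_j} = \sum_{j=1}^{k}\sum_{i=j}^{j+h-1}\alpha_i\ket{a_i}\otimes\ket{b_i}$ and observing that the term indexed by a fixed $i$ appears precisely for those $j$ with $i-h+1 \leq j \leq i$ (cyclically), of which there are exactly $h$, gives $\sum_{j=1}^{k}d_j\ket{v_j} = h\sum_{i=1}^{k}\alpha_i\ket{a_i}\otimes\ket{b_i} = h\ket{v}$. I expect the cyclic wrap-around bookkeeping to be the only delicate point -- one must confirm that every window has length exactly $h$ and that each Schmidt index is covered the same number of times -- but this is precisely what the modular indexing guarantees, so the verification is routine once the construction is set up.
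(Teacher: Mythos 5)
Your proof is correct and takes essentially the same approach as the paper: the paper writes $h\ket{v}$ as $h$ copies of the $k$-term Schmidt sum and asserts that the $hk$ terms can be regrouped into $k$ sets of $h$ orthonormal product terms each, and your cyclic windows $\{j,\ldots,j+h-1\} \pmod{k}$ are precisely the canonical realization of that regrouping. Your version is simply more explicit about the bookkeeping (each index lying in exactly $h$ windows), which the paper leaves as a one-line assertion.
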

\begin{proof}
  We can write $\ket{v}$ via the Schmidt Decomposition as $\ket{v} = \sum_{j=1}^k c_j \ket{a_j} \otimes \ket{b_j}$ with $\sum_{j=1}^k|c_j|^2 = 1$ and $\{ \ket{a_j} \}$, $\{ \ket{b_j} \}$ orthonormal sets. Thus
  \[
    h\ket{v} = \sum_{i=1}^h\sum_{j=1}^k c_j \ket{a_j} \otimes \ket{b_j}.
  \]
  Because $h \leq k$, we can rearrange the summations in such a way that we sum over $k$ sets of orthonormal vectors, with $h$ vectors in each set. We thus have $h\ket{v} = \sum_{j=1}^k d_j\ket{v_j}$ for some unit vectors $\ket{v_j}$ with $SR(\ket{v_j}) \leq h$ and constants $d_j$ satisfying $\sum_{j=1}^k d_j^2 = h$.
\end{proof}

\begin{thm}\label{thm:MatrixEquiv01}
    Let $X \in M_m \otimes M_n$ and suppose $h \leq k$. Then
    \begin{align*}
      \big\| X \big\|_{S(h)} & \leq \big\| X \big\|_{S(k)} \leq \frac{k}{h} \big\| X \big\|_{S(h)}.
    \end{align*}
\end{thm}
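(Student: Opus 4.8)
The plan is to dispatch the left inequality immediately and then reduce the right inequality to Lemma~\ref{lem:Schmidt01} combined with Cauchy--Schwarz. The left inequality $\big\|X\big\|_{S(h)} \leq \big\|X\big\|_{S(k)}$ is purely a matter of the defining supremum being taken over a larger set when $h \leq k$: every pair $\ket{v},\ket{w}$ with Schmidt rank at most $h$ also has Schmidt rank at most $k$, so the supremum cannot decrease.

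For the right inequality, I would take unit vectors $\ket{v},\ket{w} \in \bb{C}^m \otimes \bb{C}^n$ with $SR(\ket{v}), SR(\ket{w}) \leq k$ and apply Lemma~\ref{lem:Schmidt01} to each. This produces nonnegative reals $\{d_j\}_{j=1}^k$, $\{e_i\}_{i=1}^k$ and unit vectors $\{\ket{v_j}\}$, $\{\ket{w_i}\}$ with $SR(\ket{v_j}), SR(\ket{w_i}) \leq h$, satisfying $\sum_{j=1}^k d_j^2 = \sum_{i=1}^k e_i^2 = h$ and
\begin{align*}
	h\ket{v} = \sum_{j=1}^k d_j\ket{v_j}, \qquad h\ket{w} = \sum_{i=1}^k e_i\ket{w_i}.
\end{align*}
Expanding the inner product and using the triangle inequality together with the definition of $\big\|X\big\|_{S(h)}$ then gives
\begin{align*}
	h^2\big|\bra{w}X\ket{v}\big| = \Big|\sum_{i,j=1}^k e_i d_j \bra{w_i}X\ket{v_j}\Big| \leq \big\|X\big\|_{S(h)}\Big(\sum_{i=1}^k e_i\Big)\Big(\sum_{j=1}^k d_j\Big),
\end{align*}
since each $\big|\bra{w_i}X\ket{v_j}\big| \leq \big\|X\big\|_{S(h)}$.

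The final step is to bound the two factor sums. By Cauchy--Schwarz, $\sum_{j=1}^k d_j \leq \sqrt{k}\,\sqrt{\sum_{j=1}^k d_j^2} = \sqrt{kh}$, and similarly $\sum_{i=1}^k e_i \leq \sqrt{kh}$, so their product is at most $kh$. Substituting yields $h^2\big|\bra{w}X\ket{v}\big| \leq kh\,\big\|X\big\|_{S(h)}$, hence $\big|\bra{w}X\ket{v}\big| \leq \tfrac{k}{h}\big\|X\big\|_{S(h)}$; taking the supremum over all admissible $\ket{v},\ket{w}$ gives $\big\|X\big\|_{S(k)} \leq \tfrac{k}{h}\big\|X\big\|_{S(h)}$. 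I do not anticipate a serious obstacle here: the content is entirely front-loaded into Lemma~\ref{lem:Schmidt01}, and the only subtlety worth checking is that the lemma as stated applies to unit vectors, so I would note at the outset that both norms are positively homogeneous and restrict to unit $\ket{v},\ket{w}$ without loss of generality. It may also be worth remarking (by analogy with the tightness discussion after Corollary~\ref{cor:vecEquiv} and Example~\ref{exam:MatrixNormChoi}) that the constant $k/h$ is sharp, as witnessed by $\ketbra{\psi_+}{\psi_+}$, though this is not required for the statement.
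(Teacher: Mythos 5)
Your proof is correct and follows essentially the same route as the paper's: the left inequality from monotonicity of the supremum, and the right inequality by applying Lemma~\ref{lem:Schmidt01} to both vectors, bounding each cross term by $\big\|X\big\|_{S(h)}$, and using Cauchy--Schwarz to bound $\sum_j d_j$ and $\sum_i e_i$ by $\sqrt{kh}$. Your remarks on homogeneity and on sharpness via $\ketbra{\psi_+}{\psi_+}$ match the paper's surrounding discussion as well.
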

\begin{proof}
  The left inequality is trivial by the definition of the operator norms. To see the right inequality, suppose $\ket{v}$ and $\ket{w}$ have $SR(\ket{v}),SR(\ket{w}) \leq k$. Use Lemma~\ref{lem:Schmidt01} to write $h\ket{v} = \sum_{j=1}^k d_j\ket{v_j}$ and $h\ket{w} = \sum_{j=1}^k f_j\ket{w_j}$ so that
  \begin{align*}
    h^2\big| \bra{w}X\ket{v} \big| = \big| \sum_{i,j=1}^k{f_i d_j \bra{w_i} X \ket{v_j}} \big| \leq \Big(\sum_{i=1}^k{f_i}\Big)\Big(\sum_{i=1}^k{d_i}\Big) \big\| X \big\|_{S(h)} \leq kh \big\| X \big\|_{S(h)},
  \end{align*}
	where the rightmost inequality follows from applying the Cauchy--Schwarz inequality to the vectors $(d_1, \ldots, d_k)^T$ and $(1, \ldots, 1)^T$, and to the vectors $(f_1, \ldots, f_k)^T$ and $(1, \ldots, 1)^T$. The result follows by dividing through by $h^2$.
\end{proof}

To see that the inequalities of Theorem~\ref{thm:MatrixEquiv01} are tight, simply recall Example~\ref{exam:MatrixNormChoi}. Also observe that a straightforward consequence of this result is the inequality $\big\| X \big\|_{S(k)} \geq \frac{k}{m} \big\| X \big\|$ for all $k \leq m$. We now derive lower bounds that are much better in many situations.

\begin{prop}\label{prop:lowerBoundEig}
    Let $X = X^\dagger \in M_m \otimes M_n$ have eigenvalues $\lambda_1 \leq \lambda_2 \leq \cdots \leq \lambda_{mn}$. Then for any $r \geq k$,
    \[
      \big\| X \big\|_{S(k)} \geq \frac{k\lambda_{mn - (n-r)(m-r)}}{r}.
    \]
		Furthermore, there exists $X = X^\dagger$ such that $\big\| X \big\|_{S(k)} < \lambda_{nm - (n - k)(m - k) + 1}$.
\end{prop}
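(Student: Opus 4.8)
The plan is to exhibit such an $X$ explicitly as an orthogonal projection onto a subspace built entirely from high-Schmidt-rank vectors. First I would apply Theorem~\ref{thm:CMW08} with $k$ replaced by $k+1$: this yields a subspace $\cl{V} \subseteq \bb{C}^m \otimes \bb{C}^n$ of dimension $(m-k)(n-k)$ such that $SR(\ket{v}) \geq k+1$ for every nonzero $\ket{v} \in \cl{V}$. (We may assume $k < \min\{m,n\}$, since otherwise $(n-k)(m-k) = 0$ and the index $nm - (n-k)(m-k)+1 = nm+1$ is out of range, making the statement vacuous.) I would then set $X := P_{\cl{V}}$, the orthogonal projection onto $\cl{V}$, which is Hermitian (indeed positive semidefinite).

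Next I would do the eigenvalue bookkeeping. Since $P_{\cl{V}}$ is a projection of rank $(m-k)(n-k)$, its eigenvalues are $1$ with multiplicity $(m-k)(n-k)$ and $0$ with multiplicity $mn - (m-k)(n-k)$. Ordering them as $\lambda_1 \leq \cdots \leq \lambda_{mn}$, the zeros occupy the positions $1$ through $mn - (n-k)(m-k)$ and the ones occupy the remaining positions; in particular $\lambda_{nm - (n-k)(m-k)+1} = 1$. Hence it suffices to prove that $\big\|X\big\|_{S(k)} < 1$.

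Finally I would compute the norm using Proposition~\ref{prop:orth_proj_norm}, which gives $\big\|X\big\|_{S(k)} = \sup\big\{ \big\|\ket{v}\big\|_{s(k)}^2 : \ket{v} \in {\rm Range}(P_{\cl{V}}), \ \big\|\ket{v}\big\| = 1 \big\}$. For any unit vector $\ket{v} \in \cl{V}$ we have $SR(\ket{v}) \geq k+1 > k$, so by Corollary~\ref{cor:vecEquiv} (equivalently, directly from Theorem~\ref{thm:sk_vector_norm}, since $\big\|\ket{v}\big\|_{s(k)}^2 = \sum_{i=1}^k \alpha_i^2 = 1 - \sum_{i > k}\alpha_i^2 < 1$ because $\alpha_{k+1} > 0$) we obtain $\big\|\ket{v}\big\|_{s(k)} < 1$ for every such $\ket{v}$.

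The one genuine subtlety, and the step I expect to be the main obstacle, is upgrading this pointwise strict inequality to a strict inequality for the supremum. I would handle this by compactness: the unit sphere of $\cl{V}$ is compact and the map $\ket{v} \mapsto \big\|\ket{v}\big\|_{s(k)}^2$ is continuous, so the supremum is attained at some unit vector of $\cl{V}$ and is therefore strictly less than $1$. Combining this with the eigenvalue computation yields $\big\|X\big\|_{S(k)} < 1 = \lambda_{nm - (n-k)(m-k)+1}$, which is exactly the claim and shows that the bound of the first part of the proposition cannot be sharpened by shifting its index upward by one.
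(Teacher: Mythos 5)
Your argument establishes only the second sentence of the proposition. The main claim --- that for every Hermitian $X$ and every $r \geq k$ one has $\big\| X \big\|_{S(k)} \geq k\lambda_{mn - (n-r)(m-r)}/r$ --- is not addressed anywhere in your write-up, and it does not follow from the construction you give. The paper proves it by letting $\cl{V}$ be the span of the eigenvectors belonging to the $(n-r)(m-r)+1$ largest eigenvalues; since this dimension exceeds the bound of Theorem~\ref{thm:CMW08} for subspaces avoiding Schmidt rank $\leq r$, the subspace $\cl{V}$ must contain a unit vector $\ket{v}$ with $SR(\ket{v}) \leq r$, whence $\big\|X\big\|_{S(r)} \geq \bra{v}X\ket{v} \geq \lambda_{mn-(n-r)(m-r)}$; the factor $k/r$ then comes from the norm-equivalence inequality $\big\|X\big\|_{S(k)} \geq \tfrac{k}{r}\big\|X\big\|_{S(r)}$ of Theorem~\ref{thm:MatrixEquiv01}. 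Without some version of this argument your proof of the proposition is incomplete.

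For the existence claim itself, what you wrote is correct, and it takes a slightly different and arguably cleaner route than the paper. The paper takes a positive operator with distinct eigenvalues whose span of top $(n-k)(m-k)$ eigenvectors contains no vector of Schmidt rank at most $k$; you instead take the orthogonal projection onto a maximal subspace all of whose nonzero vectors have Schmidt rank at least $k+1$, compute its norm via Proposition~\ref{prop:orth_proj_norm}, and make explicit --- via compactness of the unit sphere of $\cl{V}$ --- the passage from the pointwise strict inequality $\big\|\ket{v}\big\|_{s(k)} < 1$ to strictness of the supremum, a step the paper leaves implicit. Both routes ultimately rest on the tightness of the dimension bound in Theorem~\ref{thm:CMW08}.
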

\begin{proof}
  Let $\cl{V}$ be the span of the eigenvectors $\big\{\ket{v_{i}}\big\}_{i=nm - (n - r)(m - r)}^{mn}$ corresponding to the eigenvalues $\{\lambda_{i}\}_{i=nm - (n - r)(m - r)}^{mn}$. Then because ${\rm dim}(\cl{V}) = (n - r)(m - r) + 1$, by Theorem~\ref{thm:CMW08}, we know that there exists a vector $\ket{v} \in \cl{V}$ with $SR(\ket{v}) \leq r$. It follows that
    \[
        \big\|X \big\|_{S(r)} \geq \big| \bra{v}X\ket{v} \big| \geq \sum_{i=nm - (n - r)(m - r)}^{mn} \lambda_i |\braket{v_i}{v}|^2 \geq \lambda_{nm - (n - r)(m - r)}.
    \]

    \noindent Using Theorem~\ref{thm:MatrixEquiv01} then shows that if $k \leq r$,
  \begin{align*}
    \big\| X \big\|_{S(k)} \geq \frac{k}{r}\big\| X \big\|_{S(r)} \geq \frac{k\lambda_{mn - (n-r)(m-r)}}{r}.
  \end{align*}

  To see the final claim, note that the dimension given by Theorem~\ref{thm:CMW08} is tight, so we can construct a positive operator $X$ with distinct eigenvalues such that the span of the eigenvectors corresponding to its $(n - k)(m - k)$ largest eigenvalues does not contain any states $\ket{w}$ with $SR(\ket{w}) \leq k$. It follows that $\bra{v}X\ket{v} < \lambda_{nm - (n - k)(m - k) + 1}$ for all $\ket{v}$ with $SR(\ket{v}) \leq k$.
\end{proof}

\begin{thm}\label{thm:s1_norm_trace}
	Let $X = X^\dagger \in M_m \otimes M_n$. Then
	\begin{align*}
		\big\|X\big\|_{S(1)} \geq \frac{1}{mn}\left(\Tr(X) + \sqrt{\frac{mn\Tr\big( X^2\big) - \Tr(X)^2}{mn-1}}\right).
	\end{align*}
\end{thm}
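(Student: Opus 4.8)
The plan is to reduce the estimate to the trace inequality for block positive operators (Proposition~\ref{prop:block_pos_trace}) applied to a shifted copy of $X$. First I would discard the second optimizing vector: setting $\ket{w} = \ket{v}$ in Definition~\ref{defn:operator_sk_norm} gives
$$\big\|X\big\|_{S(1)} \;\geq\; \sup_{\ket{v}}\big\{ \bra{v}X\ket{v} : SR(\ket{v}) \leq 1 \big\} \;=:\; c^*,$$
so it suffices to prove $c^*$ is at least the claimed right-hand side. This is only a lower bound, which is all we need; indeed, by Example~\ref{ex:sk_norm_hermitian} the $S(1)$-norm of a Hermitian operator can strictly exceed $c^*$, so we should not expect equality at this step.

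The load-bearing observation is that $Y := c^* I - X$ is block positive: for every separable $\ket{v}$ we have $\bra{v}Y\ket{v} = c^* - \bra{v}X\ket{v} \geq 0$ directly from the definition of $c^*$ as a supremum (the supremum is attained by compactness of the set of product states, or one may shift $c^*$ by $\varepsilon$ and let $\varepsilon \to 0$). Proposition~\ref{prop:block_pos_trace} then gives $\Tr(Y^2) \leq \big(\Tr(Y)\big)^2$. Writing $N := mn$, $\tau := \Tr(X)$, $\sigma := \Tr(X^2)$ and expanding both sides,
$$\Tr(Y^2) = N(c^*)^2 - 2\tau c^* + \sigma, \qquad \big(\Tr(Y)\big)^2 = \big(N c^* - \tau\big)^2,$$
so the inequality $\Tr(Y^2) \leq (\Tr(Y))^2$ rearranges into the quadratic condition
$$N(c^*)^2 - 2\tau c^* - \frac{\sigma - \tau^2}{N-1} \;\geq\; 0.$$

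To finish I would solve this quadratic in $c^*$. The discriminant simplifies via $\tau^2 + N(\sigma-\tau^2)/(N-1) = (N\sigma - \tau^2)/(N-1)$, so the larger root is
$$c_+ = \frac{1}{N}\left(\tau + \sqrt{\frac{N\sigma - \tau^2}{N-1}}\right),$$
which is exactly the asserted bound once $N,\tau,\sigma$ are substituted back. Since the upward-opening parabola is nonnegative only outside its two roots, the remaining task is to exclude the small-root branch. This follows because averaging $\bra{v}X\ket{v}$ over a product basis $\{\ket{i}\otimes\ket{j}\}$ shows $c^* \geq \Tr(X)/(mn) = \tau/N$, and $\tau/N$ lies between the two roots; combined with the quadratic condition this forces $c^* \geq c_+$. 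Chaining $\big\|X\big\|_{S(1)} \geq c^* \geq c_+$ then completes the argument.

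The algebra (expanding the trace inequality and simplifying the discriminant) is routine. The one genuinely essential idea is recognizing that the right-hand side is precisely the Wolkowicz--Styan lower bound for $\lambda_{\max}(X)$, and that this bound can be recovered using only block positivity of $c^* I - X$ together with Proposition~\ref{prop:block_pos_trace} — entirely avoiding the top eigenvector of $X$, which need not be separable and so would not be available to test against $\big\|\cdot\big\|_{S(1)}$.
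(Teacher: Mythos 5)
Your proposal is correct and follows essentially the same route as the paper's proof: first establish the weak bound $\big\|X\big\|_{S(1)} \geq \Tr(X)/(mn)$ via a product basis, then apply Proposition~\ref{prop:block_pos_trace} to a block positive shift of $X$, expand the trace inequality into a quadratic, and use the weak bound to exclude the smaller root. The only (cosmetic) difference is that you shift by $c^{*} = \sup\big\{\bra{v}X\ket{v} : SR(\ket{v}) \leq 1\big\}$ rather than by $\big\|X\big\|_{S(1)}$ itself, which makes block positivity of the shifted operator immediate from the definition and sidesteps the paper's brief appeal to Corollary~\ref{cor:kPosInf1} together with the remark that passing from positive semidefinite to Hermitian $X$ only makes $\big\|X\big\|_{S(1)}I - X$ more positive.
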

\begin{proof}
	We begin by demonstrating the weaker inequality $\big\|X\big\|_{S(1)} \geq \frac{\Tr(X)}{mn}$, which we prove for two reasons. First, its proof is elementary, and it is instructive to see how the statement of the theorem compares to this simpler result. Second, this weaker inequality will be needed to overcome a slight technicality in the proof of the more general inequality.
	
	Begin by defining $p := {\rm rank}(X)$. Write $X$ in its spectral decomposition $X = \sum_{i=1}^p \lambda_i\ketbra{v_i}{v_i}$. Now write the vectors $\ket{v_i}$ in the form
    \[
        \ket{v_i} = \sum_{j=1}^{m}\sum_{\ell=1}^{n} c_{ij\ell}\ket{j}\otimes\ket{\ell},
    \]
		where $\{c_{ij\ell}\} \in \bb{C}$ is a family of constants such that
    \begin{align}\label{eq:sumeq}
        \sum_{j=1}^{m}\sum_{\ell=1}^{n} |c_{ij\ell}|^2 = 1 \quad \forall \, i = 1, 2, \ldots, p.
    \end{align}
		It follows that there exists some fixed $j$ and $\ell$ such that
    \[
        \sum_{i=1}^p\lambda_i |c_{ij\ell}|^2 \geq \frac{\Tr(X)}{mn},
    \]
		since otherwise Equation~\eqref{eq:sumeq} would be violated. Then for this specific $j$ and $\ell$,
    \begin{align*}
      \big\|X\big\|_{S(1)} \geq (\bra{j} \otimes \bra{\ell})X(\ket{j} \otimes \ket{\ell}) = \sum_{i=1}^p\lambda_i\big|\bra{v_i}(\ket{j} \otimes \ket{\ell})\big|^2 = \sum_{i=1}^p \lambda_i |c_{ij\ell}|^2 \geq \frac{\Tr(X)}{mn},
    \end{align*}
    as desired.
    
	We now prove the inequality described in the statement of the theorem. If $X \geq 0$ then Corollary~\ref{cor:kPosInf1} tells us that $\big\|X\big\|_{S(1)}I - X$ is block positive. Allowing $X$ to be Hermitian instead of positive semidefinite can only make this operator more positive, so $\big\|X\big\|_{S(1)}I - X$ is block positive in this case as well. Using Proposition~\ref{prop:block_pos_trace} then shows that
	\begin{align*}
		\Tr\left(\big(\big\|X\big\|_{S(1)}I - X\big)^2\right) \leq \left(\Tr\big(\big\|X\big\|_{S(1)}I - X\big)\right)^2.
	\end{align*}
	Expanding and rearranging terms gives
	\begin{align}\label{eq:s1_quad}
		mn\big\|X\big\|_{S(1)}^2 - 2\Tr(X)\big\|X\big\|_{S(1)} + \frac{\big(\Tr(X)\big)^2 - \Tr\big( X^2\big)}{mn-1} \geq 0.
	\end{align}
	We can treat the left-hand side of Inequality~\eqref{eq:s1_quad} as a quadratic in $\big\|X\big\|_{S(1)}$. Using the quadratic equation gives its roots as
	\begin{align*}
		\frac{\Tr(X) \pm \sqrt{(mn\Tr\big( X^2\big) - \Tr(X)^2)/(mn-1)}}{mn}.
	\end{align*}
	Inequality~\eqref{eq:s1_quad} is satisfied exactly when $\big\|X\big\|_{S(1)}$ is not strictly between these two roots. However, we already saw that $\big\|X\big\|_{S(1)} \geq \Tr(X)/mn$, which rules out the solutions smaller than the lesser of the two roots. The result follows.
\end{proof}

We now turn our attention to orthogonal projections $P = P^\dagger = P^2 \in M_m \otimes M_n$. In this case, we can improve the left inequality of Theorem~\ref{thm:MatrixEquiv01}.
\begin{prop}\label{prop:proj_norm_ineq}
    Let $P = P^\dagger = P^2 \in M_m \otimes M_n$ be an orthogonal projection and let $h \leq k$. Then
    \[
        \big\|P\big\|_{S(k)} \geq \big\|P\big\|_{S(h)} + \frac{k-h}{\min\{m,n\}-h}\big(1 - \big\|P\big\|_{S(h)}\big).
    \]
\end{prop}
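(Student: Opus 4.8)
The plan is to reduce everything to the Schmidt coefficients of a single optimal vector via Proposition~\ref{prop:orth_proj_norm}, and then to establish one elementary averaging inequality. Write $d := \min\{m,n\}$. Since in Proposition~\ref{prop:orth_proj_norm} the vector $\ket{v}$ ranges over unit vectors in ${\rm Range}(P)$ (a compact set) and $\|\cdot\|_{s(h)}^2$ is continuous, the supremum defining $\big\|P\big\|_{S(h)}$ is attained: there is a unit vector $\ket{v} \in {\rm Range}(P)$ with $\big\|P\big\|_{S(h)} = \big\|\ket{v}\big\|_{s(h)}^2$. Let $\alpha_1 \geq \alpha_2 \geq \cdots \geq \alpha_d \geq 0$ denote the Schmidt coefficients of $\ket{v}$, so that $\sum_{i=1}^d \alpha_i^2 = 1$, and Theorem~\ref{thm:sk_vector_norm} gives $\big\|P\big\|_{S(h)} = \sum_{i=1}^h \alpha_i^2$ and $\big\|\ket{v}\big\|_{s(k)}^2 = \sum_{i=1}^k \alpha_i^2$.

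First I would use the \emph{same} vector $\ket{v}$ to bound the $S(k)$-norm from below. Since $\ket{v} \in {\rm Range}(P)$, Proposition~\ref{prop:orth_proj_norm} immediately yields $\big\|P\big\|_{S(k)} \geq \big\|\ket{v}\big\|_{s(k)}^2 = \sum_{i=1}^k \alpha_i^2$. Substituting $\big\|P\big\|_{S(h)} = \sum_{i=1}^h \alpha_i^2$ and $1 = \sum_{i=1}^d \alpha_i^2$, the claimed inequality is implied by (and, after cancelling $\sum_{i=1}^h \alpha_i^2$ on both sides, equivalent to)
\[
\sum_{i=h+1}^k \alpha_i^2 \;\geq\; \frac{k-h}{d-h}\sum_{i=h+1}^d \alpha_i^2 .
\]

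The crux — and the only nontrivial step — is this last inequality, which is a pure statement about the decreasing nonnegative sequence $\alpha_{h+1}^2 \geq \cdots \geq \alpha_d^2$: the average of its leading $k-h$ terms dominates the average of all $d-h$ terms. I would prove it by splitting $S_1 := \sum_{i=h+1}^k \alpha_i^2$ (with $k-h$ summands) and $S_2 := \sum_{i=k+1}^d \alpha_i^2$ (with $d-k$ summands), so that the display above is exactly $(d-h)S_1 \geq (k-h)(S_1+S_2)$, i.e.\ $(d-k)S_1 \geq (k-h)S_2$. This follows from the monotonicity bounds $S_1 \geq (k-h)\alpha_k^2$ and $S_2 \leq (d-k)\alpha_{k+1}^2 \leq (d-k)\alpha_k^2$, which give $(d-k)S_1 \geq (d-k)(k-h)\alpha_k^2 \geq (k-h)S_2$. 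Rearranging recovers the displayed inequality, and hence the proposition. I expect no serious obstacle here: the reduction through Proposition~\ref{prop:orth_proj_norm} is immediate and the averaging step is elementary; the only care needed is to apply both the upper ($S(k)$) and lower ($S(h)$) bounds to one and the same optimal vector $\ket{v}$.
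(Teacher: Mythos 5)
Your proof is correct and follows essentially the same route as the paper's: both reduce via Proposition~\ref{prop:orth_proj_norm} to the Schmidt coefficients of a single vector attaining $\big\|P\big\|_{S(h)}$, then use the same vector to lower-bound $\big\|P\big\|_{S(k)}$ and conclude with the averaging inequality $\sum_{i=h+1}^k \alpha_i^2 \geq \tfrac{k-h}{\min\{m,n\}-h}\sum_{i=h+1}^{\min\{m,n\}} \alpha_i^2$ for decreasing coefficients. The only difference is that you spell out the elementary averaging step that the paper leaves implicit.
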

\begin{proof}
  Assume without loss of generality that $m \leq n$. Use Proposition~\ref{prop:orth_proj_norm} and Theorem~\ref{thm:sk_vector_norm} to write
    \begin{align*}
        \big\|P\big\|_{S(h)} = \sup_{\ket{w} \in {\rm Range}(P)}\big\{ \sum_{i=1}^h \alpha_i^2 : \alpha_1 \geq \alpha_2 \geq \cdots \geq 0 \text{ are Schmidt coefficients of } \ket{w} \big\}.
    \end{align*}
	Now let $\ket{w} \in {\rm Range}(P)$ have Schmidt coefficients $\{ \alpha_i \}$ such that $\sum_{i=1}^h \alpha_i^2 = \big\| P \big\|_{S(h)}$. Then using the facts that $\sum_{i=1}^m\alpha_i^2 = 1$ and $\alpha_i \geq \alpha_j$ for $i \leq j$, it follows that $\sum_{i=h+1}^m \alpha_i^2 = 1 - \big\| P \big\|_{S(h)}$ and so $\sum_{i=h+1}^k \alpha_i^2 \geq \frac{k-h}{m-h}(1 - \big\| P \big\|_{S(h)})$. Thus
    \[
        \big\|P\big\|_{S(k)} \geq \sum_{i=1}^k \alpha_i^2 = \big\|P\big\|_{S(h)} + \sum_{i=h+1}^k \alpha_i^2 \geq \big\|P\big\|_{S(h)} + \frac{k-h}{m-h}\big(1 - \big\|P\big\|_{S(h)}\big).
    \]
\end{proof}

If $P = P^\dagger = P^2 \in M_m \otimes M_n$ is an orthogonal projection, then by Theorem~\ref{thm:MatrixEquiv01} we have that $\frac{k}{m} \leq \big\| P \big\|_{S(k)} \leq 1$. The left inequality was seen to be tight by a rank-$1$ projection in Example~\ref{exam:MatrixNormChoi}, and it is not difficult to construct projection operators of any rank that have $\big\| P \big\|_{S(k)} = 1$. However, the following result shows that we can improve the lower bound if we take the rank of the projection into account. Note that Inequality~\eqref{eq:projIneq2} as stated here is stronger than the corresponding inequality given in \cite{JK10}.
\begin{thm}\label{thm:mainProjRes}
    Let $P = P^\dagger = P^2 \in M_m \otimes M_n$ be an orthogonal projection and define $r := {\rm rank}(P)$. Then
    \begin{align}\label{eq:projIneq1}
      \big\| P \big\|_{S(k)} & \geq \min\Big\{1,\frac{k}{\big\lceil \frac{1}{2}\big( n + m - \sqrt{(n-m)^2 + 4r - 4} \big) \big\rceil}\Big\} \text{ and } \\ \label{eq:projIneq2}
      \big\|P\big\|_{S(k)} & \geq \frac{\min\{m,n\} - k}{mn(\min\{m,n\}-1)}\left( r + \sqrt{\frac{mnr - r^2}{mn-1}} \right) + \frac{k-1}{\min\{m,n\}-1}.
    \end{align}
\end{thm}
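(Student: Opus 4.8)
The plan is to prove the two inequalities separately: Inequality~\eqref{eq:projIneq1} will rest on the subspace dimension bound of Theorem~\ref{thm:CMW08} together with Proposition~\ref{prop:orth_proj_norm} and Corollary~\ref{cor:vecEquiv}, while Inequality~\eqref{eq:projIneq2} will follow by combining the trace bound of Theorem~\ref{thm:s1_norm_trace} with the projection inequality of Proposition~\ref{prop:proj_norm_ineq}.

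For Inequality~\eqref{eq:projIneq1}, I would first let $s$ be the smallest positive integer for which $(m-s)(n-s) \leq r-1$. Theorem~\ref{thm:CMW08} says that any subspace on which every vector has Schmidt rank at least $s+1$ has dimension at most $(m-s)(n-s) < r = {\rm dim}({\rm Range}(P))$, so ${\rm Range}(P)$ must contain a unit vector $\ket{v}$ with $SR(\ket{v}) \leq s$. By Proposition~\ref{prop:orth_proj_norm} we have $\big\|P\big\|_{S(k)} \geq \big\|\ket{v}\big\|_{s(k)}^2$. If $k \geq s$, then $SR(\ket{v}) \leq k$ forces $\big\|\ket{v}\big\|_{s(k)} = 1$, hence $\big\|P\big\|_{S(k)} \geq 1$; if $k < s$, then applying Corollary~\ref{cor:vecEquiv} (with the roles of $h$ and $k$ there played by $k$ and $s$) gives $1 = \big\|\ket{v}\big\|_{s(s)} \leq \sqrt{s/k}\,\big\|\ket{v}\big\|_{s(k)}$, so $\big\|\ket{v}\big\|_{s(k)}^2 \geq k/s$. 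In both cases $\big\|P\big\|_{S(k)} \geq \min\{1,k/s\}$. It then remains to identify $s$ with the stated ceiling: solving $(m-s)(n-s) = r-1$, i.e.\ $s^2 - (m+n)s + (mn - r + 1) = 0$, the smaller root is $\tfrac{1}{2}\big((m+n) - \sqrt{(m-n)^2 + 4r - 4}\big)$, and since $(m-s)(n-s)$ is decreasing in $s$ for $s < (m+n)/2$, the least integer $s$ meeting the threshold $r-1$ is exactly $\big\lceil \tfrac{1}{2}(n+m - \sqrt{(n-m)^2 + 4r - 4}) \big\rceil$.

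For Inequality~\eqref{eq:projIneq2}, I would apply Theorem~\ref{thm:s1_norm_trace} to $X = P$. Since $P$ is an orthogonal projection of rank $r$ we have $\Tr(P) = \Tr(P^2) = r$, and substituting these values yields
\begin{align*}
	\big\|P\big\|_{S(1)} \geq \frac{1}{mn}\left(r + \sqrt{\frac{mnr - r^2}{mn-1}}\right).
\end{align*}
Writing $d := \min\{m,n\}$ and taking $h = 1$ in Proposition~\ref{prop:proj_norm_ineq} gives $\big\|P\big\|_{S(k)} \geq \big\|P\big\|_{S(1)} + \tfrac{k-1}{d-1}\big(1 - \big\|P\big\|_{S(1)}\big)$, which rearranges to $\big\|P\big\|_{S(k)} \geq \tfrac{d-k}{d-1}\big\|P\big\|_{S(1)} + \tfrac{k-1}{d-1}$. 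Because $k \leq d$, the coefficient $\tfrac{d-k}{d-1}$ is nonnegative, so I may substitute the lower bound on $\big\|P\big\|_{S(1)}$ above without reversing the inequality; this produces exactly the right-hand side of~\eqref{eq:projIneq2}.

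The main obstacle I anticipate is the bookkeeping in Inequality~\eqref{eq:projIneq1}: some care is needed to confirm that the least integer $s$ satisfying $(m-s)(n-s) \leq r-1$ coincides with the stated ceiling expression, and to check the boundary cases (in particular $r = 1$, where the bound should force $s = \min\{m,n\}$) so that Theorem~\ref{thm:CMW08} applies and the low-Schmidt-rank vector $\ket{v}$ genuinely exists with $s \leq \min\{m,n\}$. The remaining manipulations are direct substitutions into results already established earlier in the excerpt.
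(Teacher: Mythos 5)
Your proof is correct and follows essentially the same route as the paper: Inequality~\eqref{eq:projIneq1} comes from Theorem~\ref{thm:CMW08} forcing a low-Schmidt-rank vector into ${\rm Range}(P)$ together with the $k/s$ equivalence factor (the paper packages this as $\|P\|_{S(p)}=1$ plus Theorem~\ref{thm:MatrixEquiv01} rather than Proposition~\ref{prop:orth_proj_norm} plus Corollary~\ref{cor:vecEquiv}, but the content is identical), and Inequality~\eqref{eq:projIneq2} is exactly the paper's combination of Theorem~\ref{thm:s1_norm_trace} with the $h=1$ case of Proposition~\ref{prop:proj_norm_ineq}. Your verification that the least integer $s$ with $(m-s)(n-s)\leq r-1$ equals the stated ceiling, and that $s\leq\min\{m,n\}$ always holds, is also correct.
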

\begin{proof}
  To prove Inequality~\eqref{eq:projIneq1}, let $k \leq p \leq \min\{m,n\}$ and notice that Theorem~\ref{thm:CMW08} implies that $\big\| P \big\|_{S(p)} = 1$ whenever ${\rm rank}(P) \geq (n-p)(m-p) + 1$. Solving this inequality for $p$ gives
  \[
    p \geq \frac{1}{2}\Big( n + m - \sqrt{(n-m)^2 + 4{\rm rank}(P) - 4} \Big).
  \]
Thus, choose $p = \max\Big\{ k, \Big\lceil \frac{1}{2}\big( n + m - \sqrt{(n-m)^2 + 4{\rm rank}(P) - 4} \big) \Big\rceil \Big\}$. Then using Theorem~\ref{thm:MatrixEquiv01} shows
  \[
    \big\| P \big\|_{S(k)} \geq \frac{k}{\big\lceil \frac{1}{2}\big( n + m - \sqrt{(n-m)^2 + 4{\rm rank}(P) - 4} \big) \big\rceil}.
  \]

  To show Inequality~\eqref{eq:projIneq2} holds, we first note that the $k = 1$ case follows immediately from Theorem~\ref{thm:s1_norm_trace}. For the $k > 1$ case, use Proposition~\ref{prop:proj_norm_ineq} with $h = 1$.
\end{proof}

Proposition~\ref{prop:proj_norm_ineq} and Theorem~\ref{thm:mainProjRes} are particularly important because we will see that several important problems in quantum information theory could be answered if we were able to compute, or bound tightly, the $S(k)$-norms of projections. Inequality~\eqref{eq:projIneq1} provides the best lower bound we have when ${\rm rank}(P)$ is small or large (e.g., ${\rm rank}(P) \leq m$ or ${\rm rank}(P) \geq (n-1)(m-1)$), but Inequality~\eqref{eq:projIneq2} is much tighter for moderate-rank projections (e.g., when ${\rm rank}(P) \approx \frac{mn}{2}$).

The two special cases of $k = \min\{m,n\}$ and $k = 1$ of Inequality~\eqref{eq:projIneq2} give lower bounds of $1$ and $\frac{1}{mn}\left( r + \sqrt{\frac{mnr - r^2}{mn-1}} \right)$, respectively -- the remaining lower bounds are just the linear interpolation of these two extremal cases. The bounds provided by Inequality~\eqref{eq:projIneq1} and Inequality~\eqref{eq:projIneq2} will be used in Sections~\ref{sec:op_sk_norm_block_pos}. See Figure~\ref{fig:proj_norm_compare} for a more detailed comparison of these inequalities.
\begin{figure}[ht]
\begin{center}
\includegraphics[width=\textwidth]{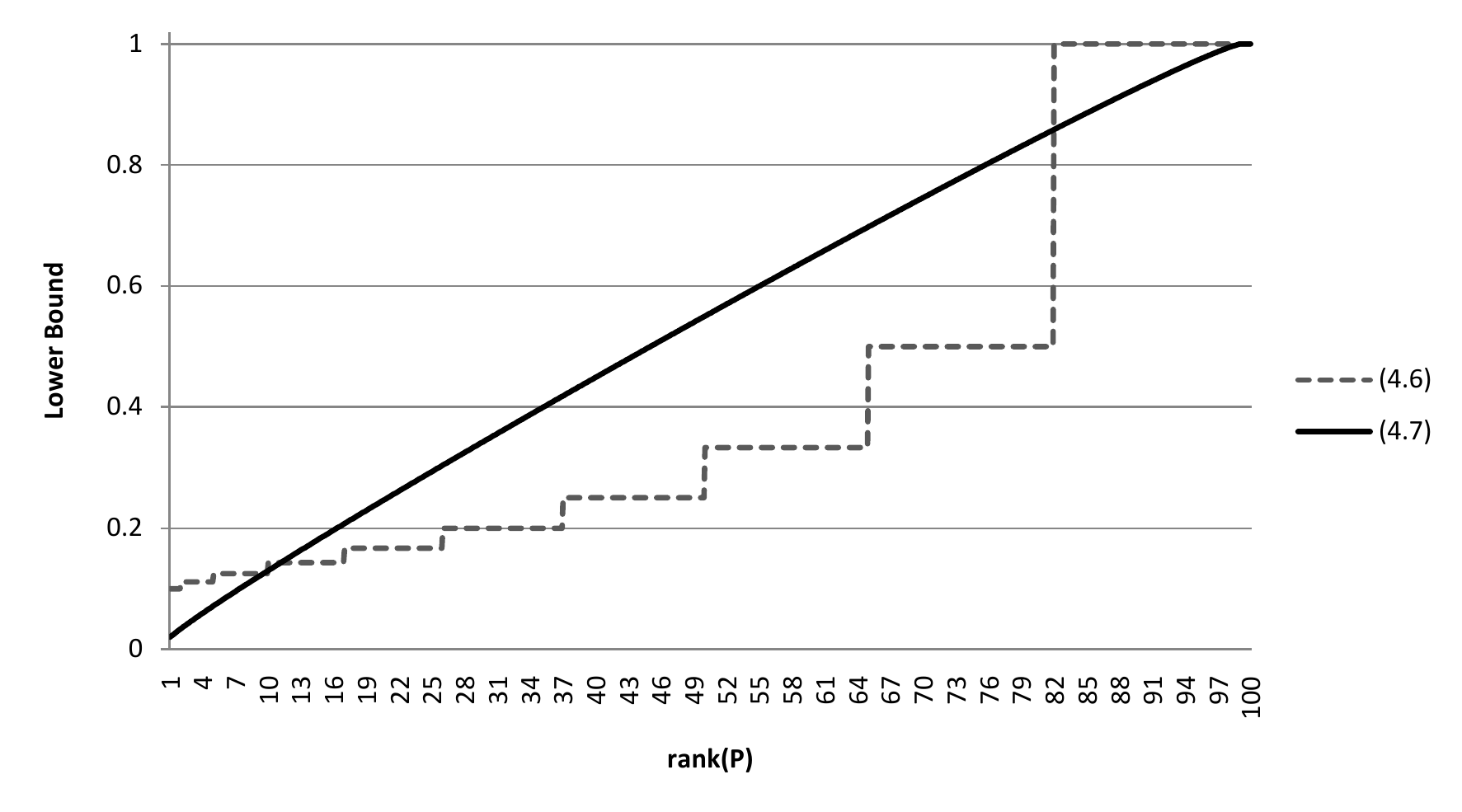}
\end{center}\vspace{-0.25in}
\caption[Comparison of lower bounds for the $S(1)$-norm of projections]{\hsp A comparison of the lower bounds for $\big\|P\big\|_{S(1)}$, where $P$ is an orthogonal projection, provided by Inequalities~\eqref{eq:projIneq1} and~\eqref{eq:projIneq2} in the $m = n = 10$ case. Inequality~\eqref{eq:projIneq1} provides a better lower bound when ${\rm rank}(P)$ is high or low, but Inequality~\eqref{eq:projIneq2} provides a better bound for moderate-rank projections.}\label{fig:proj_norm_compare}
\end{figure}

	We close this section with an inequality that demonstrates how the $S(k)$-norms behave on ``typical'' projections. The proof relies on some methods of convex geometry, and in particular we will make use of Dvoretzky's theorem \cite{M71} and techniques presented in \cite{ASW10}.

\begin{thm}[Dvoretzky]\label{thm:dvoretzky}Let $\mnorm{\cdot}$ be a norm on $\bb{C}^n$ and suppose $b > 0$ is such that $\mnorm{\cdot} \leq b\|\cdot\|$, where $\|\cdot\|$ is the Euclidean norm. Denote $M := \bb{E}\mnorm{X}$, the expectation of $\mnorm{X}$, where $X$ is a random variable uniformly distributed on the unit sphere. Let $\epsilon > 0$ and let $m \leq c\epsilon^2(M/b)^2n$, where $c > 0$ is an appropriate universal constant. Then, for most $m$-dimensional subspaces $E$ (in the sense of the invariant measure on the corresponding Grassmannian) we have
\begin{align*}
	(1 - \epsilon)M\|x\| \leq \mnorm{x} \leq (1 + \epsilon)M\|x\| \quad \forall \, x \in E.
\end{align*}
\end{thm}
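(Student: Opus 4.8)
The statement is the classical Dvoretzky--Milman theorem, so the plan is not to reprove it from scratch but to indicate the standard concentration-of-measure argument behind it; in the thesis this is quoted and attributed to \cite{M71}, and then applied to $\|\cdot\|_{S(k)}$. First I would record the only structural input needed about $\mnorm{\cdot}$: for any $x,y$ on the unit sphere, $\big|\mnorm{x} - \mnorm{y}\big| \leq \mnorm{x-y} \leq b\|x-y\|$, so $\mnorm{\cdot}$ is $b$-Lipschitz with respect to the Euclidean metric. Everything else in the proof treats $\mnorm{\cdot}$ as an abstract Lipschitz function on the sphere.

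Next I would invoke Lévy's isoperimetric concentration inequality on the unit sphere of $\bb{C}^n$ (equivalently the real sphere of dimension $2n-1$): a $b$-Lipschitz function concentrates about its mean $M := \bb{E}\mnorm{X}$ with Gaussian tails whose width scales like $b/\sqrt{n}$. Concretely, at a fixed point one has $\Pr\big(|\mnorm{X} - M| > t\big) \leq C\exp(-c' n t^2 / b^2)$ for universal constants, and it is exactly this estimate that forces the ratio $(M/b)^2 n$ to appear in the admissible dimension.

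The heart of the argument is the net-plus-union-bound step. Fix $\epsilon>0$ and choose an $\epsilon$-net $\cl{N}$ of the unit sphere of a random $m$-dimensional subspace $E$, which may be taken of cardinality at most $(3/\epsilon)^{2m}$. Applying the concentration bound at each point of $\cl{N}$ with $t = \epsilon M/2$ and taking a union bound, the probability that $\mnorm{\cdot}$ strays from $M$ by more than $\epsilon M/2$ somewhere on $\cl{N}$ is at most $(3/\epsilon)^{2m} C\exp(-c' n \epsilon^2 M^2 / 4 b^2)$. This quantity is small precisely when $m \leq c\epsilon^2 (M/b)^2 n$ for a suitable universal constant $c$, which is the dimension hypothesis of the statement. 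A standard successive-approximation (telescoping) argument then upgrades the control on $\cl{N}$ to the two-sided bound $(1 - \epsilon)M\|x\| \leq \mnorm{x} \leq (1 + \epsilon)M\|x\|$ for all $x \in E$, the Lipschitz estimate being used to absorb the net error, and the randomness over subspaces is transferred from the sphere to the corresponding Grassmannian via its invariant measure so that \emph{most} $E$ inherit the uniform estimate.

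The main obstacle is obtaining the sharp dependence of the admissible $m$ on $(M/b)^2 n$: this is where the Gaussian concentration width and the exponential net cardinality must be balanced against one another, and where the universal constant $c$ is pinned down. Since all of these ingredients are classical, the thesis presents only the cited Theorem~\ref{thm:dvoretzky}; the genuinely new work lies in the subsequent application, where one estimates the parameters $b$ and $M$ for the norm $\|\cdot\|_{S(k)}$ (using the inequalities established earlier in this section, e.g.\ Theorem~\ref{thm:MatrixEquiv01} and Theorem~\ref{thm:s1_norm_trace}) in order to read off the behaviour of $\big\|P\big\|_{S(k)}$ on typical projections.
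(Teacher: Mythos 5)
The paper states this result without proof, quoting it directly from Milman \cite{M71} as a classical theorem, and your proposal correctly recognizes this while giving an accurate sketch of the standard argument (the $b$-Lipschitz bound from the triangle inequality, L\'{e}vy concentration on the sphere, an $\epsilon$-net of cardinality exponential in $m$, a union bound that balances against the Gaussian tail to produce the condition $m \leq c\epsilon^2(M/b)^2 n$, and a successive-approximation step to pass from the net to all of $E$). Since the thesis itself offers no proof to compare against, there is nothing further to reconcile; your outline is consistent with the cited source and with how the theorem is subsequently applied to $\|\cdot\|_{S(k)}$ in Theorem~\ref{thm:genProjections}.
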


Using the above version of Dvoretzky's theorem we can prove the following result, which says that the $S(k)$-norm of orthogonal projections of low rank in $M_n \otimes M_n$ is typically near $k/n$. This result was used in \cite{JK11} to resolve a conjecture of Brand{\~ a}o \cite{B09} in the negative, via an argument of Stanislaw Szarek.

\begin{thm}\label{thm:genProjections}
	There exists a universal constant $C$, independent of $n$ and $k$, such that for most projections $P \in M_n \otimes M_n$ with rank$(P) \leq kn$, we have
	\begin{align*}
		\frac{k}{n} \leq \|P\|_{S(k)} \leq C\frac{k}{n}.
	\end{align*}
\end{thm}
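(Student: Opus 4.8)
The plan is to treat the two inequalities completely separately, since only the upper bound is probabilistic. The lower bound $k/n \le \|P\|_{S(k)}$ holds for \emph{every} rank, and follows at once from Theorem~\ref{thm:MatrixEquiv01}: taking $h=k$ and the larger index equal to $\min\{m,n\}=n$ gives $\|P\|_{S(n)} \le \tfrac{n}{k}\|P\|_{S(k)}$, and since $\|P\|_{S(n)} = \|P\| = 1$ for a projection, this rearranges to $\|P\|_{S(k)} \ge \tfrac{k}{n}$. All of the real work is in the upper bound.

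For the upper bound I would first use Proposition~\ref{prop:orth_proj_norm} to convert the operator-norm problem into a vector-norm problem on the range of $P$. Writing $E := \mathrm{Range}(P)$, that proposition gives
\[
	\|P\|_{S(k)} = \sup\big\{ \big\|\ket{v}\big\|_{s(k)}^2 : \ket{v} \in E,\ \big\|\ket{v}\big\| = 1 \big\}.
\]
This is exactly the quantity controlled by Dvoretzky's theorem (Theorem~\ref{thm:dvoretzky}) applied to the ambient space $\bb{C}^m \otimes \bb{C}^n = \bb{C}^{n^2}$ (so the ambient dimension is $N = n^2$) with the norm $\mnorm{\cdot} = \|\cdot\|_{s(k)}$. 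Because $\big\|\ket{v}\big\|_{s(k)} \le \big\|\ket{v}\big\|$ we may take the domination constant $b = 1$. Dvoretzky then says that for most subspaces $E$ of dimension $d \le c\epsilon^2 M^2 n^2$, one has $\big\|\ket{v}\big\|_{s(k)} \le (1+\epsilon)M\big\|\ket{v}\big\|$ for all $\ket{v}\in E$, where $M := \bb{E}\|X\|_{s(k)}$ is the expected $s(k)$-norm of a Haar-random unit vector $X \in \bb{C}^{n^2}$. Substituting into the display above yields $\|P\|_{S(k)} \le (1+\epsilon)^2 M^2$ for most such $P$, so everything reduces to estimating $M$.

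The crux of the argument, and the step I expect to be the main obstacle, is showing $M \asymp \sqrt{k/n}$. By Theorem~\ref{thm:sk_vector_norm} and Corollary~\ref{cor:sk_norm_ky_fan}, $\|X\|_{s(k)}^2 = \sum_{i=1}^k \alpha_i^2 = \big\|\Tr_1(\ketbra{X}{X})\big\|_{(k)}$ is the sum of the $k$ largest eigenvalues of the reduced density matrix $\rho := \Tr_1(\ketbra{X}{X})$. For a Haar-random pure state $\rho$ is a normalized complex Wishart matrix whose spectrum, rescaled by $n$, follows the Marchenko--Pastur law on $[0,4]$; each of its eigenvalues is of order $1/n$, so $\bb{E}\big\|\rho\big\|_{(k)} \asymp k/n$. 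Jensen's inequality then gives the upper estimate $M = \bb{E}\sqrt{\|\rho\|_{(k)}} \le \sqrt{\bb{E}\|\rho\|_{(k)}} \le C'\sqrt{k/n}$, while a matching lower estimate $M \ge c'\sqrt{k/n}$ comes from concentration of $\|\rho\|_{(k)}$ about its mean; these random-matrix facts are the technical input borrowed from the techniques of \cite{ASW10}.

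Finally I would check that the Dvoretzky dimension budget really accommodates rank $kn$. Using $M \ge c'\sqrt{k/n}$, the constraint $d \le c\epsilon^2 M^2 n^2$ becomes $d \lesssim c(c')^2\epsilon^2\, kn$; since we need only the \emph{upper} inequality of Dvoretzky, $\epsilon$ may be taken to be a large fixed constant (making $(1-\epsilon)$ irrelevant), chosen so that $c(c')^2\epsilon^2 \ge 1$ and hence every $d \le kn$ is permitted. This is precisely why the threshold $kn$ is the right scale in the statement. Applying the estimate at each rank $r \le kn$ (the value of $M$ does not depend on $r$) then shows that for most projections $P$ of rank at most $kn$,
\[
	\|P\|_{S(k)} \le (1+\epsilon)^2 M^2 \le (1+\epsilon)^2 (C')^2\,\frac{k}{n} =: C\,\frac{k}{n},
\]
with $C$ a universal constant independent of $n$ and $k$, completing the proof.
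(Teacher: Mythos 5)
Your proposal is correct and follows essentially the same route as the paper: the lower bound via Theorem~\ref{thm:MatrixEquiv01}, and the upper bound by converting to the $s(k)$-vector norm on the range via Proposition~\ref{prop:orth_proj_norm} and then invoking Dvoretzky's theorem with $b=1$ in ambient dimension $n^2$. The only real divergence is in estimating $M$: the paper uses the crude bound $\sqrt{\sum_{i=1}^k s_i^2(A)} \le \sqrt{k}\,s_1(A)$ together with $\bb{E}\|A\| \le C_0/\sqrt{n}$ from \cite{ASW10}, whereas you pass through the Marchenko--Pastur description of the reduced density matrix and Jensen's inequality; both routes rest on the same random-matrix input (control of the top singular value) and yield $M \lesssim \sqrt{k/n}$. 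One small point in your favour: you correctly note that the Dvoretzky dimension budget $c\epsilon^2 M^2 n^2 \ge kn$ requires a \emph{lower} bound on $M$, which the paper's write-up glosses over by substituting the upper estimate; note also that the lower bound $M \ge \sqrt{k/n}$ is already deterministic from Corollary~\ref{cor:vecEquiv}, so no concentration argument is needed there.
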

\begin{proof}
	The left inequality is true for all projections simply by Theorem~\ref{thm:MatrixEquiv01}. We will prove the right inequality by making use of Theorem~\ref{thm:dvoretzky}. Let $P \in M_n \otimes M_n$ be an orthogonal projection and use Proposition~\ref{prop:orth_proj_norm} and Theorem~\ref{thm:sk_vector_norm} to write
	\begin{align}\label{eq:Pnorm}
		\sqrt{\|P\|_{S(k)}} = \sup_{\ket{v} \in Range(P)}\Big\{ \sqrt{\sum_{i=1}^k \alpha_i^2} : \{ \alpha_i \} \text{ are the Schmidt coefficients of $\ket{v}$}\Big\}.
	\end{align}
	But now by associating $\bb{C}^n \otimes \bb{C}^n$ with $M_n$, the quantity~\eqref{eq:Pnorm} equals
	\begin{align}\label{eq:Anorm}
		\sup_{A \in R}\Big\{ \sqrt{\sum_{i=1}^k s_i^2(A)} : \|A\|_F = 1, s_1(A) \geq \cdots \geq s_n(A) \geq 0 \text{ are singular values of $A$} \Big\},
	\end{align}
	
	\noindent where $R$ is the subspace of $M_n$ associated with the range of $P$ through the standard bipartite vector to operator isomorphism. So now the goal is to show that there exists a constant $C$ such that $\sqrt{\sum_{i=1}^k s_i^2(A)} \leq C\sqrt{k/n}\|A\|_F$ for $A$ in general subspaces $R$ of dimension $kn$. To this end, we need to bound the constants $b$ and $M$ of Dvoretzky's theorem. It is trivial to see that $\sqrt{\sum_{i=1}^k s_i^2(A)} \leq \|A\|_F$ and that equality is attained for some operators $A$, so $b = 1$.
	
	To upper-bound $M$, recall from \cite{ASW10} that the expectation of the operator norm, $\bb{E}\|A\|$, is upper-bounded by $\frac{C_0}{\sqrt{n}}$ for some absolute constant $C_0$. Thus
	\begin{align*}
		M := \bb{E}\left(\sqrt{\sum_{i=1}^k s_i^2(A)}\right) \leq \bb{E}(\sqrt{k}s_1(A)) = \sqrt{k}\bb{E}\|A\| \leq C_0 \sqrt{\frac{k}{n}}.
	\end{align*}
	It follows via Dvoretzky's theorem that there is a constant $c$ such that if we choose $\epsilon = 1/(C_0 \sqrt{c})$, then for general subspaces $R$ with $\dim(R) \leq c \epsilon^2 C_0^2kn = kn$, we have
	\begin{align*}
		\sqrt{\sum_{i=1}^k s_i^2(A)} \leq (1 + \epsilon)M\|A\|_F \leq (1 + \frac{1}{C_0 \sqrt{c}})C_0\sqrt{\frac{k}{n}}\|A\|_F.
	\end{align*}
\end{proof}

\subsection{Isometries}\label{sec:sk_matrix_isometries}

We will now characterize the isometries of the $S(k)$-operator norm. Recall that in the $k = \min\{m,n\}$ case, the $S(k)$-norm on $M_m \otimes M_n$ is simply the operator norm, and we recall from Section~\ref{sec:linear_preservers_isometries} that the isometries of the operator norms are the maps of the form $\Phi(X) = UXV$ or $\Phi(X) = UX^T V$, where $U$ and $V$ are unitary matrices. We will see shortly that the isometries of the other $S(k)$-norms are similar, but the unitaries $U$ and $V$ must both be of the local form~\eqref{eq:localU}.

Before proceeding, we will need two intermediate results.
\begin{lemma}\label{lem:sepLPPhelp}
	Let $p \geq 3$ and suppose $X_1, X_2, \ldots, X_p \in M_{n,m}$ are rank one. If $X_i + X_j$ is rank one for all $i \neq j$ then $X_1 + X_2 + \cdots + X_p$ is rank one.
\end{lemma}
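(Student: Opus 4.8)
The plan is to think of each rank one operator $X_i$ as an elementary tensor under the vector-operator isomorphism, so that the lemma becomes a statement about sums of separable vectors. Under the isomorphism of Section~\ref{sec:vector_operator_isomorphism}, a rank one operator $X_i = \ket{a_i}\overline{\bra{b_i}}$ corresponds to the separable vector $\ket{b_i}\otimes\ket{a_i} \in \bb{C}^m \otimes \bb{C}^n$ (I will just call it $\ket{v_i}$), and the condition ``$X_i + X_j$ is rank one'' translates into ``$\ket{v_i} + \ket{v_j}$ is separable.'' The goal then becomes: if $\ket{v_1}, \ldots, \ket{v_p}$ are separable and every pairwise sum $\ket{v_i} + \ket{v_j}$ is separable, then the total sum is separable. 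This is precisely the kind of statement Lemma~\ref{lem:multiHelper} is built to handle in the bipartite ($p=2$) case, so I expect to lean heavily on it.

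First I would invoke Lemma~\ref{lem:multiHelper} (in its bipartite form) for each pair: writing $\ket{v_i} = \ket{b_i}\otimes\ket{a_i}$, the separability of $\ket{v_i} + \ket{v_j}$ forces $\ket{b_i} \parallel \ket{b_j}$ or $\ket{a_i} \parallel \ket{a_j}$ for each pair $i \neq j$. The heart of the argument is a combinatorial/pigeonhole step: I want to show that either \emph{all} the $\ket{b_i}$ are mutually parallel, or \emph{all} the $\ket{a_i}$ are mutually parallel. Suppose not. Then there is a pair, say, on which the first factors differ ($\ket{b_1} \nparallel \ket{b_2}$), forcing $\ket{a_1} \parallel \ket{a_2}$; and another pair on which the second factors differ, forcing the first factors to agree there. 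With $p \geq 3$ I would use a three-index argument: take indices $1,2,3$ and show that if, say, $\ket{b_1} \nparallel \ket{b_2}$ (so $\ket{a_1}\parallel\ket{a_2}$), then comparing index $3$ against both $1$ and $2$ pins down whether the $a$'s or $b$'s must be the common-parallel family, and one cannot consistently have the two pairwise conditions realized by different factors across three elements. This is the step I expect to be the main obstacle, since it requires carefully ruling out the ``mixed'' configurations where some pairs agree on the first factor and others on the second.

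Once the pigeonhole step establishes (without loss of generality) that $\ket{a_1} \parallel \ket{a_2} \parallel \cdots \parallel \ket{a_p}$, the conclusion is immediate: I may write each $\ket{a_i} = c_i \ket{a}$ for a common unit vector $\ket{a}$ and scalars $c_i$, so that
\begin{align*}
	\sum_{i=1}^{p} \ket{v_i} = \sum_{i=1}^{p} c_i \ket{b_i}\otimes\ket{a} = \Big(\sum_{i=1}^{p} c_i \ket{b_i}\Big)\otimes\ket{a},
\end{align*}
which is manifestly separable. Translating back through the vector-operator isomorphism, $X_1 + \cdots + X_p$ is rank one. The symmetric case where the $\ket{b_i}$ are all parallel is handled identically by factoring out the common second tensor factor instead. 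I would present the pigeonhole step as the crux and keep the isomorphism bookkeeping light, citing Lemma~\ref{lem:multiHelper} for each pairwise reduction rather than re-deriving the rank-one criterion.
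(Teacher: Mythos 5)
Your proof is correct and follows essentially the same route as the paper's: the paper works directly with the rank-one operators $X_i = c_i\ketbra{a_i}{b_i}$, fixes the pair $(1,2)$ with (without loss of generality) $\ket{a_1}\parallel\ket{a_2}$, and then compares a third index against both $1$ and $2$ to conclude that either all the $\ket{a_i}$ or all the $\ket{b_i}$ are parallel --- exactly your pigeonhole dichotomy, just without the detour through the vector-operator isomorphism. The three-index step you flag as the crux does go through (if $\ket{a_1}\nparallel\ket{a_3}$ then both $\ket{b_1}\parallel\ket{b_3}$ and $\ket{b_2}\parallel\ket{b_3}$ are forced, collapsing all the $b$'s), and the extension to general $p$ is treated just as informally in the paper as in your sketch.
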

\begin{proof}
	Begin by writing
	\begin{align*}
		X_i = c_i \ketbra{a_i}{b_i} \quad \text{for some } c_i \in \bb{R}, \ket{a_i} \in \bb{C}^n, \ket{b_i} \in \bb{C}^m.
	\end{align*}
	If $X_1 + X_2$ is rank one then it follows that either $\ket{a_1} \parallel \ket{a_2}$ or $\ket{b_1} \parallel \ket{b_2}$, where we recall from Section~\ref{sec:mul_sep_preservers} that $\ket{a} \parallel \ket{b}$ means that $\ket{a}$ and $\ket{b}$ are linearly dependent. Assume without loss of generality that $\ket{a_1} \parallel \ket{a_2}$, and adjust $\ket{b_2}$ appropriately so that we can write $X_2 = c_2\ketbra{a_1}{b_2}$. Because $X_1 + X_3$ and $X_2 + X_3$ are also rank one, the following two statements are also true:
	\begin{itemize}
		\item $\ket{a_1} \parallel \ket{a_3}$ or $\ket{b_1} \parallel \ket{b_3}$,\\ \hspace*{0.52in} -- and --
		\item $\ket{a_1} \parallel \ket{a_3}$ or $\ket{b_2} \parallel \ket{b_3}$.
	\end{itemize}
	It follows that we have two possibilities: either $\ket{a_1} \parallel \ket{a_3}$ or $\ket{b_1} \parallel \ket{b_2} \parallel \ket{b_3}$. In either case, $X_1 + X_2 + X_3$ is rank one. The proof extends straightforwardly to more than three matrices.
\end{proof}

The following proposition is of independent interest as it can be thought of as a bipartite version of Proposition~\ref{prop:rankKpreserver} in the case of rank-$1$ operators.
\begin{prop}\label{prop:sepLPP}
	Let $k,m,n$ be positive integers such that $1 \leq k < \min\{m, n\}$ and let $\Phi : M_m \otimes M_n \rightarrow M_m \otimes M_n$ be an invertible linear map. Define $\cl{V} \subseteq M_m \otimes M_n$ to be the set of rank-$1$ operators whose row and column space both have Schmidt rank no greater than $k$:
	\begin{align*}
		\cl{V} := \big\{ c\ketbra{v}{w} \in M_m \otimes M_n : c \in \mathbb{R}, SR(\ket{v}),SR(\ket{w}) \leq k \big\}.
	\end{align*}
	Then $\Phi(\cl{V}) \subseteq \cl{V}$ if and only if $\Phi$ can be written as a composition of one or more of the following maps:
	\begin{enumerate}[(a)]
		\item $X \mapsto LXM$, where $L, M \in M_m \otimes M_n$ are invertible operators of the local form~\eqref{eq:localL},
		\item the transpose map $T$, and
		\item if $k = 1$, the partial transpose map $(id_m \otimes T)$.
	\end{enumerate}
\end{prop}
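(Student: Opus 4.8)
The plan is to dispatch the ``if'' direction by direct verification and to reduce the ``only if'' direction to the linear-preserver results already at hand, splitting into the cases $k\geq 2$ and $k=1$. For ``if'', I would check each generator separately: a local operator of the form~\eqref{eq:localL} sends $\ketbra{v}{w}$ to $\ketbra{Lv}{M^\dagger w}$, and $L,M$ carry $\cl{V}_k$ into itself by Theorem~\ref{thm:bipartite_schmidt_preserver}; the transpose sends $\ketbra{v}{w}$ to $\ketbra{\overline{w}}{\overline{v}}$, and complex conjugation preserves Schmidt rank; and when $k=1$ the partial transpose $id_m\otimes T$ sends a separable $\ketbra{a_1\otimes a_2}{b_1\otimes b_2}$ to the separable operator $\ketbra{a_1\otimes\overline{b_2}}{b_1\otimes\overline{a_2}}$. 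This last computation also explains the $k=1$ restriction on (c): $id_m\otimes T$ does not preserve rank one in general, so for $k\geq2$ it cannot keep an entangled element of $\cl{V}$ inside $\cl{V}$.

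For ``only if'' with $k\geq 2$, I would first upgrade the hypothesis $\Phi(\cl{V})\subseteq\cl{V}$ to the much stronger statement that $\Phi$ maps \emph{every} rank-one operator in $M_m\otimes M_n$ to a rank-one operator. Since $\cl{V}$ spans $M_m\otimes M_n$ and consists of rank-one operators, and since two elements $\ketbra{v_1}{w_1},\ketbra{v_2}{w_2}\in\cl{V}$ have a rank-one sum exactly when $\ket{v_1}\parallel\ket{v_2}$ or $\ket{w_1}\parallel\ket{w_2}$, invertibility forces $\Phi$ to respect this incidence relation, and Lemma~\ref{lem:sepLPPhelp} is the combinatorial engine that assembles pairwise rank-one-sum data into a coherent ``left-line/right-line'' structure and yields rank-one preservation globally. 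With that established, the classical rank-one preserver theorem (Proposition~\ref{prop:rankKpreserver} applied on $M_{mn}$) produces invertible $A,B$ with $\Phi(X)=AXB$ or $\Phi(X)=AX^TB$. Feeding elements of $\cl{V}$ back in, $\Phi(\ketbra{v}{w})=\ketbra{Av}{B^\dagger w}\in\cl{V}$ (with $\ket{v},\ket{w}$ replaced by their conjugates in the transpose case) forces $A\cl{V}_k\subseteq\cl{V}_k$ and $B^\dagger\cl{V}_k\subseteq\cl{V}_k$, so Theorem~\ref{thm:bipartite_schmidt_preserver} makes $A$ and $B^\dagger$ local of the form~\eqref{eq:localL}. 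This is exactly a composition of maps of types (a) and (b).

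For ``only if'' with $k=1$, I would instead recast the problem as multipartite separability preservation. Double vectorization identifies $M_m\otimes M_n$ with $\bb{C}^m\otimes\bb{C}^m\otimes\bb{C}^n\otimes\bb{C}^n$, under which a separable $\ketbra{a_1\otimes a_2}{b_1\otimes b_2}$ corresponds to the fully product vector $(\ket{a_1}\otimes\overline{\ket{b_1}})\otimes(\ket{a_2}\otimes\overline{\ket{b_2}})$; hence $\cl{V}$ becomes the set of scalar multiples of fully separable $4$-partite vectors, and $\Phi(\cl{V})\subseteq\cl{V}$ becomes the statement that the induced invertible map preserves $4$-partite separability. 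Theorem~\ref{thm:multipartite} then expresses this map as $S_\sigma(P_1\otimes P_2\otimes P_3\otimes P_4)$, where $\sigma$ may only interchange subsystems of equal dimension. When $m\neq n$ the four admissible permutations translate back through the isomorphism into the identity (form (a)), the partial transposes $T\otimes id_n$ and $id_m\otimes T$ (form (c), and its composition with (b)), and the full transpose (form (b)), while the $P_i$ reassemble into the operators $L,M$ of~\eqref{eq:localL}; when $m=n$ the additional cross-block swaps produce the swap operator $S$ appearing in~\eqref{eq:localL}.

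The hard part is the first substep of the $k\geq 2$ case: promoting preservation of the restricted family $\cl{V}$ to preservation of \emph{all} rank-one operators. The hypothesis only controls $\Phi$ on rank-one operators whose two defining vectors have Schmidt rank at most $k$, whereas Proposition~\ref{prop:rankKpreserver} needs global rank-one information, so the incidence analysis together with Lemma~\ref{lem:sepLPPhelp} must bridge a genuine gap. Particular care will be needed to manage the scalar ambiguity in each decomposition $c\,\ketbra{v'}{w'}$ and, crucially, to rule out ``mixed'' behaviour in which $\Phi$ would map some left lines to left lines and others to right lines; ensuring the same global choice (no swap versus swap of the two sides) is what ultimately forces the single $AXB$-or-$AX^TB$ alternative rather than a patchwork.
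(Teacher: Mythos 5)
Your proposal is correct and follows essentially the same route as the paper: the $k=1$ case via the quadripartite vectorization and Theorem~\ref{thm:multipartite}, and the $k\geq 2$ case by bootstrapping $\Phi(\cl{V})\subseteq\cl{V}$ up to global rank-one preservation using Lemma~\ref{lem:sepLPPhelp} (the paper does this in two stages, first freeing $\ket{v}$ while $SR(\ket{w})\leq k$, then freeing $\ket{w}$), followed by Proposition~\ref{prop:rankKpreserver} and Theorem~\ref{thm:bipartite_schmidt_preserver}. The ``hard part'' you flag is exactly the gap the paper closes with that two-stage application of Lemma~\ref{lem:sepLPPhelp}, and the ``mixed left-line/right-line'' worry is absorbed by Proposition~\ref{prop:rankKpreserver} once rank-one preservation is in hand.
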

\begin{proof}
	The ``if'' implication of the proposition is trivial, so we focus on the ``only if'' implication. Notice that there is an isomorphism between pure separable states $\ket{x_1} \otimes \ket{x_2} \otimes \ket{y_1} \otimes \ket{y_2} \in \bb{C}^m \otimes \bb{C}^m \otimes \bb{C}^n \otimes \bb{C}^n$ and rank one separable (not necessarily positive) operators $\ket{x_2}\overline{\bra{x_1}} \otimes \ket{y_2}\overline{\bra{y_1}} \in M_m \otimes M_n$. The $k = 1$ case of the result then follows by applying Theorem~\ref{thm:multipartite} and using this isomorphism -- the various swap operators $S_{\sigma}$ on $\bb{C}^m \otimes \bb{C}^m \otimes \bb{C}^n \otimes \bb{C}^n$ correspond on $M_m \otimes M_n$ to the transpose map, partial transpose map, and multiplication on the left and/or right by the swap operator $S$.

	For the case when $k \geq 2$, suppose $\Phi(\cl{V}) \subseteq \cl{V}$ where $\cl{V}$ is as defined in the statement of the theorem. We prove the following two claims:
	\begin{enumerate}[a)]
		\item $\Phi(\ketbra{v}{w})$ is rank one for all $\ket{v},\ket{w}$ with $SR(\ket{w}) \leq k$; and
		\item $\Phi(\ketbra{v}{w})$ is rank one for all $\ket{v},\ket{w}$.
	\end{enumerate}
	Once (b) is established we will know that $\Phi$ must map the set of rank one matrices into itself and so the result follows by Proposition~\ref{prop:rankKpreserver} and Theorem~\ref{thm:bipartite_schmidt_preserver}.
	
	We first prove (a). Take any arbitrary states $\ket{v},\ket{w}$ with $SR(\ket{w}) \leq k$. Write
	\begin{align*}
		\ket{v} = \sum_{i=1}^n \alpha_i \ket{v_i} \ \ \text{ with } \alpha_i \in \bb{C}, SR(\ket{v_i}) = 1 \ \ \forall \, i.
	\end{align*}
	For any $i \neq j$, $\alpha_i \ketbra{v_i}{w} + \alpha_j \ketbra{v_j}{w} \in \cl{V}$ and so $\Phi(\alpha_i \ketbra{v_i}{w}) + \Phi(\alpha_j \ketbra{v_j}{w}) \in \cl{V}$ as well and hence it must be rank one. It follows from Lemma~\ref{lem:sepLPPhelp} that
	\begin{align*}
		\Phi(\ketbra{v}{w}) = \Phi( \alpha_1 \ketbra{v_1}{w}) + \cdots + \Phi( \alpha_n \ketbra{v_n}{w})
	\end{align*}
	is rank one as well, which establishes (a). Now take any arbitrary states $\ket{v},\ket{w}$ (not necessarily with Schmidt rank at most $k$) and write
	\begin{align*}
		\ket{w} = \sum_{i=1}^n \beta_i \ket{w_i} \ \ \text{ with } \beta_i \in \bb{C}, SR(\ket{w_i}) = 1 \ \ \forall \, i.
	\end{align*}
	For any $i \neq j$, $\Phi\big(\ket{v}(\overline{\beta_i}\bra{w_i} + \overline{\beta_j}\bra{w_j})\big) = \Phi(\overline{\beta_i}\ketbra{v}{w_i}) + \Phi(\overline{\beta_j}\ketbra{v}{w_j})$ is rank one by (a). It follows from Lemma~\ref{lem:sepLPPhelp} that
	\begin{align*}
		\Phi(\ketbra{v}{w}) = \Phi( \overline{\beta_1} \ketbra{v}{w_1}) + \cdots + \Phi( \overline{\beta_n} \ketbra{v}{w_n})
	\end{align*}
	is rank one as well. Claim~(b) follows and the proof is complete.
\end{proof}

We are now in a position to prove the main result of this section.
\begin{thm}\label{thm:opNormIso}
	Let $1 \leq k < \min\{m,n\}$ and $\Phi : M_m \otimes M_n \rightarrow M_m \otimes M_n$. Then $\big\|\Phi(X)\big\|_{S(k)} = \big\|X\big\|_{S(k)}$ for all $X \in M_m \otimes M_n$ if and only if $\Phi$ can be written as a composition of one or more of the following maps:
	\begin{enumerate}[(a)]
		\item $X \mapsto UXV$, where $U$ and $V$ are unitaries of the local form~\eqref{eq:localU},
		\item the transpose map $T$, and
		\item if $k = 1$, the partial transpose map $(id_m \otimes T)$.
	\end{enumerate}
\end{thm}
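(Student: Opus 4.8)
The ``if'' direction is routine, so I would dispatch it quickly: each listed map preserves the $S(k)$-norm. Local unitaries and the swap preserve Schmidt rank, so conjugating by $U,V$ of the form~\eqref{eq:localU} merely relabels the vectors $\ket{v},\ket{w}$ over which the defining supremum runs; the full transpose satisfies $\bra{w}X^T\ket{v} = \bra{\overline{v}}X\ket{\overline{w}}$ and entrywise conjugation preserves Schmidt rank; and when $k=1$ the partial transpose sends separable $\ket{v},\ket{w}$ to separable vectors, so it too leaves the supremum unchanged.

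For the ``only if'' direction I would follow the strategy indicated after Theorem~\ref{thm:sk_vector_norm_isometries}, mimicking the proof of Theorem~\ref{thm:geometric_measure_entanglement}. The first step is to show that any $S(k)$-isometry $\Phi$ is automatically Frobenius-unitary. I would regard the collection of $S(k)$-isometries as a group $\cl{G}$ of linear operators on the Hilbert space $M_m\otimes M_n$ equipped with the Hilbert--Schmidt inner product; being the isometry group of a norm on a finite-dimensional space, it is compact, hence bounded. It contains the subgroup
\begin{align*}
	\cl{G}_S := \big\{ X \mapsto (U_1\otimes U_2)X(V_1\otimes V_2) : U_1,V_1 \in U(m),\ U_2,V_2 \in U(n) \big\},
\end{align*}
each element of which preserves the Hilbert--Schmidt inner product and is thus a unitary operator on $M_m\otimes M_n$. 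Since unitaries span $M_m$ and $M_n$, the operators $U_1\otimes U_2$ and $V_1\otimes V_2$ span $M_m\otimes M_n$, so $\cl{G}_S$ has the same span as $\{X\mapsto AXB : A,B\in M_m\otimes M_n\}$, which is the space of all superoperators on $M_m\otimes M_n$ by the generalized Choi--Kraus representation. Hence $\cl{G}_S$ is an irreducible subgroup of the unitary group, and Proposition~\ref{prop:subgroup_lift} forces $\cl{G}$ itself into the unitary group: every $S(k)$-isometry preserves the Frobenius norm, and in particular is invertible.

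The second step is to pin down a set that $\Phi$ must preserve. Because $\big\|X\big\|_{S(k)}\leq\big\|X\big\|\leq\big\|X\big\|_F$ always, while Proposition~\ref{prop:rankOneNorm} together with Theorem~\ref{thm:sk_vector_norm} and Corollary~\ref{cor:vecEquiv} give $\big\|\ketbra{x}{y}\big\|_{S(k)} = \big\|\ket{x}\big\|_{s(k)}\big\|\ket{y}\big\|_{s(k)} = \big\|\ket{x}\big\|\,\big\|\ket{y}\big\|$ exactly when $SR(\ket{x}),SR(\ket{y})\leq k$, the nonzero operators attaining $\big\|X\big\|_{S(k)} = \big\|X\big\|_F$ are precisely the elements of $\cl{V}$: equality of the two outer terms forces $X$ to be rank one, and the rank-one analysis then forces both Schmidt ranks to be at most $k$. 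Since $\Phi$ preserves both $\|\cdot\|_{S(k)}$ and $\|\cdot\|_F$, it preserves this extremal set, so $\Phi(\cl{V})\subseteq\cl{V}$. Invertibility then lets me invoke Proposition~\ref{prop:sepLPP}, concluding that $\Phi$ is a composition of maps $X\mapsto LXM$ with $L,M$ of the local form~\eqref{eq:localL}, the transpose, and (when $k=1$) the partial transpose.

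Finally I would upgrade~\eqref{eq:localL} to~\eqref{eq:localU}: the transpose and partial transpose are themselves Frobenius isometries, so composing them away from $\Phi$ shows that the remaining factor $X\mapsto LXM$ preserves the Frobenius norm; testing on rank-one inputs forces $L$ and $M$ to be unimodular-scalar multiples of unitaries of the local form, with the scalars cancelling, which yields unitaries $U,V$ of the form~\eqref{eq:localU}. I expect the main obstacle to be the first step --- rigorously setting up the compact group $\cl{G}$ on the correct $(mn)^2$-dimensional space and verifying that $\cl{G}_S$ is irreducible there --- since once Proposition~\ref{prop:sepLPP} and the norm-comparison characterization of $\cl{V}$ are in hand, the remaining deductions are essentially mechanical.
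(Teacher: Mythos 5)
Your proposal is correct and follows essentially the same route as the paper's proof: use Proposition~\ref{prop:subgroup_lift} with the irreducible local-unitary subgroup to force $\Phi$ into the Frobenius-unitary group, characterize the extremal set where $\|\cdot\|_{S(k)}$ meets $\|\cdot\|_F$ as the rank-one operators with both Schmidt ranks at most $k$, and then invoke Proposition~\ref{prop:sepLPP} before upgrading the invertible local factors to unitaries. The only cosmetic difference is that you spell out the final unitarity upgrade in slightly more detail than the paper does.
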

\begin{proof}
	Again, the ``if'' implication is trivial. For the ``only if'' implication, we first use Proposition~\ref{prop:subgroup_lift} along with the vector-operator isomorphism to show that any map that preserves the $S(k)$-norm also preserves the Frobenius norm $\|\cdot\|_{F}$. We then show that these maps must send rank $1$ operators to rank $1$ operators, and finally we use Theorem~\ref{thm:bipartite_schmidt_preserver} to pin down the result.
	
	We begin in much the same way as in the proof of Theorem~\ref{thm:geometric_measure_entanglement} by defining
	\begin{align*}
		\cl{G} := \big\{ \Phi : M_m \otimes M_n \rightarrow M_m \otimes M_n : \big\|\Phi(X)\big\|_{S(k)} = \big\|X\big\|_{S(k)} \text{ for all } X \big\}.
	\end{align*}
	Clearly $\cl{G}$ is bounded because it is the set of isometries under the norm $\|\cdot\|_{S(k)}$ and all norms on a finite-dimensional space are equivalent. Additionally, $\cl{G}$ contains the subgroup of unitary maps
	\begin{align*}
		\cl{G}_S := \big\{ \Phi \in \cl{G} : \Phi(X) = (U_1 \otimes U_2)X(V_1 \otimes V_2) \text{ for some unitaries } U_1,U_2,V_1,V_2 \big\}.
	\end{align*}
	To see that $\cl{G}_S$ is irreducible, recall that the recall that the unitary group $U(n)$ spans all of $M_n$, so if we fix $U_2,V_1,V_2$ then we can find maps in $\cl{G}_S$ that span the space of operators of the form
	\begin{align*}
		\Phi(X) = (A \otimes U_2)X(V_1 \otimes V_2) \text{ for some } A \in M_m \text{ and unitaries } U_2,V_1,V_2.
	\end{align*}
	Similarly, we can obtain any map of the form $\Phi(X) = (A \otimes B)X(C \otimes D)$ in the span of $\cl{G}_S$, where $A,B,C,D$ are arbitrary. Operators of the form $A \otimes B$ span all of $M_m \otimes M_n$, so the span of $\cl{G}_S$ actually contains all maps of the form $\Phi(X) = EXF$ and hence all maps of the form $\Phi(X) = \sum_iE_iXF_i$. Since all linear maps can be written in this form, it follows that $\cl{G}_S$ spans the entire space of linear maps and hence is irreducible. By Proposition~\ref{prop:subgroup_lift} and the vector-operator isomorphism it follows that $\cl{G}$ is contained in the unitary group and so if $\big\|\Phi(X)\big\|_{S(k)} = \big\|X\big\|_{S(k)}$ for all $X$, then $\big\|\Phi(X)\big\|_{F} = \big\|X\big\|_{F}$ for all $X$ as well.
	
	We will now consider how an isometry $\Phi$ of the $S(k)$-norm acts on rank-$1$ operators. In particular, let $\ket{v},\ket{w} \in \bb{C}^m \otimes \bb{C}^n$ with $SR(\ket{v}),SR(\ket{w}) \leq k$. Then
	\begin{align*}
		1 = \big\| \ketbra{v}{w} \big\|_{F} = \big\| \ketbra{v}{w} \big\|_{S(k)} = \big\| \Phi(\ketbra{v}{w}) \big\|_{F} = \big\| \Phi(\ketbra{v}{w}) \big\|_{S(k)}.
	\end{align*}
	However, $\big\|X\big\|_{S(k)} \leq \big\|X\big\| \leq \big\|X\big\|_{F}$ for all $X$ and $\big\|X\big\| = \big\|X\big\|_{F}$ if and only if $X$ has rank $1$. In this case, $\big\|X\big\|_{S(k)} = \big\|X\big\|$ if and only if there exist $\ket{x},\ket{y}$ with $SR(\ket{x}),SR(\ket{y}) \leq k$ such that $X = \ketbra{x}{y}$. Thus $\Phi(\ketbra{v}{w}) = \ketbra{x}{y}$ for some $\ket{x},\ket{y}$ with $SR(\ket{x}),SR(\ket{y}) \leq k$. Proposition~\ref{prop:sepLPP} then applies to $\Phi$ (the fact that $\Phi$ is invertible follows from it being an isometry). To finish the proof, simply note that $\Phi$ being unitary implies that the operators $L$ and $M$ of Proposition~\ref{prop:sepLPP} must be unitary.
\end{proof}

\subsection{Spectral Tests for Block Positivity}\label{sec:op_sk_norm_block_pos}

	In this section we derive a set of conditions for testing when a Hermitian operator is and is not $k$-block positive based on its eigenvalues and eigenvectors. Equivalently, we derive conditions for testing when a superoperator is $k$-positive based on its generalized Choi--Kraus operators. The tests derived here generalize several known tests for $k$-positivity.

	Throughout this section, if $X = X^\dagger$ then we will denote the positive eigenvalues of $X$ by $\{\lambda^{+}_i\}$ and the corresponding eigenvectors by $\{\ket{v^{+}_i}\}$. We will similarly denote the negative eigenvalues by $\{\lambda^{-}_i\}$ and the corresponding eigenvectors by $\{\ket{v^{-}_i}\}$, and the eigenvectors corresponding to the zero eigenvalues by $\{\ket{v^{0}_i}\}$. We also define $X^{+} := \sum_i \lambda^{+}_i \ketbra{v^{+}_i}{v^{+}_i} \geq 0$ and $X^{-} := \sum_i \lambda^{-}_i \ketbra{v^{-}_i}{v^{-}_i} \leq 0$ to be the positive and negative parts of $X$, respectively. Similarly, $P_X^{0} := \sum_i \ketbra{v^{0}_i}{v^{0}_i}$ and $P_X^{-} := \sum_i \ketbra{v^{-}_i}{v^{-}_i}$ denote the projections onto the nullspace and negative part of $X$, respectively.
\begin{thm}\label{thm:kposSpectral}
  Let $X = X^\dagger \in M_m \otimes M_n$.
  \begin{enumerate}[(a)]
    \item If $\big\| P_X^{-} \big\|_{S(k)} = 1$ then $X$ is not $k$-block positive.
    \item If $\big\| P_X^{0} + P_X^{-} \big\|_{S(k)} < 1$ and $\lambda_i^+ \geq \frac{\| X^{-} \|_{S(k)}}{1 - \| P_X^0 + P_X^{-} \|_{S(k)}}$ for all $i$, then $X$ is $k$-block positive.
    \item If $\big\| P_X^{-} \big\|_{S(k)} < 1$, all of the negative eigenvalues are equal, $X$ is nonsingular, and $\lambda_i^+ < \frac{\| X^{-} \|_{S(k)}}{1 - \| P_X^{-} \|_{S(k)}}$ for all $i$, then $X$ is not $k$-block positive.
    \end{enumerate}
\end{thm}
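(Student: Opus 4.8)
The plan is to work directly from the definition that a Hermitian $X$ is $k$-block positive precisely when $\bra{v}X\ket{v}\ge 0$ for every $\ket{v}$ with $SR(\ket{v})\le k$, and to exploit the decomposition $X=X^{+}+X^{-}$ together with three tools already in hand: Proposition~\ref{prop:MatMultDiffVectors} (for $Y\ge 0$, $\|Y\|_{S(k)}=\sup\{\bra{v}Y\ket{v}:SR(\ket{v})\le k\}$, the supremum being attained by compactness of the unit vectors of Schmidt rank at most $k$), Proposition~\ref{prop:orth_proj_norm} (relating the $S(k)$-norm of a projection to the $s(k)$-vector norm on its range), and Corollary~\ref{cor:vecEquiv} (a unit vector $\ket{w}$ has $\|\ket{w}\|_{s(k)}=1$ iff $SR(\ket{w})\le k$). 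Since $P_X^{+}+P_X^{0}+P_X^{-}=I$, I would repeatedly rewrite $\bra{v}P_X^{+}\ket{v}=1-\bra{v}(P_X^{0}+P_X^{-})\ket{v}$ and estimate the positive and negative parts of $X$ separately.

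For part (a), I would apply Proposition~\ref{prop:orth_proj_norm} to the projection $P_X^{-}$: the hypothesis $\|P_X^{-}\|_{S(k)}=1$ says $\sup\{\|\ket{v}\|_{s(k)}^{2}:\ket{v}\in\mathrm{Range}(P_X^{-})\}=1$, and by compactness this supremum is attained at some unit vector $\ket{w}\in\mathrm{Range}(P_X^{-})$. Then $\|\ket{w}\|_{s(k)}=1$, so $SR(\ket{w})\le k$ by Corollary~\ref{cor:vecEquiv}; writing $\ket{w}=\sum_i c_i\ket{v_i^{-}}$ gives $\bra{w}X\ket{w}=\sum_i|c_i|^{2}\lambda_i^{-}<0$, which exhibits a Schmidt-rank-$\le k$ vector on which $X$ is negative.

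For part (b), I would lower-bound $\bra{v}X\ket{v}$ for an arbitrary unit $\ket{v}$ with $SR(\ket{v})\le k$. The negative part satisfies $\bra{v}X^{-}\ket{v}\ge -\|X^{-}\|_{S(k)}$ since $-X^{-}\ge 0$, and the positive part satisfies $\bra{v}X^{+}\ket{v}\ge\lambda_{\min}^{+}\bra{v}P_X^{+}\ket{v}\ge\lambda_{\min}^{+}\big(1-\|P_X^{0}+P_X^{-}\|_{S(k)}\big)$, where $\lambda_{\min}^{+}=\min_i\lambda_i^{+}$. Note the hypothesis $\|P_X^{0}+P_X^{-}\|_{S(k)}<1$ forces $P_X^{+}\ne 0$ (else the norm would equal $\|I\|_{S(k)}=1$), so $\lambda_{\min}^{+}$ exists and the factor $1-\|P_X^{0}+P_X^{-}\|_{S(k)}$ is positive. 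Adding the two bounds gives $\bra{v}X\ket{v}\ge\lambda_{\min}^{+}\big(1-\|P_X^{0}+P_X^{-}\|_{S(k)}\big)-\|X^{-}\|_{S(k)}$, which is nonnegative exactly under the stated hypothesis $\lambda_i^{+}\ge\|X^{-}\|_{S(k)}/(1-\|P_X^{0}+P_X^{-}\|_{S(k)})$.

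For part (c), nonsingularity gives $P_X^{0}=0$ and hence $P_X^{+}+P_X^{-}=I$, while the assumption that all negative eigenvalues coincide, say equal to $-\mu$ with $\mu>0$, lets me write $X^{-}=-\mu P_X^{-}$ and therefore $\|X^{-}\|_{S(k)}=\mu\|P_X^{-}\|_{S(k)}$. Using Proposition~\ref{prop:MatMultDiffVectors} I would select a unit vector $\ket{v}$ with $SR(\ket{v})\le k$ attaining $\bra{v}P_X^{-}\ket{v}=\|P_X^{-}\|_{S(k)}=:s$, and estimate
\[
\bra{v}X\ket{v}=\bra{v}X^{+}\ket{v}-\mu\bra{v}P_X^{-}\ket{v}\le\lambda_{\max}^{+}\bra{v}P_X^{+}\ket{v}-\mu s=\lambda_{\max}^{+}(1-s)-\mu s,
\]
using $\bra{v}X^{+}\ket{v}\le\lambda_{\max}^{+}\bra{v}P_X^{+}\ket{v}$ and $\bra{v}P_X^{+}\ket{v}=1-s$. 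Since the hypothesis $\lambda_i^{+}<\mu s/(1-s)$ holds for all $i$, and hence for $\lambda_{\max}^{+}$, it is equivalent to $\lambda_{\max}^{+}(1-s)-\mu s<0$, so this $\ket{v}$ witnesses that $X$ is not $k$-block positive. The routine parts are the eigen-expansions and the sign bookkeeping; the one genuinely delicate point I would flag is the attainment of the suprema, so that the witnessing vectors in (a) and (c) truly exist, which rests on compactness of the set of unit vectors of Schmidt rank at most $k$ already invoked in Propositions~\ref{prop:MatMultDiffVectors} and~\ref{prop:orth_proj_norm}.
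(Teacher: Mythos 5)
Your proposal is correct and follows essentially the same route as the paper's proof: split $X$ into positive and negative parts, bound $\bra{v}X^{-}\ket{v}$ by $\|X^{-}\|_{S(k)}$ and $\bra{v}P_X^{+}\ket{v}$ by $1-\|P_X^{0}+P_X^{-}\|_{S(k)}$ for (b), and produce an explicit witnessing vector of Schmidt rank at most $k$ for (a) and (c) via attainment of the relevant suprema. Your justification of the witness in (a) through Proposition~\ref{prop:orth_proj_norm} and Corollary~\ref{cor:vecEquiv} is just a more explicit version of the step the paper states without comment, and your use of the maximizer of $\bra{v}P_X^{-}\ket{v}$ in (c) coincides with the paper's maximizer of $|\bra{v}X^{-}\ket{v}|$ because $X^{-}$ is a negative multiple of $P_X^{-}$ under the equal-eigenvalue hypothesis.
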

\begin{proof}
  To see statement (a), observe that there must be a vector $\ket{v} \in {\rm Range}(P_X^{-})$ such that $SR(\ket{v}) \leq k$. It follows that $\bra{v}X\ket{v} = \bra{v}X^{-}\ket{v} < 0$ and so $X$ is not $k$-block positive.

	To see statement (b), let $\ket{v}$ be such that $SR(\ket{v}) \leq k$ and define $\mu := \frac{\| X^{-} \|_{S(k)}}{1 - \| P_X^0 + P_X^{-} \|_{S(k)}}$. Then, using the spectral decomposition for $X^+$, the definition of the $S(k)$-operator norm, and the hypotheses of (b), we have
  \begin{align*}
    \bra{v}X\ket{v} & = \bra{v}X^{+}\ket{v} - \big|\bra{v}X^{-}\ket{v}\big| \\
    & \geq \sum_i \lambda_i^{+} |\braket{v}{v_i^{+}}|^2 - \big\|X^{-}\big\|_{S(k)} \\
    & \geq \mu \sum_i |\braket{v}{v_i^{+}}|^2 - \big\|X^{-}\big\|_{S(k)} \\
    & \geq \mu (1 - \| P_X^0 + P_X^{-} \|_{S(k)}) - \big\|X^{-}\big\|_{S(k)} \\
    & = 0,
  \end{align*}
  so $X$ is $k$-block positive.

  To see statement (c), observe that the set of unit vectors $\ket{v}$ with $SR(\ket{v}) \leq k$ is compact and so there exists a particular $\ket{v}$ with $SR(\ket{v}) \leq k$ such that $\big|\bra{v}X^{-}\ket{v}\big| = \big\|X^{-}\big\|_{S(k)}$. Define $\mu := \frac{\| X^{-} \|_{S(k)}}{1 - \| P_X^{-} \|_{S(k)}}$. Then similarly we have
  \begin{align*}
    \bra{v}X\ket{v} & = \bra{v}X^{+}\ket{v} - \big|\bra{v}X^{-}\ket{v}\big| \\
    & = \sum_i \lambda_i^{+} |\braket{v}{v_i^{+}}|^2 - \big\|X^{-}\big\|_{S(k)} \\
    & < \mu \sum_i |\braket{v}{v_i^{+}}|^2 - \big\|X^{-}\big\|_{S(k)} \\
    & = \mu (1 - \| P_X^{-} \|_{S(k)}) - \big\|X^{-}\big\|_{S(k)} \\
    & = 0,
  \end{align*}
	so $X$ is not $k$-block positive.
\end{proof}

On its face, Theorem~\ref{thm:kposSpectral} appears to be a very technical result that may not be of much use due to the difficulty of computing the $S(k)$-operator norms. However, it is not difficult to derive computable corollaries from it. In fact, it implies a wide array of previously-known and new tests for $k$-block positivity and $k$-positivity of linear maps. These consequences are presented below.

We first show that Theorem~\ref{thm:kposSpectral} implies the $k$-positivity results of \cite{CK09,CK11}:
\begin{cor}\label{cor:kPosSpectral_cor1}
	Let $\Phi : M_m \rightarrow M_n$ be a Hermiticity-preserving linear map represented via the canonical generalized Choi--Kraus representation $\Phi(X) = \sum_{i=1}^a \lambda_i^{+} A_i X A_i^\dagger + \sum_{i=1}^b \lambda_i^{-} B_i X B_i^\dagger$, with the set $\big\{A_1, \ldots, A_a, B_1, \ldots, B_b\big\}$ forming an orthonormal set in the Hilbert--Schmidt inner product, and $\lambda_i^{+} > 0$ and $\lambda_i^{-} < 0$ for all $i$. Furthermore, let $\{C_i\}$ be a set of operators that make $\big\{A_1, \ldots, A_a, B_1, \ldots, B_b, C_1, \ldots, C_{mn-a-b}\big\}$ a full orthonormal basis.
  \begin{enumerate}[(a)]
    \item Suppose that $\sum_i \big\|B_i\big\|_{(k,2)}^2 + \sum_i \big\|C_i\big\|_{(k,2)}^2 < 1$ and
    \begin{align*}
    	\lambda_j^+ \geq \frac{\sum_i \lambda_i^{-}\big\|B_i\big\|_{(k,2)}^2}{1 - \sum_i \big\|B_i\big\|_{(k,2)}^2 - \sum_i \big\|C_i\big\|_{(k,2)}^2} \text{ for all $j$}.
    \end{align*}
    Then $\Phi$ is $k$-positive.
    \item Suppose that $a = mn-1, b = 1, \big\|B_1\big\|_{(k,2)}^2 < 1$, and
    \begin{align*}
    	\lambda_j^+ < \frac{\lambda_1^{-}\big\|B_1\big\|_{(k,2)}^2}{1 - \big\|B_1\big\|_{(k,2)}^2} \text{ for all $j$}.
    \end{align*}
    Then $\Phi$ is not $k$-positive.
  \end{enumerate}
\end{cor}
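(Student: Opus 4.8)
\textbf{Proof proposal for Corollary~\ref{cor:kPosSpectral_cor1}.}

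The plan is to translate everything into the language of the Choi matrix $C_\Phi$ and apply Theorem~\ref{thm:kposSpectral} with $X = C_\Phi$, using the correspondence that $\Phi$ is $k$-positive if and only if $C_\Phi$ is $k$-block positive (established in Section~\ref{sec:choi_jamiolkowski_schmidt}). The key observation is that the canonical generalized Choi--Kraus representation given in the hypothesis is precisely the spectral decomposition of $C_\Phi$ transported through Proposition~\ref{prop:choi_kraus_ops_general}: if $\Phi(X) = \sum_i \lambda_i^+ A_i X A_i^\dagger + \sum_i \lambda_i^- B_i X B_i^\dagger$ with $\{A_i\} \cup \{B_i\}$ orthonormal in the Hilbert--Schmidt inner product, then setting $A_i = {\rm mat}(\ket{v_i^+})$ and $B_i = {\rm mat}(\ket{v_i^-})$ gives $C_\Phi = \sum_i \lambda_i^+ \ketbra{v_i^+}{v_i^+} + \sum_i \lambda_i^- \ketbra{v_i^-}{v_i^-}$. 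Orthonormality of the operators corresponds exactly to orthonormality of the eigenvectors, so this is a genuine spectral decomposition with $\lambda_i^+ > 0$ and $\lambda_i^- < 0$, and $C_\Phi^+$, $C_\Phi^-$, $P_{C_\Phi}^0$, $P_{C_\Phi}^-$ are identified accordingly.

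First I would record the crucial dictionary entry relating the relevant $S(k)$-norms to the $(k,2)$-operator norms of the Kraus operators. Since each $B_i = {\rm mat}(\ket{v_i^-})$ and, by the remark following Theorem~\ref{thm:sk_vector_norm}, $\big\|\ket{v_i^-}\big\|_{s(k)} = \big\|{\rm mat}(\ket{v_i^-})\big\|_{(k,2)} = \big\|B_i\big\|_{(k,2)}$, I would bound the norm of the projection $P_{C_\Phi}^-$ onto the negative part using Proposition~\ref{prop:matrixVectorNorms}: because $P_{C_\Phi}^- = \sum_i \ketbra{v_i^-}{v_i^-}$ has all eigenvalues equal to $1$, that proposition gives $\big\|P_{C_\Phi}^-\big\|_{S(k)} \leq \sum_i \big\|\ket{v_i^-}\big\|_{s(k)}^2 = \sum_i \big\|B_i\big\|_{(k,2)}^2$. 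The identical argument applied to $P_{C_\Phi}^0 = \sum_i \ketbra{v_i^0}{v_i^0}$ (whose range is spanned by the $\ket{v_i^0}$, i.e.\ by the $C_i$) yields $\big\|P_{C_\Phi}^0 + P_{C_\Phi}^-\big\|_{S(k)} \leq \sum_i \big\|B_i\big\|_{(k,2)}^2 + \sum_i \big\|C_i\big\|_{(k,2)}^2$, and similarly $\big\|C_\Phi^-\big\|_{S(k)} = \big\|\sum_i \lambda_i^- \ketbra{v_i^-}{v_i^-}\big\|_{S(k)} \leq \sum_i |\lambda_i^-|\,\big\|B_i\big\|_{(k,2)}^2$.

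For part (a), I would feed these bounds into condition (b) of Theorem~\ref{thm:kposSpectral}. The hypothesis $\sum_i \big\|B_i\big\|_{(k,2)}^2 + \sum_i \big\|C_i\big\|_{(k,2)}^2 < 1$ gives $\big\|P_{C_\Phi}^0 + P_{C_\Phi}^-\big\|_{S(k)} < 1$, and the stated lower bound on each $\lambda_j^+$ is chosen exactly so that $\lambda_j^+ \geq \big\|C_\Phi^-\big\|_{S(k)} / (1 - \big\|P_{C_\Phi}^0 + P_{C_\Phi}^-\big\|_{S(k)})$ follows once the norm upper bounds are inserted (here one must check the direction of the inequalities is preserved, which it is since larger denominators and smaller numerators only make the required threshold smaller). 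Theorem~\ref{thm:kposSpectral}(b) then certifies $k$-block positivity of $C_\Phi$, hence $k$-positivity of $\Phi$. For part (b), where $a = mn-1$, $b = 1$, the operator $C_\Phi$ is nonsingular with a single negative eigenvalue, so all negative eigenvalues are trivially equal and condition (c) of Theorem~\ref{thm:kposSpectral} applies; here the single-term sums collapse and one needs the matching \emph{lower} bound $\big\|P_{C_\Phi}^-\big\|_{S(k)} \geq \big\|B_1\big\|_{(k,2)}^2$ and $\big\|C_\Phi^-\big\|_{S(k)} \geq |\lambda_1^-|\,\big\|B_1\big\|_{(k,2)}^2$, which for a rank-one projection hold with equality via Proposition~\ref{prop:rankOneNorm}, so that the strict inequality on $\lambda_j^+$ transfers cleanly.

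The main obstacle I anticipate is bookkeeping the direction of the inequalities in part (a): the corollary uses \emph{upper} bounds on the projection norms (from Proposition~\ref{prop:matrixVectorNorms}), and I must verify that replacing the true $S(k)$-norms by these upper bounds still yields a sufficient condition rather than merely a necessary one — i.e.\ that the hypothesis with the coarser quantities implies the hypothesis of Theorem~\ref{thm:kposSpectral}(b) with the exact quantities. This requires that both $1 - \big\|P_{C_\Phi}^0 + P_{C_\Phi}^-\big\|_{S(k)}$ stays positive and that the fraction defining the threshold only decreases under these substitutions, which is where the precise form of the stated bound must be checked term by term. In part (b), the subtlety is instead the need for norm \emph{lower} bounds and the exactness afforded by rank-one operators, so I would be careful to invoke Proposition~\ref{prop:rankOneNorm} (giving $\big\|\ketbra{v_1^-}{v_1^-}\big\|_{S(k)} = \big\|\ket{v_1^-}\big\|_{s(k)}^2 = \big\|B_1\big\|_{(k,2)}^2$) rather than the inequality from Proposition~\ref{prop:matrixVectorNorms}.
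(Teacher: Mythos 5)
Your proposal is correct and follows essentially the same route as the paper: pass to the Choi matrix, identify the canonical Kraus operators with the eigenvectors so that $\big\|B_i\big\|_{(k,2)} = \big\|\ket{v_i^-}\big\|_{s(k)}$, bound the relevant $S(k)$-norms via Proposition~\ref{prop:matrixVectorNorms}, and invoke conditions (b) and (c) of Theorem~\ref{thm:kposSpectral} for parts (a) and (b) respectively. If anything, you are more explicit than the paper in part (b), where you correctly note that the rank-one case requires the \emph{exact} value $\big\|P_{C_\Phi}^-\big\|_{S(k)} = \big\|B_1\big\|_{(k,2)}^2$ from Proposition~\ref{prop:rankOneNorm} rather than merely an upper bound.
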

\begin{proof}
	To see condition (a), let $X$ be the Choi matrix of $\Phi$ and use condition (b) of Theorem~\ref{thm:kposSpectral} together with Proposition~\ref{prop:matrixVectorNorms} to see that if $\sum_i \big\|\ket{v_i^{-}}\big\|_{s(k)}^2 + \sum_i \big\|\ket{v_i^{0}}\big\|_{s(k)}^2 < 1$ and
	\begin{align}\label{eq:kpos_eq_1}
		\lambda_j^+ \geq \frac{\sum_i \lambda_i^{-}\big\|\ket{v_i^{-}}\big\|_{s(k)}^2}{1 - \sum_i \big\|\ket{v_i^{-}}\big\|_{s(k)}^2 - \sum_i \big\|\ket{v_i^{0}}\big\|_{s(k)}^2} \text{ for all $j$,}
	\end{align}
	then $X$ is $k$-block positive (i.e., $\Phi$ is $k$-positive). A slight modification of the proof of Theorem~\ref{thm:choi_cp} shows that the Choi--Kraus operators of $\Phi$ satisfy ${\rm vec}(B_i) = \ket{v_i^{-}}$ and ${\rm vec}(C_i) = \ket{v_i^{0}}$. Thus $\big\|B_i\big\|_{(k,2)} = \big\|\ket{v_i^{-}}\big\|_{s(k)}$ and $\big\|C_i\big\|_{(k,2)} = \big\|\ket{v_i^{0}}\big\|_{s(k)}$, which allows us to rewrite Equation~\eqref{eq:kpos_eq_1} as saying that $\sum_i \big\|B_i\big\|_{(k,2)}^2 + \sum_i \big\|C_i\big\|_{(k,2)}^2 < 1$ and
	\begin{align*}
		\lambda_j^+ \geq \frac{\sum_i \lambda_i^{-}\big\|B_i\big\|_{(k,2)}^2}{1 - \sum_i \big\|B_i\big\|_{(k,2)}^2 - \sum_i \big\|C_i\big\|_{(k,2)}^2} \text{ for all $j$}.
	\end{align*}
	Statement (a) follows immediately. The proof of statement (b) follows similarly by using condition (c) of Theorem~\ref{thm:kposSpectral}.
\end{proof}

Theorem~\ref{thm:kposSpectral} also implies the $k$-positivity test of \cite{TT83} and the positivity test of \cite{BFP04}, as it was noted in \cite{CK09} that condition (a) of Corollary~\ref{cor:kPosSpectral_cor1} implies those results.

We note that in \cite{CKS09} it was shown that the $k$-positivity tests of Corollary~\ref{cor:kPosSpectral_cor1} cannot be used to find entanglement
witnesses that detect non-positive partial transpose states, and thus are not useful for trying to determine whether NPPT bound entangled states exist. We will see in Section~\ref{sec:bound_entanglement} that the more general Theorem~\ref{thm:kposSpectral} likely \emph{is} strong enough to detect bound entanglement.

Corollary~\ref{cor:kPosSpectral_cor1} was proved by seeing what conditions (b) and (c) of Theorem~\ref{thm:kposSpectral} say about the generalized Choi--Kraus operators of a linear map. The following corollary, originally proved in \cite{KS05}, shows what condition (a) of Theorem~\ref{thm:kposSpectral} says in the same situation.
\begin{cor}
	Let $\Phi : M_m \rightarrow M_n$ be a Hermiticity-preserving linear map represented via the canonical generalized Choi--Kraus representation $\Phi(X) = \sum_{i=1}^a \lambda_i^{+} A_i X A_i^\dagger + \sum_{i=1}^b \lambda_i^{-} B_i X B_i^\dagger$, with the set $\big\{A_1, \ldots, A_a, B_1, \ldots, B_b\big\}$ forming an orthonormal set in the Hilbert--Schmidt inner product, and $\lambda_i^{+} > 0$ and $\lambda_i^{-} < 0$ for all $i$. If ${\rm rank}(B_i) \leq k$ for some $i$, then $\Phi$ is not $k$-positive. 
\end{cor}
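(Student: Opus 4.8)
The plan is to deduce this directly from condition (a) of Theorem~\ref{thm:kposSpectral} applied to the Choi matrix $X := C_\Phi$, using the fact that $\Phi$ is $k$-positive if and only if $C_\Phi$ is $k$-block positive. First I would pin down the spectral decomposition of $C_\Phi$ in terms of the given Choi--Kraus data. Since the set $\big\{A_1,\ldots,A_a,B_1,\ldots,B_b\big\}$ is orthonormal in the Hilbert--Schmidt inner product, the vectorizations $\big\{{\rm vec}(A_i)\big\} \cup \big\{{\rm vec}(B_i)\big\}$ form an orthonormal set in $\bb{C}^m \otimes \bb{C}^n$, and Proposition~\ref{prop:choi_kraus_ops_general} (with $\ket{v_\ell} = \ket{w_\ell}$) gives
\begin{align*}
	C_\Phi = \sum_{i=1}^a \lambda_i^{+} {\rm vec}(A_i){\rm vec}(A_i)^\dagger + \sum_{i=1}^b \lambda_i^{-} {\rm vec}(B_i){\rm vec}(B_i)^\dagger.
\end{align*}
Because $\lambda_i^{+} > 0$ and $\lambda_i^{-} < 0$, this is exactly the spectral decomposition of $C_\Phi$, so the negative eigenvectors are precisely $\ket{v_i^-} = {\rm vec}(B_i)$, and $P_X^{-} = \sum_{i=1}^b \ketbra{v_i^-}{v_i^-}$.

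The second step is to observe that matricization turns rank into Schmidt rank: since ${\rm mat}(\ket{v_i^-}) = B_i$, we have $SR(\ket{v_i^-}) = {\rm rank}(B_i)$. By hypothesis there is an index $i$ with ${\rm rank}(B_i) \leq k$, hence a unit eigenvector $\ket{v_i^-} \in {\rm Range}(P_X^-)$ with $SR(\ket{v_i^-}) \leq k$. Theorem~\ref{thm:sk_vector_norm} then gives $\big\|\ket{v_i^-}\big\|_{s(k)} = \big\|\ket{v_i^-}\big\| = 1$, because a unit vector of Schmidt rank at most $k$ has all of its (at most $k$) squared Schmidt coefficients summing to one. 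Applying Proposition~\ref{prop:orth_proj_norm} to the projection $P_X^-$ yields
\begin{align*}
	\big\| P_X^- \big\|_{S(k)} = \sup_{\ket{v} \in {\rm Range}(P_X^-)}\big\{ \big\|\ket{v}\big\|_{s(k)}^2 \big\} \geq \big\|\ket{v_i^-}\big\|_{s(k)}^2 = 1,
\end{align*}
while the reverse inequality $\big\|P_X^-\big\|_{S(k)} \leq \big\|P_X^-\big\| = 1$ holds since $P_X^-$ is an orthogonal projection and $\|\cdot\|_{S(k)} \leq \|\cdot\|$. Thus $\big\|P_X^-\big\|_{S(k)} = 1$.

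With this equality in hand, condition (a) of Theorem~\ref{thm:kposSpectral} immediately gives that $X = C_\Phi$ is not $k$-block positive, and therefore $\Phi$ is not $k$-positive, completing the argument. I do not expect a genuine obstacle here: the content is almost entirely bookkeeping, and the only point requiring care is the translation between the canonical Choi--Kraus operators and the eigenvectors of $C_\Phi$ (so that ${\rm rank}(B_i)$ really is the Schmidt rank of a negative eigenvector), together with verifying that the $s(k)$-norm of a Schmidt-rank-$\le k$ unit vector is exactly $1$. One should also note explicitly that the hypothesis ${\rm rank}(B_i) \leq k$ forces $b \geq 1$, so $P_X^-$ is nonzero and condition (a) is applicable.
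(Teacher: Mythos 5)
Your proof is correct and follows essentially the same route as the paper's: identify the negative eigenvectors of $C_\Phi$ with the vectorizations of the $B_i$, note that ${\rm rank}(B_i) = SR(\ket{v_i^-})$, conclude $\big\|P_X^-\big\|_{S(k)} = 1$, and invoke condition (a) of Theorem~\ref{thm:kposSpectral}. The only difference is cosmetic: you route the norm computation through Proposition~\ref{prop:orth_proj_norm} and Theorem~\ref{thm:sk_vector_norm}, whereas the paper observes directly that $\big|\bra{v_i^-}P_X^-\ket{v_i^-}\big| = 1$.
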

\begin{proof}
  Just like in the proof of Corollary~\ref{cor:kPosSpectral_cor1}, recall that the generalized Choi--Kraus operators $B_i$ can be scaled so that ${\rm vec}(B_i) = \ket{v_i^{-}}$, where $\big\{\ket{v_i^{-}}\big\}$ is the set of eigenvectors corresponding to the negative eigenspace of $C_\Phi$. Thus ${\rm rank}(B_i) = SR(\ket{v_i^{-}})$. If $SR(\ket{v_i^{-}}) \leq k$ for some $i$ then $\big|\bra{v_i^{-}}P_X^{-}\ket{v_i^{-}}\big| = 1$, so $\big\|P_X^{-}\big\|_{S(k)} = 1$. Condition (a) of Theorem~\ref{thm:kposSpectral} then gives the result.
\end{proof}

The next corollary, which provides a tight bound on the maximum number of negative eigenvalues that a $k$-block positive operator can have, appeared in \cite{Sar08,SS10}, though we expect that this bound was fairly well-known even earlier.
\begin{cor}\label{cor:maxKboxNegEval}
  If $X = X^\dagger \in M_m \otimes M_n$ is $k$-block positive then it has at most $(n - k)(m - k)$ negative eigenvalues.
\end{cor}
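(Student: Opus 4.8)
The plan is to argue by contradiction using the subspace dimension bound of Theorem~\ref{thm:CMW08}. Suppose $X = X^\dagger$ is $k$-block positive but has strictly more than $(n-k)(m-k)$ negative eigenvalues. Let $P_X^{-}$ denote the projection onto the negative eigenspace of $X$, as in the notation preceding Theorem~\ref{thm:kposSpectral}. Then ${\rm rank}(P_X^{-})$ equals the number of negative eigenvalues, so by hypothesis
\begin{align*}
	\dim\big({\rm Range}(P_X^{-})\big) \geq (n-k)(m-k) + 1.
\end{align*}

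The key step is to locate a low-Schmidt-rank vector inside this negative eigenspace. I would apply Theorem~\ref{thm:CMW08} with $k$ replaced by $k+1$: it says that the largest subspace of $\bb{C}^m \otimes \bb{C}^n$ consisting entirely of vectors with Schmidt rank at least $k+1$ has dimension $(m-(k+1)+1)(n-(k+1)+1) = (m-k)(n-k)$. Taking the contrapositive, any subspace of dimension strictly greater than $(m-k)(n-k)$ must contain a nonzero vector $\ket{v}$ with $SR(\ket{v}) \leq k$. Since ${\rm Range}(P_X^{-})$ has exactly this property, such a $\ket{v} \in {\rm Range}(P_X^{-})$ exists.

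To finish, I would observe that for this $\ket{v}$ we have $X\ket{v} = X^{-}\ket{v}$, and writing the spectral decomposition gives $\bra{v}X\ket{v} = \sum_i \lambda_i^{-}|\braket{v_i^{-}}{v}|^2 < 0$ because $\ket{v} \neq 0$ lies in the range of $P_X^{-}$ and every $\lambda_i^{-} < 0$. This contradicts $k$-block positivity, which requires $\bra{v}X\ket{v} \geq 0$ for all $\ket{v}$ with $SR(\ket{v}) \leq k$. Hence $X$ can have at most $(n-k)(m-k)$ negative eigenvalues. (Equivalently, one could phrase this contradiction through condition~(a) of Theorem~\ref{thm:kposSpectral} together with Proposition~\ref{prop:orth_proj_norm}, noting that the existence of $\ket{v} \in {\rm Range}(P_X^{-})$ with $SR(\ket{v}) \leq k$ forces $\|P_X^{-}\|_{S(k)} = 1$.) The only delicate point is the index bookkeeping in the application of Theorem~\ref{thm:CMW08}: one must correctly shift $k \mapsto k+1$ and read the result in contrapositive form to guarantee a Schmidt rank $\leq k$ vector rather than merely bounding the dimension of high-Schmidt-rank subspaces.
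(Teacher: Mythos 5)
Your proof is correct and follows essentially the same route as the paper: apply Theorem~\ref{thm:CMW08} (in contrapositive, with $k$ shifted to $k+1$) to find a vector of Schmidt rank at most $k$ in the negative eigenspace, then contradict $k$-block positivity, exactly as in the paper's appeal to condition~(a) of Theorem~\ref{thm:kposSpectral}. Your explicit handling of the index shift is a slightly more careful spelling-out of a step the paper leaves implicit, but the argument is the same.
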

\begin{proof}
  Suppose $X$ has more than $(n - k)(m - k)$ negative eigenvalues. Then by Theorem~\ref{thm:CMW08} it follows that there exists $\ket{v} \in {\rm Range}(P_X^{-})$ with $SR(\ket{v}) \leq k$. Hence we have $\big\| P_X^{-} \big\|_{S(k)} = 1$ and so condition (a) of Theorem~\ref{thm:kposSpectral} tells us that $X$ is not $k$-block positive.
\end{proof}
The fact that there exist $k$-block positive operators in $M_m \otimes M_n$ with $(n-k)(m-k)$ negative eigenvalues is easily seen from the tightness of the bound provided by Theorem~\ref{thm:CMW08}. If $P$ is the projection onto a subspace of dimension $(n-k)(m-k)$ that consists entirely of vectors with Schmidt rank higher than $k$, then the operator $(1 - \varepsilon)I - P$ is $k$-block positive when $\varepsilon > 0$ is sufficiently small.

The following corollary shows just how negative the negative eigenvalues of a $k$-block positive operator can be.
\begin{cor}\label{cor:mostNegEval}
  Suppose $X = X^\dagger \in M_m \otimes M_n$ is $k$-block positive. Denote the maximal and minimal eigenvalues of $X$ by $\lambda_{\textup{max}}$ and $\lambda_{\textup{min}}$, respectively, and let $\ket{v_{\textup{min}}}$ be an eigenvector corresponding to the eigenvalue $\lambda_{\textup{min}}$. Then
  \begin{align*}
    \frac{\lambda_{\textup{min}}}{\lambda_{\textup{max}}} \geq 1 - \frac{1}{\big\| \ket{v_{\textup{min}}} \big\|_{s(k)}^2} \geq 1 - \frac{\min\{m,n\}}{k}.
  \end{align*}
\end{cor}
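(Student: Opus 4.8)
The plan is to apply the spectral block positivity test, specifically condition~(c) of Theorem~\ref{thm:kposSpectral}, in its contrapositive form to extract the stated bound. The key observation is that if $X$ is $k$-block positive, then it certainly cannot satisfy the hypotheses of condition~(c) that would force it to \emph{not} be $k$-block positive. I would like to isolate the role played by the single most negative eigenvalue $\lambda_{\textup{min}}$ and its eigenvector $\ket{v_{\textup{min}}}$, so the cleanest route is to test positivity directly against $\ket{v_{\textup{min}}}$ rather than invoking the full machinery of $X^{-}$ and $P_X^{-}$.

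First I would use the assumption that $X$ is $k$-block positive together with Proposition~\ref{prop:MatMultDiffVectors}-style reasoning at the level of a single vector. The idea is to pick the normalized vector $\ket{w}$ with $SR(\ket{w}) \leq k$ that achieves $\big|\braket{v_{\textup{min}}}{w}\big|^2 = \big\|\ket{v_{\textup{min}}}\big\|_{s(k)}^2$ (such a $\ket{w}$ exists by compactness of the set of Schmidt-rank-$\leq k$ states and the definition of the $s(k)$-vector norm). Then I would estimate $\bra{w}X\ket{w}$ from below by splitting the spectral expansion of $X$ into the contribution of $\lambda_{\textup{min}}$ and everything else. Concretely, writing $X = \sum_i \lambda_i \ketbra{v_i}{v_i}$ with $\lambda_{\textup{min}}$ attained at $\ket{v_{\textup{min}}}$, I would bound
\begin{align*}
	0 \leq \bra{w}X\ket{w} = \sum_i \lambda_i \big|\braket{v_i}{w}\big|^2 \leq \lambda_{\textup{max}}\big(1 - \big|\braket{v_{\textup{min}}}{w}\big|^2\big) + \lambda_{\textup{min}}\big|\braket{v_{\textup{min}}}{w}\big|^2,
\end{align*}
where I have used $\lambda_i \leq \lambda_{\textup{max}}$ on the terms with $i \neq \textup{min}$ (and $\lambda_{\textup{max}} \geq 0$, which holds since $X \neq 0$ if it has a negative eigenvalue) and kept the $\ket{v_{\textup{min}}}$ term exactly. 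Substituting $\big|\braket{v_{\textup{min}}}{w}\big|^2 = \big\|\ket{v_{\textup{min}}}\big\|_{s(k)}^2$ and rearranging the inequality $0 \leq \lambda_{\textup{max}}(1 - \big\|\ket{v_{\textup{min}}}\big\|_{s(k)}^2) + \lambda_{\textup{min}}\big\|\ket{v_{\textup{min}}}\big\|_{s(k)}^2$ by dividing through by $\lambda_{\textup{max}}\big\|\ket{v_{\textup{min}}}\big\|_{s(k)}^2 > 0$ yields exactly $\lambda_{\textup{min}}/\lambda_{\textup{max}} \geq 1 - 1/\big\|\ket{v_{\textup{min}}}\big\|_{s(k)}^2$, the first claimed inequality.

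The second inequality is then immediate from Corollary~\ref{cor:vecEquiv} (or directly from Theorem~\ref{thm:sk_vector_norm}): since $\ket{v_{\textup{min}}}$ is a unit vector, its $s(k)$-norm satisfies $\big\|\ket{v_{\textup{min}}}\big\|_{s(k)}^2 \geq k/\min\{m,n\}$ because the $k$ largest of its at most $\min\{m,n\}$ squared Schmidt coefficients sum to at least $k/\min\{m,n\}$ of the total (this is the extreme attained by the maximally entangled state). Hence $1/\big\|\ket{v_{\textup{min}}}\big\|_{s(k)}^2 \leq \min\{m,n\}/k$, so $1 - 1/\big\|\ket{v_{\textup{min}}}\big\|_{s(k)}^2 \geq 1 - \min\{m,n\}/k$, completing the chain. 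The main obstacle I anticipate is the sign bookkeeping in the first step: one must be careful that $\lambda_{\textup{max}} \geq 0$ (guaranteed once $X$ has a negative eigenvalue, since its trace and block positivity force at least one nonnegative eigenvalue, or simply because $\lambda_{\textup{max}} \geq \lambda_{\textup{min}}$ and a degenerate all-zero case is excluded) and that dividing by $\lambda_{\textup{max}}$ preserves the inequality direction. If $\lambda_{\textup{min}} \geq 0$ the statement is trivially true since the right-hand sides are at most $1$, so the interesting case is genuinely $\lambda_{\textup{min}} < 0$, where the argument above applies cleanly.
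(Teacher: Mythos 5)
Your proof is correct, and it takes a genuinely different route from the one in the paper. The paper first reduces, without loss of generality, to the case where $X$ is nonsingular with a single negative eigenvalue (by replacing every other eigenvalue with $\lambda_{\textup{max}}$, which only adds a positive semidefinite operator and so preserves $k$-block positivity), and then invokes conditions (a) and (c) of Theorem~\ref{thm:kposSpectral} together with the identity $\big\|P_X^{-}\big\|_{S(k)} = \big\|\ket{v_{\textup{min}}}\big\|_{s(k)}^2$ for the resulting rank-one negative part. You instead bypass the spectral-test machinery entirely: you test $k$-block positivity directly against the Schmidt-rank-$\leq k$ maximizer $\ket{w}$ of the overlap with $\ket{v_{\textup{min}}}$, bound every other spectral term by $\lambda_{\textup{max}}$ (valid regardless of sign since the weights $|\braket{v_i}{w}|^2$ are nonnegative), and rearrange $0 \leq \lambda_{\textup{max}}(1-s) + \lambda_{\textup{min}}s$ with $s = \big\|\ket{v_{\textup{min}}}\big\|_{s(k)}^2 > 0$. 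Your sign bookkeeping is sound: $\lambda_{\textup{max}} > 0$ is forced because a $k$-block positive operator with all eigenvalues nonpositive would vanish on every product vector and hence be zero, contradicting $\lambda_{\textup{min}} < 0$. What your argument buys is self-containedness — it needs only the definition of $k$-block positivity, the $s(k)$-norm, and Corollary~\ref{cor:vecEquiv} for the final step — and it handles a degenerate minimal eigenvalue without any reduction; what the paper's version buys is that the corollary is exhibited explicitly as a consequence of the unified spectral test of Theorem~\ref{thm:kposSpectral}, which is the organizing result of that section.
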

\begin{proof}
  If $\lambda_{\textup{min}} \geq 0$ then the result is trivial. We thus assume that $\lambda_{\textup{min}} < 0$. Suppose without loss of generality that $X$ has only one negative eigenvalue and is nonsingular (certainly the case when $\lambda_{\textup{min}}$ is minimal occurs when there is just one negative eigenvalue and all other eigenvalues equal $\lambda_{\textup{max}}$). If $X$ is $k$-block positive then condition (a) of Theorem~\ref{thm:kposSpectral} says that $\big\|P_X^{-}\big\|_{S(k)} < 1$. Condition (c) then says that
  \begin{align*}
    \lambda_{\textup{max}} \geq \frac{\big\| X^{-} \big\|_{S(k)}}{1 - \big\| P_X^{-} \big\|_{S(k)}} = -\lambda_{\textup{min}}\frac{\big\| P_X^{-} \big\|_{S(k)}}{1 - \big\| P_X^{-} \big\|_{S(k)}}.
  \end{align*}
  Then
  \begin{align*}
    \frac{\lambda_{\textup{min}}}{\lambda_{\textup{max}}} \geq \frac{\big\| P_X^{-} \big\|_{S(k)} - 1}{\big\| P_X^{-} \big\|_{S(k)}} = 1 - \frac{1}{\big\| \ket{v_{\textup{min}}} \big\|_{s(k)}^2} \geq 1 - \frac{\min\{m,n\}}{k},
  \end{align*}
  with the final inequality following from Corollary~\ref{cor:vecEquiv}.
\end{proof}
The fact that there exist $k$-block positive operators in $M_m \otimes M_n$ with $\frac{\lambda_{\textup{min}}}{\lambda_{\textup{max}}} = 1 - \frac{\min\{m,n\}}{k}$ can be seen by letting $\ket{\psi_{+}} = \frac{1}{\sqrt{m}}\sum_{i=1}^{m} \ket{i} \otimes \ket{i}$ be the standard maximally-entangled state. Then $I - \frac{\min\{m,n\}}{k}\ketbra{\psi_{+}}{\psi_{+}}$ is $k$-block positive by Corollary~\ref{cor:kPosInf1}.

By using Theorem~\ref{thm:mainProjRes} in the proof of Corollary~\ref{cor:mostNegEval}, we can derive the following bounds that in some sense interpolate between Corollary~\ref{cor:maxKboxNegEval} and Corollary~\ref{cor:mostNegEval}, giving lower bounds on $\lambda_{\textup{min}}$ that depend on the number of negative eigenvalues of $X$.
\begin{cor}\label{cor:kpos_eig_strong}
  Suppose $X = X^\dagger \in M_m \otimes M_n$ is $k$-block positive with $r$ negative eigenvalues. Denote the maximal and minimal eigenvalues of $X$ by $\lambda_{\textup{max}}$ and $\lambda_{\textup{min}}$, respectively. Then
  \begin{align*}
    \frac{\lambda_{\textup{min}}}{\lambda_{\textup{max}}} & \geq 1 - \frac{\big\lceil \frac{1}{2}\big( n + m - \sqrt{(n-m)^2 + 4{r - 4}} \big) \big\rceil}{k} \ \ \text{ and} \\
    \frac{\lambda_{\textup{min}}}{\lambda_{\textup{max}}} & \geq 1 - \frac{mn(\min\{m,n\}-1)}{mn(k-1) + (\min\{m,n\} - k)\left( r + \sqrt{\frac{mnr - r^2}{mn-1}} \right)}
    \end{align*}
\end{cor}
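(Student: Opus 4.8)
The plan is to follow the proof of Corollary~\ref{cor:mostNegEval} almost verbatim, replacing the rank-one negative projection used there with a rank-$r$ projection and replacing the crude estimate of its $S(k)$-norm with the sharper bounds of Theorem~\ref{thm:mainProjRes}. First I would dispose of the trivial case $\lambda_{\textup{min}} \geq 0$ (equivalently $r = 0$), where $X$ is positive semidefinite and the ratio is nonnegative. For the remaining case $\lambda_{\textup{min}} < 0$, so $r \geq 1$, I would reduce to the extremal configuration in which $X$ is nonsingular and all $r$ negative eigenvalues coincide with $\lambda_{\textup{min}}$; this is exactly the situation that minimizes $\lambda_{\textup{min}}/\lambda_{\textup{max}}$ among $k$-block positive operators with a fixed rank-$r$ negative eigenspace, and it is the analogue of the worst-case normalization already invoked in Corollary~\ref{cor:mostNegEval}. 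In this configuration $X^{-} = \lambda_{\textup{min}} P_X^{-}$, so by positive homogeneity $\big\|X^{-}\big\|_{S(k)} = -\lambda_{\textup{min}}\big\|P_X^{-}\big\|_{S(k)}$.

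Next I would invoke Theorem~\ref{thm:kposSpectral}. Since $X$ is assumed $k$-block positive, part (a) forces $\big\|P_X^{-}\big\|_{S(k)} < 1$, so the hypotheses of part (c) (equal negative eigenvalues, nonsingularity, and $\big\|P_X^{-}\big\|_{S(k)} < 1$) all hold. The contrapositive of (c) then says its eigenvalue inequality cannot hold for every positive eigenvalue; applied to $\lambda_{\textup{max}}$ this gives
\[
	\lambda_{\textup{max}} \geq \frac{\big\|X^{-}\big\|_{S(k)}}{1 - \big\|P_X^{-}\big\|_{S(k)}} = \frac{-\lambda_{\textup{min}}\big\|P_X^{-}\big\|_{S(k)}}{1 - \big\|P_X^{-}\big\|_{S(k)}}.
\]
Dividing through by $\lambda_{\textup{max}}\big\|P_X^{-}\big\|_{S(k)} > 0$ and rearranging yields the key intermediate estimate $\lambda_{\textup{min}}/\lambda_{\textup{max}} \geq 1 - 1/\big\|P_X^{-}\big\|_{S(k)}$, which is precisely the conclusion of Corollary~\ref{cor:mostNegEval} but now with $P_X^{-}$ a projection of rank $r$ rather than rank one.

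Finally I would feed the two lower bounds for $\big\|P_X^{-}\big\|_{S(k)}$ from Theorem~\ref{thm:mainProjRes} into this estimate. Because $t \mapsto 1 - 1/t$ is increasing on $(0,1)$, any lower bound on $\big\|P_X^{-}\big\|_{S(k)}$ transfers directly to a lower bound on $1 - 1/\big\|P_X^{-}\big\|_{S(k)}$. Inequality~\eqref{eq:projIneq1} gives $\big\|P_X^{-}\big\|_{S(k)} \geq k / \big\lceil \tfrac{1}{2}\big( n + m - \sqrt{(n-m)^2 + 4r - 4} \big) \big\rceil$ (the minimum appearing there must equal this term, since $\big\|P_X^{-}\big\|_{S(k)} < 1$), which produces the first displayed bound; Inequality~\eqref{eq:projIneq2} produces the second after observing that the reciprocal of its right-hand side is exactly $mn(\min\{m,n\}-1)$ divided by $mn(k-1) + (\min\{m,n\}-k)\big( r + \sqrt{(mnr - r^2)/(mn-1)} \big)$. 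The main obstacle is the reduction step: one must argue carefully that it suffices to treat the configuration with all negative eigenvalues equal to $\lambda_{\textup{min}}$ and $X$ nonsingular, since only then do the hypotheses of Theorem~\ref{thm:kposSpectral}(c) apply. Everything downstream of that reduction is the monotone substitution above together with routine algebraic simplification.
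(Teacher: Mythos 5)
Your route is the one the paper itself intends --- the text offers only the one-line pointer ``use Theorem~\ref{thm:mainProjRes} in the proof of Corollary~\ref{cor:mostNegEval}'' --- and everything downstream of your reduction is sound: the contrapositive of Theorem~\ref{thm:kposSpectral}(c), the identity $\|X^{-}\|_{S(k)} = -\lambda_{\textup{min}}\|P_X^{-}\|_{S(k)}$ in the equal-eigenvalue configuration, the monotonicity of $t \mapsto 1 - 1/t$, and the algebra converting Inequalities~\eqref{eq:projIneq1} and~\eqref{eq:projIneq2} into the two displayed bounds are all correct.

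The genuine gap is precisely the step you flag as ``the main obstacle'' and then assert rather than prove: that the extremal configuration has all $r$ negative eigenvalues equal to $\lambda_{\textup{min}}$. In Corollary~\ref{cor:mostNegEval} the analogous normalization is legitimate because the surrogate $\lambda_{\textup{max}}(I - \ketbra{v_{\textup{min}}}{v_{\textup{min}}}) + \lambda_{\textup{min}}\ketbra{v_{\textup{min}}}{v_{\textup{min}}}$ is obtained from $X$ by \emph{raising} every eigenvalue other than $\lambda_{\textup{min}}$, so it dominates $X$, inherits $k$-block positivity, and keeps $\lambda_{\textup{min}}$ and $\lambda_{\textup{max}}$ fixed. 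Your surrogate $X' := \lambda_{\textup{max}}(I - P_X^{-}) + \lambda_{\textup{min}}P_X^{-}$ instead \emph{lowers} the $r-1$ less-negative eigenvalues down to $\lambda_{\textup{min}}$: the difference $X' - X$ has a negative semidefinite component supported on the negative eigenspace, so $X' \not\geq X$ in general and $k$-block positivity of $X$ does not transfer to $X'$. The only dominating equal-eigenvalue surrogate is $\lambda_{\textup{max}}(I - P_X^{-}) + \mu P_X^{-}$ with $\mu$ the \emph{least} negative eigenvalue, and running your argument on that operator bounds $\mu/\lambda_{\textup{max}}$ rather than $\lambda_{\textup{min}}/\lambda_{\textup{max}}$; likewise the direct route (evaluate $\bra{v}X\ket{v} \geq 0$ at a Schmidt-rank-$k$ vector attaining $\|P_X^{-}\|_{S(k)}$) only controls the least negative eigenvalue, since that vector's overlap with the negative eigenspace may be concentrated away from $\ket{v_{\textup{min}}}$. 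As written, your proof therefore establishes the stated inequalities only for operators whose negative eigenvalues are all equal (which happens to be the case used in Example~\ref{ex:trans_not_2pos}); for general $X$ the extremality claim needs a separate argument or the statement needs to be read as a bound on the least negative eigenvalue.
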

In general, the first inequality of Corollary~\ref{cor:kpos_eig_strong} is stronger when $r$ is small or large (i.e., close to $1$ or $mn$), while the second inequality is stronger when $r$ is intermediate (i.e., close to $mn/2$). The following example makes use of the second inequality when $m = n$ and $r = n(n-1)/2$.
\begin{exam}\label{ex:trans_not_2pos}{\rm
	Recall from Example~\ref{exam:transpose_not_cp} that the transpose map $T : M_n \rightarrow M_n$ is positive but not $2$-positive. Here we use Corollary~\ref{cor:kpos_eig_strong} to provide another proof that $T$ is not $2$-positive based solely on the eigenvalues of its Choi matrix.
	
	Recall from Section~\ref{sec:symmetric_sub} that the Choi matrix of $T$ is the swap operator $S$, which is a unitary with $n(n+1)/2$ eigenvalues equal to $1$ and $n(n-1)/2$ eigenvalues equal to $-1$. Let's assume that $T$ is $2$-positive, so $S$ is $2$-block positive. If we apply the second inequality of Corollary~\ref{cor:kpos_eig_strong} with $r = n(n-1)/2$ and $k = 2$, we have
	\begin{align*}
		-1 = \frac{\lambda_{\textup{min}}}{\lambda_{\textup{max}}} & \geq 1 - \frac{n^2(n-1)}{n^2 + (n-2)\left( \frac{n(n-1)}{2} + \sqrt{\frac{n^3(n-1)/2 - n^2(n-1)^2/4}{n^2-1}} \right)} = -1 + \frac{2}{n} > -1.
	\end{align*}
	This gives a contradiction and shows that $T$ is not $2$-positive for any $n$.
}\end{exam}

One final corollary shows that we now have a complete spectral characterization of the $k$-block positivity of Hermitian operators with exactly two distinct eigenvalues.
\begin{cor}\label{cor:kposTwoEvals}
  Let $X = X^\dagger \in M_m \otimes M_n$ have two distinct eigenvalues $\lambda_1 > \lambda_2$. Then $X$ is $k$-block positive if and only if
  \begin{align}\label{eq:kPosIneq}
    \big\| P_X^{-} \big\|_{S(k)} \leq \frac{\lambda_1}{\lambda_1 - \lambda_2}.
  \end{align}
\end{cor}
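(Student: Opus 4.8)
The plan is to reduce the claim about $X$ with exactly two distinct eigenvalues to the machinery already developed in Theorem~\ref{thm:kposSpectral}, specifically its conditions (a) and (c), which together should cover the ``only if'' direction, while the ``if'' direction will follow from condition (b). First I would write $X$ in terms of its two eigenvalues: since $\lambda_1 > \lambda_2$, the operator $X$ decomposes as $X = \lambda_1 P_1 + \lambda_2 P_2$ where $P_1, P_2$ are the orthogonal projections onto the $\lambda_1$- and $\lambda_2$-eigenspaces and $P_1 + P_2 = I$. The interesting case is $\lambda_2 < 0$ (if $\lambda_2 \geq 0$ then $X \geq 0$ is trivially $k$-block positive, and one checks the right-hand side of~\eqref{eq:kPosIneq} is at least $1 \geq \big\| P_X^{-} \big\|_{S(k)}$, so the inequality holds automatically). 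When $\lambda_2 < 0$ we have $P_X^{-} = P_2$ and $X^{-} = \lambda_2 P_2$, so $\big\| X^{-} \big\|_{S(k)} = |\lambda_2| \big\| P_2 \big\|_{S(k)} = -\lambda_2 \big\| P_X^{-} \big\|_{S(k)}$.

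The cleanest route is to establish the equivalent statement directly via Corollary~\ref{cor:kPosInf1}, rather than reassembling the three parts of Theorem~\ref{thm:kposSpectral}. The idea is to rewrite $X$ as a positive scalar multiple of the identity minus a positive operator. Concretely, $X = \lambda_1 I - (\lambda_1 - \lambda_2) P_2$, so $X$ is $k$-block positive if and only if $(\lambda_1 - \lambda_2) P_2 \leq \lambda_1 I$ in the $k$-block-positive order, i.e.\ if and only if $\lambda_1 I - (\lambda_1 - \lambda_2) P_2$ is $k$-block positive. Since $\lambda_1 > \lambda_2$ gives $\lambda_1 - \lambda_2 > 0$, I can set $Y := (\lambda_1 - \lambda_2) P_2 \geq 0$ and apply Corollary~\ref{cor:kPosInf1} with $c = \lambda_1$: the operator $\lambda_1 I - Y$ is $k$-block positive if and only if $\lambda_1 \geq \big\| Y \big\|_{S(k)} = (\lambda_1 - \lambda_2)\big\| P_2 \big\|_{S(k)}$. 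Dividing by $\lambda_1 - \lambda_2 > 0$ yields
\begin{align*}
	\big\| P_2 \big\|_{S(k)} \leq \frac{\lambda_1}{\lambda_1 - \lambda_2}.
\end{align*}
Finally I would note $P_2 = P_X^{-}$ when $\lambda_2 < 0$, which is exactly inequality~\eqref{eq:kPosIneq}.

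The one subtlety to handle carefully is the case $\lambda_2 = 0$ or $\lambda_2 > 0$, where $P_X^{-}$ as defined (the projection onto the \emph{negative} part) is not equal to $P_2$. In that regime $X \geq 0$ so it is $k$-block positive unconditionally, and I must verify the right-hand side of~\eqref{eq:kPosIneq} is automatically $\geq \big\| P_X^{-} \big\|_{S(k)}$; indeed $P_X^{-} = 0$ gives $\big\| P_X^{-} \big\|_{S(k)} = 0$, and since $\lambda_1 > 0 \geq \lambda_2$ forces $\lambda_1/(\lambda_1 - \lambda_2) \geq 0$, the inequality holds. Thus the only genuine content is the $\lambda_2 < 0$ case handled above. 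I expect the main obstacle to be purely bookkeeping: making sure the identification $P_X^{-} = P_2$ is stated precisely and that Corollary~\ref{cor:kPosInf1} is invoked with $Y = (\lambda_1 - \lambda_2)P_2$ positive semidefinite (which requires $\lambda_1 > \lambda_2$, given in the hypothesis). No hard estimation is needed—the result is essentially a one-line consequence of Corollary~\ref{cor:kPosInf1} once $X$ is rewritten as $\lambda_1 I$ minus a positive multiple of a projection.
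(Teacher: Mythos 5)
Your proof is correct, and it takes a genuinely different --- and cleaner --- route than the paper's. The paper derives the corollary by assembling conditions (a), (b), and (c) of Theorem~\ref{thm:kposSpectral}: for the ``only if'' direction it uses (a) to get $\big\|P_X^-\big\|_{S(k)} < 1$ and then (c) to obtain $\lambda_1 \geq -\lambda_2 \big\|P_X^-\big\|_{S(k)}/(1 - \big\|P_X^-\big\|_{S(k)})$, which rearranges to~\eqref{eq:kPosIneq}; for the ``if'' direction it verifies the hypotheses of (b). This requires checking the side conditions of Theorem~\ref{thm:kposSpectral}(c) (nonsingularity, all negative eigenvalues equal) and some algebraic rearrangement. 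You instead observe that $X = \lambda_1 I - (\lambda_1 - \lambda_2)P_2$ with $(\lambda_1 - \lambda_2)P_2 \geq 0$ and invoke Corollary~\ref{cor:kPosInf1} once, getting the exact if-and-only-if in a single step with no case analysis on the sign of $\lambda_1$ in the main argument. This is the more economical proof; what the paper's route buys is only that it exercises the spectral machinery of Theorem~\ref{thm:kposSpectral}, which the surrounding corollaries also rely on. One very small bookkeeping point: in your edge-case discussion you treat $\lambda_2 \geq 0$ but not $\lambda_1 < 0$ (both eigenvalues negative), where $P_X^- = I \neq P_2$; there $X$ is clearly not $k$-block positive and the right-hand side of~\eqref{eq:kPosIneq} is negative while $\big\|I\big\|_{S(k)} = 1$, so both sides of the equivalence fail and the statement still holds. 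The paper dismisses this same case as trivial, so this is not a substantive gap.
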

\begin{proof}
  If $\lambda_1$ and $\lambda_2$ have the same sign then the result is trivial. We thus assume that $\lambda_1 > 0$ and $\lambda_2 < 0$.

  If $X$ is $k$-block positive, then by condition (a) of Theorem~\ref{thm:kposSpectral} we know that $\| P_X^{-} \|_{S(k)} < 1$. Then condition (c) says that
  \begin{align*}
    \lambda_1 \geq \frac{\big\| X^{-} \big\|_{S(k)}}{1 - \big\| P_X^{-} \big\|_{S(k)}} = -\lambda_2\frac{\big\| P_X^{-} \big\|_{S(k)}}{1 - \big\| P_X^{-} \big\|_{S(k)}}.
  \end{align*}
	The desired inequality follows easily. To see the other direction of the proof, suppose inequality~\eqref{eq:kPosIneq} is satisfied. Then because $\lambda_2 < 0$ it follows that $\big\| P_X^{-} \big\|_{S(k)} < 1$. Simple algebra now shows that condition (b) of Theorem~\ref{thm:kposSpectral} is satisfied.
\end{proof}

Note that Corollaries~\ref{cor:maxKboxNegEval}, \ref{cor:mostNegEval}, and~\ref{cor:kpos_eig_strong} all provide necessary conditions for $k$-block positivity of $X$ that depend only on its eigenvalues. A natural question that can be asked at this point is for a complete characterization of the possible eigenvalues of $k$-block positive matrices. That is, what is the structure of the set $\Lambda \subset \mathbb{R}^{mn}$ with the property that $\mathbf{\lambda} \in \Lambda$ if and only if there is a $k$-block positive matrix $X = X^\dagger \in M_m \otimes M_n$ with eigenvalues that are the entries of $\mathbf{\lambda}$? Compare this problem with the related open problem that asks for the structure of the set $\Lambda \subset \mathbb{R}^{mn}$ with the property that $\mathbf{\lambda} \in \Lambda$ if and only if every matrix $X = X^\dagger \in M_m \otimes M_n$ with eigenvalues that are the entries of $\mathbf{\lambda}$ is necessarily separable \cite{OpenProb15}, which was partially solved in \cite{GB02,VAD01}.

A weaker version of our question would be to ask for a tight version of Corollary~\ref{cor:kpos_eig_strong}. That is, what is the minimal value of $\lambda_{\textup{min}} / \lambda_{\textup{max}}$ for a $k$-block positive matrix? We don't have an answer for either of these questions, but we close by noting that Proposition~\ref{prop:block_pos_trace} provides a spectral test for block positivity that, unlike the other results of this section, is independent of Theorem~\ref{thm:kposSpectral}.

\chapter{Computational Problems and Applications}\label{ch:computation}

In this chapter we investigate various methods of computing the $S(k)$-operator norms and we use these methods to tackle several problems in quantum information theory. Although all of this work applies immediately to the problems of characterizing separable states, states with a given Schmidt number, and block positive operators, this chapter focuses on slightly more exotic and unexpected applications in quantum information theory. In particular, we use the $S(k)$-norms to investigate whether or not there exist bound entangled states that have non-positive partial transpose and to compute the minimum gate fidelity of a quantum channel. We also connect the $S(k)$-norms to the maximum output purity of a quantum channel and the tripartite and quadripartite geometric measure of entanglement, which allows all of our results to apply immediately in these varied settings.

\section{Semidefinite and Conic Programming}\label{sec:SP}

We begin by introducing the reader to the theory of semidefinite programs (SDPs) and conic programs, which are the main tools used throughout this chapter for bounding the $S(k)$-operator norms. Our introduction is brief and suited to our particular purposes -- for a more general and in-depth introduction and discussion, the reader is encouraged to read any of a number of other sources including \cite{Al95,Kl02,Lo03,VB94,WSV00}. Importantly, there are explicit methods that are able to approximately solve semidefinite programs to any desired accuracy in polynomial time \cite{GLS93}. We provide several examples of semidefinite programs throughout this section, some of which we solve analytically and some of which we solve numerically. For numerical solutions, we use the YALMIP modelling language \cite{YALMIP} and the SeDuMi solver \cite{SeDuMi} in MATLAB to carry out the computations.

For our purposes, assume we have a Hermiticity-preserving linear map $\Phi : M_m \rightarrow M_n$, two operators $A \in M_m$ and $B \in M_n$, and a closed convex cone $\cl{C} \subseteq M_n^+$. Then the \emph{conic program} associated with $\Phi$, $A$, $B$, and $\cl{C}$ is defined by the following pair of optimization problems:
\begin{align}\label{sp:form}\hsp
\begin{matrix}
\begin{tabular}{r l c r l}
\multicolumn{2}{c}{{\bf Primal problem}} & \quad \quad \quad & \multicolumn{2}{c}{{\bf Dual problem}} \\
\text{maximize:} & $\Tr(AX)$ & \quad & \text{minimize:} & $\Tr(BY)$ \\
\text{subject to:} & $B - \Phi(X) \in \cl{C}$ & \quad & \text{subject to:} & $\Phi^\dagger(Y) \geq A$ \\
\ & $X \geq 0$ & \ & \ & $Y \in \cl{C}^{\circ}$ \\
\end{tabular}
\end{matrix}
\dsp\end{align}

In the case when $\cl{C} = M_n^+$, the program \eqref{sp:form} is called a \emph{semidefinite program}, which is the case that will be of most use for us. Recall that the cone of positive semidefinite operators is its own dual cone, so in this case we have $\cl{C}^\circ = M_n^+$ as well. The form~\eqref{sp:form} differs from the standard form of semidefinite programs, but it is equivalent and better suited to our particular needs, and has been used very recently to solve other problems in quantum information \cite{Wa09,JJUW10}. The more general conic form will be used in Section~\ref{sec:coneNorms}, and the interested reader is directed to \cite{BV04} for a more thorough introduction to conic programming.

The \emph{primal feasible} set $\cl{A}$ and \emph{dual feasible} set $\cl{B}$ are defined by
\begin{align*}
	\cl{A} := \big\{ X \geq 0 : B - \Phi(X) \in \cl{C} \big\} \quad \quad \text{ and } \quad \quad \cl{B} := \big\{ Y \in \cl{C}^{\circ} : \Phi^\dagger(Y) \geq A \big\}.
\end{align*}
The \emph{optimal values} associated with the primal and dual problems are defined to be
\begin{align*}
	\alpha := \sup_{X \in \cl{A}} \big\{ \Tr(AX) \big\} \quad \quad \text{ and } \quad \quad \beta := \inf_{Y \in \cl{B}} \big\{ \Tr(BY) \big\},
\end{align*}
and if $\cl{A}$ or $\cl{B}$ is empty then we set $\alpha = -\infty$ or $\beta = \infty$, respectively. The functions being optimized ($\Tr(AX)$ and $\Tr(BY)$) are called the \emph{objective functions}.

Semidefinite and conic programming have a strong theory of duality. Weak duality tells us it is always the case that $\alpha \leq \beta$. Equality is actually attained for many conic programs of interest though, as the following theorem shows.
\begin{thm}[Strong duality]\label{thm:spSlater}
  The following two implications hold for every conic program of the form~\eqref{sp:form}.
  \begin{enumerate}[1.]
		\item Strict primal feasibility: If $\beta$ is finite and there exists an operator $X > 0$ such that $B - \Phi(X)$ is in the interior of $\cl{C}$, then $\alpha = \beta$ and there exists $Y \in \cl{B}$ such that $\Tr(BY) = \beta$.
    \item Strict dual feasibility: If $\alpha$ is finite and there exists an operator $Y$ in the interior of $\cl{C}^\circ$ such that $\Phi^\dagger(Y) > A$, then $\alpha = \beta$ and there exists $X \in \cl{A}$ such that $\Tr(AX) = \alpha$.
  \end{enumerate}
\end{thm}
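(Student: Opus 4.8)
The plan is to prove the strong duality theorem for conic programs of the form~\eqref{sp:form} by reducing it to a standard separating hyperplane argument, which is the workhorse behind essentially every proof of strong duality (Slater-type conditions). I will prove the two implications symmetrically; since the primal and dual problems are related by the substitutions $\Phi \leftrightarrow \Phi^\dagger$, $A \leftrightarrow -B$, $\cl{C} \leftrightarrow \cl{C}^\circ$ (with appropriate sign conventions), it suffices to prove implication~1 in detail and then obtain implication~2 by this duality. First I would record weak duality: for any $X \in \cl{A}$ and $Y \in \cl{B}$ we have $\Tr(AX) \leq \Tr(\Phi^\dagger(Y)X) = \Tr(Y\Phi(X)) \leq \Tr(YB)$, where the first inequality uses $\Phi^\dagger(Y) \geq A$ and $X \geq 0$, and the second uses $B - \Phi(X) \in \cl{C}$ together with $Y \in \cl{C}^\circ$. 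Taking suprema and infima gives $\alpha \leq \beta$.

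The core of the argument is to show that strict primal feasibility forces $\alpha \geq \beta$. I would define the set
\begin{align*}
	\cl{S} := \big\{ (\Phi(X) - B + Z, \ \Tr(AX) + t) : X \geq 0, \ Z \in \cl{C}, \ t \geq 0 \big\} \subseteq M_n^{\textup{herm}} \oplus \bb{R},
\end{align*}
which is a convex cone (one checks convexity from convexity of $M_n^+$ and $\cl{C}$). The point $(0, \gamma)$ for $\gamma > \alpha$ should lie outside $\cl{S}$, essentially by definition of $\alpha$ as the optimal primal value: any element of $\cl{S}$ with first coordinate zero corresponds to a feasible $X$, whose objective value is at most $\alpha < \gamma$. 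Applying the separating hyperplane theorem to the point $(0,\gamma)$ and the convex set $\cl{S}$ yields a nonzero functional $(Y, s) \in M_n^{\textup{herm}} \oplus \bb{R}$ and a real number separating them. Unwinding the separation inequality against each generating direction of $\cl{S}$ will show that $s \geq 0$, that $Y \in \cl{C}^\circ$, and that $\Phi^\dagger(Y) \geq sA$.

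The main obstacle, and the only place the Slater-type hypothesis is genuinely used, is ruling out the degenerate case $s = 0$; this is exactly where strict primal feasibility enters. If $s = 0$ then the separating functional would give a nonzero $Y \in \cl{C}^\circ$ with $\Phi^\dagger(Y) \geq 0$ and $\Tr(YB) \leq 0$, and I would derive a contradiction using the strictly feasible $X_0 > 0$ with $B - \Phi(X_0)$ in the interior of $\cl{C}$: the strict interiority forces $\Tr(Y(B - \Phi(X_0))) > 0$ unless $Y = 0$, while $\Phi^\dagger(Y) \geq 0$ and $X_0 > 0$ give $\Tr(Y\Phi(X_0)) = \Tr(\Phi^\dagger(Y)X_0) \geq 0$, and combining these with $\Tr(YB) \leq 0$ contradicts $Y \neq 0$. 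Hence $s > 0$, and rescaling $Y \mapsto Y/s$ produces a genuine dual-feasible point with $\Tr(BY/s) \leq \gamma$. Letting $\gamma \downarrow \alpha$ gives $\beta \leq \alpha$, so $\alpha = \beta$; finiteness of $\beta$ guarantees the infimum defining the rescaled dual point is bounded, and a standard compactness or closedness argument (the dual feasible set intersected with a sublevel set of the bounded objective is compact) shows the infimum is attained, yielding the desired $Y \in \cl{B}$ with $\Tr(BY) = \beta$. Implication~2 then follows by applying implication~1 to the dual program, using Lemma~\ref{lem:sup_cone_duals} and the fact that $\cl{C}^{\circ\circ} = \cl{C}$ for closed convex cones to identify the dual of the dual with the original primal.
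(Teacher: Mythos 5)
The paper itself gives no proof of Theorem~\ref{thm:spSlater}: it is stated as a standard fact from the conic-programming literature and used as a black box, so there is no internal argument to compare yours against. Your overall plan --- weak duality from the two pairings, then a separating-hyperplane argument on a convex set assembled from the constraint map and the objective, with the Slater hypothesis entering only to rule out a vanishing multiplier on the objective coordinate --- is the standard and correct route, and your treatment of the degenerate case $s=0$ is exactly right.

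However, the one object you actually write down is set up with the wrong sign, and the central claim about it fails. With $\cl{S} := \{(\Phi(X)-B+Z,\ \Tr(AX)+t) : X\ge 0,\ Z\in\cl{C},\ t\ge 0\}$, the assertion that $(0,\gamma)\notin\cl{S}$ for $\gamma>\alpha$ is false: take any primal-feasible $X_1$ (one exists by the Slater hypothesis), set $Z = B-\Phi(X_1)\in\cl{C}$ so the first coordinate vanishes, and then $t=\gamma-\Tr(AX_1)\ge 0$ places $(0,\gamma)$ inside $\cl{S}$. Because the primal is a maximization, the objective coordinate must be relaxed \emph{downward}, i.e.\ $\Tr(AX)-t$ with $t\ge 0$, so that a zero first coordinate forces the second coordinate to be at most $\alpha$; the sign also matters when you unwind the separation, since it is what makes the multiplier $s$ on the objective nonnegative rather than nonpositive (with your sign one obtains $\Phi^\dagger(-Y)\ge -|s|A$, which does not normalize to dual feasibility). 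Also, $\cl{S}$ is convex but not a cone, because of the constant $-B$. Two smaller issues: dual attainment does not follow from compactness of ``the dual feasible set intersected with a sublevel set,'' which can be unbounded; the standard fix is to separate $\cl{S}$ from the convex set $\{(0,v):v>\alpha\}$ in a single step, so that the rescaled functional $Y/s$ is dual feasible with $\Tr(BY/s)\le\alpha$, and weak duality then forces equality and attainment with no limiting argument. Finally, deducing implication~2 from implication~1 ``by duality'' is not literal here, because in form~\eqref{sp:form} the primal variable ranges over the positive semidefinite cone while the dual variable ranges over $\cl{C}^\circ$, so the dual is not a program of the same form; the same separation argument works with the roles of the two cones exchanged, but it must be rerun rather than invoked.
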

There are other conditions that imply strong duality, but the conditions of Theorem~\ref{thm:spSlater} (which are known as \emph{Slater-type conditions}) will be sufficient for our needs.
\begin{exam}\label{exam:basic_sdp}{\rm
	Consider the semidefinite program associated with the matrices $\sspp A = \begin{bmatrix}1 & 1 \\ 1 & 1\end{bmatrix}\dsp$ and $\sspp B = \begin{bmatrix}2 & -1 \\ -1 & 2\end{bmatrix}\dsp$ and the linear map $\Phi : M_2 \rightarrow M_2$ defined by
	\begin{align*}\sspp
		\Phi\left(\begin{bmatrix}a & b \\ c & d\end{bmatrix}\right) = \begin{bmatrix}a+d & 0 \\ 0 & -a\end{bmatrix}.
	\dsp\end{align*}
	To see that strict primal feasibility holds, note that $B$ and $Y$ are both positive semidefinite, so $\beta \geq 0$. Furthermore, if we take $\sspp X = \displaystyle\frac{1}{2}\begin{bmatrix}1 & 0 \\ 0 & 1\end{bmatrix} > 0\dsp$ then $\sspp B - \Phi(X) = \displaystyle\frac{1}{2}\begin{bmatrix}2 & -2 \\ -2 & 5\end{bmatrix} > 0\dsp$. To see that strict dual feasibility holds, note that $\alpha$ is finite because the first primal constraint guarantees $x_{11} + x_{22} \leq 2$, which (by positive semidefiniteness of $X$) says that $x_{12} + x_{21} \leq 2$ as well, so $\alpha \leq 4$. Furthermore, if we take $\sspp Y = \begin{bmatrix}3 & 0 \\ 0 & 1\end{bmatrix} > 0\dsp$ then $\sspp \Phi^\dagger(Y) - A = \begin{bmatrix}1 & -1 \\ -1 & 2\end{bmatrix} > 0\dsp$. It follows that the primal and dual problems have the same optimal values (i.e., $\alpha = \beta$), and there are specific $X \in \cl{A}$ and $Y \in \cl{B}$ such that $\Tr(AX) = \Tr(BY) = \alpha = \beta$. We begin by computing $\alpha, \beta$, and $X$ analytically. We then compute $Y$ via numerically via MATLAB.

	If we write $X = (x_{ij})$ and $Y = (y_{ij})$ then the primal and dual forms~\eqref{sp:form} of this semidefinite program simplify as follows:
\begin{align*}\hsp
\begin{matrix}
\begin{tabular}{r l c r l}
\multicolumn{2}{c}{{\bf Primal problem}} & \quad \quad & \multicolumn{2}{c}{{\bf Dual problem}} \\
\text{max.:} & $x_{11} + x_{12} + x_{21} + x_{22}$ & \ \ & \text{min.:} & $2y_{11} - y_{12} - y_{21} + 2y_{22}$ \\
\text{s.t.:} & $\sspp\begin{bmatrix}2 - x_{11} - x_{22} & -1 \\ -1 & 2 + x_{11}\end{bmatrix} \geq 0\hsp$ & \ \ & \text{s.t.:} & $\sspp\begin{bmatrix}y_{11} - y_{22} - 1 & -1 \\ -1 & y_{11} - 1\end{bmatrix} \geq 0\hsp$ \\
\ & $X \geq 0$ & \ & \ & $Y \geq 0$ \\
\end{tabular}
\end{matrix}
\dsp\end{align*}
	If we now use the fact that a $2 \times 2$ matrix $\sspp\begin{bmatrix}a & b \\ \overline{b} & d\end{bmatrix}\dsp$ is positive semidefinite if and only if $ad \geq |b|^2$, we can reduce this semidefinite program to the following:
\begin{align}\label{eq:sdp_exam_2}\hsp
\begin{matrix}
\begin{tabular}{r l c r l}
\multicolumn{2}{c}{{\bf Primal problem}} & \quad \quad & \multicolumn{2}{c}{{\bf Dual problem}} \\
\text{max.:} & $(\sqrt{x_{11}} + \sqrt{x_{22}})^2$ & \ \ & \text{min.:} & $2(y_{11} - \sqrt{y_{11}y_{22}} + y_{22})$ \\
\text{s.t.:} & $x_{11}^2 + 2x_{22} + x_{11}x_{22} \leq 3$ & \ \ & \text{s.t.:} & $(y_{11} - 1)(y_{11} - y_{22}) \geq y_{11}$
\end{tabular}
\end{matrix}
\dsp\end{align}
	Note that we transformed the constraint $X \geq 0$ into $x_{12} + x_{21} \leq 2\sqrt{x_{11}x_{22}}$. Because $x_{12}$ and $x_{21}$ do not appear in the other constraint, we were free to replace $x_{12} + x_{21}$ by $2\sqrt{x_{11}x_{22}}$ in the primal objective function (and similarly for $Y$).
	
	We now use the facts that the constraints and objective functions of~\eqref{eq:sdp_exam_2} are continuous, the variables $x_{11}$ and $x_{22}$ are non-negative, and increasing $x_{11}$ or $x_{22}$ increases the value of the objective function, to see that we can take equality in the constraint of the primal problem. Using Lagrange multipliers on the primal problem then gives the following system of equations:
	\begin{align*}
		1 + \sqrt{\frac{x_{22}}{x_{11}}} + \lambda(2x_{11} + x_{22}) & = 0 \\
		1 + \sqrt{\frac{x_{11}}{x_{22}}} + \lambda(2 + x_{11}) & = 0 \\
		x_{11}^2 + 2x_{22} + x_{11}x_{22} & = 3.
	\end{align*}
	Taking $x_{11}$ times the first equation minus $x_{22}$ times the second equation gives $\lambda = \frac{x_{22} - x_{11}}{2(x_{11}^2 - x_{22})}$. Plugging this into the second equation and using the third equation gives $x_{22} = \frac{3 - x_{11}^2}{2 + x_{11}}$, and plugging that into the first equation finally gives $2x_{11}^5+14x_{11}^4+31x_{11}^3+14x_{11}^2-27x_{11}-24 = 0$. This polynomial has a unique real root at $x_{11} \approx 0.9315$. This in turn gives $x_{22} \approx 0.7274$, so the optimal value $\alpha$ of the semidefinite program is approximately $3.3051$, which is attained by the matrix $\sspp X \approx \begin{bmatrix}0.9315 & 0.8231 \\ 0.8231 & 0.7274\end{bmatrix}\dsp$.

In order to verify our answer, we can solve this semidefinite program in MATLAB and see that the optimal solutions to the primal and dual problems are indeed both approximately $3.3051$. Furthermore, we find the following matrix $Y$ that attains the optimal value in the dual problem:
\begin{align*}
	Y & = \begin{bmatrix}2.1317 & 0.7272 \\ 0.7272 & 0.2480\end{bmatrix}
\end{align*}
Observe that $y_{12} = y_{21} = \sqrt{y_{11}y_{22}}$, as we noted earlier, and plugging these values into the dual objective function of~\eqref{eq:sdp_exam_2} gives
\begin{align*}
	2(y_{11} - \sqrt{y_{11}y_{22}} + y_{22}) \approx 2(2.1317 - \sqrt{2.1317 \times 0.2480} + 0.2480) \approx 3.3051,
\end{align*}
as desired.}
\end{exam}

\section{Computation of the S(k)-Norms}\label{sec:semidefProgramMNorm}

Although we have seen many properties of the $S(k)$-norms, we have not yet discussed the problem of their computation. Proposition~\ref{prop:rankOneNorm} and Theorem~\ref{thm:sk_vector_norm} show that we can compute the $S(k)$-norm of rank-$1$ operators efficiently, since the Schmidt coefficients of a bipartite pure state are easy to compute. However, Corollary~\ref{cor:kPosInf1} shows that the problem of computing the $S(k)$-operator norm of an arbitrary positive operator is equivalent to the problem of determining $k$-block positivity of an arbitrary Hermitian operator and is thus likely very difficult. Furthermore, it has been shown \cite{G03,G10} that computing the $S(1)$-norm is NP-hard, so we don't expect that there is an efficient method for its computation. Nonetheless, we will see some techniques that can be used to compute the $S(k)$-norms in certain cases and in small dimensions, and at least provide nontrivial bounds in general.

One na\"{i}ve method we could use to estimate $\big\|X\big\|_{S(k)}$ would be to simply compute $\big|\bra{w}X\ket{v}\big|$ for several states $\ket{v}, \ket{w} \in \mathbb{C}^m \otimes \mathbb{C}^n$ with $SR(\ket{v}), SR(\ket{w}) \leq k$ and take the largest resulting value. This procedure can be made rigorous via $\varepsilon$-nets \cite{HLW06}, which are finite approximations of the set of states with small Schmidt rank. The downside of this approach is that the number of states in the $\varepsilon$-net grows exponentially in $m+n$ and thus becomes infeasible even for moderately large values of $m$ and $n$. Instead, most of our methods will stem from semidefinite and conic programming.

\subsection{Semidefinite Programs Based on k-Positive Maps}\label{sec:SP_pos_map}

Here we develop a family of semidefinite programs that can be used to provide upper bounds on the $S(k)$-operator norm in general and compute it exactly in low-dimensional cases. Additionally, some simple theoretical results that further establish the link between the $S(k)$-norm and $k$-block positive operators will follow from the duality theory of semidefinite programming.

Given a positive semidefinite operator $X \in (M_m \otimes M_n)^+$ and a natural number $k$, we now present a family of semidefinite programs with the following properties:
\begin{itemize}
	\item Strong duality holds for each semidefinite program.
	\item The optimal value $\alpha$ of each SDP is an upper bound of $\big\|X\big\|_{S(k)}$.
	\item There is a semidefinite program in the family with optimal value $\alpha = \big\|X\big\|_{S(k)}$.
\end{itemize}

Let $X \in (M_m \otimes M_n)^+$ be a positive semidefinite operator for which we wish to compute $\big\|X\big\|_{S(k)}$. Let $\Phi : M_n \rightarrow M_n$ be a fixed $k$-positive linear map and consider the following semidefinite program, where we optimize over density operators $\rho$ in the primal problem and over constants $\lambda \in \bb{R}$ and operators $Y \in (M_m \otimes M_n)^+$ in the dual problem:
\begin{align}\label{sp:matrixNorm}\hsp
\begin{matrix}
\begin{tabular}{r l c r l}
\multicolumn{2}{c}{{\bf Primal problem}} & \quad \quad & \multicolumn{2}{c}{{\bf Dual problem}} \\
\text{max.:} & $\Tr(X\rho)$ & \ \ & \text{min.:} & $\lambda$ \\
\text{s.t.:} & $(id_m \otimes \Phi)(\rho) \geq 0$ & \ \ & \text{s.t.:} & $\lambda I_m \otimes I_n \geq (id_m \otimes \Phi^\dagger)(Y) + X$ \\
\ & $\Tr(\rho) \leq 1$ & \ & \ & $Y \geq 0$ \\
\ & $\rho \geq 0$ & \ & \ & \ \\
\end{tabular}
\end{matrix}
\dsp\end{align}

It may not be immediately obvious that this optimization problem is actually of the form~\eqref{sp:form}, so we first check that these problems are indeed duals of each other and form a valid semidefinite program. To this end, consider the linear map $\Psi : M_m \otimes M_n \rightarrow (M_m \otimes M_n) \oplus M_1$ defined by
\begin{align*}\sspp
  \Psi(\rho) = \begin{bmatrix}-(id_m \otimes \Phi)(\rho) & 0 \\ 0 & \Tr(\rho) \end{bmatrix}.
\dsp\end{align*}
Then the dual map $\Psi^\dagger : (M_m \otimes M_n) \oplus M_1 \rightarrow M_m \otimes M_n$ is given by
\begin{align*}\sspp
  \Psi^\dagger\left( \begin{bmatrix} Y & * \\ * & \lambda \end{bmatrix} \right) = \lambda I_m \otimes I_n - (id_m \otimes \Phi^\dagger)(Y).
\dsp\end{align*}
Finally, setting
\begin{align*}\sspp
  A = X \quad \text{and} \quad B = \begin{bmatrix} 0 & 0 \\ 0 & 1 \end{bmatrix}
\dsp\end{align*}
gives the semidefinite program~\eqref{sp:matrixNorm} in the form~\eqref{sp:form}.

We now show that this program satisfies the Slater-type conditions for strong duality given by Theorem~\ref{thm:spSlater}. It is clear that both $\alpha$ and $\beta$ are finite, as $\Tr(X\rho) \leq \big\| X \big\|$ and $\lambda \geq 0$. Both feasible sets are also non-empty (for example, one could take $\rho$ to be any separable state, $Y = 0$, and $\lambda \geq \big\| X \big\|$). Strong dual feasibility then follows by choosing any $Y > 0$ and a sufficiently large $\lambda$. Strong primal feasibility is not necessarily satisfied, however, as there is no guarantee that $\Phi$ does not introduce singularities in $\rho$ (for example, consider the zero map, which is $k$-positive). We could restrict the family of $k$-positive maps that we are interested in if we really desired strong primal feasibility, but strict dual feasibility is enough for our purposes.

It follows from condition (b) of Theorem~\ref{thm:sch_kpos_maps} that, for any $k$-positive map $\Phi$, the optimal value of the semidefinite program~\eqref{sp:matrixNorm} is an upper bound of $\big\| X \big\|_{S(k)}$ -- the supremum in the primal problem is just being taken over a set that is larger than the set of operators $\rho$ with $SN(\rho) \leq k$. This leads to the following theorem.
\begin{thm}\label{thm:kPosInf2}
	Let $X \in (M_m \otimes M_n)^{+}$. Then
	\begin{align*}
	  \big\| X \big\|_{S(k)} = \inf_{Y} \Big\{ \lambda_{\textup{max}}(X + Y) : Y\text{ is }k\text{-block positive} \Big\}.
	\end{align*}
\end{thm}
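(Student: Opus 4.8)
The plan is to prove the identity by showing that $\|X\|_{S(k)}$ is simultaneously a lower bound for every element of the set $\{\lambda_{\textup{max}}(X+Y) : Y \text{ is } k\text{-block positive}\}$ and a value that is actually attained for a specific choice of $Y$. The crucial ingredient is Corollary~\ref{cor:kPosInf1}, which already packages most of the work: for positive semidefinite $X$ and real $c$, the operator $cI - X$ is $k$-block positive precisely when $c \geq \|X\|_{S(k)}$. Recognizing that this corollary is exactly the right tool is the main conceptual step; once it is in hand the argument is short.

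First I would establish the lower bound. Since $X$ is positive semidefinite, Proposition~\ref{prop:MatMultDiffVectors} writes $\|X\|_{S(k)}$ as a supremum of $\bra{v}X\ket{v}$ over unit vectors with $SR(\ket{v}) \leq k$, and because this set of vectors is compact the supremum is attained at some unit vector $\ket{v_0}$ with $SR(\ket{v_0}) \leq k$. For any $k$-block positive $Y$ we have $\bra{v_0}Y\ket{v_0} \geq 0$ by the definition of $k$-block positivity, so
\[
\lambda_{\textup{max}}(X+Y) \geq \bra{v_0}(X+Y)\ket{v_0} = \|X\|_{S(k)} + \bra{v_0}Y\ket{v_0} \geq \|X\|_{S(k)}.
\]
Taking the infimum over all $k$-block positive $Y$ gives $\inf_Y \lambda_{\textup{max}}(X+Y) \geq \|X\|_{S(k)}$.

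For the reverse inequality I would simply exhibit an optimal $Y$. Setting $c = \|X\|_{S(k)}$ and $Y := cI - X$, Corollary~\ref{cor:kPosInf1} guarantees that $Y$ is $k$-block positive (and it is Hermitian, since $X$ is). For this choice $X + Y = cI$, so $\lambda_{\textup{max}}(X+Y) = c = \|X\|_{S(k)}$, which shows the infimum is at most $\|X\|_{S(k)}$ and is in fact attained. Combining the two bounds yields the claimed equality, and shows the infimum is really a minimum.

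The only point requiring care is the attainment of the supremum defining $\|X\|_{S(k)}$, which is what allows a single maximizing vector $\ket{v_0}$ to be played simultaneously against every $Y$; this rests on compactness of the set of Schmidt-rank-$\leq k$ states and is already used elsewhere (e.g.\ in the proof of Proposition~\ref{prop:orth_proj_norm}). I would also remark that the same identity can be motivated by strong duality (Theorem~\ref{thm:spSlater}) applied to the semidefinite program~\eqref{sp:matrixNorm}, whose dual minimizes $\lambda_{\textup{max}}$ of $X$ plus an operator of the form $(id_m \otimes \Phi^\dagger)(Y)$; the direct argument via Corollary~\ref{cor:kPosInf1} is shorter and avoids having to identify precisely which $k$-block positive operators arise this way.
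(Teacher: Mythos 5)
Your proof is correct, and the attainment half (taking $Y = \|X\|_{S(k)}I - X$ and invoking Corollary~\ref{cor:kPosInf1}) is exactly what the paper does. Where you differ is the lower bound: the paper obtains $\|X\|_{S(k)} \leq \inf_Y \lambda_{\textup{max}}(X+Y)$ by passing through the semidefinite program~\eqref{sp:matrixNorm} --- the dual problem minimizes $\lambda_{\textup{max}}\big(X + (id_m \otimes \Phi^\dagger)(Y)\big)$ over $Y \geq 0$, the operators $(id_m \otimes \Phi^\dagger)(Y)$ are $k$-block positive, and the primal value upper-bounds $\|X\|_{S(k)}$ --- whereas you argue directly by testing $X+Y$ against a (near-)maximizing Schmidt-rank-$\leq k$ vector $\ket{v_0}$ and using $\bra{v_0}Y\ket{v_0} \geq 0$. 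Your route is more elementary and self-contained: it avoids the duality machinery entirely and sidesteps the question of exactly which $k$-block positive operators arise as $(id_m \otimes \Phi^\dagger)(Y)$, a point the paper's phrasing glosses over. (You do not even need compactness here: an $\varepsilon$-approximate maximizer suffices for the lower bound, though attainment is harmless and is used elsewhere in the paper.) What the paper's route buys is the connection to the SDP hierarchy --- it makes clear that each $k$-positive map $\Phi$ yields a computable upper bound on $\|X\|_{S(k)}$ and that the infimum over the family of SDPs recovers the norm exactly --- which is the point of Section~\ref{sec:SP_pos_map}; you correctly note this as an alternative motivation at the end.
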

\begin{proof}
	Because $\Phi$ is $k$-positive if and only if $\Phi^\dagger$ is $k$-positive, the dual problem~\eqref{sp:matrixNorm} can be rephrased as asking for the infimum of $\lambda_{\textup{max}}(X + Y)$, where the infimum is taken over a subset of the $k$-block positive operators $Y \in M_m \otimes M_n$. The preceding paragraph then showed us that
	\begin{align*}
		\big\| X \big\|_{S(k)} \leq \inf_{Y} \Big\{ \lambda_{\textup{max}}(X + Y) : Y\text{ is }k\text{-block positive} \Big\}.
	\end{align*}
	To see that equality is attained, choose $Y = \big\| X \big\|_{S(k)}I - X$, which we know from Corollary~\ref{cor:kPosInf1} is $k$-block positive. Then
	\begin{align*}
		\lambda_{\textup{max}}(X + Y) = \lambda_{\textup{max}}(X + \big\| X \big\|_{S(k)}I - X) = \big\| X \big\|_{S(k)}.
	\end{align*}
\end{proof}

In fact, it is not difficult to see that there is a particular $k$-positive map $\Phi$ such that $\big\| X \big\|_{S(k)}$ is attained as the optimal value of the semidefinite program~\eqref{sp:matrixNorm} corresponding to $\Phi$ -- simply let $\Phi$ be the map associated with the operator $\big\| X \big\|_{S(k)}I - X$ via the Choi--Jamio{\l}kowski isomorphism.

One additional implication of Theorem~\ref{thm:kPosInf2} is that $\big\| X \big\|_{S(k)} \leq \lambda_{\textup{max}}(X + Y)$ for all $X \in (M_m \otimes M_n)^+$ and all $k$-block positive $Y \in M_m \otimes M_n$. The following corollary shows that this can be strengthened into another characterization of $k$-positivity.
\begin{cor}\label{cor:SPcor}
	Let $Y = Y^\dagger \in M_m \otimes M_n$. Then $Y$ is $k$-block positive if and only if
	\begin{align*}
	  \big\| X \big\|_{S(k)} \leq \lambda_{\textup{max}}(X + Y) \quad \forall \, X \in (M_m \otimes M_n)^{+}.
	\end{align*}
\end{cor}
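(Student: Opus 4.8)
The plan is to prove the two directions separately, using Theorem~\ref{thm:kPosInf2} as the main engine for the ``only if'' direction and reducing the ``if'' direction to Corollary~\ref{cor:kPosInf1}. The statement relates $k$-block positivity of a Hermitian operator $Y$ to a family of inequalities involving the $S(k)$-norm, so my goal is to isolate exactly how these inequalities force $Y$ into the cone of $k$-block positive operators.

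For the ``only if'' direction, I would assume $Y$ is $k$-block positive and show $\big\|X\big\|_{S(k)} \leq \lambda_{\textup{max}}(X+Y)$ for every positive semidefinite $X$. This is essentially immediate from Theorem~\ref{thm:kPosInf2}, which expresses $\big\|X\big\|_{S(k)}$ as an infimum of $\lambda_{\textup{max}}(X+Y')$ over all $k$-block positive $Y'$. Since $Y$ is a particular member of the set over which we are taking the infimum, the value $\lambda_{\textup{max}}(X+Y)$ is at least the infimum, which is $\big\|X\big\|_{S(k)}$. This step is the easy half.

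For the ``if'' direction, I would assume the inequality holds for all $X \in (M_m \otimes M_n)^+$ and deduce that $Y$ is $k$-block positive. The natural strategy is to make a clever choice of $X$ that extracts block positivity of $Y$ directly. The cleanest approach is to set $X = \big\|X\big\|_{S(k)}\,$-type expressions aside and instead pick $X$ so that $X+Y$ is forced to have a controlled maximal eigenvalue; alternatively, one can invoke Theorem~\ref{thm:kPosInf2} in reverse. Specifically, I expect it suffices to show that the hypothesis forces $\lambda_{\textup{max}}(X+Y) \geq \big\|X\big\|_{S(k)}$ to hold as an equality-attaining bound that, via the infimum characterization, pins $Y$ into the $k$-block positive cone. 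A more hands-on route: for any state $\ket{v}$ with $SR(\ket{v}) \leq k$, choose $X = c\ketbra{v}{v}$ for large $c > 0$ and analyze the inequality as $c \to \infty$, isolating the sign of $\bra{v}Y\ket{v}$. Here $\big\|c\ketbra{v}{v}\big\|_{S(k)} = c$ by Proposition~\ref{prop:rankOneNorm} and the fact that $\big\|\ket{v}\big\|_{s(k)} = 1$, while $\lambda_{\textup{max}}(c\ketbra{v}{v}+Y)$ must be estimated from below by $\bra{v}(c\ketbra{v}{v}+Y)\ket{v} = c + \bra{v}Y\ket{v}$; combining these should yield $\bra{v}Y\ket{v} \geq 0$ for all such $\ket{v}$, which is precisely $k$-block positivity.

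The main obstacle will be making the asymptotic argument in the ``if'' direction rigorous: the inequality $\big\|X\big\|_{S(k)} \leq \lambda_{\textup{max}}(X+Y)$ gives $c \leq \lambda_{\textup{max}}(c\ketbra{v}{v}+Y)$, and I must show the lower bound $\lambda_{\textup{max}}(c\ketbra{v}{v}+Y) \geq c + \bra{v}Y\ket{v}$ is not merely trivially true but sharp enough, in the large-$c$ limit, to conclude $\bra{v}Y\ket{v} \geq 0$. The Rayleigh-quotient bound $\lambda_{\textup{max}}(c\ketbra{v}{v}+Y) \geq \bra{v}(c\ketbra{v}{v}+Y)\ket{v}$ is the correct tool, but one must verify that as $c \to \infty$ the eigenvector of $c\ketbra{v}{v}+Y$ corresponding to its largest eigenvalue aligns with $\ket{v}$, so that the lower bound is asymptotically exact and the slack cannot absorb a negative $\bra{v}Y\ket{v}$. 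I would handle this either by a first-order perturbation estimate showing $\lambda_{\textup{max}}(c\ketbra{v}{v}+Y) = c + \bra{v}Y\ket{v} + O(1/c)$, or more cleanly by invoking Theorem~\ref{thm:kPosInf2} together with Corollary~\ref{cor:kPosInf1} to avoid the limiting computation entirely.
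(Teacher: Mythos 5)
Your ``only if'' direction is exactly the paper's: $Y$ is a member of the set over which Theorem~\ref{thm:kPosInf2} takes its infimum, so the inequality is immediate. For the ``if'' direction you take a genuinely different route. The paper argues by contrapositive with the test operator $X = cI - Y$ (for $c$ large enough that $X \geq 0$): then $X + Y = cI$, so $\lambda_{\textup{max}}(X+Y) = c$ \emph{exactly}, while a $\ket{v}$ with $SR(\ket{v}) \leq k$ and $\bra{v}Y\ket{v} < 0$ gives $\big\|X\big\|_{S(k)} \geq \bra{v}(cI-Y)\ket{v} = c - \bra{v}Y\ket{v} > c$, violating the hypothesis with no spectral estimates at all. (Equivalently, in the direct form you allude to at the end: the hypothesis applied to $X = cI - Y$ gives $\big\|X\big\|_{S(k)} \leq c$, and Corollary~\ref{cor:kPosInf1} then says $cI - X = Y$ is $k$-block positive.) Your choice $X = c\ketbra{v}{v}$ also works, but note that the Rayleigh-quotient bound $\lambda_{\textup{max}}(c\ketbra{v}{v}+Y) \geq c + \bra{v}Y\ket{v}$ sits on the wrong side of the hypothesis $c \leq \lambda_{\textup{max}}(c\ketbra{v}{v}+Y)$ and combines with it to give nothing; you correctly identify that what is actually needed is an \emph{upper} bound of the form $\lambda_{\textup{max}}(c\ketbra{v}{v}+Y) \leq c + \bra{v}Y\ket{v} + O(1/c)$. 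That bound does hold: writing a unit vector as $\alpha\ket{v} + \beta\ket{w}$ with $\braket{v}{w}=0$ and optimizing over $|\beta|$ gives an error term of order $\|Y\|^2/c$, which for large $c$ contradicts $c \leq \lambda_{\textup{max}}$ whenever $\bra{v}Y\ket{v} < 0$. So your argument closes, but only after carrying out this perturbation estimate; the paper's choice of $X$ buys an exact eigenvalue computation and a two-line proof.
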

\begin{proof}
	The ``only if'' direction of the proof follows immediately from Theorem~\ref{thm:kPosInf2}. To see the ``if'' direction, assume that $Y$ is not $k$-block positive and choose $X = cI - Y$, where $c \in \bb{R}$ is large enough that $cI - Y \geq 0$. Then, because $Y$ is not $k$-block positive, there exists a vector $\ket{v}$ with $SR(\ket{v}) \leq k$ such that $\bra{v}Y\ket{v} < 0$. Thus
	\begin{align*}
		\big\| X \big\|_{S(k)} \geq \bra{v}(cI - Y)\ket{v} = c - \bra{v}Y\ket{v} > c = \lambda_{\textup{max}}(X + Y).
	\end{align*}
\end{proof}

Recall that if $m = 2$ and $n \in \{2,3\}$ then the transpose map $T$ alone is enough to determine whether or not $\rho$ is separable (i.e., $SN(\rho) = 1$ if and only if $(id_m \otimes T)(\rho) \geq 0$) \cite{HHH96}. It follows that the semidefinite program~\eqref{sp:matrixNorm} with $\Phi = T$ can be used to compute $\big\| X \big\|_{S(1)}$ for positive operators $X \in (M_2 \otimes M_3)^+$. That is, the infinite family of semidefinite programs reduces to just a single semidefinite program in this situation.

\subsection{Operator Norms Arising from Other Convex Cones}\label{sec:coneNorms}

We now show that many of the results for the $S(k)$-norms actually hold in the much more general setting of arbitrary closed convex cones of operators.
\begin{defn}\label{defn:coneNorm}
  Let $X \in M_n$ and let $\cl{C} \subseteq M_n^+$ be a closed convex cone such that ${\rm span}(\cl{C}) = M_n$. Then we define the \emph{$\cl{C}$-operator norm} of $X$, denoted $\big\| X \big\|_{\cl{C}}$, by
  \begin{align*}
  	\big\|X\big\|_{\cl{C}} & := \sup_{\rho \in \cl{C}} \Big\{ \big|\Tr(X\rho)\big| : \Tr(\rho) = 1 \Big\}.
  \end{align*}
\end{defn}

It is easy to see that the $\cl{C}$-operator norm is indeed a valid norm. The only nontrivial condition is that $\big\|X\big\|_{\cl{C}} = 0$ if and only if $X = 0$, which follows from the requirement that ${\rm span}(\cl{C}) = M_n$. Observe also that if $\cl{C} = \cl{S}_k$ is the cone of (unnormalized) states with Schmidt number no larger than $k$ and $X \geq 0$, then $\big\|X\big\|_{\cl{C}} = \big\|X\big\|_{S(k)}$. The $\cl{C}$-operator norm was studied independently in the case of bipartite systems (i.e., when $\cl{C} \subset M_m \otimes M_n$) in \cite{SS10}.

It is trivial to see that if $\cl{C} \subseteq \cl{D}$, where $\cl{D}$ is another closed convex cone, then $\big\|X\big\|_{\cl{C}} \leq \big\|X\big\|_{\cl{D}}$ (which gives the familiar inequality $\big\|X\big\|_{S(k)} \leq \big\|X\big\|$ when $\cl{C} = \cl{S}_k$). Additionally, several of the characterizations of the $S(k)$-norms carry over in an obvious way to this more general setting, as we now demonstrate.
\begin{prop}\label{prop:kPosInf1_gen}
  Let $X \in M_n^+$. Then $cI - X \in \cl{C}^{\circ}$ if and only if $c \geq \big\|X\big\|_{\cl{C}}$.
\end{prop}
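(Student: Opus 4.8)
The statement to prove, Proposition~\ref{prop:kPosInf1_gen}, is the direct generalization of Corollary~\ref{cor:kPosInf1} from the cone $\cl{S}_k$ of states with Schmidt number at most $k$ to an arbitrary closed convex cone $\cl{C} \subseteq M_n^+$ with ${\rm span}(\cl{C}) = M_n$. The plan is to mimic the proof of Corollary~\ref{cor:kPosInf1} essentially verbatim, replacing the appeal to Proposition~\ref{prop:kPos} (which characterizes $k$-block positivity via the Schmidt-number cone) with the defining property of the dual cone $\cl{C}^\circ$, and replacing Proposition~\ref{prop:MatMultDiffVectors} with the definition of the $\cl{C}$-operator norm itself.

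First I would unwind the dual cone condition. By definition, $cI - X \in \cl{C}^\circ$ means precisely that $\Tr\big((cI - X)\rho\big) \geq 0$ for every $\rho \in \cl{C}$. Since every nonzero $\rho \in \cl{C}$ is a nonnegative scalar multiple of a normalized element (i.e., one with $\Tr(\rho) = 1$ — note $\Tr(\rho) > 0$ for nonzero $\rho$ because $\cl{C} \subseteq M_n^+$), and the inequality is positively homogeneous in $\rho$, it suffices to check the condition on the normalized elements of $\cl{C}$. Thus $cI - X \in \cl{C}^\circ$ is equivalent to
\begin{align*}
	c - \Tr(X\rho) = \Tr\big((cI - X)\rho\big) \geq 0 \quad \forall \, \rho \in \cl{C} \text{ with } \Tr(\rho) = 1.
\end{align*}

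Next I would rearrange this to $c \geq \Tr(X\rho)$ for all normalized $\rho \in \cl{C}$, and take the supremum over such $\rho$. Because $X \geq 0$ and each $\rho \geq 0$, we have $\Tr(X\rho) \geq 0$, so $\big|\Tr(X\rho)\big| = \Tr(X\rho)$ and the supremum in Definition~\ref{defn:coneNorm} coincides with $\sup_{\rho}\{\Tr(X\rho) : \rho \in \cl{C}, \Tr(\rho) = 1\}$. Hence the condition $c \geq \Tr(X\rho)$ holding for all normalized $\rho \in \cl{C}$ is exactly the statement $c \geq \big\|X\big\|_{\cl{C}}$, which closes both directions of the equivalence simultaneously.

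There is essentially no hard step here — the result is a routine translation of the duality definition. The only point deserving a moment's care is the reduction to normalized elements: one must observe that the zero operator satisfies the inequality trivially and that every nonzero element of $\cl{C} \subseteq M_n^+$ has strictly positive trace, so scaling to $\Tr(\rho) = 1$ loses nothing. The hypothesis that $X \geq 0$ is what lets me drop the absolute value in the norm's definition and identify it with the unsigned supremum; this is the same role played by Proposition~\ref{prop:MatMultDiffVectors} in the original Corollary~\ref{cor:kPosInf1}, and it is worth flagging explicitly so the reader sees why positivity of $X$ is assumed.
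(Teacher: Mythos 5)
Your proof is correct and follows essentially the same route as the paper's: unwind the definition of $\cl{C}^\circ$, use positive homogeneity to reduce to trace-one elements, and observe that positivity of $X$ lets the absolute value in Definition~\ref{defn:coneNorm} be dropped. The only difference is that you spell out the normalization and sign steps that the paper leaves implicit.
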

\begin{proof}
	By definition, $cI - X \in \cl{C}^{\circ}$ if and only if
	\begin{align*}
		\Tr\big((cI - X)\rho\big) = c - \Tr(X\rho) \geq 0 \quad \forall \, \rho \in \cl{C},
	\end{align*}
	which is true if and only if $c \geq \big\|X\big\|_{\cl{C}}$.
\end{proof}

Now let $X \in M_n^{+}$ and consider the following conic program:
\begin{align}\label{sp:coneNorm}\hsp
\begin{matrix}
\begin{tabular}{r l c r l}
\multicolumn{2}{c}{{\bf Primal problem}} & \quad \quad \quad & \multicolumn{2}{c}{{\bf Dual problem}} \\
\text{maximize:} & $\Tr(X\rho)$ & \quad & \text{minimize:} & $\lambda$ \\
\text{subject to:} & $\Tr(\rho) \leq 1$ & \quad & \text{subject to:} & $\lambda I \geq Y + X$ \\
\ & $\rho \in \cl{C}$ & \ & \ & $Y \in \cl{C}^\circ$ \\
\end{tabular}
\end{matrix}
\dsp\end{align}
It is easy to see that these problems are indeed duals of each other and form a valid conic program, using the same method as was used in Section~\ref{sec:SP_pos_map} to show that the semidefinite program~\eqref{sp:matrixNorm} is valid -- we have just not made the restriction that $\cl{C} = (M_m \otimes M_n)^{+}$ and we have replaced the map $id_m \otimes \Phi$ by the identity map. Strong dual duality also holds in this setting. The main difference here is that we have $\rho \in \cl{C}$ and $Y \in \cl{C}^\circ$ rather than $\rho, Y \geq 0$ -- we could have stated the semidefinite program~\eqref{sp:matrixNorm} as a conic program in terms of the cone $\cl{S}_k$, but then it would become less clear how to actually implement the semidefinite programs and compute upper bounds of $\big\|X\big\|_{S(k)}$ using $k$-positive maps.

Just as is the case for the $S(k)$-norms, the theory of semidefinite programming leads to the following two results. We state them without proof, as their proofs are almost identical to the proofs of Theorem~\ref{thm:kPosInf2} and Corollary~\ref{cor:SPcor}, respectively.
\begin{thm}\label{thm:kPosInf2_gen}
	Let $X \in M_n^{+}$. Then
	\begin{align*}
	  \big\| X \big\|_{\cl{C}} = \inf_{Y} \Big\{ \lambda_{\textup{max}}(X + Y) : Y \in \cl{C}^\circ \Big\}.
	\end{align*}
\end{thm}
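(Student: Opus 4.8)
The plan is to mirror the proof of Theorem~\ref{thm:kPosInf2}, since the statement of Theorem~\ref{thm:kPosInf2_gen} is the exact analogue for a general closed convex cone $\cl{C} \subseteq M_n^+$ with ${\rm span}(\cl{C}) = M_n$. The inequality $\big\|X\big\|_{\cl{C}} \leq \inf_Y \{\lambda_{\textup{max}}(X+Y) : Y \in \cl{C}^\circ\}$ should fall out of weak duality for the conic program~\eqref{sp:coneNorm}. Indeed, I would first identify the optimal value of the primal problem in~\eqref{sp:coneNorm} with $\big\|X\big\|_{\cl{C}}$: since $X \in M_n^+$ and $\rho \in \cl{C} \subseteq M_n^+$, the quantity $\Tr(X\rho)$ is nonnegative, so the absolute value in Definition~\ref{defn:coneNorm} is superfluous, and the constraint $\Tr(\rho) \leq 1$ attains its bound at optimality. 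Thus the primal optimal value is exactly $\big\|X\big\|_{\cl{C}}$. The dual optimal value is $\inf\{\lambda : \lambda I \geq Y + X,\ Y \in \cl{C}^\circ\}$, which is precisely $\inf_Y\{\lambda_{\textup{max}}(X+Y) : Y \in \cl{C}^\circ\}$, because for fixed $Y$ the smallest admissible $\lambda$ is $\lambda_{\textup{max}}(X+Y)$.

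With these two identifications in hand, the bulk of the argument is checking that the conic program~\eqref{sp:coneNorm} is a valid instance of the general form~\eqref{sp:form} and that strong duality (Theorem~\ref{thm:spSlater}) applies, so that the primal and dual optimal values coincide. For validity I would invoke the remark in the excerpt that~\eqref{sp:coneNorm} is obtained from~\eqref{sp:matrixNorm} by dropping the map $id_m \otimes \Phi$ (replacing it with the identity) and replacing the semidefinite cone by $\cl{C}$ and $\cl{C}^\circ$; the same bookkeeping with the linear map $\Psi(\rho) = \Tr(\rho) \in M_1$ and the operators $A = X$, $B = \begin{bmatrix} 0 \end{bmatrix}$ shows it fits~\eqref{sp:form}. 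For strong duality I would verify a Slater-type condition: both optimal values are finite (the primal is bounded by $\big\|X\big\|$, and $\lambda \geq 0$ in the dual), both feasible sets are nonempty, and strict dual feasibility holds by choosing any $Y$ in the interior of $\cl{C}^\circ$ together with a sufficiently large $\lambda$ so that $\lambda I > Y + X$. The hypothesis ${\rm span}(\cl{C}) = M_n$ guarantees $\cl{C}^\circ$ has nonempty interior, which is what makes strict dual feasibility available.

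Finally, to promote the inequality to an equality I would exhibit a feasible dual point attaining the value $\big\|X\big\|_{\cl{C}}$, exactly as in Theorem~\ref{thm:kPosInf2}. Taking $Y := \big\|X\big\|_{\cl{C}} I - X$, Proposition~\ref{prop:kPosInf1_gen} tells us $Y \in \cl{C}^\circ$, and then $\lambda_{\textup{max}}(X+Y) = \lambda_{\textup{max}}\big(\big\|X\big\|_{\cl{C}} I\big) = \big\|X\big\|_{\cl{C}}$, so the infimum is attained and equals $\big\|X\big\|_{\cl{C}}$. The main obstacle I anticipate is not any single calculation but rather the careful verification that the Slater-type strong-duality hypotheses of Theorem~\ref{thm:spSlater} genuinely hold in this general conic setting; in particular one must use ${\rm span}(\cl{C}) = M_n$ (equivalently, that $\cl{C}^\circ$ has nonempty interior) to secure strict dual feasibility, since without a full-dimensional cone the interior of $\cl{C}^\circ$ could be empty and the duality gap could fail to close. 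Since the excerpt explicitly states that the proof is ``almost identical'' to that of Theorem~\ref{thm:kPosInf2}, I would keep the write-up brief and lean on Proposition~\ref{prop:kPosInf1_gen} for the attainment step.
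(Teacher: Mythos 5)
Your proposal is correct and follows essentially the same route as the paper, which proves the analogous Theorem~\ref{thm:kPosInf2} by combining the weak-duality inequality with the explicit dual feasible point $Y = \big\|X\big\|_{\cl{C}} I - X$ (here supplied by Proposition~\ref{prop:kPosInf1_gen}). One remark: once that attaining $Y$ is exhibited, the Slater/strong-duality verification is superfluous --- weak duality alone gives the ``$\leq$'' direction and the explicit $Y$ closes the gap --- so the concern about the interior of $\cl{C}^\circ$ (which is in any case nonempty, since $\cl{C} \subseteq M_n^+$ forces $\cl{C}^\circ \supseteq M_n^+$; note that ${\rm span}(\cl{C}) = M_n$ gives pointedness of $\cl{C}^\circ$, not fullness) can be dispensed with entirely.
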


\begin{cor}
	Let $Y = Y^\dagger \in M_n$. Then $Y \in \cl{C}^\circ$ if and only if
	\begin{align*}
	  \big\| X \big\|_{\cl{C}} \leq \lambda_{\textup{max}}(X + Y) \quad \forall \, X \in M_n^{+}.
	\end{align*}
\end{cor}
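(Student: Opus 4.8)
The statement to prove is the final Corollary: for $Y = Y^\dagger \in M_n$, we have $Y \in \cl{C}^\circ$ if and only if $\big\| X \big\|_{\cl{C}} \leq \lambda_{\textup{max}}(X + Y)$ for all $X \in M_n^+$.

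The plan is to mirror the proof of Corollary~\ref{cor:SPcor} exactly, since this is the generalization from the cone $\cl{S}_k$ to an arbitrary closed convex cone $\cl{C}$ with ${\rm span}(\cl{C}) = M_n$. The ``only if'' direction should be immediate from Theorem~\ref{thm:kPosInf2_gen}: if $Y \in \cl{C}^\circ$, then $Y$ is one of the feasible operators in the infimum $\big\| X \big\|_{\cl{C}} = \inf_{Y' \in \cl{C}^\circ}\{\lambda_{\textup{max}}(X + Y')\}$, so the infimum is bounded above by the value at this particular $Y$, giving $\big\| X \big\|_{\cl{C}} \leq \lambda_{\textup{max}}(X + Y)$ for every $X \in M_n^+$.

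For the ``if'' direction I would argue by contrapositive. Suppose $Y \notin \cl{C}^\circ$. By definition of the dual cone there exists some $\rho \in \cl{C}$ with $\Tr(Y\rho) < 0$; normalizing, we may take $\Tr(\rho) = 1$. The idea is to build a positive semidefinite $X$ that witnesses failure of the inequality. Following the template, I would choose $c \in \bb{R}$ large enough that $cI - Y \geq 0$ (possible since $Y$ is Hermitian and bounded), and set $X := cI - Y$. Then $X \in M_n^+$ and $X + Y = cI$, so $\lambda_{\textup{max}}(X + Y) = c$. On the other hand, using the definition of $\big\| X \big\|_{\cl{C}}$ (Definition~\ref{defn:coneNorm}) and the witnessing state $\rho$,
\begin{align*}
	\big\| X \big\|_{\cl{C}} \geq \Tr(X\rho) = \Tr\big((cI - Y)\rho\big) = c\Tr(\rho) - \Tr(Y\rho) = c - \Tr(Y\rho) > c = \lambda_{\textup{max}}(X + Y),
\end{align*}
which violates the hypothesized inequality. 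Hence if the inequality holds for all $X \in M_n^+$, then $Y \in \cl{C}^\circ$.

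I do not anticipate a genuine obstacle here, as the argument is a direct transcription of Corollary~\ref{cor:SPcor} with $\cl{S}_k$ replaced by $\cl{C}$ and $k$-block positivity replaced by membership in $\cl{C}^\circ$; the only places where specific structure of $\cl{S}_k$ was used (the interpretation via Schmidt rank vectors $\ket{v}$) get replaced by the cleaner general fact that $Y \notin \cl{C}^\circ$ directly supplies a separating element $\rho \in \cl{C}$. The one point worth a moment's care is making sure $\big\| X \big\|_{\cl{C}}$ in Definition~\ref{defn:coneNorm} uses the absolute value $|\Tr(X\rho)|$, but since here $\Tr(X\rho) = c - \Tr(Y\rho) > 0$ the absolute value is harmless and the supremum is certainly at least this positive value. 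The appeal to Theorem~\ref{thm:kPosInf2_gen} for the forward direction is the cleanest route, though one could alternatively prove that direction directly from Proposition~\ref{prop:kPosInf1_gen} by noting $\lambda_{\textup{max}}(X+Y)I - (X+Y) \geq 0$ combined with $Y \in \cl{C}^\circ$.
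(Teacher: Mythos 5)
Your proof is correct and matches the paper's intent exactly: the paper explicitly states that this corollary's proof is ``almost identical'' to that of Corollary~\ref{cor:SPcor}, and your argument is precisely that proof with the Schmidt-rank-$k$ witness vector $\ket{v}$ replaced by a normalized witness $\rho \in \cl{C}$ supplied by the failure of $Y \in \cl{C}^\circ$. No gaps; the normalization of $\rho$ and the absolute-value point are handled correctly.
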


We now shift focus to one particularly important family of cones and their related norms. Given any positive linear map $\Phi : M_n \rightarrow M_n$, there exists a natural closed convex cone $\cl{C}_\Phi \subseteq (M_m \otimes M_n)^{+}$ associated with $\Phi$:
\begin{align*}
	\cl{C}_\Phi := \Big\{ X \in (M_m \otimes M_n)^{+} : (id_n \otimes \Phi)(X) \geq 0 \Big\}.
\end{align*}
Any such cone satisfies the hypotheses of Definition~\ref{defn:coneNorm} and hence $\|\cdot\|_{\cl{C}_\Phi}$ is indeed a norm. Furthermore, if $X \geq 0$ then we can compute $\big\|X\big\|_{\cl{C}_\Phi}$ to any desired accuracy via semidefinite programming: $\big\|X\big\|_{\cl{C}_{\Phi}}$ is exactly what is computed by the semidefinite program~\eqref{sp:matrixNorm}. It follows that $\big\|X\big\|_{S(k)} = \inf_{\Phi}\big\{ \big\|X\big\|_{\cl{C}_{\Phi}} : \Phi \text{ is } \text{$k$-positive} \big\}$.

In the case of the transpose map $T : M_n \rightarrow M_n$, $\cl{C}_T$ is the cone of unnormalized PPT states, so the norm $\| \cdot \|_{\cl{C}_T}$ can roughly be thought of as a measure of how close a given operator is to having positive partial transpose. It is known \cite{S09} that the dual cone of the set of PPT states is given by
\begin{align*}
	\cl{C}_{T}^\circ = \Big\{ X = X^\dagger \in M_m \otimes M_n : X = Y + Z \text{ for some } Y \geq 0, Z^\Gamma \geq 0 \Big\},
\end{align*}
where we recall the shorthand notation $Z^\Gamma := (id_m \otimes T)(Z)$. This leads immediately to the following characterization of $\| \rho \|_{\cl{C}_T}$ via Theorem~\ref{thm:kPosInf2_gen}.
\begin{prop}
	Let $\rho \in (M_m \otimes M_n)^+$ be a density operator. Then
	\begin{align*}
	  \| \rho \|_{\cl{C}_T} = \inf_{Y} \Big\{ \lambda_{\textup{max}}(\rho + Y) : Y^\Gamma \geq 0 \Big\}.
	\end{align*}
\end{prop}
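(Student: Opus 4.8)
The plan is to apply the general duality result Theorem~\ref{thm:kPosInf2_gen} to the specific cone $\cl{C}_T$ of unnormalized PPT states, and then simplify the resulting infimum using the known description of its dual cone $\cl{C}_T^\circ$. Since $\rho \in (M_m \otimes M_n)^+$ is a density operator (in particular positive semidefinite), the hypotheses of Theorem~\ref{thm:kPosInf2_gen} are satisfied with $\cl{C} = \cl{C}_T$, so I would begin by writing
\begin{align*}
	\| \rho \|_{\cl{C}_T} = \inf_{W} \Big\{ \lambda_{\textup{max}}(\rho + W) : W \in \cl{C}_T^\circ \Big\}.
\end{align*}
This reduces the problem to showing that optimizing $\lambda_{\textup{max}}(\rho + W)$ over $W \in \cl{C}_T^\circ$ gives the same value as optimizing over the seemingly smaller set $\{ Y : Y^\Gamma \geq 0 \}$.

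The key step is to substitute the stated description $\cl{C}_T^\circ = \{ W = Y + Z : Y \geq 0,\ Z^\Gamma \geq 0 \}$ and argue that the positive semidefinite summand $Y$ is never helpful and can be taken to be zero. First I would show the ``$\leq$'' inequality: the infimum over $\{ Y : Y^\Gamma \geq 0\}$ is taken over a subset of $\cl{C}_T^\circ$ (choose the decomposition with the $Y \geq 0$ part equal to $0$ and $Z = Y^\Gamma$-feasible operator), so the infimum over the larger set $\cl{C}_T^\circ$ can only be smaller or equal. For the reverse ``$\geq$'' inequality, I would take an arbitrary $W = Y + Z \in \cl{C}_T^\circ$ with $Y \geq 0$ and $Z^\Gamma \geq 0$, and observe that since $\rho + Z \leq \rho + Y + Z = \rho + W$ in the positive semidefinite order (because $Y \geq 0$), monotonicity of the maximal eigenvalue under the Loewner order gives $\lambda_{\textup{max}}(\rho + Z) \leq \lambda_{\textup{max}}(\rho + W)$. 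Since $Z^\Gamma \geq 0$, $Z$ is a feasible point of the restricted optimization, so replacing $W$ by $Z$ does not increase the objective while keeping feasibility. Taking infima then yields equality of the two optimal values, establishing the claimed formula.

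The main obstacle, and the only place requiring genuine care, is the monotonicity argument: I must be sure that $\lambda_{\textup{max}}$ is monotone with respect to the Loewner partial order, i.e.\ $A \leq B$ implies $\lambda_{\textup{max}}(A) \leq \lambda_{\textup{max}}(B)$. This is standard (it follows from the variational characterization $\lambda_{\textup{max}}(A) = \sup_{\ket{v}} \bra{v}A\ket{v}$ over unit vectors, together with $\bra{v}A\ket{v} \leq \bra{v}B\ket{v}$ for all $\ket{v}$), but it is worth stating explicitly because the whole reduction hinges on discarding the $Y \geq 0$ component. A secondary technical point is confirming that the decomposition $W = Y + Z$ is such that dropping $Y$ genuinely lands us in the feasible set $\{ Z : Z^\Gamma \geq 0 \}$ — this is immediate from the form of $\cl{C}_T^\circ$, since the summand $Z$ already satisfies $Z^\Gamma \geq 0$ by definition. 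With these two observations in place, the proof is essentially a one-line invocation of Theorem~\ref{thm:kPosInf2_gen} followed by the simplification of the dual cone, matching the remark preceding the proposition that the result follows ``immediately'' from the earlier general theory.
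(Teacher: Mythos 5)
Your proposal is correct and follows exactly the route the paper intends: the paper states only that the result follows ``immediately'' from Theorem~\ref{thm:kPosInf2_gen} together with the description of $\cl{C}_T^\circ$, and your argument is precisely the natural fleshing-out of that claim. The one detail the paper leaves implicit --- that the $Y \geq 0$ summand in the decomposition $W = Y + Z$ can be discarded because $\lambda_{\textup{max}}$ is monotone under the Loewner order --- is exactly the step you supply, and you supply it correctly.
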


\subsection{Computation Based on States with Symmetric Extensions}\label{sec:SP_shareable}

One of the disadvantages of the semidefinite programs of Section~\ref{sec:SP_pos_map} was that they required you to have a good selection of $k$-positive maps at your disposal to get good upper bounds. Furthermore, it generally is not clear how close the optimal value of one of the semidefinite programs is to the true value of $\big\|X\big\|_{S(k)}$. We tackle both of these problems in this section by presenting a different family of semidefinite programs that can be used to compute $\big\|X\big\|_{S(k)}$, using the ideas of \cite{DPS04}.

Much like before, strong duality holds for each semidefinite programs to be presented, and each semidefinite program returns an upper bound on $\big\|X\big\|_{S(k)}$. However, the semidefinite programs of this section also have the following properties:
\begin{itemize}
	\item the family of semidefinite programs is indexed by the nonnegative integers. There is no need for $k$-positive maps in the construction of the semidefinite programs;
	\item if $\alpha_s$ is the optimal value of the $s$-th SDP, then $\alpha_1 \geq \alpha_2 \geq \dots \geq \big\|X\big\|_{S(k)}$; and
	\item $\displaystyle\lim_{s\rightarrow\infty} \alpha_s = \big\|X\big\|_{S(k)}$ and we can bound the difference of $\alpha_s$ and $\big\|X\big\|_{S(k)}$.
\end{itemize}

The $s$-th semidefinite program in our infinite family is based on states with an $s$-bosonic symmetric extension, so many of the properties of such states presented in Section~\ref{sec:shareable} will have natural analogues here. For example, it is generally the case that $\alpha_s \gneq \big\|X\big\|_{S(k)}$ for all $s$, but in fact we have $\alpha_1 = \big\|X\big\|_{S(k)}$ in the case when $0 \leq X \in M_2 \otimes M_2$.

\subsubsection*{The Family of SDPs in the k = 1 Case}\label{sec:SP_shareable_S1}

We begin by presenting the family of semidefinite programs in the $k = 1$ case, since their construction is significantly simpler in this case. Let $0 \leq X \in M_n \otimes M_n$ be an operator whose $S(1)$-norm we wish to calculate. Let $s \geq 1$ and consider the following semidefinite program where we optimize in the primal problem over $\rho \in M_n^{\otimes (s+1)}$ and we optimize in the dual problem over operators $W \in M_n^{\otimes (s+1)}$. We use $\Tr_{[s-1]}(\cdot)$ to denote the partial trace over the first $s-1$ copies of $M_n$. Furthermore, we define $P_{\cl{S}_{s}}$ to be the symmetric projection on $(\bb{C}^n)^{\otimes s}$ and, for brevity, we define $P := (P_{\cl{S}_{s}} \otimes I) \in M_n^{\otimes (s+1)}$.
\begin{align}\label{sp:S1_share_sdp}
\begin{matrix}
\begin{tabular}{r l c r l}
\multicolumn{2}{c}{{\bf Primal problem}} & \quad \quad & \multicolumn{2}{c}{{\bf Dual problem}} \\
\text{max.:} & $\Tr(X \Tr_{[s-1]}(\rho))$ & \ \ & \text{min.:} & $\big\|P((I_n^{\otimes (s-1)} \otimes X) + W^\Gamma)P\big\|$ \\
\text{s.t.:} & $P \rho P = \rho$ & \ \ & \text{s.t.:} & $W \geq 0$ \\
\ & $\Tr(\rho) \leq 1$ & \ & \ & \ \\
\ & $\rho, \rho^\Gamma \geq 0$ & \ & \ & \ \\
\end{tabular}
\end{matrix}
\end{align}

Note that we may choose the partial transpotion in the condition $\rho^\Gamma \geq 0$ to be with respect to any subsystems of our choosing, but we will choose the transposition to take place on the last $\lfloor s/2 \rfloor$ subsystems, as this will allow us to use existing results to compute error bounds for this semidefinite program. Also observe that the semidefinite program~\eqref{sp:S1_share_sdp} has an equality constraint that is not present in the general form of semidefinite (or conic) programs~\eqref{sp:form} presented earlier. This is not a problem, however, as any semidefinite program together with equality constraints can be transformed into a semidefinite program with only inequality constraints \cite{Wat04}.

The operator $W$ in the semidefinite program~\eqref{sp:S1_share_sdp} acts as a ``witness'' that proves an upper bound on the $S(1)$-norm, much like entanglement witnesses prove that a state is entangled. The witness $W$ can also be thought of as playing a role that is dual to the states $\ket{v}$ in the supremum that defines $\big\|X\big\|_{S(1)}$: while any given separable state $\ket{v}$ proves a lower bound $\bra{v}X\ket{v}$ on $\big\|X\big\|_{S(1)}$, any given $W \geq 0$ proves the \emph{upper} bound $\big\|P((I_n^{\otimes (s-1)} \otimes X) + W^\Gamma)P\big\|$.

To see that the optimization problems~\eqref{sp:S1_share_sdp} are indeed duals of each other and can be put in form~\eqref{sp:form}, let $A = P(I_n^{\otimes (s-1)} \otimes X)P \in M_n^{\otimes (s+1)}$ and define $\Phi : M_n^{\otimes (s+1)} \rightarrow M_1 \oplus (M_n^{\otimes (s+1)})^{\oplus 3}$ and $B \in M_1 \oplus (M_n^{\otimes (s+1)})^{\oplus 3}$ by
\begin{align*}\sspp
	\Phi(\rho) = \begin{bmatrix}\Tr(P\rho P) & 0 & 0 & 0 \\ 0 & -(P\rho P)^{\Gamma} & 0 & 0 \\ 0 & 0 & P\rho P - \rho & 0 \\ 0 & 0 & 0 & \rho - P\rho P \end{bmatrix} \ \ \text{ and } \ \ B = \begin{bmatrix}1 & 0 & 0 & 0 \\ 0 & 0 & 0 & 0 \\ 0 & 0 & 0 & 0 \\ 0 & 0 & 0 & 0 \end{bmatrix}.
\dsp\end{align*}

Then
\begin{align*}
	\Tr(A\rho) = \Tr\big(P(I_n^{\otimes (s-1)} \otimes X)P \rho\big) = \Tr\big((I_n^{\otimes (s-1)} \otimes X) \rho\big) = \Tr(X \Tr_{[s-1]}(\rho)),
\end{align*}
so the primal problem associated with this choice of $\Phi$, $A$, and $B$ is indeed the primal problem of~\eqref{sp:S1_share_sdp}. To see that the dual problem is as claimed, note that
\begin{align*}\sspp
	\Phi^\dagger\left(\begin{bmatrix}\lambda & * & * & * \\ * & W & * & * \\ * & * & Z_1 & * \\ * & * & * & Z_2 \end{bmatrix}\right) = P (\lambda I - W^\Gamma + Z_1 - Z_2) P - Z_1 + Z_2,
\dsp\end{align*}
where we have used $*$ to denote entries in the nullspace of $\Phi^\dagger$. Thus the dual problem becomes:
\begin{align*}
\begin{matrix}
\begin{tabular}{r l}
\text{min.:} & $\lambda$ \\
\text{s.t.:} & $\lambda P \geq P((I_n^{\otimes (s-1)} \otimes X) + W^\Gamma - Z_1 + Z_2)P + Z_1 - Z_2$ \\
\ & $W,Z_1,Z_2 \geq 0$ \\
\end{tabular}
\end{matrix}
\end{align*}

To simplify the above problem, define $Z = Z_1 - Z_2$ to be a general Hermitian operator. It is straightforward to see that the $Z - P Z P$ portion of the above constraint cannot serve to decrease $\lambda$, so we can choose $Z = 0$ without loss of generality. The dual problem thus simply asks to minimize the maximal eigenvalue of $P((I_n^{\otimes (s-1)} \otimes X) + W^\Gamma)P$. In other words, it asks to minimize $\big\|P((I_n^{\otimes (s-1)} \otimes X) + W^\Gamma)P\big\|$, as claimed.

To see the strong duality holds for each of the given semidefinite programs, we note that we could write the program as a conic program over the cone $\cl{C} := \{ \rho : P \rho P = \rho, \rho \geq 0 \}$. Then the state $\rho = P$ (appropriately normalized) satisfies all of the equality constraints of the primal problem, is in the relative interior of the cone $\cl{C}$, and satisfies the remaining primal inequality ($\rho^\Gamma \geq 0$) strictly. Slater's condition then tells us that strong duality holds, so the primal and dual pair of semidefinite programs~\eqref{sp:S1_share_sdp} have the same optimal value.

Note that in the semidefinite program~\eqref{sp:S1_share_sdp} we included just a single partial transposition constraint, $\rho^\Gamma \geq 0$ (where we recall that this partial transpose is with respect to the $\lceil s/2 \rceil - \lfloor s/2 \rfloor$ cut). We could have included partial transpose constraints with respect to other cuts as well. However, we will see that some information about how quickly $\alpha_s$ approaches $\big\|X\big\|_{S(1)}$ is known in this case of just one partial transpose. The optimal values in the case of multiple partial transposes certainly approach $\big\|X\big\|_{S(1)}$ at least as quickly, but it is not known if they approach strictly faster (i.e., if the additional computational overhead is really worth it). Furthermore, we will see that the above semidefinite program is only solvable on current hardware up to about $s = 3$ anyway, at which point there are only two independent partial transposition conditions.

Based on Proposition~\ref{prop:MatMultDiffVectors} and the fact that the set of states with an $s$-symmetric extension approaches the set of separable states as $s \rightarrow \infty$, it is clear that $\alpha_1 \geq \alpha_2 \geq \cdots \geq \big\|X\big\|_{S(1)}$ and $\displaystyle\lim_{s\rightarrow\infty}\alpha_s = \big\|X\big\|_{S(1)}$. In fact, for operators $0 \leq X \in M_2 \otimes M_2$ we even have $\alpha_1 = \big\|X\big\|_{S(1)}$ because of the fact that $\rho^\Gamma \geq 0$ if and only if $\rho$ is separable in this case.

One variant of the semidefinite programs~\eqref{sp:S1_share_sdp} that is particularly useful is the one that arises by removing the partial transposition requirement $\rho^\Gamma \geq 0$. In this case, the optimal values of the semidefinite programs still approach $\big\|X\big\|_{S(1)}$ from above (albeit more slowly in general), but the dual problem simplifies to simply asking for the value of $\big\|P(I_n^{\otimes {s-1}} \otimes X)P\big\|$, and thus we don't even need to use semidefinite programming techniques to find the optimal values. Indeed, we simply have
\begin{align}\label{eq:s1_share_no_PT}
	\big\|X\big\|_{S(1)} = \lim_{s\rightarrow\infty} \big\|P(I_n^{\otimes {s-1}} \otimes X)P\big\|.
\end{align}

\subsubsection*{Error Bounds in the k = 1 Case}\label{sec:SP_error_bounds}

One of the biggest advantages of the family of semidefinite programs based on symmetric extensions over the semidefinite programs based on positive maps is that there are explicit bounds on how far away the optimal value of the $s$-th semidefinite program is from $\big\|X\big\|_{S(1)}$ in this setting. In the statement of the following theorem, $\alpha_s$ is the optimal value of the semidefinite program~\eqref{sp:S1_share_sdp} and $\beta_s := \big\|P(I_n^{\otimes {s-1}} \otimes X)P\big\|$ is the optimal value of the same semidefinite program without the partial transposition constraint. Also, the quantity $g_s$ is defined as in \cite{NOP09} by
\begin{align*}
	g_s := \begin{cases}\min\big\{ 1 - x : P^{(n-2,0)}_{s/2+1}(x) = 0 \big\} & \text{if $s$ is even} \\
	\min\big\{ 1 - x : P^{(n-2,1)}_{(s+1)/2}(x) = 0 \big\} & \text{if $s$ is odd} \end{cases},
\end{align*}
where $P^{(\alpha,\beta)}_n(x)$ are the Jacobi polynomials \cite{AS72}.
\begin{thm}\label{thm:sdp_error}
	Let $0 \leq X \in M_n \otimes M_n$. Then
	\begin{align*}
		\alpha_s & \geq \big\|X\big\|_{S(1)} \geq \left(1 - \frac{n g_s}{2(n-1)}\right) \alpha_s + \frac{g_s}{2(n-1)}\lambda_{\textup{min}}(X) \ \ \text{ and } \\
		\beta_s & \geq \big\|X\big\|_{S(1)} \geq \frac{s}{n+s} \beta_s + \frac{1}{n+s}\lambda_{\textup{min}}(X),
	\end{align*}
	where $\lambda_{\textup{min}}(X)$ is the minimal eigenvalue of $X$.
\end{thm}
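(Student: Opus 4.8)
The plan is to bound $\big\|X\big\|_{S(1)}$ by relating the value of an optimization over $s$-bosonic symmetric extensions (possibly with a partial transpose constraint) to the value of the genuine separable optimization. The left inequalities in both chains are immediate: any separable state (equivalently, any state with $SN \leq 1$) is a feasible point for the relaxed programs, so $\alpha_s \geq \big\|X\big\|_{S(1)}$ and $\beta_s \geq \big\|X\big\|_{S(1)}$ both follow directly from Proposition~\ref{prop:MatMultDiffVectors}, which says that $\big\|X\big\|_{S(1)} = \sup_\rho\{\Tr(X\rho) : SN(\rho) \leq 1\}$, together with the fact that separable states have $s$-bosonic symmetric extensions (Equation~\eqref{eq:sep_extension}) that satisfy all the constraints of the two programs, including positive partial transpose. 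So the real content is in the two right inequalities, which are quantitative reverse estimates.

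The key idea for the right inequalities is a \emph{de Finetti-type} approximation. A state $\rho$ feasible for the primal problem of~\eqref{sp:S1_share_sdp} has an $s$-bosonic symmetric extension, and its reduced state $\sigma := \Tr_{[s-1]}(\rho) \in M_n \otimes M_n$ therefore admits a bosonic symmetric extension to $s+1$ systems. The quantitative de Finetti theorems of \cite{NOP09} say precisely that such a $\sigma$ is close to a separable state, with the closeness governed by the spectral gap constant $g_s$ (for the partial-transpose-constrained program) and by the $\frac{s}{n+s}$ factor (for the plain program). First I would invoke the relevant approximation result to write $\sigma$ as a convex combination $\sigma = (1-t)\sigma_{\mathrm{sep}} + t\,\tau$, where $\sigma_{\mathrm{sep}}$ is separable, $\tau$ is some (normalized) state, and $t$ is the small parameter ($t = \frac{n g_s}{2(n-1)}$ in the first case, $t = \frac{n}{n+s}$ in the second). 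Applying the objective functional $\Tr(X\,\cdot)$ and using $\Tr(X\sigma_{\mathrm{sep}}) \leq \big\|X\big\|_{S(1)}$ together with the trivial lower bound $\Tr(X\tau) \geq \lambda_{\textup{min}}(X)$ then rearranges into exactly the stated inequalities. The partial transpose constraint is what lets one use the stronger $g_s$-based bound of \cite{NOP09} rather than the cruder $\frac{s}{n+s}$ bound that applies to bosonic extensions without the PT condition.

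More concretely, for the $\beta_s$ chain, since $\beta_s = \big\|P(I_n^{\otimes(s-1)} \otimes X)P\big\|$ is attained at a symmetric eigenvector, one extracts from it a bosonic-symmetric state whose two-party reduction $\sigma$ satisfies $\Tr(X\sigma) = \beta_s$; the finite quantum de Finetti bound then gives a separable $\sigma_{\mathrm{sep}}$ with $\|\sigma - \sigma_{\mathrm{sep}}\|_{tr}$ controlled by $\tfrac{n}{n+s}$, and the decomposition above yields
\begin{align*}
	\beta_s = \Tr(X\sigma) \leq \frac{n}{n+s}\lambda_{\textup{max}}(X) + \frac{s}{n+s}\big\|X\big\|_{S(1)},
\end{align*}
which I would rearrange into the claimed form $\big\|X\big\|_{S(1)} \geq \frac{s}{n+s}\beta_s + \frac{1}{n+s}\lambda_{\textup{min}}(X)$ after carefully tracking which extremal eigenvalue appears; the same scheme with $g_s$ handles $\alpha_s$.

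The main obstacle I anticipate is matching the exact constants, and in particular importing the de Finetti estimates of \cite{NOP09} in the precise normalization used here. The bounds in that reference are typically stated as trace-norm distances between a symmetrically-extendible reduced state and the separable set, and converting such a distance bound into the affine ``convex-combination'' form with the specific coefficients $\frac{n g_s}{2(n-1)}$ and $\frac{n}{n+s}$ requires care about whether the estimate is one- or two-sided and about the role of the partial transpose in sharpening $\frac{s}{n+s}$ to the $g_s$-dependent rate. I would therefore spend most of the effort verifying that the weights interpolating between $\big\|X\big\|_{S(1)}$ and $\lambda_{\textup{min}}(X)$ are exactly those produced by the cited approximation theorems, rather than re-deriving the de Finetti bounds themselves, which I would treat as black boxes.
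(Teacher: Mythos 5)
Your left inequalities and your overall strategy (import the quantitative separability results of \cite{NOP09} and then apply $\Tr(X\,\cdot\,)$) match the paper's proof, but the way you import those results contains a genuine error: you have the convex combination in the wrong direction. The theorems of \cite{NOP09} used here do \emph{not} say that the extendible state $\sigma$ can be written as $(1-t)\sigma_{\mathrm{sep}} + t\tau$ with $\sigma_{\mathrm{sep}}$ separable and $\tau$ a state; they say that the specific mixture $\sigma_{\mathrm{sep}} := (1-t)\sigma + t\tau$ with $\tau = \tfrac{1}{n}\Tr_2(\sigma)\otimes I_n$ (and $t = \tfrac{n}{n+s}$, resp.\ the $g_s$-weighted analogue) is itself separable. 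Your version would require $\sigma - (1-t)\sigma_{\mathrm{sep}} \geq 0$, which is not what is proved there, and more importantly it produces the wrong inequality: from $\sigma = (1-t)\sigma_{\mathrm{sep}} + t\tau$ you correctly get
\begin{align*}
\beta_s = \Tr(X\sigma) \leq \tfrac{s}{n+s}\big\|X\big\|_{S(1)} + \tfrac{n}{n+s}\lambda_{\textup{max}}(X),
\end{align*}
which rearranges to $\big\|X\big\|_{S(1)} \geq \tfrac{n+s}{s}\beta_s - \tfrac{n}{s}\lambda_{\textup{max}}(X)$. No amount of ``carefully tracking which extremal eigenvalue appears'' turns this into the claimed $\big\|X\big\|_{S(1)} \geq \tfrac{s}{n+s}\beta_s + \tfrac{1}{n+s}\lambda_{\textup{min}}(X)$: the two bounds have different coefficients on $\beta_s$ and the spectral term enters with the opposite sign, so your bound is strictly weaker and can even be vacuous when $\lambda_{\textup{max}}(X)$ is large.

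The correct argument runs the combination the other way. Since $\sigma_{\mathrm{sep}} = \tfrac{s}{n+s}\sigma + \tfrac{1}{n+s}\Tr_2(\sigma)\otimes I_n$ is separable, one has
\begin{align*}
\big\|X\big\|_{S(1)} \geq \Tr(X\sigma_{\mathrm{sep}}) = \tfrac{s}{n+s}\Tr(X\sigma) + \tfrac{1}{n+s}\Tr\big(X(\Tr_2(\sigma)\otimes I_n)\big) \geq \tfrac{s}{n+s}\beta_s + \tfrac{1}{n+s}\lambda_{\textup{min}}(X),
\end{align*}
after taking the supremum over feasible $\sigma$; the $\alpha_s$ chain is identical with the $g_s$-weighted mixture from \cite[Theorem~3]{NOP09}. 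So the fix is small in the sense that you cite the right external results, but the step you deferred as a constant-matching exercise is exactly where the proof lives, and as written it does not go through.
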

\begin{proof}
	We have already seen why the left inequalities hold, so we only need to show the right inequalities. We begin with the second inequality. It is known \cite[Theorem~2]{NOP09} that if $\rho$ has an $s$-bosonic symmetric extension, then
	\begin{align}\label{eq:s1_separable_op}
		\frac{s}{n+s}\rho + \frac{1}{n+s}\Tr_2(\rho) \otimes I_n
	\end{align}
	is separable. Then
	\begin{align*}
		\big\|X\big\|_{S(1)} & = \sup_{\sigma} \big\{ \Tr(X\sigma) : \sigma \text{ is separable} \big\} \\
		& \geq \sup_{\rho} \Big\{ \Tr\big(X(\tfrac{s}{n+s}\rho + \tfrac{1}{n+s}\Tr_2(\rho) \otimes I_n)\big) : \rho \text{ has $s$-BSE} \Big\} \\
		& = \sup_{\rho} \Big\{ \frac{s}{n+s} \Tr(X\rho) + \frac{1}{n+s}\Tr\big(X(\Tr_2(\rho) \otimes I_n)\big) : \rho \text{ has $s$-BSE} \Big\} \\
		& \geq \frac{s}{n+s} \beta_s + \frac{1}{n+s}\lambda_{\textup{min}}(X).
	\end{align*}

The corresponding inequality for $\alpha_s$ follows similarly from using \cite[Theorem~3]{NOP09}, which says that if $\rho$ has as $s$-bosonic symmetric extension with positive partial transpose with respect to the $\lceil s/2 \rceil - \lfloor s/2 \rfloor$ cut, then
\begin{align*}
	\left(1 - \frac{n g_s}{2(n-1)}\right)\rho + \frac{g_s}{2(n-1)}\Tr_2(\rho) \otimes I_n
\end{align*}
is separable. Then
	\begin{align*}
		\big\|X\big\|_{S(1)} & = \sup_{\sigma} \big\{ \Tr(X\sigma) : \sigma \text{ is separable} \big\} \\
		& \geq \sup_{\rho} \Big\{ \Tr\big(X((1 - \tfrac{n g_s}{2(n-1)})\rho + \tfrac{g_s}{2(n-1)}\Tr_2(\rho) \otimes I_n)\big) : \rho \text{ has $s$-PPT BSE} \Big\} \\
		& = \sup_{\rho} \Big\{ \big(1 - \tfrac{n g_s}{2(n-1)}\big) \Tr(X\rho) + \tfrac{g_s}{2(n-1)}\Tr\big(X(\Tr_2(\rho) \otimes I_n)\big) : \rho \text{ has $s$-PPT BSE} \Big\} \\
		& \geq \left(1 - \frac{n g_s}{2(n-1)}\right) \alpha_s + \frac{g_s}{2(n-1)}\lambda_{\textup{min}}(X).
	\end{align*}
\end{proof}

Since $\displaystyle \lim_{s\rightarrow\infty} g_s = 0$, it is clear that the bounds of Theorem~\ref{thm:sdp_error} all approach $\big\|X\big\|_{S(1)}$ as $s \rightarrow \infty$.

\subsubsection*{The Family of SDPs in the k > 1 Case}\label{sec:SP_shareable_S2}

The family of semidefinite programs used to compute $\big\|X\big\|_{S(1)}$ can be modified using the techniques of Section~\ref{sec:shareable_gen} to compute $\big\|X\big\|_{S(k)}$ for arbitrary $k$. As before, fix $s \geq 1$. We now consider the following semidefinite program where we optimize in the primal problem over $\tilde{\rho} \in (M_k \otimes M_n)^{\otimes (s+1)}$ and we optimize in the dual problem over operators $W \in (M_k \otimes M_n)^{\otimes (1+2)}$. We use $\Tr_{[s-1]}(\cdot)$ to denote the partial trace over the first $s-1$ copies of $(M_k \otimes M_n)$. Furthermore, we define $P_{\cl{S}_{s}}$ to be the symmetric projection on $(\bb{C}^k \otimes \bb{C}^n)^{\otimes s}$ and, for brevity, we define $P := (P_{\cl{S}_{s}} \otimes I) \in (M_k \otimes M_n)^{\otimes (s+1)}$.
\begin{align}\label{sp:Sk_share_sdp}
\begin{matrix}
\begin{tabular}{r l}
\multicolumn{2}{c}{{\bf Primal problem}} \\
\text{max.:} & $\Tr((\ketbra{\psi_+}{\psi_+} \otimes X) \Tr_{[s-1]}(\tilde{\rho}))$ \\
\text{s.t.:} & $P \tilde{\rho} P = \tilde{\rho}$ \\
\ & $\Tr\big( (\bra{\psi_+} \otimes I) \Tr_{[s-1]}(\tilde{\rho}) (\ket{\psi_+} \otimes I) \big) = 1$ \\
\ & $\tilde{\rho}, \tilde{\rho}^\Gamma \geq 0$ \\
\end{tabular}
\end{matrix}
\end{align}

We do not give explicit details to show that the above optimization problem is indeed a semidefinite program for two reason. Firstly, the details are almost exactly the same as in the $k = 1$ case. Secondly, we will now show that these semidefinite programs in the $k > 1$ case are not of much practical computational use anyway.

\subsubsection*{Error Bounds in the k > 1 Case}\label{sec:SP_error_boundsk2}

We now compute bounds on how far away the optimal value of the semidefinite program~\eqref{sp:Sk_share_sdp} can be from $\big\|X\big\|_{S(k)}$. As in the $k = 1$ case, $\alpha_s$ is the optimal value of the semidefinite program~\eqref{sp:Sk_share_sdp} and $\beta_s$ is the optimal value of the same semidefinite program without the partial transposition constraint. The quantity $g_s$ is also the same as it was before.
\begin{thm}\label{thm:sdp_error_k2}
	Let $0 \leq X \in M_n \otimes M_n$. Then
	\begin{align*}
		\alpha_s & \geq \big\|X\big\|_{S(k)} \geq \left(1 - \frac{n^2 g_s}{(2 + g_s n)(n-1)}\right) \alpha_s + \frac{g_s}{(2 + g_s n)(n-1)}\Tr(X) \ \ \text{ and } \\
		\beta_s & \geq \big\|X\big\|_{S(k)} \geq \frac{s}{n^2 + s} \beta_s + \frac{1}{n^2 + s} \Tr(X).
	\end{align*}
\end{thm}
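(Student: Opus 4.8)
The plan is to mirror the structure of the proof of Theorem~\ref{thm:sdp_error} (the $k=1$ case), which relied on two facts: an explicit separable-state construction from a symmetric extension, and the characterization of $\big\|X\big\|_{S(k)}$ as a supremum over appropriate states. The key bridge for $k>1$ is the machinery of Section~\ref{sec:shareable_gen}. By Proposition~\ref{prop:sep_to_sk}, an operator $\sigma$ satisfies $SN(\sigma) \leq k$ if and only if there is a separable operator $\tilde{\sigma}$ on the extended space $(M_{A'} \otimes M_A) \otimes (M_{B'} \otimes M_B)$ with ${\rm dim}(\bb{C}^{A'}), {\rm dim}(\bb{C}^{B'}) \leq k$ such that $(\bra{\psi_+}_{A'B'} \otimes I_{AB})\tilde{\sigma}(\ket{\psi_+}_{A'B'} \otimes I_{AB}) = \sigma$. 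Combined with Proposition~\ref{prop:MatMultDiffVectors}, which gives $\big\|X\big\|_{S(k)} = \sup_\rho \{ \Tr(X\rho) : SN(\rho)\leq k\}$, this lets me re-express $\big\|X\big\|_{S(k)}$ as a supremum of $\Tr\big((\ketbra{\psi_+}{\psi_+} \otimes X)\tilde{\sigma}\big)$ over separable $\tilde{\sigma}$ on the doubled space, exactly the objective appearing in the semidefinite program~\eqref{sp:Sk_share_sdp}.

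First I would establish the left inequalities $\alpha_s \geq \big\|X\big\|_{S(k)}$ and $\beta_s \geq \big\|X\big\|_{S(k)}$; these follow because relaxing separability of $\tilde{\sigma}$ to merely having an $s$-(PPT-)bosonic symmetric extension enlarges the feasible set, and dropping the partial-transpose constraint enlarges it further. For the right inequalities I would again invoke the separability-extraction results of Navascu\'es--Owari--Plenio \cite{NOP09} used in Theorem~\ref{thm:sdp_error}, but now applied on the tensor-doubled space $(M_k \otimes M_n)^{\otimes(s+1)}$, whose local dimension is $n^2$ rather than $n$ (this is precisely why $n$ is replaced by $n^2$ in the $\beta_s$ bound and why the $g_s$-coefficient is rescaled in the $\alpha_s$ bound). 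Concretely, if $\tilde{\rho}$ is a feasible point with an $s$-bosonic symmetric extension, then $\frac{s}{n^2+s}\tilde{\rho} + \frac{1}{n^2+s}\Tr_2(\tilde{\rho})\otimes I$ is separable across the doubled cut, and I would push this separable operator through the $(\bra{\psi_+}\otimes I)(\cdot)(\ket{\psi_+}\otimes I)$ map of Proposition~\ref{prop:sep_to_sk} to obtain an operator with $SN \leq k$ whose $\Tr(X \cdot)$ value lower-bounds $\big\|X\big\|_{S(k)}$.

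The one genuinely new computation is tracking what the correction term $\frac{1}{n^2+s}\Tr_2(\tilde{\rho})\otimes I$ becomes after applying the $\ket{\psi_+}$-contraction. A short calculation like the one at the end of Section~\ref{sec:sk_vector_dual_norm} should show that contracting $\Tr_2(\tilde{\rho}) \otimes I$ against $\ketbra{\psi_+}{\psi_+} \otimes X$ reduces, up to normalization, to a multiple of $\Tr(X)$ rather than $\lambda_{\textup{min}}(X)$ — this is the reason the $\lambda_{\textup{min}}(X)$ appearing in Theorem~\ref{thm:sdp_error} is replaced by $\Tr(X)$ here. I expect this bookkeeping, together with correctly propagating the normalization constraint $\Tr\big((\bra{\psi_+}\otimes I)\Tr_{[s-1]}(\tilde{\rho})(\ket{\psi_+}\otimes I)\big)=1$ through the inequalities, to be the main obstacle; everything else is a direct transcription of the $k=1$ argument with $n \mapsto n^2$.

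For the $\alpha_s$ bound I would repeat the same steps using the PPT-refined extraction (\cite[Theorem~3]{NOP09}) on the doubled space, yielding the coefficient $1 - \frac{n^2 g_s}{(2+g_s n)(n-1)}$; the slightly modified denominator $(2+g_s n)$ compared to the $k=1$ case arises from the interaction between the doubled dimension and the $g_s$-dependent mixing weight, and verifying this exact form is where I would spend the most care. Since $\lim_{s\to\infty} g_s = 0$, both families of bounds collapse to $\big\|X\big\|_{S(k)}$ as $s \to \infty$, confirming convergence of the semidefinite hierarchy.
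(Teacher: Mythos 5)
Your proposal follows essentially the same route as the paper: the left inequalities are dismissed as trivial relaxations, $\big\|X\big\|_{S(k)}$ is rewritten via Proposition~\ref{prop:sep_to_sk} and Proposition~\ref{prop:MatMultDiffVectors} as a supremum of $\Tr\big((\ketbra{\psi_+}{\psi_+}\otimes X)\tilde{\sigma}\big)$ over separable $\tilde{\sigma}$ on the doubled space, and the lower bounds come from restricting that supremum to the separable operators that \cite{NOP09} extracts from $s$-(PPT-)bosonic symmetric extensions. So the architecture is right, and you correctly identified both where the $n\mapsto n^2$ substitution comes from and that the final correction term should be $\Tr(X)$ rather than $\lambda_{\textup{min}}(X)$.

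The one step that would not go through as you describe it is exactly the one you flagged as the main obstacle. Contracting $\Tr_2(\tilde{\rho})\otimes I$ against $\ketbra{\psi_+}{\psi_+}\otimes X$ does \emph{not} reduce to a fixed multiple of $\Tr(X)$: tracing $\ketbra{\psi_+}{\psi_+}_{A^\prime B^\prime}$ against $I_{B^\prime}$ leaves $\tfrac{1}{k}I_{A^\prime}$, and what survives is $\tfrac{1}{k}\Tr\big(X(\tau\otimes I)\big)$ for a positive operator $\tau$ that depends on $\tilde{\rho}$ — a quantity you can only bound in terms of eigenvalues of $X$, not evaluate. The paper's extra ingredient is to pad the \cite{NOP09} separable operator by adding $\tfrac{1}{n^2+s}\big((I-\Tr_2(\tilde{\rho}))\otimes I\big)$, which keeps it separable and preserves the normalization $\Tr\big((\bra{\psi_+}\otimes I)(\cdot)(\ket{\psi_+}\otimes I)\big)\leq 1$, but turns the correction term into $\tfrac{1}{n^2+s}I\otimes I$; its pairing with $\ketbra{\psi_+}{\psi_+}\otimes X$ is then exactly $\tfrac{1}{n^2+s}\Tr(\ketbra{\psi_+}{\psi_+}\otimes X)=\tfrac{1}{n^2+s}\Tr(X)$, with no eigenvalue bounds needed. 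With that one modification your argument closes; the $\alpha_s$ bound is the same computation with the PPT-refined extraction of \cite[Theorem~3]{NOP09}.
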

\begin{proof}
	As in the $k = 1$ case, the left inequalities are trivial, so we only need to show the right inequalities. We begin with the second inequality. Note that if $\tilde{\rho}$ has an $s$-BSE and is normalized so that $\Tr\big( (\bra{\psi_+} \otimes I) \tilde{\rho} (\ket{\psi_+} \otimes I) \big) \leq 1$ then we can add $\tfrac{1}{n+s}((I - \Tr_2(\rho)) \otimes I)$ to the separable operator~\eqref{eq:s1_separable_op} to see that the following operator is separable and satisfies the same normalization condition as $\tilde{\rho}$:
	\begin{align}\label{eq:k_s_share_sep}
		\frac{s}{n^2 + s}\tilde{\rho} + \frac{1}{n^2 + s}I \otimes I.
	\end{align}
	The remainder of the proof mimics the proof of Theorem~\ref{thm:sdp_error}, and note that $\tilde{\sigma}$ is assumed to satisfy the same normalization condition as $\tilde{\rho}$:
	\begin{align*}
		\big\|X\big\|_{S(k)} & = \sup_{\sigma} \big\{ \Tr(X\sigma) : SN(\sigma) \leq k \big\} \\
		& = \sup_{\tilde{\sigma}} \Big\{ \Tr\big((\ketbra{\psi_+}{\psi_+} \otimes X)\tilde{\sigma}\big) : \tilde{\sigma} \text{ is separable} \Big\} \\
		& \geq \sup_{\tilde{\rho}} \Big\{ \Tr\big((\ketbra{\psi_+}{\psi_+} \otimes X)(\tfrac{s}{n^2 + s}\tilde{\rho} + \tfrac{1}{n^2 + s}I \otimes I)\big) : \tilde{\rho} \text{ has $s$-BSE} \Big\} \\
		& = \sup_{\tilde{\rho}} \Big\{ \tfrac{s}{n^2 + s} \Tr\big((\ketbra{\psi_+}{\psi_+} \otimes X)\tilde{\rho}\big) + \tfrac{1}{n^2 + s} \Tr\big(\ketbra{\psi_+}{\psi_+} \otimes X\big) : \tilde{\rho} \text{ has $s$-BSE} \Big\} \\
		& \geq \frac{s}{n^2 + s} \beta_s + \frac{1}{n^2 + s} \Tr(X).
	\end{align*}
	The proof of the corresponding inequality involving $\beta_s$ is extremely similar.
\end{proof}

Note that the bounds provided by Theorem~\ref{thm:sdp_error_k2} are significantly worse than the bounds in the $k = 1$ case provided by Theorem~\ref{thm:sdp_error}. For one thing, the lower bounds do not depend on $k$ at all (other than than inherent dependence of $\alpha_s$ and $\beta_s$ on $k$), so we expect that these lower bounds are quite poor when $k$ is small relative to $n$.

On the other hand, these semidefinite programs in the $k > 1$ case also seem to perform quite a bit worse than their $k = 1$ counterparts when $s$ is small. Even in the extremely simple case of $X = \ketbra{\psi_+}{\psi_+} \in M_3 \otimes M_3$ and $k = 2$, we found that $\beta_s > 0.9999$ for $1 \leq s \leq 7$, even though we saw in Example~\ref{exam:MatrixNormChoi} that $\big\|X\big\|_{S(2)} = 2/3$. Based on Theorem~\ref{thm:sdp_error_k2}, we know that $\big\|X\big\|_{S(2)} \geq \frac{s + 1}{s + 9} \beta_s$ for all $s \geq 1$, so we may not see a value of $\beta_s$ that is significantly different from $1$ until $\frac{s + 1}{s + 9} > \frac{2}{3}$ (i.e., $s > 15$).

\subsection{Examples}\label{sec:examples}

	The methods of computing the $S(k)$-operator norms introduced in Sections~\ref{sec:SP_pos_map} and~\ref{sec:SP_shareable} have been implemented in MATLAB. In order to test the semidefinite programs, we will need a theoretical result to compare the computed results to. To test the semidefinite programs of Section~\ref{sec:SP_pos_map}, we analytically compute the $S(k)$-norms of the family of Werner states \cite{W89} and compare the exact answers to the computational results. We also look at the operator norms of randomly generated states from the Bures measure. To test the semidefinite programs of Section~\ref{sec:SP_shareable}, we return to the matrix of Example~\ref{ex:BadSNorm}.

We begin by deriving the $S(k)$-norm of Werner states. Recall that, given a real number $\alpha \in [-1,1]$, the \emph{Werner state} $\rho_\alpha \in M_n \otimes M_n$ is defined by
\begin{align*}
	\rho_\alpha := \frac{1}{n(n - \alpha)}(I - \alpha S),
\end{align*}
where $S$ is the swap operator. The following result shows that if $\alpha \leq 0$ then $\big\|\rho_\alpha\big\|_{S(k)} = \big\|\rho_\alpha\big\|$ for all $k$. If $\alpha > 0$ then $\big\|\rho_\alpha\big\|_{S(1)}$ is smaller, but the rest of the $S(k)$-norms are all equal to $\big\|\rho_\alpha\big\|$.
\begin{prop}\label{prop:WernerSchmidt}
	Let $\rho_\alpha \in M_n \otimes M_n$ be a Werner state. Then
	\begin{align*}
		\|\rho_\alpha\|_{S(1)} = \frac{1 + | \min\{\alpha,0\}|}{n(n - \alpha)} \ \text{ and } \ \big\|\rho_\alpha\big\|_{S(k)} = \frac{1 + |\alpha|}{n(n - \alpha)} \quad \text{ for } \, 2 \leq k \leq n.
	\end{align*}
\end{prop}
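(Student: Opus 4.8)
The plan is to compute the $S(k)$-norm of the Werner state $\rho_\alpha = \frac{1}{n(n-\alpha)}(I - \alpha S)$ directly from the formula $\|X\|_{S(k)} = \sup\{\bra{v}X\ket{v} : SR(\ket{v}) \leq k\}$ for positive operators, given by Proposition~\ref{prop:MatMultDiffVectors}. First I would verify that $\rho_\alpha \geq 0$ across the relevant range, so that this characterization applies; since $S$ has eigenvalues $\pm 1$, the eigenvalues of $\rho_\alpha$ are $\frac{1-\alpha}{n(n-\alpha)}$ (on the symmetric subspace) and $\frac{1+\alpha}{n(n-\alpha)}$ (on the antisymmetric subspace), both of which are nonnegative for $\alpha \in [-1,1]$. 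Because $\rho_\alpha$ is an affine function of $S$, the quantity $\bra{v}\rho_\alpha\ket{v}$ depends on $\ket{v}$ only through $\bra{v}S\ket{v}$, so the whole problem reduces to understanding the range of $\bra{v}S\ket{v}$ over states with $SR(\ket{v}) \leq k$.

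The key computation is the following: if $\ket{v} = \sum_{i=1}^{k}\alpha_i \ket{a_i}\otimes\ket{b_i}$ is a Schmidt decomposition with $SR(\ket{v}) \leq k$, then using $S\ket{a_i}\otimes\ket{b_j} = \ket{b_j}\otimes\ket{a_i}$ one finds $\bra{v}S\ket{v} = \sum_{i,j}\alpha_i\alpha_j\braket{a_i}{b_j}\braket{b_i}{a_j} = \sum_{i,j}\alpha_i\alpha_j|\braket{a_i}{b_j}|^2$ after a short manipulation, which is a real number lying in the interval $[-1,1]$ (indeed $\bra{v}S\ket{v}$ is just the expectation of the $\pm1$-valued swap, so it equals the difference of symmetric and antisymmetric weight). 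The next step is to determine, over states of Schmidt rank at most $k$, the extreme values of $\bra{v}S\ket{v}$. The maximum value $1$ is attained by any symmetric separable state $\ket{a}\otimes\ket{a}$ (Schmidt rank $1$), giving the antisymmetric-weight-zero extreme. For the minimum, I would show that when $k \geq 2$ one can make $\bra{v}S\ket{v} = -1$ by taking an antisymmetric state such as $\frac{1}{\sqrt 2}(\ket{a}\otimes\ket{b} - \ket{b}\otimes\ket{a})$, which has Schmidt rank exactly $2$; whereas when $k = 1$, a separable state $\ket{a}\otimes\ket{b}$ gives $\bra{v}S\ket{v} = |\braket{a}{b}|^2 \geq 0$, so the minimum over Schmidt rank $1$ states is $0$, attained when $\ket{a}\perp\ket{b}$.

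With these extremes in hand, the result follows by casework on the sign of $\alpha$. Writing $\bra{v}\rho_\alpha\ket{v} = \frac{1}{n(n-\alpha)}(1 - \alpha\bra{v}S\ket{v})$, the supremum over $SR(\ket{v})\leq k$ is obtained by minimizing $\alpha\bra{v}S\ket{v}$, i.e.\ by pushing $\bra{v}S\ket{v}$ to $-1$ when $\alpha > 0$ and to $+1$ when $\alpha < 0$. For $k \geq 2$ both extremes $\pm 1$ are available, so the supremum is $\frac{1+|\alpha|}{n(n-\alpha)}$ regardless of sign. For $k = 1$ the available range is $[0,1]$, so when $\alpha > 0$ the best we can do is $\bra{v}S\ket{v} = 0$, giving $\frac{1}{n(n-\alpha)}$, while when $\alpha \leq 0$ we push to $\bra{v}S\ket{v}=1$ to get $\frac{1-\alpha}{n(n-\alpha)} = \frac{1+|\alpha|}{n(n-\alpha)}$. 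Combining these cases yields exactly the stated formulas, where $1 + |\min\{\alpha,0\}|$ neatly encodes the $k=1$ answer.

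The main obstacle I anticipate is the careful verification that the Schmidt rank constraint is compatible with attaining the desired extreme value of $\bra{v}S\ket{v}$ — in particular confirming that the antisymmetric state achieving $\bra{v}S\ket{v} = -1$ genuinely has Schmidt rank at most $k$ (it has rank exactly $2$, which is why the $k=1$ case differs), and that no Schmidt rank $1$ state can make $\bra{v}S\ket{v}$ negative. I would handle the latter simply by noting $\bra{a}\otimes\bra{b}\, S\, \ket{a}\otimes\ket{b} = |\braket{a}{b}|^2 \geq 0$, which is immediate. The eigenvalue computation and the reduction to optimizing $\bra{v}S\ket{v}$ are routine, so the real content is this extremal analysis of the swap operator restricted to low-Schmidt-rank vectors.
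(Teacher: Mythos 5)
Your proof is correct in substance and follows essentially the same route as the paper: both reduce the computation to evaluating $\bra{v}(I-\alpha S)\ket{v}$ over states of Schmidt rank at most $k$ and then exhibit the extremal states ($\ket{a}\otimes\ket{a}$ for $\alpha\le 0$; an orthogonal product state $\ket{a}\otimes\ket{b}$ for $k=1$, $\alpha\ge 0$; the antisymmetric Schmidt-rank-$2$ state for $k\ge 2$, $\alpha\ge 0$). The paper packages the upper bounds as a triangle inequality $\|I-\alpha S\|_{S(k)}\le 1+|\alpha|\,\|S\|_{S(k)}$ with $\|S\|_{S(k)}=1$ rather than as bounds on $\bra{v}S\ket{v}$, but this is only a difference in presentation.

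One correction: your claimed identity $\bra{v}S\ket{v}=\sum_{i,j}\alpha_i\alpha_j|\braket{a_i}{b_j}|^2$ is false. The correct expression is $\sum_{i,j}\alpha_i\alpha_j\braket{a_i}{b_j}\braket{b_i}{a_j}$, whose off-diagonal terms are products of two generally unrelated inner products, not moduli squared. Indeed, were your identity true, $\bra{v}S\ket{v}$ would be nonnegative for every $\ket{v}$, contradicting your own antisymmetric example attaining $-1$. Fortunately nothing in your argument depends on this identity --- the containment $\bra{v}S\ket{v}\in[-1,1]$ follows from $S$ being Hermitian with $\|S\|=1$, and the extremes are verified on explicit states --- so you should simply delete that intermediate step.
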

\begin{proof}
	Throughout the proof, we will work with the operator $X_\alpha := n(n - \alpha)\rho_\alpha = I - \alpha S$ to simplify the algebra. To see the result when $\alpha \leq 0$, note that for any $k$,
	\begin{align*}
		\big\|X_\alpha\big\|_{S(k)} = \big\|I - \alpha S\big\|_{S(k)} \leq \big\|I\big\|_{S(k)} - \alpha \big\|S\big\|_{S(k)} = 1 - \alpha,
	\end{align*}
	where the inequality comes from the triangle inequality and the rightmost equality comes from the fact that $\big\|S\big\|_{S(k)} = 1$, which is easily verified. To see the other inequality, choose $\ket{v} := \ket{1} \otimes \ket{1}$ and observe that
	\begin{align*}
		\bra{v}X\ket{v} = (\bra{1}\otimes\bra{1})(I - \alpha S)(\ket{1}\otimes\ket{1}) = 1 - \alpha\sum_{i,j=1}^{n}\braket{1}{i}\braket{j}{1}\braket{1}{j}\braket{i}{1} = 1 - \alpha.
	\end{align*}
		
	On the other hand, if $\alpha \geq 0$, then for any vector $\ket{v} = \ket{a} \otimes \ket{b}$, it follows that
	\begin{align*}
		\bra{v}X_\alpha\ket{v} = (\bra{a}\otimes\bra{b})(I - \alpha S)(\ket{a}\otimes\ket{b}) = 1 - \alpha (\bra{a}\otimes\bra{b})(\ket{b}\otimes\ket{a}) = 1 - \alpha |\braket{a}{b}|^2 \leq 1.
	\end{align*}
	Furthermore, equality can easily be seen to be attained when $\ket{v} = \ket{1} \otimes \ket{2}$, which shows that $\big\| X_\alpha \big\|_{S(1)} = 1$. To see the result for $k \geq 2$ and $\alpha \geq 0$, use the triangle inequality again to see that $\big\| X_\alpha \big\|_{S(k)} \leq 1 + \alpha$. To show that equality is attained, let $\ket{v} = \frac{1}{\sqrt{2}}(\ket{1} \otimes \ket{2} - \ket{2} \otimes \ket{1})$ and observe that $\bra{v}X_\alpha \ket{v} = 1 + \alpha$. Since $\ket{v}$ has $SR(\ket{v}) = 2$, the result follows.
\end{proof}
	
	The performance of the semidefinite programs of Section~\ref{sec:SP_pos_map} for the $S(1)$-norm is analyzed in Table~\ref{table:werner}. If the transpose map is used, then we know that the semidefinite program must give exactly $\|\rho_\alpha\|_{S(1)}$ when $n = 2$, which it does. In fact, the positive map $\Phi$ defined by $\Phi(X) = \Tr(X)I - X$ (see Example~\ref{exam:k_pos}) that is used as the basis of the reduction criterion also gives the correct answer in this case. For $n = 3$, the transpose map still happens to give the correct answer, though the reduction map gives a strict upper bound when $\alpha > 0$.
\begin{table}[ht]\hsp
	\begin{center}
  \begin{tabular}{ c | c | c | c | c }
  	\noalign{\hrule height 0.1em}
  	 & & & \multicolumn{2}{c}{Upper bound computed using...} \\
  	\hline
    $n$ & $\alpha$ & Exact $\|\rho_\alpha\|_{S(1)}$ & $\Phi(X) = X^T$ & $\Phi(X) = \Tr(X)I - X$ \\
  	\noalign{\hrule height 0.1em}
    $2$ & $1/2$ & $1/3$ & $0.3333$ & $0.3333$ \\ \hline
    $2$ & $-1/2$ & $3/10$ & $0.3000$ & $0.3000$ \\ \hline
    $3$ & $1/2$ & $2/15$ & $0.1333$ & $0.2000$ \\ \hline
    $3$ & $-1/2$ & $1/7$ & $0.1429$ & $0.1429$ \\
  	\noalign{\hrule height 0.1em}
  \end{tabular}
	\end{center}
\caption[The $S(1)$-operator norm of various Werner states]{\hsp The exact $S(1)$-operator norm of various Werner states as well as the computed upper bounds obtained by using the semidefinite program defined by one of two different positive linear maps.}\label{table:werner}
\dsp\end{table}
	
	As another example, we consider random density operators distributed according to the Bures measure \cite{B69,Uhl76}, which can be generated via the method of~\cite{OSZ10}. We now investigate the general behaviour of the $S(k)$-norms of a density operator in $M_2 \otimes M_2$ and $M_3 \otimes M_3$ relative to its eigenvalues.
	
	In particular, Figure~\ref{fig:4dim} shows how the $S(1)$-norm is distributed compared to the two largest eigenvalues $\lambda_3 \leq \lambda_4$ in $M_2 \otimes M_2$, based on $2 \times 10^6$ randomly-generated density operators. It is not surprising that the $S(1)$-norm lies between $\lambda_3$ and $\lambda_4$, since $\lambda_4$ is equal to the $S(2)$-norm and Theorem~\ref{prop:lowerBoundEig} says that the $S(n-1)$-norm in $M_n \otimes M_n$ is always at least as big as the second-largest eigenvalue. We see in this case that the $S(1)$-norm typically is much closer to $\lambda_4$ than $\lambda_3$.
\begin{figure}[ht]
\begin{center}
\includegraphics[width=0.9\textwidth]{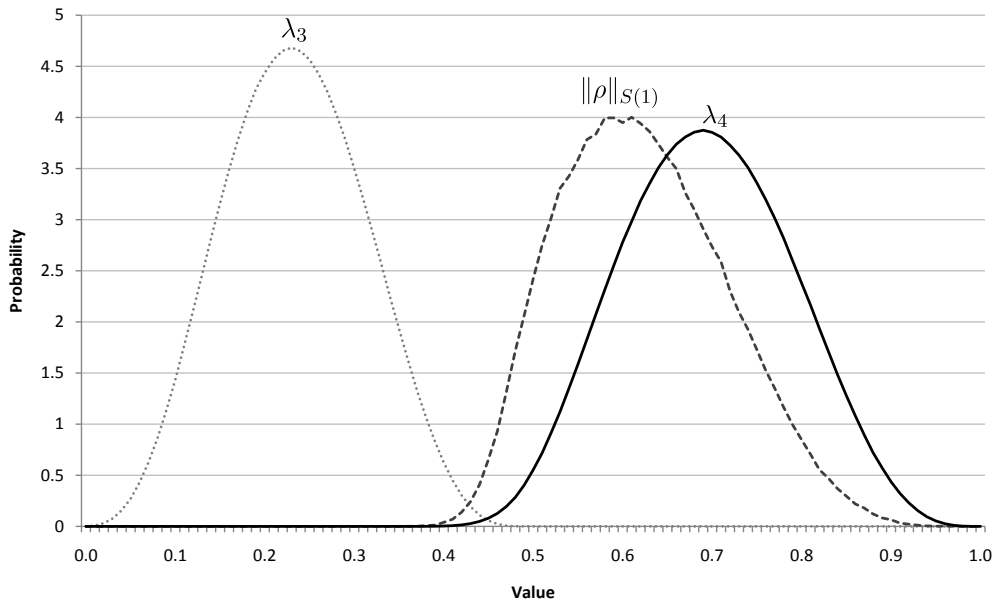}
\end{center}\vspace{-0.3in}
\caption[Approximate distribution of the $S(1)$-norm on $M_2 \otimes M_2$]{\hsp Approximate distributions of the $S(1)$-operator norm and the two largest eigenvalues of random Bures density operators in $M_2 \otimes M_2$.}\label{fig:4dim}
\end{figure}
	
	The $S(1)$-norm in this case was computed using the semidefinite programming method of Section~\ref{sec:semidefProgramMNorm}. A similar plot was presented in \cite{GPMSZ10} for what was called the maximal expectation value among product states, which coincides with the $S(1)$-norm for positive semidefinite operators. There it was similarly observed that this value typically lies closer to $\lambda_4$ than $\lambda_3$ under the Hilbert--Schmidt measure.
	
	Figure~\ref{fig:9dim} shows how the $S(1)$- and $S(2)$-norms typically compare to the two largest eigenvalues $\lambda_8 \leq \lambda_9$ in $M_3 \otimes M_3$, based on $10^5$ randomly-generated density operators. As before, it is not surprising that the $S(2)$-norm lies between $\lambda_8$ and $\lambda_9$. However, Theorem~\ref{prop:lowerBoundEig} also showed that there exist density operators $\rho \in M_3 \otimes M_3$ for which $\lambda_5 \leq \|\rho\|_{S(1)} < \lambda_6$. This situation seems to be extremely rare, as $\|\rho\|_{S(1)}$ generally lies between $\lambda_8$ and $\lambda_9$.
\begin{figure}[ht]
\begin{center}
\includegraphics[width=0.9\textwidth]{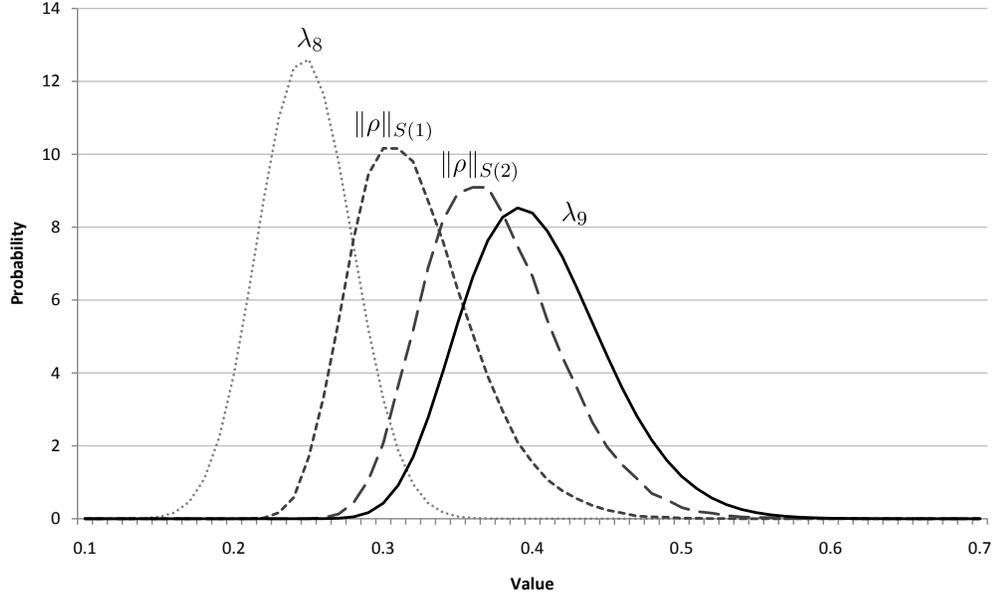}
\end{center}\vspace{-0.3in}
\caption[Approximate distributions of the $S(1)$- and $S(2)$-norms on $M_3 \otimes M_3$]{\hsp Approximate distributions of the $S(1)$- and $S(2)$-operator norms, as well as the two largest eigenvalues of random Bures density operators in $M_3 \otimes M_3$.}\label{fig:9dim}
\end{figure}

Because the semidefinite programming method of Section~\ref{sec:semidefProgramMNorm} does not produce the exact value for the $S(1)$- and $S(2)$-norms in $M_3 \otimes M_3$, the values of the norms used for Figure~\ref{fig:9dim} are estimates that were derived from a simple genetic algorithm.

We now present an example to make use of the semidefinite programs of Section~\ref{sec:SP_shareable}.
\begin{exam}\label{ex:BadSNorm_return}
  {\rm Recall the density matrix
    \begin{align}\label{eq:rho_ex}\sspp
        \rho = \frac{1}{8}\begin{bmatrix}5 & 1 & 1 & 1 \\ 1 & 1 & 1 & 1 \\ 1 & 1 & 1 & 1 \\ 1 & 1 & 1 & 1\end{bmatrix}\dsp
    \end{align}
	that was introduced in Example~\ref{ex:BadSNorm}. We begin by showing that $\|\rho\|_{S(1)} = \frac{1}{8}(3 + 2\sqrt{2})$.
	
	First, let $\ket{v} = \ket{aa}$, where $\ket{a} = \frac{\sqrt{2 + \sqrt{2}}}{2}\ket{1} + \frac{\sqrt{2 - \sqrt{2}}}{2}\ket{2}$. Straightforward computation reveals that
	\begin{align*}\sspp
		\bra{aa}\rho\ket{aa} & = \sspp\frac{1}{128}\begin{bmatrix} 2 + \sqrt{2}, & \sqrt{2}, & \sqrt{2}, & 2 - \sqrt{2}\end{bmatrix} \begin{bmatrix}5 & 1 & 1 & 1 \\ 1 & 1 & 1 & 1 \\ 1 & 1 & 1 & 1 \\ 1 & 1 & 1 & 1\end{bmatrix} \begin{bmatrix} 2 + \sqrt{2} \\ \sqrt{2} \\ \sqrt{2} \\ 2 - \sqrt{2}\end{bmatrix} \\
		& = \sspp\frac{1}{128}\begin{bmatrix} 2 + \sqrt{2}, & \sqrt{2}, & \sqrt{2}, & 2 - \sqrt{2}\end{bmatrix} \begin{bmatrix} 12 + 6\sqrt{2} \\ 4 + 2\sqrt{2} \\ 4 + 2\sqrt{2} \\ 4 + 2\sqrt{2}\end{bmatrix} \\
		& = \frac{1}{128}\big( (36 + 24\sqrt{2}) + (4 + 4\sqrt{2}) + (4 + 4\sqrt{2}) + 4\big) \\
		& = \frac{1}{8}(3 + 2\sqrt{2}).\dsp
	\end{align*}
	
	Thus $\|\rho\|_{S(1)} \geq \frac{1}{8}(3 + 2\sqrt{2})$. To see the opposite inequality, we use the $s = 1$ version of the semidefinite program~\eqref{sp:S1_share_sdp}. In particular, define
	\begin{align*}\sspp
		W = \frac{1}{16}\begin{bmatrix}2\sqrt{2} - 2 & -1 & -1 & 0 \\ -1 & 2\sqrt{2}+2 & -2 & -1 \\ -1 & -2 & 2\sqrt{2}+2 & -1 \\ 0 & -1 & -1 & 2\sqrt{2}+2\end{bmatrix}.
	\dsp\end{align*}
	It is easy to verify that $W \geq 0$ (its eigenvalues are $0, \tfrac{\sqrt{2}}{8}, \tfrac{\sqrt{2}}{16}$, and $\tfrac{2 + \sqrt{2}}{16}$), so the semidefinite program~\eqref{sp:S1_share_sdp} (or equivalently, Theorem~\ref{thm:kPosInf2}) says that $\|\rho\| \leq \big\|\rho + W^\Gamma\big\|$. A simple calculation reveals that $\big\|\rho + W^\Gamma\big\| = \frac{1}{8}(3 + 2\sqrt{2})$, so the desired inequality follows.
	
	Notice that the semidefinite program~\eqref{sp:S1_share_sdp} gives the correct value of $\|\rho\|_{S(1)}$ already in the $s = 1$ case here, as we knew it would, since $\rho \in M_2 \otimes M_2$. In larger dimensions, we cannot expect the upper bounds constructed in this way to be tight for $s = 1$. We also cannot expect the bounds to be tight for any fixed $s$ if we ignore the partial transposition constraint, as we now illustrate.
	
	Define $\beta_s := \big\|P(I_n^{\otimes s} \otimes X)P\big\|$, where $P$ is as it was in Equation~\eqref{eq:s1_share_no_PT}. Table~\ref{table:s1_share_approx} shows the value of $\beta_s$ for $s \leq 20$, as computed by MATLAB. As expected, the values of $\beta_s$ start at $\|\rho\| = 0.75$ when $s = 1$ and then decrease as $s$ increases. Furthermore, the values of $\beta_s$ seem to be decreasing to $\|\rho\|_{S(1)} = \frac{1}{8}(3 + 2\sqrt{2}) \approx 0.7286$, as they should. The lower bounds provided by Theorem~\ref{thm:sdp_error} similarly are increasing to $\|\rho\|_{S(1)}$ -- the lower bound when $s = 1$ is $0.2500$, while the lower bound computed when $s = 20$ is $0.6635$.
\begin{table}[ht]\hsp
	\begin{center}
  \begin{tabular}{ c c | c c | c c | c c }
  	\noalign{\hrule height 0.1em}
    $s$ & $\beta_s$ & $s$ & $\beta_s$ & $s$ & $\beta_s$ & $s$ & $\beta_s$ \\
  	\noalign{\hrule height 0.1em}
    $1$ & $0.7500$ & $6$ & $0.7329$ & $11$ & $0.7310$ & $16$ & $0.7302$ \\ \hline
    $2$ & $0.7405$ & $7$ & $0.7323$ & $12$ & $0.7308$ & $17$ & $0.7301$ \\ \hline
    $3$ & $0.7368$ & $8$ & $0.7318$ & $13$ & $0.7306$ & $18$ & $0.7300$ \\ \hline
    $4$ & $0.7349$ & $9$ & $0.7315$ & $14$ & $0.7304$ & $19$ & $0.7300$ \\ \hline
    $5$ & $0.7337$ & $10$ & $0.7312$ & $15$ & $0.7303$ & $20$ & $0.7299$ \\
  	\noalign{\hrule height 0.1em}
  \end{tabular}
	\end{center}
\caption[Upper bounds for the $S(1)$-operator norm]{\hsp Upper bounds for the $S(1)$-norm of the density matrix $\rho$ given by Equation~\eqref{eq:rho_ex}. Observe that $\beta_1 = \|\rho\|$ and $\beta_s$ seems to be decreasing to $\|\rho\|_{S(1)} \approx 0.7286$, as expected.}\label{table:s1_share_approx}
\dsp\end{table}}
\end{exam}

\section{Bound Entanglement}\label{sec:bound_entanglement}

Recall from Section~\ref{sec:bound_entangle} the NPPT bound entanglement problem, which asked whether or not there exists a state $\rho$ such that $(\rho^\Gamma)^{\otimes r}$ is $2$-block positive for all $r \geq 1$. Also recall that it is enough to consider the NPPT bound entanglement problem on the Werner states \cite{HH99,W89}
\[
    \rho_\alpha := \frac{1}{n(n - \alpha)}(I - \alpha S) \in M_n \otimes M_n,
\]
where $\alpha \in [-1,1]$ and $S$ is the swap operator.

Because the partial transpose of Werner states have only two distinct eigenvalues (as noted in the proof of the following proposition), Corollary~\ref{cor:kposTwoEvals} applies to this situation and the $S(k)$-operator norms are a natural tool for approaching this problem. The following result is a starting point.
\begin{prop}\label{prop:werner01}
    Let $\rho_\alpha \in M_n \otimes M_n$ be a Werner state. Then $\rho_\alpha^\Gamma$ is $k$-block positive if and only if $\alpha \leq \frac{1}{k}$.
\end{prop}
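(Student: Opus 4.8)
The plan is to reduce the $k$-block positivity of $\rho_\alpha^\Gamma$ to a single inequality involving the $S(k)$-operator norm of the maximally-entangled projection, which has already been computed in Example~\ref{exam:MatrixNormChoi}. First I would compute the partial transpose of the Werner state explicitly. Since $I^\Gamma = I$ and the swap operator satisfies $S = \sum_{i,j}\ketbra{i}{j}\otimes\ketbra{j}{i}$, applying $id_n \otimes T$ to the second factor gives $(id_n \otimes T)(\ketbra{i}{j}\otimes\ketbra{j}{i}) = \ketbra{i}{j}\otimes\ketbra{i}{j}$, so that $S^\Gamma = \sum_{i,j}\ketbra{i}{j}\otimes\ketbra{i}{j} = n\ketbra{\psi_+}{\psi_+}$. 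Hence $\rho_\alpha^\Gamma = \frac{1}{n(n-\alpha)}\big(I - \alpha n\ketbra{\psi_+}{\psi_+}\big)$. Because $n \geq 2$ and $\alpha \in [-1,1]$ force $n - \alpha \geq 1 > 0$, the scalar prefactor is strictly positive and does not affect $k$-block positivity; it therefore suffices to determine when $I - \alpha n\ketbra{\psi_+}{\psi_+}$ is $k$-block positive.

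Next I would split on the sign of $\alpha$. If $\alpha \leq 0$, then $-\alpha n\ketbra{\psi_+}{\psi_+} \geq 0$, so $I - \alpha n\ketbra{\psi_+}{\psi_+} \geq 0$ is positive semidefinite and hence $k$-block positive for every $k$; since $\alpha \leq 0 < 1/k$, the claimed equivalence holds vacuously in this regime. For $\alpha > 0$, set $X := \alpha n\ketbra{\psi_+}{\psi_+} \geq 0$ and invoke Corollary~\ref{cor:kPosInf1}, which states that for $X \geq 0$ the operator $cI - X$ is $k$-block positive if and only if $c \geq \big\|X\big\|_{S(k)}$. Taking $c = 1$, and using positive homogeneity of the norm together with Example~\ref{exam:MatrixNormChoi} (which gives $\big\|\ketbra{\psi_+}{\psi_+}\big\|_{S(k)} = k/n$), I obtain $\big\|X\big\|_{S(k)} = \alpha n \cdot \tfrac{k}{n} = \alpha k$. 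Thus $I - X$ is $k$-block positive precisely when $\alpha k \leq 1$, i.e. when $\alpha \leq 1/k$, which establishes the equivalence and completes the argument.

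Given the machinery already developed in the excerpt, the proof is essentially routine: the two real ingredients, Corollary~\ref{cor:kPosInf1} and the norm computation of Example~\ref{exam:MatrixNormChoi}, do all the work. The step most prone to error is the bookkeeping in the first paragraph, namely computing $S^\Gamma = n\ketbra{\psi_+}{\psi_+}$ and correctly tracking the factors of $n$ so that $X = \alpha n\ketbra{\psi_+}{\psi_+}$ carries exactly the right coefficient; a miscount there would shift the threshold. I would also want to state explicitly that the positive prefactor $\frac{1}{n(n-\alpha)}$ makes the rescaling harmless, and to handle the $\alpha \leq 0$ case separately so that the ``only if'' direction is not accidentally claimed to rely on Corollary~\ref{cor:kPosInf1} outside its $X \geq 0$ hypothesis.
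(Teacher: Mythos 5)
Your proof is correct and follows essentially the same route as the paper: both reduce to the observation that $n(n-\alpha)\rho_\alpha^\Gamma = I - \alpha n\ketbra{\psi_+}{\psi_+}$ and then invoke the computation $\big\|\ketbra{\psi_+}{\psi_+}\big\|_{S(k)} = k/n$ from Example~\ref{exam:MatrixNormChoi}. The only cosmetic difference is that the paper cites Corollary~\ref{cor:kposTwoEvals} (the two-eigenvalue criterion) while you apply Corollary~\ref{cor:kPosInf1} directly with $c=1$ and $X = \alpha n\ketbra{\psi_+}{\psi_+}$, which is arguably a touch cleaner since your explicit $\alpha \leq 0$ case disposes of the sign issue that Corollary~\ref{cor:kposTwoEvals} handles internally.
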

\begin{proof}
  Simply note that $(n^2 - \alpha n)\rho_\alpha^\Gamma = I - \alpha n\ketbra{\psi_+}{\psi_+}$ has only two distinct eigenvalues: $1$ and $1 - \alpha n$. Corollary~\ref{cor:kposTwoEvals} then implies that $\rho_\alpha^\Gamma$ is $k$-block positive if and only if $\big\|\ketbra{\psi_+}{\psi_+}\big\|_{S(k)} \leq \frac{1}{\alpha n}$. We saw in Example~\ref{exam:MatrixNormChoi} that $\big\|\ketbra{\psi_+}{\psi_+}\big\|_{S(k)} = \frac{k}{n}$, so the result follows.
\end{proof}

The special case $k = n$ of the above proposition is very well-known and states that $\rho_\alpha$ is PPT if and only if
$\alpha \leq \frac{1}{n}$. Moreover, Proposition~\ref{prop:werner01}
shows that Werner states cannot be bound entangled for $\alpha > \frac{1}{2}$, which is also well-known. It has been conjectured that Werner states are bound entangled for all $\alpha \leq \frac{1}{2}$; this is exactly the set of values for which $\rho_\alpha^\Gamma$ is $2$-positive.

Although we now have determined $k$-block positivity of $\rho_\alpha^\Gamma$, determining $k$-block positivity (or even $2$-block positivity) of $(\rho_\alpha^\Gamma)^{\otimes r}$ for $r > 1$ is not so simple in general because the projection onto the negative eigenspaces is no longer rank-$1$, so we cannot exactly compute its $S(k)$-norm. Additionally, $(\rho_\alpha^\Gamma)^{\otimes r}$ has more than two distinct eigenvalues in general so we can no longer use Corollary~\ref{cor:kposTwoEvals}. To simplify the problem somewhat, consider the $\alpha = \frac{2}{n}$ case. Notice that this value of $\alpha$ is in the ``region of interest'' $(1/n,1/2]$ if and only if $n \geq 4$ -- an assumption that we make for the remainder of this section.

In this case, the operator $X := (n^2 - 2)\rho_{2/n} = I - 2\ketbra{\psi_+}{\psi_+}^\Gamma$ has eigenvalues $1$ and $-1$, so $(X^\Gamma)^{\otimes r}$ has only those two distinct eigenvalues regardless of $r$. Corollary~\ref{cor:kposTwoEvals} then says that $\rho_{2/n}$ is bound entangled if and only if $\big\| P_{n,r}^{-} \big\|_{S(2)} \leq \frac{1}{2}$ for all $r \geq 1$, where $P_{n,r}^{-}$ is the projection onto the $-1$ eigenspace of $(\rho_{2/n}^\Gamma)^{\otimes r}$. This procedure mirrors the approach attempted in \cite{PPHH10} to find a bound entangled NPPT Werner state, though that paper considers the $n = 4$ case exclusively.

Before using the computational techniques introduced in this chapter to estimate $\big\| P_{n,r}^{-} \big\|_{S(2)}$, we show that in the limit as $r$ tends to infinity, it is not possible to do any better than $\big\| P_{n,r}^{-} \big\|_{S(2)} \leq \frac{1}{2}$. More precisely, we show that
\[
    \lim_{r \to \infty} \big\| P_{n,r}^{-} \big\|_{S(2)} \geq \frac{1}{2}.
\]

To prove this claim, observe that
\begin{align}\label{eq:proj_def}\begin{split}
	P_{n,1}^{-} & = \ketbra{\psi_+}{\psi_+} \in M_n \otimes M_n, \\
	P_{n,r}^{-} & = (I - P_{n,1}^{-}) \otimes P_{n,r-1}^{-} + P_{n,1}^{-} \otimes (I - P_{n,r-1}^{-}) \quad \forall \, r \geq 2.
\end{split}\end{align}
In particular, this means that ${\rm rank}(P_{n,1}^{-}) = 1$ and ${\rm rank}(P_{n,r}^{-}) = {\rm rank}(P_{n,r-1}^{+}) + (n^2 - 1){\rm rank}(P_{n,r-1}^{-})$ for all $r \geq 2$. Standard techniques for solving recurrence relations then show that ${\rm rank}(P_{n,r}^-) = \frac{1}{2}(n^{2r} - (n^2 - 2)^r)$ for all $r \geq 1$. Plugging this into the lower bound
\begin{align*}
	\big\|P_{n,r}^-\big\|_{S(2)} \geq \frac{(n^r-2){\rm rank}(P_{n,r-1}^{-})}{n^{2r}(n^r-1)} + \frac{1}{n^r-1},
\end{align*}
which follows from throwing away the square root term in Inequality~\eqref{eq:projIneq2}, reveals that
\begin{align*}
    \big\|P_{n,r}^-\big\|_{S(2)} & \geq \frac{(n^r-2)(n^{2r} - (n^2 - 2)^r)}{2n^{2r}(n^r-1)} + \frac{1}{n^r-1} \\
    & = \frac{n^r-2}{2(n^r-1)} - \frac{(n^r-2)(n^2 - 2)^r - 2n^{2r}}{2n^{2r}(n^r-1)}.
\end{align*}
It is not difficult to verify that the lower bound on the right is always, for $n \geq 4$, strictly less than $\frac{1}{2}$. Furthermore, as $r \rightarrow \infty$, the rightmost fraction tends to zero and the left fraction tends to $\frac{1}{2}$. This shows that, asymptotically, $\frac{1}{2}$ is the smallest that we could ever hope $\big\|P_{n,r}^-\big\|_{S(2)}$ to be. We have thus proved the following.

\begin{thm}\label{thm:bound_ent_s2}
The Werner state $\rho_{2/n}$ is bound entangled if and only if
\[
    \lim_{r \to \infty} \big\| P_{n,r}^{-} \big\|_{S(2)} = \frac{1}{2}.
\]
\end{thm}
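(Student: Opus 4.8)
The plan is to combine the recursive structure of the projections $P_{n,r}^-$ established just before the theorem with the block-positivity characterization via the $S(2)$-norm. Recall that Corollary~\ref{cor:kposTwoEvals} applied to $(\rho_{2/n}^\Gamma)^{\otimes r}$ (which has only eigenvalues $\pm 1$) tells us that $(\rho_{2/n}^\Gamma)^{\otimes r}$ is $2$-block positive if and only if $\big\|P_{n,r}^-\big\|_{S(2)} \leq \tfrac{1}{2}$, and by definition $\rho_{2/n}$ is bound entangled if and only if $(\rho_{2/n}^\Gamma)^{\otimes r}$ is $2$-block positive for all $r \geq 1$. So the theorem is really asking us to package two facts together: first, that $\big\|P_{n,r}^-\big\|_{S(2)} \leq \tfrac{1}{2}$ for all $r$ is equivalent to the limit equalling $\tfrac{1}{2}$; and second, that the limit is well-defined and its value is exactly the threshold for bound entanglement.

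First I would establish monotonicity, namely that the sequence $\big\|P_{n,r}^-\big\|_{S(2)}$ is non-decreasing in $r$. This is the crux of the argument, because monotonicity immediately forces the ``only if'' direction: if the increasing sequence has limit $\tfrac{1}{2}$, then every term is at most $\tfrac{1}{2}$, giving $2$-block positivity of $(\rho_{2/n}^\Gamma)^{\otimes r}$ for all $r$ and hence bound entanglement. To prove monotonicity I would exploit the recursion~\eqref{eq:proj_def}, which writes $P_{n,r}^-$ as a sum of two tensor-product terms, one of which is $P_{n,1}^- \otimes (I - P_{n,r-1}^-)$. Since $I - P_{n,r-1}^- \geq P_{n,r-1}^-$ as operators, and since the $S(2)$-norm is monotone under such operator domination on positive operators (via the supremum characterization in Proposition~\ref{prop:MatMultDiffVectors}), I can feed a near-optimal Schmidt-rank-$2$ test vector for $P_{n,r-1}^-$ through the embedding to produce a test vector for $P_{n,r}^-$ achieving at least the same value. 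The combinatorial accounting of which subsystems carry the Schmidt rank is the part that needs care.

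For the ``if'' direction, I would argue the contrapositive: if the monotone sequence does not converge to $\tfrac{1}{2}$, then since it is bounded above (each term is a norm of a projection, so at most $1$) it converges to some limit $\ell$, and because the asymptotic lower bound computation preceding the theorem shows $\lim_r \big\|P_{n,r}^-\big\|_{S(2)} \geq \tfrac{1}{2}$, we must have $\ell > \tfrac{1}{2}$. By monotonicity this means some finite $r_0$ has $\big\|P_{n,r_0}^-\big\|_{S(2)} > \tfrac{1}{2}$, so $(\rho_{2/n}^\Gamma)^{\otimes r_0}$ fails to be $2$-block positive and $\rho_{2/n}$ is not bound entangled. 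I expect the main obstacle to be rigorously establishing the monotonicity step; the recursion~\eqref{eq:proj_def} is promising but the $S(2)$-norm is not multiplicative over tensor products, so I must argue purely through test vectors rather than through any clean norm identity. The asymptotic lower bound $\lim_r \big\|P_{n,r}^-\big\|_{S(2)} \geq \tfrac{1}{2}$ is already in hand from the computation using Inequality~\eqref{eq:projIneq2}, so once monotonicity and the boundedness are secured, the equivalence follows by elementary real analysis on monotone bounded sequences.
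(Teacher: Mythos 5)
Your overall architecture matches the paper's: reduce via Corollary~\ref{cor:kposTwoEvals} to the statement that $\rho_{2/n}$ is bound entangled if and only if $\big\|P_{n,r}^-\big\|_{S(2)} \leq \tfrac12$ for every $r$, then combine this with the computation (from the rank recursion and Inequality~\eqref{eq:projIneq2}) showing $\liminf_r \big\|P_{n,r}^-\big\|_{S(2)} \geq \tfrac12$. You are also right that the ``if'' direction needs something beyond that asymptotic bound --- the paper asserts the theorem immediately after it, whereas a priori one term of the sequence could exceed $\tfrac12$ while the limit still equals $\tfrac12$ --- and monotonicity of $r \mapsto \big\|P_{n,r}^-\big\|_{S(2)}$ is exactly the right ingredient to rule this out. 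Identifying that as the crux is a genuine improvement in completeness over what is written.

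However, the justification you sketch for monotonicity would fail as written. The operator inequality $I - P_{n,r-1}^- \geq P_{n,r-1}^-$ is false for every nonzero orthogonal projection (it would force all eigenvalues of the projection to be at most $\tfrac12$), so it cannot drive the argument; and the term $P_{n,1}^-\otimes(I-P_{n,r-1}^-)$ of the recursion~\eqref{eq:proj_def} is the wrong one to exploit, since a test vector drawing its value from that term must overlap $\ketbra{\psi_+}{\psi_+}$ on the new copy and is then weighted by $I - P_{n,r-1}^-$ rather than $P_{n,r-1}^-$. The repair uses the \emph{other} term: let $\ket{v}$ be a unit vector with $SR(\ket{v}) \leq 2$ attaining $\bra{v}P_{n,r-1}^-\ket{v} = \big\|P_{n,r-1}^-\big\|_{S(2)}$ (attained by compactness, as in the proof of Proposition~\ref{prop:orth_proj_norm}), and pad it on the new copy with the product vector $\ket{1}\otimes\ket{2}$, which satisfies $\braket{\psi_+}{1\otimes 2} = 0$ and hence is annihilated by $P_{n,1}^-$. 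Setting $\ket{u} := \ket{1}\otimes\ket{2}\otimes\ket{v}$, the recursion gives
\begin{align*}
\bra{u}P_{n,r}^-\ket{u} = 1\cdot\bra{v}P_{n,r-1}^-\ket{v} + 0\cdot\bra{v}(I - P_{n,r-1}^-)\ket{v} = \big\|P_{n,r-1}^-\big\|_{S(2)},
\end{align*}
and $SR(\ket{u}) \leq 2$ across the relevant bipartite cut because tensoring with a vector that is a product across that cut does not increase Schmidt rank. This yields $\big\|P_{n,r}^-\big\|_{S(2)} \geq \big\|P_{n,r-1}^-\big\|_{S(2)}$, after which your monotone-bounded-sequence reasoning closes both directions exactly as you describe.
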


In order to make progress on the NPPT bound entanglement problem via Theorem~\ref{thm:bound_ent_s2}, we now present the best bounds that we have on $\big\| P_{n,r}^{-} \big\|_{S(2)}$. Although these $S(2)$-norms are still unknown, we can analytically compute the $S(1)$-norm of each of these projections using the semidefinite programming method of the previous sections.
\begin{prop}\label{prop:S1}
	Let $P_{n,r}^{-}$ be the projection defined by Equations~\eqref{eq:proj_def}. Then
	\begin{align*}
		\big\|P_{n,r}^{-}\big\|_{S(1)} = \frac{1}{2} - \frac{1}{2}\left(1 - \frac{2}{n}\right)^r.
	\end{align*}
\end{prop}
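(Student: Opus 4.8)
The plan is to evaluate the norm by combining Proposition~\ref{prop:MatMultDiffVectors} with the algebraic identity $P_{n,r}^{-} = \tfrac{1}{2}(I - M^{\otimes r})$, where $M := I - 2\ketbra{\psi_+}{\psi_+}$ is the reflection whose $(-1)$-eigenspace is $\mathrm{Range}(P_{n,1}^{-})$; this is exactly the relation underlying the recursion~\eqref{eq:proj_def}, so that $M^{\otimes r}$ has $(-1)$-eigenspace $\mathrm{Range}(P_{n,r}^{-})$. Since $P_{n,r}^{-}$ is an orthogonal projection (hence positive semidefinite), viewing $(M_n \otimes M_n)^{\otimes r}$ as $M_{n^r} \otimes M_{n^r}$ across the cut that groups all of Alice's copies against all of Bob's copies, Proposition~\ref{prop:MatMultDiffVectors} gives
\begin{align*}
  \big\|P_{n,r}^{-}\big\|_{S(1)} = \tfrac{1}{2} - \tfrac{1}{2}\inf_{\ket{v}}\big\{ \bra{v}M^{\otimes r}\ket{v} : SR(\ket{v}) \leq 1 \big\}.
\end{align*}
It therefore suffices to show that this infimum equals $(1 - 2/n)^r$.

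First I would prove $\bra{v}M^{\otimes r}\ket{v} \geq (1-2/n)^r$ for every separable unit vector $\ket{v} = \ket{a}\otimes\ket{b}$. Writing $\sigma := \ketbra{a}{a}$ and $\tau := \ketbra{b}{b}$, the key step is to pass to the partial transpose $\Gamma$ on all of Bob's copies: using that $\Tr(YZ) = \Tr(Y^{\Gamma}Z^{\Gamma})$ together with $\ketbra{\psi_+}{\psi_+}^{\Gamma} = \tfrac{1}{n}S$ (so that $M^{\Gamma} = I - \tfrac{2}{n}S$), one obtains
\begin{align*}
  \bra{v}M^{\otimes r}\ket{v} = \Tr\!\Big[(\sigma\otimes\tau)\,M^{\otimes r}\Big] = \Tr\!\Big[(\sigma\otimes\overline{\tau})\big(I - \tfrac{2}{n}S\big)^{\otimes r}\Big],
\end{align*}
where $\overline{\tau} = \tau^{T}$ and the tensor power is copy-by-copy. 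Now $I - \tfrac{2}{n}S$ is a rescaled Werner operator, with eigenvalue $1 - \tfrac{2}{n}$ on the symmetric subspace and $1 + \tfrac{2}{n}$ on the antisymmetric subspace, so $I - \tfrac{2}{n}S \geq (1 - \tfrac{2}{n})I \geq 0$ for $n \geq 2$; taking tensor powers of positive operators yields $\big(I - \tfrac{2}{n}S\big)^{\otimes r} \geq (1 - 2/n)^r I$. As $\sigma\otimes\overline{\tau}$ is a density operator, the trace above is at least $(1-2/n)^r$, giving $\big\|P_{n,r}^{-}\big\|_{S(1)} \leq \tfrac{1}{2} - \tfrac{1}{2}(1-2/n)^r$.

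For the matching lower bound I would exhibit an explicit separable state attaining equality, namely $\ket{v} = \bigotimes_{i=1}^{r}\ket{1}_{A_i}\otimes\ket{1}_{B_i}$. Each copy then contributes $\bra{11}\big(I - \tfrac{2}{n}S\big)\ket{11} = 1 - \tfrac{2}{n}$, so $\bra{v}M^{\otimes r}\ket{v} = (1-2/n)^r$ and hence $\big\|P_{n,r}^{-}\big\|_{S(1)} \geq \tfrac{1}{2} - \tfrac{1}{2}(1-2/n)^r$; combining the two bounds proves the claim. I expect the main obstacle to be the upper-bound step, and specifically the recognition that the indefinite reflection $M^{\otimes r}$ becomes, after the Bob-side partial transpose, the tensor power of the \emph{positive} Werner operator $I - \tfrac{2}{n}S$. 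Once this rewriting is in hand the positivity estimate is routine; the only points needing care are that the partial transpose is applied consistently to all Bob copies and that the Schmidt-rank cut in Proposition~\ref{prop:MatMultDiffVectors} is precisely the Alice--Bob cut across which $\overline{\tau}$ is formed.
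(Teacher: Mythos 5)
Your proof is correct. The identity $P_{n,r}^{-} = \tfrac{1}{2}\bigl(I - M^{\otimes r}\bigr)$ with $M = I - 2\ketbra{\psi_+}{\psi_+}$ does reproduce the recursion~\eqref{eq:proj_def} (one checks $\tfrac{1}{2}(I - M\otimes M^{\otimes(r-1)}) = Q\otimes(I-P^-_{n,r-1}) + (I-Q)\otimes P^-_{n,r-1}$ with $Q = P^-_{n,1}$), and both halves of your argument go through: the all-ones product vector gives $\bra{v}M^{\otimes r}\ket{v} = (1-2/n)^r$ directly from multiplicativity, and the Bob-side partial transpose turns $M^{\otimes r}$ into $(I - \tfrac{2}{n}S)^{\otimes r} \geq (1-\tfrac{2}{n})^r I$, which is valid since $I - \tfrac{2}{n}S \geq 0$ for $n \geq 2$.

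The underlying ingredients are the same as the paper's: the paper also uses the test vector $\ket{11}\otimes\ket{11\cdots 1}$ for the lower bound, and also obtains the upper bound from the partial transpose (via the dual semidefinite program of Theorem~\ref{thm:kPosInf2} with $\Phi = T$, which bounds $\|P_{n,r}^-\|_{S(1)}$ by the maximal eigenvalue of $(P_{n,r}^-)^\Gamma$). The difference is purely in execution: the paper computes both the overlap $c_{n,r}$ and the eigenvalue $\lambda^{\textup{max}}_{n,r}$ by setting up and solving first-order recurrences, whereas your reflection identity makes $P_{n,r}^-$ an explicit function of a tensor power, so both quantities drop out in one line from the spectrum of the single-copy operators $M$ and $I - \tfrac{2}{n}S$. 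Your packaging is the cleaner of the two and makes it transparent that the PPT bound is exactly the minimal eigenvalue of $(M^\Gamma)^{\otimes r}$; the paper's recurrence formulation has the mild advantage of slotting directly into the SDP-witness framework it has already built (exhibiting $\lambda^{\textup{max}}I - (P_{n,r}^-)^\Gamma \geq 0$ as the dual feasible point), which it reuses in the surrounding results. The only point to tidy in your write-up is the lower-bound sentence: you evaluate $\bra{11}(I - \tfrac{2}{n}S)\ket{11}$ while working with the untransposed $M^{\otimes r}$; since $\ket{11}$ is real and $S$-invariant the number $1 - \tfrac{2}{n}$ is the same either way, but you should compute $\bra{11}M\ket{11} = 1 - 2|\braket{\psi_+}{11}|^2$ or apply the transpose consistently on both sides.
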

\begin{proof}
	To see the ``$\geq$'' inequality, consider the separable vector $\ket{v} := \ket{11} \otimes \ket{11\cdots 1} \in (\bb{C}^n)^{\otimes 2} \otimes (\bb{C}^n)^{\otimes {2r-2}}$. Then define the quantity
	\begin{align*}
		c_{n,r} := \bra{v} P_{n,r}^{-} \ket{v}.
	\end{align*}
It follows that
	\begin{align*}
		c_{n,r} & = \bra{11}(I - P_{n,1}^{-})\ket{11} \bra{11\cdots 1}P_{n,r-1}^{-}\ket{11\cdots 1} \\
		& \quad \ + \bra{11}P_{n,1}^{-}\ket{11} \bra{11\cdots 1}(I - P_{n,r-1}^{-})\ket{11\cdots 1} \\
		 & = \frac{n-1}{n}c_{n,r-1} + \frac{1}{n}(1 - c_{n,r-1}) \\
		 & = \Big(1 - \frac{2}{n}\Big)c_{n,r-1} + \frac{1}{n}.
	\end{align*}
Standard methods for solving recurrence relations yield $c_{n,r} = \frac{1}{2} - \frac{1}{2}\big(1 - \frac{2}{n}\big)^r$. Noting that $\big\|P_{n,r}^{-}\big\|_{S(1)} \geq c_{n,r}$ gives the desired inequality.
	
	To see the ``$\leq$'' inequality, we will use the dual form of the semidefinite program~\eqref{sp:matrixNorm} with the transpose map $\Phi_1(X) := X^T$. To this end, notice that if $\lambda_{n,r}^{\textup{max}}$ is the maximal eigenvalue of $(P_{n,r}^{-})^\Gamma$, then $\lambda_{n,r}^{\textup{max}}I - (P_{n,r}^{-})^\Gamma$ is positive semidefinite and so Theorem~\ref{thm:kPosInf2} says that
	\begin{align*}
		\big\|P_{n,r}^{-}\big\|_{S(1)} \leq \big\| P_{n,r}^{-} + (\lambda_{n,r}^{\textup{max}}I - (P_{n,r}^{-})^\Gamma)^\Gamma \big\| = \big\| \lambda_{n,r}^{\textup{max}}I \big\| = \lambda_{n,r}^{\textup{max}}.
	\end{align*}
In order to compute $\lambda_{n,r}^{\textup{max}}$, let us consider the partial transpose of the family of projections~\eqref{eq:proj_def}:
\begin{align*}
	(P_{n,1}^{-})^\Gamma & = \frac{1}{n}S \in M_n \otimes M_n, \\
	(P_{n,r}^{-})^\Gamma & = \frac{1}{n}S \otimes (I - (P_{n,r-1}^{-})^\Gamma) + (I - \frac{1}{n} S) \otimes (P_{n,r-1}^{-})^\Gamma \quad \forall \, r \geq 2.
\end{align*}
It is clear that the eigenvectors of $(P_{n,r}^{-})^\Gamma$ are each of the form $\ket{x} \otimes \ket{y}$ for some eigenvector $\ket{x}$ of $S$ and some eigenvector $\ket{y}$ of $(P_{n,r-1}^{-})^\Gamma$. If we recall that the eigenvalues of $S$ are $\pm 1$, it follows that
	\begin{align*}
		\lambda_{n,r}^{\textup{max}} = \max\Big\{ (1 - \frac{2}{n})\lambda_{n,r-1}^{\textup{max}} + \frac{1}{n} , (1 + \frac{2}{n})\lambda_{n,r-1}^{\textup{max}} - \frac{1}{n} \Big\}.
	\end{align*}
If $\lambda_{n,r-1}^{\textup{max}} \leq \frac{1}{2}$ then $(1 + \frac{2}{n})\lambda_{n,r-1}^{\textup{max}} - \frac{1}{n} \leq (1 - \frac{2}{n})\lambda_{n,r-1}^{\textup{max}} + \frac{1}{n} \leq \frac{1}{2}$, so it follows via induction (and the fact that $\lambda_{n,1}^{\textup{max}} = \frac{1}{n} \leq \frac{1}{2}$) that $\lambda_{n,r}^{\textup{max}} = (1 - \frac{2}{n})\lambda_{n,r-1}^{\textup{max}} + \frac{1}{n}$. We already saw that this recurrence relation has the closed form $\lambda_{n,r}^{\textup{max}} = \frac{1}{2} - \frac{1}{2}\big(1 - \frac{2}{n}\big)^r$, which finishes the proof.
\end{proof}

Proposition~\ref{prop:S1} shows that not only does $\big\|P_{n,r}^{-}\big\|_{S(2)}$ approach $1/2$ from below as $r \rightarrow \infty$, but even $\big\|P_{n,r}^{-}\big\|_{S(1)}$ does, and it does so exponentially quickly. One way to tackle the problem of computing $\big\|P_{n,r}^{-}\big\|_{S(2)}$ would be to hope that $\big\|P_{n,r}^{-}\big\|_{S(1)} = \big\|P_{n,r}^{-}\big\|_{S(2)}$ -- we now show that this is not the case. It is worth pointing out that the following proposition shows the best lower and upper bounds on $\big\|P_{n,r}^{-}\big\|_{S(2)}$ that we have.
\begin{prop}\label{prop:S2}
	Let $n \geq 3$ and let $P_{n,r}^{-}$ be the projection defined by Equations~\eqref{eq:proj_def}. Then
	\begin{align}\label{eq:S2_lowbound}
		\big\|P_{n,r}^{-}\big\|_{S(2)} & \geq \frac{1}{2} - \left(\frac{1}{2} - \frac{1}{n-2}\right)\left(1 - \frac{2}{n}\right)^r \quad \text{and} \\ \label{eq:S2_upbound}
		\big\|P_{n,r}^{-}\big\|_{S(2)} & \leq 1 - \left(1 - \frac{2}{n}\right)^r.
	\end{align}
\end{prop}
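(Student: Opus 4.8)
The plan is to treat the two inequalities separately, since they come from rather different sources. The upper bound~\eqref{eq:S2_upbound} will follow almost immediately by interpolating against the already-computed $S(1)$-norm, while the lower bound~\eqref{eq:S2_lowbound} will be obtained by exhibiting a single well-chosen Schmidt-rank-two test vector and evaluating $P_{n,r}^-$ on it via the recurrence~\eqref{eq:proj_def}. Throughout I would set $q := 1 - \tfrac{2}{n}$ to keep the algebra manageable.

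For the upper bound I would simply invoke the norm-equivalence estimate of Theorem~\ref{thm:MatrixEquiv01} with $k=2$ and $h=1$, which gives $\big\|P_{n,r}^-\big\|_{S(2)} \leq 2\big\|P_{n,r}^-\big\|_{S(1)}$. Substituting the exact value $\big\|P_{n,r}^-\big\|_{S(1)} = \tfrac12\big(1 - q^r\big)$ from Proposition~\ref{prop:S1} yields precisely $1 - q^r$, so the upper bound is essentially a one-line consequence of earlier results.

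For the lower bound I would test $P_{n,r}^-$ against the unit vector $\ket{v} := \tfrac{1}{\sqrt2}(\ket{11}+\ket{22}) \otimes \ket{11}^{\otimes(r-1)}$, which has Schmidt rank $2$ across the global Alice--Bob cut (rank two on the first copy of $M_n \otimes M_n$, rank one on the remaining copies), so $\big\|P_{n,r}^-\big\|_{S(2)} \geq \bra{v}P_{n,r}^-\ket{v}$ by Proposition~\ref{prop:MatMultDiffVectors}. Writing $\ket{w} := \tfrac{1}{\sqrt2}(\ket{11}+\ket{22})$ and $\ket{s} := \ket{11}^{\otimes(r-1)}$ and expanding with the recurrence~\eqref{eq:proj_def} gives
\begin{align*}
	\bra{v}P_{n,r}^-\ket{v} = \bra{w}(I-P_{n,1}^-)\ket{w}\,\bra{s}P_{n,r-1}^-\ket{s} + \bra{w}P_{n,1}^-\ket{w}\,\bra{s}(I-P_{n,r-1}^-)\ket{s}.
\end{align*}
Here $\bra{w}P_{n,1}^-\ket{w} = |\braket{\psi_+}{w}|^2 = \tfrac{2}{n} = 1-q$, while $\bra{s}P_{n,r-1}^-\ket{s}$ is exactly the quantity $c_{n,r-1} = \tfrac12(1-q^{r-1})$ computed in the proof of Proposition~\ref{prop:S1}. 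A short simplification then collapses the expression to $\tfrac12 - \big(\tfrac12 - \tfrac{1}{n-2}\big)q^r$, matching~\eqref{eq:S2_lowbound} exactly, and one checks that the $r=1$ base case reduces correctly to $\tfrac{2}{n}$.

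The calculations are routine, so there is no serious analytic obstacle here; the only real content is guessing the correct test vector. The natural choice is to concentrate the entire Schmidt-rank-two budget on the single tensor factor where $P_{n,1}^- = \ketbra{\psi_+}{\psi_+}$ lives and where a rank-two vector can overlap $\ket{\psi_+}$ maximally (yielding $\tfrac2n$), and to reuse the optimal separable vector $\ket{11}^{\otimes(r-1)}$ from Proposition~\ref{prop:S1} on the remaining copies. It is worth emphasizing that this vector attains~\eqref{eq:S2_lowbound} with equality, so the bound is the exact $S(2)$-norm restricted to this product family; the persistent gap between~\eqref{eq:S2_lowbound} and~\eqref{eq:S2_upbound} is precisely what obstructs a full determination of $\big\|P_{n,r}^-\big\|_{S(2)}$, and hence a resolution of the NPPT bound entanglement question through Theorem~\ref{thm:bound_ent_s2}.
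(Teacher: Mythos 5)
Your proof is correct and follows essentially the same route as the paper: the upper bound via Theorem~\ref{thm:MatrixEquiv01} applied to the exact $S(1)$-norm from Proposition~\ref{prop:S1}, and the lower bound by testing against the very same Schmidt-rank-two vector $\tfrac{1}{\sqrt2}(\ket{11}+\ket{22})\otimes\ket{11\cdots1}$ and reusing the quantity $c_{n,r-1}$. Your tensor-factorized evaluation of $\bra{v}P_{n,r}^-\ket{v}$ is just a cleaner organization of the ``messy algebra'' the paper omits, and both reduce to $\tfrac{n-4}{n}c_{n,r-1}+\tfrac{2}{n}$.
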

\begin{proof}
	Inequality~\eqref{eq:S2_upbound} simply follows from Proposition~\ref{prop:S1} and Theorem~\ref{thm:MatrixEquiv01}. For Inequality~\eqref{eq:S2_lowbound}, we construct a specific vector $\ket{v} \in (\bb{C}^n)^{\otimes 2r}$ with $SR(\ket{v}) = 2$ such that $\bra{v}P_{n,r}^{-}\ket{v}$ is the given quantity.
	
	To this end, let $\ket{v} = \frac{1}{\sqrt{2}}(\ket{11} \otimes \ket{11\cdots 1} + \ket{22} \otimes \ket{11\cdots 1}) \in (\bb{C}^n)^{\otimes 2} \otimes (\bb{C}^n)^{\otimes {2r-2}}$ and define the following quantity:
	\begin{align*}
		c_{n,r} & := \bra{11\cdots 1} P_{n,r}^{-} \ket{11\cdots 1}.
	\end{align*}
	We already saw in the proof of Proposition~\ref{prop:S1} that $c_{n,r} = \frac{1}{2} - \frac{1}{2}\big(1 - \frac{2}{n}\big)^r$. Thus we get (omitting some messy algebra at the start of the calculation):
	\begin{align*}
		\bra{v}P_{n,r}^{-}\ket{v} & = \frac{n-1}{n}c_{n,r-1} + \frac{1}{n}(1 - c_{n,r-1}) - \frac{1}{n}c_{n,r-1} + \frac{1}{n}(1 - c_{n,r-1}) \\
		& = \frac{n-4}{n} c_{n,r-1} + \frac{2}{n} \\
		& = \frac{n-4}{2n} - \frac{n-4}{2n}\left(1 - \frac{2}{n}\right)^{r-1} + \frac{2}{n} \\
		& = \frac{1}{2} - \left(\frac{1}{2} - \frac{1}{n-2}\right)\left(1 - \frac{2}{n}\right)^{r},
	\end{align*}
	as desired.
\end{proof}

We believe that the lower bound in Proposition~\ref{prop:S2} is in fact tight, which we now state formally as a conjecture.
\begin{conj}\label{conj:S2}
	Let $n \geq 3$ and let $P_{n,r}^{-}$ be the projection defined by Equations~\eqref{eq:proj_def}. Then
	\begin{align*}
		\big\|P_{n,r}^{-}\big\|_{S(2)} & = \frac{1}{2} - \left(\frac{1}{2} - \frac{1}{n-2}\right)\left(1 - \frac{2}{n}\right)^r.
	\end{align*}
\end{conj}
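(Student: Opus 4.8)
Since the matching lower bound is already Inequality~\eqref{eq:S2_lowbound} of Proposition~\ref{prop:S2}, the plan is to prove only the reverse inequality $\big\|P_{n,r}^{-}\big\|_{S(2)} \leq \frac{1}{2} - \big(\frac12 - \frac{1}{n-2}\big)\big(1 - \frac2n\big)^r =: c_r$. By Equation~\eqref{eq:MatMultSN} of Proposition~\ref{prop:MatMultDiffVectors} we have $\big\|P_{n,r}^{-}\big\|_{S(2)} = \sup\big\{ \Tr(P_{n,r}^{-}\rho) : SN(\rho) \leq 2 \big\}$, an optimization over the convex set of operators of Schmidt number at most $2$.

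The first step is to exploit symmetry. Each single-copy factor $P_{n,1}^{-} = \ketbra{\psi_+}{\psi_+}$ is invariant under $U_i \otimes \overline{U_i}$ for every unitary $U_i$, and $P_{n,r}^{-}$ is invariant under simultaneous permutations of the $r$ copies; all of these are local unitaries across the global $A$--$B$ cut, so conjugation by them preserves the constraint $SN \leq 2$ (and, dually, preserves $2$-block positivity). Twirling an optimal $\rho$ over this group $G$ therefore yields a $G$-invariant maximizer. Because the commutant of $\{U \otimes \overline{U}\}$ on a single copy is $\mathrm{span}\{P_{n,1}^{-}, I - P_{n,1}^{-}\}$, a $G$-invariant operator is a combination of the mutually orthogonal projections $R_j := \sum_{|\mathcal{O}| = j} Q_{\mathcal{O}}$ ($0 \leq j \leq r$), where $Q_{\mathcal{O}} := \bigotimes_{i \in \mathcal{O}} (P_{n,1}^{-})_i \otimes \bigotimes_{i \notin \mathcal{O}} (I - P_{n,1}^{-})_i$; note $\sum_j R_j = I$ and $P_{n,r}^{-} = \sum_{j \text{ odd}} R_j$. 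This collapses the infinite-dimensional problem to the $(r+1)$-dimensional family $\rho = \sum_j c_j R_j$. By Theorem~\ref{thm:kPosInf2} and Proposition~\ref{prop:kPos}, the upper bound reduces to producing a single $2$-block positive witness of the form $Y = \sum_j y_j R_j$ (twirling any witness preserves $2$-block positivity and cannot increase $\lambda_{\max}(P_{n,r}^{-} + Y)$, since $P_{n,r}^{-}$ is $G$-invariant and $\lambda_{\max}$ is convex and unitarily invariant): then $P_{n,r}^{-} + Y = \sum_j (\mathbf{1}[j \text{ odd}] + y_j) R_j$ has largest eigenvalue $\max_j(\mathbf{1}[j \text{ odd}] + y_j)$, so any such $Y$ with this maximum equal to $c_r$ finishes the proof.

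To guess the witness I would impose complementary slackness against the extremal vector $\ket{v^\star} = \frac{1}{\sqrt{2}}(\ket{11} + \ket{22}) \otimes \ket{1\cdots1}$ from the proof of Proposition~\ref{prop:S2}: the optimal $Y$ should vanish on $\ket{v^\star}$ and attain $\max_j(\mathbf{1}[j \text{ odd}] + y_j)$ in the correct block, which yields explicit linear equations for the $y_j$. I expect the solution to follow a geometric pattern in the ratio $(1 - \frac2n)$, matching the right-hand side of the conjecture. The decisive and hardest step is then verifying that this explicitly guessed $Y = \sum_j y_j R_j$ is genuinely $2$-block positive, i.e. $\bra{w} Y \ket{w} \geq 0$ for every $\ket{w}$ with $SR(\ket{w}) \leq 2$ across the global cut.

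The main obstacle is exactly the one flagged throughout the chapter: there is no tractable characterization of $2$-block positivity in general. The heavy symmetry of $Y$ and the spectral tests of Theorem~\ref{thm:kposSpectral} (applied to $Y$, whose positive and negative parts are sums of the $R_j$) do give a checkable sufficient condition, but it requires controlling the relevant $S(2)$-norms of the projections $R_j$ uniformly in $r$, and this seems to demand a genuinely new idea. Two remarks temper this: for each fixed $(n,r)$ the reduced family turns the problem into a finite semidefinite program, so the conjecture can at least be certified numerically case by case; and the fact that the global Schmidt-rank-$2$ constraint does not respect the copy tensor structure of Equations~\eqref{eq:proj_def} is precisely why a naive induction on $r$ (which succeeded for the $S(1)$-norm via the partial transpose) does not transfer to the $S(2)$-norm.
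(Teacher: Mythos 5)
You should first note that the statement you are trying to prove is stated in the paper as a \emph{conjecture}: the paper offers no proof, explicitly records that it ``remains open for all $r \geq 2$'' (the $r=1$ case being Example~\ref{exam:MatrixNormChoi}, and the $r=2$ case proved in the literature only under an extra normality assumption on the optimal vector), and observes via Theorem~\ref{thm:bound_ent_s2} that its truth would imply the existence of NPPT bound entangled states. So there is no proof in the paper to compare against, and the only question is whether your argument closes the gap. It does not, and you say so yourself. What you do establish correctly is a symmetry reduction: the group generated by $\bigotimes_i (U_i \otimes \overline{U_i})$ and simultaneous permutations of the $r$ copies acts by unitaries that are local across the global $A$--$B$ cut, hence preserves both the constraint $SN(\rho)\leq 2$ and $2$-block positivity; $P_{n,r}^{-}$ is invariant under this group; the single-copy commutant is ${\rm span}\{P_{n,1}^{-}, I-P_{n,1}^{-}\}$; and convexity of $\lambda_{\textup{max}}$ together with Theorem~\ref{thm:kPosInf2} lets you restrict both the state and the witness to the commutative $(r+1)$-dimensional algebra spanned by the $R_j$. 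This is a genuine and correct simplification, and the identity $P_{n,r}^{-} = \sum_{j \text{ odd}} R_j$ is right.

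The gap is that the entire content of the conjecture is concentrated in the step you defer. Complementary slackness against the single extremal vector $\ket{v^\star}$ from Proposition~\ref{prop:S2} gives one linear condition, nowhere near enough to determine the $r+1$ coefficients $y_j$; and even with a fully specified ansatz, verifying that $Y = \sum_j y_j R_j$ is $2$-block positive is essentially as hard as the original problem, because the $G$-invariance of $Y$ does not symmetrize the test vectors: you must still control $\bra{w}Y\ket{w}$ over \emph{all} $\ket{w}$ of Schmidt rank $2$ in $(\bb{C}^n)^{\otimes r} \otimes (\bb{C}^n)^{\otimes r}$, which is exactly the obstruction that stops the paper at the one-sided bounds of Proposition~\ref{prop:S2}. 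Theorem~\ref{thm:kposSpectral} applied to your $Y$ would in turn require the $S(2)$-norms of sums of the $R_j$, which is the quantity being conjectured. One smaller correction: the reduced problem for fixed $(n,r)$ is \emph{not} a finite semidefinite program. The feasible set of $G$-invariant $\rho$ with $SN(\rho)\leq 2$ is a finite-dimensional convex body, but membership in it is not known to be semidefinite-representable; the hierarchy of Section~\ref{sec:SP_shareable} yields upper bounds $\alpha_s \geq \big\|P_{n,r}^{-}\big\|_{S(2)}$ converging from above, so numerics can support the conjecture but cannot certify the claimed equality at any finite level. In short, the reduction is sound and worth recording, but no proof has been given, and completing your final step would resolve one of the central open problems the thesis is organized around.
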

Conjecture~\ref{conj:S2}, if true, immediately implies that NPPT bound entangled states exist via Theorem~\ref{thm:bound_ent_s2}. In the $r = 1$ case, the conjecture reduces to the statement $\big\|P_{n,1}^{-}\big\|_{S(2)} = \frac{2}{n}$, which was proved in Example~\ref{exam:MatrixNormChoi}. When $r = 2$, the conjecture says that $\big\|P_{n,2}^{-}\big\|_{S(2)} = \frac{3n-4}{n^2}$, which was proved in \cite[Proposition~6]{PPHH10} under the additional assumption that the supremum that defines $\big\|P_{n,2}^{-}\big\|_{S(2)}$ is attained by a vector $\ket{v}$ such that ${\rm mat}(\ket{v})$ is normal. The conjecture in general remains open for all $r \geq 2$.

For the remainder of this section we consider concrete consequences of Inequality~\ref{eq:S2_upbound}. In particular, we provide a range of values for $\alpha$ and $n$ so that $\rho_\alpha \in M_n \otimes M_n$ is $r$-copy undistillable (note that $\alpha$ and $n$ both depend on $r$, so we do not provide a single state that is $r$-copy undistillable for all $r$).
\begin{thm}\label{thm:main}
	Let $n,r \in \mathbb{N}$ be such that $p := \frac{(n-2)^r}{n^r - (n-2)^r} \geq 1$. If $r$ is odd and $\alpha \leq \frac{1}{n}(\sqrt[r]{p} + 1)$ or if $r$ is even and $\alpha \leq \frac{1}{n}(\sqrt[r-1]{p} + 1)$, then the Werner state $\rho_\alpha \in M_n \otimes M_n$ is $r$-copy undistillable.
\end{thm}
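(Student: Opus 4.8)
The plan is to connect $r$-copy undistillability of $\rho_\alpha$ to $2$-block positivity of $((\rho_\alpha)^{\otimes r})^\Gamma$ and then to bound the relevant $S(2)$-norm using the machinery already developed. Recall from Section~\ref{sec:bound_entangle} that $\rho_\alpha$ is $r$-copy undistillable precisely when $(\rho_\alpha^{\otimes r})^\Gamma$ is $2$-block positive. So the entire problem reduces to establishing $2$-block positivity of this operator for the stated ranges of $\alpha$ and $n$. The first step is to write $(\rho_\alpha^{\otimes r})^\Gamma$ explicitly. Since $\rho_\alpha^\Gamma = \frac{1}{n(n-\alpha)}(I - \alpha n \ketbra{\psi_+}{\psi_+})$ has only the two eigenvalues $\frac{1}{n(n-\alpha)}$ and $\frac{1-\alpha n}{n(n-\alpha)}$, I would factor out the positive scalar $\big(\frac{1}{n(n-\alpha)}\big)^r$ and study the operator $Y_r := (I - \alpha n\ketbra{\psi_+}{\psi_+})^{\otimes r}$, whose $2$-block positivity is equivalent to that of $(\rho_\alpha^{\otimes r})^\Gamma$ (positive scaling preserves $k$-block positivity).

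Next I would diagonalize $Y_r$. Each tensor factor $I - \alpha n\ketbra{\psi_+}{\psi_+}$ acts as $1$ on the orthogonal complement of $\ket{\psi_+}$ and as $1 - \alpha n$ on $\ket{\psi_+}$, so $Y_r$ is diagonal in the product basis built from $\{\ket{\psi_+}, \text{complement}\}$ on each copy. Its eigenvalue on an eigenvector having $j$ factors equal to $\ket{\psi_+}$ is $(1-\alpha n)^j$. For $\alpha > 1/n$ we have $1 - \alpha n < 0$, so the eigenvalue is positive when $j$ is even and negative when $j$ is odd, and its magnitude is $|1-\alpha n|^j$. I would then split $Y_r = Y_r^+ + Y_r^-$ into positive and negative parts and apply the spectral block-positivity test of Theorem~\ref{thm:kposSpectral}. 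The cleanest approach is condition (b): it suffices to show $\big\|P^0 + P^-\big\|_{S(2)} < 1$ and that every positive eigenvalue $\lambda^+$ is at least $\frac{\|Y_r^-\|_{S(2)}}{1 - \|P^0 + P^-\|_{S(2)}}$. Here $P^-$ is the projection onto the span of the odd-$j$ eigenvectors, which is exactly the projection $P_{n,r}^-$ onto the $-1$ eigenspace of $(X^\Gamma)^{\otimes r}$ studied in Propositions~\ref{prop:S1} and~\ref{prop:S2}; there are no zero eigenvalues, so $P^0 = 0$.

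The central estimate will be the upper bound $\big\|P_{n,r}^-\big\|_{S(2)} \leq 1 - (1-2/n)^r$ from Inequality~\eqref{eq:S2_upbound}. The smallest positive eigenvalue of $Y_r$ is $1$ (from the all-even $j=0$ sector) and the negative part has operator-$S(2)$-norm controlled by $|1-\alpha n|^r$ or $|1-\alpha n|^{r-1}$ times a bound on $\|P_{n,r}^-\|_{S(2)}$, where the parity split into odd versus even $r$ arises because the largest-magnitude negative eigenvalue occurs at the largest odd $j \leq r$, namely $j = r$ when $r$ is odd and $j = r-1$ when $r$ is even. Substituting the upper bound on $\|P_{n,r}^-\|_{S(2)}$ into the inequality $\lambda^+ \geq \frac{\|Y_r^-\|_{S(2)}}{1 - \|P_{n,r}^-\|_{S(2)}}$ and using $1 - \|P_{n,r}^-\|_{S(2)} \geq (1-2/n)^r$, the condition becomes $|1-\alpha n|^r \leq (1-2/n)^r \cdot \frac{(n-2)^r}{n^r-(n-2)^r}$ in the odd case (and the analogous inequality with exponent $r-1$ in the even case); solving for $\alpha$ yields exactly the stated thresholds $\alpha \leq \frac{1}{n}(\sqrt[r]{p}+1)$ and $\alpha \leq \frac{1}{n}(\sqrt[r-1]{p}+1)$ with $p = \frac{(n-2)^r}{n^r-(n-2)^r}$. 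The hypothesis $p \geq 1$ guarantees the radical exceeds $1$ so that $\alpha > 1/n$ is permitted, keeping us in the region of genuine NPPT interest.

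The main obstacle, and the step requiring the most care, is bookkeeping the negative part $Y_r^-$ correctly: $\|Y_r^-\|_{S(2)}$ is not simply $\max|1-\alpha n|^j \cdot \|P_{n,r}^-\|_{S(2)}$ because the negative eigenspace is a direct sum over all odd $j$ with differing eigenvalue magnitudes, so I must verify that the worst case is dominated by the largest-magnitude term together with the global projection bound, and that the parity handling of $r$ matches the thresholds. I expect this to work out because when $|1-\alpha n| \leq 1$ the eigenvalue magnitudes $|1-\alpha n|^j$ are largest at the smallest odd $j$, but the vector attaining the $S(2)$-norm of $P_{n,r}^-$ constructed in Proposition~\ref{prop:S2} concentrates on a specific sector, so I would bound $\|Y_r^-\|_{S(2)} \leq |1-\alpha n|^r \|P_{n,r}^-\|_{S(2)}$ (odd $r$) by checking that each sector's contribution to $\bra{v}Y_r^-\ket{v}$ is suppressed by its eigenvalue magnitude relative to the extremal sector; alternatively, a clean route is to replace $Y_r^-$ by the single rank-structured operator $|1-\alpha n|^{r}P_{n,r}^-$ dominating it in the block-positive ordering and invoke Corollary~\ref{cor:kposInf1}-style reasoning directly on $I - |1-\alpha n|^{-r}\cdot(\text{scaled }Y_r^-)$. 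Once the negative part is correctly dominated, the remaining algebra to extract the threshold on $\alpha$ is routine.
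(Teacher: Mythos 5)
Your proposal follows the paper's proof essentially step for step: reduce to $2$-block positivity of $(\rho_\alpha^{\otimes r})^\Gamma$, observe that its eigenvalues are $(1-\alpha n)^m$ for $0 \le m \le r$, apply condition (b) of Theorem~\ref{thm:kposSpectral} with $P^{-} = P_{n,r}^{-}$ and $P^0 = 0$, bound $\|P_{n,r}^{-}\|_{S(2)}$ by Inequality~\eqref{eq:S2_upbound}, and solve for $\alpha$, with the parity of $r$ entering exactly as you describe. The one point to tighten is your worry about the case $|1-\alpha n| \le 1$: the domination of the negative part by $|1-\alpha n|^{2\lceil r/2\rceil - 1}P_{n,r}^{-}$ and the claim that the minimal positive eigenvalue equals $1$ both require $|1-\alpha n| \ge 1$, i.e.\ $\alpha \ge 2/n$, which is precisely what $p \ge 1$ guarantees at the threshold value of $\alpha$ (the paper proves the bound only at that threshold), so the sub-case you flag never actually arises.
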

\begin{proof}
	Let $p$ be as in the statement of the theorem and let $\alpha = \frac{1}{n}(\sqrt[2\lceil r/2\rceil - 1]{p} + 1)$. The eigenvalues of $(\rho_\alpha^{\otimes r})^\Gamma$ are
\begin{align*}
	(1 - \alpha n)^m	\quad \text{for $m = 0, 1, \ldots, r$}.
\end{align*}
Well, $p \geq 1$ implies that $\alpha \geq \frac{2}{n}$, so the minimal positive eigenvalue $\lambda_{\textup{min}}^{+}$ of $(\rho_\alpha^{\otimes r})^\Gamma$ is $1$, and its maximal (in absolute value) negative eigenvalue $\lambda_{\textup{max}}^{-}$ is $(1 - \alpha n)^{2\lceil r/2\rceil - 1}$. We thus have
\begin{align*}
	\lambda_{\textup{min}}^{+} = 1 & = (\alpha n - 1)^{2\lceil r/2\rceil - 1}\left(\Big(\frac{n}{n-2}\Big)^r - 1\right) \geq \lambda_{\textup{max}}^{-}\frac{\| P_{n,r}^{-} \|_{S(2)}}{1 - \| P_{n,r}^{-} \|_{S(2)}},
\end{align*}
where the second equality comes from the fact that $\alpha = \frac{1}{n}\Big(\sqrt[2\lceil r/2\rceil - 1]{\frac{(n-2)^r}{n^r - (n-2)^r}} + 1\Big)$, and the final inequality comes from Inequality~\eqref{eq:S2_upbound}. Now by condition (b) of Theorem~\ref{thm:kposSpectral}, we have that $(\rho_\alpha^{\otimes r})^\Gamma$ is $2$-block positive, and hence the result follows.
\end{proof}

Note that the value $p$ of Theorem~\ref{thm:main} is such that $p \geq 1$ if and only if $n \geq \frac{2\sqrt[r]{2}}{\sqrt[r]{2}-1}$. Thus, for any $r \geq 1$, there is always some non-PPT Werner state that is $r$-copy undistillable as long as the dimension $n$ is large enough. In fact, the dimension grows roughly linearly: $\frac{2\sqrt[r]{2}}{\sqrt[r]{2}-1}$ is asymptotic to $\frac{2}{\ln(2)}r + 1$. Also, if $p \geq 1$ then the result immediately implies that the $\alpha = 2/n$ Werner state is $r$-copy undistillable. It is not difficult to see that if $\rho_{\alpha} \in M_n \otimes M_n$ is $r$-copy undistillable then $\rho_\alpha \in M_m \otimes M_m$ is also $r$-copy undistillable for any $m \leq n$, so we then immediately arrive at the following slightly weaker (but much simpler) corollary of Theorem~\ref{thm:main}.
\begin{cor}
	If $\alpha \leq \min\big\{2/n, \ln(2)/(r + 3\ln(2) - 1)\big\}$ then $\rho_{\alpha} \in M_n \otimes M_n$ is $r$-copy undistillable.
\end{cor}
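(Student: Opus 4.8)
The plan is to derive this corollary as a direct consequence of Theorem~\ref{thm:main}, so the work reduces to showing that the hypothesis of the corollary implies the hypothesis of the theorem. Theorem~\ref{thm:main} guarantees $r$-copy undistillability of $\rho_\alpha \in M_n \otimes M_n$ provided $p := \frac{(n-2)^r}{n^r - (n-2)^r} \geq 1$ and $\alpha \leq \frac{1}{n}(\sqrt[2\lceil r/2\rceil-1]{p}+1)$. Since $p \geq 1$ forces the exponent-root term to satisfy $\sqrt[2\lceil r/2\rceil-1]{p} \geq 1$, we always have $\frac{1}{n}(\sqrt[2\lceil r/2\rceil-1]{p}+1) \geq \frac{2}{n}$. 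So the first thing I would observe is that whenever the $p \geq 1$ condition holds, any $\alpha \leq 2/n$ automatically satisfies the threshold of Theorem~\ref{thm:main}, which explains the appearance of the $2/n$ term in the corollary.

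The second ingredient is the dimension-reduction remark already noted in the excerpt: if $\rho_\alpha \in M_n \otimes M_n$ is $r$-copy undistillable then $\rho_\alpha \in M_m \otimes M_m$ is $r$-copy undistillable for any $m \leq n$. This lets me fix the worst case. First I would translate the constraint $p \geq 1$ into a lower bound on the dimension: as computed in the text, $p \geq 1$ is equivalent to $n \geq \frac{2\sqrt[r]{2}}{\sqrt[r]{2}-1}$. The key asymptotic estimate is that $\frac{2\sqrt[r]{2}}{\sqrt[r]{2}-1}$ is asymptotic to $\frac{2}{\ln(2)}r + 1$; more usefully, I would establish the clean (and slightly conservative) inequality that $n \geq \frac{2}{\ln 2}r + 1$, or some comparably simple closed form, suffices to guarantee $p \geq 1$. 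This is the step that converts the awkward $r$-th-root threshold into the transparent linear-in-$r$ bound appearing in the statement.

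The heart of the argument is then a case split on the size of $n$ relative to $r$. If $n$ is large enough that $p \geq 1$, then $\alpha \leq 2/n$ already delivers the conclusion via Theorem~\ref{thm:main} and the first paragraph. If $n$ is too small for $p \geq 1$, the idea is that we can embed the state in a higher-dimensional space via the dimension monotonicity: pick $N \geq n$ large enough that $p(N) \geq 1$, verify $\rho_\alpha \in M_N \otimes M_N$ is $r$-copy undistillable, and then restrict back down to $M_n \otimes M_n$. The constraint $\alpha \leq \ln(2)/(r+3\ln(2)-1)$ is engineered precisely so that it forces $n$ into a regime (relative to $1/\alpha$) where this embedding works and the $2/n$ branch applies. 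Concretely, I would show that $\alpha \leq \ln(2)/(r+3\ln(2)-1)$ implies $1/\alpha \geq (r + 3\ln 2 - 1)/\ln 2$, which I would compare against $\frac{2}{\ln 2}r + 1$ to confirm the embedding dimension can be chosen consistently with $\alpha \leq 2/N$.

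The main obstacle I expect is the bookkeeping in this last comparison: tracking the exact constants so that the two thresholds $2/n$ and $\ln(2)/(r+3\ln(2)-1)$ fit together cleanly under the minimum, and verifying that the embedding dimension $N$ can always be chosen both $\geq n$ and satisfying $p(N) \geq 1$ and $\alpha \leq 2/N$ simultaneously. This amounts to checking a chain of elementary but fiddly inequalities relating $\alpha$, $r$, and $n$, together with care about the ceiling/parity in the exponent $2\lceil r/2\rceil - 1$; everything else is a direct appeal to Theorem~\ref{thm:main} and the already-stated dimension-monotonicity of $r$-copy undistillability. Since the corollary is explicitly flagged as ``slightly weaker (but much simpler),'' I anticipate the constants have deliberately been loosened to make these inequalities go through with room to spare, so the proof should be short once the case split is set up correctly.
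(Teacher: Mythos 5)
Your proposal is correct and follows essentially the same route as the paper, which derives the corollary directly from Theorem~\ref{thm:main} via the observation that $p \geq 1$ makes the threshold at least $2/n$, the equivalence $p \geq 1 \Leftrightarrow n \geq 2\sqrt[r]{2}/(\sqrt[r]{2}-1) \sim \tfrac{2}{\ln 2}r + 1$, and the dimension-monotonicity of $r$-copy undistillability. The only caution is in the one step you flagged as bookkeeping: the exact threshold sits slightly \emph{above} $\tfrac{2}{\ln 2}r+1$ (by about $\tfrac{\ln 2}{6r}$ plus lower-order terms), so your ``clean'' sufficient condition must be taken with a slightly larger additive constant, which the slack built into $3\ln(2)-1$ comfortably absorbs.
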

Similar results about $r$-copy undistillability of Werner states have appeared in the literature in the past \cite{BR03}. Notably, in \cite{LBCKKSST00} it was shown that, for any fixed $n \geq 3$, there exist NPPT Werner states that are $r$-copy undistillable, though the region that was shown to be $r$-copy undistillable shrinks exponentially with $r$. Our result is stronger in that our regions shown to be $r$-copy undistillable shrink only like $1/r$. On the other hand, for each fixed $n$ our result only gives a region of NPPT $r$-copy undistillability for $r \leq \ln(2)(n - 3) + 1$.

\section{Minimum Gate Fidelity}\label{sec:min_gate_fid}

If a quantum channel $\cl{U}$ satisfies $\cl{U}(\rho) = U\rho U^\dagger$ for some unitary operator $U$, $\cl{U}$ is called a \emph{unitary channel}. Unitary channels are exactly the channels that do not introduce mixedness (i.e., decoherence) into states and thus they are often the types of channels that are meant to be implemented in experimental settings. However, no implementation of a channel is perfect -- errors are introduced that cause the channel that is implemented to not actually be unitary. The \emph{gate fidelity} is a useful tool for comparing how well the implemented quantum channel $\cl{E}$ approximates the desired unitary channel $\cl{U}$. Gate fidelity is a function defined on pure states as follows:
\begin{align*}
	\cl{F}_{\cl{E},\cl{U}}(\ket{v}) := \cl{F}(\cl{E}(\ketbra{v}{v}),\cl{U}(\ketbra{v}{v})) = \bra{v}U^\dagger \cl{E}(\ketbra{v}{v})U\ket{v},
\end{align*}
where we recall from Section~\ref{sec:fidelity} that $\cl{F}(\cdot,\cdot)$ is the fidelity between states.
	
Without loss of generality, we can assume $U = I$ by noting that
\begin{align*}
	\bra{v}U^\dagger \cl{E}(\ketbra{v}{v})U\ket{v} = \bra{v} (\cl{U}^\dagger \circ \cl{E})(\ketbra{v}{v})\ket{v},
\end{align*}
so $\cl{F}_{\cl{E},\cl{U}} = \cl{F}_{\cl{U}^\dagger \circ \cl{E},id_n}$. For brevity, we will use the shorthand $\cl{F}_{\cl{E}} := \cl{F}_{\cl{E},id_n}$, which can be thought of as measuring how much noise the channel $\cl{E}$ introduces to a given pure state.
	
The two most well-studied distance measures based on the gate fidelity are the \emph{average gate fidelity} $\overline{\cl{F}_{\cl{E}}}$ \cite{HHH99,Nie02,BOSBJ02,GLN05,EAZ05} and the \emph{minimum gate fidelity} \cite{NC00,GLN05}
\begin{align}
	\cl{F}_{\cl{E}}^{\textup{min}} := \min_{\ket{v}} \cl{F}_{\cl{E}} (\ket{v}),
\end{align}
which are obtained by either averaging (via the Fubini-Study measure \cite{BZ06}) or minimizing over all pure states $\ket{v}$, respectively. The minimum gate fidelity has the interpretation as the most noise that $\cl{E}$ can introduce into a quantum system. It makes sense then that one might want instead to minimize $\cl{F}(\cl{E}(\rho),\rho)$ over all mixed states $\rho$. The reason we minimize over pure states is that joint concavity of fidelity implies that minimizing over mixed states $\rho$ gives the exact same quantity $\cl{F}_{\cl{E}}^{\textup{min}}$ as minimizing over pure states $\ket{v}$ -- proofs of this fact are contained in \cite[Section 9.3]{NC00} and \cite[Section IV.C]{GLN05}.

It is well-known that the average gate fidelity is easily-computable, and many formulas for computing it have appeared over the years \cite{HHH99,JK11b,Nie02}. Similarly, formulas for the variance and higher-order moments of the gate fidelity have been derived \cite{MBE11,PMM11}. However, computing the minimum gate fidelity has proved to be much more difficult -- some partial results are known \cite{KBO09,LRKKB11}, but no easy method of calculation is known in general. In this section, we make significant progress on this problem by showing that the minimum gate fidelity can be written in terms of the $S(1)$-norm, which allows all of the computational methods and inequalities already presented to apply in this setting. Furthermore, we show that computing the minimum gate fidelity of an arbitrary quantum channel is NP-hard.

Our starting point is the following simple lemma, which shows how minimum gate fidelity is related to separable states. Note that for this result, and the remainder of this section, we assume that the transpose in the partial transposition map is applied to the first subsystem (i.e., $X^\Gamma = (T \otimes id)(X)$). This assumption just simplifies some algebra -- most of our results do not depend on which subsystem the transpose is applied to.
\begin{lemma}\label{lem:fidel_choi}
	Let $\Phi : M_m \rightarrow M_n$ be a linear map and let $\ket{v} \in \mathbb{C}^m$, $\ket{w} \in \mathbb{C}^n$. Then
	\begin{align*}
		\bra{x}\Phi(\ketbra{v}{w})\ket{y} = \bra{wx}C_{\Phi}^\Gamma\ket{vy}.
	\end{align*}	
\end{lemma}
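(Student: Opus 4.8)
The plan is to prove the identity by unwinding the definition of the Choi matrix and computing both sides directly, since the statement is essentially a bookkeeping lemma about how the Choi--Jamio{\l}kowski isomorphism interacts with the partial transpose. First I would recall that, up to the normalization factor $m$, the Choi matrix is $C_\Phi = \sum_{i,j=1}^m \ketbra{i}{j} \otimes \Phi(\ketbra{i}{j})$, as given in Equation~\eqref{eq:choi}. The natural approach is to verify the identity on basis elements $\ket{v} = \ket{j}$, $\ket{w} = \ket{i}$, $\ket{x} = \ket{b}$, $\ket{y} = \ket{a}$ and then extend by (conjugate-)linearity in each of the four vector arguments.

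The key computation runs as follows. On the right-hand side, I would write the partial transpose (applied to the first subsystem) as $C_\Phi^\Gamma = \sum_{i,j=1}^m \ketbra{j}{i} \otimes \Phi(\ketbra{i}{j})$, simply swapping $i$ and $j$ in the first tensor factor. Then
\begin{align*}
	\bra{wx}C_\Phi^\Gamma\ket{vy} &= \sum_{i,j=1}^m \braket{w}{j}\braket{i}{v} \bra{x}\Phi(\ketbra{i}{j})\ket{y}.
\end{align*}
For the basis choice $\ket{v}=\ket{j_0}$, $\ket{w}=\ket{i_0}$ the inner products $\braket{w}{j}$ and $\braket{i}{v}$ collapse the double sum to the single term $i=i_0$, $j=j_0$, leaving exactly $\bra{x}\Phi(\ketbra{i_0}{j_0})\ket{y} = \bra{x}\Phi(\ketbra{w}{v})\ket{y}$, which matches the left-hand side. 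Extending by linearity in $\ket{v}$ and conjugate-linearity in $\ket{w}$ (together with the obvious linearity in $\ket{x}$, $\ket{y}$) then gives the full identity for arbitrary vectors. I would be careful to track which subsystem the transpose hits, since the statement fixes $\Gamma = (T \otimes id)$, and to confirm that $\bra{wx}$ denotes $\bra{w} \otimes \bra{x}$ under the tensor ordering in which $\Phi$ acts on the second factor.

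I do not expect any genuine obstacle here: the lemma is a direct consequence of the definition of $C_\Phi$ and the action of the partial transpose on the standard basis, so the only thing requiring care is consistency of conventions (the subsystem ordering $\ket{wx}$ versus $\ket{vy}$, and the placement of the transpose on the first subsystem). The mildly delicate point is verifying that the conjugate-linearity in the bra arguments $\ket{w}$ and $\ket{x}$ is respected when extending from basis vectors to general vectors, but this is automatic since both sides are sesquilinear in exactly the same way. Thus the entire proof amounts to the single-term collapse of the double sum above, followed by a one-line linearity extension.
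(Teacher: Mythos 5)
Your approach is the same as the paper's: expand $C_\Phi^\Gamma = \sum_{i,j}\ketbra{j}{i}\otimes\Phi(\ketbra{i}{j})$ and compute $\bra{wx}C_\Phi^\Gamma\ket{vy} = \sum_{i,j}\braket{w}{j}\braket{i}{v}\bra{x}\Phi(\ketbra{i}{j})\ket{y}$, which is exactly the paper's second line. The only difference is in how you finish: the paper never specializes to basis vectors, but instead pulls the sums inside $\Phi$ by linearity and recognizes $\sum_i\braket{i}{v}\ket{i} = \ket{v}$ and $\sum_j\braket{w}{j}\bra{j} = \bra{w}$, landing directly on $\bra{x}\Phi(\ketbra{v}{w})\ket{y}$ in one step. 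Your basis-plus-sesquilinearity route is equally valid but slightly longer.

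However, your specialization step contains an index slip that, in a lemma which is entirely bookkeeping, you should fix. With $\ket{v}=\ket{j_0}$ and $\ket{w}=\ket{i_0}$, the factor $\braket{w}{j}=\delta_{i_0,j}$ forces $j=i_0$ and the factor $\braket{i}{v}=\delta_{i,j_0}$ forces $i=j_0$ --- not $i=i_0$, $j=j_0$ as you wrote. The surviving term is therefore $\bra{x}\Phi(\ketbra{j_0}{i_0})\ket{y} = \bra{x}\Phi(\ketbra{v}{w})\ket{y}$, which is what matches the left-hand side. As written, your collapse produces $\bra{x}\Phi(\ketbra{w}{v})\ket{y}$, and your claim that this ``matches the left-hand side'' is false unless $i_0=j_0$; two errors are cancelling in your head but not on the page. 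The general formula you derived before specializing is correct, so the fix is one line, and indeed the cleanest repair is to skip the specialization entirely and resum as the paper does.
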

\begin{proof}
	The proof is by simple algebra.
	\begin{align*}
		\bra{wx}C_{\Phi}^\Gamma\ket{vy} & = \sum_{i,j=1}^{m}\bra{wx}(T(\ketbra{i}{j}) \otimes \Phi(\ketbra{i}{j}))\ket{vy} \\
		& = \sum_{i,j=1}^{m}\braket{w}{j}\braket{i}{v} \bra{x}\Phi(\ketbra{i}{j})\ket{y} \\
		& = \bra{x}\Phi\big((\sum_{i=1}^{m}\braket{i}{v}\ket{i}) (\sum_{j=1}^{m}\braket{w}{j} \bra{j})\big)\ket{y} \\
		& = \bra{x}\Phi(\ketbra{v}{w})\ket{y}.
	\end{align*}
\end{proof}
\noindent In particular, Lemma~\ref{lem:fidel_choi} says that $\cl{F}_{\cl{E}} (\ket{v}) = \bra{vv}C_{\cl{E}}^\Gamma\ket{vv}$.

\subsection{Connection with the S(1)-Norm}\label{sec:min_gate_fid_connection}

We now demonstrate how the minimum gate fidelity can be written in terms of the $S(1)$-operator norm. Recall from Section~\ref{sec:symmetric_sub} that $P_{\cl{S}}$ is the orthogonal projection onto the symmetric subspace of $\bb{C}^n \otimes \bb{C}^n$. 
\begin{thm}\label{thm:min_fid}
	Let $\cl{E} : M_n \rightarrow M_n$ be a quantum channel and let $\lambda_{\textup{max}}$ be the maximal eigenvalue of $P_{\cl{S}}C_{\cl{E}}^\Gamma P_{\cl{S}}$. Then
	\begin{align*}
		\cl{F}_{\cl{E}}^{\textup{min}} = \lambda_{\textup{max}} - \big\| P_{\cl{S}}(\lambda_{\textup{max}} I - C_{\cl{E}}^\Gamma)P_{\cl{S}} \big\|_{S(1)}.
	\end{align*}
\end{thm}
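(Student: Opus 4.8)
The plan is to reduce the optimization over pure states defining $\cl{F}_{\cl{E}}^{\textup{min}}$ to a supremum over separable states, and then recognize that supremum as an $S(1)$-norm. By Lemma~\ref{lem:fidel_choi} we have $\cl{F}_{\cl{E}}(\ket{v}) = \bra{vv}C_{\cl{E}}^\Gamma\ket{vv}$, so
\begin{align*}
	\cl{F}_{\cl{E}}^{\textup{min}} = \min_{\ket{v}} \bra{vv}C_{\cl{E}}^\Gamma\ket{vv}.
\end{align*}
The key observation is that the relevant vectors $\ket{vv} = \ket{v} \otimes \ket{v}$ are exactly the \emph{symmetric} separable states in $\bb{C}^n \otimes \bb{C}^n$. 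First I would argue that minimizing $\bra{vv}C_{\cl{E}}^\Gamma\ket{vv}$ over $\ket{v}$ is the same as minimizing $\bra{w}C_{\cl{E}}^\Gamma\ket{w}$ over all unit $\ket{w}$ in the symmetric subspace $\cl{S}$ that happen to be separable. By the Takagi-factorization fact recalled in Section~\ref{sec:symmetric_sub}, every symmetric state has a symmetric Schmidt decomposition $\ket{w} = \sum_i \alpha_i \ket{a_i} \otimes \ket{a_i}$; since we only care about \emph{separable} symmetric states here (Schmidt rank one), these are precisely the $\ket{vv}$. Thus the minimization is genuinely over the set $\{\ket{vv}\}$, and because $P_{\cl{S}}\ket{vv} = \ket{vv}$, we may freely insert projections: $\bra{vv}C_{\cl{E}}^\Gamma\ket{vv} = \bra{vv}P_{\cl{S}}C_{\cl{E}}^\Gamma P_{\cl{S}}\ket{vv}$.

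**Turning a minimum into an $S(1)$-norm.** The $S(1)$-norm is defined via a supremum, whereas we have a minimum, so the standard trick is to subtract from a large multiple of the identity. Let $M := P_{\cl{S}}C_{\cl{E}}^\Gamma P_{\cl{S}}$ and let $\lambda_{\textup{max}}$ be its largest eigenvalue. Then
\begin{align*}
	\cl{F}_{\cl{E}}^{\textup{min}} = \min_{\ket{v}} \bra{vv}M\ket{vv} = \lambda_{\textup{max}} - \max_{\ket{v}} \bra{vv}(\lambda_{\textup{max}}I - M)\ket{vv}.
\end{align*}
The operator $\lambda_{\textup{max}}I - M$ is positive semidefinite by the choice of $\lambda_{\textup{max}}$. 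Now I would invoke Proposition~\ref{prop:MatMultDiffVectors}: for a positive semidefinite operator $X$, $\big\|X\big\|_{S(1)} = \sup\{\bra{u}X\ket{u} : SR(\ket{u}) = 1\}$, i.e. the supremum over separable (product) states. Writing $\ket{u} = \ket{a}\otimes\ket{b}$, I would show the supremum defining $\big\|\lambda_{\textup{max}}I - M\big\|_{S(1)}$ is in fact attained on the diagonal product states $\ket{vv}$. This is where the symmetric-subspace structure is essential: since $\lambda_{\textup{max}}I - M = (\lambda_{\textup{max}}I - M)P_{\cl{S}} + \lambda_{\textup{max}}(I - P_{\cl{S}})$ up to the nonsymmetric part, and $M = P_{\cl{S}}(\cdots)P_{\cl{S}}$ is supported on $\cl{S}$, the product states maximizing $\bra{ab}(\lambda_{\textup{max}}I - M)\ket{ab}$ will want to minimize $\bra{ab}M\ket{ab} = |\braket{P_{\cl{S}}(ab)}{\cdots}|$, and one checks $\bra{ab}M\ket{ab}$ is minimized (equivalently the objective maximized) by symmetric product vectors $\ket{a} = \ket{b}$.

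**Main obstacle and its resolution.** The crux — and the step I expect to be hardest — is rigorously establishing that $\big\|\lambda_{\textup{max}}I - M\big\|_{S(1)}$, a supremum over \emph{all} product states $\ket{a}\otimes\ket{b}$, coincides with the supremum restricted to diagonal product states $\ket{v}\otimes\ket{v}$. One clean route is to use that $M$ is supported entirely on the symmetric subspace, so for any product state $\ket{ab}$ we have $\bra{ab}M\ket{ab} = \bra{ab}P_{\cl{S}}C_{\cl{E}}^\Gamma P_{\cl{S}}\ket{ab}$, and $P_{\cl{S}}\ket{ab} = \tfrac{1}{2}(\ket{ab}+\ket{ba})$. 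The maximum of $\bra{ab}(\lambda_{\textup{max}}I - M)\ket{ab}$ over product states equals $\lambda_{\textup{max}}\|\ket{ab}\|^2$ minus the minimum of $\bra{ab}M\ket{ab}$; since the global minimum of $\bra{vv}M\ket{vv}$ over product symmetric states already realizes $\cl{F}_{\cl{E}}^{\textup{min}}$, I would argue no asymmetric product state can push $\bra{ab}M\ket{ab}$ lower, appealing either to a direct variational argument or to the concavity/joint-concavity of fidelity noted in the text (which guarantees the minimum gate fidelity over mixed states equals that over pure states). Combining the two displays gives
\begin{align*}
	\cl{F}_{\cl{E}}^{\textup{min}} = \lambda_{\textup{max}} - \big\| P_{\cl{S}}(\lambda_{\textup{max}}I - C_{\cl{E}}^\Gamma)P_{\cl{S}} \big\|_{S(1)},
\end{align*}
where I have used that $P_{\cl{S}}(\lambda_{\textup{max}}I - C_{\cl{E}}^\Gamma)P_{\cl{S}} = \lambda_{\textup{max}}P_{\cl{S}} - M$ acts as $\lambda_{\textup{max}}I - M$ on the relevant symmetric product states, completing the proof.
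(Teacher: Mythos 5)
Your setup is correct and matches the paper's: reduce via Lemma~\ref{lem:fidel_choi} to $\min_{\ket{v}}\bra{vv}C_{\cl{E}}^\Gamma\ket{vv}$, subtract from $\lambda_{\textup{max}}$, and observe that the resulting maximum over diagonal product states $\ket{vv}$ is bounded above by $\big\|P_{\cl{S}}(\lambda_{\textup{max}}I - C_{\cl{E}}^\Gamma)P_{\cl{S}}\big\|_{S(1)}$ via Proposition~\ref{prop:MatMultDiffVectors}. You have also correctly isolated the crux: for $X := P_{\cl{S}}(\lambda_{\textup{max}}I - C_{\cl{E}}^\Gamma)P_{\cl{S}} \geq 0$ one must show that the supremum of $\bra{ab}X\ket{ab}$ over \emph{all} product states is already attained at symmetric product states $\ket{vv}$. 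But that step is exactly where your proposal stops being a proof. Neither of the two resolutions you offer works as stated. Joint concavity of fidelity only shows that minimizing $\cl{F}(\cl{E}(\rho),\rho)$ over mixed $\rho$ gives the same answer as over pure $\rho$; it says nothing about the quadratic form $\bra{ab}X\ket{ab}$ at asymmetric product vectors $\ket{a}\neq\ket{b}$ (without the symmetric projections this would be $\bra{b}\cl{E}(\ketbra{a}{a})\ket{b}$, the overlap of the output with a \emph{different} pure state), which is a genuinely different optimization. And the ``direct variational argument'' is never carried out; the assertion that an operator supported on $\cl{S}$ attains its $S(1)$-norm at a symmetric product vector is a nontrivial fact --- essentially the symmetric-optimizer result of \cite{HKWGG09} --- and cannot simply be asserted.

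The paper closes this gap with a concrete two-step argument that your proposal would need to reproduce. For $\ket{w}\neq\ket{v}$ one writes $P_{\cl{S}}\ket{vw} = \tfrac{1}{2}(\ket{vw}+\ket{wv}) = \alpha\ket{xx}+\beta\ket{yy}$ via the Takagi factorization, assumes without loss of generality that $\bra{xx}X\ket{xx}\geq\bra{yy}X\ket{yy}$, and applies Cauchy--Schwarz to the cross terms to obtain $\bra{vw}X\ket{vw}\leq(\alpha+\beta)^2\bra{xx}X\ket{xx}$. The punchline is the explicit computation that the squared Takagi coefficients are $\tfrac{1}{4}\big(\braket{x}{y}\pm 1\big)^2$, whence $\alpha+\beta = \tfrac{1}{2}\big|\braket{x}{y}+1\big| + \tfrac{1}{2}\big|\braket{x}{y}-1\big| = 1$, so no asymmetric product state can exceed the best symmetric one. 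Without this computation, or an equivalent argument, the claimed identity remains unproven.
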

\begin{proof}
	Using Lemma~\ref{lem:fidel_choi} with $\Phi := \cl{E}$ reveals that
	\begin{align}\label{eq:min_formula}
		\cl{F}_{\cl{E}}^{\textup{min}} = \min_{\ket{v}}\big\{ \bra{vv}C_{\cl{E}}^\Gamma \ket{vv} \big\} = \lambda_{\textup{max}} - \max_{\ket{v}}\big\{ \bra{vv}(\lambda_{\textup{max}} I - C_{\cl{E}}^\Gamma)\ket{vv} \big\}.
	\end{align}
	For convenience, define $X := P_{\cl{S}}(\lambda_{\textup{max}} I - C_{\cl{E}}^\Gamma)P_{\cl{S}}$. Notice that $X$ is positive semidefinite. Also note that
	\begin{align*}
		\max_{\ket{v}}\big\{ \bra{vv}(\lambda_{\textup{max}} I - C_{\cl{E}}^\Gamma)\ket{vv} \big\} \leq \max_{\ket{v},\ket{w}}\big\{ \bra{vw}X\ket{vw} \big\} = \big\| X \big\|_{S(1)},
	\end{align*}
	where the equality follows from Proposition~\ref{prop:MatMultDiffVectors}. To see that the opposite inequality holds as well (and hence complete the proof), suppose $\ket{w} \neq \ket{v}$ and observe that $P_{\cl{S}}\ket{vw} = \frac{1}{2}(\ket{vw} + \ket{wv})$ is a scalar multiple of a symmetric state with Schmidt rank $2$. It follows via the Takagi factorization that we can write $P_{\cl{S}}\ket{vw} = \alpha\ket{xx} + \beta\ket{yy}$ for some $\ket{x},\ket{y} \in \bb{C}^n$ and $\alpha,\beta \geq 0$. Suppose without loss of generality that
	\begin{align*}
		\bra{xx}X\ket{xx} \geq \bra{yy}X\ket{yy}.
	\end{align*}
	
	Now write $X$ in its Spectral Decomposition as $X = \sum_i \lambda_i \ketbra{v_i}{v_i}$ and define the $i^{th}$ component of two vectors $x^\prime$ and $y^\prime$ by $x_i^\prime := \sqrt{\lambda_i}|\braket{v_i}{xx}|$ and $y_i^\prime := \sqrt{\lambda_i}|\braket{yy}{v_i}|$. Applying the Cauchy--Schwarz inequality to $x^\prime$ and $y^\prime$ shows
	\begin{align*}
		|\bra{yy}X\ket{xx}| \leq \sqrt{\bra{xx}X\ket{xx}}\sqrt{\bra{yy}X\ket{yy}} \leq \bra{xx}X\ket{xx}.
	\end{align*}
	Putting all of this together shows that
	\begin{align*}
		\bra{vw}X\ket{vw} & = (\alpha\bra{xx} + \beta\bra{yy})X(\alpha\ket{xx} + \beta\ket{yy}) \\
		& = \alpha^2\bra{xx}X\ket{xx} + \alpha\beta(\bra{xx}X\ket{yy} + \bra{yy}X\ket{xx}) + \beta^2\bra{yy}X\ket{yy} \\
		& \leq (\alpha^2 + \beta^2)\bra{\rho\rho}X\ket{\rho\rho} + \alpha\beta(|\bra{\rho\rho}X\ket{\sigma\sigma}| + |\bra{\sigma\sigma}X\ket{\rho\rho}|) \\
		& \leq (\alpha^2 + 2\alpha\beta + \beta^2)\bra{xx}X\ket{xx} \\
		& = (\alpha + \beta)^2\bra{xx}X\ket{xx}.
	\end{align*}
	
	Thus, if we can prove that $\alpha + \beta \leq 1$ then we are done. To this end, first note that without loss of generality we can assume that $\braket{x}{y}$ is real, simply by adjusting the global phase between $\ket{x}$ and $\ket{y}$ appropriately. Now recall from the Takagi factorization that $\alpha$ and $\beta$ are the square roots of the eigenvalues of the matrix
	\begin{align*}
		AA^\dagger & := \frac{1}{4}\big(\ket{x}\overline{\bra{y}} + \ket{y}\overline{\bra{x}}\big)\big(\overline{\ket{y}}\bra{x} + \overline{\ket{x}}\bra{y}\big) \\
		& = \frac{1}{4}\big(\ketbra{x}{x} + \braket{x}{y} (\ketbra{y}{x} + \ketbra{x}{y}) + \ketbra{y}{y}\big).
	\end{align*}
	It is easily verified that eigenvectors of $AA^\dagger$ are $\ket{x} \pm \ket{y}$ and the associated eigenvalues are $\frac{1}{4}\big( \braket{x}{y} \pm 1 \big)^2$. If we add the square roots of these eigenvalues, we get
	\begin{align*}
		\alpha + \beta = \frac{1}{2}\big|\braket{x}{y} + 1\big| + \frac{1}{2}\big|\braket{x}{y} - 1\big| = 1,
	\end{align*}
	where the final equality follows from the fact that $-1 \leq \braket{x}{y} \leq 1$.
\end{proof}

The inequalities of Section~\ref{sec:MatrixNormInequalities} and the computational methods from earlier in this chapter now all immediately apply to the minimum gate fidelity. We present a brief selection of these results here for completeness.
\begin{cor}
	Let $\cl{E} : M_n \rightarrow M_n$ be a quantum channel. Denote the eigenvalues of $P_{\cl{S}}C_{\cl{E}}^\Gamma P_{\cl{S}}$ supported on $P_{\cl{S}}$ by $\lambda_1 \geq \lambda_2 \geq \cdots \geq \lambda_{n(n+1)/2}$ (i.e., these are the eigenvalues of $P_{\cl{S}}C_{\cl{E}}^\Gamma P_{\cl{S}}$ with $n(n-1)/2$ zero eigenvalues removed). Let $\alpha_j$ be the maximal Schmidt coefficient of the eigenvector corresponding to $\lambda_j$. Then
	\begin{align*}
		\max_{j}\{ (\lambda_1 - \lambda_j) \alpha_j^2 \} \leq \lambda_1 - \cl{F}_{\cl{E}}^{\textup{min}} \leq \min\big\{\lambda_1 - \lambda_{n(n+1)/2}, \sum_j (\lambda_1 - \lambda_j) \alpha_j^2\big\}.
	\end{align*}
\end{cor}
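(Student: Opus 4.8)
The strategy is to combine Theorem~\ref{thm:min_fid}, which expresses $\cl{F}_{\cl{E}}^{\textup{min}} = \lambda_1 - \big\| X \big\|_{S(1)}$ with $X := P_{\cl{S}}(\lambda_1 I - C_{\cl{E}}^\Gamma)P_{\cl{S}}$, with the inequalities for the $S(1)$-norm of a positive operator developed in Section~\ref{sec:MatrixNormInequalities}. Rearranging Theorem~\ref{thm:min_fid} gives $\lambda_1 - \cl{F}_{\cl{E}}^{\textup{min}} = \big\| X \big\|_{S(1)}$, so the entire corollary reduces to sandwiching $\big\|X\big\|_{S(1)}$ between the claimed lower and upper bounds. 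The key observation is that $X$ is positive semidefinite (as noted in the proof of Theorem~\ref{thm:min_fid}) and that its nonzero eigenvalues are precisely $\{\lambda_1 - \lambda_j\}_{j=1}^{n(n+1)/2}$ with corresponding eigenvectors equal to those of $P_{\cl{S}}C_{\cl{E}}^\Gamma P_{\cl{S}}$ supported on the symmetric subspace; the $n(n-1)/2$ zero eigenvalues of $X$ on the antisymmetric part contribute nothing to the $S(1)$-norm.

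First I would record the spectral decomposition $X = \sum_j (\lambda_1 - \lambda_j)\ketbra{v_j}{v_j}$, where $\{\ket{v_j}\}$ are the eigenvectors of $P_{\cl{S}}C_{\cl{E}}^\Gamma P_{\cl{S}}$ and $\alpha_j$ is the maximal Schmidt coefficient of $\ket{v_j}$, so that $\big\|\ket{v_j}\big\|_{s(1)}^2 = \alpha_j^2$ by Theorem~\ref{thm:sk_vector_norm} in the $k=1$ case. For the upper bound, I would apply Proposition~\ref{prop:matrixVectorNorms} to the normal (indeed positive) operator $X$: since $\big\|X\big\|_{S(1)} \leq \sum_j |\lambda_1 - \lambda_j|\big\|\ket{v_j}\big\|_{s(1)}^2 = \sum_j (\lambda_1 - \lambda_j)\alpha_j^2$, and separately $\big\|X\big\|_{S(1)} \leq \big\|X\big\| = \lambda_1 - \lambda_{n(n+1)/2}$ (the maximal eigenvalue of $X$), the minimum of these two quantities gives exactly the stated upper bound on $\lambda_1 - \cl{F}_{\cl{E}}^{\textup{min}}$.

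For the lower bound, I would use the definition of the $S(1)$-norm directly: for each fixed $j$, choose a separable vector $\ket{w}$ achieving the maximal Schmidt coefficient of $\ket{v_j}$, so that $|\braket{w}{v_j}|^2 = \alpha_j^2$ (the best rank-one approximation picks out the largest Schmidt coefficient). Then $\big\|X\big\|_{S(1)} \geq \bra{w}X\ket{w} = \sum_i (\lambda_1 - \lambda_i)|\braket{w}{v_i}|^2 \geq (\lambda_1 - \lambda_j)\alpha_j^2$, and maximizing over $j$ gives the left-hand bound. The main subtlety, and the step I expect to require the most care, is verifying that the eigenvectors of $X$ really are those of $P_{\cl{S}}C_{\cl{E}}^\Gamma P_{\cl{S}}$ supported on $P_{\cl{S}}$ and that the $s(1)$-norm of such an eigenvector equals its top Schmidt coefficient, since one must confirm that the separable witness $\ket{w}$ has controlled overlap with the \emph{other} eigenvectors $\ket{v_i}$ (here positivity of the $\lambda_1 - \lambda_i$ and the fact that we discard nonnegative terms makes the one-sided bound automatic, which is why the lower bound keeps only a single index $j$).
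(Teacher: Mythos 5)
Your proposal is correct and follows essentially the same route as the paper: rewrite $\lambda_1 - \cl{F}_{\cl{E}}^{\textup{min}}$ as $\big\|P_{\cl{S}}(\lambda_1 I - C_{\cl{E}}^\Gamma)P_{\cl{S}}\big\|_{S(1)}$ via Theorem~\ref{thm:min_fid}, get the two upper bounds from $\|\cdot\|_{S(1)} \leq \|\cdot\|$ and from Proposition~\ref{prop:matrixVectorNorms} together with Theorem~\ref{thm:sk_vector_norm}, and get the lower bound by testing against the separable state realizing the top Schmidt coefficient of each eigenvector and discarding the remaining nonnegative terms. No gaps.
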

\begin{proof}
	The fact that $\lambda_1 - \cl{F}_{\cl{E}}^{\textup{min}} \leq \lambda_1 - \lambda_{n(n+1)/2}$ follows immediately from Theorem~\ref{thm:min_fid} and the fact that $\|\cdot\|_{S(1)} \leq \|\cdot\|$. The other upper bound of $\lambda_1 - \cl{F}_{\cl{E}}^{\textup{min}}$ follows from Theorem~\ref{thm:sk_vector_norm} and Proposition~\ref{prop:matrixVectorNorms}. The lower bound can be derived by using the spectral decomposition to write
	\begin{align*}
		P_{\cl{S}}C_{\cl{E}}^\Gamma P_{\cl{S}} = \sum_{j} \lambda_j \ketbra{v_j}{v_j}.
	\end{align*}
	If $\ket{v} \in \bb{C}^n \otimes \bb{C}^n$ is the separable state corresponding to the maximal Schmidt coefficient $\alpha_j$ of $\ket{v_j}$ then
	\begin{align*}
		\bra{v}P_{\cl{S}}(\lambda_1 I - C_{\cl{E}}^\Gamma)P_{\cl{S}}\ket{v} & = \sum_i (\lambda_1 - \lambda_i) |\braket{v_i}{v}|^2 \\
		& = (\lambda_1 - \lambda_j) \alpha_j^2 + \sum_{i\neq j} (\lambda_1 - \lambda_i) |\braket{v_i}{v}|^2 \\
		& \geq (\lambda_1 - \lambda_j) \alpha_j^2.
	\end{align*}
	The corresponding lower bound follows by letting $j$ range from $1$ to $n(n+1)/2$.
\end{proof}

When $n = 2$, the $S(1)$-norm can be efficiently computed to any desired accuracy via the semidefinite programs of either Section~\ref{sec:SP_pos_map} or~\ref{sec:SP_shareable}. As a corollary of this fact, we now have a semidefinite program for efficiently computing the minimum gate fidelity of qubit channels $\cl{E} : M_2 \rightarrow M_2$. The primal and dual forms of the semidefinite program in question are as follows:
\begin{align*}
\begin{matrix}
\begin{tabular}{r l}
\multicolumn{2}{c}{{\bf Primal problem}} \\
\text{minimize:} & $\lambda_1 - \Tr\big(P_{\cl{S}}(\lambda_1 I - C_{\cl{E}}^\Gamma) P_{\cl{S}}\rho\big)$ \\
\text{subject to:} & $\rho, \rho^\Gamma \geq 0$ \\
\ & $\Tr(\rho) \leq 1$ \\
\ & \ \\
\multicolumn{2}{c}{{\bf Dual problem}} \\
\text{maximize:} & $\lambda_1 - \big\|Y^\Gamma + P_{\cl{S}}(\lambda_1 I - C_{\cl{E}}^\Gamma)P_{\cl{S}}\big\|$ \\
\text{subject to:} & $Y \geq 0$
\end{tabular}
\end{matrix}
\end{align*}

By using MATLAB to solve this semidefinite program, we are able to approximate the distribution of the minimum gate fidelity when $n = 2$. Figure~\ref{fig:min_fid} shows the distribution of $\cl{F}_{\cl{E}}^{\textup{min}}$ and $\overline{\cl{F}_{\cl{E}}}$ when the quantum channel $\cl{E}$ is chosen by picking a Haar-uniform unitary $U \in M_4 \otimes M_2$ and setting $\cl{E}(\rho) \equiv \Tr_1(U (\ketbra{1}{1} \otimes \rho) U^{\dagger})$.
\begin{figure}[ht]
\begin{center}
\includegraphics[width=0.9\textwidth]{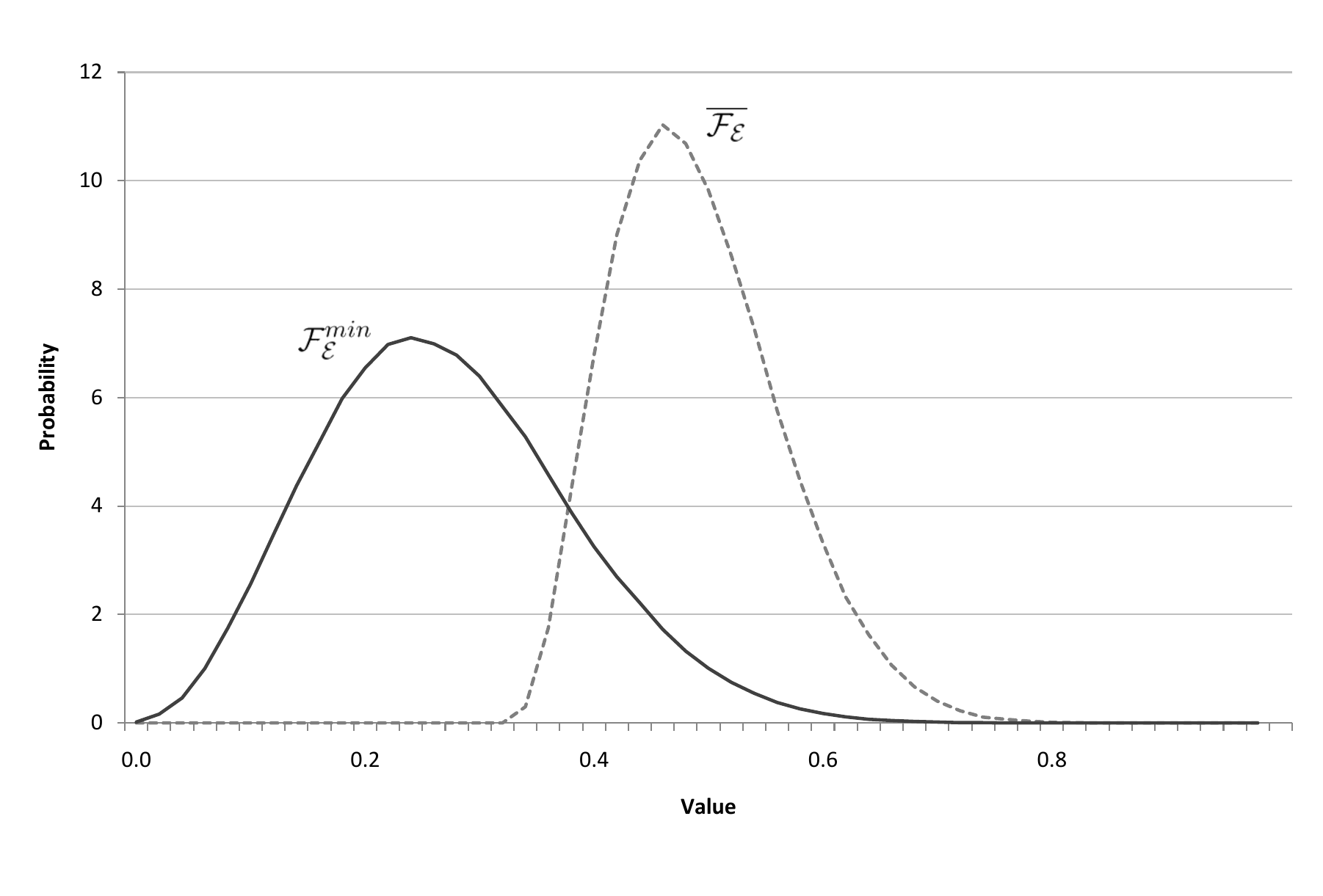}
\end{center}
\caption[Approximate distributions of minimum and average gate fidelity of qubit channels]{\hsp Approximate distributions of $\cl{F}_{\cl{E}}^{\textup{min}}$ and $\overline{\cl{F}_{\cl{E}}}$ when $n = 2$, based on $5 \cdot 10^5$ randomly-generated qubit channels.}\label{fig:min_fid}
\end{figure}

When $n \geq 3$, we no longer have a single semidefinite program that computes $\cl{F}_{\cl{E}}^{\textup{min}}$, but rather we have to use the entire hierarchy of semidefinite programs introduced in Section~\ref{sec:SP_shareable}.

\subsection{Computational Complexity}\label{sec:min_gate_fid_complexity}

In light of Theorem~\ref{thm:min_fid}, it is perhaps not surprising that computing minimum gate fidelity is NP-hard, since computing the $S(1)$-norm in general is NP-hard. We now show that computing minimum gate fidelity is indeed also NP-hard. In fact, we show that computing minimum gate fidelity is NP-hard even for the relatively small class of channels that are entanglement-breaking and self-dual.
\begin{thm}\label{thm:nphard_min_fid}
	The problem of computing $\cl{F}_{\cl{E}}^{{\rm min}}$ is NP-hard, even given the promise that $\cl{E}$ is entanglement-breaking and $\cl{E} = \cl{E}^\dagger$.
\end{thm}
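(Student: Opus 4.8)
The plan is to reduce a known NP-hard problem to the computation of minimum gate fidelity by exploiting Theorem~\ref{thm:min_fid}, which expresses $\cl{F}_{\cl{E}}^{\textup{min}}$ in terms of an $S(1)$-operator norm. Since it is known \cite{G03,G10} that computing the $S(1)$-norm is NP-hard, and since computing whether an operator is block positive is NP-hard, the natural strategy is to show that an arbitrary instance of the $S(1)$-norm computation (or block positivity detection) for a suitable positive operator $X$ can be encoded as the minimum gate fidelity of a channel $\cl{E}$ that we construct from $X$. The main content is therefore to reverse-engineer Theorem~\ref{thm:min_fid}: given a positive operator $X$, build a quantum channel $\cl{E}$ whose Choi matrix $C_{\cl{E}}$, after partial transposition and compression to the symmetric subspace, recovers $X$ (up to the $\lambda_{\textup{max}}$ shift), so that knowing $\cl{F}_{\cl{E}}^{\textup{min}}$ is equivalent to knowing $\big\|X\big\|_{S(1)}$.

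First I would recall the hardness input precisely: determining block positivity of a Hermitian operator (equivalently, computing $\big\|X\big\|_{S(1)}$ for positive semidefinite $X$ to sufficient accuracy) is NP-hard, and via Corollary~\ref{cor:kPosInf1} these are the same problem. The key structural observation from Theorem~\ref{thm:min_fid} is that $\cl{F}_{\cl{E}}^{\textup{min}} = \lambda_{\textup{max}} - \big\|P_{\cl{S}}(\lambda_{\textup{max}}I - C_{\cl{E}}^\Gamma)P_{\cl{S}}\big\|_{S(1)}$, where the norm is taken of a positive operator supported on the symmetric subspace. So the task becomes: given an arbitrary positive $X$ supported on $\cl{S}$, realize $X = P_{\cl{S}}(\lambda_{\textup{max}}I - C_{\cl{E}}^\Gamma)P_{\cl{S}}$ for some genuine channel $\cl{E}$. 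The constraints on $\cl{E}$ being a valid quantum channel are that $C_{\cl{E}} \geq 0$ and $\Tr_2(C_{\cl{E}}) = I$; the additional promised properties (entanglement-breaking and self-dual) impose that $C_{\cl{E}}$ is separable and, by Lemma~\ref{lem:choi_dual}/Proposition~\ref{prop:left_right_choi}, invariant under $S\overline{(\cdot)}S$. I would show that the set of such Choi matrices is rich enough that, after applying the affine map $C \mapsto P_{\cl{S}}(\lambda_{\textup{max}}I - C^\Gamma)P_{\cl{S}}$, its image covers a set of positive operators on $\cl{S}$ for which the $S(1)$-norm computation remains NP-hard.

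The cleanest route is probably to start from a positive operator $X$ whose $\big\|X\big\|_{S(1)}$ is hard to compute, set $C_{\cl{E}}^\Gamma := \lambda_{\textup{max}}I - X$ on the symmetric subspace (choosing $\lambda_{\textup{max}}$ large enough that $C_{\cl{E}} \geq 0$ after undoing the partial transpose), and then verify that the resulting $\cl{E}$ can be massaged into an entanglement-breaking, self-dual channel without destroying the hardness. Concretely, I would symmetrize: replace $\cl{E}$ by $\tfrac{1}{2}(\cl{E} + \cl{E}^\dagger)$ to force self-duality, and then add a sufficiently large depolarizing component to push $C_{\cl{E}}$ into the separable (entanglement-breaking) cone via Theorem~\ref{thm:superpos_choi}, while tracking exactly how these modifications shift $\cl{F}_{\cl{E}}^{\textup{min}}$ by a computable affine amount. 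Because depolarizing and symmetrizing only add computable constants and scalings to the relevant norm, the reduction preserves NP-hardness: any polynomial-time algorithm for $\cl{F}_{\cl{E}}^{\textup{min}}$ would yield one for $\big\|X\big\|_{S(1)}$.

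The hard part will be the channel-engineering step: ensuring simultaneously that $C_{\cl{E}} \geq 0$, that $\Tr_2(C_{\cl{E}}) = I$ (trace preservation), that $\cl{E}$ is \emph{separable} (entanglement-breaking), and that $\cl{E} = \cl{E}^\dagger$, all while guaranteeing that the compression $P_{\cl{S}}(\lambda_{\textup{max}}I - C_{\cl{E}}^\Gamma)P_{\cl{S}}$ still encodes a hard $S(1)$-norm instance. The trace-preservation constraint is the subtle one, since it couples the diagonal blocks of $C_{\cl{E}}$ and is not respected by naive choices of $X$; I expect the fix is to choose the hardness instance $X$ within the class of operators already compatible with $\Tr_2(C_{\cl{E}}) = I$, or to absorb any discrepancy into the depolarizing/identity part of the channel. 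Once the construction is shown to land in the promised class (entanglement-breaking and self-dual) and the affine bookkeeping between $\cl{F}_{\cl{E}}^{\textup{min}}$ and $\big\|X\big\|_{S(1)}$ is verified, the NP-hardness of the latter immediately gives NP-hardness of computing minimum gate fidelity, completing the proof.
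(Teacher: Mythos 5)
Your proposal takes a genuinely different route from the paper, and as written it has a gap that is not just bookkeeping. The paper does \emph{not} argue via Theorem~\ref{thm:min_fid} or the general NP-hardness of the $S(1)$-norm at all: it reduces directly from the NP-hard problem (due to Ioannou) of computing $\max_{x\in\mathbb{R}^n,\|x\|=1}\sum_{i,j}x_i^2x_j^2a_{ij}$ for a symmetric traceless $0$--$1$ matrix $A$, by exhibiting the explicit channel $\cl{E}=\cl{D}-\tfrac{1}{n^2(n-1)}\cl{S}_A$ (depolarizing minus a small Schur multiplier). For this channel trace preservation and self-duality are immediate, separability of $C_{\cl{E}}$ follows from the Gurvits--Barnum ball, and $\cl{F}^{\textup{min}}_{\cl{E}}$ evaluates \emph{from the definition} to an explicit affine function of the hard quantity, because $\bra{v}\cl{S}_A(\ketbra{v}{v})\ket{v}=\sum_{i,j}|v_i|^2|v_j|^2a_{ij}$. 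All of the feasibility constraints you worry about are satisfied by construction rather than engineered after the fact.

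The concrete gap in your plan is the choice of hardness input. The minimum gate fidelity only probes $\bra{vv}C_{\cl{E}}^\Gamma\ket{vv}$, i.e.\ an optimization over \emph{symmetric} product vectors $\ket{v}\otimes\ket{v}$, whereas the NP-hardness results you invoke for the $S(1)$-norm concern optimization over general product vectors $\ket{v}\otimes\ket{w}$. Theorem~\ref{thm:min_fid} closes that gap only for operators already of the form $P_{\cl{S}}YP_{\cl{S}}$ with $Y\geq 0$, and the hardness of the symmetric-restricted optimization is precisely what the paper \emph{derives} as a corollary of this theorem (the remark on symmetric witnesses); it is not available to you as an input, so your reduction is circular unless you independently establish hardness over symmetric product states. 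Separately, the channel-engineering step --- realizing an arbitrary hard $X$ as $P_{\cl{S}}(\lambda_{\textup{max}}I-C^\Gamma)P_{\cl{S}}$ subject to $C\geq 0$, $\Tr_2(C)=I$, separability, and self-duality --- is the entire content of the proof and is left at the level of ``I expect the fix is\ldots''; note in particular that your symmetrization $\tfrac{1}{2}(\cl{E}+\cl{E}^\dagger)$ leaves the gate fidelity unchanged pointwise (since $\bra{v}\cl{E}^\dagger(\ketbra{v}{v})\ket{v}=\bra{v}\cl{E}(\ketbra{v}{v})\ket{v}$) but destroys trace preservation unless $\cl{E}$ is unital, so you would have to restrict to bistochastic instances from the outset. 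If you want to salvage your approach, the cleanest repair is exactly the paper's: pick a hardness instance that is already a quartic form in the moduli $|v_i|^2$, so that the symmetric restriction costs nothing and the channel constraints are satisfied by inspection.
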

\begin{proof}
	It was noted in \cite{Ioa07} that computing
	\begin{align}\label{eq:rsdf}
		\max_{x \in \mathbb{R}^n, \|x\|=1} \big\{\sum_{i,j=1}^n x_i^2 x_j^2 a_{ij}\big\}
	\end{align}
	is NP-hard, even given the promise that $A = (a_{ij})$ is a symmetric traceless $0$--$1$ matrix. Assume for the remainder of the proof that $A \in M_n$ is such a matrix.
	
	Let $\cl{D} : M_n \rightarrow M_n$ be the completely depolarizing map defined by $\cl{D}(X) = \frac{\Tr(X)}{n}I$ and let $\cl{S}_A : M_n \rightarrow M_n$ be the Schur map defined by $cl{S}_A(X) = A * X$, where $*$ denotes the Schur (i.e., entrywise) product. Define $\cl{E} := \cl{D} - \frac{1}{n^2(n-1)}\cl{S}_A$. The map $\cl{E}$ is easily seen to be trace-preserving since $\cl{D}$ is trace-preserving and $\Tr(A) = 0$. Also, $\cl{E}$ is clearly self-dual since each of $\cl{D}$ and $\cl{S}_A$ are self-dual. Its Choi matrix is easily seen to be separable by \cite[Theorem~1]{GB02}. The map $\cl{E}$ is thus a self-dual entanglement-breaking quantum channel, and its minimum gate fidelity is 
	\begin{align*}
		\cl{F}^{\textup{min}}_{\cl{E}} & = \min_{\ket{v}} \big\{ \bra{v} \cl{E}(\ketbra{v}{v})\ket{v} \big\} \\
		& = \frac{1}{n}\left[1 - \frac{1}{n(n-1)}\max_{\ket{v}} \big\{ \bra{v} \big(\sum_{i,j=1}^n v_i\overline{v_j} a_{ij} \ketbra{i}{j}\big)\ket{v} \big\} \right] \\
		& = \frac{1}{n}\left[1 - \frac{1}{n(n-1)}\max_{\ket{v}} \big\{ \sum_{i,j=1}^n|v_i|^2|v_j|^2 a_{ij} \big\} \right] \\
		& = \frac{1}{n}\left[1 - \frac{1}{n(n-1)}\max_{x \in \mathbb{R}^n, \|x\|=1} \big\{\sum_{i,j=1}^n x_i^2 x_j^2 a_{ij} \big\} \right].
	\end{align*}
	Since performing the maximization on the right is NP-hard, so is computing $\cl{F}^{\textup{min}}_{\cl{E}}$.
\end{proof}

It is worth briefly dwelling on the fact that Theorem~\ref{thm:nphard_min_fid} implies that optimizations of the form
\begin{align*}
	\sup_{\ket{v}}\left\{\bra{vv}X\ket{vv}\right\}
\end{align*}
are NP-hard to approximate. This demonstrates that determining whether or not an operator is a so-called \emph{symmetric witness} \cite{TG10} (i.e., an entanglement witness for symmetric states) is also NP-hard.

\section{Superoperator Norms and Maximum Output Purity}\label{sec:max_output_purity}

In this section we establish a connection between the $S(k)$-operator norms and the induced Schatten superoperator norms $\|\cdot\|_{q\rightarrow p}$ in the $q = 1$, $p = \infty$ case. First, we recall that for a linear map $\Phi : M_m \rightarrow M_n$, we define
\begin{align*}
	\big\|\Phi\big\|_{1\rightarrow \infty} := \sup_{X} \Big\{ \big\| \Phi(X) \big\| : \big\|X\big\|_{tr} = 1 \Big\}.
\end{align*}
It has been shown \cite{Wat05} that if $\Phi$ is completely positive (or even just $2$-positive \cite{Aud09,Sza10}) then this supremum is attained by a positive semidefinite matrix $X$, in which case the supremum can be rephrased as an optimization over density matrices or over pure states:
\begin{align*}
	\big\|\Phi\big\|_{1\rightarrow \infty} = \sup_{\rho} \Big\{ \big\| \Phi(\rho) \big\| \Big\} = \sup_{\ket{v}} \Big\{ \big\| \Phi(\ketbra{v}{v}) \big\| \Big\},
\end{align*}
where the second equality follows easily from convexity of the operator norm. If $\Phi$ is a quantum channel, then this quantity is known as the \emph{maximum output purity} of $\Phi$ \cite{AHW00,DR05}, a term that can be motivated by observing that for quantum channels, $\big\|\Phi\big\|_{1\rightarrow \infty} \leq 1$ always and $\big\|\Phi\big\|_{1\rightarrow \infty} = 1$ if and only if there exists a density matrix $\rho$ such that $\Phi(\rho)$ is a pure state. The maximum output purity also equals the $p = \infty$ case of the maximal $p$-norm of a quantum channel \cite{Dat04,Kin03,KNR05}.

Maximum output purity and maximal $p$-norms have received a lot of attention lately because they are closely related to several important additivity conjectures in quantum information theory \cite{AB04,AH03,Hay07,Hol06,HW08,Kin02,Kin03,WH02,Win07}. It is known \cite{Wat05} that for any linear map and any integer $k \geq 1$, $\big\|\Phi\big\|_{1\rightarrow \infty} = \big\|id_k \otimes \Phi\big\|_{1\rightarrow \infty}$, so the norm $\|\cdot\|_{1\rightarrow\infty}$ is equal to its completely bounded counterpart. However, there is another completely bounded version of this norm that in general is not equal to $\|\cdot\|_{1\rightarrow\infty}$ itself and is also connected to some additivity conjectures \cite{DJKR06}.

For a completely positive map $\Phi : M_m \rightarrow M_n$ and an integer $k \geq 1$, consider the following norm:
\begin{align*}
	\big\|\Phi\big\|_{k,1\rightarrow\infty}^\prime := \sup_{\ket{v}} \left\{ \frac{\big\| (id_k \otimes \Phi)(\ketbra{v}{v}) \big\|}{\big\| \Tr_2(\ketbra{v}{v}) \big\|} : \ket{v} \in \mathbb{C}^k \otimes \mathbb{C}^m \right\}.
\end{align*}
Observe in particular that $\|\cdot\|_{1,1\rightarrow\infty}^\prime = \|\cdot\|_{1\rightarrow\infty}^\prime$. We also define $\big\|\Phi\big\|_{cb,1\rightarrow\infty}^\prime := \sup_{k \geq 1} \left\{ \big\|\Phi\big\|_{k,1\rightarrow\infty} \right\}$ and note that this norm stabilizes in the sense that $\|\cdot\|_{cb,1\rightarrow\infty}^\prime = \|\cdot\|_{\min\{m,n\},1\rightarrow\infty}^\prime$. Our main result of this section says that the norm $\big\|\Phi\big\|_{k,1\rightarrow \infty}^\prime$ is actually very familiar for us -- it simply equals the $S(k)$-operator norm of the Choi matrix of $\Phi$.
\begin{thm}\label{thm:max_output_purity}
	Let $\Phi : M_m \rightarrow M_n$ be a completely positive linear map and let $1 \leq k \leq \min\{m,n\}$. Then
	\begin{align*}
		\big\|\Phi\big\|_{k,1\rightarrow \infty}^\prime = \big\|C_\Phi\big\|_{S(k)}.
	\end{align*}
\end{thm}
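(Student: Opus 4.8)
The plan is to unfold both sides of the claimed identity into optimizations over pure states and then use the Choi--Jamio\l{}kowski machinery (specifically Lemma~\ref{lem:fidel_choi}) to match them term by term. First I would expand the left-hand side directly from the definition:
\begin{align*}
	\big\|\Phi\big\|_{k,1\rightarrow\infty}^\prime = \sup_{\ket{v} \in \mathbb{C}^k \otimes \mathbb{C}^m} \left\{ \frac{\big\| (id_k \otimes \Phi)(\ketbra{v}{v}) \big\|}{\big\| \Tr_2(\ketbra{v}{v}) \big\|} \right\}.
\end{align*}
Since $\Phi$ is completely positive, $(id_k \otimes \Phi)(\ketbra{v}{v})$ is positive semidefinite, so its operator norm is itself a supremum $\sup_{\ket{w}} \bra{w}(id_k \otimes \Phi)(\ketbra{v}{v})\ket{w}$ over unit vectors $\ket{w} \in \mathbb{C}^k \otimes \mathbb{C}^n$. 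The denominator is $\big\|\Tr_2(\ketbra{v}{v})\big\|$, which by Corollary~\ref{cor:sk_norm_ky_fan} (with $k$ there being the full Ky Fan $1$-norm) equals the square of the largest Schmidt coefficient of $\ket{v}$, i.e. $\big\|\ket{v}\big\|_{s(1)}^2$.

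Next I would bring in the right-hand side. By Proposition~\ref{prop:MatMultDiffVectors}, since $C_\Phi \geq 0$, we have $\big\|C_\Phi\big\|_{S(k)} = \sup\{\bra{u}C_\Phi\ket{u} : SR(\ket{u}) \leq k\}$. The key computational bridge is Lemma~\ref{lem:fidel_choi}, which for a general map gives $\bra{x}\Phi(\ketbra{v}{w})\ket{y} = \bra{wx}C_\Phi^\Gamma\ket{vy}$; applied after tensoring with $id_k$ this should let me rewrite the numerator $\bra{w}(id_k \otimes \Phi)(\ketbra{v}{v})\ket{w}$ as a quadratic form in $C_\Phi$ evaluated on a vector built from $\ket{v}$ and $\ket{w}$ via partial transposition and reindexing of the ancilla system. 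The natural guess is that a state $\ket{v} \in \mathbb{C}^k \otimes \mathbb{C}^m$ with Schmidt rank $r$ (on the $k$--$m$ cut) and a test vector $\ket{w} \in \mathbb{C}^k \otimes \mathbb{C}^n$ together produce, under this isomorphism, a vector $\ket{u} \in \mathbb{C}^m \otimes \mathbb{C}^n$ whose Schmidt rank is controlled by $\min\{r, \text{SR}(\ket{w})\} \leq k$, so that the supremum defining $\big\|C_\Phi\big\|_{S(k)}$ exactly captures the normalized numerator on the left.

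The main obstacle will be the normalization by $\big\|\Tr_2(\ketbra{v}{v})\big\|$ and verifying that the two suprema range over matching sets. Concretely, I expect the hard part to be showing that the restriction ``$\ket{v}$ has Schmidt rank at most $k$ suffices'' on the left matches the constraint ``$SR(\ket{u}) \leq k$'' on the right, and that the division by the top Schmidt coefficient is precisely absorbed when one passes to the $S(k)$-norm. I would handle this by first reducing the left-hand supremum to $\ket{v}$ of Schmidt rank at most $k$ (using that $C_\Phi$ lives on $M_m \otimes M_n$ and that extra Schmidt components of $\ket{v}$ beyond $k$ cannot help, in analogy with the reductions in Proposition~\ref{prop:MatMultDiffVectors}), writing $\ket{v} = \sum_{i=1}^k \alpha_i \ket{i}\otimes\ket{a_i}$ in Schmidt form, and then carefully tracking how the coefficients $\alpha_i$ enter both the numerator and the denominator. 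I would verify equality by establishing the two inequalities separately: $\leq$ by producing, from any near-optimal $\ket{v},\ket{w}$ on the left, an admissible $\ket{u}$ on the right with $SR(\ket{u}) \leq k$ achieving the same ratio, and $\geq$ by reversing this construction, using invertibility of the relevant partial-transpose/vec identifications. Once the bookkeeping of Schmidt coefficients and the $\ket{w}$-supremum is reconciled with the definition of $\big\|C_\Phi\big\|_{S(k)}$, the identity follows.
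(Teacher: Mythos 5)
Your proposal follows essentially the same route as the paper's proof: unfold $\|\Phi\|_{k,1\to\infty}^\prime$ into a supremum over $\ket{v}\in\mathbb{C}^k\otimes\mathbb{C}^m$ and $\ket{w}\in\mathbb{C}^k\otimes\mathbb{C}^n$, identify the denominator with the square of the top Schmidt coefficient, pass through Lemma~\ref{lem:fidel_choi} to turn the numerator into a quadratic form $\bra{\tilde v}C_\Phi\ket{\tilde v}$ with $SR(\ket{\tilde v})\le k$, and prove the two inequalities by checking $\|\ket{\tilde v}\|\le 1$ (the paper does this via H\"older after expanding $\ket{w}$ in the same first-factor basis as $\ket{v}$'s Schmidt basis) and by taking $\ket{v}$ maximally entangled for the reverse direction. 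The bookkeeping you flag as the remaining work is exactly what the paper carries out, and it goes through as you anticipate.
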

\begin{proof}
	Begin by writing
	\begin{align}\begin{split}\label{eq:one_inf_norm}
		\big\|\Phi\big\|_{k,1\rightarrow \infty}^\prime & = \sup_{\ket{v}} \left\{ \frac{\| (id_k \otimes \Phi)(\ketbra{v}{v}) \|}{\|\Tr_2(\ketbra{v}{v})\|} \right\} \\
		& = \sup_{\ket{v},\ket{w}} \left\{ \frac{1}{\alpha_1^2} \bra{w} (id_k \otimes \Phi)(\ketbra{v}{v}) \ket{w} \right\},
	\end{split}\end{align}
	where the supremums are taken over pure states $\ket{v} \in \mathbb{C}^k \otimes \mathbb{C}^m$ and $\ket{w} \in \mathbb{C}^k \otimes \mathbb{C}^n$, and $\alpha_1$ is the maximal Schmidt coefficient of $\ket{v}$. In the second equality, we used the fact that if $\ket{v} = \sum_{i=1}^k \alpha_i \ket{x_i} \otimes \ket{v_i}$ is a Schmidt decomposition of $\ket{v}$, then $\Tr_2(\ketbra{v}{v}) = \sum_{i=1}^k \alpha_i^2 \ketbra{x_i}{x_i}$, so $\big\|\Tr_2(\ketbra{v}{v})\big\| = \alpha_1^2$.
	
	Now we can write $\ket{w} = \sum_{i=1}^k \beta_i \ket{x_i} \otimes \ket{w_i}$ with each $\beta_i$ real and non-negative -- observe that we have chosen the vectors on the first subsystem to be the same as those in the Schmidt decomposition of $\ket{v}$. In this case the normalization condition $\big\|\ket{w}\big\| = 1$ implies that $\sum_{i=1}^k \beta_i^2 = 1$, but this decomposition in general will not be a Schmidt decomposition, as the set of vectors $\big\{\ket{w_i}\big\}$ in general will not be orthonormal. Carrying on from Equation~\eqref{eq:one_inf_norm} now gives
	\begin{align*}
		\big\|\Phi\big\|_{k,1\rightarrow \infty}^\prime & = \sup_{\ket{v},\ket{w}} \left\{ \frac{1}{\alpha_1^2} \sum_{i,j,r,s=1}^k \alpha_r \alpha_s \beta_i \beta_j \bra{x_i w_i} (id_k \otimes \Phi)(\ketbra{x_r v_r}{x_s v_s}) \ket{x_j w_j} \right\}, \\
		& = \sup_{\ket{v},\ket{w}} \left\{ \frac{1}{\alpha_1^2} \sum_{i,j=1}^k \alpha_i \alpha_j \beta_i \beta_j \bra{w_i}\Phi(\ketbra{v_i}{v_j}) \ket{w_j} \right\}.
	\end{align*}
	Now we can use Lemma~\ref{lem:fidel_choi} to see that
	\begin{align*}
		\big\|\Phi\big\|_{k,1\rightarrow \infty}^\prime = \sup_{\ket{v},\ket{w}} \left\{ \frac{1}{\alpha_1^2} \sum_{i,j=1}^k \alpha_i \alpha_j \beta_i \beta_j \bra{\overline{v_i}w_i}C_{\Phi} \ket{\overline{v_j}w_j} \right\} = \sup_{\ket{v},\ket{w}} \big\{ \bra{\tilde{v}} C_{\Phi} \ket{\tilde{v}} \big\},
	\end{align*}
	where $\ket{\tilde{v}} := \sum_{i=1}^k \frac{\alpha_i \beta_i}{\alpha_1} \ket{\overline{v_i}w_i}$. It is clear that $\ket{\tilde{v}}$ is a (not necessarily normalized) vector with Schmidt rank no larger than $k$. We can see that $\big\| \ket{\tilde{v}} \big\| \leq 1$ by defining two vectors $\alpha, \beta \in \mathbb{R}^k$ as follows:
	\begin{align*}
		\alpha := \frac{1}{\alpha_1^2}\big(\alpha_1^2,\ldots,\alpha_k^2\big)^T, \quad \beta := \big( \beta_1^2,\ldots,\beta_k^2 \big)^T.
	\end{align*}
	Then $\|\alpha\|_{\infty} = \|\beta\|_1 = 1$, so H\"{o}lder's inequality tells us that $\big\| \ket{\tilde{v}} \big\| = \sum_{i=1}^k \frac{\alpha_i^2 \beta_i^2}{\alpha_1^2} = \alpha^\dagger \beta \leq 1$. It follows that $\big\|\Phi\big\|_{k,1\rightarrow \infty}^\prime \leq \big\| C_{\Phi} \big\|_{S(k)}$.
	
	To see the other inequality, choose $\ket{v}$ so that $\alpha_1 = \cdots = \alpha_k = 1/\sqrt{k}$. Then we have $\ket{\tilde{v}} = \sum_{i=1}^k \beta_i \ket{\overline{v_i}w_i}$, which is a general pure state with $SR(\ket{\tilde{v}}) \leq k$, which shows that $\big\|\Phi\big\|_{k,1\rightarrow \infty}^\prime \geq \big\| C_{\Phi} \big\|_{S(k)}$ and completes the proof.
\end{proof}

In the $k = 1$ case, Theorem~\ref{thm:max_output_purity} says that the maximum output purity of a quantum channel is equal to the $S(1)$-norm of that channel's Choi matrix -- a result that was originally observed in \cite{NOP09}. The other extreme is also a known result -- Theorem~10 of \cite{DJKR06} showed that if $\Phi$ is completely positive then $\big\|\Phi\big\|_{cb,1\rightarrow\infty}^\prime = \big\|C_\Phi\big\|$, which is exactly the $k = \min\{m,n\}$ case of Theorem~\ref{thm:max_output_purity}.

It is also worth pointing out that another simple method of calculating $\big\|\Phi\big\|_{cb,1\rightarrow \infty}^\prime$ follows from the main result of \cite{Jen06}, where it was shown that if $\Phi^C$ is the complementary channel of $\Phi$ then $\big\|\Phi\big\|_{cb,1\rightarrow \infty}^\prime = \big\|\Phi^{C}\big\|$. By \cite[Proposition 3.6]{P03}, it then follows that $\big\|\Phi\big\|_{cb,1\rightarrow \infty}^\prime = \big\|\Phi^C(I)\big\|$. As a corollary of this, we see that $\big\|C_\Phi\big\| = \big\|\Phi^C(I)\big\|$.

As for calculating $\big\|\Phi\big\|_{k,1\rightarrow \infty}^\prime$ when $k < \min\{m,n\}$, we can now make use of the semidefinite programming techniques from Sections~\ref{sec:SP_pos_map} and~\ref{sec:SP_shareable}, much like we did for computing minimum gate fidelity in Section~\ref{sec:min_gate_fid}. Many other corollaries follow easily as well, such as NP-hardness of computing the norm $\|\cdot\|_{1\rightarrow \infty}$, as well as the following inequalities:
\begin{cor}\label{cor:1inf_norm_kraus}
	Let $\Phi : M_m \rightarrow M_n$ be a completely positive linear map with canonical Kraus representation $\Phi(\rho) = \sum_i A_i \rho A_i^\dagger$, with the set of operators $\big\{ A_i \big\}$ forming an orthogonal set in the Hilbert--Schmidt inner product. Then
	\begin{align*}
		\big\|\Phi\big\|_{k,1\rightarrow \infty}^\prime \leq \sum_i \big\| A_i \big\|_{(k,2)}^2.
	\end{align*}
	Furthermore, if $\Phi$ has just one Kraus operator then equality holds.
\end{cor}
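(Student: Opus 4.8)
The plan is to combine the main equality of Theorem~\ref{thm:max_output_purity} with the upper bound on the $S(k)$-norm of a normal operator furnished by Proposition~\ref{prop:matrixVectorNorms}, translating the latter from the language of eigenvalues/eigenvectors into the language of canonical Kraus operators. Since $\Phi$ is completely positive, Theorem~\ref{thm:max_output_purity} gives $\big\|\Phi\big\|_{k,1\rightarrow\infty}^\prime = \big\|C_\Phi\big\|_{S(k)}$, so it suffices to prove that $\big\|C_\Phi\big\|_{S(k)} \leq \sum_i \big\|A_i\big\|_{(k,2)}^2$, with equality when there is only one Kraus operator.

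First I would recall the correspondence established in the proof of Theorem~\ref{thm:choi_cp}: the canonical Kraus operators arise as $A_i = \sqrt{\lambda_i}\,{\rm mat}(\ket{v_i})$, where $C_\Phi = \sum_i \lambda_i \ketbra{v_i}{v_i}$ is the spectral decomposition (with each $\lambda_i \geq 0$ by positive semidefiniteness), and the orthogonality of the $\{A_i\}$ in the Hilbert--Schmidt inner product corresponds exactly to the orthonormality of the eigenvectors $\{\ket{v_i}\}$. Crucially, because ${\rm mat}(\cdot)$ is an isometry taking the Euclidean norm to the Frobenius norm and the $s(k)$-vector norm to the operator $(k,2)$-norm (as noted after Theorem~\ref{thm:sk_vector_norm}), we get ${\rm mat}(\ket{v_i})$ having Schmidt/singular values equal to the Schmidt coefficients of $\ket{v_i}$, and hence
\begin{align*}
	\big\|A_i\big\|_{(k,2)}^2 = \lambda_i \big\|{\rm mat}(\ket{v_i})\big\|_{(k,2)}^2 = \lambda_i \big\|\ket{v_i}\big\|_{s(k)}^2 = |\lambda_i|\,\big\|\ket{v_i}\big\|_{s(k)}^2.
\end{align*}

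Now I would apply Proposition~\ref{prop:matrixVectorNorms} directly to the normal (indeed positive semidefinite) operator $C_\Phi$ with eigenvalues $\{\lambda_i\}$ and eigenvectors $\{\ket{v_i}\}$, obtaining
\begin{align*}
	\big\|C_\Phi\big\|_{S(k)} \leq \sum_i |\lambda_i|\,\big\|\ket{v_i}\big\|_{s(k)}^2 = \sum_i \big\|A_i\big\|_{(k,2)}^2,
\end{align*}
which establishes the inequality after invoking Theorem~\ref{thm:max_output_purity}. For the equality claim, when $\Phi$ has a single Kraus operator $A = \sqrt{\lambda}\,{\rm mat}(\ket{v})$, the Choi matrix is the rank-one operator $C_\Phi = \lambda\ketbra{v}{v}$, so Proposition~\ref{prop:rankOneNorm} gives $\big\|C_\Phi\big\|_{S(k)} = \lambda\big\|\ket{v}\big\|_{s(k)}\big\|\ket{v}\big\|_{s(k)} = \lambda\big\|\ket{v}\big\|_{s(k)}^2 = \big\|A\big\|_{(k,2)}^2$, matching the right-hand side exactly.

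I anticipate that the only genuine subtlety is bookkeeping the identification $\big\|A_i\big\|_{(k,2)}^2 = |\lambda_i|\big\|\ket{v_i}\big\|_{s(k)}^2$ cleanly, in particular being careful that the canonical Kraus operators are the ones aligned with the eigenvectors of $C_\Phi$ (so that orthonormality of eigenvectors is what guarantees the set $\{A_i\}$ is orthogonal, and so that the singular values of $A_i$ really are $\sqrt{\lambda_i}$ times the Schmidt coefficients of $\ket{v_i}$). The hypothesis that the Kraus representation is the \emph{canonical} one is precisely what makes this alignment automatic; without it the $A_i$ would only be related to the canonical ones by a unitary (Remark~\ref{rem:cp_unitary_freedom}), and the sum $\sum_i \big\|A_i\big\|_{(k,2)}^2$ could change, so I would make sure to flag that the canonical/orthogonal assumption is used essentially and not merely for convenience.
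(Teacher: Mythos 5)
Your proposal is correct and follows essentially the same route as the paper: invoke Theorem~\ref{thm:max_output_purity} to reduce to $\big\|C_\Phi\big\|_{S(k)}$, apply Proposition~\ref{prop:matrixVectorNorms} to the spectral decomposition of $C_\Phi$, identify the canonical Kraus operators as $\sqrt{\lambda_i}\,{\rm mat}(\ket{v_i})$ so that $|\lambda_i|\big\|\ket{v_i}\big\|_{s(k)}^2 = \big\|A_i\big\|_{(k,2)}^2$, and use Proposition~\ref{prop:rankOneNorm} for the single-Kraus-operator equality case. Your additional remark about why the \emph{canonical} (orthogonal) representation is essential is a worthwhile clarification that the paper leaves implicit.
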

\begin{proof}
	By using Theorem~\ref{thm:max_output_purity} and then Proposition~\ref{prop:matrixVectorNorms}, we see that if $\{\lambda_i\}$ is the set of eigenvalues of $C_\Phi$ with associated eigenvectors $\{\ket{v_i}\}$, then
	\begin{align*}
		\big\|\Phi\big\|_{k,1\rightarrow \infty}^\prime = \big\|C_\Phi\big\|_{S(k)} \leq \sum_i |\lambda_i|\big\|\ket{v_i}\big\|^2_{s(k)}.
	\end{align*}
	Now recall from Section~\ref{sec:vector_operator_isomorphism} that the canonical Kraus operators $\big\{A_i\big\}$ of $\Phi$ are the matricization of $\sqrt{\lambda_i}\ket{v_i}$, from which we have $|\lambda_i|\big\|\ket{v_i}\big\|^2_{s(k)} = \big\| A_i \big\|_{(k,2)}^2$. The desired inequality follows immediately.
	
	To see the final claim, simply note that $\Phi$ can be represented with a single Kraus operator if and only if $C_\Phi$ has rank one. Proposition~\ref{prop:rankOneNorm} then shows that equality is attained.
\end{proof}

In the $k = 1$ case, Corollary~\ref{cor:1inf_norm_kraus} simply says that $\big\|\Phi\big\|_{1\rightarrow \infty}^\prime \leq \sum_i \big\| A_i \big\|^2$, which follows easily from the definition of $\big\|\Phi\big\|_{1\rightarrow \infty}^\prime$. Indeed, for any particular $\rho \in M_m$, we have
\begin{align*}
	\big\|\Phi(\rho)\big\| = \left\| \sum_i A_i \rho A_i^\dagger \right\| \leq \sum_i \big\|A_i\big\|\big\|\rho\big\|\big\|A_i^\dagger\big\| \leq \sum_i \big\| A_i \big\|^2,
\end{align*}
from which the desired inequality follows easily.

\section{Tripartite and Quadripartite Geometric Measure of Entanglement}\label{sec:tri_geo_ent}

Recall the geometric measure of entanglement, introduced in Section~\ref{sec:geom_meas_ent}, which is defined in terms of the maximal overlap between a given pure state and a separable state. In the bipartite case, the geometric measure of entanglement is easy to calculate, as it is essentially the $s(1)$-vector norm. In particular, if $\ket{v} \in \mathbb{C}^m \otimes \mathbb{C}^n$ has Schmidt coefficients $\alpha_1 \geq \alpha_2 \geq \cdots \geq 0$ then
\begin{align*}
	E(\ket{v}) = 1 - \big\| \ket{v} \big\|_{s(1)}^2 = \sum_{i=2}^{\min\{m,n\}} \alpha_i^2.
\end{align*}

We now consider the tripartite case (i.e., the case of three subsystems) and show that it has a similar relationship with the $S(1)$-operator norm. Let $\ket{v} \in \mathbb{C}^{n_1} \otimes \mathbb{C}^{n_2} \otimes \mathbb{C}^{n_3}$ be a pure state. Then
\begin{align}\label{eq:tri_ent_geom}
	E(\ket{v}) = 1 - \sup_{\ket{w_1},\ket{w_2},\ket{w_3}}\Big\{ \big|(\bra{w_1} \otimes \bra{w_2} \otimes \bra{w_3})\ket{v}\big|^2 \Big\}.
\end{align}
Let's now write $\ket{v}$ in its Schmidt decomposition $\ket{v} = \sum_{i} \alpha_i \ket{a_i} \otimes \ket{b_i}$, where we decompose it over the first tensor product (so $\ket{a_i} \in \mathbb{C}^{n_1}$ and $\ket{b_i} \in \mathbb{C}^{n_2} \otimes \mathbb{C}^{n_3}$). If we fix $\ket{w_2}$ and $\ket{w_3}$ and only take the supremum over $\ket{w_1}$ in Equation~\eqref{eq:tri_ent_geom}, we see that the supremum is attained when $\ket{w_1} = (I_{n_1} \otimes \bra{w_2} \otimes \bra{w_3})\ket{v} = \sum_{i} \big( \alpha_i (\bra{w_2} \otimes \bra{w_3})\ket{b_i} \big) \ket{a_i}$. Plugging this into~\eqref{eq:tri_ent_geom} gives
\begin{align*}
	E(\ket{v}) & = 1 - \sup_{\ket{w_2},\ket{w_3}}\left\{ \left| \sum_i \alpha_i \braket{w_1}{a_i} (\bra{w_2} \otimes \bra{w_3})\ket{b_i} \right|^2 \right\} \\
	& = 1 - \sup_{\ket{w_2},\ket{w_3}}\left\{ \left| \sum_{i,j} \alpha_i \alpha_j \bra{b_j}(\ket{w_2} \otimes \ket{w_3})\braket{a_j}{a_i} (\bra{w_2} \otimes \bra{w_3})\ket{b_i} \right|^2 \right\} \\
	& = 1 - \sup_{\ket{w_2},\ket{w_3}}\left\{ \left| (\bra{w_2} \otimes \bra{w_3}) \left(\sum_{i} \alpha_i^2 \ketbra{b_i}{b_i}\right) (\ket{w_2} \otimes \ket{w_3}) \right|^2 \right\} \\
	& = 1 - \big\| \Tr_1(\ketbra{v}{v}) \big\|_{S(1)}^2.
\end{align*}

Of course, there is nothing special about the first subsystem -- we could just as easily have let either $\ket{w_2}$ or $\ket{w_3}$ vary, in which case we would have ended up tracing out the second or third subsystem, respectively. We state this result as the following theorem, which also appeared in \cite{NOP09}:
\begin{thm}\label{thm:tri_geom_ent}
	Let $\ket{v} \in \mathbb{C}^{n_1} \otimes \mathbb{C}^{n_2} \otimes \mathbb{C}^{n_3}$. Then
	\begin{align*}
		1 - E(\ket{v}) = \big\| \Tr_1(\ketbra{v}{v}) \big\|_{S(1)}^2 = \big\| \Tr_2(\ketbra{v}{v}) \big\|_{S(1)}^2 = \big\| \Tr_3(\ketbra{v}{v}) \big\|_{S(1)}^2.
	\end{align*}
\end{thm}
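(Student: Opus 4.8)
The plan is to prove the three identities by a \emph{party-elimination} argument. Since $E$ is defined in~\eqref{eq:tri_ent_geom} as one minus a supremum over fully product vectors $\ket{w_1}\otimes\ket{w_2}\otimes\ket{w_3}$, I would first perform the optimization over a single party and show that it collapses the remaining objective into an expectation of a reduced density operator. Fixing unit vectors $\ket{w_2}$ and $\ket{w_3}$, the quantity $(\bra{w_1}\otimes\bra{w_2}\otimes\bra{w_3})\ket{v}$ is simply the inner product $\braket{w_1}{u}$ with $\ket{u} := (I_{n_1}\otimes\bra{w_2}\otimes\bra{w_3})\ket{v}$, so the Cauchy--Schwarz inequality handles the supremum over $\ket{w_1}$ exactly, the extremal $\ket{w_1}$ pointing along $\ket{u}$.

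After eliminating $\ket{w_1}$, I would rewrite the surviving objective through the reduced operator $\Tr_1(\ketbra{v}{v})$. Writing $\ket{v}$ in its Schmidt decomposition across the $1\,|\,23$ cut (Theorem~\ref{thm:schmidt}) gives $\Tr_1(\ketbra{v}{v}) = \sum_i \alpha_i^2\ketbra{b_i}{b_i}$, and a short computation shows that the value produced by the optimal $\ket{w_1}$ is governed by $(\bra{w_2}\otimes\bra{w_3})\Tr_1(\ketbra{v}{v})(\ket{w_2}\otimes\ket{w_3})$. The supremum of this expression over product states $\ket{w_2}\otimes\ket{w_3}$---that is, over states of Schmidt rank at most $1$ on the $2\,|\,3$ bipartition---is precisely the $S(1)$-operator norm of the positive semidefinite operator $\Tr_1(\ketbra{v}{v})$, by Proposition~\ref{prop:MatMultDiffVectors} (specifically the supremum form~\eqref{eq:MatMultDiffVectors}, which applies because $\Tr_1(\ketbra{v}{v})\geq 0$).

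The three stated equalities then follow at once from symmetry: nothing in the argument distinguishes the first subsystem, so eliminating $\ket{w_2}$ or $\ket{w_3}$ instead yields the same supremum expressed through $\Tr_2(\ketbra{v}{v})$ or $\Tr_3(\ketbra{v}{v})$, and all three equal $1-E(\ket{v})$.

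The step I expect to be the main obstacle is the bookkeeping in the party-elimination: one must track precisely how the extremal (generally non-normalized) vector $\ket{u}$ enters the objective, since this is exactly what fixes the correct power of the reduced-operator expectation and hence governs the appearance of $\|\cdot\|_{S(1)}^2$ in the final identity. Everything else---the Schmidt decomposition, the reduction to a product-state optimization, and the invocation of Proposition~\ref{prop:MatMultDiffVectors}---is routine once this single reduction is set up carefully.
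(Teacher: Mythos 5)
Your route is the same as the paper's: eliminate the first party, rewrite the surviving objective through $\Tr_1(\ketbra{v}{v})$, and invoke Proposition~\ref{prop:MatMultDiffVectors}. But the step you flag as the main obstacle --- the bookkeeping that ``fixes the correct power'' --- is precisely where the argument does not deliver the stated identity. With $\ket{u} := (I_{n_1}\otimes\bra{w_2}\otimes\bra{w_3})\ket{v}$, the supremum of $\big|\braket{w_1}{u}\big|^2$ over \emph{unit} vectors $\ket{w_1}$ is $\big\|\ket{u}\big\|^2 = (\bra{w_2}\otimes\bra{w_3})\Tr_1(\ketbra{v}{v})(\ket{w_2}\otimes\ket{w_3})$, and the outer supremum over product states $\ket{w_2}\otimes\ket{w_3}$ then gives
\begin{align*}
1 - E(\ket{v}) = \big\|\Tr_1(\ketbra{v}{v})\big\|_{S(1)},
\end{align*}
to the \emph{first} power. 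Your Cauchy--Schwarz elimination, carried out with the normalized optimizer $\ket{w_1} = \ket{u}/\big\|\ket{u}\big\|$, therefore cannot produce the square appearing in the statement. (The paper's own displayed computation obtains that square only by substituting the unnormalized vector $\ket{w_1}=\ket{u}$ into the objective, which is not a legitimate competitor in the supremum defining $E$.)

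A concrete check: for the GHZ state $\ket{v} = \tfrac{1}{\sqrt{2}}(\ket{000}+\ket{111})$ one has $1 - E(\ket{v}) = \tfrac{1}{2}$, while $\Tr_1(\ketbra{v}{v}) = \tfrac{1}{2}(\ketbra{00}{00}+\ketbra{11}{11})$ has $\big\|\Tr_1(\ketbra{v}{v})\big\|_{S(1)} = \tfrac{1}{2}$; the first-power identity holds and the squared version would give $\tfrac{1}{4}\neq\tfrac{1}{2}$. So your plan is sound as a proof of $1 - E(\ket{v}) = \big\|\Tr_i(\ketbra{v}{v})\big\|_{S(1)}$ for each $i$ (the symmetry argument for the three partial traces is fine), but if you expect the elimination to land on $\big\|\cdot\big\|_{S(1)}^2$ you will find that it does not: the exponent in the target identity is what needs to change, not your bookkeeping.
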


As was the case with maximum output purity and minimum gate fidelity, we can now use the semidefinite programming techniques of Sections~\ref{sec:SP_pos_map} and~\ref{sec:SP_shareable} to compute the tripartite geometric measure of entanglement. A particularly interesting corollary is that even though the geometric measure of entanglement is easily-computable in the bipartite case, computing the geometric measure of entanglement of a state $\ket{v} \in (\mathbb{C}^n)^{\otimes p}$ when $p \geq 3$ is an NP-hard problem.

In fact, we could extend the methods that were used to prove Theorem~\ref{thm:tri_geom_ent} to the case of more than three subsystems -- as before, we just fix all but one of the vectors that we optimize over. Working through the calculation reveals that if $\ket{v} \in \mathbb{C}^{n_1} \otimes \cdots \otimes \mathbb{C}^{n_p}$, then
\begin{align}\label{eq:ent_geom_multi_form}
	E(\ket{v}) = 1 - \sup_{\ket{w_i} \in \mathbb{C}^{n_i}} \left\{ (\bra{w_2} \otimes \cdots \otimes \bra{w_p}) \Tr_1(\ketbra{v}{v}) (\ket{w_2} \otimes \cdots \otimes \ket{w_p}) \right\}.
\end{align}
By using the natural multipartite extension of symmetric extensions \cite{DPS05}, this quantity can be computed using semidefinite programming techniques that are similar to those of Section~\ref{sec:SP_shareable}.

In the quadripartite case (i.e., the case of four subsystems), we would compute $E(\ket{v})$ by using symmetric extensions to perform the optimization~\eqref{eq:ent_geom_multi_form} over tripartite separable states. However we can actually reduce the problem slightly further, just like the tripartite geometric measure of entanglement, to the problem of computing the $S(1)$-norm.
\begin{thm}\label{thm:quad_geo_meas_ent}
	Let $\ket{v} \in \mathbb{C}^{m} \otimes \mathbb{C}^{m} \otimes \mathbb{C}^{n} \otimes \mathbb{C}^{n}$ and let $A_{\ket{v}} \in M_{m} \otimes M_{n}$ be the operator associated with $\ket{v}$ via the linear isomorphism that maps the vector $\ket{x_1} \otimes \ket{x_2} \otimes \ket{y_1} \otimes \ket{y_2}$ to the operator $\ket{x_2}\overline{\bra{x_1}} \otimes \ket{y_2}\overline{\bra{y_1}}$. Then
	\begin{align*}
		E(\ket{v}) = 1 - \big\|A_{\ket{v}}\big\|_{S(1)}^2.
	\end{align*}
\end{thm}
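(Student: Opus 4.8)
The plan is to unfold both sides of the claimed identity into suprema over the same family of product overlaps and then match them through the isomorphism $\ket{v} \mapsto A_{\ket{v}}$. First I would write out the quadripartite geometric measure explicitly for this system: since the four subsystems are $\bb{C}^m, \bb{C}^m, \bb{C}^n, \bb{C}^n$, a separable (product) state is $\ket{w_1} \otimes \ket{w_2} \otimes \ket{w_3} \otimes \ket{w_4}$ with $\ket{w_1}, \ket{w_2} \in \bb{C}^m$ and $\ket{w_3}, \ket{w_4} \in \bb{C}^n$, so that $E(\ket{v}) = 1 - \sup\big\{ \big|(\bra{w_1}\otimes\bra{w_2}\otimes\bra{w_3}\otimes\bra{w_4})\ket{v}\big|^2 \big\}$, the supremum being over unit vectors $\ket{w_i}$.

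On the other side, I would unfold $\big\|A_{\ket{v}}\big\|_{S(1)}$ using Definition~\ref{defn:operator_sk_norm}. The crucial observation is that for an operator in $M_m \otimes M_n$ the constraint $SR(\ket{v}),SR(\ket{w}) \leq 1$ forces the two test vectors to be product states $\ket{a_1}\otimes\ket{a_2}$ and $\ket{b_1}\otimes\ket{b_2}$ with $\ket{a_1},\ket{b_1} \in \bb{C}^m$ and $\ket{a_2},\ket{b_2} \in \bb{C}^n$. The heart of the argument is then the direct computation that $(\bra{b_1}\otimes\bra{b_2})A_{\ket{v}}(\ket{a_1}\otimes\ket{a_2}) = (\bra{\overline{a_1}}\otimes\bra{b_1}\otimes\bra{\overline{a_2}}\otimes\bra{b_2})\ket{v}$. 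I would verify this by expanding $\ket{v}$ in the standard basis, applying the defining rule of the isomorphism that sends $\ket{x_1} \otimes \ket{x_2} \otimes \ket{y_1} \otimes \ket{y_2}$ to $\ket{x_2}\overline{\bra{x_1}} \otimes \ket{y_2}\overline{\bra{y_1}}$, and using the elementary identity $\overline{\bra{x_1}}\ket{a_1} = \braket{\overline{a_1}}{x_1}$ (and similarly on the $y$ factors) to turn the two conjugate-transpose factors into honest inner products against the conjugated vectors $\ket{\overline{a_1}}$ and $\ket{\overline{a_2}}$.

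Taking moduli and suprema, I would then observe that as $\ket{a_1},\ket{b_1},\ket{a_2},\ket{b_2}$ range over all unit vectors, the four factors $\ket{\overline{a_1}},\ket{b_1},\ket{\overline{a_2}},\ket{b_2}$ range over all unit vectors as well, since complex conjugation is a bijection of the unit sphere. Hence the supremum defining $\big\|A_{\ket{v}}\big\|_{S(1)}$ coincides exactly with the product-overlap supremum appearing in $E(\ket{v})$ from the first paragraph. Squaring and comparing gives $\big\|A_{\ket{v}}\big\|_{S(1)}^2 = 1 - E(\ket{v})$, which is the claim.

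The main obstacle is purely bookkeeping: keeping the complex conjugates introduced by the matricization straight, and confirming that the induced map $(\ket{a_1},\ket{b_1},\ket{a_2},\ket{b_2}) \mapsto (\ket{\overline{a_1}},\ket{b_1},\ket{\overline{a_2}},\ket{b_2})$ genuinely surjects onto all tuples of unit product factors so that the two suprema are identical. No convexity or compactness input beyond what is already implicit in the definitions is required; this is precisely why the quadripartite case, like the tripartite case of Theorem~\ref{thm:tri_geom_ent}, collapses to a single $S(1)$-norm rather than invoking the full symmetric-extension hierarchy of Section~\ref{sec:SP_shareable}.
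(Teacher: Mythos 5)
Your proposal is correct and follows essentially the same route as the paper's proof: both hinge on the single identity $(\bra{b_1}\otimes\bra{b_2})A_{\ket{v}}(\ket{a_1}\otimes\ket{a_2}) = (\bra{\overline{a_1}}\otimes\bra{b_1}\otimes\bra{\overline{a_2}}\otimes\bra{b_2})\ket{v}$, verified by expanding $\ket{v}$ into elementary tensors and tracking the conjugates introduced by the matricization, followed by taking moduli and suprema and using that conjugation is a bijection of the unit sphere. The only cosmetic difference is the direction of travel (you start from the $S(1)$-norm side, the paper starts from the overlap side), which changes nothing of substance.
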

\begin{proof}
	First note that the isomorphism described by the theorem can be seen as a quadripartite version of the vector-operator isomorphism, and in fact is the exact same isomorphism that was used in proof of the $k = 1$ case of Proposition~\ref{prop:sepLPP}. If we write
	\begin{align*}
		\ket{v} = \sum_i c_i \ket{v_i} \otimes \ket{x_i} \otimes \ket{y_i} \otimes \ket{z_i},
	\end{align*}
	then for any $\ket{w_1} \otimes \ket{w_2} \otimes \ket{w_3} \otimes \ket{w_4} \in \mathbb{C}^{m} \otimes \mathbb{C}^{m} \otimes \mathbb{C}^{n} \otimes \mathbb{C}^{n}$ we have
	\begin{align*}
		(\bra{w_1} \otimes \bra{w_2} \otimes \bra{w_3} \otimes \bra{w_4})\ket{v} & = \sum_i c_i \braket{w_1}{v_i} \braket{w_2}{x_i} \braket{w_3}{y_i} \braket{w_4}{z_i} \\
		& = \sum_i c_i \braket{w_2}{x_i} \overline{\braket{v_i}{w_1}} \braket{w_4}{z_i} \overline{\braket{y_i}{w_3}} \\
		& = (\bra{w_2} \otimes \bra{w_4}) \left(\sum_i c_i \ket{x_i}\overline{\bra{v_i}} \otimes \ket{z_i}\overline{\bra{y_i}}\right) (\overline{\ket{w_1}} \otimes \overline{\ket{w_3}}) \\
		& = (\bra{w_2} \otimes \bra{w_4}) A_{\ket{v}} (\overline{\ket{w_1}} \otimes \overline{\ket{w_3}}).
	\end{align*}
	The result follows easily by taking the absolute value and then the supremum over $\ket{w_1}, \ket{w_2}, \ket{w_3}$, and $\ket{w_4}$.
\end{proof}

Recall that many of our results on the $S(k)$-norms, including the semidefinite program methods for their computation, only hold in the case when the operator under consideration is positive semidefinite. In Theorem~\ref{thm:quad_geo_meas_ent}, the operator $A_{\ket{v}}$ in general is not positive semidefinite and so we are better off using Equation~\eqref{eq:ent_geom_multi_form} to compute the quadripartite geometric measure of entanglement. Nevertheless, there is one important special case that is of interest -- the case when $\ket{v}$ is real and symmetric (i.e., $\ket{v} \in \cl{S}$, where $\cl{S}$ is the symmetric subspace of $\mathbb{C}^{n} \otimes \mathbb{C}^{m} \otimes \mathbb{C}^{n} \otimes \mathbb{C}^{n}$).

The geometric measure of entanglement of symmetric states has been extensively studied recently \cite{HMMOV08,HMMOV09,WG03,WS10}. In particular, the question of whether or not the state that optimizes the geometric measure of entanglement can be chosen to be symmetric when $\ket{v}$ is symmetric was an open question that was recently solved in the affirmative in \cite{HKWGG09}. By using the isomorphism of Theorem~\ref{thm:quad_geo_meas_ent} we then arrive at the following simple corollary, which can be thought of as a symmetric version of Proposition~\ref{prop:MatMultDiffVectors}.
\begin{cor}
	Let $X \in M_n \otimes M_n$ be such that $X = X^T = X^\Gamma = SX$, where $S$ is the swap operator. Then
	\begin{align*}
		\big\|X\big\|_{S(1)} = \sup_{\ket{v}}\big\{ (\overline{\bra{v}} \otimes \overline{\bra{v}})X(\ket{v} \otimes \ket{v}) \big\}.
	\end{align*}
\end{cor}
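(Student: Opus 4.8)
The plan is to realize $X$ as the image of a symmetric four-partite pure state under the isomorphism of Theorem~\ref{thm:quad_geo_meas_ent}, and then to invoke the result of \cite{HKWGG09} that a symmetric pure state admits a symmetric closest product state. First I would let $\ket{\phi} \in \mathbb{C}^n \otimes \mathbb{C}^n \otimes \mathbb{C}^n \otimes \mathbb{C}^n$ be the unique vector with $A_{\ket{\phi}} = X$, where $A_{\ket{\cdot}}$ is the linear bijection $\ket{x_1} \otimes \ket{x_2} \otimes \ket{y_1} \otimes \ket{y_2} \mapsto \ket{x_2}\overline{\bra{x_1}} \otimes \ket{y_2}\overline{\bra{y_1}}$ appearing in Theorem~\ref{thm:quad_geo_meas_ent}. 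Labelling the four tensor factors of $\ket{\phi}$ by $1,2,3,4$, a direct index computation shows that under this isomorphism the partial transpose $\Gamma$ on the first matrix factor corresponds to the transposition $(1\,2)$ of subsystems, the full transpose corresponds to $(1\,2)(3\,4)$, and left multiplication by the swap operator $S$ corresponds to the transposition $(2\,4)$. Hence $X = X^\Gamma$ and $X = X^T$ together force invariance of $\ket{\phi}$ under $(1\,2)$ and $(3\,4)$, while $X = SX$ forces invariance under $(2\,4)$.

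The key structural observation is that the transpositions $(1\,2)$, $(3\,4)$, $(2\,4)$ are the edges of a connected graph on $\{1,2,3,4\}$ (the path $1-2-4-3$), and transpositions indexed by the edges of a connected graph generate the full symmetric group. Therefore $\ket{\phi}$ is fixed by every permutation of its four subsystems, i.e.\ $\ket{\phi}$ lies in the symmetric subspace. Theorem~\ref{thm:quad_geo_meas_ent} then gives $\big\|X\big\|_{S(1)}^2 = 1 - E(\ket{\phi}) = \sup\big\{ |(\bra{w_1} \otimes \bra{w_2} \otimes \bra{w_3} \otimes \bra{w_4})\ket{\phi}|^2 \big\}$, where the supremum runs over product states. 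Since $\ket{\phi}$ is symmetric, the main result of \cite{HKWGG09} lets me restrict this supremum to symmetric product states $\ket{w} \otimes \ket{w} \otimes \ket{w} \otimes \ket{w}$ without changing its value, so that $\big\|X\big\|_{S(1)}^2 = \sup_{\ket{w}} |(\bra{w} \otimes \bra{w} \otimes \bra{w} \otimes \bra{w})\ket{\phi}|^2$.

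Finally I would translate back to the matrix picture. The computation carried out in the proof of Theorem~\ref{thm:quad_geo_meas_ent} shows that $(\bra{w} \otimes \bra{w} \otimes \bra{w} \otimes \bra{w})\ket{\phi} = (\bra{w} \otimes \bra{w})X(\overline{\ket{w}} \otimes \overline{\ket{w}})$; setting $\ket{v} := \overline{\ket{w}}$, so that $\overline{\bra{v}} = \bra{w}$, rewrites this as $(\overline{\bra{v}} \otimes \overline{\bra{v}})X(\ket{v} \otimes \ket{v})$, and $\ket{v}$ ranges over all unit vectors as $\ket{w}$ does. Taking square roots gives $\big\|X\big\|_{S(1)} = \sup_{\ket{v}} |(\overline{\bra{v}} \otimes \overline{\bra{v}})X(\ket{v} \otimes \ket{v})|$. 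To remove the absolute value I would use the phase freedom $\ket{v} \mapsto e^{i\theta}\ket{v}$, which multiplies $(\overline{\bra{v}} \otimes \overline{\bra{v}})X(\ket{v} \otimes \ket{v})$ by $e^{4i\theta}$; since $e^{4i\theta}$ sweeps out all phases, the maximizing $\ket{v}$ can always be rephased to make this quantity real and nonnegative without altering its modulus, yielding the stated formula.

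I expect the only genuinely nontrivial ingredient to be the appeal to \cite{HKWGG09}: everything else is the symmetry dictionary together with elementary index and phase manipulations, but the reduction from arbitrary product states to symmetric product states $\ket{w}^{\otimes 4}$ is precisely the (hard) fact that the closest product state to a symmetric pure state may be chosen symmetric. I would take care that this result is applied in the possibly-complex symmetric-subspace setting rather than only to real symmetric states, since the hypotheses on $X$ guarantee permutation symmetry of $\ket{\phi}$ but not reality, and verify that the phase argument alone supplies what reality would otherwise have given.
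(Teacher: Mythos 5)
Your proposal is correct and follows essentially the same route as the paper, which derives this corollary directly from the isomorphism of Theorem~\ref{thm:quad_geo_meas_ent} together with the result of \cite{HKWGG09} that the closest product state to a symmetric state may be chosen symmetric. Your explicit verification that the three hypotheses $X = X^T = X^\Gamma = SX$ correspond to transpositions generating all of $S_4$, and your phase argument replacing the paper's (unneeded) reality assumption on $\ket{\phi}$, are welcome details that the paper leaves implicit.
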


\chapter{Connections with Operator Theory}\label{ch:optheory}

In this chapter we link central areas of study in operator theory with the various norms and cones investigated throughout the previous chapters. More specifically, we connect recent investigations in operator space and operator system theory \cite{P03,Pis03} with the norms of Chapter~\ref{chap:Sk_norms} and the cones of separable and block positive operators. We also connect all of these areas of study with mapping cones \cite{S86} and right CP-invariant cones. As benefits of this combined perspective, we obtain new results and new elementary proofs in all of these areas.

We begin by investigating abstract operator spaces (i.e., matricially normed spaces) on complex matrices. We show that a well-known family of operator spaces, which are referred to as the ``$k$-minimal'' and ``$k$-maximal'' operator spaces \cite{OR04}, give rise to the $S(k)$-operator norms introduced in Chapter~\ref{chap:Sk_norms}. We use this connection to finally derive an expression for the dual of the $S(k)$-norm, and we see that this dual norm exactly characterizes Schmidt number. We introduce completely bounded norms between arbitrary operator spaces, and we show that the completely bounded norm on the $k$-minimal operator space stabilizes in a manner very similar to the standard completely bounded norm.

We then investigate abstract operator spaces (i.e., matricially ordered spaces) on complex matrices and see that many analogous results hold. We show that the ``$k$-super maximal'' and ``$k$-super minimal'' operator systems on $M_n$ \cite{Xthesis,Xha11} give rise to the (unnormalized) states with Schmidt number no larger than $k$ and the $k$-block positive operators, respectively. We also show that the completely positive maps between these different operator systems are simply the $k$-positive and $k$-superpositive maps. Furthermore, we connect the dual of a version of the completely bounded minimal operator space norm to the separability problem and extend recent results about how trace-contractive maps can be used to detect entanglement. We see that the maps that serve to detect quantum entanglement via norms are roughly the completely contractive maps on the minimal operator space on $M_n$.

Finally, we finish by considering the relationships between right CP-invariant cones, mapping cones, semigroup cones, and abstract operator systems. We show that every abstract operator system gives a natural right CP-invariant cone, and conversely every right CP-invariant cone gives an abstract operator system. In the case of mapping cones, we show that the associated operator systems have a property that we call ``super-homogeneity'', and we also provide an analogous result for semigroup cones. We present some simple consequences of these results, including an abstract operator system based on anti-degradable maps and shareable operators.

\section{Operator Spaces on Complex Matrices}\label{sec:op_space}

An (abstract) \emph{operator space} on $M_n$ is a family of norms $\big\{\|\cdot\|_{m}\big\}$ on $M_m \otimes M_n$ ($m \geq 1$) that satisfy two conditions:
\begin{enumerate}[(1)]
	\item If $A, B \in M_{r,m}$, $X \in M_m \otimes M_n$, and $\|\cdot\|$ denotes the operator norm, then
	\begin{align*}
		\big\|(A \otimes I) X (B^\dagger \otimes I)\big\|_{r} \leq \big\|A\big\|\big\|X\big\|_{m}\big\|B\big\|; and
	\end{align*}
	\item[($2,\infty$)] $\big\| X \oplus Y \big\|_{m+r} = \max\big\{ \|X\|_{m},\|Y\|_{r}\big\}$ for all $X \in M_m \otimes M_n, Y \in M_r \otimes M_n$, where we have associated $(M_m \otimes M_n) \oplus (M_r \otimes M_n)$ with $M_{m+r} \otimes M_n$ in the natural way.
\end{enumerate}

Property~(1) above ensures that the norms $\|\cdot\|_m$ ``behave well'' with each other. For example, it ensures that if we embed $X \in M_m \otimes M_n$ as $\tilde{X} \in M_{m+1} \otimes M_n$ by adding rows and columns of zeroes then $\big\|X\big\|_m = \big\|\tilde{X}\big\|_{m+1}$. Property~($2,\infty$), which is called the \emph{$L^\infty$ condition}, ensures that each of these norms behave ``like'' the standard operator norm in some sense. It is sometimes desirable to consider families of norms that instead satisfy property~(1) together with the following condition for some $1 \leq p < \infty$ \cite{ER88,R88}:
\begin{enumerate}[(1)]
	\item[($2,p$)] $\big\| X \oplus Y \big\|_{m+r} = \sqrt[p]{ \|X\|_{m}^p + \|Y\|_{r}^p }$ for all $X \in M_m \otimes M_n$ and $Y \in M_r \otimes M_n$.
\end{enumerate}
A family of norms $\|\cdot\|_{m}$ on $M_m \otimes M_n$ satisfying properties~(1) and~($2,p$) is said to be a family of \emph{$L^p$-matrix norms} and we see in the limit as $p \rightarrow \infty$ that we obtain an abstract operator space.

More generally, one can define abstract operator spaces and $L^p$-matrix norms by replacing $M_n$ by an arbitrary vector space $V$ throughout the preceding paragraphs. In this more general setting, the reason for the terminology ``abstract operator space'' becomes more clear: a theorem of Ruan \cite{R88} says that a matrix normed space is completely isometric with a concrete operator space (i.e., a subspace of the bounded operators on a Hilbert space) if and only if it is an $L^\infty$-matrix normed space. However, for us it is enough to consider abstract operator spaces and matrix norms on $M_n$. For a more detailed introduction to abstract operator spaces, the interested reader is directed to \cite[Chapter~13]{P03}.

Throughout this section, we use $M_n$ itself to denote the ``standard'' operator space structure on $M_n$ that is obtained by associating $M_m \otimes M_n$ with $M_{mn}$ in the natural way and using the operator norm. Similarly, we use $M_{n,tr}$ to denote the $L^1$-matrix normed space that arises from using the trace norm on $M_m \otimes M_n$ for all $m$ (the fact that the trace norm satisfies the matrix norm property~(1) follows from Proposition~\ref{prop:unitary_invar_symmetric}).

All other matrix normed spaces that we will consider will have their first norm, $\|\cdot\|_1$, equal to either the operator norm of the trace norm on $M_n$. In the former case, we will denote it by something like $V(M_n)$. In the latter case, we will use notation like $V(M_{n,tr})$. In the case when we do not specify what the first norm is, we will simply denote the operator space by $V$. When referring to the $m$-th norm of a family of matrix norms $V$, if there is a possibility for confusion we denote it by $\|\cdot\|_{V_m}$.

\subsection{Minimal and Maximal Operator Spaces}\label{sec:min_max_op_space}

Some particularly important operator spaces for us are the \emph{$k$-minimal operator space} $MIN^k(M_n)$ and the \emph{$k$-maximal operator space} $MAX^k(M_n)$ \cite{OR04}, defined respectively via the following families of norms on $M_m \otimes M_n$:
\begin{align}\label{eq:k_min_space}
	\big\|X\big\|_{MIN^k_m(M_n)} & := \sup_\Phi \Big\{ \big\| (id_m \otimes \Phi)(X) \big\| : \Phi : M_n \rightarrow M_k, \big\| \Phi \big\|_{cb} \leq 1 \Big\} \text{ and} \\ \label{eq:k_max_space}
	\big\|X\big\|_{MAX^k_m(M_n)} & := \sup_{r,\Phi} \Big\{ \big\| (id_m \otimes \Phi)(X) \big\| : \Phi : M_n \rightarrow M_r, \big\| id_k \otimes \Phi \big\| \leq 1 \Big\}.
\end{align}

The names of these operator spaces come from the facts that if $V(M_n)$ is any operator space on $M_n$ such that $\|\cdot\|_{V_m(M_n)}$ simply equals the operator norm on $M_m \otimes M_n$ for $1 \leq m \leq k$, then $\|\cdot\|_{MIN^k_m(M_n)} \leq \|\cdot\|_{V_m(M_n)} \leq \|\cdot\|_{MAX^k_m(M_n)}$ for all $m > k$. In the $k = 1$ case, these operator spaces are exactly the minimal and maximal operator space structures that are fundamental in operator space theory \cite[Chapter 14]{P03}. The interested reader is directed to \cite{OR04} and the references therein for further properties of $MIN^k(M_n)$ and $MAX^k(M_n)$ when $k \geq 2$.
\begin{figure}[ht]
\begin{center}
\includegraphics[width=\textwidth]{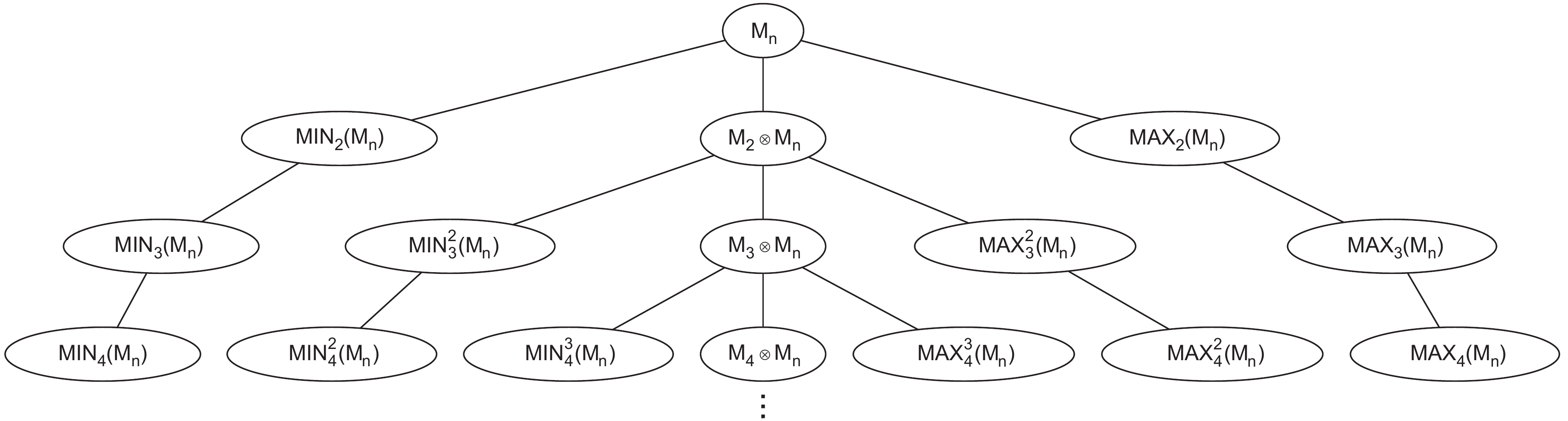}
\end{center}\vspace{-0.3in}
\caption[A representation of the $k$-minimal and $k$-maximal operator spaces]{\hsp A representation of the $k$-minimal and $k$-maximal operator spaces on $M_n$. The $m$-th row of the tree shows the various norms on $M_m \otimes M_n$ and each path starting from the root corresponds to one of the $k$-minimal or $k$-maximal operator spaces. The leftmost path represents $MIN(M_n)$ and the rightmost path represents $MAX(M_n)$. The path down the centre represents the ``naive'' operator space $M_n$ itself.}\label{fig:opspacetree}
\end{figure}

One of the primary reasons for our interest in the $k$-minimal operator spaces is the following result, which says that the norms of $MIN^k(M_n)$ are exactly the $S(k)$-operator norm of Section~\ref{sec:sk_operator_norm}.
\begin{thm}\label{thm:MINkChar}
	Let $X \in M_m \otimes M_n$. Then $\big\|X\big\|_{MIN^k_m(M_n)} = \big\|X\big\|_{S(k)}$.
\end{thm}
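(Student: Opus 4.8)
The plan is to establish the two inequalities $\|X\|_{S(k)} \le \|X\|_{MIN^k_m(M_n)}$ and $\|X\|_{MIN^k_m(M_n)} \le \|X\|_{S(k)}$ separately. Both directions hinge on two ingredients: the characterization of the completely bounded norm via generalized Choi--Kraus operators from Theorem~\ref{thm:cb_norm_gen_kraus}, and the elementary fact that a vector in $\mathbb{C}^m \otimes \mathbb{C}^k$ has Schmidt rank at most $k$, and that this bound is preserved when a local operator is applied to one subsystem.

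For $\|X\|_{S(k)} \le \|X\|_{MIN^k_m(M_n)}$, I would take unit vectors $\ket{v},\ket{w}$ with $SR(\ket{v}),SR(\ket{w}) \le k$ that nearly attain the supremum defining $\|X\|_{S(k)}$, write them in their Schmidt decompositions $\ket{v} = \sum_{j=1}^k \alpha_j \ket{a_j}\otimes\ket{b_j}$ and $\ket{w} = \sum_{j=1}^k \beta_j \ket{c_j}\otimes\ket{d_j}$, and build an explicit test map. Setting $B := \sum_j \ketbra{j}{b_j}$ and $D := \sum_j \ketbra{j}{d_j} \in M_{k,n}$ (partial isometries) and $\Phi(Y) := D Y B^\dagger$, Theorem~\ref{thm:cb_norm_gen_kraus} gives $\|\Phi\|_{cb} \le 1$ since $DD^\dagger = BB^\dagger = I_k$. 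Testing $(id_m \otimes \Phi)(X)$ against $\ket{a'} := \sum_j \beta_j \ket{c_j}\otimes\ket{j}$ and $\ket{b'} := \sum_j \alpha_j \ket{a_j}\otimes\ket{j}$ (both unit vectors), a short computation using $(I_m \otimes B^\dagger)\ket{b'} = \ket{v}$ and $(I_m \otimes D^\dagger)\ket{a'} = \ket{w}$ shows $\bra{a'}(id_m \otimes \Phi)(X)\ket{b'} = \bra{w}X\ket{v}$, whence $\|(id_m\otimes\Phi)(X)\| \ge |\bra{w}X\ket{v}|$. Taking suprema yields the claim.

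For the reverse inequality, I would start from an arbitrary $\Phi : M_n \rightarrow M_k$ with $\|\Phi\|_{cb} \le 1$ and use Theorem~\ref{thm:cb_norm_gen_kraus} (whose infimum is attained, and which may be rescaled) to fix a representation $\Phi(Y) = \sum_i A_i Y B_i^\dagger$ with $\|\sum_i A_i A_i^\dagger\| \le 1$ and $\|\sum_i B_i B_i^\dagger\| \le 1$. For unit $\ket{a},\ket{b} \in \mathbb{C}^m \otimes \mathbb{C}^k$, expanding $\bra{a}(id_m \otimes \Phi)(X)\ket{b} = \sum_i \bra{w_i}X\ket{v_i}$, where $\ket{w_i} := (I_m \otimes A_i^\dagger)\ket{a}$ and $\ket{v_i} := (I_m \otimes B_i^\dagger)\ket{b}$, exhibits each $\ket{v_i},\ket{w_i}$ as a local image of a Schmidt-rank-$\le k$ vector, hence of Schmidt rank $\le k$. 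Bounding each term by $\|X\|_{S(k)}\|\ket{w_i}\|\,\|\ket{v_i}\|$ and applying Cauchy--Schwarz gives $\sum_i \|\ket{w_i}\|\,\|\ket{v_i}\| \le \sqrt{\sum_i\|\ket{w_i}\|^2}\sqrt{\sum_i\|\ket{v_i}\|^2}$; the two sums equal $\bra{a}(I_m \otimes \sum_i A_iA_i^\dagger)\ket{a} \le 1$ and $\bra{b}(I_m \otimes \sum_i B_iB_i^\dagger)\ket{b} \le 1$, respectively. Taking the supremum over $\ket{a},\ket{b}$ bounds $\|(id_m\otimes\Phi)(X)\|$, and finally the supremum over $\Phi$ bounds $\|X\|_{MIN^k_m(M_n)}$, by $\|X\|_{S(k)}$.

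The main obstacle — and essentially the only place where care is required — is the reverse direction: the naive bound $|\sum_i \bra{w_i}X\ket{v_i}| \le \|X\|$ obtained by collecting the $\ket{v_i}$ into a single vector loses the Schmidt-rank restriction (the combined vector generally has large Schmidt rank) and recovers only the trivial operator norm. The resolution is to keep the terms separate, bound each individually by the $S(k)$-norm up to the factor $\|\ket{w_i}\|\,\|\ket{v_i}\|$, and then control the resulting sum by Cauchy--Schwarz; the point that both normalization sums are governed precisely by $\|\sum_i A_iA_i^\dagger\|$ and $\|\sum_i B_iB_i^\dagger\|$ is exactly what makes the $\|\Phi\|_{cb}\le 1$ hypothesis deliver the bound $\le 1$.
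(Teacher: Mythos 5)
Your proposal is correct and follows essentially the same route as the paper's proof: both directions hinge on Theorem~\ref{thm:cb_norm_gen_kraus}, the lower bound is obtained by building the same rank-$k$ partial-isometry test map from the Schmidt decompositions of $\ket{v},\ket{w}$, and the upper bound proceeds by expanding $(id_m\otimes\Phi)(X)$ over the generalized Choi--Kraus operators, noting each $(I_m\otimes A_i^\dagger)\ket{a}$ has Schmidt rank at most $k$, and controlling the normalization via Cauchy--Schwarz. The only cosmetic difference is that you bound the resulting sum by $\|X\|_{S(k)}\sqrt{\sum_i\|\ket{w_i}\|^2}\sqrt{\sum_i\|\ket{v_i}\|^2}$ directly, whereas the paper argues that the weighted sum is dominated by a single term $|\bra{v_{i'}}X\ket{w_{i'}}|$; both are valid instances of the same estimate.
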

\begin{proof}
	Recall from Theorem~\ref{thm:cb_norm_gen_kraus} that any completely bounded map $\Phi : M_n \rightarrow M_k$ has a representation of the form
	\begin{align}\label{eq:cbform}
		\Phi(Y) = \sum_{i=1}^{nk} A_i Y B_i^\dagger \quad \text{with } A_i, B_i \in M_{k,n} \, \text{ and } \, \Big\| \sum_{i=1}^{nk}A_i A_i^\dagger\Big\|\Big\| \sum_{i=1}^{nk}B_i B_i^\dagger\Big\| = \big\|\Phi\big\|_{cb}^2.
	\end{align}
	By using the fact that $\Phi$ is completely contractive in the definition of $\big\|X\big\|_{MIN^k_m(M_n)}$ and a rescaling of the operators $\big\{A_i\big\}$ and $\big\{B_i\big\}$ we have
	\begin{align*}
		\big\|X\big\|_{MIN^k_m(M_n)} = \sup \Big\{ \big\| \sum_{i=1}^{nk} (I \otimes A_i) X (I \otimes B_i^\dagger) \big\| : \big\| \sum_{i=1}^{nk}A_i A_i^\dagger\big\| = \big\| \sum_{i=1}^{nk}B_i B_i^\dagger\big\| = 1 \Big\},
	\end{align*}
	where the supremum is taken over all families of operators $\big\{ A_i \big\}, \big\{ B_i \big\} \subset M_{k,n}$ satisfying the normalization condition. Now define $\alpha_{ij}\ket{a_{ij}} := A_i^\dagger\ket{j}$ and $\beta_{ij}\ket{b_{ij}} := B_i^\dagger\ket{j}$, and let $\ket{v} = \sum_{j=1}^{k} \gamma_j \ket{c_j} \otimes \ket{j}, \ket{w} = \sum_{j=1}^{k} \delta_j \ket{d_j} \otimes \ket{j} \in \bb{C}^m \otimes \bb{C}^k$ be arbitrary unit vectors. Then simple algebra reveals
	\begin{align*}
		\nu_i\ket{v_i} & := (I_m \otimes A_i^\dagger)\ket{v} = \sum_{j=1}^{k} \alpha_{ij} \gamma_j \ket{c_j} \otimes \ket{a_{ij}} \, \, \text{ and } \\
		\mu_i\ket{w_i} & := (I_m \otimes B_i^\dagger)\ket{w} = \sum_{j=1}^{k} \beta_{ij} \delta_j \ket{d_j} \otimes \ket{b_{ij}}.
	\end{align*}
	In particular, $SR(\ket{v_i}),SR(\ket{w_i}) \leq k$ for all $i$. Furthermore, by the normalization condition on $\big\{A_i\big\}$ and $\big\{B_i\big\}$ we have that
	\begin{align}\label{eq:muNormalize}
		\bra{v}(I_m \otimes \sum_{i=1}^{nk} A_i A_i^\dagger)\ket{v} = \sum_{i=1}^{nk} \nu_i^2 \leq 1 \, \, \, \text{ and } \, \, \, \bra{w}(I_m \otimes \sum_{i=1}^{nk} B_i B_i^\dagger)\ket{w} = \sum_{i=1}^{nk} \mu_i^2 \leq 1.
	\end{align}
	
	We can thus write
	\begin{align}\label{eq:cbschmidt}
		\left|\sum_{i=1}^{nk}\bra{v}(I_m \otimes A_i)(X)(I_m \otimes B_i^\dagger)\ket{w}\right| = \left|\sum_{i=1}^{nk} \nu_i \mu_i \bra{v_i} X \ket{w_i}\right| \leq \sum_{i=1}^{nk} \nu_i \mu_i \big| \bra{v_i} X \ket{w_i}\big|.
	\end{align}
	The normalization condition~\eqref{eq:muNormalize} and the Cauchy--Schwarz inequality tell us that there is a particular $i^\prime$ such that the sum~\eqref{eq:cbschmidt} $\leq \left| \bra{v_{i^\prime}} X \ket{w_{i^\prime}}\right|$. Taking the supremum over all vectors $\ket{v}$ and $\ket{w}$ gives the ``$\leq$'' inequality.
	
	The ``$\geq$'' inequality can be seen by noting that if we have two vectors in their Schmidt decompositions $\ket{v} = \sum_{i=1}^k \alpha_i \ket{c_i} \otimes \ket{a_i}$ and $\ket{w} = \sum_{i=1}^k \beta_i \ket{d_i} \otimes \ket{b_i}$, then we can define operators $A,B \in M_{k,n}$ by setting their $i^{th}$ row in the standard basis to be $\bra{a_i}$ and $\bra{b_i}$, respectively. Because the rows of $A$ and $B$ form orthonormal sets, $\big\|A\big\| = \big\|B\big\| = 1$. Additionally, if we define $\ket{v^\prime} = \sum_{i=1}^{k} \alpha_i \ket{c_i} \otimes \ket{i}$ and $\ket{w^\prime} = \sum_{i=1}^{k} \beta_i \ket{d_i} \otimes \ket{i}$, then
	\begin{align*}
		\big\| (I_m \otimes A)(X)(I_m \otimes B^\dagger) \big\| \geq \big| \bra{v^\prime}(I_m \otimes A)(X)(I_m \otimes B^\dagger)\ket{w^\prime} \big| = \big| \bra{v}X\ket{w} \big|.
	\end{align*}
	Taking the supremum over all vectors $\ket{v},\ket{w}$ with $SR(\ket{v}),SR(\ket{w}) \leq k$ gives the result.
\end{proof}

When thinking of $M_n$ as an operator system (instead of an operator space), it is more natural to define the norm~\eqref{eq:k_min_space} by taking the supremum over all completely positive unital maps $\Phi: M_n \rightarrow M_k$ rather than all complete contractions (similarly, to define the norm~\eqref{eq:k_max_space} one would take the supremum over all $k$-positive unital maps rather than $k$-contractive maps). In this case, the $k$-minimal norm no longer coincides with the $S(k)$-norm on $M_m(M_n)$ but rather has the following slightly different form:
	\begin{align}\begin{split}\label{eq:k_min_system}
		\big\|X\big\|_{OMIN^k_m(M_n)} = \sup_{\ket{v},\ket{w}}\Big\{ \big| \bra{v} X \ket{w} \big| : & \ SR(\ket{v}),SR(\ket{w}) \leq k \text{ and} \\
			& \ \exists \, P \in M_m \text{ s.t. } (P \otimes I_n)\ket{v} = \ket{w} \Big\},
	\end{split}\end{align}
where the notation $OMIN^k(M_n)$ refers to a new operator system structure that is being assigned to $M_n,$ which we discuss in detail in Section~\ref{sec:kMinOpSys}.
	
	Intuitively, this norm has the same interpretation as the $S(k)$-operator norm except with the added restriction that the vectors $\ket{v}$ and $\ket{w}$ look the same on the second subsystems. We will examine this norm in more detail in Section~\ref{sec:OpSysNorms}. In particular, we will see in Theorem~\ref{thm:k_order_norm} that the norm~\eqref{eq:k_min_system} is a natural norm on the $k$-super minimal operator system structure (to be defined in Section~\ref{sec:kMinOpSys}), which plays an analogous role to the $k$-minimal operator space structure.

Now that we have characterized the norms of $MIN^k(M_n)$ in a fairly concrete way, we turn our attention to the $k$-maximal norm. The following result is directly analogous to a corresponding known characterization of the $MAX(V)$ norm \cite[Theorem 14.2]{P03}.
\begin{thm}\label{thm:kMaxNormChar}
	Let $V$ be an operator space and let $X \in M_m \otimes M_n$. Then
	\begin{align*}
		\big\| X \big\|_{MAX^k_m(M_n)} = \inf\Big\{ \big\|A\big\| \big\|B\big\| : & \ A,B \in M_{m,rk}, x_i \in M_k \otimes M_n, \|x_i\| \leq 1 \text{ with} \\
		& \ X = (A \otimes I_n) {\rm diag}(x_1,\ldots,x_r) (B^\dagger \otimes I_n) \Big\},
	\end{align*}
	where we consider ${\rm diag}(x_1,\ldots,x_r) \in M_{rk} \otimes M_n$ in the natural way and the infimum is taken over all such decompositions of $X$.
\end{thm}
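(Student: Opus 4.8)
The plan is to mirror the classical proof of the $MAX(V)$ characterization \cite[Theorem 14.2]{P03}, adapting it to the $k$-maximal setting. Let me denote by $\|X\|_{\mathrm{inf}}$ the infimum appearing on the right-hand side of the statement. First I would verify that $\|\cdot\|_{\mathrm{inf}}$ is well-defined: every $X \in M_m \otimes M_n$ admits \emph{some} decomposition of the stated form (for instance by padding with zeroes and writing $X$ as a sum of rank-one pieces, each of which lives in a single $M_k \otimes M_n$ block), so the infimum is taken over a nonempty set. The strategy is then to prove the two inequalities $\|X\|_{MAX^k_m(M_n)} \leq \|X\|_{\mathrm{inf}}$ and $\|X\|_{MAX^k_m(M_n)} \geq \|X\|_{\mathrm{inf}}$ separately.

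For the inequality $\|X\|_{MAX^k_m(M_n)} \leq \|X\|_{\mathrm{inf}}$, I would fix any decomposition $X = (A \otimes I_n)\,\mathrm{diag}(x_1,\ldots,x_r)\,(B^\dagger \otimes I_n)$ with $\|x_i\| \leq 1$ and any map $\Phi : M_n \to M_s$ satisfying $\|id_k \otimes \Phi\| \leq 1$. Applying $id_m \otimes \Phi$ and pulling the $A \otimes I_n$ and $B^\dagger \otimes I_n$ factors outside using the operator-space property~(1), I would bound
\begin{align*}
	\big\|(id_m \otimes \Phi)(X)\big\| \leq \big\|A\big\|\,\big\|(id_{rk} \otimes \Phi)(\mathrm{diag}(x_1,\ldots,x_r))\big\|\,\big\|B\big\|.
\end{align*}
The central point is that $(id_{rk} \otimes \Phi)(\mathrm{diag}(x_1,\ldots,x_r)) = \mathrm{diag}\big((id_k \otimes \Phi)(x_1), \ldots, (id_k \otimes \Phi)(x_r)\big)$, whose operator norm is the maximum of the individual block norms $\|(id_k \otimes \Phi)(x_i)\|$; each of these is at most $\|id_k \otimes \Phi\|\,\|x_i\| \leq 1$. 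Hence $\|(id_m \otimes \Phi)(X)\| \leq \|A\|\|B\|$, and taking the infimum over decompositions and then the supremum over admissible $\Phi$ gives the desired bound.

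For the reverse inequality I would use a separating-hyperplane (Hahn--Banach) argument, exactly as in the classical case. The set of all $X$ admitting a decomposition with $\|A\|\|B\| \leq 1$ forms a convex, balanced, absorbing set whose Minkowski gauge is precisely $\|\cdot\|_{\mathrm{inf}}$; I would check it is closed (using compactness of the parametrizing data after bounding $r$, or a limiting argument). If $\|X\|_{\mathrm{inf}} > 1$, then $X$ lies outside this closed convex set, so there is a linear functional $\varphi$ on $M_m \otimes M_n$ separating $X$ from the set. Representing $\varphi$ via the Hilbert--Schmidt inner product and organizing the resulting data into a map $\Phi : M_n \to M_r$, I would show that the separation condition forces $\|id_k \otimes \Phi\| \leq 1$ while $\|(id_m \otimes \Phi)(X)\| > 1$, which yields $\|X\|_{MAX^k_m(M_n)} > 1$. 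Rescaling gives $\|X\|_{MAX^k_m(M_n)} \geq \|X\|_{\mathrm{inf}}$ in general.

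The main obstacle I anticipate is the duality bookkeeping in the reverse inequality: one must carefully convert the separating functional into a genuine map $\Phi$ whose \emph{$k$-contractivity} ($\|id_k \otimes \Phi\| \leq 1$) is exactly the dual condition to the constraint $\|x_i\| \leq 1$ on the blocks, and simultaneously verify closedness of the convex set so that Hahn--Banach applies. This is where the special role of $k$ enters — the block size in $\mathrm{diag}(x_1,\ldots,x_r)$ must be matched against the $id_k$ appearing in the norm defining $MAX^k$, and getting the dimensions and the direction of the duality correct is the delicate step. The rest of the argument is a routine adaptation of the $k=1$ proof in \cite{P03}.
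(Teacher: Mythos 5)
Your first inequality is exactly the paper's argument: apply $id_m \otimes \Phi$ to a decomposition, pull the $A \otimes I_n$ and $B^\dagger \otimes I_n$ factors outside, and use that the operator norm of a block-diagonal matrix is the maximum of the block norms, each of which is at most $\|id_k \otimes \Phi\|\,\|x_i\| \leq 1$. For the reverse inequality, however, the paper does not use Hahn--Banach at all. It observes that $\|\cdot\|_{MAX^k_m(M_n)}$ is \emph{by definition} the largest $L^\infty$-matrix norm on $M_n$ agreeing with the operator norm for $1 \leq m \leq k$, so it suffices to check that the infimum on the right-hand side (i) equals $\|\cdot\|$ for $m \leq k$ (via the explicit decomposition $A = \|X\|I_k$, $B = I_k$, $r = 1$, $x_1 = (X \oplus 0_{k-m})/\|X\|$, together with the easy bound $\|X\| \leq \|X\|_{m,\textup{inf}}$) and (ii) satisfies the $L^\infty$-matrix norm axioms, the verification of which is deferred to the argument of \cite[Theorem 14.2]{P03}. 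The extremal property of $MAX^k$ then does for free what your separating-hyperplane argument must build by hand.

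Your route can in principle be made to work, but the step you describe as ``delicate bookkeeping'' is the entire content of the reverse inequality, and as sketched it is a genuine gap. A separating functional $\varphi$ on $M_m \otimes M_n$ gives you only the scalar inequality $|\varphi(X)| > 1 \geq \sup|\varphi(Y)|$ over the unit ball of the infimum norm. To extract a map $\Phi : M_n \rightarrow M_r$ witnessing $\|X\|_{MAX^k_m(M_n)} > 1$ you must (a) prove that the condition ``$|\varphi| \leq 1$ on all $(A \otimes I_n)x(B^\dagger \otimes I_n)$ with $\|A\|,\|B\|,\|x\| \leq 1$, $x \in M_k \otimes M_n$'' is equivalent to $\|id_k \otimes \Phi\| \leq 1$ for the map $\Phi$ obtained from $\varphi$ by a Choi--Jamio{\l}kowski-type correspondence, and (b) arrange that $\varphi(X)$ is realized as a matrix entry $\bra{\xi}(id_m \otimes \Phi)(X)\ket{\eta}$ with $\ket{\xi},\ket{\eta}$ unit vectors, so that $|\varphi(X)| > 1$ actually forces $\|(id_m \otimes \Phi)(X)\| > 1$; a generic identification of $\varphi$ with a matrix $F$ via the Hilbert--Schmidt inner product does not give this normalization automatically. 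Item (a) is a dual-norm identity of essentially the same depth as the theorem itself, and you also still owe the closedness of the convex set, which needs a Carath\'{e}odory bound on the number of blocks $r$. None of this is unfixable, but carried out in full it amounts to re-proving the $MAX^k$--$MIN^k$ duality rather than quoting the maximality that makes the paper's proof short.
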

\begin{proof}
	The ``$\leq$'' inequality follows simply from the axioms of an operator space: if $X = (A \otimes I_n) {\rm diag}(x_1,\ldots,x_r) (B^\dagger \otimes I_n) \in M_m \otimes M_n$ then
	\begin{align*}
		(id_m \otimes \Phi)(X) = (A \otimes I_n) {\rm diag}((id_{k} \otimes \Phi)(x_1),\ldots,(id_{k} \otimes \Phi)(x_r) (B^\dagger \otimes I_n).
	\end{align*}
	Thus
	\begin{align*}
		\big\|(id_m \otimes \Phi)(X)\big\| \leq \big\|A\big\| \big\|B\big\| \max\big\{\|(id_k \otimes \Phi)(x_1)\|, \ldots, \|(id_k \otimes \Phi)(x_r)\|\big\}.
	\end{align*}
	By taking the supremum over maps $\Phi$ with $\|id_k \otimes \Phi\| \leq 1$, the ``$\leq$'' inequality follows.
	
	We will now show that the infimum on the right is an $L^{\infty}$-matrix norm that coincides with the operator norm $\|\cdot\|$ for $1 \leq m \leq k$. The ``$\geq$'' inequality will then follow from the fact that $\|\cdot\|_{MAX^k_m(M_n)}$ is the maximal such norm.
	
	First, denote the infimum on the right by $\big\|X\big\|_{m,\textup{inf}}$ and fix some $1 \leq m \leq k$. Then the inequality $\big\|X\big\| \leq \big\|X\big\|_{m,\textup{inf}}$ follows immediately by picking any particular decomposition $X = (A \otimes I_n) {\rm diag}(x_1,\ldots,x_r) (B^\dagger \otimes I_n)$ and using the axioms of an operator space to see that
	\begin{eqnarray*}
		\big\| X = (A \otimes I_n) {\rm diag}(x_1,\ldots,x_r) (B^\dagger \otimes I_n) \big\| & \leq & \big\|A\big\| \big\|B\big\| \max\big\{\|x_1\|, \ldots, \|x_r\|\big\} \\ & \leq & \big\|A\big\| \big\|B\big\| \\ & \leq & \big\|X\big\|_{m,\textup{inf}}.
	\end{eqnarray*}
	The fact that equality is attained by some decomposition of $X$ comes simply from writing letting $A = \big\| X \big\|I_k$, $B = I_k$, $r = 1$, and $x_1 = (X \oplus 0_{k-m})/\big\|X\big\|$. It follows that $\|\cdot\|_{m,\textup{inf}} = \|\cdot\|$ for $1 \leq m \leq k$.
	
	All that remains to be proved is that $\|\cdot\|_{m,\textup{inf}}$ is an $L^\infty$-matrix norm, which we omit as it is directly analogous to the proof of \cite[Theorem 14.2]{P03}.
\end{proof}

Theorem~\ref{thm:kMaxNormChar} was proved more generally for arbitrary operator spaces $V$ in \cite{JKPP11}, but the result as given is enough for our purposes. As one final note, observe that we can obtain lower bounds of the $k$-minimal and $k$-maximal operator space norms simply by choosing particular maps $\Phi$ that satisfy the normalization condition of their definition. Upper bounds of the $k$-maximal norms can be obtained from Theorem~\ref{thm:kMaxNormChar}. Upper bounds of the $k$-minimal norms can be obtained via the methods of Section~\ref{sec:semidefProgramMNorm}.

\subsection{Completely Bounded k-Minimal Norms}\label{sec:cb_norm_op_space}

We now investigate the completely bounded version of the $k$-minimal operator space norms that have been introduced. In Section~\ref{sec:contrac_sep_crit} we will use the ideas presented here to show that completely bounded norms can be used to provide a characterization of Schmidt number analogous to its more well-known characterization in terms of $k$-positive maps.

Given operator spaces $V$ and $W$, the completely bounded (CB) norm from $V$ to $W$ is defined by
\begin{align*}
	\big\|\Phi\big\|_{CB(V,W)} := \sup_{m \geq 1}\Big\{ \big\|(id_m \otimes \Phi)(X)\big\|_{M_m(W)} : X \in M_m(V) \text{ with } \big\|X\big\|_{M_m(V)} \leq 1 \Big\}.
\end{align*}
This quantity clear reduces to the ``standard'' completely bounded norm of $\Phi$ in the case when $V = M_r$ and $W = M_n$. We will now characterize this norm in the case when $V = M_r$ and $W = MIN^k(M_n)$. In particular, we will see that the $k$-minimal completely bounded norm of $\Phi$ is equal to the perhaps more familiar operator norm $\big\|id_k \otimes \Phi\big\|$ -- that is, the CB norm in this case stabilizes in much the same way that the standard CB norm stabilizes (indeed, in the $k = n$ case we get exactly the standard CB norm). This result was originally proved in \cite{OR04}, but we prove it here using elementary means for completeness and clarity, and also because we will subsequently need the operator system version of the result, which can be proved in the same way.
\begin{thm}\label{thm:mainCB}
	Let $\Phi : M_r \rightarrow M_n$ be a linear map and let $1 \leq k \leq n$. Then
	\begin{align*}
		\big\|id_k \otimes \Phi\big\| & = \big\|\Phi\big\|_{CB(M_r,MIN^k(M_n))}.
	\end{align*}
\end{thm}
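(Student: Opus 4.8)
<br>

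The plan is to prove the two inequalities $\|id_k \otimes \Phi\| \leq \|\Phi\|_{CB(M_r,MIN^k(M_n))}$ and $\|\Phi\|_{CB(M_r,MIN^k(M_n))} \leq \|id_k \otimes \Phi\|$ separately. Throughout, I would use the characterization of the $k$-minimal norm provided by Theorem~\ref{thm:MINkChar}, namely that $\|\cdot\|_{MIN^k_m(M_n)} = \|\cdot\|_{S(k)}$, so that the completely bounded norm becomes
\begin{align*}
	\big\|\Phi\big\|_{CB(M_r,MIN^k(M_n))} = \sup_{m \geq 1}\Big\{ \big\|(id_m \otimes \Phi)(X)\big\|_{S(k)} : X \in M_m \otimes M_r, \big\|X\big\| \leq 1 \Big\},
\end{align*}
where the norm on the domain $M_m \otimes M_r$ is the standard operator norm (since the domain is the ``naive'' operator space $M_r$). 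This reformulation turns the problem into a statement purely about the $S(k)$-operator norm, which is the object we understand best.

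For the lower bound $\|id_k \otimes \Phi\| \leq \|\Phi\|_{CB(M_r,MIN^k(M_n))}$, I would take the supremum defining the CB norm and restrict attention to the single term $m = k$. For this term, I need to compare $\|(id_k \otimes \Phi)(X)\|$ (the ordinary operator norm) with $\|(id_k \otimes \Phi)(X)\|_{S(k)}$. The key observation is that $(id_k \otimes \Phi)(X)$ lives in $M_k \otimes M_n$, and on such a space the $S(k)$-norm coincides with the operator norm because every vector $\ket{v} \in \bb{C}^k \otimes \bb{C}^n$ automatically satisfies $SR(\ket{v}) \leq k$ (the Schmidt rank is bounded by $\min\{k,n\} \leq k$). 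Hence $\|\cdot\|_{S(k)} = \|\cdot\|$ on $M_k \otimes M_n$, giving $\|(id_k \otimes \Phi)(X)\|_{S(k)} = \|(id_k \otimes \Phi)(X)\|$, and taking the supremum over $\|X\| \leq 1$ with $X \in M_k \otimes M_r$ recovers exactly $\|id_k \otimes \Phi\|$.

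For the upper bound, which I expect to be the main obstacle, I would take an arbitrary $m$ and an arbitrary $X \in M_m \otimes M_r$ with $\|X\| \leq 1$, and bound $\|(id_m \otimes \Phi)(X)\|_{S(k)}$. Using the definition of the $S(k)$-norm, I would fix vectors $\ket{v},\ket{w} \in \bb{C}^m \otimes \bb{C}^n$ with $SR(\ket{v}),SR(\ket{w}) \leq k$ and estimate $|\bra{w}(id_m \otimes \Phi)(X)\ket{v}|$. The strategy is to write $\ket{v}$ and $\ket{w}$ in their Schmidt decompositions over the $\bb{C}^m$--$\bb{C}^n$ cut, say $\ket{v} = \sum_{i=1}^k \alpha_i \ket{a_i}\otimes\ket{b_i}$ and $\ket{w} = \sum_{j=1}^k \beta_j \ket{c_j}\otimes\ket{d_j}$, and then to build isometries (or contractions) $B \in M_{k,n}$ and $D \in M_{k,n}$ whose rows are the orthonormal vectors $\{\ket{b_i}\}$ and $\{\ket{d_j}\}$ respectively, much as in the ``$\geq$'' direction of the proof of Theorem~\ref{thm:MINkChar}. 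Applying these isometries to the output space of $\Phi$ transforms $id_m \otimes \Phi$ into a map whose relevant action factors through $id_k \otimes \Phi$, and the contractivity of $B$ and $D$ together with $\|X\| \leq 1$ should yield $|\bra{w}(id_m \otimes \Phi)(X)\ket{v}| \leq \|id_k \otimes \Phi\|$. The delicate part here is verifying that the reduction to the $m = k$ (or $k$-dimensional ancilla) situation is valid uniformly in $m$ — that is, that the ancilla dimension $m$ can effectively be compressed down to $k$ without loss, using only the Schmidt-rank-$\leq k$ constraint on $\ket{v}$ and $\ket{w}$. Taking the supremum over all such $\ket{v},\ket{w}$ and then over all $m$ and all contractions $X$ completes the argument, and combining the two inequalities gives the claimed equality.
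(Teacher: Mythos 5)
Your lower bound is exactly the paper's: restrict the supremum to $m=k$ and observe that every vector in $\bb{C}^k\otimes\bb{C}^n$ has Schmidt rank at most $k$, so $\|\cdot\|_{S(k)}$ coincides with the operator norm on $M_k\otimes M_n$. That part is fine.

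The upper bound, however, has a genuine gap: you compress the wrong tensor factor. You propose to build contractions $B,D\in M_{k,n}$ from the \emph{output-side} Schmidt vectors $\{\ket{b_i}\},\{\ket{d_j}\}\subset\bb{C}^n$ and apply them to the output space of $\Phi$. But conjugating by $I_m\otimes B^\dagger$ and $I_m\otimes D$ replaces $id_m\otimes\Phi$ by $id_m\otimes\Psi$ with $\Psi(Y)=D\Phi(Y)B^\dagger$ --- the ancilla is still $m$-dimensional, so nothing factors through $id_k\otimes\Phi$. The resulting quantity is controlled by $\big\|\Phi\big\|_{cb}$ (since $\|\Psi\|_{cb}\le\|\Phi\|_{cb}$), which can be strictly larger than $\big\|id_k\otimes\Phi\big\|$, so the estimate you need does not follow. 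The correct move (and the one the paper makes) is to use the \emph{ancilla-side} Schmidt vectors: writing $\ket{v}=\sum_{i=1}^k\alpha_i\ket{a_i}\otimes\ket{b_i}$ with $\ket{a_i}\in\bb{C}^m$, define the isometry $V:\bb{C}^k\rightarrow\bb{C}^m$ by $V\ket{i}=\ket{a_i}$, so that $\ket{v}=(V\otimes I_n)\ket{\tilde v}$ with $\ket{\tilde v}\in\bb{C}^k\otimes\bb{C}^n$ (and similarly $W,\ket{\tilde w}$ for $\ket{w}$). Because these isometries act only on the ancilla, they commute past $\Phi$:
\begin{align*}
\bra{v}(id_m\otimes\Phi)(X)\ket{w}=\bra{\tilde v}(id_k\otimes\Phi)\big((V^\dagger\otimes I_r)X(W\otimes I_r)\big)\ket{\tilde w},
\end{align*}
and $\big\|(V^\dagger\otimes I_r)X(W\otimes I_r)\big\|\le 1$, which gives the bound by $\big\|id_k\otimes\Phi\big\|$ uniformly in $m$. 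With that substitution your outline closes; as written, the key step fails.
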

\begin{proof}
	To see the ``$\leq$'' inequality, simply notice that $\big\|Y\big\|_{M_k(MIN^k(M_n))} = \big\|Y\big\|_{M_k(M_n)}$ for all $Y \in M_k(M_n)$. We thus just need to show the ``$\geq$'' inequality, which we do in much the same manner as Smith's original proof that the standard CB norm stabilizes \cite{S83}.
	
	First, use Theorem~\ref{thm:MINkChar} to write
	\begin{align}\label{eq:cb_k_sup}
		\big\|\Phi\big\|_{CB(M_r,MIN^k(M_n))} = \sup_{m \geq 1}\Big\{ \big\|(id_m \otimes \Phi)(X)\big\|_{S(k)} : \big\|X\big\| \leq 1 \Big\}.
	\end{align}
	Now fix $m \geq k$ and a pure state $\ket{v} \in \bb{C}^m \otimes \bb{C}^n$ with $SR(\ket{v}) \leq k$. We begin by showing that there exists an isometry $V : \bb{C}^k \rightarrow \bb{C}^m$ and a state $\ket{\tilde{v}} \in \bb{C}^k \otimes \bb{C}^n$ such that $(V \otimes I_n)\ket{\tilde{v}} = \ket{v}$. To this end, write $\ket{v}$ in its Schmidt decomposition $\ket{v} = \sum_{i=1}^{k} \alpha_i \ket{a_i} \otimes \ket{b_i}$. Because $k \leq m$, we may define an isometry $V : \bb{C}^k \rightarrow \bb{C}^m$ by $V\ket{i} = \ket{a_i}$ for $1 \leq i \leq k$. If we define $\ket{\tilde{v}} := \sum_{i=1}^{k} \alpha_i \ket{i} \otimes \ket{b_i}$ then $(V \otimes I_n)\ket{\tilde{v}} = \ket{v}$, as desired.
	
	Now choose $\tilde{X} \in M_m(M_r)$ such that $\big\|\tilde{X}\big\| \leq 1$ and the supremum~\eqref{eq:cb_k_sup} (holding $m$ fixed) is attained by $\tilde{X}$. Then choose vectors $\ket{v},\ket{w} \in \bb{C}^m \otimes \bb{C}^n$ with $SR(\ket{v}),SR(\ket{w}) \leq k$ such that
	\begin{align*}
		\big\| (id_m \otimes \Phi)(\tilde{X}) \big\|_{S(k)} = \big|\bra{v} (id_m \otimes \Phi)(\tilde{X}) \ket{w}\big|.
	\end{align*}
	As we saw earlier, there exist isometries $V, W : \bb{C}^k \rightarrow \bb{C}^m$ and unit vectors $\ket{\tilde{v}},\ket{\tilde{w}} \in \bb{C}^k \otimes \bb{C}^n$ such that $(V \otimes I_n)\ket{\tilde{v}} = \ket{v}$ and $(W \otimes I_n)\ket{\tilde{w}} = \ket{w}$. Thus
	\begin{align*}
		\big\| (id_m \otimes \Phi)(\tilde{X}) \big\|_{S(k)} & = \big|\bra{\tilde{v}}(V^\dagger \otimes I_n) (id_m \otimes \Phi)(\tilde{X}) (W \otimes I_n)\ket{\tilde{w}} \big| \\
		& = \big|\bra{\tilde{v}} (id_k \otimes \Phi)((V^\dagger \otimes I_r)\tilde{X}(W \otimes I_r))\ket{\tilde{w}} \big| \\
		& \leq \big\| (id_k \otimes \Phi)((V^\dagger \otimes I_r)\tilde{X}(W \otimes I_r)) \big\| \\
		& \leq \sup \Big\{ \big\| (id_k \otimes \Phi)(X) \big\| : X \in M_k(M_r) \text{ with } \big\|X\big\| \leq 1 \Big\},
	\end{align*}
	where the final inequality comes from the fact that $\big\|(V^\dagger \otimes I_r)\tilde{X}(W \otimes I_r)\big\| \leq 1$. The desired inequality follows, completing the proof.
\end{proof}

\subsection{The Dual of the S(k)-Operator Norm}\label{sec:dual_op_norm}

Using the techniques and results of the previous sections, we are finally in a position to explore the dual of the $S(k)$-operator norm, which we denote $\|\cdot\|_{S(k)}^\circ$. The key idea is that, because the $S(k)$-operator norm is the minimal $L^\infty$-matrix norm on $M_n$, the dual of the $S(k)$-norm is the maximal $L^1$-matrix norm on $M_{n,tr}$ (recall that $M_{n,tr}$ denotes $M_n$ equipped with the trace norm).

Before proceeding to the statement of the theorem, it is worth having another look at Theorem~\ref{thm:sk_dual_groth}, which provides the corresponding result for the $s(k)$-vector norms. With that result in mind, the following characterization is exactly what might be expected.
\begin{thm}\label{thm:dual_norm_char}
	Let $Y \in M_m \otimes M_n$. Then
	\begin{align*}
		\big\| Y \big\|_{S(k)}^{\circ} & = \inf\Big\{ \sum_i |c_i| : Y = \sum_i c_i\ketbra{v_i}{w_i} \text{ with } SR(\ket{v_i}),SR(\ket{w_i}) \leq k \ \forall \, i \Big\}, 
	\end{align*}
	where the infimum is taken over all decompositions of $Y$ of the given form.
\end{thm}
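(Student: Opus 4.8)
<br>

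The plan is to prove that the infimum $\|\cdot\|_{k,\textup{inf}}^{op}$ defined by the right-hand side of the theorem is exactly the dual of the $S(k)$-norm, by showing that it is the largest norm that is ``normalized'' on the rank-one operators built from low-Schmidt-rank vectors. This parallels the proof of Theorem~\ref{thm:sk_dual_groth} for the $s(k)$-vector norms, and so the strategy is to mimic that argument at the level of operators. Concretely, I would call a norm $\mnorm{\cdot}$ on $M_m \otimes M_n$ an \emph{operator $k$-crossnorm} if $\bmnorm{\ketbra{v}{w}} = \bmnorm{\ketbra{v}{w}}^\circ = 1$ whenever $SR(\ket{v}),SR(\ket{w}) \leq k$. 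By Proposition~\ref{prop:rankOneNorm} together with the fact that the $s(k)$-vector norm equals $1$ exactly on states of Schmidt rank at most $k$ (Corollary~\ref{cor:vecEquiv}), the $S(k)$-operator norm is an operator $k$-crossnorm, and it is straightforward that $\mnorm{\cdot}^\circ$ is an operator $k$-crossnorm whenever $\mnorm{\cdot}$ is.

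The first key step is to show that $\|\cdot\|_{S(k)}$ is the \emph{smallest} operator $k$-crossnorm. Given any operator $k$-crossnorm $\mnorm{\cdot}$, Definition~\ref{defn:operator_sk_norm} gives
\begin{align*}
	\big\| Y \big\|_{S(k)} = \sup_{\ket{v},\ket{w}} \big\{ |\bra{w}Y\ket{v}| : SR(\ket{v}),SR(\ket{w}) \leq k \big\} \leq \sup_{Z} \big\{ |\Tr(Z^\dagger Y)| : \bmnorm{Z}^\circ \leq 1 \big\} = \bmnorm{Y},
\end{align*}
since each $\ketbra{v}{w}$ with low Schmidt rank satisfies $\bmnorm{\ketbra{v}{w}}^\circ = 1$ and $\bra{w}Y\ket{v} = \Tr(\ketbra{v}{w}^\dagger Y)$. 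Dualizing, $\|\cdot\|_{S(k)}^\circ$ is the \emph{largest} operator $k$-crossnorm. The second key step is to verify that $\|\cdot\|_{k,\textup{inf}}^{op}$ is itself a norm and an operator $k$-crossnorm: the triangle inequality follows from the usual $\varepsilon$-argument on decompositions; the bound $\|\cdot\|_{k,\textup{inf}}^{op} \geq \|\cdot\|_{tr}$ (hence nondegeneracy and the crossnorm lower bound) follows from the triangle inequality for the trace norm together with $\big\|\ketbra{v}{w}\big\|_{tr} = 1$; and $\|\ketbra{v}{w}\|_{k,\textup{inf}}^{op} \leq 1$ for low Schmidt rank is immediate from the defining decomposition.

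The final step is to show that $\|\cdot\|_{k,\textup{inf}}^{op}$ dominates every operator $k$-crossnorm, which pins it down as the maximal one and forces equality with $\|\cdot\|_{S(k)}^\circ$. This is the cleanest part: for any operator $k$-crossnorm $\mnorm{\cdot}$ and any decomposition $Y = \sum_i c_i \ketbra{v_i}{w_i}$ with low Schmidt ranks, the triangle inequality gives $\bmnorm{Y} \leq \sum_i |c_i| \bmnorm{\ketbra{v_i}{w_i}} = \sum_i |c_i|$, and taking the infimum over decompositions yields $\bmnorm{Y} \leq \|Y\|_{k,\textup{inf}}^{op}$. I anticipate the main obstacle to be verifying that the proposed infimum really is an operator $k$-crossnorm rather than merely a seminorm or a quantity that only satisfies one of the two crossnorm conditions — in particular, establishing the lower bound $\|Y\|_{k,\textup{inf}}^{op} \geq 1$ for low-Schmidt-rank rank-one $Y$ (needed to show the dual crossnorm condition $\|\cdot\|_{k,\textup{inf}}^{op\,\circ} \geq 1$) requires care, since one must rule out the possibility that a cheaper decomposition exists. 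The trace-norm domination $\|\cdot\|_{k,\textup{inf}}^{op} \geq \|\cdot\|_{tr}$ together with self-duality of the Frobenius norm and the chain $\|\cdot\|_{S(k)} \leq \|\cdot\| \leq \|\cdot\|_{tr}$ should resolve this technicality, exactly as in the vector case of Theorem~\ref{thm:sk_dual_groth}.
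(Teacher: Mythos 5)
Your proof is correct, but it takes a genuinely different --- and more elementary --- route than the paper's. The paper proves this theorem via operator-space duality: it invokes Ruan's theorem that the duals of an $L^\infty$-matrix normed space form an $L^1$-matrix normed space, uses Theorem~\ref{thm:MINkChar} to identify $\|\cdot\|_{S(k)}$ with the $k$-minimal operator space norm (hence the smallest $L^\infty$-matrix norm agreeing with the operator norm for $1 \leq m \leq k$), concludes that $\|\cdot\|_{S(k)}^\circ$ is the largest $L^1$-matrix norm agreeing with $\|\cdot\|_{tr}$ up to level $k$, and then spends the bulk of the proof verifying that the infimum satisfies the matrix-norm axiom and the $L^1$ direct-sum axiom (the computation with the projections $P_1,P_2$ and Cauchy--Schwarz). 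You instead work at a single fixed level $m$ and replace the class of $L^1$-matrix norms by the class of norms normalized on the rank-one operators $\ketbra{v}{w}$ with $SR(\ket{v}),SR(\ket{w}) \leq k$, transplanting verbatim the crossnorm argument the paper uses for the vector case in Theorem~\ref{thm:sk_dual_groth}. Your approach buys self-containedness: it needs neither Ruan's theorem nor the $MIN^k$ identification, and it sidesteps the direct-sum verification entirely --- notable given that the paper deferred this theorem to Chapter~\ref{ch:optheory} precisely on the grounds that operator-space tools were required. The paper's approach buys the embedding of the result into the operator-space framework, which yields the full matricial ($L^1$) structure of the dual norms across all levels rather than the norm at one level. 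Two points to tighten in a write-up: record explicitly that $\big\|\ketbra{v}{w}\big\|_{S(k)}^\circ = 1$ (take $Z = \ketbra{v}{w}$ for the lower bound and use $|\bra{v}Z\ket{w}| \leq \|Z\|_{S(k)}$ for the upper), and note that the chain actually needed for the dual crossnorm condition on the infimum is $|\bra{v}Z\ket{w}| \leq \|Z\|_{S(k)} \leq \|Z\| \leq \|Z\|_{tr} \leq \|Z\|_{k,\textup{inf}}$ --- self-duality of the Frobenius norm plays no role here.
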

\begin{proof}
	We first recall \cite[Theorem 5.1]{R88}, which says that the collection of dual norms of any $L^\infty$-matrix normed space on $M_n$ defines an $L^1$-matrix normed space on $M_{n,tr}$. We can thus use Theorem~\ref{thm:MINkChar} to see that the norms $\| \cdot \|_{S(k)}^{\circ}$ define an $L^1$-matrix normed space on $M_{n,tr}$. Furthermore, because $\|\cdot\|_a \leq \|\cdot\|_b$ implies that $\|\cdot\|_a^\circ \geq \|\cdot\|_b^\circ$, it follows that $\| \cdot \|_{S(k)}^{\circ}$ is the largest $L^1$-matrix norm on $M_{n,tr}$ that is equal to $\|\cdot\|_{tr}$ on $M_m \otimes M_n$ for $1 \leq m \leq k$.
	
	Throughout this proof, we denote the given infimum by $\|\cdot\|_{k,\textup{inf}}$ for simplicity. To prove the result, we first show that $\|\cdot\|_{S(k)^\circ} \leq \|\cdot\|_{k,\textup{inf}}$ -- the opposite inequality comes from showing that $\|\cdot\|_{k,\textup{inf}}$ is also an $L^1$-matrix norm on $M_{n,tr}$ that is equal to $\|\cdot\|_{tr}$ on $M_m \otimes M_n$ for $1 \leq m \leq k$.
	
	To see that $\big\|Y\big\|_{S(k)^\circ} \leq \big\|Y\big\|_{k,\textup{inf}}$, let $X$ be such that $\big\|X\big\|_{S(k)} \leq 1$ and write $Y = \sum_i c_i\ketbra{v_i}{w_i}$ with $SR(\ket{v_i}),SR(\ket{w_i}) \leq k$ for all $i$. Then
	\begin{align*}
		\big|\langle X | Y \rangle\big| & = \Big|\sum_i c_i \big\langle X \big| \ketbra{v_i}{w_i}\big\rangle\Big| \\
		& \leq \sum_i |c_i| \big| \bra{w_i} X^\dagger \ket{v_i} \big| \\
		& \leq \sum_i |c_i|.
	\end{align*}
	By taking the supremum over operators $X$ with $\big\|X\big\|_{S(k)} \leq 1$ and the infimum over decompositions of $Y$ of the given form, the ``$\leq$'' inequality follows.
	
	We now show that $\|\cdot\|_{k,\textup{inf}}$ is an $L^1$-matrix norm that coincides with $\|\cdot\|_{tr}$ for $1 \leq m \leq k$. To this end, note that we already showed that $\|\cdot\|_{tr} \leq \|\cdot\|_{S(k)}^\circ \leq \|\cdot\|_{k,\textup{inf}}$ for all $m$. To see that the opposite inequality holds when $1 \leq m \leq k$, note that we can simply write $Y$ in its singular value decomposition $Y = \sum_i \alpha_i \ketbra{a_i}{b_i}$. Then $\big\|Y\big\|_{tr} = \sum_i \alpha_i \geq \big\|Y\big\|_{k,\textup{inf}}$. 
	
	The remainder of the proof is devoted to showing that $\|\cdot\|_{k,\textup{inf}}$ is an $L^1$-matrix norm. To see that it is a norm, note that the properties $\big\|\lambda Y\big\|_{k,\textup{inf}} = |\lambda|\big\|Y\big\|_{k,\textup{inf}}$ and $\big\|Y\big\|_{k,\textup{inf}} = 0$ if and only if $Y = 0$ both follow trivially from the definition of $\|\cdot\|_{k,\textup{inf}}$. To see the triangle inequality, fix $\varepsilon > 0$ and let $Y_1 = \sum_i c_i\ketbra{v_i}{w_i}$, $Y_2 = \sum_i d_i\ketbra{x_i}{y_i}$ be decompositions of $Y_1, Y_2$ with $SR(\ket{v_i}),SR(\ket{w_i}),SR(\ket{x_i}),SR(\ket{y_i}) \leq k$ for all $i$ such that $\sum_i |c_i| \leq \big\|Y_1\big\|_{k,\textup{inf}} + \varepsilon$ and $\sum_i |d_i| \leq \big\|Y_2\big\|_{k,\textup{inf}} + \varepsilon$. Then we can decompose $Y_1 + Y_2$ as
	\begin{align*}
		Y_1 + Y_2 = \sum_i c_i\ketbra{v_i}{w_i} + \sum_i d_i\ketbra{x_i}{y_i},
	\end{align*}
	so
	\begin{align*}
		\big\|Y_1 + Y_2\big\|_{k,\textup{inf}} \leq \sum_i |c_i| + \sum_i |d_i| \leq \big\|Y_1\big\|_{k,\textup{inf}} + \big\|Y_2\big\|_{k,\textup{inf}} + 2\varepsilon.
	\end{align*}
	Since $\varepsilon > 0$ was arbitrary, the triangle inequality follows and $\|\cdot\|_{k,\textup{inf}}$ is a norm.
	
	To see that $\|\cdot\|_{k,\textup{inf}}$ satisfies the matrix norm property (i.e., property~(1) in Section~\ref{sec:op_space}), write $Y = c_i\ketbra{v_i}{w_i}$. Define $a_i \ket{\tilde{v_i}} := (A \otimes I_n)\ket{v_i}$ and $b_i \ket{\tilde{w_i}} := (B \otimes I_n)\ket{w_i}$. Then
	\begin{align*}
		(A \otimes I_n)Y(B^\dagger \otimes I_n) & = \sum_i c_i(A \otimes I_n)\ketbra{v_i}{w_i}(B^\dagger \otimes I_n) \\
		& = \sum_i c_i a_i \overline{b_i} \ketbra{\tilde{v_i}}{\tilde{w_i}}.
	\end{align*}
	Because $|a_i| \leq \big\|A\big\|$ and $|b_i| \leq \big\|B\big\|$ for all $i$, it follows that
	\begin{align*}
		\big\|(A \otimes I_n)Y(B^\dagger \otimes I_n)\big\|_{k,\textup{inf}} \leq \sum_i |c_i a_i \overline{b_i}| \leq \big\|A\big\|\left( \sum_i |c_i| \right)\big\|B\big\|.
	\end{align*}
	The desired inequality now follows from taking the infimum over all decompositions of $Y$ of the desired form.
	
	Finally, to see that $\|\cdot\|_{k,\textup{inf}}$ satisfies the $L^1$ property, we show that $\big\|Y_1 \oplus Y_2\big\|_{k,\textup{inf}} = \big\|Y_1\big\|_{k,\textup{inf}} + \big\|Y_2\big\|_{k,\textup{inf}}$ for all $Y_1 \in M_m \otimes M_n$ and $Y_2 \in M_r \otimes M_n$. The ``$\leq$'' inequality follows immediately from the triangle inequality. To see the ``$\geq$'' inequality, define $P_1, P_2 \in M_m \oplus M_r$ by $P_1 := I_m \oplus 0_r$ and $P_2 := 0_m \oplus I_r$. Let $\Psi : (M_m \oplus M_r) \rightarrow (M_m \oplus M_r)$ be the completely positive map with $P_1$ and $P_2$ as its Kraus operators. If we write $Y_1 \oplus Y_2 = \sum_i c_i \ketbra{v_i}{w_i}$ then we have
	\begin{align}\label{eq:dual_oplus}
		Y_1 \oplus Y_2 & = (\Psi \otimes id_n)(Y_1 \oplus Y_2) \\
		& = \sum_i c_i ({\rm Ad}_{P_1} \otimes id_n)(\ketbra{v_i}{w_i}) + \sum_i c_i ({\rm Ad}_{P_2} \otimes id_n)(\ketbra{v_i}{w_i}),
	\end{align}
	where we recall the adjoint map ${\rm Ad}_{A}(X) = A X A^\dagger$. Define $a_{i,j} \ket{\tilde{v_{i,j}}} := (P_j \otimes I_n)\ket{v_i}$ and $b_{i,j} \ket{\tilde{w_{i,j}}} := (P_j \otimes I_n)\ket{w_i}$ for $j = 1,2$ so that $a_{i,j},b_{i,j} \geq 0$. Since $P_1$ and $P_2$ are mutually orthogonal projections, we have $a_{i,1}^2 + a_{i,2}^2 \leq 1$ and $b_{i,1}^2 + b_{i,2}^2 \leq 1$. As an aside for now, note that the Cauchy--Schwarz inequality tells us that
	\begin{align}\label{eq:sk_dual_cauchy}
		a_{i,1}b_{i,1} + a_{i,2}b_{i,2} \leq 1 \quad \forall \, i.
	\end{align}
	
	Continuing from Equation~\eqref{eq:dual_oplus} shows
	\begin{align*}
		Y_1 \oplus Y_2 & = \sum_i c_i a_{i,1} b_{i,1} \ketbra{\tilde{v_{i,1}}}{\tilde{w_{i,1}}} + \sum_i c_i a_{i,2} b_{i,2} \ketbra{\tilde{v_{i,2}}}{\tilde{w_{i,2}}},
	\end{align*}
	where the sums on the right are decompositions of $Y_1$ and $Y_2$, respectively. Thus
	\begin{align*}
		\big\|Y_1\big\|_{k,\textup{inf}} + \big\|Y_2\big\|_{k,\textup{inf}} & \leq \sum_i | c_i a_{i,1} b_{i,1} | + \sum_i | c_i a_{i,2} b_{i,2} | \\
		& = \sum_i | c_i | \big(a_{i,1}b_{i,1} + a_{i,2}b_{i,2}\big) \\
		& \leq \sum_i |c_i|,
	\end{align*}
	where the final line uses Inequality~\eqref{eq:sk_dual_cauchy}.
\end{proof}

Much like the $S(k)$-norm can be thought of as a ``$k$-local'' version of the operator norm, it is now clear how the dual of the $S(k)$-norm is similarly analogous to the trace norm. Indeed, in the $k = \min\{m,n\}$ case, Theorem~\ref{thm:dual_norm_char} tells us that
\begin{align*}
	\big\|Y\big\|_{S(\min\{m,n\})}^\circ = \inf\Big\{ \sum_i |c_i| : Y = \sum_i c_i\ketbra{v_i}{w_i} \Big\}, 
\end{align*}
where the infimum is taken over all decompositions of $Y$ into the sum of rank-$1$ operators. This is a well-known characterization of the trace norm, and the infimum is attained when the decomposition of $Y$ is chosen to be the singular value decomposition.

On the other extreme, this norm is also well-known in the $k = 1$ case. Indeed,
\begin{align*}
	\big\| Y \big\|_{S(1)}^{\circ} & = \inf\Big\{ \sum_i |c_i| : Y = \sum_i c_i\ketbra{v_i}{w_i} \otimes \ketbra{x_i}{y_i} \Big\} \\
	& = \inf\Big\{ \sum_i \big\|A_i\big\|_{tr} \big\|B_i\big\|_{tr} : Y = \sum_i A_i \otimes B_i \Big\},
\end{align*}
and it was shown in \cite{R00} (see also \cite{Rud01}) that this norm completely characterizes separability in the sense that if $\rho$ is a density operator then it is separable if and only if $\| \rho \|_{S(1)}^{\circ} = 1$. We now establish the natural generalization of this fact.
\begin{thm}\label{thm:dual_sep_equiv}
	Let $\rho \in M_m \otimes M_n$ be a density matrix. Then $SN(\rho) \leq k$ if and only if $\|\rho\|_{S(k)}^\circ = 1$.
\end{thm}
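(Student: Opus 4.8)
The plan is to derive both implications from the characterization of $\|\cdot\|_{S(k)}^\circ$ in Theorem~\ref{thm:dual_norm_char}, together with the observation that $\|\rho\|_{S(k)}^\circ \geq 1$ holds for \emph{every} density matrix. For the ``only if'' direction, I would suppose $SN(\rho) \leq k$ and write $\rho = \sum_i p_i \ketbra{u_i}{u_i}$ with $SR(\ket{u_i}) \leq k$ and $\{p_i\}$ a probability distribution. This is a decomposition of exactly the form appearing in Theorem~\ref{thm:dual_norm_char} (take $\ket{v_i} = \ket{w_i} = \ket{u_i}$ and $c_i = p_i$), so $\|\rho\|_{S(k)}^\circ \leq \sum_i p_i = 1$. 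For the reverse bound, recall that $\|\rho\|_{S(k)}^\circ = \sup_Y\{|\Tr(\rho^\dagger Y)| : \|Y\|_{S(k)} \leq 1\}$; since $\|I\|_{S(k)} = 1$ (any separable unit vector attains the supremum defining this norm), the choice $Y = I$ is feasible and gives $\|\rho\|_{S(k)}^\circ \geq |\Tr(\rho)| = 1$. Combining the two bounds yields $\|\rho\|_{S(k)}^\circ = 1$.

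For the ``if'' direction, assume $\|\rho\|_{S(k)}^\circ = 1$. The crucial point is that the infimum in Theorem~\ref{thm:dual_norm_char} is attained, so I may fix a decomposition $\rho = \sum_i c_i \ketbra{v_i}{w_i}$ with $SR(\ket{v_i}),SR(\ket{w_i}) \leq k$, each $\ket{v_i},\ket{w_i}$ a unit vector, and $\sum_i |c_i| = 1$. Taking the trace gives $1 = \Tr(\rho) = \sum_i c_i \braket{w_i}{v_i}$, whence $1 = \big|\sum_i c_i \braket{w_i}{v_i}\big| \leq \sum_i |c_i|\,|\braket{w_i}{v_i}| \leq \sum_i |c_i| = 1$. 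Equality must therefore hold throughout, which forces $|\braket{w_i}{v_i}| = 1$ for each $i$ with $c_i \neq 0$ (so $\ket{w_i} = e^{i\theta_i}\ket{v_i}$ for some phase $\theta_i$) and forces each nonzero term $c_i \braket{w_i}{v_i} = c_i e^{-i\theta_i}$ to be a nonnegative real number. Substituting $\ket{w_i} = e^{i\theta_i}\ket{v_i}$ gives $c_i\ketbra{v_i}{w_i} = c_i e^{-i\theta_i}\ketbra{v_i}{v_i} = |c_i|\,\ketbra{v_i}{v_i}$, so that $\rho = \sum_i |c_i|\,\ketbra{v_i}{v_i}$ is a convex combination of projections onto states of Schmidt rank at most $k$; this is precisely the assertion $SN(\rho) \leq k$.

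The main obstacle is justifying that the infimum in Theorem~\ref{thm:dual_norm_char} is attained, which is the one nonformal step above. I would handle this by compactness: the set $K := \{c\ketbra{v}{w} : |c| \leq 1,\ \|\ket{v}\| = \|\ket{w}\| = 1,\ SR(\ket{v}),SR(\ket{w}) \leq k\}$ is compact, so in the finite-dimensional space $M_m \otimes M_n$ its convex hull $\mathrm{conv}(K)$ is compact and hence closed, and $\|\cdot\|_{S(k)}^\circ$ is exactly the Minkowski gauge of $\mathrm{conv}(K)$. Since $\|\rho\|_{S(k)}^\circ = 1$ and $\mathrm{conv}(K)$ is closed, $\rho \in \mathrm{conv}(K)$, and Carath\'{e}odory's theorem produces a finite decomposition $\rho = \sum_i t_i(c_i\ketbra{v_i}{w_i})$ with $t_i \geq 0$, $\sum_i t_i = 1$, and $|c_i| \leq 1$; absorbing $t_i$ into $c_i$ gives a decomposition with $\sum_i |c_i| \leq 1$, which combined with $\|\rho\|_{S(k)}^\circ = 1$ forces $\sum_i |c_i| = 1$, i.e.\ an attaining decomposition. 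With attainment secured, the trace computation completes the argument. I expect this proof to mirror exactly the equality analysis used for the $s(k)$-vector norm in Theorem~\ref{thm:sk_dual_groth}, and to recover the known $k = 1$ characterization of separability from \cite{R00} as a special case.
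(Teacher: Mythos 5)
Your proof is correct, and the ``only if'' direction (decompose $\rho$ into Schmidt-rank-$\leq k$ projections to get $\|\rho\|_{S(k)}^\circ \leq 1$, and pair against a feasible $Y$ to get the reverse bound) is essentially what the paper does, the only cosmetic difference being that the paper invokes $\|\rho\|_{S(k)}^\circ \geq \|\rho\|_{tr} = 1$ rather than testing against $Y = I$. Your ``if'' direction, however, takes a genuinely different route. The paper never touches the decomposition formula of Theorem~\ref{thm:dual_norm_char} here: it instead uses the cone duality of Proposition~\ref{prop:kPos}(b), noting via Corollary~\ref{cor:kPosInf1} that every $k$-block positive $Y$ can be normalized as $Y = I - X$ with $\|X\|_{S(k)} \leq 1$, so that $\langle \rho | Y\rangle = 1 - \langle\rho|X\rangle \geq 1 - \|\rho\|_{S(k)}^\circ \geq 0$, and $SN(\rho)\leq k$ follows from the separating-hyperplane characterization. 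You instead establish attainment of the infimum in Theorem~\ref{thm:dual_norm_char} by a compactness/Minkowski-gauge argument and then run an equality analysis on $\Tr(\rho) = 1$ to force $\ket{w_i} \parallel \ket{v_i}$ and nonnegative coefficients; all the steps check out (the absolutely convex hull of the compact set of atoms is closed in finite dimensions, so the unit ball of the infimum norm is exactly that hull, and equality in the triangle and Cauchy--Schwarz inequalities gives what you claim). The trade-off is that the paper's argument is shorter because the block-positivity duality machinery is already in place, whereas yours is self-contained given Theorem~\ref{thm:dual_norm_char} and has the added benefit of being constructive: it exhibits an explicit convex decomposition of $\rho$ into Schmidt-rank-$\leq k$ pure states, which is precisely the strategy of the original $k=1$ cross-norm characterization in \cite{R00} that this theorem generalizes.
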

\begin{proof}
	Note that $\|\rho\|_{S(k)}^\circ \geq \|\rho\|_{tr} = 1$ for all $\rho$, so we only consider the opposite inequality. If $SN(\rho) \leq k$ then we can write $\rho = \sum_i p_i \ketbra{v_i}{v_i}$ with $SR(\ket{v_i}) \leq k$ for all $i$. Then $\|\rho\|_{S(k)}^\circ \leq \sum_i p_i = 1$, as desired. To see the converse, assume that $\|\rho\|_{S(k)}^\circ \leq 1$ and let $Y = Y^\dagger \in M_m \otimes M_n$ be $k$-block positive. Corollary~\ref{cor:kPosInf1} shows that if we write $Y = cI - X$ with $X \geq 0$ then $c \geq \big\|X\big\|_{S(k)}$. Without loss of generality, we scale $Y$ so that $c = 1$ (i.e., $Y = I - X$ with $\big\|X\big\|_{S(k)} \leq 1$). Then
	\begin{align*}
		\big\langle \rho | Y \big\rangle = \big\langle \rho | I - X \big\rangle = 1 - \big\langle \rho | X \big\rangle \geq 1 - \|\rho\|_{S(k)}^\circ \geq 0.
	\end{align*}
	Since $Y$ is (up to scaling) an arbitrary $k$-block positive operator, it follows that $SN(\rho) \leq k$, which completes the proof.
\end{proof}

\section{Operator Systems on Complex Matrices}\label{sec:op_system}

An (abstract) operator system on $M_n$ is a family of convex cones $\{C_m\}_{m=1}^{\infty} \subseteq M_m \otimes M_n$ that satisfy the following two properties:
\begin{itemize}
	\item for each $m_1,m_2 \in \bb{N}$ and $A \in M_{m_1,m_2}$ we have $({\rm Ad}_{A} \otimes id_n)(C_{m_1}) \subseteq C_{m_2}$; and
	\item $C_1 = M_n^{+}$, the cone of positive semidefinite elements of $M_n$.
\end{itemize}
Property~(1) above ensures that the cones $C_m$ ``behave well'' with each other. For example, it ensures that if we embed $X \in M_m \otimes M_n$ as $\tilde{X} \in M_{m+1} \otimes M_n$ by adding rows and columns of zeroes then $X \in C_m$ if and only if $\tilde{X} \in C_{m+1}$. It is also worth remarking at this point upon the similarity between the definition of an abstract operator space and that of an abstract operator system. In a sense, abstract operator systems do for cones what abstract operator spaces do for norms.

Abstract operator systems can be defined more generally on any Archimedean $*$-ordered vector space $V$, but the above definition with $V = M_n$ is much simpler and suited to our particular needs. The interested reader is directed to \cite[Chapter 13]{P03} for a more thorough treatment of general abstract operator systems. The fact that matrix ordered $*$-vector spaces can be thought of as operator systems follows from the work of Choi and Effros \cite{CE77}.

Abstract operator systems are typically defined with two additional requirements that we have not mentioned:
\begin{itemize}
	\item[(3)] $C_m \cap -C_m = \{0\}$ for each $m \in \bb{N}$; and
	\item[(4)] for every $m \in \bb{N}$ and $X = X^\dagger \in M_m \otimes M_n$, there exists $r > 0$ such that $rI + X \in C_m$.
\end{itemize}
Both of these conditions follow for free from the fact that, in our setting, $C_1 = M_n^{+}$.
	
	To see that property~(3) holds, notice that $C_1 \cap -C_1 = \{0\}$, and suppose that $X \in C_m \cap -C_m$ for some $m \geq 2$. Then $({\rm Ad}_{A} \otimes id_n)(X) \in C_1 \cap -C_1$ for any $A \in M_{m,1}$. Because $C_1 \cap -C_1 = \{0\}$, it follows that $\bra{vw}X\ket{vw} = 0$ for all $\ket{v}, \ket{w}$. It follows from Lemma~\ref{lem:sep_prod_zero} that $X = 0$, so $C_m \cap -C_m = \{0\}$ for all $m \in \bb{N}$.
	
	Property~(4) holds because the smallest family of cones on $M_n$ such that $({\rm Ad}_{A} \otimes id_n)(C_{m_1}) \subseteq C_{m_2}$ for all $m_1,m_2 \in \bb{N}$ are the cones of separable operators in $M_m \otimes M_n$ (this fact is easily-verified and is part of the statement of the upcoming Theorem~\ref{thm:opSysConeChar}). It is well-known that there always exists $r > 0$ such that $rI + X$ is separable \cite{GB02}, so the same $r$ ensures that $rI + X \in C_m$.

One particularly important operator system on $M_n$ is the one constructed by associating $M_m \otimes M_n$ with $M_{mn}$ in the natural way and letting $C_m \subseteq M_m \otimes M_n$ be the cones of positive semidefinite operators. We denote this operator system simply by $M_n$, and it will be clear from context whether we mean the operator system $M_n$, the operator space $M_n$, or simply the set $M_n$ without regard to any family of cones or norms. Other operator systems on $M_n$ are denoted like $V(M_n)$ (or simply $V$) in order to avoid confusion with the operator system $M_n$ itself.

If $V_1(M_n)$ and $V_2(M_n)$ are two operator systems defined by the cones $\{C_m\}_{m=1}^{\infty}$ and $\{D_m\}_{m=1}^{\infty}$ respectively, then a map $\Phi : M_n \rightarrow M_n$ is said to be \emph{completely positive from $V_1(M_n)$ to $V_2(M_n)$} if $(id_m \otimes \Phi)(C_m) \subseteq D_m$ for all $m \in \bb{N}$. The set of maps that are completely positive from $V_1(M_n)$ to $V_2(M_n)$ is denoted by $\cl{CP}(V_1(M_n),V_2(M_n))$, or simply $\cl{CP}(V(M_n))$ if the target operator system equals the source operator system. Note that in the case when $V_1(M_n)$ and $V_2(M_n)$ are both the standard operator system defined by the cones of positive semidefinite operators, this notation of complete positivity reduces to the standard notion of complete positivity introduced in Section~\ref{sec:cp_maps}.

\subsection{Minimal and Maximal Operator Systems}\label{sec:kMinOpSys}

A result of \cite{PTT11} shows that, much like there is a minimal and maximal abstract operator space on any normed vector space $V$, there is a minimal and maximal abstract operator system on any space $V$ satisfying certain (slightly technical) conditions. Importantly for us, there exist minimal and maximal operator systems on $M_n$, which we denote $OMIN(M_n)$ and $OMAX(M_n)$, respectively. That is, there exist particular families of cones $\{C_m^{\textup{min}}\}_{m=1}^\infty$ and $\{C_m^{\textup{max}}\}_{m=1}^\infty$ such that if $\{D_m\}_{m=1}^\infty$ are cones defining any other operator system on $M_n$ then $C_m^{\textup{max}} \subseteq D_m \subseteq C_m^{\textup{min}}$ for all $m \geq 1$. Notice that the inclusions are perhaps the opposite of what one might expect based on the names ``minimal'' and ``maximal'' -- the minimal operator system has the \emph{largest} family of cones and the maximal operator system has the \emph{smallest} family of cones. The names actually refer to the norms that they induce (see Section~\ref{sec:OpSysNorms}). The norm on the maximal operator system is the largest of any of the operator system norms, and the norm on the minimal operator system is the smallest of any operator system norm.

In \cite{Xthesis,Xha11} a generalization of these operator system structures, analogous to the $k$-minimal and $k$-maximal operator spaces presented in Section~\ref{sec:min_max_op_space}, was introduced. Given an operator system $V(M_n)$ (or even just cones $\big\{C_m\big\} \subseteq M_m \otimes M_n$ that satisfy the defining properties of an operator system for $1 \leq m \leq k$), the \emph{$k$-super minimal operator system} on $V$ and the \emph{$k$-super maximal operator system} of $V$, denoted $OMIN^k(V)$ and $OMAX^k(V)$ respectively, are defined via the following families of cones:
\begin{align*}
	C_m^{\textup{min},k}(V) & := \big\{ X : (id_m \otimes \Phi)(X) \geq 0 \ \ \forall \, \Phi \text{ with } (id_k \otimes \Phi)(C_k) \subseteq (M_k \otimes M_n)^+ \big\},	\\
	C_m^{\textup{max},k}(V) & := \big\{ \sum_i ({\rm Ad}_{A_i} \otimes id_n)(X_i) \in M_m \otimes M_n : A_i \in M_{m,k}, X_i \in C_k \ \forall \, i \big\}.
\end{align*}
We occasionally use the fact that the maps $\Phi$ in the definition of $C_m^{\textup{min},k}(V)$ can be chosen to be unital without loss of generality.

The interpretation of the $k$-super minimal and $k$-super maximal operator systems is completely analogous to the interpretation of $k$-minimal and $k$-maximal operator spaces. The positive cones $C_m^{\textup{min},k}(V)$ and $C_m^{\textup{max},k}(V)$ coincide with $C_m$ for $1 \leq m \leq k$, and out of all operator system structures with this property they are the largest (smallest, respectively) for $m > k$. For the remainder of this section, we restrict to the $V = M_n$ case, and in this case we denote these cones simply by $C_m^{\textup{min},k}$ and $C_m^{\textup{max},k}$.

Much like Theorem~\ref{thm:MINkChar} shows that the $k$-minimal operator spaces are very familiar to us, we now show that the $k$-minimal and $k$-maximal operator systems are familiar as well. In particular, we show that the cones $C_m^{\textup{min},k} \subseteq M_m \otimes M_n$ are exactly the cones of $k$-block positive operators, and the cones $C_m^{\textup{max},k} \subseteq M_m \otimes M_n$ are exactly the cones of (unnormalized) density operators $\rho$ with $SN(\rho) \leq k$. These facts have appeared implicitly in the past, but their importance merits making the details explicit:
\begin{thm}\label{thm:opSysConeChar}
	Let $X,\rho \in M_m \otimes M_n$. Then
	\begin{enumerate}[(a)]
		\item $X \in C_m^{\textup{min},k}$ if and only if $X$ is $k$-block positive; and
		\item $\rho \in C_m^{\textup{max},k}$ if and only if $SN(\rho) \leq k$.
	\end{enumerate}
\end{thm}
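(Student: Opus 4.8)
The plan is to handle the two parts separately, proving (b) first because its defining objects are the most concrete, and then reusing the same Schmidt-rank bookkeeping together with Proposition~\ref{prop:kPos} to dispatch (a). Throughout I would keep in mind the standing assumption $k \le n$, which is what makes complete positivity of a map into $M_k$ coincide with the $k$-positivity condition built into the definition of $C_m^{\textup{min},k}$.

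For (b) I would simply unwind the definition of $C_m^{\textup{max},k}$: an element is a finite sum $\rho = \sum_i (A_i \otimes I_n) X_i (A_i^\dagger \otimes I_n)$ with $A_i \in M_{m,k}$ and $0 \le X_i \in M_k \otimes M_n$. For the forward inclusion, decompose each $X_i$ into rank-one positive pieces $\ketbra{y_{ij}}{y_{ij}}$ with $\ket{y_{ij}} \in \mathbb{C}^k \otimes \mathbb{C}^n$; such a vector has Schmidt rank at most $k$, and since $A_i \otimes I_n$ acts only on the first tensor factor the vector $(A_i \otimes I_n)\ket{y_{ij}}$ still has Schmidt rank at most $k$ (a local map never increases Schmidt rank). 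Thus $\rho$ is a sum of projections onto Schmidt-rank-$\le k$ vectors, so $SN(\rho) \le k$. For the converse, write $\rho = \sum_i \ketbra{v_i}{v_i}$ with $SR(\ket{v_i}) \le k$, take Schmidt decompositions $\ket{v_i} = \sum_{j=1}^k \alpha_{ij}\ket{a_{ij}} \otimes \ket{b_{ij}}$, and define $A_i \in M_{m,k}$ by $A_i\ket{j} = \alpha_{ij}\ket{a_{ij}}$ together with $\ket{y_i} = \sum_{j=1}^k \ket{j}\otimes\ket{b_{ij}}$; then $(A_i\otimes I_n)\ket{y_i} = \ket{v_i}$, so $\rho = \sum_i (A_i \otimes I_n)\ketbra{y_i}{y_i}(A_i^\dagger \otimes I_n) \in C_m^{\textup{max},k}$.

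For (a) I would use that $X \in C_m^{\textup{min},k}$ precisely when $(id_m \otimes \Phi)(X) \ge 0$ for every completely positive test map $\Phi$ into $M_k$, and convert positivity into a trace pairing through the adjoint: $\bra{u}(id_m \otimes \Phi)(X)\ket{u} = \Tr\big(X\,(id_m \otimes \Phi^\dagger)(\ketbra{u}{u})\big)$. The crucial observation is that $\sigma := (id_m \otimes \Phi^\dagger)(\ketbra{u}{u})$ always satisfies $SN(\sigma) \le k$: the Kraus operators of $\Phi^\dagger$ lie in $M_{n,k}$, and each Kraus term sends $\ket{u} \in \mathbb{C}^m \otimes \mathbb{C}^k$ to a vector of Schmidt rank at most $k$, once again because the map is local and the input already lives on a factor of dimension $k$. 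Hence if $X$ is $k$-block positive, Proposition~\ref{prop:kPos}(a) gives $\Tr(X\sigma) \ge 0$, so $(id_m \otimes \Phi)(X) \ge 0$ and $X \in C_m^{\textup{min},k}$. For the reverse inclusion, given $\ket{v}$ with $SR(\ket{v}) \le k$ and Schmidt decomposition $\ket{v} = \sum_{j=1}^k \alpha_j \ket{a_j}\otimes\ket{b_j}$, I would manufacture a single test map $\Phi = {\rm Ad}_A$ with $A \in M_{k,n}$ defined by $A\ket{b_j} = \ket{j}$ (and $A$ zero on the orthogonal complement), which is completely positive and unital into $M_k$; taking $\ket{u} := \sum_{j=1}^k \alpha_j \ket{a_j}\otimes\ket{j}$ one checks $(id_m \otimes \Phi^\dagger)(\ketbra{u}{u}) = \ketbra{v}{v}$, so membership in $C_m^{\textup{min},k}$ forces $\bra{v}X\ket{v} = \bra{u}(id_m\otimes\Phi)(X)\ket{u} \ge 0$. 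As $\ket{v}$ ranges over all Schmidt-rank-$\le k$ vectors, this says exactly that $X$ is $k$-block positive; note that (a) and (b) together just re-express the dual-cone content of Proposition~\ref{prop:kPos}, with $C_m^{\textup{min},k}$ the dual cone of $C_m^{\textup{max},k}$.

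The main obstacle I anticipate is not any single computation but keeping the bookkeeping straight: correctly matching the class of test maps built into $C_m^{\textup{min},k}$ with the adjoint pairing above, and verifying the Schmidt-rank non-increase under the relevant local maps (both for the pieces $X_i$ in (b) and for $\Phi^\dagger$ in (a)), which is exactly the mechanism that forces the output Schmidt number down to $k$ and thereby lets Proposition~\ref{prop:kPos} close the argument.
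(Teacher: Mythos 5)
Your proposal is correct and follows essentially the same route as the paper: part (b) is the identical decomposition-and-reconstruction argument (the paper merely says the converse "can easily be reversed" where you write out the explicit $A_i$), and part (a) rests on the same key observation that Kraus operators in $M_{k,n}$ shuttle vectors of $\bb{C}^m \otimes \bb{C}^k$ to and from Schmidt-rank-$\leq k$ vectors, with the single-Kraus-operator construction for the converse. The only cosmetic difference is that you route part (a) through the adjoint map and the trace pairing of Proposition~\ref{prop:kPos}, whereas the paper evaluates the quadratic form $\bra{v}(id_m\otimes\Phi)(X)\ket{v}$ directly; the underlying computation is the same.
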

\begin{proof}
	To see (a), we will use techniques similar to those used in the proof of Theorem~\ref{thm:MINkChar}. Use the Kraus representation of completely positive maps so that $X \in C_m^{\textup{min},k}$ if and only if
	\begin{align*}
		\sum_{i=1}^{nk} (I_m \otimes A_i)X(I_m \otimes A_i^\dagger) \geq 0 \text{ for all } \big\{A_i\big\} \subset M_{k,n} \text{ with } \sum_{i=1}^{nk}A_i A_i^\dagger = I_k.
	\end{align*}
	Now define $\alpha_{ij}\ket{a_{ij}} := A_i^\dagger\ket{j}$ and let $\ket{v} = \sum_{j=1}^{k} \gamma_j \ket{c_j} \otimes \ket{j} \in \bb{C}^m \otimes \bb{C}^k$ be an arbitrary unit vector. Then some algebra reveals
	\begin{align*}
		\nu_i\ket{v_i} & := (I_m \otimes A_i^\dagger)\ket{v} = \sum_{j=1}^{k} \alpha_{ij} \gamma_j \ket{c_j} \otimes \ket{a_{ij}}.
	\end{align*}
	In particular, $SR(\ket{v_i}) \leq k$ for all $i$. Thus we can write
	\begin{align}\label{eq:cbschmidtSys}
		\sum_{i=1}^{nk}\bra{v}(I_m \otimes A_i)(X)(I_m \otimes A_i^\dagger)\ket{v} = \sum_{i=1}^{nk} \nu_i^2 \bra{v_i} X \ket{v_i} \geq 0.
	\end{align}
	Part (a) follows by noting that we can choose $\ket{v}$ and a completely positive map with one Kraus operator $A_1$ so that $(I_m \otimes A_1^\dagger)\ket{v}$ is any particular vector of our choosing with Schmidt rank no larger than $k$.
	
	To see the ``only if'' implication of (b), we could invoke various known duality results from operator theory and quantum information theory so that the result would follow from (a), but for completeness we will instead prove it using elementary means. To this end, suppose $\rho \in C_m^{\textup{max},k}$. Thus we can write $\rho = \sum_\ell ({\rm Ad}_{A_\ell} \otimes id_n)(X_\ell)$ for some $A_\ell \in M_{m,k}$ and $X_\ell \in (M_k \otimes M_n)^+$ for all $\ell$. Furthermore, write $X_\ell = \sum_h d_{\ell,h}\ketbra{v_{\ell,h}}{v_{\ell,h}}$ where $\ket{v_{\ell,h}} = \sum_{i=1}^k \ket{i} \otimes \ket{d_{\ell,h,i}}$. Then if we define $\alpha_{\ell,i}\ket{a_\ell,i} := A_{\ell}\ket{i}$, we have
	\begin{align*}
		\sum_\ell ({\rm Ad}_{A_\ell} \otimes id_n)(X_\ell) & = \sum_\ell \sum_{h=1}^{kn} d_{\ell,h} \sum_{ij=1}^k A_\ell\ketbra{i}{j} A_\ell^\dagger \otimes \ketbra{d_{\ell,h,i}}{d_{\ell,h,j}} \\
		& = \sum_{\ell}\sum_{h=1}^{kn} d_{\ell,h} \sum_{ij=1}^k \alpha_{\ell,i}\alpha_{\ell,j}\ketbra{a_{\ell,i}}{a_{\ell,j}} \otimes \ketbra{d_{\ell,h,i}}{d_{\ell,h,j}} \\
		& = \sum_{\ell}\sum_{h=1}^{kn} d_{\ell,h} \ketbra{w_{\ell,h}}{w_{\ell,h}},
	\end{align*}
	where
	\begin{align*}
		\ket{w_{\ell,h}} := \sum_{i=1}^k \alpha_{\ell,i}\ket{a_{\ell,i}} \otimes \ket{d_{\ell,h,i}}.
	\end{align*}
	Since $SR(\ket{w_{\ell,h}}) \leq k$ for all $\ell,h$, it follows that $SN(\rho) \leq k$ as well.
	
	For the ``if'' implication, we note that the above argument can easily be reversed.
\end{proof}

One of the useful consequences of Theorem~\ref{thm:opSysConeChar} is that we can now easily characterize completely positive maps between these various operator system structures. The following result characterizes the set of $k$-positive maps $\cl{P}_k$ and the set of $k$-superpositive maps $\cl{S}_k$ as completely positive maps between these $k$-super minimal and $k$-super maximal operator systems.
\begin{cor}\label{cor:kEntangleBreak}
	Let $\Phi : M_n \rightarrow M_n$ and let $k \leq n$. Then
	\begin{enumerate}[(a)]
		\item $\cl{CP}(OMIN^k(M_n), M_n) = \cl{S}_k$;
		\item $\cl{CP}(M_n, OMAX^k(M_n)) = \cl{S}_k$;
		\item $\cl{CP}(OMAX^k(M_n), M_n) = \cl{P}_k$;
		\item $\cl{CP}(M_n, OMIN^k(M_n)) = \cl{P}_k$;
		\item $\cl{CP}(OMIN^k(M_n), OMAX^k(M_n)) = \cl{S}_k$;
		\item $\cl{CP}(OMAX^k(M_n), OMIN^k(M_n)) = \cl{P}_k$;
		\item $\cl{CP}(OMIN^k(M_n)) = \cl{P}_k$; and
		\item $\cl{CP}(OMAX^k(M_n)) = \cl{P}_k$.
	\end{enumerate}
\end{cor}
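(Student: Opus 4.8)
The plan is to prove Corollary~\ref{cor:kEntangleBreak} as a direct consequence of the cone characterizations established in Theorem~\ref{thm:opSysConeChar}, together with the definition of completely positive maps between operator systems and the duality results from Theorem~\ref{thm:sch_kpos_maps} and the various Choi--Jamio\l{}kowski correspondences. The unifying idea is that, by Theorem~\ref{thm:opSysConeChar}, the cone $C_m^{\textup{min},k}$ consists exactly of the $k$-block positive operators and $C_m^{\textup{max},k}$ consists exactly of the operators with Schmidt number at most $k$, while the cones of the standard operator system $M_n$ are the positive semidefinite operators. Thus each of the eight statements is simply an unwinding of the condition $(id_m \otimes \Phi)(C_m^{\textup{source}}) \subseteq C_m^{\textup{target}}$ for all $m$, and I expect each to reduce to a known characterization of $\cl{S}_k$ or $\cl{P}_k$.

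First I would treat statement (a). By definition, $\Phi \in \cl{CP}(OMIN^k(M_n),M_n)$ means $(id_m \otimes \Phi)(X) \geq 0$ whenever $X \in C_m^{\textup{min},k}$, i.e.\ whenever $X$ is $k$-block positive. But the $k$-block positive operators in $M_m \otimes M_n$ include all operators $X$ with $SN(X) \leq k$ as a subset, and more importantly their dual cone is exactly the Schmidt-number-$\leq k$ cone (Table~\ref{table:dual_cones}). The cleanest route is to observe that $(id_m \otimes \Phi)$ maps every $k$-block positive operator to a positive semidefinite one for all $m$ precisely when $\Phi$ is $k$-superpositive; this follows from Corollary~\ref{cor:right_cp_main_cor2}, which states that $\Omega \in \cl{S}_k$ if and only if $\Phi \circ \Omega$ (equivalently $\Omega \circ \Phi$) is completely positive for all $\Phi \in \cl{P}_k$, combined with the fact that $k$-block positivity of $X$ corresponds via the Choi--Jamio\l{}kowski isomorphism to $k$-positivity of the associated map. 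I would then obtain statements (b), (c), (d) similarly: (d) is the statement that $(id_m \otimes \Phi)(\rho)$ is $k$-block positive for all $\rho \geq 0$ and all $m$, which by Proposition~\ref{tlem:genCharacterize} (right CP-invariance) reduces to checking $C_\Phi$ is $k$-block positive, i.e.\ $\Phi \in \cl{P}_k$; and (b), (c) follow by analogous unwinding using that $SN((id_m\otimes\Phi)(\rho)) \leq k$ for all $\rho \geq 0$ characterizes $\cl{S}_k$ (Section~\ref{sec:choi_jamiolkowski_schmidt}), while $k$-positivity characterizes $\cl{P}_k$.

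Next I would handle the composite statements (e) and (f), which require mapping one non-standard cone into another. For (e), I would show $\cl{CP}(OMIN^k(M_n),OMAX^k(M_n)) = \cl{S}_k$ by noting that this cone is contained in $\cl{CP}(OMIN^k(M_n),M_n) = \cl{S}_k$ from (a), since $C_m^{\textup{max},k} \subseteq (M_m\otimes M_n)^+$; conversely any $\Phi \in \cl{S}_k$ has the property that $(id_m \otimes \Phi)(X)$ has Schmidt number $\leq k$ whenever $X \geq 0$ (and in particular whenever $X$ is $k$-block positive, since such $X$ still yields output in the Schmidt-$\leq k$ cone by the $k$-superpositivity of $\Phi$ applied through the Kraus-rank argument of Theorem~\ref{thm:opSysConeChar}). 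Statement (f) is dual: it sits inside $\cl{CP}(OMAX^k(M_n),M_n) = \cl{P}_k$ from (c), and the reverse containment uses that a $k$-positive map sends Schmidt-$\leq k$ inputs to positive outputs, hence in fact to $k$-block positive outputs. Finally (g) and (h) assert $\cl{CP}(OMIN^k(M_n)) = \cl{CP}(OMAX^k(M_n)) = \cl{P}_k$; (g) follows by combining (d) and (a)-type reasoning, and (h) by combining (c) and (f), using that the $k$-super max and $k$-super min cones agree with the positive cone up to level $k$.

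\textbf{The main obstacle} I anticipate is verifying the reverse containments in the composite cases (e) and (f) cleanly, i.e.\ checking that a $k$-superpositive (respectively $k$-positive) map really does land in the smaller target cone rather than merely in the positive semidefinite cone, and doing so uniformly in $m$. The safest way to discharge this is to lean on the operator-Schmidt/Kraus-rank structure already extracted in the proof of Theorem~\ref{thm:opSysConeChar}: a map is $k$-superpositive iff it has Kraus operators of rank $\leq k$, and such Kraus operators, when applied to an input of Schmidt number (or block-positivity level) $\leq k$, visibly preserve the relevant cone membership by the same $SR(\ket{w_{\ell,h}}) \leq k$ computation. I would present each of the eight items as a short paragraph, stating the source and target cones via Theorem~\ref{thm:opSysConeChar} and then citing the appropriate one of Theorem~\ref{thm:sch_kpos_maps}, Corollary~\ref{cor:right_cp_main_cor2}, Corollary~\ref{cor:right_cp_main_cor1}, or the Choi--Jamio\l{}kowski correspondences of Section~\ref{sec:choi_jamiolkowski_schmidt}, keeping the argument essentially bookkeeping once the cone identifications are in hand.
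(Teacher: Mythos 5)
Your treatment of (a)--(d) is essentially the paper's: both reduce to Proposition~\ref{prop:right_cp_main} (equivalently Corollaries~\ref{cor:right_cp_main_cor2} and~\ref{cor:right_cp_main_cor1}), the duality $\cl{P}_k^\circ = \cl{S}_k$, and the Choi--Jamio{\l}kowski correspondences, so there is nothing to add there. For (e)--(h) you take a genuinely different route. The paper disposes of all four at once by citing the semigroup machinery: Proposition~\ref{prop:semigroup} applied to $\cl{C} = \cl{P}_k$ says exactly that $\Phi \in \cl{S}_k$ iff $\Phi \circ \Omega \in \cl{S}_k$ for all $\Omega \in \cl{P}_k$ iff $\Omega \circ \Phi \in \cl{S}_k$ for all $\Omega \in \cl{P}_k$, which via $C_{\Phi\circ\Omega} = (id_n\otimes\Phi)(C_\Omega)$ is precisely (e) together with the nontrivial half of (h); Proposition~\ref{prop:semigroup2} likewise yields (f) and (g). You instead propose direct cone-level computations built on the identifications of Theorem~\ref{thm:opSysConeChar}. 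That is workable and more self-contained, but it shifts the burden of the composition-closure facts ($\cl{P}_k \circ \cl{S}_k \subseteq \cl{S}_k$, $\cl{P}_k \circ \cl{P}_k \subseteq \cl{P}_k$, and so on) onto explicit Kraus/Schmidt-rank arguments rather than onto the already-proved abstract propositions.

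Two spots need tightening. First, in (e) the phrase ``in particular whenever $X$ is $k$-block positive'' has the containment backwards: the $k$-block positive cone contains the positive semidefinite cone, so sending $k$-block positive inputs into the Schmidt-number-$\leq k$ cone is a \emph{stronger} claim than the standard characterization of $\cl{S}_k$, and it genuinely requires the extra argument you only gesture at (a rank-$\leq k$ Kraus operator $A_\ell$ turns a $k$-block positive $X$ into $(I\otimes A_\ell)X(I\otimes A_\ell^\dagger)$, which is positive semidefinite with second-factor support of dimension at most $k$, hence has Schmidt number at most $k$). Second, the reverse containments in (g) and (h) are not obtained by ``combining (d) and (a)'' or ``(c) and (f)'': those items only supply the forward containments $\cl{CP}(OMIN^k(M_n))\subseteq\cl{P}_k$ and $\cl{CP}(OMAX^k(M_n))\subseteq\cl{P}_k$. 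What is actually needed is that a $k$-positive map preserves the Schmidt-number-$\leq k$ cone and, dually, the $k$-block positive cone. This is true --- for $SR(\ket{v})\leq k$ the output $(id_m\otimes\Phi)(\ketbra{v}{v})$ is positive semidefinite and supported on a $k$-dimensional subspace of the first factor --- but it is a separate fact that must be stated and proved; it is exactly what Proposition~\ref{prop:semigroup} packages, so citing it, as the paper does, is the cleanest repair.
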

\begin{proof}
	Facts (a), (b), (c), and (d) all follow immediately from Proposition~\ref{prop:right_cp_main}, the duality between the cones of $k$-positive and $k$-superpositive maps, and the Choi--Jamio{\l}kowski correspondences described in Section~\ref{sec:choi_jamiolkowski_schmidt}. Facts (e), (f), (g), and (h) similarly follow from Propositions~\ref{prop:semigroup} and~\ref{prop:semigroup2}, respectively.
\end{proof}

Most of the properties of Corollary~\ref{cor:kEntangleBreak} were originally proved in the $k = 1$ case in \cite{PTT11} and for arbitrary $k$ in \cite{Xthesis,Xha11}. Both of those proofs prove the result directly, without characterizing the cones $C_m^{\textup{min},k}$ and $C_m^{\textup{max},k}$ as in Theorem~\ref{thm:opSysConeChar}.

We close this section with a result that shows that the largest and smallest cones of completely positive maps between operator systems are the cones $\cl{P}(M_n)$ of positive maps and $\cl{S}(M_n)$ of superpositive maps, respectively.
\begin{cor}\label{cor:cp_sub_p}
	Let $V_1(M_n)$ and $V_2(M_n)$ be operator systems. Then $\cl{S}(M_n) \subseteq \cl{CP}(V_1(M_n),V_2(M_n)) \subseteq \cl{P}(M_n)$.
\end{cor}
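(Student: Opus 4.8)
The claim is that for any two operator systems $V_1(M_n)$ and $V_2(M_n)$, we have $\cl{S}(M_n) \subseteq \cl{CP}(V_1(M_n),V_2(M_n)) \subseteq \cl{P}(M_n)$. The plan is to exploit the extremal property of the minimal and maximal operator systems established in Section~\ref{sec:kMinOpSys}, together with the concrete identifications of completely positive maps between these extremes provided by Corollary~\ref{cor:kEntangleBreak} (in the $k=1$ case, where $\cl{S}_1 = \cl{S}(M_n)$ and $\cl{P}_1 = \cl{P}(M_n)$).

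First I would recall the defining extremal property of $OMIN(M_n)$ and $OMAX(M_n)$: for any operator system $V(M_n)$ with defining cones $\{D_m\}$, we have $C_m^{\textup{max}} \subseteq D_m \subseteq C_m^{\textup{min}}$ for all $m$, where $C_m^{\textup{max}} = C_m^{\textup{max},1}$ and $C_m^{\textup{min}} = C_m^{\textup{min},1}$ are the cones defining $OMAX(M_n)$ and $OMIN(M_n)$ respectively. I would apply this inclusion to both $V_1$ and $V_2$, writing their defining cones as $\{D_m^{(1)}\}$ and $\{D_m^{(2)}\}$.

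For the left inclusion, suppose $\Phi \in \cl{S}(M_n)$. Since $\cl{S}(M_n) = \cl{CP}(OMIN(M_n), OMAX(M_n))$ by part~(e) of Corollary~\ref{cor:kEntangleBreak} (with $k=1$), we know $(id_m \otimes \Phi)(C_m^{\textup{min}}) \subseteq C_m^{\textup{max}}$ for all $m$. Now for $X \in D_m^{(1)}$, the extremal property gives $X \in C_m^{\textup{min}}$, so $(id_m \otimes \Phi)(X) \in C_m^{\textup{max}} \subseteq D_m^{(2)}$, again by the extremal property applied to $V_2$. Hence $\Phi \in \cl{CP}(V_1(M_n), V_2(M_n))$. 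The right inclusion is dual: suppose $\Phi \in \cl{CP}(V_1(M_n), V_2(M_n))$. Taking $X$ in the cone $C_m^{\textup{max}} \subseteq D_m^{(1)}$ and using $D_m^{(2)} \subseteq C_m^{\textup{min}}$, we find $(id_m \otimes \Phi)(C_m^{\textup{max}}) \subseteq C_m^{\textup{min}}$ for all $m$, which by part~(f) of Corollary~\ref{cor:kEntangleBreak} (with $k=1$) says exactly that $\Phi \in \cl{P}(M_n)$.

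The main obstacle, such as it is, will be verifying that Corollary~\ref{cor:kEntangleBreak}(e) and~(f) apply in precisely the $k=1$ form I need and that the extremal inclusions for $OMIN$ and $OMAX$ are stated with the correct direction --- the naming convention is deliberately counterintuitive (the minimal system has the largest cones), so I would be careful to confirm that $C_m^{\textup{max}} \subseteq D_m \subseteq C_m^{\textup{min}}$ and not the reverse. Once the chain of inclusions is aligned correctly, the argument is a short two-step sandwich and requires no computation beyond composing set inclusions.
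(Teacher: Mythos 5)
Your proof is correct and takes essentially the same route as the paper: the paper records the sandwich as a chain of inclusions $\cl{CP}(OMIN(M_n),OMAX(M_n)) \subseteq \cl{CP}(V_1(M_n),V_2(M_n)) \subseteq \cl{CP}(OMAX(M_n),OMIN(M_n))$ and then invokes parts (e) and (f) of Corollary~\ref{cor:kEntangleBreak}, which is exactly your argument unpacked at the level of the defining cones. You also correctly oriented the extremal inclusions $C_m^{\textup{max}} \subseteq D_m \subseteq C_m^{\textup{min}}$, which is the only place the argument could go wrong.
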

\begin{proof}
	It is clear from the definitions of $OMIN(M_n)$ and $OMAX(M_n)$ that
	\begin{align*}
		\cl{CP}(OMIN(M_n),OMAX(M_n)) & \subseteq \cl{CP}(V_1(M_n),OMAX(M_n)) \\
		& \subseteq \cl{CP}(V_1(M_n),V_2(M_n)) \\
		& \subseteq \cl{CP}(V_1(M_n),OMIN(M_n)) \\
		& \subseteq \cl{CP}(OMAX(M_n),OMIN(M_n)).
	\end{align*}
	The result then follows from statements (e) and (f) of Corollary~\ref{cor:kEntangleBreak}.
\end{proof}

\subsection{Norms on Operator Systems}\label{sec:OpSysNorms}

Given an operator system defined by cones $\big\{C_m\big\}_{m=1}^\infty$, the \emph{matrix norm induced by the matrix order} $\big\{C_m\big\}_{m=1}^\infty$ is defined for $X \in M_m \otimes M_n$ to be
\begin{align}\label{eq:matrixNorm}\hsp
	\big\|X\big\|_{m} := \inf\left\{ r : \begin{bmatrix}r I & X \\ X^\dagger & r I \end{bmatrix} \in C_{2m} \right\}.
\dsp\end{align},
where we have identified $(M_m \otimes M_n) \oplus (M_m \otimes M_n)$ with $M_{2m} \otimes M_n$ in the natural way. In the particular case when the operator system under consideration is either $OMIN^k(M_n)$ or $OMAX^k(M_n)$, we denote the norm~\eqref{eq:matrixNorm} by $\big\|X\big\|_{OMIN^k_m(M_n)}$ or $\big\|X\big\|_{OMAX^k_m(M_n)}$, respectively. Our first result characterizes $\big\|X\big\|_{OMIN^k_m(M_n)}$ in terms of the Schmidt rank of pure states, much like Theorem~\ref{thm:MINkChar} characterized $\big\|X\big\|_{MIN^k_m(M_n)}$.
\begin{thm}\label{thm:k_order_norm}
	Let $X \in M_m \otimes M_n$. Then
	\begin{align*}
		\big\|X\big\|_{OMIN^k_m(M_n)} = \sup_{\ket{v},\ket{w}}\Big\{ \big| \bra{v} X \ket{w} \big| : & \ SR(\ket{v}),SR(\ket{w}) \leq k \text{ and} \\
			& \ \exists \, P \in M_m \text{ s.t. } (P \otimes I_n)\ket{v} = \ket{w} \Big\}.
	\end{align*}
\end{thm}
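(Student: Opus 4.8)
The plan is to unwind the definition of the order-induced norm~\eqref{eq:matrixNorm} and translate it into a statement about the $k$-block positivity cone, which Theorem~\ref{thm:opSysConeChar}(a) identifies as $C_m^{\textup{min},k}$. By definition,
\begin{align*}
	\big\|X\big\|_{OMIN^k_m(M_n)} = \inf\left\{ r : \begin{bmatrix} rI & X \\ X^\dagger & rI \end{bmatrix} \in C_{2m}^{\textup{min},k} \right\},
\end{align*}
and by Theorem~\ref{thm:opSysConeChar}(a) the $2\times 2$ block operator lies in $C_{2m}^{\textup{min},k}$ precisely when it is $k$-block positive as an element of $M_{2m} \otimes M_n$. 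So the first step is to write out what $k$-block positivity of this dilated operator means: for every $\ket{\xi} \in \bb{C}^{2m} \otimes \bb{C}^n$ with $SR(\ket{\xi}) \leq k$ we need the corresponding quadratic form to be nonnegative. Splitting $\ket{\xi}$ into its two halves $\ket{\xi} = \ket{v} \oplus \ket{w}$ (with $\ket{v},\ket{w} \in \bb{C}^m \otimes \bb{C}^n$), the quadratic form expands to $r\big(\|\ket{v}\|^2 + \|\ket{w}\|^2\big) + 2\,\mathrm{Re}\,\bra{v}X\ket{w}$, so the positivity condition becomes a $2$-by-$2$ sign condition that, after optimizing the relative phase, reduces to $r\big(\|\ket{v}\|^2 + \|\ket{w}\|^2\big) \geq 2\big|\bra{v}X\ket{w}\big|$, and hence (by AM--GM applied to $\|\ket{v}\|,\|\ket{w}\|$) to $r \geq \big|\bra{v}X\ket{w}\big|$ over normalized $\ket{v},\ket{w}$.

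The genuinely new ingredient, and the step I expect to be the main obstacle, is controlling the Schmidt rank of the halves $\ket{v}$ and $\ket{w}$ given only that the \emph{joint} vector $\ket{\xi} = \ket{v}\oplus\ket{w} \in \bb{C}^{2m} \otimes \bb{C}^n = (\bb{C}^2 \otimes \bb{C}^m) \otimes \bb{C}^n$ has Schmidt rank at most $k$ across the $(\bb{C}^2 \otimes \bb{C}^m)$--$\bb{C}^n$ cut. Writing $\ket{\xi} = \ket{0}\otimes\ket{v} + \ket{1}\otimes\ket{w}$ in the first $\bb{C}^2$ factor, a Schmidt decomposition $\ket{\xi} = \sum_{i=1}^k \alpha_i \ket{a_i} \otimes \ket{b_i}$ (with $\ket{a_i} \in \bb{C}^2 \otimes \bb{C}^m$ and $\ket{b_i} \in \bb{C}^n$) shows that $\ket{v}$ and $\ket{w}$ are obtained from the \emph{same} vectors $\ket{b_i}$ on the second system: indeed $\ket{v} = \sum_i \alpha_i (\bra{0}\otimes I_m)\ket{a_i} \otimes \ket{b_i}$ and similarly for $\ket{w}$ with $\bra{0}$ replaced by $\bra{1}$. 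This is exactly what forces $SR(\ket{v}), SR(\ket{w}) \leq k$ and, simultaneously, produces the operator $P \in M_m$ implementing the constraint: since both $\ket{v}$ and $\ket{w}$ are supported on the common $k$-dimensional ``environment'' spanned by the $\ket{b_i}$, there is a single $P \in M_m$ (built from the first-system components of the $\ket{a_i}$) with $(P \otimes I_n)\ket{v} = \ket{w}$. I would make this precise by using the common Schmidt basis $\{\ket{b_i}\}$ and reading off $P$ from the coefficients.

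Conversely, to establish the reverse inequality I would take any pair $\ket{v},\ket{w}$ with $SR(\ket{v}), SR(\ket{w}) \leq k$ and a $P$ with $(P\otimes I_n)\ket{v} = \ket{w}$, and assemble them into a joint vector of Schmidt rank at most $k$ across the appropriate cut; the existence of $P$ guarantees that $\ket{v}$ and $\ket{w}$ share a common $k$-dimensional support on $\bb{C}^n$, so that $\ket{0}\otimes\ket{v} + \ket{1}\otimes\ket{w}$ (or a suitable normalization) has Schmidt rank at most $k$. This shows that the supremum on the right of the theorem is witnessed by legitimate test vectors for the $k$-block positivity condition, completing the identification with the infimum defining $\big\|X\big\|_{OMIN^k_m(M_n)}$. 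The two directions together give equality, and I would present the phase-optimization and AM--GM reductions only briefly since they are routine; the structural Schmidt-rank argument and the construction of $P$ are where the actual content lies, and I would write those out in full.
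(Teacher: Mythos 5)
Your proposal is correct and follows essentially the same route as the paper's proof: both unwind the order-induced norm via the identification of $C_{2m}^{\textup{min},k}$ with the $k$-block positive cone, Schmidt-decompose the test vector across the $(\bb{C}^2\otimes\bb{C}^m)$--$\bb{C}^n$ cut to see that the two halves share a common set of vectors on the $\bb{C}^n$ factor (which is precisely the existence of the operator $P$), and then optimize the relative weights of the two halves. The paper phrases the final optimization via H\"older's inequality on the coefficients $c_1,c_2$ with $c_1^2+c_2^2=1$ rather than your AM--GM formulation, but these are the same computation.
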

\begin{proof}
	Given $X \in M_m \otimes M_n$, consider the operator
	\begin{align*}\hsp
		\tilde{X} := \begin{bmatrix}r I & X \\ X^\dagger & r I \end{bmatrix} \in M_{2m} \otimes M_n.
	\dsp\end{align*}
	Then $\tilde{X} \in C^{\textup{min},k}_{2m}$ if and only if $\bra{v}\tilde{X}\ket{v} \geq 0$ for all $\ket{v} \in \bb{C}^{2m} \otimes \bb{C}^n$ with $SR(\ket{v}) \leq k$. If we multiply on the left and the right by a Schmidt-rank $k$ vector $\ket{v} := \sum_{i=1}^k \beta_i \ket{a_i} \otimes \ket{b_i}$, where $\ket{a_i} = \alpha_{i1}\ket{1} \otimes \ket{a_{i1}} + \alpha_{i2}\ket{2} \otimes \ket{a_{i2}} \in \bb{C}^2 \otimes \bb{C}^m \cong \bb{C}^{2m}$ and $\ket{b_i} \in \bb{C}^n$, we get
	\begin{align*}
		\bra{v}\tilde{X}\ket{v} & = \sum_{i=1}^k r\big(\beta_i^2\alpha_{i1}^2 + \beta_i^2\alpha_{i2}^2\big) + \sum_{ij=1}^k 2 \alpha_{i1}\alpha_{j2}\beta_i \beta_j{\rm Re}\big((\bra{a_{i1}} \otimes \bra{b_i})X(\ket{a_{j2}} \otimes \ket{b_j})\big)	\\
		& = r + \sum_{ij=1}^k 2\alpha_{i1}\alpha_{j2}\beta_i \beta_j{\rm Re}\big((\bra{a_{i1}} \otimes \bra{b_i})X(\ket{a_{j2}} \otimes \ket{b_j})\big) \\
		& = r + 2 c_1 c_2 {\rm Re}\big(\bra{v_1}X\ket{v_2}\big),
	\end{align*}
	where $c_1\ket{v_1} := \sum_{i=1}^k \alpha_{i1} \beta_i\ket{a_{i1}} \otimes \ket{b_i},c_2\ket{v_2} := \sum_{i=1}^k \alpha_{i2} \beta_i\ket{a_{i2}} \otimes \ket{b_i} \in \bb{C}^m \otimes \bb{C}^n$. Notice that the normalization of the Schmidt coefficients tells us that $c_1^2 + c_2^2 = 1$. Also notice that $\ket{v_1}$ and $\ket{v_2}$ can be written in this way using the same vectors $\ket{b_i}$ on the second subsystem if and only if there exists $P \in M_m$ such that $(P \otimes I_n)\ket{v_1} = \ket{v_2}$. Now taking the infimum over $r$ and requiring that the result be non-negative tells us that the quantity we are interested in is
	\begin{align*}
		\big\|X\big\|_{OMIN^k_m(M_n)} & = \sup\Big\{2c_1 c_2{\rm Re}\big(\bra{v_1}X\ket{v_2}\big) : SR(\ket{v_1}),SR(\ket{v_2}) \leq k, c_1^2 + c_2^2 = 1, \\
		& \quad \quad \quad \quad \quad \quad \quad \quad \quad \quad \quad \quad \quad \exists \, P \in M_m \text{ s.t. } (P \otimes I_n)\ket{v_1} = \ket{v_2} \Big\} \\
		& = \sup\Big\{\big|\bra{v_1}X\ket{v_2}\big| : SR(\ket{v_1}),SR(\ket{v_2}) \leq k \text{ and} \\
		& \quad \quad \quad \quad \quad \quad \quad \quad \quad \quad \quad \quad \quad \exists \, P \in M_m \text{ s.t. } (P \otimes I_n)\ket{v_1} = \ket{v_2} \Big\},
	\end{align*}
	where the final equality comes applying a complex phase to $\ket{v_1}$ so that $Re(\bra{v_1}X\ket{v_2}) = |\bra{v_1}X\ket{v_2}|$, and from H\"{o}lder's inequality telling us that the supremum is attained when $c_1 = c_2 = 1/\sqrt{2}$.
\end{proof}

The matrix norm induced by the matrix order is not the only way to define a norm on the various levels of an operator system $V(M_n)$. What is referred to as the \emph{order norm} of $X = X^\dagger \in M_m \otimes M_n$ \cite{PT09} is defined via
\begin{align}\label{eq:orderNorm}
	\big\|X\big\|_{V_m}^{or} := \inf\big\{ r \in \mathbb{R} : rI \pm X \in V_m(M_n) \big\}.
\end{align}
It is not difficult to see that the matrix norm induced by the matrix order~\eqref{eq:matrixNorm} coincides with the order norm~\eqref{eq:orderNorm} whenever $X = X^\dagger$. It was shown in \cite{PT09} how the order norm can be extended (non-uniquely) to a norm on the non-Hermitian elements of $M_m \otimes M_n$. Furthermore, there exists a minimal order norm $\|\cdot\|_{V_m}^{\textup{min}}$ and a maximal order norm $\|\cdot\|_{V_m}^{\textup{max}}$ satisfying $\|\cdot\|_{V_m}^{\textup{min}} \leq \|\cdot\|_{V_m}^{\textup{max}} \leq 2\|\cdot\|_{V_m}^{\textup{min}}$. We will now examine properties of these two norms as well as some other norms (all of which coincide with the order norm on Hermitian elements) on the $k$-super minimal operator system.

Let $X \in M_m \otimes M_n$, where the operator system on $M_n$ we are considering is $OMIN^k(M_n)$. Then we recall the minimal order norm $\|\cdot\|_{OMIN^{\textup{min},k}_m}$ and maximal order norm $\|\cdot\|_{OMIN^{\textup{max},k}_m}$ from
\cite{PT09}:
\begin{align*}
	\big\|X\big\|_{OMIN^{\textup{min},k}_m} & := \sup \Big\{ \big|f(X)\big| : f : OMIN^k_m(M_n) \rightarrow \bb{C} \text{ is positive and } f(I) = 1 \Big\}, \\
	\big\|X\big\|_{OMIN^{\textup{max},k}_m} & := \inf \left\{ \sum_{i=1}^r |\lambda_i| \big\| H_i \big\|_{OMIN^{or,k}_m} : X = \sum_{i=1}^r \lambda_i H_i, H_i = H_i^\dagger, \lambda_i \in \mathbb{C} \ \forall \, i \right\}.
\end{align*}
Our next result shows that the minimal order norm can be thought of in terms of vectors with Schmidt rank no greater than $k$, much like the norms $\|\cdot\|_{MIN^k_m(M_n)}$ and $\|\cdot\|_{OMIN^k_m(M_n)}$ introduced earlier.
\begin{thm}\label{thm:charMinNorm}
	Let $X \in M_m \otimes M_n$. Then
	\begin{align*}
		\big\|X\big\|_{OMIN^{\textup{min},k}_m} & = \sup_{\ket{v}}\Big\{ \big| \bra{v} X \ket{v} \big| : SR(\ket{v}) \leq k \Big\}.
	\end{align*}
\end{thm}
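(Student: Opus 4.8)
The plan is to prove the claimed formula for the minimal order norm $\|X\|_{OMIN^{\textup{min},k}_m}$ by showing both inequalities, leveraging the characterization of the cones $C_m^{\textup{min},k}$ as the $k$-block positive operators (Theorem~\ref{thm:opSysConeChar}) and the duality between positive linear functionals and operators with Schmidt number at most $k$. The essential observation is that a positive unital linear functional $f$ on $OMIN^k_m(M_n)$ corresponds, via the Hilbert--Schmidt inner product, to an operator $\rho \geq 0$ that is dual to the $k$-block positive cone; by Proposition~\ref{prop:kPos} (or equivalently Theorem~\ref{thm:opSysConeChar}(b)), such $\rho$ are exactly the operators with $SN(\rho) \leq k$, suitably normalized.

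First I would unpack the definition: a functional $f : OMIN^k_m(M_n) \to \bb{C}$ being positive means $f(Y) \geq 0$ for all $Y \in C_m^{\textup{min},k}$, i.e. for all $k$-block positive $Y$, and $f(I) = 1$ is the unitality condition. Writing $f(X) = \Tr(\rho^\dagger X)$ for a unique $\rho \in M_m \otimes M_n$, positivity of $f$ on the $k$-block positive cone forces $\rho$ to lie in the dual cone, which is precisely the cone of (unnormalized) operators with $SN(\rho) \leq k$; unitality $f(I) = \Tr(\rho) = 1$ normalizes $\rho$ to a density operator with $SN(\rho) \leq k$. Thus the supremum in the definition becomes $\sup_\rho\{|\Tr(\rho X)| : SN(\rho) \leq k, \Tr(\rho) = 1\}$. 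For the ``$\leq$'' direction, any such $\rho$ can be written $\rho = \sum_\ell p_\ell \ketbra{v_\ell}{v_\ell}$ with $SR(\ket{v_\ell}) \leq k$ and $\{p_\ell\}$ a probability distribution, so $|\Tr(\rho X)| = |\sum_\ell p_\ell \bra{v_\ell}X\ket{v_\ell}| \leq \sup_{SR(\ket{v})\leq k}|\bra{v}X\ket{v}|$. For the ``$\geq$'' direction, given any $\ket{v}$ with $SR(\ket{v}) \leq k$, the choice $\rho = \ketbra{v}{v}$ is a valid density operator with $SN(\rho) \leq k$, and $\Tr(\rho X) = \bra{v}X\ket{v}$ realizes the value; taking the supremum over such $\ket{v}$ gives the reverse inequality.

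The step I expect to be the main obstacle is making the correspondence between positive unital functionals and Schmidt-number-$\leq k$ density operators fully rigorous, since it is really the crux and invokes the dual-cone duality. I would handle this by appealing directly to Proposition~\ref{prop:kPos}(b), which states that $SN(\rho) \leq k$ if and only if $\Tr(Y\rho) \geq 0$ for all $k$-block positive $Y$ --- this is exactly the statement that the dual of the $k$-block positive cone is the Schmidt-number-$\leq k$ cone. Combined with Theorem~\ref{thm:opSysConeChar}(a), identifying $C_m^{\textup{min},k}$ with the $k$-block positive operators, this gives precisely the needed bijection between the functionals appearing in the definition of $\|\cdot\|_{OMIN^{\textup{min},k}_m}$ and the density operators with Schmidt number at most $k$. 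A minor technical point to address is that positivity of $f$ on a Hermiticity-preserving cone forces $\rho$ to be Hermitian, and that the supremum of $|\bra{v}X\ket{v}|$ over Schmidt-rank-$\leq k$ states is attained (by compactness of the set of such unit vectors, as used elsewhere in the paper), so the suprema are well-defined and the two formulas match.
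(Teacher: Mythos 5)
Your proof is correct, but it takes a different route from the paper's. You compute the supremum in the definition of $\|\cdot\|_{OMIN^{\textup{min},k}_m}$ directly by identifying the positive unital functionals on $OMIN^k_m(M_n)$ with the density operators of Schmidt number at most $k$: positivity on the cone $C_m^{\textup{min},k}$ of $k$-block positive operators forces the representing operator $\rho$ into the dual cone, which by Proposition~\ref{prop:kPos}(b) (together with Theorem~\ref{thm:opSysConeChar}(a)) is exactly the cone of unnormalized Schmidt-number-$\leq k$ operators, and then both inequalities drop out by passing to the extreme points $\ketbra{v}{v}$ with $SR(\ket{v})\leq k$. The paper instead proves only the ``$\geq$'' direction by exhibiting the functionals $f(X)=\bra{v}X\ket{v}$, and gets ``$\leq$'' by verifying that the right-hand side agrees with the order norm $\inf\{r : rI\pm X \in C_m^{\textup{min},k}\}$ on Hermitian elements and then invoking the minimality of $\|\cdot\|_{OMIN^{\textup{min},k}_m}$ among all order norms from \cite{PT09}. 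The trade-off: the paper's argument needs only the trivial direction of duality (that $\bra{v}\cdot\ket{v}$ is positive on $k$-block positive operators, which is the definition) plus the abstract minimality machinery, whereas yours leans on the full dual-cone theorem but is self-contained and yields as a by-product an explicit description of the state space of $OMIN^k(M_n)$. One small point worth making explicit in your write-up: before applying Proposition~\ref{prop:kPos}(b) you should note that $\rho\geq 0$ follows because the $k$-block positive cone contains the positive semidefinite cone (which is self-dual), so that $\rho$ really is a density matrix and the proposition applies as stated.
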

\begin{proof}
	Note that if we define a linear functional $f : M_m \otimes M_n \rightarrow \bb{C}$ by $f(X) = \bra{v} X \ket{v}$ for some fixed $\ket{v}$ with $SR(\ket{v}) \leq k$ then it is clear that $f(X) \geq 0$ whenever $X \in C^{\textup{min},k}_{m}$ (by definition of $k$-block positivity) and $f(I) = 1$. The ``$\geq$'' inequality follows immediately.
	
	To see the other inequality, we show that the given supremum is an order norm. The result then follows from minimality of $\|\cdot\|_{OMIN^{\textup{min},k}_m}$ among order norms. To this end, let $X = X^\dagger \in M_m \otimes M_n$. Then
\begin{align*}
	\big\|X\big\|_{OMIN^{or,k}_m} & = \inf\{ r \in \mathbb{R} : rI \pm X \text{ is $k$-block positive} \} \\
	& = \inf\{ r \in \mathbb{R} : \bra{v}(rI \pm X)\ket{v} \geq 0 \text{ for all } \ket{v} \text{ with } SR(\ket{v}) \leq k \} \\
	& = \inf\{ r \in \mathbb{R} : \big|\bra{v}X\ket{v}\big| \leq r \text{ for all } \ket{v} \text{ with } SR(\ket{v}) \leq k \} \\
	& = \sup_{\ket{v}}\Big\{ \big| \bra{v} X \ket{v} \big| : SR(\ket{v}) \leq k \Big\},
\end{align*}
	which completes the proof.
\end{proof}

The characterization of $\|\cdot\|_{OMIN^{\textup{min},k}_m}$ given by Theorem~\ref{thm:charMinNorm} can be thought of as in the same vein as \cite[Proposition 5.8]{PT09}, where it was shown that for a unital C$^*$-algebra, the minimal norm coincides with the numerical radius. In our setting, $\|\cdot\|_{OMIN^{\textup{min},k}_m}$ can be thought of as a bipartite analogue of the numerical radius, which has been studied in quantum information theory in the $k = 1$ case \cite{GPMSZ10,PGMSCZ11}.

We know in general that the minimal and maximal order norms can differ by at most a factor of two. We now present an example some of these norms and to demonstrate that in fact even $\|\cdot\|_{OMIN^{\textup{min},k}_m}$ and $\|\cdot\|_{OMIN^k_m(M_n)}$ can differ by a factor of two.
\begin{exam}\label{ex:min_op_compare}{\rm
	Consider the rank-$1$ operator $X := \ketbra{x}{y} \in M_n \otimes M_n$, where
	\begin{align*}
		\ket{x} := \frac{1}{\sqrt{n}}\sum_{i=1}^{n} \ket{i} \otimes \ket{i} \quad \text{ and } \quad \ket{y} := \frac{1}{\sqrt{n}}\sum_{i=1}^{n} \ket{i} \otimes \ket{i + 1},
	\end{align*}
	where the $i+1$ is understood in the sense that $n + 1 = 1$. It is easily verified that if $\ket{v} = \sum_{i=1}^k \alpha_i \ket{a_i} \otimes \ket{b_i}$ then
	\begin{align*}
		\big|\bra{v}X\ket{v}\big| & = \frac{1}{n} \bigg|\sum_{r,s=1}^k \sum_{i,j=1}^{n} \alpha_r \alpha_s \braket{a_r}{i} \braket{b_r}{i} \braket{j}{a_s} \braket{j + 1}{b_s}\bigg| \\
		& = \frac{1}{n} \bigg| \Tr\Big(\sum_{r=1}^k \alpha_r \overline{\ket{a_r}}\bra{b_r} \Big) \cdot \sum_{j=1}^{n} \bra{j} \Big(\sum_{r=1}^k \alpha_r \overline{\ket{a_r}}\bra{b_r} \Big) \ket{j + 1} \bigg|.
	\end{align*}
	In the final line above we have the trace of an operator with rank at most $k$, multiplied by the sum of the elements on the superdiagonal of the same operator, subject to the constraint that the Frobenius norm of that operator is equal to $1$. It follows that $\big|\bra{v}X\ket{v}\big| \leq \frac{k}{2n}$ and so $\big\|X\big\|_{OMIN^{\textup{min},k}_m} = \frac{k}{2n}$ (equality can be seen by taking $\ket{v} = \sum_{i=1}^{k} \frac{1}{\sqrt{2k}}\ket{i} \otimes (\ket{i} + \ket{i + 1})$).
	
	To see that $\|X\|_{OMIN^{k}_m}$ is twice as large, consider $\ket{v} = \frac{1}{\sqrt{k}}\sum_{i=1}^{k} \ket{i} \otimes \ket{i}$ and $\ket{w} = \frac{1}{\sqrt{k}}\sum_{i=1}^{k} \ket{i} \otimes \ket{i + 1}$. Then it is easily verified that $\bra{v}X\ket{w} = \frac{k}{n}$. Moreover, if $P \in M_m$ is the cyclic permutation matrix such that $P\ket{i} = \ket{i - 1}$ (in the modular arithmetic sense that $1 - 1 = n$) for all $i$ then $(P \otimes I_n)\ket{v} = \ket{w}$, showing that $\big\|X\big\|_{OMIN^{k}_m} \geq \frac{k}{n}$.
}\end{exam}

\subsection{Contractive Maps as Separability Criteria}\label{sec:contrac_sep_crit}

Recall Theorem~\ref{thm:sch_kpos_maps}, which characterizes Schmidt number in terms of $k$-positive maps. In light of Corollary~\ref{cor:kEntangleBreak}, this means that Schmidt number is characterized by maps that are completely positive from $M_n$ to $OMIN^k(M_n)$. In this section, we show that this characterization can be rephrased entirely in terms of norms: Schmidt number is characterized by maps that are completely contractive in operator system norm from $M_n$ to $OMIN^k(M_n)$. In the $k = 1$ case, our results reduce to those of \cite{HHH06}, which characterize separability via maps that are contractive in the trace norm on Hermitian operators.

We begin by considering an operator system version of the norms introduced in Section~\ref{sec:cb_norm_op_space}. Notice that if we consider the completely bounded norm from $M_r$ to the $k$-super minimal operator \emph{systems} on $M_n$, then a statement that is analogous to Theorem~\ref{thm:mainCB} holds. Its proof can be trivially modified to show that if $\Phi : M_r \rightarrow M_n$ and $1 \leq k \leq n$ then
\begin{align}\begin{split}\label{eq:opSysStab}
	& \sup\Big\{ \big|\bra{v}(id_k \otimes \Phi)(X)\ket{v}\big| : \big\|X\big\| \leq 1, X = X^\dagger \Big\} \\
	= & \sup_{m \geq 1}\Big\{ \big|\bra{v}(id_m \otimes \Phi)(X)\ket{v}\big| : \big\|X\big\| \leq 1, X = X^\dagger, SR(\ket{v}) \leq k \Big\}.
\end{split}
\end{align}

Equation~\eqref{eq:opSysStab} can be thought of as a stabilization result for the completely bounded version of the minimal order norm described by Theorem~\ref{thm:charMinNorm}. We could also have picked one of the other order norms on the $k$-super minimal operator systems to work with, but from now on we will be working exclusively with Hermiticity-preserving maps $\Phi$. By the fact that all of the operator system order norms are equal on Hermitian operators, it follows that these versions of their completely bounded norms are all equal as well.

Before proceeding, we will need to define some more notation. If $\Phi : M_n \rightarrow M_r$ is a linear map, then we define a Hermitian version of the induced trace norm of $\Phi$:
\begin{align}\label{eq:herm_tr_norm}
	\big\|\Phi\big\|_{tr}^H := \sup\Big\{ \big\| \Phi(X) \big\|_{tr} : \big\|X\big\|_{tr} \leq 1, X = X^\dagger \Big\}.
\end{align}

It is worth noting that the norm~\eqref{eq:herm_tr_norm} (and many other norms like it) were studied in depth in \cite{Wat05}. In particular, it is worth noting that removing the requirement that $X = X^\dagger$ above in general results in a different norm, even if $\Phi$ is Hermiticity-preserving. Nonetheless, because of convexity of the trace norm it is clear that the norm~\eqref{eq:herm_tr_norm} is unchanged in this case if instead of being restricted to Hermitian operators, the supremum is restricted to positive operators or even just projections. Now by taking the dual of the left and right norms described by Equation~\eqref{eq:opSysStab}, and using the fact that the operator norm is dual to the trace norm, we arrive at the following corollary:

\begin{cor}\label{cor:mainCB}
	Let $\Phi : M_n \rightarrow M_r$ be a Hermiticity-preserving linear map and let $1 \leq k \leq n$. Then
	\begin{align*}
		\big\|id_k \otimes \Phi\big\|^{H}_{tr} = \sup_{m \geq 1}\Big\{ \big\|(id_m \otimes \Phi)(\rho)\big\|_{tr} : \rho \in M_m \otimes M_n \text{ with } SN(\rho) \leq k \Big\}.
	\end{align*}
\end{cor}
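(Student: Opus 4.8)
The plan is to deduce Corollary~\ref{cor:mainCB} from Equation~\eqref{eq:opSysStab} by a duality argument, using throughout that the operator norm and the trace norm are dual to one another and, crucially, that this pairing restricts to a self-dual pairing on the real subspace of Hermitian operators: for Hermitian $Z$ one has $\|Z\| = \sup\{|\langle Z\,|\,X\rangle| : X=X^\dagger,\ \|X\|_{tr}\leq 1\}$ and $\|Z\|_{tr} = \sup\{|\langle Z\,|\,Y\rangle| : Y=Y^\dagger,\ \|Y\|\leq 1\}$. The key organizing observation is that the corollary concerns $\Phi : M_n \to M_r$, whereas Equation~\eqref{eq:opSysStab} is stated for maps into $M_n$, so the whole computation must be phrased in terms of the adjoint map $\Phi^\dagger : M_r \to M_n$, which is again Hermiticity-preserving (by Proposition~\ref{prop:hermiticity_char} together with Proposition~\ref{prop:left_right_choi}, or directly from the operator-sum form).

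First I would unfold $\|id_k \otimes \Phi\|_{tr}^H = \sup\{\|(id_k \otimes \Phi)(X)\|_{tr} : X=X^\dagger,\ \|X\|_{tr}\leq 1\}$. Since $\Phi$ is Hermiticity-preserving, $(id_k \otimes \Phi)(X)$ is Hermitian whenever $X$ is, so its trace norm may be written as $\sup\{|\langle Y\,|\,(id_k \otimes \Phi)(X)\rangle| : Y=Y^\dagger,\ \|Y\|\leq 1\}$. Moving the map across the Hilbert--Schmidt inner product via the adjoint relation of Proposition~\ref{prop:choi_inner_basic} turns the pairing into $\langle (id_k \otimes \Phi^\dagger)(Y)\,|\,X\rangle$; taking the supremum over contractive Hermitian $X$ (again using Hermiticity-preservation of $\Phi^\dagger$ and the same duality) collapses this to $\|(id_k \otimes \Phi^\dagger)(Y)\|$. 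Writing the operator norm of this Hermitian operator as $\sup_{\ket{v}}|\bra{v}(id_k \otimes \Phi^\dagger)(Y)\ket{v}|$ over unit vectors then reproduces exactly the left-hand side of Equation~\eqref{eq:opSysStab} for the map $\Phi^\dagger$.

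Next I would invoke Equation~\eqref{eq:opSysStab} to replace this $k$-level quantity by the supremum over all $m \geq 1$ and all unit vectors $\ket{v}$ with $SR(\ket{v}) \leq k$, and then run the duality chain backwards. For each fixed $m$ and $\ket{v}$, the adjoint relation gives $\bra{v}(id_m \otimes \Phi^\dagger)(Y)\ket{v} = \langle (id_m \otimes \Phi)(\ketbra{v}{v})\,|\,Y\rangle$, and the supremum over contractive Hermitian $Y$ returns $\|(id_m \otimes \Phi)(\ketbra{v}{v})\|_{tr}$. This yields the asserted identity with the right-hand supremum taken over \emph{pure} states of Schmidt rank at most $k$. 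The final step upgrading to arbitrary density operators with $SN(\rho) \leq k$ is routine: one inequality holds because a pure state of Schmidt rank $\leq k$ has Schmidt number $\leq k$, and the reverse follows by writing $\rho = \sum_i p_i \ketbra{v_i}{v_i}$ with $SR(\ket{v_i}) \leq k$ and applying the triangle inequality for the trace norm.

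I expect the main obstacle to be bookkeeping rather than any deep difficulty: one must keep the Hermitian restrictions consistent through every dualization so that no non-Hermitian optimizer is silently introduced, and one must track the input/output dimension swap carefully so that Equation~\eqref{eq:opSysStab} is applied to $\Phi^\dagger$ at precisely the right stage. The one genuinely load-bearing point is the self-duality of the trace-norm/operator-norm pairing on the Hermitian subspace, which is what legitimizes restricting both the test operators $X,Y$ and the optimizing vectors to the Hermitian/real setting where $(id_m \otimes \Phi)$ and $(id_m \otimes \Phi^\dagger)$ act within the Hermitian operators.
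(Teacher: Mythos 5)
Your proposal is correct and follows essentially the same route as the paper, which obtains the corollary precisely by dualizing both sides of Equation~\eqref{eq:opSysStab} (applied to $\Phi^\dagger$) via the trace-norm/operator-norm pairing restricted to the Hermitian subspace. You have simply filled in the bookkeeping that the paper leaves implicit, including the final convexity step passing from pure states of Schmidt rank at most $k$ to density operators of Schmidt number at most $k$.
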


We will now characterize the Schmidt number of a state $\rho$ in terms of maps that are contractive in the norm described by Corollary~\ref{cor:mainCB}. We begin with a simple lemma that will get us most of the way to the linear contraction characterization of Schmidt number. The $k = 1$ version of this lemma appeared as \cite[Lemma 1]{HHH06}, though our proof is more straightforward.

\begin{lemma}\label{lem:contrac}
	Let $\rho \in M_m \otimes M_n$ be a density operator. Then $SN(\rho) \leq k$ if and only if $(id_m \otimes \Phi)(\rho) \geq 0$ for all trace-preserving $k$-positive maps $\Phi : M_n \rightarrow M_{2n}$.
\end{lemma}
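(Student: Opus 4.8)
The plan is to prove Lemma~\ref{lem:contrac} by relating trace-preserving $k$-positive maps $\Phi : M_n \to M_{2n}$ to general $k$-positive maps via the characterization of Schmidt number in Theorem~\ref{thm:sch_kpos_maps}(b). The ``only if'' direction is essentially immediate: if $SN(\rho) \leq k$ and $\Phi$ is $k$-positive, then Theorem~\ref{thm:sch_kpos_maps}(b) already tells us that $(id_m \otimes \Phi)(\rho) \geq 0$, regardless of whether $\Phi$ is trace-preserving or maps into $M_n$ or $M_{2n}$. So the content is entirely in the ``if'' direction, where we must upgrade the restricted class of maps (trace-preserving, into $M_{2n}$) back to the full class of $k$-positive maps.

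First I would address the ``if'' direction by contraposition: suppose $SN(\rho) > k$. By Theorem~\ref{thm:sch_kpos_maps}(b), there exists some $k$-positive map $\Psi : M_n \to M_n$ such that $(id_m \otimes \Psi)(\rho) \not\geq 0$. The goal is to manufacture from $\Psi$ a \emph{trace-preserving} $k$-positive map $\tilde\Phi : M_n \to M_{2n}$ that still detects $\rho$, i.e. with $(id_m \otimes \tilde\Phi)(\rho) \not\geq 0$. The natural construction is to take a direct sum $\tilde\Phi(X) := \Psi(X) \oplus \Lambda(X)$, where $\Lambda : M_n \to M_n$ is an auxiliary completely positive map chosen so that $\tilde\Phi$ becomes trace-preserving. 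Since direct sums of $k$-positive maps are $k$-positive (the condition $SR(\ket{v}) \leq k$ is checked blockwise and positivity is preserved under direct sums), $\tilde\Phi$ will be $k$-positive. Moreover $(id_m \otimes \tilde\Phi)(\rho) = (id_m \otimes \Psi)(\rho) \oplus (id_m \otimes \Lambda)(\rho)$, and a direct sum is positive semidefinite if and only if each summand is, so the negative eigenvalue of the $\Psi$-block survives. Thus $\tilde\Phi$ detects $\rho$, proving the contrapositive.

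The main obstacle is arranging trace-preservation while keeping $k$-positivity and not destroying the detecting block. The trick is that $\Psi$ can first be rescaled and shifted: replace $\Psi$ by $\Psi' := \varepsilon\Psi + \mathcal{D}$ for a small $\varepsilon > 0$ and a fixed completely positive (hence $k$-positive) map $\mathcal{D}$ such as the completely depolarizing channel, chosen so that $\Psi'$ is still $k$-positive and still detects $\rho$ (for $\varepsilon$ small enough, the negative eigenvalue persists by continuity), while $\Psi'$ is now strictly sub-trace-preserving in the sense that $\Tr(\Psi'(X)) \leq c\,\Tr(X)$ for positive $X$ and some constant $c$. Equivalently, using the Choi--Jamio{\l}kowski correspondence and the fact that trace-preservation corresponds to $\Tr_1(C_{\Psi'}) = I$ (Section~\ref{sec:choi_jamiolkowski_basic}), I would pick the ``defect'' $D := I - \Tr_1(C_{\Psi'})$. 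One then verifies $D \geq 0$ after appropriate scaling, and defines $\Lambda$ to be any completely positive map whose Choi matrix has the correct first partial trace to complement $\Psi'$, so that $\tilde\Phi = \Psi' \oplus \Lambda : M_n \to M_n \oplus M_n = M_{2n}$ satisfies $\Tr_1(C_{\tilde\Phi}) = I$, i.e. $\tilde\Phi$ is trace-preserving. The output space $M_{2n}$ is exactly large enough to accommodate this padding, which is why the codomain is $M_{2n}$ rather than $M_n$.

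In writing this up I would be careful with two points: ensuring the rescaling of $\Psi$ keeps the detecting eigenvalue strictly negative (a compactness/continuity argument suffices, since the set of Schmidt-rank-$\leq k$ witnessing vectors is compact), and confirming that the complementary completely positive map $\Lambda$ with the prescribed partial trace exists (which follows since any positive semidefinite operator $Z \in M_n \otimes M_n$ with $\Tr_1(Z) = D$ is the Choi matrix of some completely positive map, and such $Z$ exists whenever $D \geq 0$, e.g. $Z = \frac{1}{n}I \otimes D$). Both are routine once set up, so the heart of the argument is simply the direct-sum construction combined with Theorem~\ref{thm:sch_kpos_maps}(b).
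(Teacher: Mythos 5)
Your overall architecture matches the paper's: argue the ``if'' direction by contraposition, extract a detecting $k$-positive map $\Psi : M_n \rightarrow M_n$ from Theorem~\ref{thm:sch_kpos_maps}(b), and repair trace preservation by padding with a second block inside $M_{2n} \cong M_n \oplus M_n$. However, the step where you replace $\Psi$ by $\Psi' := \varepsilon\Psi + \cl{D}$ and assert that ``for $\varepsilon$ small enough, the negative eigenvalue persists by continuity'' is false, and it breaks the proof as written. As $\varepsilon \rightarrow 0$ the map $\Psi'$ converges to the completely depolarizing channel $\cl{D}$, and $(id_m \otimes \cl{D})(\rho) = \Tr_2(\rho) \otimes \tfrac{1}{n}I_n$ is positive semidefinite and in fact strictly positive definite on the subspace ${\rm supp}(\Tr_2(\rho)) \otimes \bb{C}^n$, which contains the support of $(id_m \otimes \Psi)(\rho)$. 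Hence for all sufficiently small $\varepsilon$ the operator $(id_m \otimes \Psi')(\rho) = \varepsilon (id_m \otimes \Psi)(\rho) + (id_m \otimes \cl{D})(\rho)$ is positive semidefinite: continuity works against you here, since it is the detecting map, not the harmless additive term, that is being shrunk to zero. A smaller slip: with the paper's conventions, trace preservation corresponds to $\Tr_2(C_\Phi) = I$, not $\Tr_1(C_\Phi) = I$.

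The fix is to leave the detecting block alone. Do not add $\cl{D}$ to $\Psi$ at all: a positive rescaling of $\Psi$ cannot change the sign of any eigenvalue of $(id_m \otimes \Psi)(\rho)$, so simply scale $\Psi$ until it is trace-non-increasing on positive operators and put the entire correction into the second block. This is exactly what the paper does: after scaling so that $\|\Psi\|_{tr} \leq \tfrac{1}{n}$, it takes $\Phi := \Psi \oplus (\Omega - \Psi)$ with $\Omega$ the completely depolarizing channel, so that $\Tr(\Phi(X)) = \Tr(\Omega(X)) = \Tr(X)$ automatically while the first block still exhibits the negative eigenvalue. Your idea of building the complementary block $\Lambda$ as a completely positive map whose Choi matrix has the prescribed partial trace is a perfectly serviceable alternative to the paper's explicit choice $\Omega - \Psi$, but it only helps once the first block is an honest rescaled copy of $\Psi$ rather than $\varepsilon\Psi + \cl{D}$.
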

\begin{proof}
	The ``only if'' implication of the proof is clear, so we only need to establish that if $SN(\rho) > k$ then there is a trace-preserving $k$-positive map $\Phi : M_n \rightarrow M_{2n}$ such that $(id_m \otimes \Phi)(\rho) \not\geq 0$. To this end, let $\Psi : M_n \rightarrow M_n$ be a $k$-positive map such that $(id_m \otimes \Psi)(\rho) \not\geq 0$. Without loss of generality, $\Psi$ can be scaled so that $\big\|\Psi\big\|_{tr} \leq \frac{1}{n}$. Then if $\Omega : M_n \rightarrow M_n$ is the completely depolarizing channel defined by $\Omega(\rho) = \frac{1}{n}I_n$ for all $\rho \in M_n$, it follows that $(\Omega - \Psi)(\rho) \geq 0$ for all $\rho \geq 0$ and so the map $\Phi := \Psi \oplus (\Omega - \Psi) : M_n \rightarrow M_{2n}$ is $k$-positive (and easily seen to be trace-preserving). Because $(id_m \otimes \Psi)(\rho) \not\geq 0$, we have $(id_m \otimes \Phi)(\rho) \not\geq 0$ as well, completing the proof.
\end{proof}

We are now in a position to prove the main result of this section. Note that in the $k = 1$ case of the following theorem it is not necessary to restrict attention to Hermiticity-preserving linear maps $\Phi$ (and indeed this restriction was not made in \cite{HHH06}), but our proof for arbitrary $k$ does make use of Hermiticity-preservation.
\begin{thm}
	Let $\rho \in M_m \otimes M_n$ be a density operator. Then $SN(\rho) \leq k$ if and only if $\big\|(id_m \otimes \Phi)(\rho)\big\|_{tr} \leq 1$ for all Hermiticity-preserving linear maps $\Phi : M_n \rightarrow M_{2n}$ with $\big\| id_k \otimes \Phi \big\|_{tr}^{H} \leq 1$.
\end{thm}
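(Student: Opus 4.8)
<br>

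The plan is to prove the two implications separately, using Lemma~\ref{lem:contrac} as the bridge to the positivity characterization of Schmidt number and Corollary~\ref{cor:mainCB} to convert the trace-norm contraction hypothesis into information about $k$-positivity.

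For the ``only if'' direction, I would assume $SN(\rho) \leq k$ and let $\Phi : M_n \rightarrow M_{2n}$ be any Hermiticity-preserving map with $\big\|id_k \otimes \Phi\big\|_{tr}^H \leq 1$. Corollary~\ref{cor:mainCB} says precisely that $\big\|id_k \otimes \Phi\big\|_{tr}^H$ equals the supremum of $\big\|(id_m \otimes \Phi)(\sigma)\big\|_{tr}$ over all $\sigma$ (in any dimension $m$) with $SN(\sigma) \leq k$. Since our $\rho$ is one such operator (normalized so $\|\rho\|_{tr} = 1$), we immediately get $\big\|(id_m \otimes \Phi)(\rho)\big\|_{tr} \leq \big\|id_k \otimes \Phi\big\|_{tr}^H \leq 1$. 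This direction is essentially just unwinding the definition of the completely bounded Hermitian trace norm via Corollary~\ref{cor:mainCB}.

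For the ``if'' direction, I would argue the contrapositive: suppose $SN(\rho) > k$, and produce a Hermiticity-preserving $\Phi : M_n \rightarrow M_{2n}$ with $\big\|id_k \otimes \Phi\big\|_{tr}^H \leq 1$ but $\big\|(id_m \otimes \Phi)(\rho)\big\|_{tr} > 1$. The natural starting point is Lemma~\ref{lem:contrac}, which gives a \emph{trace-preserving} $k$-positive map $\Phi : M_n \rightarrow M_{2n}$ such that $(id_m \otimes \Phi)(\rho) \not\geq 0$. The idea is that a $k$-positive trace-preserving map should be contractive in the $\big\|id_k \otimes \cdot\big\|_{tr}^H$ norm: if $\Phi$ is $k$-positive and trace-preserving, then for any $\sigma$ with $SN(\sigma) \leq k$ we can split $\sigma = \sigma_+ - \sigma_-$ into positive and negative parts (each of Schmidt number at most $k$), apply $k$-positivity and trace-preservation to each piece to bound $\big\|(id_m \otimes \Phi)(\sigma)\big\|_{tr} \leq \Tr(\sigma_+) + \Tr(\sigma_-) = \|\sigma\|_{tr}$, and then invoke Corollary~\ref{cor:mainCB} to conclude $\big\|id_k \otimes \Phi\big\|_{tr}^H \leq 1$. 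Meanwhile, since $(id_m \otimes \Phi)(\rho)$ is Hermitian (by Hermiticity-preservation) but not positive semidefinite, while $\Tr\big((id_m \otimes \Phi)(\rho)\big) = \Tr(\rho) = 1$ by trace-preservation, its trace norm must strictly exceed its trace: a Hermitian operator with a negative eigenvalue satisfies $\|X\|_{tr} > \Tr(X)$, so $\big\|(id_m \otimes \Phi)(\rho)\big\|_{tr} > 1$. This gives the desired witness and completes the contrapositive.

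The main obstacle I anticipate is the careful justification that a $k$-positive trace-preserving map is contractive in the relevant norm, i.e. the claim $\big\|id_k \otimes \Phi\big\|_{tr}^H \leq 1$. The subtlety is that Corollary~\ref{cor:mainCB} phrases the norm as a supremum over \emph{all} $\sigma$ with $SN(\sigma) \leq k$ (across all ambient dimensions $m$), and one must verify that the Jordan decomposition $\sigma = \sigma_+ - \sigma_-$ of such a $\sigma$ yields pieces that still have Schmidt number at most $k$ so that $k$-positivity applies to each. This should follow because the cone of operators with Schmidt number at most $k$ is closed under taking positive and negative parts when restricted appropriately, but it requires the observation that $SN(\sigma) \leq k$ controls both $\sigma_+$ and $\sigma_-$; one may instead argue directly from a Schmidt-rank decomposition $\sigma = \sum_i c_i \ketbra{v_i}{v_i}$ with $SR(\ket{v_i}) \leq k$, grouping positive and negative coefficients, which sidesteps any delicacy about the Jordan decomposition itself lying in the cone.
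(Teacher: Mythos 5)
Your proof is correct and follows essentially the same route as the paper: the ``only if'' direction is exactly Corollary~\ref{cor:mainCB}, and the ``if'' direction reduces to Lemma~\ref{lem:contrac} once one observes that $k$-positive trace-preserving maps $\Phi$ satisfy $\big\|id_k \otimes \Phi\big\|_{tr}^{H} \leq 1$ (the paper argues this directly rather than via the contrapositive, but the content is identical, including the final observation that a Hermitian operator of trace one and trace norm at most one must be positive). The obstacle you anticipate is not actually there: the operators in the supremum of Corollary~\ref{cor:mainCB} are density operators and hence already positive, while if one instead works from the definition of $\big\|id_k \otimes \Phi\big\|_{tr}^{H}$ as a supremum over Hermitian $X \in M_k \otimes M_n$ with $\big\|X\big\|_{tr} \leq 1$, the Jordan decomposition $X = X_+ - X_-$ needs no Schmidt-number control at all, since $k$-positivity of $\Phi$ means precisely that $id_k \otimes \Phi$ is positive on all of $M_k \otimes M_n$.
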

\begin{proof}
	To see the ``only if'' implication, simply use Corollary~\ref{cor:mainCB} with $r = 2n$.
	
	For the ``if'' implication, observe that any positive trace-preserving map $\Psi$ is necessarily Hermiticity-preserving and has $\big\|\Psi\big\|_{tr}^H \leq 1$. Letting $\Psi = id_k \otimes \Phi$ then shows that any $k$-positive trace-preserving map $\Phi$ has $\big\|id_k \otimes \Phi\big\|_{tr}^H \leq 1$. Thus the set of Hermiticity-preserving linear maps $\Phi$ with $\big\|id_k \otimes \Phi\big\|_{tr}^H \leq 1$ contains the set of $k$-positive trace-preserving maps, so the ``if'' implication follows from Lemma~\ref{lem:contrac}.
\end{proof}

\subsection{Right CP-Invariant Cones as Operator Systems}\label{sec:opSysRight}

In this section we establish a tight link between right CP-invariant cones and operator systems. It is not difficult to verify that if $V(M_n)$ is any operator system, then $CP(M_n,V(M_n))$ is a right CP-invariant cone. Similarly, $CP(V(M_n),M_n)$ is easily seen to be a closed left CP-invariant cone. The main result of this section shows that these properties actually characterize the possible cones of completely positive maps to and from $M_n$. We begin with two simple lemmas.
\begin{lemma}\label{lem:right_cp_choi}
	Let $\Phi : M_m \rightarrow M_n$ and $A \in M_{r,m}$. Then
	\begin{align*}
		({\rm Ad}_A \otimes id_n)(C_{\Phi}) = C_{\Phi \circ {\rm Ad}_{A^T}}.
	\end{align*}
\end{lemma}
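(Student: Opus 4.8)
The plan is to prove the identity by a direct computation on the Choi matrices, unwinding the definitions of the Choi matrix and of the adjoint map. Recall that the Choi matrix is $C_\Phi = \sum_{i,j=1}^m \ketbra{i}{j} \otimes \Phi(\ketbra{i}{j})$, and that ${\rm Ad}_A(X) = AXA^\dagger$. The goal statement relates applying the adjoint map ${\rm Ad}_A$ to the \emph{first} tensor factor of $C_\Phi$ (via $id_n$ on the second factor) with the Choi matrix of the composite map $\Phi \circ {\rm Ad}_{A^T}$. The appearance of the transpose $A^T$ is the subtle point and is exactly what makes the identity work: the first subsystem of the Choi matrix carries the ``input index'' of $\Phi$ in a contravariant way, so conjugating it by $A$ on that factor corresponds to precomposing $\Phi$ with conjugation by $A^T$ rather than by $A$.

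First I would compute the left-hand side explicitly. Applying ${\rm Ad}_A \otimes id_n$ to $C_\Phi$ gives
\begin{align*}
	({\rm Ad}_A \otimes id_n)(C_\Phi) = \sum_{i,j=1}^m A\ketbra{i}{j}A^\dagger \otimes \Phi(\ketbra{i}{j}).
\end{align*}
Next I would compute the right-hand side. By definition of the composite map, $(\Phi \circ {\rm Ad}_{A^T})(\ketbra{k}{\ell}) = \Phi(A^T \ketbra{k}{\ell} \overline{A})$, so that
\begin{align*}
	C_{\Phi \circ {\rm Ad}_{A^T}} = \sum_{k,\ell=1}^r \ketbra{k}{\ell} \otimes \Phi\big(A^T \ketbra{k}{\ell} \overline{A}\big),
\end{align*}
where $\overline{A} = (A^T)^\dagger$. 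The heart of the argument is then to reconcile these two expressions. I would use the matrix-element identities $A\ket{i} = \sum_k \bra{k}A\ket{i}\ket{k}$ and $A^T\ket{k} = \sum_i \bra{i}A^T\ket{k}\ket{i} = \sum_i \bra{k}A\ket{i}\ket{i}$, which is precisely where the transpose converts the index bookkeeping of the first subsystem into the precomposition by $A^T$. Substituting these into the left-hand side and collecting the coefficients $\bra{k}A\ket{i}\overline{\bra{\ell}A\ket{j}}$, then using linearity of $\Phi$ to pull the sums over $i,j$ inside, should produce exactly the right-hand side term by term.

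The main obstacle, and the only place where care is genuinely required, is tracking the complex conjugates and transposes correctly so that the $A^T$ and $\overline{A}$ land in the right spots. Since $\Phi$ is merely linear (not necessarily Hermiticity-preserving) and $A$ is a general rectangular matrix, one cannot be cavalier about $A^\dagger$ versus $A^T$ versus $\overline{A}$; the identity would fail with, say, $A^\dagger$ in place of $A^T$. I expect the cleanest route is to verify the claim on the basis elements $\ketbra{i}{j}$ of $M_m$, where everything reduces to the scalar relation above, and then invoke bilinearity of both sides in the ``$\ketbra{i}{j}$'' slot to conclude. An alternative, possibly slicker, approach would be to apply Proposition~\ref{prop:choi_kraus_ops_general} to both sides by reading off generalized Choi--Kraus operators, but the direct basis-element computation is self-contained and avoids having to match up singular-value decompositions.
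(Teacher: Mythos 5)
Your proof is correct, but it takes a genuinely different route from the paper's. You verify the identity by a direct basis-element computation: expanding $A\ketbra{i}{j}A^\dagger = \sum_{k,\ell}\bra{k}A\ket{i}\overline{\bra{\ell}A\ket{j}}\,\ketbra{k}{\ell}$ on the left and $A^T\ketbra{k}{\ell}\overline{A} = \sum_{i,j}\bra{k}A\ket{i}\overline{\bra{\ell}A\ket{j}}\,\ketbra{i}{j}$ on the right, and observing that the two quadruple sums coincide after pulling the $i,j$ sum through $\Phi$ by linearity. This is airtight, and your care about $A^\dagger$ versus $A^T$ versus $\overline{A}$ is exactly where the content lies (the identity would indeed fail with $A^\dagger$ in place of $A^T$). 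The paper instead argues at the level of the Choi--Jamio{\l}kowski isomorphism: it writes $({\rm Ad}_A\otimes id_n)(C_\Phi) = m({\rm Ad}_A\otimes\Phi)(\ketbra{\psi_+}{\psi_+})$, recognizes $m({\rm Ad}_A\otimes id)(\ketbra{\psi_+}{\psi_+})$ as $SC_{{\rm Ad}_A}S^\dagger$, and then invokes Proposition~\ref{prop:left_right_choi} (the relation $C_{\Psi^\dagger}=S\overline{C_\Psi}S^\dagger$) together with $({\rm Ad}_A)^\dagger = {\rm Ad}_{A^\dagger}$ and $\overline{A^\dagger}=A^T$ to land on $C_{\Phi\circ{\rm Ad}_{A^T}}$ in four lines. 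The paper's route is shorter given the machinery it has already built, and it explains conceptually why the transpose appears (it is the composition of the dagger and the entrywise conjugate hidden in the swap identity); your route is more elementary and self-contained, requiring nothing beyond the definition of the Choi matrix, at the cost of explicit index bookkeeping. Either is a legitimate proof.
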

\begin{proof}
	The proof follows from simple algebraic manipulations:
	\begin{align*}
		({\rm Ad}_A \otimes id_n)(C_{\Phi}) & = m({\rm Ad}_A \otimes \Phi)(\ketbra{\psi_+}{\psi_+}) \\
		& = m(id_m \otimes \Phi)(S C_{{\rm Ad}_A} S^\dagger) \\
		& = m(id_m \otimes \Phi)(\overline{C_{{\rm Ad}_A}^\dagger}) \\
		& = C_{\Phi \circ {\rm Ad}_{A^T}},
	\end{align*}
	where $S$ is the swap operator and the third equality follows from Proposition~\ref{prop:left_right_choi}.
\end{proof}

For the following lemma, we use $S_k$ to denote the cone of (unnormalized) states $\rho$ with $SN(\rho) \leq k$ and $P_k$ to denote the cone of $k$-block positive operators.
\begin{lemma}\label{lem:cn_to_os}
	Let $C_k \subseteq M_k \otimes M_n$ be a cone such that $S_k \subseteq C_k \subseteq P_k$ and $({\rm Ad}_A \otimes id_n)(C_k) \subseteq C_k$ for all $A \in M_k$. Then there exists a family of cones $\{C_m\}_{m \neq k}$ such that $\{C_m\}_{m=1}^{\infty}$ defines an operator system on $M_k$, given by
	\begin{align*}
		C_m := \big\{ \sum_i({\rm Ad}_{A_i} \otimes id_n)(X_i) : A_i \in M_{m,k}, X_i \in C_k, \forall \, i \big\}.
	\end{align*}
	Furthermore, the cones $\{C_m\}$ are uniquely determined when $m \leq k$.
\end{lemma}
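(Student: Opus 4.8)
The plan is to define $\tilde{C}_m := \{ \sum_i ({\rm Ad}_{A_i} \otimes id_n)(X_i) : A_i \in M_{m,k},\, X_i \in C_k \}$ for \emph{every} $m \geq 1$ (not just $m \neq k$), and then to show that setting $C_m := \tilde{C}_m$ yields an operator system on $M_n$ that agrees with the given cone at level $k$. Each $\tilde{C}_m$ is automatically a convex cone, being the set of finite sums of its generators and closed under nonnegative scaling (absorb the scalar into $A_i$). First I would verify the consistency requirement $\tilde{C}_k = C_k$: the inclusion $C_k \subseteq \tilde{C}_k$ is immediate by taking a single term with $A_1 = I_k$, while $\tilde{C}_k \subseteq C_k$ follows from the hypothesis $({\rm Ad}_A \otimes id_n)(C_k) \subseteq C_k$ together with the fact that $C_k$ is closed under addition of its elements. (Here the hypothesis that $C_k$ is a cone must be read as a \emph{convex} cone, as is required of every level of an operator system.)

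Next I would establish the two defining properties of an operator system. The compression property is formal: for $B \in M_{m_2,m_1}$ and a generator $({\rm Ad}_{A_i} \otimes id_n)(X_i) \in \tilde{C}_{m_1}$, the identity $({\rm Ad}_B \otimes id_n)({\rm Ad}_{A_i} \otimes id_n) = ({\rm Ad}_{BA_i} \otimes id_n)$ together with $BA_i \in M_{m_2,k}$ shows $({\rm Ad}_B \otimes id_n)(\tilde{C}_{m_1}) \subseteq \tilde{C}_{m_2}$. The substantive axiom is $C_1 = M_n^+$, and this is exactly where $S_k \subseteq C_k \subseteq P_k$ enters. For $\tilde{C}_1 \subseteq M_n^+$: a generator $({\rm Ad}_{\bra{a}} \otimes id_n)(X)$ with $X \in C_k \subseteq P_k$ satisfies $\bra{w} ({\rm Ad}_{\bra{a}} \otimes id_n)(X) \ket{w} = (\bra{a} \otimes \bra{w}) X (\ket{a} \otimes \ket{w}) \geq 0$, since $\ket{a} \otimes \ket{w}$ has Schmidt rank $1 \leq k$ and $X$ is $k$-block positive. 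For $M_n^+ \subseteq \tilde{C}_1$: any $\ketbra{y}{y} \in M_n^+$ equals $({\rm Ad}_{\bra{e}} \otimes id_n)(\ketbra{e}{e} \otimes \ketbra{y}{y})$ for a unit vector $\ket{e} \in \bb{C}^k$, with $\ketbra{e}{e} \otimes \ketbra{y}{y} \in S_k \subseteq C_k$, and summing rank-one pieces handles general positive semidefinite operators. Hence $\{C_m\}$ is an operator system on $M_n$.

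For uniqueness on levels $m \leq k$, suppose $\{D_m\}$ is any operator system with $D_k = C_k$; I would show $D_m = \tilde{C}_m$. The inclusion $\tilde{C}_m \subseteq D_m$ follows from the compression axiom of $\{D_m\}$ (each generator lands in $D_m$, since $A_i \in M_{m,k}$ sends $D_k = C_k$ into $D_m$) together with convexity of $D_m$. For $D_m \subseteq \tilde{C}_m$, I would take an isometry $V \in M_{k,m}$ with $V^\dagger V = I_m$, which exists precisely because $m \leq k$. Given $Y \in D_m$, the compression axiom gives $({\rm Ad}_V \otimes id_n)(Y) \in D_k = C_k$, and then $({\rm Ad}_{V^\dagger} \otimes id_n)(({\rm Ad}_V \otimes id_n)(Y)) = ({\rm Ad}_{V^\dagger V} \otimes id_n)(Y) = Y$ exhibits $Y$ as a single generator of $\tilde{C}_m$ (with $V^\dagger \in M_{m,k}$ and $({\rm Ad}_V \otimes id_n)(Y) \in C_k$). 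Thus $D_m = \tilde{C}_m = C_m$ for all $m \leq k$.

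The routine parts are the compression property and the closure/consistency bookkeeping. I expect the genuine content to be twofold. First is the identification $C_1 = M_n^+$, the only place the sandwiching $S_k \subseteq C_k \subseteq P_k$ is invoked: the lower bound supplies $M_n^+ \subseteq C_1$ via separable seeds, and the upper bound supplies $C_1 \subseteq M_n^+$ via positivity against product states. Second is the uniqueness round-trip $Y \mapsto ({\rm Ad}_V \otimes id_n)(Y) \mapsto ({\rm Ad}_{V^\dagger} \otimes id_n)(\cdot)$, whose validity hinges on the existence of an isometry $\bb{C}^m \hookrightarrow \bb{C}^k$ and hence on $m \leq k$; this is exactly why uniqueness must fail for $m > k$, where the $k$-super minimal and $k$-super maximal structures of Section~\ref{sec:kMinOpSys} diverge.
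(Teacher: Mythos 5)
Your proposal is correct and follows essentially the same route as the paper's proof: the compression property via $\mathrm{Ad}_B \circ \mathrm{Ad}_{A_i} = \mathrm{Ad}_{BA_i}$, the identification $C_1 = M_n^+$ using the sandwich $S_k \subseteq C_k \subseteq P_k$ (separable seeds for one inclusion, block positivity against product vectors for the other), and the isometry round-trip $Y \mapsto (\mathrm{Ad}_V \otimes id_n)(Y) \mapsto (\mathrm{Ad}_{V^\dagger} \otimes id_n)(\cdot)$ for uniqueness when $m \leq k$. Your explicit check that $\tilde{C}_k = C_k$ (and the accompanying remark that $C_k$ must be a \emph{convex} cone for this and for the compression into level $k$ to work) is a small piece of bookkeeping the paper leaves implicit, but it is not a different argument.
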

\begin{proof}
	We first prove that the family of cones given by the proposition do indeed define an operator system. We first show that $({\rm Ad}_B \otimes id_n)(Y) \in C_{m_2}$ for any $m_1,m_2 \in \bb{N}$, $Y \in C_{m_1}$, and $B \in M_{m_2,m_1}$. This is true from the definition of $C_m$ if $m_1 = k$. If $m_1 \neq n$ then write $Y = \sum_i({\rm Ad}_{A_i} \otimes id_n)(X_i)$ for some $\{X_i\} \subset C_k$ and $\{A_i\} \subset M_{m_1,k}$. Then $BA_i \in M_{m_2,k}$ for all $i$, so
	\begin{align*}
		({\rm Ad}_B \otimes id_n)(Y) = \sum_i({\rm Ad}_{B A_i} \otimes id_n)(X_i) \in C_{m_2}.
	\end{align*}
	
	We now show that $C_1 = M_n^{+}$. For any $\ket{v} \in \bb{C}^n$, note that $\ketbra{v}{v} \otimes X \in S_n$ if and only if $X \in M_n^{+}$, and similarly $\ketbra{v}{v} \otimes X \in P_n$ if and only if $X \in M_n^{+}$. It follows that $\ketbra{v}{v} \otimes X \in C_n$ if and only if $X \in M_n^{+}$. Then $C_1 \supseteq \{ ({\rm Ad}_{A} \otimes id_n)(\ketbra{v}{v} \otimes X) : A \in M_{1,n}, X \in M_n^{+} \big\} = M_n^{+}$, where we have identified $\bb{R}_+ \otimes M_n^+$ with $M_n^+$. The opposite inclusion follows simply from noting that if $X \in C_1$ and $\ket{v} \in \bb{C}^n$ then $\ketbra{v}{v} \otimes X \in C_n$, so $X \in M_n^+$. It follows that $C_1 \subseteq M_n^{+}$, so $C_1 = M_n^{+}$, so the cones $\{C_m\}_{m=1}^{\infty}$ define an operator system on $M_n$.
	
	To prove uniqueness of the cones $C_m$ when $m \geq k$, assume that there exists another family of cones $\{D_m\}_{m=1}^{\infty}$ that define an operator system such that $D_k = C_k$. It is clear that $C_m \subseteq D_m$ for all $m \in \bb{N}$, so we only need to prove the other inclusion. Fix $m \leq k$, let $X \in D_m$, and let $V : \bb{C}^m \rightarrow \bb{C}^k$ be an isometry (i.e., $V^{\dagger}V = I$). Then $Y := ({\rm Ad}_{V} \otimes id_n)(X) \in D_k = C_k$, so $X = ({\rm Ad}_{V^\dagger} \otimes id_n)(Y) \in C_m$. Thus $D_m \subseteq C_m$, so $D_m = C_m$ for $m \leq k$.
\end{proof}

Note that the operator system constructed in Lemma~\ref{lem:cn_to_os} is $OMAX^k(V)$, where $V$ is any operator system on $M_n$ whose $k$-th cone is $C_k$. For convenience, we denote this operator system simply by $OMAX^k(C_k)$. Similarly, we denote $OMIN^k(V)$ by $OMIN^k(C_k)$, and we note that the uniqueness property of Lemma~\ref{lem:cn_to_os} ensures that this notation is well-defined. Before stating our main result, we recall that $\cl{P}(M_n)$ denotes the cone of positive maps on $M_n$, $\cl{S}(M_n)$ denotes the cone of superpositive maps on $M_n$, and $C_{\cl{C}}$ denotes the cone of Choi matrices of maps from the cone $\cl{C}$.
\begin{thm}\label{thm:right_cp_invariant}
	Let $\cl{C} \subseteq \cl{L}(M_n)$ be a convex cone. The following are equivalent:
	\begin{enumerate}[(a)]
		\item the cone $\cl{C}$ is right CP-invariant with $\cl{S}(M_n) \subseteq \cl{C} \subseteq \cl{P}(M_n)$;
		\item there exists an operator system $V_1(M_n)$, defined by cones $\{C_m\}_{m=1}^{\infty}$, such that $C_{\cl{C}} = C_n$;
		\item there exists an operator system $V_2(M_n)$ such that $\cl{C} = \cl{CP}(M_n,V_2(M_n))$; and
		\item there exists an operator system $V_3(M_n)$ such that $(\cl{C}^{\circ})^{\dagger} = \cl{CP}(V_3(M_n),M_n)$.
	\end{enumerate}
	Furthermore, we can choose $V_1(M_n) = V_2(M_n) = OMIN^n(C_{\cl{C}})$ and $V_3(M_n) = OMAX^n(C_{\cl{C}})$.
\end{thm}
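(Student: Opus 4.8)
The plan is to prove the cycle (a) $\Rightarrow$ (b) $\Rightarrow$ (c) $\Rightarrow$ (a) and then the separate equivalence (a) $\Leftrightarrow$ (d), reading off the explicit operator systems as we go. Throughout I use the Choi--Jamio{\l}kowski bijection $\Phi \mapsto C_\Phi$, under which (by the correspondences of Section~\ref{sec:choi_jamiolkowski_schmidt}) the condition $\cl{S}(M_n) \subseteq \cl{C} \subseteq \cl{P}(M_n)$ is equivalent to the statement that $C_{\cl{C}}$ lies between the separable cone and the block positive cone in $M_n \otimes M_n$. I also repeatedly use the composition identity $C_{\Omega \circ \Psi} = (id_n \otimes \Omega)(C_\Psi)$ recorded in the proof of Proposition~\ref{prop:choi_inner_basic}.

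For (a) $\Rightarrow$ (b): assuming $\cl{C}$ is right CP-invariant, Lemma~\ref{lem:right_cp_choi} gives $({\rm Ad}_A \otimes id_n)(C_{\cl{C}}) = C_{\cl{C} \circ {\rm Ad}_{A^T}} \subseteq C_{\cl{C}}$ for every $A \in M_n$, since ${\rm Ad}_{A^T}$ is completely positive. Together with the inclusions $\cl{S}(M_n) \subseteq \cl{C} \subseteq \cl{P}(M_n)$ translated through the Choi map, this is exactly the hypothesis of Lemma~\ref{lem:cn_to_os} with $k = n$, which produces an operator system whose $n$-th cone is $C_{\cl{C}}$. In particular $OMIN^n(C_{\cl{C}})$ is such a system, so it may be taken as $V_1$. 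For (b) $\Rightarrow$ (c) I take $V_2 = OMIN^n(C_{\cl{C}})$ and show $\Phi \in \cl{CP}(M_n, OMIN^n(C_{\cl{C}}))$ iff $C_\Phi \in C_{\cl{C}}$, which by injectivity of the Choi map is the same as $\Phi \in \cl{C}$. The forward direction is immediate by evaluating complete positivity at level $m = n$ on $n\ketbra{\psi_+}{\psi_+} \geq 0$, which returns $C_\Phi \in C^{\textup{min},n}_n = C_{\cl{C}}$. For the converse I unwind the definition of the minimal cones: $(id_m \otimes \Phi)(X) \in C^{\textup{min},n}_m$ for all $X \geq 0$ reduces to showing $\Psi \circ \Phi$ is completely positive for every $\Psi$ with $(id_n \otimes \Psi)(C_{\cl{C}}) \subseteq (M_n \otimes M_n)^+$, and since $C_{\Psi \circ \Phi} = (id_n \otimes \Psi)(C_\Phi)$ with $C_\Phi \in C_{\cl{C}}$, this positivity is precisely what the condition on $\Psi$ supplies.

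For (c) $\Rightarrow$ (a) I observe that $\cl{CP}(M_n, V_2)$ is always right CP-invariant (precomposing with a completely positive map preserves complete positivity into $V_2$) and is sandwiched between $\cl{S}(M_n)$ and $\cl{P}(M_n)$ by Corollary~\ref{cor:cp_sub_p}. For (a) $\Rightarrow$ (d) with $V_3 = OMAX^n(C_{\cl{C}})$, I first use Proposition~\ref{prop:right_cp_main} and the fact that $\Theta$ is completely positive iff $\Theta^\dagger$ is, to rewrite $(\cl{C}^\circ)^\dagger = \{\Xi : \Xi \circ \Omega \text{ is completely positive for all } \Omega \in \cl{C}\}$; via the composition identity this becomes $\{\Xi : (id_n \otimes \Xi)(X) \geq 0 \text{ for all } X \in C_{\cl{C}}\}$. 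Separately, using $C^{\textup{max},n}_m = \{\sum_i ({\rm Ad}_{A_i} \otimes id_n)(X_i) : A_i \in M_{m,n},\ X_i \in C_{\cl{C}}\}$ together with the commutation $(id_m \otimes \Xi)\big(({\rm Ad}_A \otimes id_n)(X)\big) = ({\rm Ad}_A \otimes id_n)\big((id_n \otimes \Xi)(X)\big)$, complete positivity from $OMAX^n(C_{\cl{C}})$ into $M_n$ collapses to the same single-level requirement $(id_n \otimes \Xi)(X) \geq 0$ for all $X \in C_{\cl{C}}$; matching the two descriptions yields (d). The reverse implication (d) $\Rightarrow$ (a) follows because $\cl{CP}(V_3, M_n)$ is left CP-invariant and sandwiched by $\cl{S},\cl{P}$, so by Proposition~\ref{prop:right_cp_dual} (and the facts $\cl{S}^\circ = \cl{P}$, $\cl{S}^\dagger = \cl{S}$, $\cl{P}^\dagger = \cl{P}$) dualizing recovers that $\cl{C} = \cl{C}^{\circ\circ}$ is right CP-invariant and lies between $\cl{S}$ and $\cl{P}$.

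The main obstacle is the pair of ``collapse to a single Choi level'' arguments inside (b) $\Rightarrow$ (c) and (a) $\Rightarrow$ (d): in each case the a priori all-levels complete-positivity condition must be reduced to the level-$n$ statement $C_\Phi \in C_{\cl{C}}$ (respectively $(id_n \otimes \Xi)(C_{\cl{C}}) \subseteq (M_n \otimes M_n)^+$), and carrying this out requires carefully expanding the defining formulas for the $OMIN^n$ and $OMAX^n$ cones and applying the Choi composition identity at exactly the right place. A minor technical caveat is that the step $\cl{C}^{\circ\circ} = \cl{C}$ in (d) $\Rightarrow$ (a) needs $\cl{C}$ to be closed; this is automatic under (b) and (c) (where $\cl{C}$ appears as the $n$-th cone of an operator system, respectively as $\cl{CP}(M_n, V_2)$, both closed), so the equivalences should be read for closed convex cones $\cl{C}$.
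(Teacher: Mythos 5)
Your proof is correct and follows essentially the same route as the paper's: the cycle through Lemmas~\ref{lem:right_cp_choi} and~\ref{lem:cn_to_os} for (a)$\Leftrightarrow$(b), the level-$n$ collapse for (b)$\Rightarrow$(c), Corollary~\ref{cor:cp_sub_p} for (c)$\Rightarrow$(a), and Proposition~\ref{prop:right_cp_main} plus bipolar duality for the equivalence with (d); the only real difference is that you unwind the $OMIN^n$/$OMAX^n$ cone definitions directly where the paper instead cites Xhabli's Propositions~2.3.3 and~2.3.7. Your closedness caveat in (d)$\Rightarrow$(a) is well taken --- the paper's own argument also invokes $\cl{C}^{\circ\circ}=\cl{C}$ at that step without stating the hypothesis, so the equivalence should indeed be read for closed convex cones.
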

\begin{proof}
	We prove the result by showing that $(a) \Leftrightarrow (b)$, $(b) \Rightarrow (c)$, $(c) \Rightarrow (a)$, $(b) \Rightarrow (d)$, and $(d) \Rightarrow (a)$.
	
	To see that $(a) \Rightarrow (b)$, define $C_n := C_{\cl{C}}$. If $A \in M_n$ and $\Phi \in \cl{C}$ then Lemma~\ref{lem:right_cp_choi} tells us that
	\begin{align}\label{eq:cn_to_invariant}
		({\rm Ad}_A \otimes id_n)(C_{\Phi}) & = C_{\Phi \circ {\rm Ad}_{A^T}} \in C_n,
	\end{align}
	where the inclusion comes from the fact that $\cl{C}$ is right CP-invariant. The implication $(a) \Rightarrow (b)$ and the fact that we can choose $V_1(M_n) = OMAX^n(C_{\cl{C}})$ then follows from Lemma~\ref{lem:cn_to_os}. The reverse implication $(b) \Rightarrow (a)$ also follows from Equation~\eqref{eq:cn_to_invariant}, but this time we use the fact that $C_n$ is a cone defining an operator system to get the inclusion. The fact that $\cl{S}(M_n) \subseteq \cl{C} \subseteq \cl{P}(M_n)$ follows from the fact that for the minimal operator system on $M_n$, $C_n$ is the cone of block positive operators and for the maximal operator system on $M_n$, $C_n$ is the cone of separable operators (see Theorem~\ref{thm:opSysConeChar}).
	
	To see that $(b) \Rightarrow (c)$, let $V_2(M_n) = OMAX^n(C_{\cl{C}})$. We then have to show that if $C_{\cl{C}} = C_n$, then $\cl{C} = \cl{CP}(M_n,V_2(M_n))$. We already showed that $(b) \Rightarrow (a)$, so we know that $\cl{C}$ is right CP-invariant. If $\Phi \in \cl{C}$ then for any $X \in (M_n \otimes M_n)^{+}$ there exists $\Psi \in \cl{CP}$ such that
	\begin{align*}
		(id_n \otimes \Phi)(X) = C_{\Phi \circ \Psi} \in C_n,
	\end{align*}
	where the inclusion comes from $\cl{C}$ being right CP-invariant. It follows that via \cite[Proposition~2.3.3]{Xthesis} that $\Phi \in \cl{CP}(M_n,V_2(M_n))$, so $\cl{C} \subseteq \cl{CP}(M_n,V_2(M_n))$. To see the opposite inclusion, simply note that if $\Phi \in \cl{CP}(M_n,V_2(M_n))$ then, because $\ketbra{\psi_+}{\psi_+} \in (M_n \otimes M_n)^{+}$, we have $C_{\Phi} = m(id_n \otimes \Phi)(\ketbra{\psi_+}{\psi_+}) \in C_n = C_{\cl{C}}$, so $\Phi \in \cl{C}$. It follows that $\cl{C} = \cl{CP}(M_n,V_2(M_n))$.
	
	To prove $(c) \Rightarrow (a)$, simply note that $\cl{CP}(M_n,V_2(M_n))$ is trivially right CP-invariant. To see that $\cl{S}(M_n) \subseteq \cl{CP}(M_n,V_2(M_n)) \subseteq \cl{P}(M_n)$, we simply use Corollary~\ref{cor:cp_sub_p}.
	
	The proof that $(b) \Rightarrow (d)$ mimics the proof that $(b) \Rightarrow (c)$. Let $V_3(M_n) = OMAX^n(C_{\cl{C}})$. Then for any $\Psi \in \cl{C}^{\circ}$ and $\Phi \in \cl{C}$ we have $\Psi^{\dagger} \circ \Phi \in \cl{CP}$ (Proposition~\ref{prop:right_cp_main}), so $C_{\Psi^{\dagger} \circ \Phi} \in (M_n \otimes M_n)^{+}$. It follows that $(id_n \otimes \Psi^{\dagger})(C_n) \subseteq (M_n \otimes M_n)^{+}$. \cite[Proposition~2.3.7]{Xthesis} implies that $\Psi^{\dagger} \in \cl{CP}(V_3(M_n),M_n)$, so $(\cl{C}^{\circ})^{\dagger} \subseteq \cl{CP}(V_3(M_n),M_n)$. The opposite inclusion follows by simply reversing this argument.
	
	The implication $(d) \Rightarrow (a)$ follows similarly by the fact that $\cl{CP}(V_3(M_n),M_n)$ is trivially closed and left CP-invariant. To see that $\cl{S}(M_n) \subseteq (\cl{CP}(V_3(M_n),M_n)^\dagger)^\circ \subseteq \cl{P}(M_n)$, we again use Corollary~\ref{cor:cp_sub_p}.
\end{proof}

As a demonstration of Theorem~\ref{thm:right_cp_invariant}, we now recall a right CP-invariant cone that we have not yet considered in this chapter -- the cone of anti-degradable maps. In particular, we have the following result, which shows that the anti-degradable maps are exactly the completely positive maps into the operator system formed by the shareable operators.
\begin{thm}\label{thm:antidegrad_opsys}
	Let $H_m^2$ denote the cone of shareable operators in $M_m \otimes M_n$. Then the family of cones $\{H_m^2\}_{m=1}^\infty$ defines an operator system $V(M_n)$ such that $\cl{CP}(M_n,V(M_n)) = \cl{AD}$, the cone of anti-degradable maps.
\end{thm}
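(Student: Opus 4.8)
The plan is to lean on the right CP-invariant cone machinery of Theorem~\ref{thm:right_cp_invariant} for the conceptual backbone, but the genuine content is an explicit level-by-level identification of the operator system, which I would carry out by hand. First I would record that $\cl{AD}$ is a convex right CP-invariant cone satisfying $\cl{S}(M_n) \subseteq \cl{AD} \subseteq \cl{P}(M_n)$: right CP-invariance was established in Section~\ref{sec:choi_jamiolkowski_separable}, superpositive maps are anti-degradable (the shareable operators strictly contain the separable ones), and anti-degradable maps are by definition completely positive and hence positive. Thus Theorem~\ref{thm:right_cp_invariant} already guarantees that $\cl{AD} = \cl{CP}(M_n,V_2(M_n))$ for an operator system $V_2(M_n)$ whose $n$-th cone equals $C_{\cl{AD}}$; by the result of \cite{ML09} this cone is exactly $H_n^2$, the shareable operators in $M_n \otimes M_n$.

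Second, I would verify directly that the full family $\{H_m^2\}_{m=1}^\infty$ satisfies the operator system axioms of Section~\ref{sec:op_system}. Convexity is immediate by adding witnesses. For $H_1^2 = M_n^+$, every $X \geq 0$ is shareable via the witness $X \otimes X / \Tr(X)$, whose two marginals both equal $X$. For the invariance axiom, given $X \in H_{m_1}^2$ with positive witness $\tilde X \in M_{m_1} \otimes M_n \otimes M_n$ and $A \in M_{m_2,m_1}$, the operator $(A \otimes I_n \otimes I_n)\tilde X (A \otimes I_n \otimes I_n)^\dagger$ is a positive witness for $({\rm Ad}_A \otimes id_n)(X)$, since $A$ acts only on the first tensor factor and therefore commutes with the partial traces over the two shared copies of $M_n$. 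This shows $\{H_m^2\}_{m=1}^\infty$ is a bona fide operator system $V(M_n)$.

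Third, I would prove $\cl{CP}(M_n,V(M_n)) = \cl{AD}$ at all levels simultaneously using $2$-extendibility (Theorem~\ref{thm:antidegrad_2broadcast}). For the inclusion $\cl{AD} \subseteq \cl{CP}(M_n,V(M_n))$, let $\Phi$ be anti-degradable, hence $2$-extendible, with a completely positive broadcast map $\tilde\Phi : M_n \to M_n \otimes M_n$ satisfying $\Tr_1 \circ \tilde\Phi = \Tr_2 \circ \tilde\Phi = \Phi$. Then for any $m$ and any $X \geq 0 \in M_m \otimes M_n$, the operator $(id_m \otimes \tilde\Phi)(X) \geq 0$ is a shareability witness for $(id_m \otimes \Phi)(X)$, because tracing out either copy of $M_n$ returns $(id_m \otimes \Phi)(X)$; hence $(id_m \otimes \Phi)(X) \in H_m^2$ for every $m$, so $\Phi \in \cl{CP}(M_n,V(M_n))$. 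For the reverse inclusion, any $\Phi \in \cl{CP}(M_n,V(M_n))$ satisfies the level-$n$ condition at $X = \ketbra{\psi_+}{\psi_+}$, forcing $C_\Phi = n(id_n \otimes \Phi)(\ketbra{\psi_+}{\psi_+}) \in H_n^2$, i.e.\ $C_\Phi$ is shareable, whence $\Phi$ is anti-degradable by \cite{ML09}.

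The main obstacle is the gap between what Theorem~\ref{thm:right_cp_invariant} delivers and what is claimed: that theorem pins down only the $n$-th cone $C_n = H_n^2$ and produces an operator system whose cones are canonically determined only at levels $m \le n$, whereas distinct operator systems can share the same $n$-th cone at levels $m > n$. The substance here is therefore the assertion that the shareable cones at \emph{every} level $m$ assemble into a single operator system and that computing $\cl{CP}(M_n,V(M_n))$ with these higher cones still recovers exactly $\cl{AD}$. The $2$-extendibility argument is precisely what closes this gap, since one broadcast map $\tilde\Phi$ produces consistent shareability witnesses at all levels $m$ at once, so no mismatch between the higher-level shareable cones and $\cl{AD}$ can arise.
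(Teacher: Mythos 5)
Your proof is correct and follows essentially the same route as the paper's: a direct verification of the operator system axioms via explicit shareability witnesses, the $2$-extendibility characterization (Theorem~\ref{thm:antidegrad_2broadcast}) to produce witnesses at every level $m$ for the inclusion $\cl{AD} \subseteq \cl{CP}(M_n,V(M_n))$, and the Choi-matrix correspondence of \cite{ML09} for the reverse. The opening appeal to Theorem~\ref{thm:right_cp_invariant} is harmless extra scaffolding that you correctly flag as insufficient by itself, and your normalized witness $X \otimes X/\Tr(X)$ for $H_1^2 = M_n^+$ is in fact slightly more careful than the paper's.
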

\begin{proof}
	We first show that the family of cones $\{H_m^2\}_{m=1}^\infty$ satisfies the two defining properties of operator systems on $M_n$. The cone $H_1^2$ of shareable operators in $M_1 \otimes M_n \cong M_n$ indeed satisfies $H_1^2 = M_n^+$ because if $X \in M_n^+$ is any positive semidefinite operator then $X \otimes X \in M_n \otimes M_n$ is a symmetric extension of it. To see that $({\rm Ad}_A \otimes id_n)(H_{m_1}^2) \subseteq H_{m_2}^2$ for all $m_1,m_2 \in \mathbb{N}$ and $A \in M_{m_2,m_1}$, simply note that if $X \in H_{M_1}^2$ is extended by the operator $\tilde{X} \in M_{m_1} \otimes (M_n \otimes M_n)$, then $({\rm Ad}_A \otimes id_n)(X)$ is extended by $({\rm Ad}_A \otimes id_n \otimes id_n)(\tilde{X})$. It follows that $\{H_m^2\}_{m=1}^\infty$ defines an operator system, which we denote $V(M_n)$.
	
	We now show that $\Phi \in \cl{AD}$ if and only if $\Phi$ is completely positive from $M_n$ to $V(M_n)$. If $\Phi \in \cl{CP}(M_n,V(M_n))$ then in particular $C_\Phi = m(id_n \otimes \Phi)(\ketbra{\psi_+}{\psi_+}) \in H_n^2$. We recall from Section~\ref{sec:choi_jamiolkowski_separable} that this implies $\Phi \in \cl{AD}$, so $\cl{CP}(M_n,V(M_n)) \subseteq \cl{AD}$. To see the opposite inclusion, suppose $\Phi \in \cl{AD}$. Theorem~\ref{thm:antidegrad_2broadcast} says that $\Phi$ is $2$-extendible, so there exists a map $\tilde{\Phi} : M_n \rightarrow (M_n \otimes M_n)$ such that $\Tr_1 \circ \tilde{\Phi} = \Tr_2 \circ \tilde{\Phi} = \Phi$. Then, for all $m \geq 1$ and all $X \in (M_m \otimes M_n)^+$ we have
	\begin{align*}
		(id_m \otimes \Phi)(X) = (id_m \otimes (\Tr_1 \circ \tilde{\Phi}))(X) = (id_m \otimes (\Tr_2 \circ \tilde{\Phi}))(X).
	\end{align*}
	It follows that $(id_m \otimes \Phi)(X)$ is shareable (indeed, it is extended by $(id_m \otimes \tilde{\Phi})(X)$), so $\Phi \in \cl{CP}(M_n,V(M_n))$.
\end{proof}

By recalling that the shareable operators and anti-degradable maps are naturally generalized by the $s$-shareable operators and $s$-extendible maps respectively, the following generalization of Theorem~\ref{thm:antidegrad_opsys} becomes clear (and hence we present it without proof).
\begin{thm}\label{thm:kextend_opsys}
	Let $H_m^s$ denote the cone of $s$-shareable operators in $M_m \otimes M_n$. Then the family of cones $\{H_m^s\}_{m=1}^\infty$ defines an operator system $V_s(M_n)$ such that $\cl{CP}(M_n,V_s(M_n)) = \cl{B}_s$, the cone of $s$-extendible maps.
\end{thm}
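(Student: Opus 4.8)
The plan is to follow the proof of Theorem~\ref{thm:antidegrad_opsys} verbatim in structure, simply replacing $2$-shareability by $s$-shareability and anti-degradability by $s$-extendibility throughout. The argument splits into two independent tasks: first, verify that the cones $\{H_m^s\}_{m=1}^\infty$ satisfy the two defining axioms of an operator system on $M_n$; second, establish the cone equality $\cl{CP}(M_n,V_s(M_n)) = \cl{B}_s$ by testing complete positivity on the single state $\ketbra{\psi_+}{\psi_+}$ in one direction and unfolding the definition of $s$-extendibility in the other.

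For the operator system axioms, the covariance condition $({\rm Ad}_A \otimes id_n)(H_{m_1}^s) \subseteq H_{m_2}^s$ is immediate: if $X \in H_{m_1}^s$ has an $s$-fold extension $\tilde X \in M_{m_1} \otimes M_n^{\otimes s}$, then $({\rm Ad}_A \otimes id_n \otimes \cdots \otimes id_n)(\tilde X)$ is a positive operator whose partial traces over all but the first and $i$-th $M_n$-factors return $({\rm Ad}_A \otimes id_n)(X)$, since ${\rm Ad}_A$ acts only on the control subsystem and hence commutes with each such partial trace. For the base case $H_1^s = M_n^+$, the inclusion $H_1^s \subseteq M_n^+$ holds by definition ($s$-shareable operators are positive), while the reverse inclusion follows by writing a given $X \in M_n^+$ in its spectral decomposition $X = \sum_i \lambda_i \ketbra{u_i}{u_i}$ and taking $\tilde X := \sum_i \lambda_i \ketbra{u_i}{u_i}^{\otimes s}$, exactly as in Equation~\eqref{eq:sep_extension}; tracing out all copies but one returns $X$ because the $\ket{u_i}$ are unit vectors. (Equivalently, one may invoke the fact recalled in Section~\ref{sec:shareable} that every separable operator is $s$-shareable, together with the observation that in $M_1 \otimes M_n$ every positive operator is separable.)

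For the cone equality, the inclusion $\cl{CP}(M_n,V_s(M_n)) \subseteq \cl{B}_s$ follows by evaluating a putative completely positive map $\Phi$ on $\ketbra{\psi_+}{\psi_+} \in (M_n \otimes M_n)^+$: this forces $C_\Phi = n(id_n \otimes \Phi)(\ketbra{\psi_+}{\psi_+}) \in H_n^s$, so $C_\Phi$ is $s$-shareable, and the Choi--Jamio{\l}kowski correspondence of Section~\ref{sec:choi_jamiolkowski_separable} (namely $\Phi$ is $s$-extendible if and only if $C_\Phi$ is $s$-shareable, from \cite{LRKKB11}) gives $\Phi \in \cl{B}_s$. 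Conversely, if $\Phi \in \cl{B}_s$ then by definition there is a completely positive broadcasting map $\tilde\Phi : M_n \rightarrow M_n^{\otimes s}$ with $\Tr_{\overline{i}} \circ \tilde\Phi = \Phi$ for every $1 \leq i \leq s$. For any $m$ and any $X \in (M_m \otimes M_n)^+$, the operator $(id_m \otimes \tilde\Phi)(X)$ is positive (as $\tilde\Phi$ is completely positive) and its partial trace over all $M_n$-copies except the $i$-th equals $(id_m \otimes \Phi)(X)$ for every $i$; hence $(id_m \otimes \Phi)(X)$ is $s$-shareable, i.e.\ lies in $H_m^s$, so $\Phi \in \cl{CP}(M_n,V_s(M_n))$.

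There is essentially no hard step here; the result is structurally simpler than its $s = 2$ specialization, since the broadcasting map $\tilde\Phi$ is built directly into the definition of $\cl{B}_s$ and so no analogue of Theorem~\ref{thm:antidegrad_2broadcast} is required. The only points demanding genuine care are bookkeeping ones: keeping the subsystem indices straight in the defining relation $\Tr_{\overline{1},\overline{i}}(\tilde X) = X$ (where subsystem $1$ is the $M_m$ control space and subsystems $2,\dots,s+1$ are the shared copies of $M_n$), and confirming that the partial traces defining $s$-shareability commute with both $id_m \otimes (\cdot)$ and with ${\rm Ad}_A$ on the control space. Both are routine, so I expect the finished proof to be short and to parallel the displayed proof of Theorem~\ref{thm:antidegrad_opsys} almost line for line.
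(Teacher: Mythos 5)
Your proof is correct and is exactly the argument the paper intends: the paper states this theorem without proof, remarking only that it is the evident generalization of Theorem~\ref{thm:antidegrad_opsys}, and your write-up fills in that generalization faithfully (including the correct observation that the broadcasting map $\tilde\Phi$ comes directly from the definition of $\cl{B}_s$, so no analogue of Theorem~\ref{thm:antidegrad_2broadcast} is needed). No gaps.
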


Because the cones of anti-degradable maps and $s$-extendible maps are not left CP-invariant, the operator systems of Theorems~\ref{thm:antidegrad_opsys} and~\ref{thm:kextend_opsys} do not fit into the framework of the next section.

\subsection{Mapping Cones as Operator Systems}\label{sec:map_cones_op_sys}

From now on, it will often be useful for us to consider operator systems $V(M_n)$ with the additional property that $(id_m \otimes {\rm Ad}_{B})(C_{m}) \subseteq C_{m}$ for each $m \in \bb{N}$ and $B \in M_{n}$ -- a property that is equivalent to the fact $\cl{CP}(M_n) \subseteq \cl{CP}(V(M_n))$. We call operator systems with this property \emph{super-homogeneous}.

The following result shows how Theorem~\ref{thm:right_cp_invariant} works when the right CP-invariant cone is in fact a mapping cone -- in this situation the associated operator system is super-homogeneous.
\begin{cor}\label{cor:mapping_cone_os}
	Let $\cl{C} \subseteq \cl{L}(M_n)$ be a closed, convex cone. The following are equivalent:
	\begin{enumerate}
		\item $\cl{C}$ is a mapping cone;
		\item there exists a super-homogeneous operator system $V_1(M_n)$, defined by cones\\
		$\{C_m\}_{m=1}^{\infty}$, such that $C_{\cl{C}} = C_n$;
		\item there exists a super-homogeneous operator system $V_2(M_n)$ such that\\
		$\cl{C} = \cl{CP}(M_n,V_2(M_n))$;
		\item there exists a super-homogeneous operator system $V_3(M_n)$ such that\\
		$(\cl{C}^{\circ})^{\dagger} = \cl{CP}(V_3(M_n),M_n)$; and
		\item there exist super-homogeneous operator systems $V_4(M_n)$ and $V_5(M_n)$ such that \\
		$\cl{C} = \cl{CP}(V_4(M_n),V_5(M_n))$.
	\end{enumerate}
	Furthermore, we can choose $V_1(M_n) = V_2(M_n) = OMIN^n(C_{\cl{C}})$ and $V_3(M_n) = OMAX^n(C_{\cl{C}})$.
\end{cor}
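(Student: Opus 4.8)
The plan is to obtain the corollary by adding a single ingredient---\emph{super-homogeneity}, equivalently left CP-invariance---to Theorem~\ref{thm:right_cp_invariant}. First I would record that a mapping cone is precisely a closed, convex cone $\cl{C}$ of positive maps that is \emph{both} left and right CP-invariant: the defining condition $\Phi \circ \Omega \circ \Psi \in \cl{C}$ for all completely positive $\Phi, \Psi$ splits (take $\Psi = id$, resp.\ $\Phi = id$) into left and right CP-invariance, and the two together recover the two-sided condition. I would also note that any nonzero such cone satisfies $\cl{S}(M_n) \subseteq \cl{C} \subseteq \cl{P}(M_n)$: the upper inclusion is the definition, and for the lower one, given a nonzero $\Omega \in \cl{C}$ the map ${\rm Ad}_{\ketbra{x}{b}} \circ \Omega \circ {\rm Ad}_{\ketbra{a}{y}}$ collapses to a positive multiple of ${\rm Ad}_{\ketbra{x}{y}}$, so convexity yields every superpositive map. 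Thus hypothesis (a) of Theorem~\ref{thm:right_cp_invariant} is already available, and the only new content is the link between left CP-invariance and super-homogeneity.

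The technical heart, which I would isolate as the key step, is: for $V = OMIN^n(C_{\cl{C}})$ and for $V = OMAX^n(C_{\cl{C}})$ (each having $n$-th cone $C_{\cl{C}}$ by Lemma~\ref{lem:cn_to_os}), $V$ is super-homogeneous if and only if $\cl{C}$ is left CP-invariant. The ``only if'' direction is immediate: super-homogeneity at level $n$ reads $(id_n \otimes {\rm Ad}_B)(C_{\cl{C}}) \subseteq C_{\cl{C}}$, which by the Choi identity $(id_n \otimes {\rm Ad}_B)(C_\Phi) = C_{{\rm Ad}_B \circ \Phi}$ (as in Proposition~\ref{prop:choi_inner_basic}) says ${\rm Ad}_B \circ \Phi \in \cl{C}$ for all $\Phi \in \cl{C}$, and convexity plus the adjoint-map decomposition of completely positive maps (Theorem~\ref{thm:choi_cp}) upgrades this to left CP-invariance. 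The ``if'' direction is where the work lies: I must \textbf{bootstrap} super-homogeneity from level $n$ to all $m$. For $OMIN^n$, if $X \in C_m^{\textup{min},n}$ and $\Phi$ is a test map with $(id_n \otimes \Phi)(C_{\cl{C}}) \geq 0$, then $(id_m \otimes \Phi)\big((id_m \otimes {\rm Ad}_B)(X)\big) = (id_m \otimes (\Phi \circ {\rm Ad}_B))(X)$, and $\Phi \circ {\rm Ad}_B$ is again a valid test map since $(id_n \otimes (\Phi \circ {\rm Ad}_B))(C_{\cl{C}}) = (id_n \otimes \Phi)\big((id_n \otimes {\rm Ad}_B)(C_{\cl{C}})\big) \subseteq (id_n \otimes \Phi)(C_{\cl{C}}) \geq 0$ by level-$n$ super-homogeneity; for $OMAX^n$ the dual computation commutes ${\rm Ad}_B$ through a decomposition $\sum_i ({\rm Ad}_{A_i} \otimes id_n)(X_i)$. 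I expect this bootstrapping to be the main obstacle, since it requires unwinding the min/max constructions rather than merely quoting the theorem.

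With the key step established the equivalences assemble rapidly. For $(1) \Leftrightarrow (2)$ I combine $(a) \Leftrightarrow (b)$ of Theorem~\ref{thm:right_cp_invariant} with the super-homogeneity correspondence; for $(2) \Rightarrow (3)$ and $(2) \Rightarrow (4)$ I run the corresponding steps of that theorem's proof unchanged, observing that the produced systems $V_2 = OMIN^n(C_{\cl{C}})$ and $V_3 = OMAX^n(C_{\cl{C}})$ are super-homogeneous by the key step. The new statement $(5)$ comes for free from $(3)$ by taking $V_4(M_n) = M_n$, the standard operator system, which is trivially super-homogeneous as every ${\rm Ad}_B$ is completely positive, and $V_5 = V_2$.

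The closing implications $(3) \Rightarrow (1)$, $(4) \Rightarrow (1)$, $(5) \Rightarrow (1)$ all rest on one observation: when $V_4, V_5$ are super-homogeneous, $\cl{CP}(V_4(M_n), V_5(M_n))$ is a mapping cone. It is closed and convex, lies between $\cl{S}(M_n)$ and $\cl{P}(M_n)$ by Corollary~\ref{cor:cp_sub_p} (hence a nonzero cone of positive maps), is right CP-invariant because super-homogeneity of the source absorbs a completely positive map applied on the right, $(id_m \otimes (\Phi \circ \Psi))(C_m^{V_4}) = (id_m \otimes \Phi)\big((id_m \otimes \Psi)(C_m^{V_4})\big) \subseteq (id_m \otimes \Phi)(C_m^{V_4}) \subseteq C_m^{V_5}$, and symmetrically left CP-invariant because super-homogeneity of the target absorbs a completely positive factor on the left. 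Applying this with $(V_4, V_5) = (M_n, V_2)$ gives $(3), (5) \Rightarrow (1)$; applying it with $(V_4, V_5) = (V_3, M_n)$ shows $(\cl{C}^\circ)^\dagger$ is a mapping cone, whence $\cl{C}$ is one as well, since daggering and dualizing each preserve two-sided CP-invariance (Lemma~\ref{lem:sup_cone_duals}, Proposition~\ref{prop:right_cp_dual}) and $\cl{C} = \cl{C}^{\circ\circ}$ by the closed-convex hypothesis. The stated choices $V_1 = V_2 = OMIN^n(C_{\cl{C}})$, $V_3 = OMAX^n(C_{\cl{C}})$ are precisely those arising along the way.
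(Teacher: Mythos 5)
Your proof is correct and follows essentially the same route as the paper: everything reduces to Theorem~\ref{thm:right_cp_invariant} together with the correspondence between left CP-invariance of $\cl{C}$ and super-homogeneity of the associated operator system, with condition $(5)$ handled exactly as in the paper via the observation that super-homogeneity of the source (resp.\ target) operator system absorbs right (resp.\ left) composition by completely positive maps, while $M_n$ itself is super-homogeneous. The only difference is that you spell out details the paper leaves implicit --- notably the bootstrapping of super-homogeneity from level $n$ to all levels for $OMIN^n(C_{\cl{C}})$ and $OMAX^n(C_{\cl{C}})$, and the verification that a nonzero mapping cone satisfies $\cl{S}(M_n) \subseteq \cl{C} \subseteq \cl{P}(M_n)$ so that Theorem~\ref{thm:right_cp_invariant}(a) applies --- which is added rigour rather than a divergence in approach.
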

\begin{proof}
	The equivalence of $(a)$, $(b)$, $(c)$, and $(d)$ follows immediately from the corresponding statements of Theorem~\ref{thm:right_cp_invariant} and the fact that $\cl{C}$ is left CP-invariant if and only if $(id_n \otimes {\rm Ad}_B)(C_{\cl{C}}) \subseteq C_{\cl{C}}$, which then gives super-homogeneity of the corresponding operator system.
	
	Because $M_n$ is a super-homogeneous operator system, it is clear that $(c) \Rightarrow (e)$. All that remains to do is prove that $(e) \Rightarrow (a)$. To this end, simply notice that right CP-invariance of $\cl{CP}(V_4(M_n),V_5(M_n))$ follows from super-homogeneity of $V_4(M_n)$ and left CP-invariance of $\cl{CP}(V_4(M_n),V_5(M_n))$ follows from super-homogeneity of $V_5(M_n)$. The fact that $\cl{CP}(V_4(M_n),V_5(M_n)) \subseteq \cl{P}(M_n)$ and is nonzero follows from Corollary~\ref{cor:cp_sub_p}.
\end{proof}

It is natural at this point to consider well-known mapping cones and ask what operator systems give rise to them in the sense of Corollary~\ref{cor:mapping_cone_os}. The mapping cone of standard completely positive maps $\cl{CP}(M_n)$ appears when we let $V_1(M_n) = V_2(M_n) = M_n$ itself. Many other cases of interest come from Corollary~\ref{cor:kEntangleBreak}: if $\cl{C}$ is the mapping cone of $k$-positive maps, we can choose $V_1(M_n) = V_2(M_n) = OMIN^k(M_n)$, and if $\cl{C}$ is the mapping cone of $k$-superpositive maps, we can choose $V_1(M_n) = V_2(M_n) = OMAX^k(M_n)$. Finally, consider the mapping cone of completely co-positive maps $\{\Phi \circ T : \Phi \in \cl{CP}(M_n)\}$. It is not difficult to see that in this case we can choose $V_1(M_n) = V_2(M_n)$ to be the operator system defined by the cones of operators with positive partial transpose -- i.e., the operators $X \in M_m \otimes M_n$ such that $X^\Gamma \geq 0$.

\subsection{Semigroup Cones as Operator Systems}\label{sec:semigroups}

Theorem~\ref{thm:right_cp_invariant} and Corollary~\ref{cor:mapping_cone_os} provide characterizations of completely positive maps to and from $M_n$, and completely positive maps between two different super-homogeneous operator systems on $M_n$. However, they say nothing about completely positive maps from a super-homogeneous operator system back into itself. Toward deriving a characterization for this situation, we consider cones $\cl{C} \subseteq \cl{L}(M_n)$ that are \emph{semigroups} -- i.e., cones such that $\Phi \circ \Psi \in \cl{C}$ for all $\Phi,\Psi \in \cl{C}$. Notice that many of the standard examples of mapping cones, such as the $k$-positive maps and the $k$-superpositive maps, are semigroups (however, the cone of completely co-positive maps is not).

If $V(M_n)$ is an operator system defined by cones $\{C_m\}_{m=1}^{\infty}$, then the dual cones $\{C_m^{\circ}\}_{m=1}^{\infty}$ define an operator system as well, which we denote $V^{\circ}(M_n)$. For simplicity, we will only consider this operator system as a family of dual cones, in keeping with our focus throughout the preceding portion of this work, and not the associated dual operator space structure. The interested reader is directed to \cite{BM11} for a more thorough treatment of dual operator systems. It is easily verified that $V(M_n)$ is super-homogeneous if and only if $V^{\circ}(M_n)$ is super-homogeneous, and the ``naive'' operator system on $M_n$ is easily seen to be self-dual: $M_n^{\circ} = M_n$. By the duality of the cones of $k$-positive maps and $k$-superpositive maps we know that $OMIN_{k}^{\circ}(M_n) = OMAX_k(M_n)$ and $OMAX_{k}^{\circ}(M_n) = OMIN_k(M_n)$.

We now consider what types of cones can be completely positive from a super-homogeneous operator system back into itself. We already saw in Corollary~\ref{cor:kEntangleBreak} that $\cl{CP}(OMIN_k(M_n)) = \cl{CP}(OMAX_k(M_n)) = \cl{P}_k(M_n)$ -- a fact that we now see is related to the facts that $\cl{P}_k(M_n)$ is a semigroup and $OMIN_{k}^{\circ}(M_n) = OMAX_k(M_n)$.
\begin{thm}\label{thm:cp_semigroup}
	Let $\cl{C} \subseteq \cl{L}(M_n)$ be a convex cone. The following are equivalent:
	\begin{enumerate}[(a)]
		\item the cone $\cl{C}$ is a semigroup with $\cl{CP}(M_n) \subseteq \cl{C} \subseteq \cl{P}(M_n)$; and
		\item there exists a super-homogeneous operator system $V(M_n)$ with $\cl{C} = \cl{CP}(V(M_n))$.
	\end{enumerate}
	Furthermore, we can choose $V(M_n) = OMIN^n(C_{\cl{C}})$.
\end{thm}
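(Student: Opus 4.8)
The plan is to prove Theorem~\ref{thm:cp_semigroup} by establishing both implications, leaning heavily on the machinery already developed for right CP-invariant cones and semigroup cones. The structure mirrors the proof of Theorem~\ref{thm:right_cp_invariant}, but now the extra ingredient is that the semigroup property must be matched with complete positivity from an operator system \emph{into itself} (rather than to or from $M_n$).

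For the implication $(b) \Rightarrow (a)$, I would argue as follows. Suppose $V(M_n)$ is super-homogeneous and $\cl{C} = \cl{CP}(V(M_n))$. That $\cl{C}$ is a semigroup is immediate: if $\Phi, \Psi \in \cl{CP}(V(M_n))$, then $(id_m \otimes (\Phi \circ \Psi))(C_m) = (id_m \otimes \Phi)\big((id_m \otimes \Psi)(C_m)\big) \subseteq (id_m \otimes \Phi)(C_m) \subseteq C_m$ for all $m$, so $\Phi \circ \Psi \in \cl{C}$. Super-homogeneity of $V(M_n)$ means $\cl{CP}(M_n) \subseteq \cl{CP}(V(M_n)) = \cl{C}$, giving the lower containment. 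The upper containment $\cl{C} \subseteq \cl{P}(M_n)$ is exactly Corollary~\ref{cor:cp_sub_p} applied with $V_1(M_n) = V_2(M_n) = V(M_n)$.

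For the harder implication $(a) \Rightarrow (b)$, I would first invoke Theorem~\ref{thm:right_cp_invariant}. A semigroup cone $\cl{C}$ with $\cl{CP}(M_n) \subseteq \cl{C}$ is both left and right CP-invariant (right-composition and left-composition by elements of $\cl{CP} \subseteq \cl{C}$ stay in $\cl{C}$ by the semigroup property), so by Corollary~\ref{cor:mapping_cone_os} there is a super-homogeneous operator system $V(M_n) := OMIN^n(C_{\cl{C}})$ with $C_{\cl{C}} = C_n$ and $\cl{C} = \cl{CP}(M_n, V(M_n))$. The remaining task is to upgrade this to $\cl{C} = \cl{CP}(V(M_n))$, i.e.\ to show that complete positivity from $M_n$ into $V(M_n)$ coincides with complete positivity from $V(M_n)$ into $V(M_n)$ when $\cl{C}$ is a semigroup. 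Here I would use Proposition~\ref{prop:semigroup}: since $\cl{C}$ is a closed convex semigroup cone containing $\cl{CP}$, that proposition (and its companion Proposition~\ref{prop:semigroup2}, with the semigroup property ensuring $\cl{C}^\circ \circ \cl{C}^\circ = \cl{C}^\circ$) characterizes membership in $\cl{C}$ and $\cl{C}^\circ$ via composition, which is precisely the structure that forces $\Phi \in \cl{C}$ to satisfy $(id_m \otimes \Phi)(C_m) \subseteq C_m$ rather than merely $(id_m \otimes \Phi)\big((M_m \otimes M_n)^+\big) \subseteq C_m$.

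The main obstacle will be this last step: showing that the $m$-th cone $C_m$ of $OMIN^n(C_{\cl{C}})$ is itself preserved by $id_m \otimes \Phi$ for every $\Phi \in \cl{C}$, at all levels $m$, not just $m = n$. The natural route is to characterize $C_m$ explicitly via the definition of $OMIN^n$, namely $C_m = \{X : (id_m \otimes \Psi)(X) \geq 0 \ \forall \Psi \text{ with } (id_n \otimes \Psi)(C_n) \subseteq (M_n \otimes M_n)^+\}$, and then to verify that for such $\Psi$ and any $\Phi \in \cl{C}$ the composite $\Psi \circ \Phi$ is again completely positive into the base cone; this is exactly where the semigroup identity $\cl{C}^\circ \circ \cl{C} \subseteq \cl{CP}$ (a consequence of Proposition~\ref{prop:semigroup}, since $\Psi \in \cl{C}^\circ$ and $\Phi \in \cl{C}$) does the work, yielding $(id_m \otimes (\Psi \circ \Phi))(X) \geq 0$ and hence $(id_m \otimes \Phi)(X) \in C_m$. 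I expect the bookkeeping relating the ``dual-side'' maps $\Psi$ used to define $OMIN^n$ with the composition results of Section~\ref{sec:semigroup_cone} to be the delicate part, but no genuinely new idea beyond Propositions~\ref{prop:semigroup} and~\ref{prop:semigroup2} should be required.
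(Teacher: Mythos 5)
Your argument is correct and follows the paper's proof in all essentials: the $(b)\Rightarrow(a)$ direction is the same (semigroup from composition, $\cl{CP}(M_n)\subseteq\cl{C}$ from super-homogeneity, $\cl{C}\subseteq\cl{P}(M_n)$ from Corollary~\ref{cor:cp_sub_p}), and for $(a)\Rightarrow(b)$ both arguments build $V(M_n)=OMIN^n(C_{\cl{C}})$ and then use the semigroup property to upgrade $\cl{CP}(M_n,V(M_n))$ to $\cl{CP}(V(M_n))$. The one substantive difference is how that upgrade is certified: the paper only checks invariance of the level-$n$ cone ($C_{\Phi\circ\Psi}\in C_{\cl{C}}$ gives $(id_n\otimes\Phi)(C_{\cl{C}})\subseteq C_{\cl{C}}$) and then cites Xhabli's result that for $OMIN^n$ structures complete positivity into $V$ can be tested at level $n$ alone, whereas you verify $(id_m\otimes\Phi)(C_m)\subseteq C_m$ at every level directly from the defining description of $OMIN^n$ -- a more self-contained finish. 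For that step the precise bookkeeping is: a defining map $\Psi$ (one with $\Psi\circ\Omega\in\cl{CP}$ for all $\Omega\in\cl{C}$) composed on the right with $\Phi\in\cl{C}$ is again a defining map, since $(\Psi\circ\Phi)\circ\Omega=\Psi\circ(\Phi\circ\Omega)$ and $\Phi\circ\Omega\in\cl{C}$ by the semigroup property; you need this full statement, not merely $\Psi\circ\Phi\in\cl{CP}$, because $X\in C_m$ need not be positive semidefinite. Two minor repairs: Corollary~\ref{cor:mapping_cone_os} assumes $\cl{C}$ is closed while the theorem does not, so you should route through Theorem~\ref{thm:right_cp_invariant} together with the observation that left CP-invariance gives super-homogeneity; and the parenthetical claim that the semigroup property "ensures" $\cl{C}^\circ\circ\cl{C}^\circ=\cl{C}^\circ$ is neither obvious nor needed -- Proposition~\ref{prop:semigroup2} plays no role here. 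The reverse inclusion $\cl{CP}(V(M_n))\subseteq\cl{C}$, which you leave implicit, follows at once from $(M_m\otimes M_n)^+\subseteq C_m$ and the already-established identity $\cl{C}=\cl{CP}(M_n,V(M_n))$.
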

\begin{proof}
	We first prove that $(b) \Rightarrow (a)$. Let $\{C_m\}_{m=1}^{\infty}$ be the cones associated with the operator system $V(M_n)$. If $X \in C_m$ and $\Phi,\Psi \in \cl{CP}(V(M_n))$ then $(id_m \otimes \Phi)(X) \in C_m$. But then applying $id_m \otimes \Psi$ shows $(id_m \otimes (\Psi \circ \Phi))(X) \in C_m$ as well, so it follows that $\Psi \circ \Phi \in \cl{CP}(V(M_n))$ and thus $\cl{CP}(V(M_n))$ is a semigroup. Because $V(M_n)$ is super-homogeneous, we know that ${\rm Ad}_B \in \cl{CP}(V(M_n))$ for all $B \in M_n$, and so $\cl{CP}(M_n) \subseteq \cl{CP}(V(M_n))$. To see that $\cl{CP}(V(M_n)) \subseteq \cl{P}(M_n)$, simply use Corollary~\ref{cor:cp_sub_p}.

	To see that $(a) \Rightarrow (b)$, we argue much as we did in Theorem~\ref{thm:right_cp_invariant}. It is clear, via the Choi--Jamio{\l}kowski isomorphism, that $S_n \subseteq C_{\cl{C}} \subseteq P_n$. Now note that $\cl{C}$ is left and right CP-invariant because $\Phi \circ \Psi \in \cl{C}$ for any $\Phi \in \cl{C}$ and $\Psi \in \cl{CP}(M_n) \subseteq \cl{C}$ (and similarly for composition on the left by $\Psi \in \cl{CP}(M_n)$). Thus, if $A \in M_m$, $B \in M_n$ and $\Phi \in \cl{C}$ then
	\begin{align*}
		({\rm Ad_{A} \otimes {\rm Ad}_{B}})(C_{\Phi}) & = (id_m \otimes {\rm Ad}_{B})(C_{\Phi \circ {\rm Ad}_{A^T}}) = C_{{\rm Ad}_{B} \circ \Phi \circ {\rm Ad}_{A^T}} \in C_{\cl{C}},
	\end{align*}
	where the first equality comes from Lemma~\ref{lem:right_cp_choi}. Lemma~\ref{lem:cn_to_os} then tells us that $V(M_n) = OMIN^n(C_{\cl{C}})$ is an operator system, and it is easily seen to be super-homogeneous. Because $\cl{C}$ is a semigroup, it follows that $C_{\Phi \circ \Psi} \in C_\cl{C}$ for any $\Phi,\Psi \in \cl{C}$. Then $(id_n \otimes \Phi)(C_\Psi) \in C_\cl{C}$, so $(id_n \otimes \Phi)(C_\cl{C}) \subseteq C_\cl{C}$, which implies $\cl{C} \subseteq \cl{CP}(V(M_n))$ by \cite[Proposition~2.3.3]{Xthesis}. To see the other inclusion, note that $id_n \in \cl{CP}(M_n)$, so $id_n \in \cl{C}$. It follows that $\ketbra{\psi_+}{\psi_+} \in C_\cl{C}$. Thus, if $\Phi \in \cl{CP}(V(M_n))$ then $(id_n \otimes \Phi)(\ketbra{\psi_+}{\psi_+}) \in C_{\cl{C}}$, so $\Phi \in \cl{C}$, which implies that $\cl{C} = \cl{CP}(V(M_n))$.
\end{proof}

It is worth noting that if $\cl{C}$ is closed and condition (a) of Theorem~\ref{thm:cp_semigroup} holds, then $\cl{C}$ is necessarily a mapping cone. It follows that if $V(M_n)$ is a super-homogeneous operator system defined by closed cones then $\cl{CP}(V(M_n))$ is always a mapping cone (which can also be seen from Corollary~\ref{cor:mapping_cone_os}), although the converse does not hold. That is, there exist mapping cones $\cl{C}$ such that there is no operator system $V(M_n)$ with $\cl{C} = \cl{CP}(V(M_n))$ -- the simplest example being the mapping cone of completely co-positive maps.

\clearpage
\addcontentsline{toc}{chapter}{Bibliography}
\bibliographystyle{alpha}
\newcommand{\etalchar}[1]{$^{#1}$}

\end{document}